\author{Roland Guttenberg}
\affiliation{%
\institution{University of Warsaw}
\city{}\country{}\postcode{}
}
\email{r.guttenberg@uw.edu.pl}
\author{Eren Keskin}
\affiliation{%
  \institution{TU Braunschweig}
  \city{}\country{}\postcode{}
  }
\email{e.keskin@tu-bs.de}
\author{Roland Meyer}
\affiliation{%
  \institution{TU Braunschweig}
  \city{}\country{}\postcode{}
  }
\email{roland.meyer@tu-bs.de}
\begin{document}

\title{PVASS Reachability is Decidable}



\date{}

\begin{abstract}
Reachability in pushdown vector addition systems with states (PVASS) is among the longest standing open problems in Theoretical Computer Science. 
We show that the problem is decidable in full generality.
Our decision procedure is similar in spirit to the KLMST algorithm for VASS reachability, but works over objects that support an elaborate form of procedure summarization as known from pushdown reachability. 
\end{abstract}

\maketitle

\section{Introduction}
The most basic models of computation beyond finite automata are pushdowns (acting on a stack) and vector addition systems with states (acting on a finite set of non-negative counters).
Despite their simplicity, the interaction of the two models is not understood.
Given a pushdown $P$ and a VASS $V$, it is not known whether $L(P)\cap L(V)=\emptyset$ is decidable. 
We prove here that this is the case. Our algorithm has a complexity of Hyper-Ackermann, which is conjectured to be optimal \cite{Czerwinski23}.

The problem is most well-known in the community in its equivalent formulation, as reachability problem in pushdown vector addition systems with states (PVASS). 
A PVASS is a finite automaton manipulating both a stack and non-negative counters. 
To solve PVASS reachability, one could imagine that simply combining the procedure summaries for pushdown reachability \cite{SharirP78,RHS95} with the KLMST decomposition for VASS reachability \cite{SacerdoteT77,Mayr81,Kosaraju82, Lambert92} would be enough. 
However, despite a long list of attempts over the last two decades, the problem could only be solved in special cases, and all attempts to generalize the solutions have failed~ \cite{Reinhardt08, AtigG11, GantyM12, LerouxST15, LerouxPSS19, BaumannGMTZ23, BaumannGMTZOther23, BiziereC25}. 
The combination 
requires new techniques.

\textbf{Techniques}: 
The first idea is to generalize the PVASS reachability problem in an appropriate fashion. 
We study emptiness in a new model called \emph{nested grammar vector addition systems} (NGVAS). 
NGVAS are nested context-free grammars: 
at level $0$ terminals are counter updates in $\Z^{\adim}$, at level $n+1$ terminals are NGVAS of level $n$.
NGVAS generalize PVASS via the standard conversion from pushdown automata to context-free grammars.
The nesting allows us to approach algorithmic problems we encounter while solving reachability using induction.

The second idea is to prove a version of the Steinitz Lemma for NGVAS, which will play an important role in a sufficient condition for reachability. 
It will be a corollary of our \emph{wide tree theorem}, which is purely for context-free grammars \(G\) and we believe of independent interest. Simply put, if \(G\) is non-linear and strongly-connected, then for many words \(w \in L(G)\), there is a permutation \(\pi(w) \in L(G)\) that has a derivation tree of \emph{height logarithmic} in the length of \(w\). 

The third and final idea takes up most of the paper.
To apply the aforementioned sufficient condition, we need pumpability properties. 
Despite similarities at first glance, pumpability in NGVAS carries a different flavor than in VASS. 
In VASS, the reachability algorithm checks for the existence of a cycle \(\mathbf{up}\) which has a positive effect on (for simplicity) all counters, and similarly a cycle \(\mathbf{dwn}\) which has a negative effect on all counters.
The cycles may be different.  
Pumping in NGVAS instead asks for \emph{one} derivation \(S \to w_{\mathbf{up}}.S.w_{\mathbf{dwn}}\). 
To see how the one derivation leads to a different flavor, consider the NGVAS \(S \to +1.S.-1 \mid \varepsilon \).
Although it satisfies the pumping condition, 
the reachability set is bounded.
This is different from pumping in VASS. 

We distinguish two types of unboundedness and call a counter \emph{vertically unbounded} if the pumpability condition holds, and \emph{horizontally unbounded} if it is unbounded in the reachability set, like for normal VASS. 
To decide vertical unboundedness, we first reduce it to computing the downward-closure of reachability sets, and hence horizontal unboundedness. 
The construction is similar to the Karp-Miller trees that are used to decide boundedness in VASS~\cite{KM69}. Afterwards, using Rackoff-like ideas \cite{Rackoff78}, we reduce horizontal unboundedness to vertical unboundedness in NGVAS with fewer counters, which we can solve by induction.

\textbf{Related Work}: Already the emptiness problem for VASS turned out to be one of the hardest problems in Theoretical Computer Science and was studied for around 50 years. It was proved decidable in the 1980s \cite{SacerdoteT77,Mayr81,Kosaraju82, Lambert92}, but its complexity (Ackermann-complete) could only be determined very recently \cite{LerouxS19, CzerwinskiLLLM19, CzerwinskiO21, Leroux21}. 
Landmark results in this process were Leroux's new decision procedure based on inductive invariants~\cite{Leroux09, Leroux11}, and the insight that the classical KLMST decomposition, named after its inventors Sacerdote, Tenney, Mayr, Kosaraju, and Lambert \cite{SacerdoteT77, Mayr81, Kosaraju82, Lambert92}, actually computes run ideals \cite{LerouxS15}, an object that even shows up by purely combinatorial reasoning.

Given this progress, problems that were considered out of reach moved into the focus of the community. 
One line of research asks whether reachability is decidable in models that generalize VASS, including (restricted) VASS with nested zero tests and resets~\cite{Reinhardt08, Bonnet11, Bonnet12,FinkelLS18,LerouxS20,Guttenberg24}, branching variants of VASS~\cite{VermaG05,Lazic10,DemriJLL13,ClementeLLM17}, VASS with tokens that carry data~\cite{RosaVelardoF11,HofmanLLLST16,LazicS16,BlondinL23}, valence systems~\cite{Zetzsche16}, and amalgamation systems~\cite{AnandSSZ24}. 
Another line asks whether we can compute information even more precise than reachability, like the downward closure of the reachability language~\cite{HabermehlMW10}, or separability by regular~\cite{Keskin024} or subregular languages~\cite{PZ14, ClementeCLP17}.


\textbf{Structure}: 
We explain the decidability of PVASS reachability at three levels of detail. 
The main paper gives the overall argumentation and the key techniques.
The second level, starting from \Cref{Section:IntroToAppendix},  introduces missing definitions, makes results formal, and gives proofs or sketches for the main findings. 
The third level, starting from \Cref{Section:AppendixL3}, is devoted to proofs. 

\newcommand{\ri}{\mathit{RI}}
\newcommand{\initNGVAS}{\anngvas_{\ri}}
\newcommand{\nrunsof}[1]{\mathit{R}_{\N}(#1)}
\newcommand{\zrunsof}[1]{\mathit{R}_{\Z}(#1)}
\newcommand{\adeconst}{\mathcal{D}}
\newcommand{\perffun}{\textsf{perf}}
\newcommand{\perfof}[1]{\perffun(#1)}

\newcommand{\aprop}{\textsf{(X)}}
\newcommand{\apropp}{\textsf{(Y)}}
\newcommand{\perfconds}{\textsf{Perfect}}

\newcommand{\prepar}[1]{\mathsf{pre}_{#1}}
\newcommand{\preparof}[2]{\prepar{#1}(#2)}
\newcommand{\preeqof}[1]{\preparof{\textsf{\perfectnessprodsnospace}}{#1}}
\newcommand{\preintpumpof}[1]{\preparof{\textsf{\perfectnesspumpingintnospace}}{#1}}
\newcommand{\prepumpof}[1]{\preparof{\textsf{\perfectnesspumpingnospace}}{#1}}
\newcommand{\preXof}[1]{\mathsf{pre}_{\aprop}}
\newcommand{\postXof}[1]{\mathsf{post}_{\aprop}}
\newcommand{\postpar}[1]{\mathsf{post}_{#1}}
\newcommand{\postparof}[2]{\postpar{#1}(#2)}
\newcommand{\posteqof}[1]{\postparof{\textsf{\perfectnessprodsnospace}}{#1}}
\newcommand{\postintpumpof}[1]{\postparof{\textsf{\perfectnesspumpingintnospace}}{#1}}
\newcommand{\postpumpof}[1]{\postparof{\textsf{\perfectnesspumpingnospace}}{#1}}

\newcommand{\refpostparof}[2]{\mathsf{refpost}_{#1}(#2)}

\newcommand{\towardspred}[1]{\mathsf{twd}_{#1}}
\newcommand{\towardspredof}[2]{\towardspred{#1}(#2)}
\newcommand{\towardsXof}{\towardspredof{\aprop}}
\newcommand{\towardseq}{\towardspred{\textsf{\perfectnessprodsnospace}}}
\newcommand{\towardseqof}[1]{\towardseq{#1}}
\newcommand{\towardsintpumpof}[1]{\towardspredof{\textsf{\perfectnesspumpingintnospace}}{#1}}
\newcommand{\towardspumpof}[1]{\towardspredof{\textsf{\perfectnesspumpingnospace}}{#1}}

\newcommand{\postof}[1]{\mathsf{post}(#1)}

\newcommand{\decpar}[1]{\textsf{dec}_{#1}}
\newcommand{\decparof}[2]{\decpar{#1}(#2)}

\newcommand{\decsub}{\decpar{\textsf{\perfectnesschildren}}}
\newcommand{\decsubof}[1]{\decsub(#1)}
\newcommand{\deceq}{\decpar{\textsf{\perfectnessprodsnospace}}}
\newcommand{\deceqof}[1]{\deceq(#1)}
\newcommand{\decpump}{\decpar{\textsf{\perfectnesspumpingnospace}}}
\newcommand{\decpumpof}[1]{\decpump(#1)}
\newcommand{\decintpump}{\decpar{\textsf{\perfectnesspumpingintnospace}}}
\newcommand{\decintpumpof}[1]{\decintpump(#1)}

\newcommand{\refinepar}[1]{\mathsf{ref}_{#1}}
\newcommand{\refineparof}[2]{\refinepar{#1}(#2)}
\newcommand{\refinesub}{\mathsf{ref}_{\textsf{\perfectnesschildrennospace}}}
\newcommand{\refinesubof}[1]{\refinesub(#1)}
\newcommand{\refineeq}{\refinepar{\textsf{\perfectnessprodsnospace}}}
\newcommand{\refineeqof}[1]{\refineeq(#1)}
\newcommand{\refinepump}{\refinepar{\textsf{\perfectnesspumpingnospace}}}
\newcommand{\refinepumpof}[1]{\refinepump(#1)}
\newcommand{\refineintpump}{\refinepar{\textsf{\perfectnesspumpingintnospace}}}
\newcommand{\refineintpumpof}[1]{\refineintpump(#1)}
\newcommand{\anngvasdom}{\anngvas_{\domtag}}

\section{Proof Outline} \label{SectionProofOutline}
We reduce PVASS reachability to emptiness for a new model called \emph{nested grammar vector addition system}. 
For now, we can view an NGVAS as a context-free grammar whose terminals are VAS updates, and then $\nrunsof{\anngvas}$ is the set of terminal words that lead from the initial to the final marking while staying non-negative. 
The reduction is similar to the translation of pushdowns to context-free grammars.
Our main result is this. 
\begin{theorem}\label{Theorem:EmptinessDecidable}
$\nrunsof{\anngvas}\neq\emptyset$ is decidable in \(\mathfrak{F}_{\omega^{d+3}}\), where \(d\) is the number of counters.
\end{theorem}

Our decision procedure follows the KLMST approach to VASS reachability. 
We relax the non-negativity constraint in $\nrunsof{\anngvas}\neq\emptyset$ and admit runs that may fall below zero.
We can then answer $\zrunsof{\anngvas}\neq \emptyset$ with standard techniques from integer linear programming. 
Of course the relaxation is not sound in general.
We identify a subclass of \emph{perfect} NGVAS for which it is. 
It is convenient to incorporate the emptiness check into perfectness. 
\begin{proposition}[Iteration]\label{Proposition:Iteration} 
If $\anngvas$ is perfect, $\nrunsof{\anngvas}\neq\emptyset$. 
\end{proposition}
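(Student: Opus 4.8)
The plan is to start from a $\Z$-run supplied by perfectness and turn it into a genuine (non-negative) run by combining the pumping derivations that perfectness attaches to the unbounded counters with the wide tree theorem, which lets us assume the derivation tree has logarithmic height. Since perfectness of $\anngvas$ includes the relaxed emptiness check, we may fix a $\Z$-run, i.e.\ a word $w\in\langof{\agram}$ (with $\agram$ the grammar underlying $\anngvas$) that leads from the initial to the final marking when the non-negativity constraint is dropped. The counters split into two groups. The counters that perfectness has declared bounded are, by construction, tracked in the finite control up to their bound $B\in\N$, so along the $\Z$-run they never leave $\{0,\dots,B\}$ and in particular never go negative; nothing has to be done for them. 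It remains to deal with the unbounded (pumpable) counters and, crucially, with the ``hurdle'' of $w$: the minimal counter values under which $w$ can be executed without falling below zero, which equals the height of the derivation tree of $w$ in $\agram$.

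Next I would invoke the wide tree theorem. Its hypotheses should be exactly the structural facts that perfectness guarantees about $\agram$, so that after possibly iterating the $\Z$-run to make $w$ ``wide enough'' the theorem applies and yields a permutation $\pi(w)\in\langof{\agram}$ whose derivation tree has height $O(\log|w|)$. Because VAS updates commute under addition, $\pi(w)$ has the same global effect as $w$, hence it is still a $\Z$-run from the initial to the final marking, but now its hurdle is only logarithmic in $|w|$.

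Then I would bring in the pumping loops. Perfectness attaches to each pumpable counter a horizontal pumping derivation that can be prepended (the ``up'' loop $\upseq$) to raise that counter by an arbitrary amount before the main computation and a matching one (the ``down'' loop $\downseq$) to lower it again afterwards, restoring the final marking; for counters that are only pumpable vertically the corresponding pumping is inserted around the nested sub-derivations. Iterating these loops $\Theta(\log|w|)$ times — a bounded number of repetitions — raises every pumpable counter above the logarithmic hurdle of $\pi(w)$ while keeping the bounded counters inside $\{0,\dots,B\}$. The resulting word, schematically $\upseq^{m}\cdot\pi(w)\cdot\downseq^{m}$ with the pumping inserted at the right places in the tree, stays non-negative throughout and still ends at the final marking, so it witnesses $\nrunsof{\anngvas}\neq\emptyset$. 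Since NGVAS are recursive, this is carried out by induction on the nesting depth: perfectness of $\anngvas$ entails perfectness of each nested component, so the induction hypothesis replaces the $\Z$-sub-runs inside $w$ by genuine sub-runs, and the vertical pumping of the enclosing level supplies whatever counter budget the insertion of these sub-runs consumes.

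\textbf{Main obstacle.} The delicate point is making the wide tree theorem and the pumping cooperate across the nesting: one must (a) check that perfectness really does supply the hypotheses the wide tree theorem needs for $\agram$, (b) make precise the identification ``height of derivation tree $=$ minimal safe starting configuration'' in the presence of nested components that are only reachable after vertical pumping, and (c) bookkeep the counter budget so that the finitely many pumping iterations at each level simultaneously clear the logarithmic hurdles at that level and at all levels below, without disturbing the counters that perfectness has fixed as bounded. This accounting, rather than any single trick, is where the real work lies.
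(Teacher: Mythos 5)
Your proposal has the right raw ingredients (a $\Z$-run from the solvability condition, the pumping derivations from \perfectnesspumpingnospace, the wide tree theorem, and induction on nesting depth), but two essential pieces of the argument are missing or misapplied. First, you never account for the fact that every insertion changes the \emph{effect} of the run, while the semantics of an NGVAS enforces the restriction $\restrictions$ and the exact target $\acontextout$ even for $\Z$-runs. The up/down loops do not cancel: $\vaseffof{\upseq}+\vaseffof{\downseq}$ is in general nonzero, and the periods you must insert into the childNGVAS (to make their sub-runs enabled via the induction hypothesis) also shift the effect. Your sketch silently assumes that $\upseq^{m}\,\pi(w)\,\downseq^{m}$ still reaches $\acontextout$ and satisfies $\restrictions$; it does not. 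The paper repairs this with the central construction you omit: the pumping derivation is \emph{embedded into a scaled homogeneous solution} of the characteristic equations, producing a compensating ``difference'' derivation $\startnonterm\to^*\diffrunleft.\startnonterm.\diffrunright$ so that $\upseq.\diffrunleft.\diffrunright.\downseq$ has exactly the effect of a homogeneous solution, which can be added to the base solution without leaving the restriction. This is where most of the work (and the use of \perfectnessprodsnospace, \perfectnesschildperiodsbdnospace) lies, and also why the precise statement proved is the parametrized one with effect $\baseeffectchoice+\aconst\cdot\periodeffect\cdot\periodeffectchoice$, which the induction on nesting depth actually needs.

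Second, the wide tree theorem is not applicable where you apply it. The theorem takes a vector $\prodvec\geq 1$ solving the \emph{homogeneous} Esparza--Euler--Kirchhoff system and rearranges $\aconst$ copies of that cycle into a tree of height $O(\log \aconst)$; it says nothing about an arbitrary reaching word $w$, whose derivation corresponds to an inhomogeneous solution, so there is no guarantee that some permutation $\pi(w)$ has a logarithmic-height derivation at all. In the paper the theorem is applied to the \emph{difference derivation}: the danger is that the $\aconst$ copies of $\diffrunleft$ are all emitted before any $\diffrunright$, and the logarithmic bound on the number of \emph{incomplete} copies (the provenance/order part of the theorem) is exactly what limits the negative contribution, which is then beaten by the linear-in-$\aconst$ gain from $\upseq^{\aconst}$ — not by pumping ``$\Theta(\log|w|)$ times'' as you propose, which would again wreck the effect bookkeeping. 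Finally, your argument does not cover the linear case, where the grammar is non-branching, the wide tree theorem is unavailable, and the paper needs the internal pumping condition \perfectnesspumpingintnospace\ and a differently shaped run; a complete proof must treat this case separately.
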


The importance of perfectness stems from the fact that we can compute from every NGVAS a finite set of perfect NGVAS that reflects the set of non-negative runs. 
We call a set of NGVAS $\adeconst$ a \emph{deconstruction} of $\anngvas$, if it is finite and satisfies $\nrunsof{\anngvas}=\nrunsof{\adeconst}$. 
We also speak of a \emph{perfect deconstruction}, if all $\anngvasp\in\adeconst$ are perfect. 
Our second main result is this.
\begin{proposition}[Deconstruction]\label{Proposition:Deconst}
There is a function $\perffun:\setNGVAS\rightarrow\powof{\setNGVAS}$ so that \(\perffun \in \mathfrak{F}_{\omega^{d+3}}\) and, for all~$\anngvas$, $\perfof{\anngvas}$ is a perfect deconstruction of $\anngvas$. 
\end{proposition}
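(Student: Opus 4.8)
The plan is to build $\perffun$ as a recursive procedure that takes an NGVAS $\anngvas$ and, as long as it is not already perfect, applies a refinement step that transfers one piece of semantic information into the syntax while strictly decreasing a well-founded rank; when the rank can no longer decrease the NGVAS is perfect, and collecting the leaves of this recursion yields the finite perfect deconstruction. Concretely, I would first fix a rank function $\mathrm{rank}:\setNGVAS\to\Omega$ into a well-founded order (lexicographically combining, say, the number of counters, the number of nonterminals, and the structural depth of the nesting), so that every refinement step we perform produces NGVAS of strictly smaller rank. The base case is when $\anngvas$ is perfect: then $\perfof{\anngvas}=\{\anngvas\}$, which is trivially a perfect deconstruction of itself. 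Otherwise $\anngvas$ violates one of the defining conditions of perfectness, and I would branch on which condition fails.

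The heart of the argument is the refinement step. Perfectness is a conjunction of several conditions — a relaxed-emptiness/consistency condition $\aprop$, a horizontal-pumping condition, a vertical-pumping condition, and the productivity/cleanliness conditions indexed by $\prepar{\cdot}$ and $\postpar{\cdot}$ in the macro preamble. For each failing condition I would invoke the corresponding decision-and-refinement subroutine: using the $\decpar{\cdot}$ predicates we decide which condition is violated and extract a witness (e.g.\ a counter that is in fact bounded, together with a bound $B\in\N$ computed via the Rackoff-style argument sketched in Phenomenon~2, or a linear set of effects that a callee must realize, as in Phenomenon~1), and then the $\refinepar{\cdot}$ operation replaces $\anngvas$ by a finite set $\refineparof{\cdot}{\anngvas}$ of NGVAS that (i) together have the same non-negative run set, $\nrunsof{\anngvas}=\nrunsof{\refineparof{\cdot}{\anngvas}}$, and (ii) each have strictly smaller rank. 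Bounding a counter by $B$, for instance, is done by folding its value (up to $B$) into the finite control, which removes a counter; fixing that $A$ calls $B$ with an effect in a prescribed linear set is done by intersecting with the corresponding regular/semilinear constraint and re-expressing it as finitely many NGVAS with fewer "free" nonterminals. I would then set
\[
\perfof{\anngvas} \;=\; \bigcup_{\anngvasp \,\in\, \refineparof{\cdot}{\anngvas}} \perfof{\anngvasp},
\]
and argue correctness by well-founded induction on $\mathrm{rank}$: the recursion terminates because the rank strictly decreases, each leaf is perfect by construction, finiteness of $\perfof{\anngvas}$ follows because each refinement is finitely branching and the recursion tree is well-founded (hence finite by König's lemma, the branching being finite), and $\nrunsof{\anngvas}=\nrunsof{\perfof{\anngvas}}$ follows by transitivity of the run-set equalities established at each step together with the induction hypothesis applied to each $\anngvasp$.

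The main obstacle — and where essentially all the technical work lives — is establishing that the refinement step is both \emph{correct} and \emph{rank-decreasing} for every possible violation of perfectness, and in particular that these two requirements are compatible. Correctness of the pumping-related refinements is delicate: one must show that when a counter is declared unbounded it really can be pumped to arbitrary values \emph{uniformly} across the grammar (this is exactly what the wide tree theorem of Phenomenon~3 is for, and it feeds into the Iteration Lemma), and when it is declared bounded the computed bound $B$ genuinely suffices, which relies on the Rackoff-style $\Z$-run argument. Simultaneously, each such refinement must be engineered so that it decreases the chosen rank and not merely "makes progress" in some informal sense; reconciling the various refinements (some remove counters, some remove/restructure nonterminals, some descend into the nesting) under a single well-founded order is the crux. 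I would organize the proof so that the correctness of each individual $\refinepar{\cdot}$ is isolated as a separate lemma (with its own $\mathsf{preCond}$/$\mathsf{postCond}$ contract, matching the macros set up above), and the proposition itself is then a relatively short induction assembling these lemmas; the decidability/complexity claim ("in Hyper-Ackermann", as in Theorem~\ref{Theorem:EmptinessDecidable}) would follow by bounding the length of the rank-decreasing chains, i.e.\ the order type $\Omega$ and the growth of the computed bounds $B$, along the lines of the length-function theorems for well-structured systems.
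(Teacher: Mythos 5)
Your overall skeleton (well-founded recursion on a rank, refine whichever perfectness condition fails, collect the leaves) is the same as the paper's, but the proposal leaves out exactly the points where the proof actually lives, and at least one of your concrete choices would fail. First, the rank: "number of counters, number of nonterminals, nesting depth" does not decrease under the refinements. The refinements for \perfectnessprods (budget construction) and for \perfectnesspumping (the coverability grammar) blow up the number of non-terminals enormously, and re-normalization after a refinement (removing empty children, restoring strong connectedness, usefulness, splitting into SCCs) can multiply components further. The paper's rank is engineered for this: the local rank is $\dim$ of the vector space spanned by cycle effects plus the number of counters declared unbounded, aggregated as a multiset over branches of the nesting tree and ordered reverse-lexicographically, and the last component is $d-\cardof{\unconstrained}$. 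Each refinement is then shown to kill either a cycle direction or an $\omega$-counter, which is what makes the decrease go through. Second, and this is the step where your induction as stated breaks: the cleaning step $\clean$ that restores the structural invariants \emph{can increase the rank} (splitting one SCC into several is the basic example). The paper resolves this with the strict order $\headdom$ (head domination): $\refinepar{\aprop}$ must not merely decrease the rank but decrease it so much (in a component strictly more significant than anything cleaning can touch) that $\clean$ cannot restore it, and Lemma~\ref{Lemma:CleannessGuaranteeOverview} is formulated relative to a dominating NGVAS precisely for this reason. Your proposal asserts (ii) "each have strictly smaller rank" for the refinement output and then recurses, silently assuming the re-normalized systems still have smaller rank; that assumption is false for any naive rank and needs the head-domination mechanism.

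Third, the recursion structure is more entangled than your single top-level recursion $\perfof{\anngvas}=\bigcup\perfof{\anngvasp}$. Even \emph{deciding} whether \perfectnesspumping holds requires computing $\postfuncN{\anngvas}$ and $\prefuncN{\anngvas}$, and the paper can only do this by recursively invoking $\perffun$ on strictly smaller instances (children, variants with modified context, lower-dimensional approximations); this is the "reliable up to $\rankof{\anngvas}$" precondition threaded through \Cref{Proposition:Step} and \Cref{Lemma:RefineFunctions}. So the well-founded induction must be set up for a mutual recursion between $\perffun$, $\clean$, and the condition checks, not just for the outer decomposition tree. Relatedly, the paper fixes a total order on the conditions ($\perfectnessprodsnospace<\perfectnesspumpingintnospace<\perfectnesspumpingnospace$) and computes $\perffun$ as the nested fixed point $((\deceq^*.\decintpump)^*.\decpump)^*$, because the harder refinements are only implementable under the precondition that the easier conditions (and cleanness) already hold; your "branch on whichever condition fails" loses these preconditions, and with them the correctness of the pumping refinement. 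None of these are cosmetic: without a workable rank, without head domination against cleaning, and without the reliability-based mutual recursion, neither termination nor correctness of your procedure follows.
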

\Cref{Theorem:EmptinessDecidable} now follows from Propositions~\ref{Proposition:Iteration} and 3:
\begin{align*}
\nrunsof{\anngvas}\neq\emptyset \qquad \text{iff} \qquad \perfof{\anngvas}\neq \emptyset\ .
\end{align*}

We define a function $\perffun$ as required by \Cref{Proposition:Deconst}. 
For this purpose, similar to the simplified information on NGVAS, it is enough to view perfectness as the conjunction of five as of yet not relevant conditions we call \emph{clean}, \(\perfectnesschildrennospace\), \(\perfectnessprodsnospace, \perfectnesspumpingintnospace\), and \( \perfectnesspumpingnospace\).

Similar to VASS reachability, \(\perffun\) is a standard recursive algorithm: 
first it checks whether the input \(\anngvas\) is perfect. Intuitively, perfect \(\anngvas\) are the base case of the recursion. If yes, it outputs \(\{\anngvas\}\). Otherwise it computes a deconstruction \(\adeconst\) of \(\anngvas\) s.t. \(\rankof{\anngvasp}<\rankof{\anngvas}\) for every \(\anngvasp \in \adeconst\), where \(\rankof{\anngvas}\) is a well-founded rank. On the new NGVAS of lower rank, \(\perffun\) continues recursively. Since the rank is well-founded, the recursion terminates by K\"onig's Lemma.

To reduce the rank, \(\perffun\) has four subprocedures $\decsub, \deceq, \decintpump, \decpump:\setNGVAS\rightarrow\powof{\setNGVAS}$, which check whether the conditions \(\perfectnesschildrennospace\), \(\perfectnessprodsnospace, \perfectnesspumpingintnospace\), resp. \( \perfectnesspumpingnospace\) hold and, if not, return a deconstruction of smaller rank. 
Note that cleaning is not in this list. 
Cleaning can only guarantee that the rank does not increase, and all NGVAS 
in the output 
fulfill property \emph{clean}. 
Being clean is a precondition for 
the other decompositions: all the procedures \(\text{dec}_{(R i)}=\text{clean.ref}_{(R i)}\) first clean the input and then decrease the rank. 

Formally, the specifications are as follows, with two main differences to VASS: all our subprocedures rely on calling perf on inputs of smaller rank, which means all of them have the precondition that \(\perffun\) is reliable up to (and excluding) \(\rankof{\anngvas}\). 

Moreover, \(\text{ref}_{(R i)}\) assumes that \((R0)\) to \((Ri-1)\) hold.

\begin{lemma}
If $\perffun$ is reliable up to $\rankof{\anngvas}$ 
then $\cleanof{\anngvas}$ terminates with output $\adeconst$ that is a deconstruction of $\anngvas$, clean, and satisfies $\rankof{\adeconst}\leq\rankof{\anngvas}$.\label{Lemma:CleannessGuaranteeOverview}
\end{lemma}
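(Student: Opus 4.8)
The plan is to analyze $\clean$ as the composition of several elementary cleaning steps, each of which removes one kind of ``syntactic slack'' from an NGVAS (unreachable or redundant nonterminals, components that contribute no runs, mismatched initial/final markings in sub-NGVAS, etc.), and to prove the three claimed properties---deconstruction, cleanness, and the rank bound---compositionally. First I would recall that $\clean$ is only ever invoked on an NGVAS $\anngvas$ with $\anngvas\headdom\anngvasdom$, i.e.\ strictly head dominated by the ``reference'' NGVAS $\anngvasdom$, and that $\perffun$ is assumed reliable up to $\rankof{\anngvasdom}$; this reliability is what makes the sub-calls inside a cleaning step meaningful, since detecting that a component is empty or redundant requires deciding emptiness of strictly smaller NGVAS, which is exactly what reliability of $\perffun$ below $\rankof{\anngvasdom}$ supplies.

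The argument then proceeds in three parts. (1) \emph{Deconstruction}: each individual cleaning step $s$ is designed so that $\nrunsof{s(\anngvas)}=\nrunsof{\anngvas}$; I would verify this for each step by a direct run-tracking argument (a non-negative run of $\anngvas$ never visits a part that a cleaning step removes, and conversely every run of a cleaned NGVAS lifts back). Composing, $\nrunsof{\cleanof{\anngvas}}=\nrunsof{\anngvas}$, and finiteness of the output set is immediate since each step produces finitely many NGVAS. (2) \emph{Cleanness}: I would show the steps are confluent/idempotent in the sense that after running all of them to a fixed point, none applies any more, which is precisely the definition of \emph{clean}; the only subtlety is that one step may re-expose slack that another step removes, so one argues via a terminating measure (e.g.\ total size of the grammar component) that the round-robin terminates in a genuinely clean NGVAS. (3) \emph{Rank bound}: this is the crux. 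Here I would invoke \Cref{Lemma:RefineFunctions} implicitly through the hypothesis $\anngvas\headdom\anngvasdom$: head dominatedness is precisely the quantitative slack that was built in so that cleaning, even though it may raise the rank \emph{relative to $\anngvas$}, cannot push it up to $\rankof{\anngvasdom}$. Concretely, I would show that every cleaning step preserves $\headdom\anngvasdom$ (it only deletes or simplifies, never introduces new ``heavy'' structure above the reference), and that $\headdom$ refines the rank order in the sense that $\anngvasp\headdom\anngvasdom$ implies $\rankof{\anngvasp}<\rankof{\anngvasdom}$; combining these over the composition yields $\rankof{\cleanof{\anngvas}}<\rankof{\anngvasdom}$.

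The main obstacle I anticipate is part~(3), and within it the interaction between the individual cleaning steps: a step that looks locally rank-nonincreasing may, by restructuring a sub-NGVAS, change which nonterminal is the ``head'' and thereby move mass within the rank tuple in a way that is only controlled by $\headdom$, not by the plain rank order. So the careful work is to pin down the definition of $\headdom$ tightly enough that (i) $\refinepar{\aprop}$ is shown to achieve it in \Cref{Lemma:RefineFunctions}, (ii) it is preserved by each cleaning step, and (iii) it entails the strict rank drop---and to make sure these three requirements are mutually consistent. A secondary, more routine obstacle is bookkeeping the recursion: every emptiness/boundedness query a cleaning step issues must be on an NGVAS of rank strictly below $\rankof{\anngvasdom}$, which I would discharge by the same head-dominatedness invariant together with the fact that sub-NGVAS of a head-dominated NGVAS are themselves dominated.
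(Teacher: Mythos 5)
There is a genuine gap, concentrated in your parts (1)--(2): you model $\clean$ as a round-robin of local ``slack-removal'' steps driven to a fixed point, with termination measured by grammar size, and you use $\perffun$ only as an emptiness oracle. Neither matches what cleanness actually demands. Cleanness includes \perfectnesschildren and \perfectnessbase (all children perfect, base effects enabled), so the paper's $\clean$ must \emph{replace} every child by a perfect deconstruction via $\perffun$ and then apply $\basisfire$ to shift period applications into the base effect --- these are global, run-preserving rewritings of the children, not deletions of parts ``no run visits,'' and they can blow the description size up enormously (the whole procedure is Hyper-Ackermann). Consequently a size-based termination measure for a fixed-point loop cannot work; the paper instead terminates by structural recursion: it splits the weak NGVAS into SCCs, recursively cleans/perfects the lower SCCs and children (all of strictly smaller rank, by the parent-child rank lemma, so reliability of $\perffun$ up to $\rankof{\anngvasdom}$ applies), and assembles the top SCC into a strong NGVAS (in the linear case, one NGVAS per exit rule). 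You also miss condition \perfectnesschildperiodsbdnospace\ entirely: establishing that the center children's periods lie in the support requires the procedure $\centerdec$, which repeatedly fixes bounded periods, re-perfects the affected child, and terminates only because the dimension of the span of that child's periods strictly decreases each round --- and $\centerdec$ and $\basisfire$ are mutually recursive, proved correct by a mutual induction on rank. None of this is recoverable from a confluence-plus-size argument.

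Your part (3) is directionally right and close to the paper's intent: head-domination is exactly the invariant ensuring that cleaning only restructures SCCs whose local ranks are strictly below $\lrankof{\anngvasdom}$, while the untouched (already perfect) suffix of each branch keeps $\srankof{\cdot}\lexlessof{\lrankof{\anngvasdom}}\srankof{\anngvasdom}$, so copying or decomposing the ``destroyed'' prefix only perturbs rank components less significant than $\lrankof{\anngvasdom}$, and the unconstrained-counter component cannot worsen; together this gives $\rankof{\adeconst}<\rankof{\anngvasdom}$ (in fact the stronger $\srank$-statement). But as written you only list the three facts to be shown ($\headdom$ preserved, $\headdom$ implies rank drop) without the key computation on the max-over-branches definition of $\srank$, and since your parts (1)--(2) supply the wrong machinery for what the cleaning steps actually are, the preservation claim has nothing concrete to attach to. To repair the proposal you would need to restate $\clean$ as the SCC-recursive procedure with the $\perffun$/$\basisfire$/$\centerdec$ subcalls and redo termination and cleanness along the rank-based (and dimension-based, for \perfectnesschildperiodsbdnospace) inductions.
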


\begin{lemma}\label{Lemma:RefineFunctions}
If $\perffun$ is reliable up to $\rankof{\anngvas}$ and the  NGVAS \(\anngvas\) is clean and fulfills \((R0)-(Ri-1)\), then $\refineparof{(R i)}{\anngvas}$ decides whether condition \((R i)\) holds and, if not, outputs a deconstruction $\adeconst$ so that $\rankof{\adeconst}<\rankof{\anngvas}$.
\end{lemma}


These lemmas imply termination of \(\perffun\) in \Cref{Proposition:Deconst} as sketched before.
We elaborate on iteration, refinement, rank, and complexity.

\section{Iteration Lemma}\label{Section:OutlineIterationLemma}
We give details on the NGVAS model and perfectness, state the iteration lemma, and explain its proof.
\subsection{NGVAS}\label{Section:OutlineNGVAS}
\newcommand{\procof}[1]{\text{Proc}(#1)}
A \emph{nested grammar vector addition system} $\anngvas=(\agram, \contextinformation, \boundednessinformation)$ is a context-free grammar $\agram=(\nonterms, \trms, \prods, \startnonterm)$ with a particular form of terminal symbols $\trms$ and extra information.  
An NGVAS of nesting depth zero has as terminals counter updates in $\Z^{\adim}$. 
An NGVAS of nesting depth $n+1$ has as terminals NGVAS of depth at most $n$, also called childNGVAS.
NGVAS are expected to be strongly connected in that every two non-terminals/procedures can call each other.
If the grammar of an NGVAS does not contain a rule that produces two non-terminals, we call it \emph{linear}, and if it does, we call it \emph{non-linear}.

The first piece of extra information $\contextinformation=(\acontext, \restrictions, \unconstrained)$, breaks into three components.
First is the pair of \emph{source} and \emph{target} markings $\acontext=(\acontextin,\acontextout)$ from $\Nomega^{\adim}$. 
Source and target not only define the reachability problem, they also summarize the effect of calls to this procedure. 

The second component is a linear set \(\restrictions \subseteq \Z^d\) of effects, called the \emph{restriction} which overapproximates the possible effects of runs. 
The idea is that if some counter \(\acounter\) is~$\omega$, source and target are not very informative. 
Last component is a set of counters $\unconstrained\subseteq [1,d]$ that are guaranteed to be unbounded anywhere in the NGVAS.
To ensure this, we require $\unconstrained\subseteq\omegaof{\acontextin}, \omegaof{\acontextout}$, and that children only extend the $\unconstrained$ component. 


We call $\contextinformation$ the \emph{(enforced) context} of the NGVAS. 
Enforced means the \emph{semantics} forbids runs that do not respect the context or that call childNGVAS out-of-context (e.g. with the wrong counter valuations at the entrance or exit, with the wrong effect etc). 
We use $\nrunsof{\anngvas}$ resp. $\zrunsof{\anngvas}$ for the set of all $\N$-runs resp. $\Z$-runs. 
Importantly, also $\Z$-runs have to respect the context. 

The last piece is called the \emph{boundedness information} $\boundednessinformation=(\abdinfomid, \infun, \outfun)$. 
For the sake of simplicity, we only present the boundedness information of a non-linear NGVAS here.
It consists of a set $\abdinfomid$, and two functions \(\infun, \outfun \colon \nonterms \to \omega^{\abdinfomid}\times\N^{[1,d]\setminus\abdinfomid}\).
The set $\adimset$ contains counters that are unbounded in the context of this NGVAS. 
The functions \(\infun, \outfun\) track for every non-terminal $\anonterm$ and bounded counter $\acounter \not \in \adimset$ the values at the start resp. end of procedure $\anonterm$. If \(\anonterm \to \anontermp.\anontermpp\) is a production, then we must have \(\infun(\anonterm)=\infun(\anontermp), \outfun(\anontermp)=\infun(\anontermpp)\) and \(\outfun(\anontermpp)=\outfun(\anonterm)\). This is similar to tracking counters in the control-state of a VASS \cite{Lambert92}.
We expect that context information of the children complies with boundedness information of the parent, i.e.\ children have the $\adimset$ of the parent as their $\unconstrained$ set, and 
if a non-terminal $\anonterm\in\nonterms$ produces a terminal $\asymbol\in\trms$ with a rule $\anonterm\to\aword\asymbol\awordp$, then the markings must fit the boundedness information: $\asymbol.\acontextin=\inof{\anonterm}$ if $\aword=\varepsilon$ and $\asymbol.\acontextout=\outof{\anonterm}$ if $\awordp=\varepsilon$.


\subsection{Perfectness}\label{Section:OutlinePerfectness}
We focus on the more complicated and interesting case of non-linear NGVAS. The important parts are: 
\begin{enumerate}[itemsep=-1pt]
\item[\perfectnesssolnospace] For every effect in the restriction, $\baseeffectchoice \in \restrictions$, there is a $\Z$-run with this effect. 
\item[\perfectnesschildrennospace] All childNGVAS are perfect.
\item[\perfectnessprodsnospace] For every $n \in \N$, there is a $\Z$-run $\rho_n$ which uses every production at least $n$ times and, for every childNGVAS $\anngvasp$, uses every period in the restriction of $\anngvasp$ at least $n$ times.
\item[\perfectnesspumpingnospace] If a counter should be unbounded, $\acounter \in D$, then it actually is unbounded: there exists a derivation $S \to^{\ast} \upseq .S.\downseq$ so that $\upseq$ can be fired at $\acontextin$ and increases $\acounter$, and $\downseq$ can be backwards fired at $\acontextout$ and backwards firing increases $\acounter$. 
The latter has the purpose of pumping down the counter $\acounter$.
\end{enumerate}

\noindent Here, \perfectnesschildrennospace, \perfectnessprods and \perfectnesspumping are the refinement conditions we mentioned when explaining the computation of $\perffun$. 
We omitted \perfectnesspumpingint because it only concerns the linear case. 



\subsection{Proof of Proposition \ref{Proposition:Iteration}}\label{Section:OutlineILProof}
To prove the iteration result, we use induction on the nesting depth.  
We slightly strengthen the inductive statement as follows:
\begin{restatable}{theorem}{TheoremIterationLemmaMain} \label{TheoremIterationLemmaNonLinearOverview}
Let $\anngvas$ be perfect with $\restrictions = \baseeffect+\periodeffect^*$. 
Let $\baseeffectchoice\in\restrictions$ and $\periodeffectchoice\in\N^{\periodeffect}$ with $\periodeffectchoice\geq 1$. 
There is $\initconst\geq 1$ so that for every $\aconst\geq \initconst$ there is $\iterrun{\aconst}\in\runsof{\anngvas}$ with effect \(\vaseffof{\iterrun{\aconst}}=\baseeffectchoice+\aconst \cdot \periodeffect \cdot \periodeffectchoice\).
\end{restatable}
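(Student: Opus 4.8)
The plan is to prove the statement by induction on the nesting depth of $\anngvas$, with the strengthened form stated above (which fixes a period vector $\periodeffectchoice$ and produces runs with effect $\baseeffectchoice + \aconst \cdot \periodeffect \cdot \periodeffectchoice$ for all large $\aconst$). The base case is essentially the classical VASS iteration argument. For the inductive step, I would assume the theorem holds for all childNGVAS $\anngvasp$ (which are perfect by condition \perfectnesschildrennospace, so the induction hypothesis applies to them), and build a run of $\anngvas$ from three ingredients: a $\Z$-run witnessing the target effect, the pumping derivations $\upseq, \downseq$ guaranteed by condition \perfectnesspumpingnospace, and the "every production used $n$ times" $\Z$-runs $\rho_n$ from condition \perfectnessprodsnospace.

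The key steps, in order: First, fix $\baseeffectchoice$ and $\periodeffectchoice$; by condition \perfectnesssolnospace (cleanness) there is a $\Z$-run realizing $\baseeffectchoice$, and a $\Z$-run realizing the period direction. Second, for each bounded counter $i \notin \adimset$, the boundedness information pins down the value at every nonterminal occurrence, so along any derivation the counter stays in a bounded window — here is where the restriction/context being enforced even for $\Z$-runs is crucial, since it lets me treat bounded counters as a finite-state component. Third, for the genuinely unbounded counters $\acounter \in \adimset$, I would use the $\upseq/\downseq$ derivations from \perfectnesspumpingnospace to raise these counters to an arbitrarily large plateau before the main body of the run and lower them afterwards, exactly as $\upseq, \downseq$ do in VASS; the subtlety (Phenomenon 3 in the introduction) is that the main body, being a derivation tree, may need a hurdle linear rather than logarithmic in its size, so I must invoke the wide tree theorem to replace the body word $w$ by a permutation $\pi(w) \in \langof{\anngvas}$ derivable with logarithmic height, hence executable above a merely logarithmic plateau. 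Fourth, to get the precise effect $\baseeffectchoice + \aconst \cdot \periodeffect \cdot \periodeffectchoice$ I would take $\aconst$ copies of the period-realizing $\Z$-run together with the $\baseeffectchoice$-realizing $\Z$-run, reconcile their use of productions and of child restrictions against $\rho_n$ (for $n$ large enough compared to the negative excursions), and splice in — for each childNGVAS occurrence along the resulting derivation — an actual run of $\anngvasp$ with the required effect, obtained from the induction hypothesis applied to $\anngvasp$ (legitimate because the spliced-in effect lies in $\anngvasp$'s restriction, which is perfect). Finally, assemble: pump up via $\upseq$, run the logarithmic-height permuted body with all child occurrences realized by genuine runs, pump down via $\downseq$, and check non-negativity counter by counter — unbounded ones by the plateau, bounded ones by the context window.

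The main obstacle I expect is the coordination between the combinatorial (grammar) side and the arithmetic (counter) side in the inductive step: I must simultaneously control (a) that the permuted body $\pi(w)$ is still in the language and still has the target effect (commutativity of addition handles the effect, but the wide tree theorem's hypotheses must be checked to hold in a perfect NGVAS), (b) that every childNGVAS occurrence appearing in $\pi(w)$ can actually be replaced by a real run with an effect in its restriction and with the right endpoint markings, and (c) that the counts of productions and of child periods match those of some $\rho_n$ with $n$ large enough to absorb all the negative dips — so that the only remaining non-negativity worry is the two pumpable "reservoirs" built by $\upseq$ and $\downseq$. Making the bookkeeping in (c) precise, i.e. choosing $\initconst$ as an explicit function of the $\Z$-run data, the wide-tree logarithmic bound, and the child $\initconst$'s, is where the real work lies; everything else is an orchestration of results already available (Propositions \ref{Proposition:Iteration}'s sub-claims, the wide tree theorem, and the induction hypothesis).
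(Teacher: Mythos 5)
Your overall route for the non-linear case---induction on nesting depth, a $\Z$-run supplied by \perfectnesssolnospace, child runs supplied by \perfectnesschildren plus the induction hypothesis, pumping with $\upseq,\downseq$ from \perfectnesspumpingnospace, and the wide tree theorem so that the negative excursions grow only logarithmically while the pumped reservoir grows linearly in $\aconst$---is exactly the paper's argument. But there is a genuine gap: the theorem quantifies over \emph{all} perfect NGVAS, and your plan silently assumes the wide tree theorem is available. That theorem requires a non-linear (branching), strongly connected grammar; for a linear NGVAS every parse tree is a single spine of non-terminals, its height is linear in the length of the yield, and no permutation of the body can be derived with logarithmic height, so your step invoking $\pi(w)$ fails outright. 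The paper handles linear NGVAS by a separate argument: perfectness there includes the additional internal-pumping condition \perfectnesspumpingintnospace, the characteristic equations are split into the four directions around the unique exit production, and the run is assembled Lambert-style from two pumping situations (roughly $\upseq\cdots\downseqint$ on the left and $\upseqint\cdots\downseq$ on the right), with center runs $\reachruncenterleftdef,\reachruncenterrightdef$ whose period content must itself grow with the iteration count $\aconst$. None of this is recoverable from your outline, so as written your proof covers only non-linear $\anngvas$; this is a missing idea, not a missing detail.

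Two smaller problems in the non-linear part. First, you cannot invoke the induction hypothesis "for each childNGVAS occurrence": the hypothesis only produces runs whose effect contains at least $\initconst$ copies of the child's periods (it requires $\periodeffectchoice\geq 1$ and $\aconst\geq\initconst$), so the many occurrences that must contribute \emph{exactly} the base effect of their restriction have to be realized via \perfectnessbase (base effects enabled, part of cleanness); the paper uses the induction hypothesis for one designated instance per child and \perfectnessbase for all others. Second, your phrase "reconcile their use of productions and of child restrictions against $\rho_n$" hides the actual mechanism that repairs the effect: the inserted $\upseq^{k}.\downseq^{k}$ and the surplus child periods must be completed to an exact multiple of a full homogeneous solution by an explicitly constructed difference derivation (Theorem~\ref{Theorem:EEK} applied to the difference of Parikh vectors, which stays $\geq 1$ only thanks to \perfectnessprodsnospace), and to reach every sufficiently large $\aconst$ rather than only multiples of one scaling factor one needs the $\aconst-\maxconst=j_1\cdot\sumconst+j_2\cdot(\sumconst+1)$ decomposition. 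These two points are fixable with the paper's bookkeeping, but they are exactly where your choice of $\initconst$ would otherwise break down.
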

\noindent This says that every effect in the restriction, $\baseeffectchoice \in \restrictions$, can be implemented by an $\N$-run as long as we add a given loop $\periodeffectchoice$ often enough. 
The loop $\periodeffectchoice$ may even be chosen arbitrarily, except that it has to use every period of $\restrictions$.
We now give a proof sketch of the induction step. 
We assume that the NGVAS \(\anngvas\) is non-linear in this overview, but many aspects carry over to the linear case. 
Some ideas are based on KLMST, with extra difficulties though. 

By \perfectnesssolnospace, there is a $\Z$-run $\reachrun$ with the desired effect~$\baseeffectchoice$. 
Our goal is to turn $\reachrun$ into an $\N$-run. 
We discuss some of the problems, solutions, and the shape of the result. 
\begin{problem}
Counters may go negative on \emph{the current level} of nesting.
\end{problem}
To solve this problem, we adapt the run to \(\upseq^{\aconst}.\reachrun.\downseq^{\aconst}\), where \(\upseq, \downseq\) are the loops from \perfectnesspumpingnospace. 
This pumps all counters to high values to ensure \(\reachrun\) does not go negative if \(\aconst\) is large enough.
\begin{problem}
Counters may go negative on a \emph{lower level of nesting}.
\end{problem}
%
The problematic situation is that a counter \(\acounter\) is bounded on the current but unbounded on a lower nesting level.
Then the solution to Problem 1 does not apply, 
but we have to pump the counter \emph{inside the child}. 
To do so, we rely on \perfectnesschildrennospace. 
Since all childNGVAS $\anngvasp$ are perfect, we can invoke the induction hypothesis.
It modifies $\reachrun$ to a run $\reachrun^{(\aconst)}$ by adding periods of $\anngvasp.\restrictions$ that make sure the run is enabled on the lower level.
At this point, the shape is $\upseq^{\aconst}.\reachrun^{(\aconst)}.\downseq^{\aconst}$. 
There are two problems left. 
\begin{problem}
Both additions change the effect of $\reachrun$.
\end{problem}

We may have $\vaseffof{\upseq}+\vaseffof{\downseq} \neq 0$, and also adding the periods to obtain \(\reachrun^{(\aconst)}\) may change the effect. 
To still reach the target \(\acontextout\), we add a loop \(S \rightarrow^{\ast} \diffrunleft.S.\diffrunright\) to the run which cancels out the changes.  
We do not elaborate on the loop construction, the ideas are similar to KLMST.  
At this point, our run is \(\upseq^{\aconst}.\diffrunleft^{\aconst}.\reachrun^{(\aconst)}.\diffrunright^{\aconst}.\downseq^{\aconst}\). 

We focus on a new problem. 
\begin{problem}
$\diffrunleft$ may have a negative effect which $\upseq$ cannot make up for. 
\end{problem}
The problem does not have an analogy in the VASS setting. 
To solve it, 
we modify the run so that it can still be derived in the grammar but also stays non-negative: 
\begin{align*}
\iterrun{\aconst}\quad =\quad \upseq^{\aconst}.\pi^{(k)}(\diffrunleft^{\aconst}.\reachrun^{(\aconst)}.\diffrunright^{\aconst}).\downseq^{\aconst}\ .
\end{align*} 
Here, $\pi^{(\aconst)}$ is a permutation of the inner sequence that we obtain with a new result for context-free grammars (no VASS involved) called \emph{wide tree theorem}. 
To explain it, consider the derivation $S\rightarrow \diffrunleft.S.\diffrunright$ that we want to repeat $\aconst$-times.
The theorem says we can find a parse tree for the repetition that only grows logarithmic in~$\aconst$. 
Behind this is a binary tree construction, illustrated below, which fundamentally needs non-linearity and strong connectedness. 
\vspace{0.1cm}


\begin{center}
\includegraphics[scale=0.7,page=1]{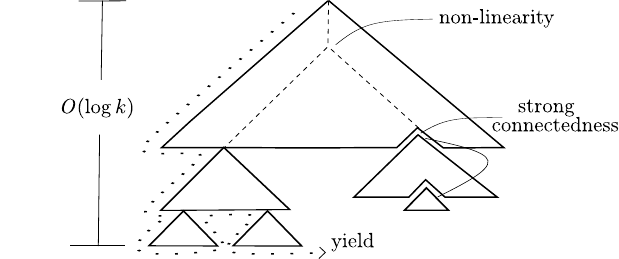}
\end{center}


The relevance to our problem is this. 
As $\diffrunleft+\diffrunright$ was chosen to compensate $\upseq+\downseq$, 
we have the equality $\vaseffof{\upseq+\diffrunleft+\diffrunright}=-\vaseffof{\downseq}>0$. 
Hence, whenever we finish executing a pair $(\diffrunleft, \diffrunright)$, we know this caused a positive effect.  
Only copies of $\diffrunleft$ that are incomplete in that we have not yet executed~$\diffrunright$ may lead to negative counters. 
In the parse tree constructed by the wide tree theorem, only \emph{logarithmically} in $\aconst$ many copies of $\diffrunleft$ can be incomplete.  
Since $\upseq^{\aconst}$ produces a linear in $\aconst$ positive value on every counter, and linear beats logarithmic, $\iterrun{\aconst}$ is an $\N$-run for $\aconst$ large enough.

\newcommand{\extraa}{\textsf{(a)}}
\newcommand{\extrab}{\textsf{(b)}}
\newcommand{\extrac}{\textsf{(c)}}
\newcommand{\extrad}{\textsf{(d)}}
\newcommand{\downseqj}{\downseq^{\acounterp}}
\newcommand{\upseqi}{\upseq^{\acounter}}
\newcommand{\mylast}{\mathit{last}}
\newcommand{\aset}{\mathit{X}}
\newcommand{\asetp}{\mathit{Y}}
\newcommand{\relc}{\mathit{r}}
\newcommand{\downseqrack}{\downseq^{\aset}}
\newcommand{\upseqrack}{\upseq^{\aset}}
\newcommand{\downseqz}{\downseq^{\Z}}
\newcommand{\upseqz}{\upseq^{\Z}}
\newcommand{\aconstz}{\aconst}
\newcommand{\aconstij}{\aconst_{\acounter, \acounterp}}
\newcommand{\aconstj}{\aconst_{\acounterp}}
\newcommand{\rackfun}{\mathit{f}}
\newcommand{\rackfunof}[1]{\rackfun(#1)}
\newcommand{\anumber}{\mathit{l}}
\section{Up- and Down-Pumping}\label{Section:OutlinePumping}
With respect to computing a decomposition of lower rank, most aspects are implemented similar to VASS, the only major difficulty is \(\refinepump\) in \Cref{Lemma:RefineFunctions}. 
Recall that we may assume the input 
fulfills all perfectness conditions except $\perfectnesspumping$ and that $\perffun$ is reliable up to the rank of the input.
The function has to check whether there are up- and down-pumping runs as required by $\perfectnesspumpingnospace$, and if not it has to compute a decomposition $\adeconst$ that is of smaller rank than the input $\anngvas$. 

We have not defined the rank yet, as we only need the following two facts about the rank: That childNGVAS have lower rank, and that more unconstrained counters implies a lower rank. The intuition is that unconstrained counter are easy (in VASS terms \(\Z\)-counters), since we cannot have reachability constraints on them.

\begin{lemma} \label{LemmaBasicRankProperties}
Let \(\anngvasp\) be a childNGVAS of a \emph{non-linear} NGVAS \(\anngvas\). 
Then \(\rankof{\anngvasp}<\rankof{\anngvas}\).

Let \(\anngvas_1, \anngvas_2\) be two NGVAS with \(|\anngvas_1.\unconstrained|>|\anngvas_2.\unconstrained|\). Then \(\rankof{\anngvas_1} < \rankof{\anngvas_2}\).
\end{lemma} 

Our insight is that both the check for pumping runs and the decomposition can be reduced to computing two functions.
Let $\omegamrkdomainof{} = \setcond{\amarking\in \Nomega^{\adim}}{\unconstrained\subseteq\omegaof{\amarking}\subseteq \abdinfomid}$ contain the generalized markings of interest, and $\afuncdomain=\omegamrkdomainof{}\times(\nonterms\cup\trms)$.
We let $\postfuncN{\anngvas}, \prefuncN{\anngvas}:\ \afuncdomain\rightarrow\powof{\Nomega^{\adim}}$ with
\begin{align*}
 \postfuncNof{\anngvas}{\amarking, \asymbol} &=\iddecompof{\downclsof{\setcompact{\amarkingppp\in\N^{d}}{(\amarkingpp, \arun, \amarkingppp)\in\runsof{\asymbol},\;\amarkingpp\sqsubseteq\amarking}}}\\
    \prefuncNof{\anngvas}{\amarkingp, \asymbol}&=\iddecompof{\downclsof{\setcompact{\amarkingpp\in\N^{d}}{(\amarkingpp, \arun, \amarkingppp)\in\runsof{\asymbol},\;\amarkingppp\sqsubseteq\amarkingp}}}.
\end{align*}
The specialization order $\sqsubseteq$ on $\N_{\omega}$ is $\omega \sqsubseteq \omega$, $k \sqsubseteq k$, and $k \sqsubseteq \omega$ for all $k \in \N$.  
We lift it to $\N_{\omega}^d$ component-wise.  

Function $\postfuncN{\anngvas}$ takes as input a generalized marking~$\amarking$ and a terminal or non-terminal symbol~$\asymbol$. 
It considers all concrete markings $\amarkingpp$ represented by $\amarking$ and computes the set of reachable markings, more precisely, the set of all markings reachable via runs that can be derived from $\asymbol$. 
To obtain a finite representation, the function closes the set downwards ($\downclsof{-}$) and computes the decomposition into maximal ideals ($\iddecomp$).
The downward closure does not lose information when it comes to pumping. 
The set of maximal ideals in a well-quasi order is guaranteed to be finite, and in our case maximal ideals correspond to generalized markings~\cite{FinkelG09}. 

We give the reduction and then the computability of $\prefuncN{\anngvas}$ and $\postfuncN{\anngvas}$. The latter is a main achievement.
\subsection{Computing $\refinepump$}\label{Section:OutlinePumpingDec}
The function first has to check the existence of an up-pumping run $\upseq$ from $\acontextin$ and a down-pumping run $\downseq$ from $\acontextout$ so that $\startnonterm\rightarrow^*\upseq.\startnonterm.\downseq$.  
Assuming the computability of $\prefuncN{\anngvas}$ and $\postfuncN{\anngvas}$, we can use a mild generalization of the Karp-Miller tree.  

The Karp-Miller tree is labeled by triples $(\amarking_1, \anonterm, \amarking_2)$ from $\Nomega^{\adim}\times\nonterms\times\Nomega^{\adim}$. 
The non-terminal $\anonterm$ should be understood as to lie on the path $\startnonterm\rightarrow^*\upseq.\startnonterm.\downseq$. 
The markings $\amarking_1$ and $\amarking_2$ represent the current outcome of $\upseq$ and $\downseq$, computed using $\postfuncN{\anngvas}$ on $\acontextin$ resp.  $\prefuncN{\anngvas}$ on $\acontextout$. 
To be precise, the root is $(\acontextin, \startnonterm, \acontextout)$. 
We extend a node $(\amarking_1, \anonterm, \amarking_2)$ by considering all rules $\anonterm\rightarrow \anontermp_1.\anontermp_2$. 
This can be understood as selecting the parse tree that contains the pumping situation. 
We consider both choices $\anontermp=\anontermp_1$ and $\anontermp=\anontermp_2$, which corresponds to selecting the path in this parse tree.
Let $\anontermp=\anontermp_2$.  
We consider all $\amarkingp_1\in \postfuncNof{\anngvas}{\amarking_1, \anontermp_1}$. 
With these choices made, we define the successor node in the tree as $(\amarkingp_1, \anontermp, \amarking_2)$. 
We achieve termination with an acceleration step.
We look for a node $(\amarkingp_1', \anontermp, \amarking_2')$ on the path to the new successor so that $(\amarkingp_1', \amarking_2')< (\amarkingp_1, \amarking_2)$.
Then we replace the entries in our successor by $\omega$ wherever the inequality is strict. 
\begin{restatable}{lemma}{LemmaKarpMillerTreeOverview}\label{Lemma:WitnessGrammarTermination}\label{Lemma:WitnessGrammarProp}
Assume $\postfuncN{\anngvas}$ and $\prefuncN{\anngvas}$ are computable for~$\anngvas$.  
Then the Karp-Miller tree construction terminates. 
Moreover, $\anngvas$ satisfies \perfectnesspumping if and only if the tree contains a node $(\amarking_1, \anonterm, \amarking_2)$ s.t. \(\amarking_j[i]=\omega\) for all \(i \in D\) and \(j \in \{1,2\}\).%
\end{restatable}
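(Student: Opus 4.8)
\textbf{Proof plan for \Cref{Lemma:WitnessGrammarTermination}.}
The plan is to argue the two halves—termination and correctness—separately, with the Karp--Miller acceleration doing the work in the first and a standard pumping/monotonicity argument in the second. For \emph{termination}, I would first fix the parameters that bound the branching: the underlying CFG has finitely many productions $\anonterm \to \anontermp_1.\anontermp_2$, each node has only the two choices $\anontermp \in \set{\anontermp_1, \anontermp_2}$, and by the computability assumption each of $\postfuncNof{\anngvas}{\amarking_1, \anontermp_1}$ and $\prefuncNof{\anngvas}{\amarking_2, \anontermp_2}$ returns a \emph{finite} set of generalized markings (they are ideal decompositions of downward-closed sets in the wqo $\Nomega^{\adim}$). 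Hence the tree is finitely branching, and by König's lemma it suffices to show every branch is finite. Along a branch, the $\anonterm$-component ranges over a finite set, and the pair of markings $(\amarking_1, \amarking_2)$ lives in $\Nomega^{\adim}\times\Nomega^{\adim}$, which is a wqo under $\sqsubseteq$. The acceleration rule guarantees that whenever we would create a successor $(\amarkingp_1, \anontermp, \amarking_2)$ with an ancestor $(\amarkingp_1', \anontermp, \amarking_2')$ satisfying $(\amarkingp_1', \amarking_2') < (\amarkingp_1, \amarking_2)$ strictly, we push the strictly-larger coordinates to $\omega$; so on any branch the sequence of marking-pairs restricted to a fixed non-terminal is strictly decreasing in the "number of non-$\omega$ coordinates does not increase, and when it stays equal the marking itself does not admit a strictly smaller earlier copy"—a standard Karp--Miller argument shows no branch can be infinite without either stabilising (a repeat, which we stop) or triggering an $\omega$-introduction, and only $2d$ such introductions are possible per branch. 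Between consecutive $\omega$-introductions the marking-pairs at a fixed non-terminal form a bad sequence in the wqo, hence finite; combining across the finitely many non-terminals and the finitely many $\omega$-introductions bounds the branch length. The one subtlety to be careful with is that $\postfuncN{\anngvas}$ and $\prefuncN{\anngvas}$ are already \emph{downward-closed and ideal-decomposed}, so the monotonicity needed for the wqo argument is inherited from $\runsof{\asymbol}$ via the specialization order; I would state this as the key invariant and verify it is preserved by the successor construction.

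For the \emph{correctness} direction, I would show the equivalence "$\anngvas$ satisfies \perfectnesspumping $\iff$ the tree contains a node $(\inof{\startnonterm}, \startnonterm, \outof{\startnonterm})$." For the forward implication, given up- and down-pumping runs with $\startnonterm\to^*\upseq.\startnonterm.\downseq$, I would take the derivation tree of this cycle, read off the path from the root $\startnonterm$ down to the recursive occurrence of $\startnonterm$, and follow the corresponding path in the Karp--Miller tree: at each step the node's markings are, by construction, overapproximations (in the $\sqsubseteq$ sense, after downward closure and acceleration) of the actual intermediate markings produced by the prefix of $\upseq$ resp.\ the suffix of $\downseq$ derivable from that path position; since the real run returns exactly to $(\inof{\startnonterm}, \outof{\startnonterm})$ and the tree nodes only ever \emph{over}approximate (coordinates can become $\omega$ but $\omega$ covers everything including the sought finite values), the path must reach a node whose non-terminal is $\startnonterm$ with markings $\sqsupseteq (\inof{\startnonterm}, \outof{\startnonterm})$—and I would argue that in fact the witnessing node is precisely $(\inof{\startnonterm}, \startnonterm, \outof{\startnonterm})$, because the context is enforced, so no larger-than-$\inof{\startnonterm}$ or $\omega$ value is consistent at that position unless it were already $\omega$, in which case an exact match still appears among the decomposed ideals. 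For the backward implication, given such a node, I would reconstruct an actual pumping derivation by walking the tree path and, at each acceleration-introduced $\omega$, invoking the classical Karp--Miller pumping lemma: each $\omega$ came from a strictly increasing cycle in the derivation, which can be iterated to pump the corresponding counter arbitrarily high; composing these iterated cycles along the path yields a derivation $\startnonterm\to^*\upseq.\startnonterm.\downseq$ in which $\upseq$ is firable at $\acontextin$ and strictly increases every counter that was set to $\omega$ on the "pre" side, and symmetrically $\downseq$ is backward-firable at $\acontextout$ and backward-strictly-increasing on the "post" side—matching \perfectnesspumping exactly.

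The main obstacle I expect is in the correctness direction, specifically the bookkeeping that translates an abstract Karp--Miller path back into a concrete derivation realising \perfectnesspumping with the \emph{right} up/down structure. Two points need care: first, the tree conflates choices of path \emph{within} a parse tree ($\anontermp = \anontermp_1$ versus $\anontermp_2$) with the recursive-unfolding structure of the cycle, so one must check that the path we extract genuinely corresponds to a single derivation $\startnonterm\to^*\upseq.\startnonterm.\downseq$ and that the sibling subtrees (the $\anontermp_1$ when we descend into $\anontermp_2$, and vice versa) contribute precisely the $\upseq$/$\downseq$ effects via $\postfuncN{\anngvas}$/$\prefuncN{\anngvas}$; second, the pumping of $\omega$-coordinates must be done \emph{simultaneously} for all counters that became $\omega$, which is the usual Karp--Miller lexicographic pumping but here layered over context-freeness—one pumps the outermost cycle first, then inner ones, relying on strong connectivity of the SCC (part of the NGVAS definition) so that the intermediate non-terminals can indeed host the required sub-derivations. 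I would handle this by formulating a precise invariant relating each Karp--Miller node to the set of (prefix-of-$\upseq$, suffix-of-$\downseq$) pairs achievable at the corresponding derivation position, and proving it by induction on tree depth; the pumping reconstruction then follows by reverse induction along the witnessing path. The rest—finite branching, the wqo argument for branch finiteness, and the enforced-context observation pinning the witness node to the exact markings—is routine once the invariant is in place.
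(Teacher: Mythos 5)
Your proposal matches the paper's own treatment: termination is argued there exactly as you do (finite branching plus a K\"onig/wqo argument, with the acceleration forcing the $\omega$-set, a subset of $[1,2d]$, to grow strictly, so no infinite branch exists), and the paper dispatches the biconditional with \perfectnesspumping\ by the same "standard" Karp--Miller soundness/completeness reasoning you sketch (iterating the accelerating cycles to rebuild $\upseq,\downseq$, and conversely following the pumping derivation's path so that the acceleration at the returning $\startnonterm$-node yields $(\inof{\startnonterm}, \startnonterm, \outof{\startnonterm})$). So the approach is essentially identical; only your forward-direction phrasing ("the real run returns exactly to...") should be tightened to the observation that one traversal of the cycle strictly dominates the root, which triggers the acceleration and produces the desired node directly.
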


If there is no pumping run, we have to decompose the NGVAS.  
Unfortunately, the Karp-Miller tree is not precise enough for this: 
it mixes parse trees, has different views on the same parse tree, and only has incomplete information about runs. 

We compute the decomposition from a new type of context-free grammar called a  \emph{coverability grammar}. 
The idea is to restrict the NGVAS by $\postfuncN{\anngvas}$ and $\prefuncN{\anngvas}$. 
We implement this restriction with two triple constructions that are inspired by the intersection of a context-free with a regular language. 
The non-terminals in the coverability grammar are $5$-tuples $(\amarking_1, \amarking_2, \anonterm, \amarkingp_2, \amarkingp_1)$ from $\Nomega^{\adim}\times\Nomega^{\adim}\times\nonterms\times\Nomega^{\adim}\times\Nomega^{\adim}$. 
When denoted as the more familiar triple, $(\amarking_1, 
\anonterm, \amarking_2)$ says that $\anonterm$ can produce a run which transforms $\amarking_1$ into~$\amarking_2$, meaning $\amarking_2\in\postfuncNof{\anngvas}{\amarking_1, \anonterm}$.  
The triple $(\amarkingp_2, \anonterm, \amarkingp_1)$ should be read similarly but with $\amarkingp_2\in\prefuncNof{\anngvas}{\amarkingp_1, \anonterm}$. 
For a production $\anonterm\rightarrow\anontermp_1.\anontermp_2$ in the NGVAS, the coverability grammar will have
\begin{align*}
(\amarking_1, \amarking_2, \anonterm, \amarkingp_2, \amarkingp_1)&\rightarrow\\
& (\amarking_1, \amarking_2', \anontermp_1, \amarkingp_3, \amarkingp_{2}').(\amarking_2', \amarking_3, \anontermp_2, \amarkingp_2', \amarkingp_1).
\end{align*}
The markings are chosen as expected, $\amarking_2'\in \postfuncNof{\anngvas}{\amarking_1, \anontermp_1}$ is the start marking for the first triple in the second non-terminal, and we again compute $\amarking_3\in \postfuncNof{\anngvas}{\amarking_2', \anontermp_2}$. 
For the other triple, the reasoning is similar. 
The new source and target markings have to respect the reachability that was promised initially, meaning we only add the production if $\amarking_3\sqsubseteq \amarking_2$ and $\amarkingp_3\sqsubseteq \amarkingp_2$. 
There is again an acceleration step that guarantees finiteness. 
Note that we do not restrict $\amarking_3$ to $\amarkingp_1$ and $\amarkingp_3$ to $\amarking_1$. 
With such a restriction, the second triple would eliminate $\omega$ entries from the first and vice versa. 
Then the number of $\omega$ entries would not grow monotonically, and we would lose termination.  
As we have defined it, the two triple constructions do not influence each other. 

The \emph{decomposition} consists of a single NGVAS. We split the coverability grammar into its strongly-connected components, and use them to create the nesting structure for the new NGVAS.

Recall that generalized markings represent downward-closed sets. 
We can implement the intersection of these sets by an elementwise minimum wrt. $\sqsubseteq$ on the generalized markings.  
To determine the boundedness information for $(\amarking_1, \amarking_2, \anonterm, \amarkingp_2, \amarkingp_1)$, we use $\amarking_1\cap \amarkingp_2$ and $\amarking_2\cap \amarkingp_1$.
This uses the most precise information we can obtain from the two triples. 

To see that the rank decreases, we prove that in every SCC some counter \(\acounter\) is bounded (as opposed to VASS however, there might not be a counter \(\acounter\) which is bounded in every SCC). To see this, observe that the corresponding node in the Karp-Miller graph would be of the form \((\amarking_1, \anonterm, \amarking_2)\) with \(\amarking_j[i]=\omega\) for all \(\acounter \in \adimset\) and \(j \in \{1,2\}\), and hence \perfectnesspumping holds. Similar to \cite{LerouxS19} for VASS, we will use the cycle space of SCCs as the rank, and bounding a counter decreases the cycle space. Decreasing all cycle spaces decreases the rank.

Before we state the formal guarantees given by the decomposition, we generalize the definition of the coverability grammar.
\subsection{Generalization}\label{Section:OutlineApproximations}
Rather than defining the coverability grammar only for the functions $\prefuncN{\anngvas}$ and $\postfuncN{\anngvas}$, we take two such functions as parameters and define $\wtgrammarof{\anngvas, \preapprox, \postapprox}$ relative to them. 
We expect that $\preapprox$ and $\postapprox$ are \emph{sound approximations} of $\prefuncN{\annagvas}$ resp. $\postfuncN{\anngvas}$ for $\anngvas$, meaning they are computable and over-approximate the original functions (in a precise sense).  
We use the notation $\anngvas_{\agram}$ for the decomposition computed from $\agram=\wtgrammarof{\anngvas, \preapprox, \postapprox}$.
\begin{restatable}{lemma}{LemmaDecompositionPumpingOverview}\label{Lemma:Decomp}
Let $\preapprox$ and $\postapprox$ be sound approximations of $\prefuncN{\annagvas}$ resp.~$\postfuncN{\anngvas}$ for $\anngvas$.  Then the computation of $\anngvas_{\agram}$ terminates. 
Moreover, if $\anngvas$ does not satisfy $\perfectnesspumpingnospace$ even in this approximation, 
then $\anngvas_{\agram}$ is a decomposition of $\anngvas$.
\end{restatable}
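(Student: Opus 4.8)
The plan is to prove the two assertions --- termination of the construction of $\anngvas_{\agram}$, and the fact that $\anngvas_{\agram}$ is a decomposition head dominated by $\anngvas$ whenever $\perfectnesspumpingnospace$ fails in the approximation --- in three stages: (1) bound the coverability grammar $\agram=\wtgrammarof{\anngvas,\preapprox,\postapprox}$, (2) show it faithfully captures $\nrunsof{\anngvas}$, and (3) analyse the boundedness information in the SCC-structured NGVAS that the construction extracts from $\agram$.

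First I would prove termination. The non-terminals of $\agram$ are $5$-tuples whose middle entry ranges over the finitely many non-terminals of $\anngvas$ and whose four marking components lie in $\Nomega^{\adim}$. By soundness of $\preapprox$ and $\postapprox$, each application of a production calls these computable functions on strictly smaller (w.r.t.\ $\sqsubseteq$) arguments for the ``inner'' markings, while the acceleration step --- modelled on the Karp--Miller acceleration --- replaces a coordinate by $\omega$ whenever a strictly dominating ancestor is found on the current branch. The key observation, as already noted in the paragraph before the lemma, is that because we deliberately do \emph{not} cross-restrict $\amarking_3$ against $\amarkingp_1$ (nor $\amarkingp_3$ against $\amarking_1$), the two triple constructions are independent, so the count of $\omega$-entries in each triple is monotonically non-decreasing along every branch. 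This gives, per branch, a bound on the number of distinct labels before acceleration forces a repeat, and a König-type argument (finite branching from finitely many productions $\anonterm\to\anontermp_1.\anontermp_2$ and finitely many choices of intermediate markings in the relevant ideal decompositions) yields termination. I would phrase this cleanly as: the set of reachable non-terminals of $\agram$ is finite and effectively computable.

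Second, the correctness of the decomposition. Here I need $\nrunsof{\anngvas_{\agram}}=\nrunsof{\anngvas}$. The direction $\nrunsof{\anngvas_{\agram}}\subseteq\nrunsof{\anngvas}$ is by construction, since every production of $\agram$ projects (via the middle components) onto a production of $\anngvas$, the restriction/boundedness data of $\anngvas_{\agram}$ is obtained by the elementwise-$\sqsubseteq$-minimum (intersection) $\amarking_1\cap\amarkingp_2$ and $\amarking_2\cap\amarkingp_1$, which only tightens the semantics, and the childNGVAS are inherited; a run of $\anngvas_{\agram}$ ignoring the extra marking bookkeeping is a run of $\anngvas$. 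For $\nrunsof{\anngvas}\subseteq\nrunsof{\anngvas_{\agram}}$, take a non-negative run $\rho$ of $\anngvas$ with parse tree $t$; I would annotate each node of $t$ carrying non-terminal $\anonterm$ with the four markings obtained by running $\postfuncN{\anngvas}$ forward from $\acontextin$ down $t$ (for $\amarking_1,\amarking_2$) and $\prefuncN{\anngvas}$ backward from $\acontextout$ (for $\amarkingp_2,\amarkingp_1$), observing that these lie below the approximations $\postapprox,\preapprox$ by soundness, and that the acceleration only relaxes the constraints --- so the annotated tree is a valid parse tree of $\agram$ witnessing $\rho\in\nrunsof{\anngvas_{\agram}}$. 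The SCC/nesting reorganisation does not change the generated run language, only the nesting depth, so $\nrunsof{\anngvas_{\agram}}$ is preserved.

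Third, and this is the step I expect to be the main obstacle, the rank decrease, i.e.\ $\anngvas_{\agram}\headdom\anngvas$ under the hypothesis that $\perfectnesspumpingnospace$ fails even in the approximation. The intuitive argument sketched in the text is: the failure of an up- or down-pumping run for counter $\acounter$ (which by Lemma~\ref{Lemma:WitnessGrammarProp} is exactly the absence of a node $(\inof{\startnonterm},\startnonterm,\outof{\startnonterm})$ in the Karp--Miller tree, now read off the approximated functions) forces $\acounter$ to be bounded in every value occurring in the ideal decompositions $\postapprox,\preapprox$, hence in every $5$-tuple non-terminal, hence within every SCC of $\agram$; therefore the boundedness information $\anngvasp.\adimset$ of every component NGVAS of $\anngvas_{\agram}$ omits $\acounter$, whereas $\anngvas$ promised $\acounter\in\anngvas.\adimset$. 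Turning this into head dominatedness requires care: I must check that (a) no SCC of $\agram$ is a single accelerated loop that reintroduces $\omega$ at coordinate $\acounter$ --- this is where the precise interplay between acceleration and the Karp--Miller characterisation matters, and I would argue that an $\omega$ at $\acounter$ appearing in an SCC of $\agram$ would yield, by pumping that loop, exactly the up/down-pumping derivation excluded by the hypothesis; (b) the local ranks $\lrankof{\cdot}$ of the new components, $\dim(\cyclespaceof{\cdot})+|\cdot.\adimset|$, together with the $\mathrm{Tuple}$-aggregation over root-to-leaf paths, genuinely drop in the reverse-lexicographic order on $\N^{4d+1}\times[0,d]$ --- here the decisive point is that the most significant component $d-|\unconstrained|$ or, failing that, the multiset of local ranks strictly decreases because a counter moved out of $\adimset$ on the relevant branch while $\cyclespaceof{\cdot}$ cannot increase (cycles of $\anngvas_{\agram}$ project to cycles of $\anngvas$ within the same SCC, so their span does not grow). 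Assembling (a), (b), and the definition of $\headdom$ into a clean inequality --- strong enough that the subsequent $\clean$ call cannot undo it --- is the technical heart, and I would isolate it as a self-contained sub-lemma about how the coverability-grammar SCCs inherit cycle spaces and boundedness sets from $\anngvas$.
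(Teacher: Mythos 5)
Your first two stages track the paper's own proof: termination is the Karp--Miller argument (monotone growth of the $\omega$-sets, finite branching, no two comparable labels calling each other), soundness of the runs is by projecting coverability-grammar derivations onto derivations of $\anngvas$, and completeness is by annotating the parse tree of a given run forwards with $\postapprox$ and backwards with $\preapprox$ (the paper's forward/backward annotations); the only thing you gloss over there is that the annotated tree is not a relabelling of the original parse tree but must interleave the acceleration productions, which the paper handles with an explicit induction.

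The genuine gap is in your third stage. The hypothesis that \perfectnesspumpingnospace\ fails in the approximation only says that the coverability grammar contains \emph{no single non-terminal} whose input and output labels both carry $\omega$ on all of $\abdinfomid$. It does \emph{not} give you one counter $\acounter$ that is bounded in every $5$-tuple, nor in every SCC: different SCCs may lose different counters, and a fixed counter can perfectly well be $\omega$ on both sides inside some SCC (take $\abdinfomid=\{1,2\}$, one SCC with $\omega$ exactly on counter $1$ on both sides and another with $\omega$ exactly on counter $2$ on both sides -- the grammar still remains bounded). For the same reason your sub-point (a) is wrong: an $\omega$ at coordinate $\acounter$ inside an SCC does not reproduce the forbidden pumping situation -- introducing such $\omega$'s is exactly what the accelerations are for, and the accelerated non-terminals cannot even lie on cycles. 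So the claim ``the boundedness information of every component omits $\acounter$'' is unjustified and false in general. What actually holds, and what the paper proves, is a per-SCC statement: in every SCC the pair of $\omega$-sets of the assigned input/output boundedness information is not the full pair $(\abdinfomid,\abdinfomid)$, hence $\cardof{\adimsetleft}+\cardof{\adimsetright}$ of each extracted component is strictly smaller than for $\anngvas$; since cycles of a component project to cycles of $\anngvas$ inside one SCC, the cycle space cannot grow, so every local rank $\lrankof{\cdot}$ strictly drops, and together with the untouched (already perfect) children this is exactly $\headdom$.

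Moreover, the sub-lemma you defer to is where the real work sits, and it is not about ``inheriting cycle spaces'' so much as about the promises: one must show that within each SCC all label and promise components share the same $\omega$-set on the input side and the same $\omega$-set on the output side (a consequence of the monotonicity of the call structure and of the fact that only simulation productions occur on cycles). Without this uniformity the meets $\amarking_1\sqcap\amarkingp_2$ and $\amarking_2\sqcap\amarkingp_1$ taken at different non-terminals of one SCC need not agree on their $\omega$-sets, and the extracted object is not even a well-formed (non-linear) NGVAS -- this is precisely the guarantee a plain Karp--Miller labelling fails to give and the reason the promise components exist. Your plan as written neither states nor uses this fact, so both the well-formedness of $\anngvas_{\agram}$ and the rank argument are left without support.
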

\subsection{Computing $\postprefunc$: Simple Cases}\label{Section:OutlineSimpleCases}
We now prove the computability of $\prefuncN{\anngvas}$ and $\postfuncN{\anngvas}$ that is behind our implementation of $\refinepump$. 
We focus on $\postfuncN{\anngvas}$ as the reasoning for $\prefuncN{\anngvas}$ is similar.   
Our strategy is again to establish assumptions as strong as possible. 
For the input NGVAS, we already have all perfectness conditions except $\perfectnesspumpingnospace$ and the reliability of $\perffun$ up to its rank. 
We now introduce assumptions $\extraa$ to $\extrad$ and show that every single one of them makes $\postfuncN{\anngvas}$ easy to compute.  
What we gain by this is that we can assume $\neg \extraa\wedge\neg \extrab\wedge\neg \extrac \wedge \neg \extrad$ when proving the computability for the remaining hard cases. 
Fix a marking-symbol pair $(\amarking,\anonterm)\in \afuncdomain\subseteq\omegamrkdomainof{}\times(\trms\cup\nonterms)$.
We show how to compute $\postfuncNof{\anngvas}{\amarking, \asymbol}$ under each of the assumptions. 

Assumption \extraa\ is that $\asymbol\in\trms$, we have a childNGVAS. 
ChildNGVAS have a lower rank, so the reliability of $\perffun$ allows us to compute a perfect decomposition. 
From this decomposition, we can read-off the values of $\postfuncN{\anngvas}$.   

Assumption \extrab\ is that $\unconstrained\subsetneq \omegaof{\amarking}\subseteq \abdinfomid$.
We construct an NGVAS $\otherctxNGVAS{\anngvas}{\amarking, \asymbol, \outof{\asymbol}}$ that still has $\postfuncNof{\anngvas}{\amarking, \asymbol}$ as the reachable markings but a smaller rank. 
On this NGVAS, we reason as in \extraa.
The construction simply changes the input marking to $\amarking$, the initial non-terminal to $\asymbol$, and replaces the set of unconstrained counters by $\omegaof{\amarking}$. 
The latter is what makes the rank go down. 

From now on, we can assume $\unconstrained= \omegaof{\amarking}\subseteq \abdinfomid$, the negation of $\extrab$. 
We call counters from $\abdinfomid\setminus \unconstrained$ \emph{relevant} as they should be pumped.  
Assumption \extrac\ is that \perfectnesspumping even fails in an approximation that ignores relevant counters.  
More precisely, we have a family of approximations, one for each pair of functions $\postapprox_{\acounter}$, $\preapprox_{\acounterp}$ that ignore the relevant counters  $\acounter$ resp. $\acounterp$. 
Ignoring these counters means we set them to $\omega$ in the given marking.  
The computability of $\postapprox_{\acounter}$, $\preapprox_{\acounterp}$ follows as in \extrab. 
Given their computability, we can check the failure of~\perfectnesspumping\ using the Karp-Miller tree. 
\Cref{Lemma:Decomp} now yields a decomposition that reduces the rank. 
We then argue as in \extraa. 

Assumption \extrad\ is that \perfectnesspumping fails in an approximation that does not have to keep the counters non-negative.  
The underlying functions $\intpreapprox$ and $\intpostapprox$ are trivially computable. 
To check \perfectnesspumpingnospace, we do not need the Karp-Miller tree but can use integer linear programming.
We then reason as in \extrac.

The approximations in \extrac\ and \extrad\ are incomparable. 
The approximation based on $\postapprox_{\acounter}, \preapprox_{\acounterp}$ looks for runs that pump all relevant counters except $\acounter$ and~$\acounterp$. 
These pumping runs are guaranteed to remain non-negative on all counters except $\acounter$ resp.~$\acounterp$.  
On $\acounter$ resp.~$\acounterp$, they are allowed to fall below zero and even have a negative total effect. 
The approximation based on $\intpreapprox, \intpostapprox$ looks for runs that pump all relevant counters. 
These pumping runs are allowed to fall below zero on any counter, but guarantee a non-negative effect. 
\subsection{Computing $\postprefunc$: Hard Case 1}\label{Section:PostHardCase1}
We show how to compute $\postfuncNof{\anngvas}{\amarking, \anonterm}$ with $\omegaof{\amarking}=\unconstrained$ for an NGVAS that is almost perfect: we even have up- and down-pumping runs as soon as we ignore an arbitrary pair of relevant counters (since $\extrac$ failed), and we have up- and down-pumping runs for all relevant counters if we admit integer values (since $\extrad$ failed). 

We proceed with a Rackoff-like case distinction~\cite{Rackoff78}.  
We show that we can compute a bound~$\abigbound$ that satisfies the following.  
If in the input marking $\amarking$ the value of \emph{some} relevant counter exceeds $\abigbound$, then we show that we can find pumping runs by stitching together the almost pumping runs. 
This means the NGVAS is perfect and we can read-off $\postfuncNof{\anngvas}{\amarking, \anonterm}$ as in \extraa. 
If in marking $\amarking$ all relevant counters are bounded by $\abigbound$, then we show how to compute $\postfuncNof{\anngvas}{\amarking, \anonterm}$ in the next section. 

We explain how the almost pumping runs lead to a full pumping run provided we have a high value, say in the relevant counter $\acounter$. 
Consider marking $\amarking = \acontextin$ and non-terminal $\anonterm=\startnonterm$. 
Let $\upseqi$ be the up-pumping run that ignores counter $\acounter$ and let $\upseqz$ be the up-pumping run that may fall below zero. 
For the counters that are tracked concretely in the NGVAS, $\upseqi$ and $\upseqz$ are enabled and have effect zero. 
We can ignore these counters in our analysis. 
The same holds for the counters in $\omegaof{\acontextin}$. 
For simplicity, let the counter updates stem from $\set{-1, 0, 1}$. 
With this assumption, $\upseqz$ is enabled as soon as we have a value of $\cardof{\upseqz}$ in every relevant counter.
Moreover, $\cardof{\upseqi}$ is a bound on the negative effect that $\upseqi$ may have on counter~$\acounter$. 
Then 
\begin{align*}
\upseq\quad = \quad (\upseqi)^{\cardof{\upseqz}}.(\upseqz)^{\cardof{\upseqz}\cdot\cardof{\upseqi}+1}
\end{align*}
has a positive effect on all relevant counters. 
Indeed, the only negative effect is due to $\upseqi$, and this is compensated by the repetition of $\upseqz$. 
The run is enabled as soon as we have a value of $\cardof{\upseqz}\cdot (\cardof{\upseqi} + 1)$ in counter $\acounter$. 
Remember that we have to derive the up- and down-pumping runs together. 
Let $\aconst$ be the maximum of $\cardof{\upseqz}$ and $\cardof{\downseqz}$, 
and let $\aconstij$ be the maximum of $\cardof{\upseqi}$ and $\cardof{\downseqj}$.
\begin{restatable}{lemma}{LemmaSimpleBoundPumpingOverview}\label{Lemma:SimpleBound}
Let $\anngvas$ be perfect except for \perfectnesspumpingnospace\ and assume $\neg \extrab\wedge \neg \extrac\wedge \neg\extrad$ holds. 
If there are relevant counters $\acounter$ and $\acounterp$ so that $\at{\acontextin}{\acounter}, \at{\acontextout}{\acounterp}\geq \aconstz\cdot (\aconstij+1)$, then $\anngvas$ is perfect. 
\end{restatable}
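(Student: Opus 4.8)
The plan is to establish \Cref{Lemma:SimpleBound} by explicitly constructing up- and down-pumping runs for the full set of relevant counters out of the weaker pumping runs guaranteed by $\neg\extrac$ and $\neg\extrad$. Since the hypotheses give us, for every pair of relevant counters $\acounter,\acounterp$, an up-pumping run $\upseqi$ that ignores $\acounter$ (but pumps all other relevant counters and stays non-negative on them) and a down-pumping run $\downseqj$ ignoring $\acounterp$, together with the integer-valued pumping runs $\upseqz,\downseqz$ that pump all relevant counters with non-negative total effect but may dip below zero, the first step is to reduce the analysis to the relevant counters only: on the counters tracked concretely in the NGVAS and on the counters already in $\omegaof{\acontextin}=\unconstrained$ (using $\neg\extrab$), all of these runs are enabled and have effect zero, so they contribute nothing and can be discarded. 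For bookkeeping I would normalize so that counter updates lie in $\set{-1,0,1}$, which makes $\cardof{\upseqz}$ a bound on how much value a relevant counter needs to fire $\upseqz$ and $\cardof{\upseqi}$ a bound on the negative effect $\upseqi$ can have on $\acounter$.

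The core construction is the stitched run $\upseq=(\upseqi)^{\cardof{\upseqz}}.(\upseqz)^{\cardof{\upseqz}\cdot\cardof{\upseqi}+1}$ sketched in the outline: on every relevant counter other than $\acounter$, already $(\upseqi)^{\cardof{\upseqz}}$ produces a large positive value (each $\upseqi$ increases it), which both enables the subsequent block of $\upseqz$'s and guarantees positivity; on $\acounter$ itself, $(\upseqi)^{\cardof{\upseqz}}$ may lose up to $\cardof{\upseqz}\cdot\cardof{\upseqi}$, but we start with at least $\aconstz\cdot(\aconstij+1)\geq \cardof{\upseqz}\cdot(\cardof{\upseqi}+1)$ in $\acounter$ by hypothesis, so a value of at least $\cardof{\upseqz}$ survives, which suffices to fire the first $\upseqz$; and then each of the $\cardof{\upseqz}\cdot\cardof{\upseqi}+1$ copies of $\upseqz$ adds at least $1$ to $\acounter$ (its total effect on every relevant counter is non-negative, and I would argue it is in fact positive on $\acounter$ after possibly iterating $\upseqz$ — or absorb this by a further repetition), so the overall effect on $\acounter$ is positive as well. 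The symmetric construction with $\downseqj$ and $\downseqz$ gives the down-pumping run $\downseq$ from $\acontextout$, backwards-fireable under the symmetric bound $\at{\acontextout}{\acounterp}\geq\aconstz\cdot(\aconstij+1)$. The one genuinely NGVAS-specific subtlety is that \perfectnesspumpingnospace{} demands $\upseq$ and $\downseq$ come from a single derivation $\startnonterm\to^*\upseq.\startnonterm.\downseq$; here I would invoke $\perfectnesssolnospace$ / $\perfectnesspumpingintnospace$-style perfectness together with strong connectedness to compose the (already jointly-derivable) building-block derivations into one derivation witnessing the stitched pair, much as the iteration lemma composes loops.

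The main obstacle I anticipate is precisely this derivability/alignment issue rather than the counter arithmetic: the runs $\upseqi$ and $\upseqz$ are witnessed by different derivations, and one must check that concatenating their underlying parse structures — together with the down-side counterparts — still yields a valid derivation of the form $\startnonterm\to^*\upseq.\startnonterm.\downseq$ that respects all contexts (including the enforced contexts of childNGVAS along the way) and the boundedness bookkeeping. I would handle this by first proving a self-composition lemma: if $\startnonterm\to^*\sigma_1.\startnonterm.\tau_1$ and $\startnonterm\to^*\sigma_2.\startnonterm.\tau_2$ are valid NGVAS cycles then so is $\startnonterm\to^*\sigma_1\sigma_2.\startnonterm.\tau_2\tau_1$ (nesting one inside the other), and then apply it the appropriate number of times; the context-respecting part follows because each factor was context-respecting on its own and the counters we care about are exactly the ones that these runs leave with positive slack. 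A secondary, more routine worry is confirming the exact constants $\aconstz$ and $\aconstij$ (defined as the relevant maxima of the $\cardof{\cdot}$ quantities) are large enough after the $\set{-1,0,1}$ normalization and after accounting for the periods that perfectness may force the runs to use; I expect this to come down to the same "linear beats the fixed additive loss" estimate already used elsewhere in the paper.
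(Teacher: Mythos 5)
Your construction is exactly the paper's: it stitches $(\upseqi)^{\cardof{\upseqz}}.(\upseqz)^{\cardof{\upseqz}\cdot\cardof{\upseqi}+1}$ (resp.\ its reverse-fired counterpart on the output side), enables the $\upseqz$-block on $\acounter$ via the $\aconstz\cdot(\aconstij+1)$ hypothesis, and compensates the $\leq\cardof{\upseqz}\cdot\cardof{\upseqi}$ loss on $\acounter$ by the extra copies of $\upseqz$, which matches the paper's proof of the detailed version (Lemma~\ref{Lemma:SimpleBoundDetailed}) where both building blocks are cycles centered on the same non-terminal and are simply nested, so your ``self-composition'' step is immediate rather than a real obstacle. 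The only hedge you raise -- strict positivity of $\upseqz$'s effect on $\acounter$ -- is already built into what $\neg\extrad$ provides (the $\Z$-pump has effect at least $1$ on every relevant counter), so no extra iteration is needed.
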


The lemma refers to a single NGVAS. 
To compute the function $\postfuncN{\anngvas}$, we need a lower bound on the values of (arbitrary pairs) $\acounter$ and $\acounterp$ that works for all NGVAS $\otherctxNGVAS{\anngvas}{\amarking, \anonterm, \outof{\anonterm}}$, where the given $\amarking$ is the initial marking and the given $\anonterm$ is the initial non-terminal. 
As there are only finitely many non-terminals, we can consider a fixed~$\anonterm$. 
The runs $\upseqz$ and $\downseqz$ do not have to stay non-negative.
This means they only depend on the set of relevant counters, but not on the choice of~$\amarking$. 
As the set of relevant counters is also fixed, we can consider an arbitrary pair of up- and down-pumping $\Z$-runs.
Let $\aconstz$ be a bound on their length.

The challenge is to compute a bound on the length of $\upseqi$ and $\downseqj$. 
We follow Rackoff~\cite{Rackoff78} and proceed by an induction on the number of relevant counters that can be pumped. 
Let $\relc$ be the number of relevant counters. 
We define a function $\rackfun:[0, \relc-1]\rightarrow\N$ so that $\rackfunof{\anumber}$ is a bound on the maximal length of shortest runs $\upseqrack$ and $\downseqrack$ that pump the relevant counters in $\aset$ and ignore the remaining relevant counters by setting them to $\omega$. 
The bound is taken over all initial markings, and we show in the proof that it exists. 
Moreover, it is taken over all sets $\aset$ with $\cardof{\aset}\leq \anumber$.
Note that we are only interested in sets that ignore at least one relevant counter. 
This is to match the definition of $\upseqi$ and $\downseqj$. 
Clearly, $\rackfunof{0}=1$, if there are no counters to pump, the empty run works.
In the induction step, assume we want to pump a set $\aset$ of $\anumber+1$ counters. 
The set of markings that enable pumping runs is upward-closed.
By the well-quasi order on $\N^{\adim}$, it contains a finite set of minimal elements. 
We show how to compute the minimal elements.
Once we have them, we also have the corresponding runs, and hence a bound on $\rackfunof{\anumber+1}$.

We compute the set of minimal markings that admit pumping runs for $\aset$.
We do have pumping runs from the marking with value $\aconst$ in all counters from $\aset$, namely $\upseqz$ and~$\downseqz$. 
For every marking that is strictly smaller than a marking we already have, we check the existence of pumping runs using the Karp-Miller tree, and the tree provides a run if the answer is positive. 
We can rely on the computability of $\postfuncN{\anngvas}$ and $\prefuncN{\anngvas}$ by \extrab, there is at least one relevant counter we ignore.  
The challenge lies in the configurations that are incomparable to the minimal ones we already have. 
Consider a configuration that is strictly smaller than a minimal one in the counters $\asetp$.
We set the remaining counters to $\omega$, and check in the Karp-Miller tree the existence of up- and down-pumping runs. 
If the answer is positive, the induction hypothesis applies and gives us runs of length at most $\rackfunof{\cardof{\asetp}}$. 
These runs may not be pumping on $\aset\setminus\asetp$, and even have a negative effect there.  
We construct pumping runs for the full set $\aset$ with the technique from \Cref{Lemma:SimpleBound}.
The lemma gives us a bound for the values of the counters in $\aset\setminus\asetp$ that is needed to enable the constructed runs, $k\cdot(\rackfunof{\cardof{\asetp}}+1)$. 
Note how the bound on the length of the runs from the induction hypothesis is crucial to obtain the bound on the counter values. 
\begin{restatable}{lemma}{LemmaBoundPostPumpingOverview}\label{Lemma:Bound}
Let $\anngvas$ be perfect except for \perfectnesspumpingnospace\ and assume $\neg \extrab\wedge \neg \extrac\wedge \neg\extrad$ holds. 
We can compute a bound $\abigbound$ so that $\postfuncNof{\anngvas}{\amarking, \anonterm}$ is computable for all markings $\amarking$ with $\at{\amarking}{\acounter}> \abigbound$ for some $\acounter$. 
\end{restatable}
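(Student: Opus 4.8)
The plan is to combine the Rackoff-style case distinction outlined just before the lemma with the stitching construction of \Cref{Lemma:SimpleBound}. First I would fix the initial non-terminal $\anonterm$ (there are finitely many) and the set of relevant counters (which under $\neg\extrab$ is determined, $\unconstrained=\omegaof{\amarking}$, so the relevant counters are $\abdinfomid\setminus\unconstrained$). Since the $\Z$-pumping runs $\upseqz,\downseqz$ from $\neg\extrad$ depend only on this fixed data and not on $\amarking$, I pick one such pair and let $\aconstz$ be a bound on its length. The core is then to compute the function $\rackfun:[0,\relc-1]\to\N$, where $\rackfunof{\anumber}$ bounds the length of shortest up- and down-pumping runs that pump a relevant-counter set $\aset$ with $|\aset|\le\anumber$ (ignoring the rest by setting them to $\omega$), uniformly over all initial markings in $\afuncdomain$. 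I would prove its existence by induction on $\anumber$: the base case $\rackfunof{0}=1$ is the empty run, and in the step, the set of markings admitting pumping runs for a set $\aset$ of size $\anumber+1$ is upward-closed in $\N^{\adim}$, hence has finitely many minimal elements, which I compute as a fixed point — starting from the marking with value $\aconstz$ everywhere on $\aset$ (which works via $\upseqz,\downseqz$), and for each strictly smaller candidate marking, splitting it by the counters $\asetp$ on which it is strictly below a known minimal marking, setting the rest to $\omega$, and invoking the induction hypothesis to get runs of length $\le\rackfunof{\anumber}$ that pump $\asetp$; then \Cref{Lemma:SimpleBound}'s construction upgrades these to full pumping runs for $\aset$, enabled once the counters in $\aset\setminus\asetp$ hold at least $\aconstz\cdot(\rackfunof{\anumber}+1)$.

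With $\rackfun$ in hand, I set $\abigbound = \aconstz\cdot(\rackfunof{\relc-1}+1)$ (or the maximum over all fixed $\anonterm$ of the corresponding quantity). For a marking $\amarking$ with $\at{\amarking}{\acounter}>\abigbound$ for some relevant counter $\acounter$, I would argue as in \Cref{Lemma:SimpleBound}: either all relevant counters already exceed $\abigbound$ and the NGVAS $\otherctxNGVAS{\anngvas}{\amarking,\anonterm,\outof{\anonterm}}$ is perfect — in which case it has strictly smaller rank (as in \extrab), so reliability of $\perffun$ up to $\rankof{\anngvas}$ lets me compute a perfect deconstruction of it and read off $\postfuncNof{\anngvas}{\amarking,\anonterm}$ as in \extraa; or some relevant counters are below $\abigbound$, say exactly the set $\aset'$ is bounded — but then one uses the high value in $\acounter$ together with $\rackfun$-bounded pumping runs on the remaining counters and the \Cref{Lemma:SimpleBound} stitch to conclude the NGVAS is still perfect. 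In either branch the key point is that a high value in at least one relevant counter, via the length-bounded partial pumping runs and the "ignore one counter" structure matching the definitions of $\upseqi$ and $\downseqj$, suffices to manufacture genuine up- and down-pumping runs, establishing \perfectnesspumpingnospace\ and hence computability of the post-function on that marking.

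The main obstacle I expect is proving the \emph{termination} and \emph{correctness} of the fixed-point computation of the minimal pumping markings: one must show that the iteration — descend to strictly smaller candidate markings, split off the strictly-decreased coordinates $\asetp$, recurse via the induction hypothesis, re-stitch via \Cref{Lemma:SimpleBound} — actually stabilizes to the true set of minimal markings, and that the runs it produces are length-bounded by a computable function of $\rackfunof{\anumber}$. Termination rests on the well-quasi-order on $\N^{\adim}$ (only finitely many minimal elements, and each descent step either finds a new incomparable minimal marking or certifies none exists), but one has to be careful that the Karp–Miller checks — which are available only by $\neg\extrab$, since at least one relevant counter is always sent to $\omega$ — genuinely decide the existence of pumping runs from each candidate, and that incomparable candidates are handled by the $\asetp$-splitting without missing any minimal marking. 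A secondary subtlety is making the bound $\abigbound$ uniform over the initial markings $\amarking$: this is exactly why $\upseqz,\downseqz$ are chosen as $\Z$-runs depending only on the relevant-counter set, and why the induction is set up over all of $\afuncdomain$ at once rather than marking-by-marking.
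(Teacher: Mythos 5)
Your first half --- the computation of $\rackfun$ and of $\abigbound$ by induction on the number of pumped relevant counters, the fixed-point computation of the minimal markings via the well-quasi-order, the Karp--Miller checks being available because at least one relevant counter is set to $\omega$, and the re-stitching via \Cref{Lemma:SimpleBound} --- is exactly the paper's route. The gap is in your second half, where you conclude perfectness of $\otherctxNGVAS{\anngvas}{\amarking,\anonterm,\outof{\anonterm}}$ from a single large \emph{input} counter. Condition \perfectnesspumpingnospace\ requires an up-pumping run enabled at the input \emph{and} a down-pumping run backward-enabled at the output, and \Cref{Lemma:SimpleBound} accordingly needs $\at{\acontextin}{\acounter}, \at{\acontextout}{\acounterp}\geq \aconstz\cdot(\aconstij+1)$, i.e.\ a large counter on \emph{each} side. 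The hypothesis of \Cref{Lemma:Bound} only supplies the input side; neither branch of your case distinction explains where the large output value comes from, and moreover the variant with output $\outof{\anonterm}$ would typically violate \perfectnesscounters\ (counters in $\abdinfomid$ outside the support would be $\omega$ in the output), so perfectness cannot even be stated for it directly. The paper closes exactly this gap: it first splits the query along the $\Z$-approximation into $\thezdecomp$, whose output markings carry $\omega$ precisely on $\omegaoutcount$ so that \perfectnesscounters\ is restored (\Cref{Lemma:TheZDecomp}), and then shows by an ILP argument (\Cref{Lemma:MaximalPull}, \Cref{Lemma:FreePumpNoSDRackoff}) that an input counter above $\abigbound$ forces a large relevant output counter in every \emph{maximal} $\amarkingp\in\intpostapproxof{\amarking,\anonterm}$; non-maximal outputs are either simply decomposable (handled by \Cref{Lemma:SimplePerfectDecomposition}) or dominated by the maximal ones (\Cref{Lemma:FreePumpRackoff}). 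Without this output-side step the argument does not go through.

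A second, smaller error: in your ``all relevant counters exceed $\abigbound$'' branch you claim the variant ``has strictly smaller rank (as in (b))'' and invoke reliability of $\perffun$. With $\omegaof{\amarking}=\unconstrained$ the rank does \emph{not} drop --- that is precisely what makes this the hard case --- and reliability is only available strictly below $\rankof{\anngvas}$. The paper instead uses the perfectness it has just established together with \Cref{TheoremIterationLemmaNonLinearOverview} (after re-establishing \perfectnesssolnospace\ by a $\solclean$-style step) to witness that the downward closure of the output marking is covered, and reads off $\postfuncNof{\anngvas}{\amarking,\anonterm}$ from that, without any recursive call on an equal-rank NGVAS.
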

\subsection{Computing $\postprefunc$: Hard Case 2} \label{Section:PostHardCase2}
It remains to compute $\postfuncNof{\anngvas}{\amarking, \anonterm}$ for markings $\amarking$ in which all relevant counters are bounded by $\abigbound$. 
The idea is to conduct an exhaustive search. 
The objects computed by this search are so-called \emph{marked parse trees}, parse trees in the context-free grammar underlying the NGVAS whose nodes are decorated by input and output markings $(\amarking, \asymbol, \amarkingp)$,   very similar to how the NGVAS annotates the terminals and non-terminals by input and output markings. 
There are two techniques that guarantee the termination of our search.

The first technique is that once we encounter a node whose input marking $\amarking$ exceeds the bound $\abigbound$ in some counter, we do not further expand the node but rely on the computability of $\postfuncNof{\anngvas}{\amarking, \anonterm}$ in \Cref{Lemma:Bound}, instead.  
This, however, is not enough for termination.

The second technique is an acceleration that introduces $\omega$ entries when an output marking is detected to grow unboundedly. 
Note that we cannot use the same acceleration as in the Karp-Miller tree. 
There, our goal was to find $\omega$ entries in the middle of the derivation.
Here, our goal is to find the $\omega$ entries that are the result of the derivation.
We proceed as follows.
We construct a search tree whose nodes are labeled by marked parse trees. 
A child node expands the parse tree of its parent, meaning the parent-child relationship is one of being a subtree. 
Assume the first technique does not apply.
Then the set of possible input markings is finite, and there are only finitely many marked parse trees for each height.
As a result, the search tree has finite outdegree. 
Hence, if the search does not terminate, it contains an infinite path.
\vspace{0.1cm}We consider the labeling of the root nodes $(\amarking_i, \anonterm_i, \amarkingp_i)_{i\in\N}$ in the marked parse trees on this path. 
As there are finitely many input markings and non-terminals, the sequence contains an infinite subsequence of the form $(\amarking, \anonterm, \amarkingp_{\varphi(i)})_{i\in\N}$. 
By the well-quasi order of~$\Nomega^{\adim}$ and the fact that the output markings grow unboundedly, we can even assume $\amarkingp_{\varphi(0)}<\amarkingp_{\varphi(1)}<\ldots$
Like in the Karp-Miller tree, we introduce an $\omega$ in the output counters where the inequality is strict. 
The acceleration situation is illustrated below.
\vspace{-0.3cm}
\begin{center}
\includegraphics[scale=0.8,page=1]{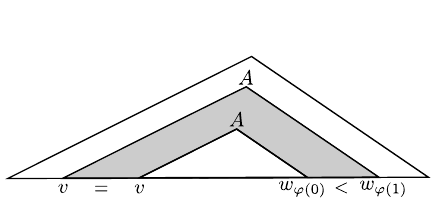}
\end{center}
\vspace{-0.3cm}
\vspace{0.2cm}
We can apply the acceleration at most $\adim$ times along a branch.
Since marked parse trees are not allowed to repeat nodes along a branch, we have termination. 

As for soundness of the acceleration, the pumping lemma for context-free languages allows us to repeat the derivation between two nodes, say from $(\amarking, \anonterm, \amarkingp_{\varphi(1)})$ to $(\amarking, \anonterm, \amarkingp_{\varphi(0)})$.
The fact that the initial marking coincides and the final markings satisfy $\amarkingp_{\varphi(0)}<\amarkingp_{\varphi(1)}$, 
combined with the monotonicity of firing in VAS, guarantees that the repetition is enabled.
\begin{restatable}{lemma}{LemmaPostComputableOverview}\label{Lemma:PostComputableOverview}
Let $\anngvas$ be perfect except for \perfectnesspumpingnospace\ and assume $\neg \extrab\wedge \neg \extrac\wedge \neg\extrad$ holds. 
Then $\postfuncNof{\anngvas}{\amarking, \anonterm}$ is computable for all markings $\amarking\leq\abigbound^{\adim}$. 
\end{restatable}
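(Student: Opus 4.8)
The plan is to conduct an exhaustive search through \emph{marked parse trees} for the NGVAS, organized as a \emph{search tree} whose nodes are labeled by such marked parse trees, and to argue that this search both terminates and computes a faithful (downward-closed, ideal-decomposed) representation of $\postfuncNof{\anngvas}{\amarking, \anonterm}$. A marked parse tree is a parse tree in the CFG underlying $\anngvas$ whose every node carries a triple $(\amarking', \asymbol', \amarkingp')$ recording the input and output markings of the subderivation rooted there; consistency conditions propagate the markings through productions exactly as the NGVAS annotates its terminals and non-terminals. The root of the search is the single-node tree labeled $(\amarking, \anonterm, ?)$, and we expand search-tree nodes by picking a not-yet-expanded leaf of the current marked parse tree carrying a non-terminal $\anonterm'$, selecting a production $\anonterm'\to\alpha$, and attaching the corresponding children with all consistent choices of intermediate markings. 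Since all relevant counters in $\amarking$ are bounded by $\abigbound$, and the non-relevant ones are either tracked concretely or are $\omega$, the set of input markings that can ever appear is finite (this uses $\neg\extrab$ to control the $\omega$-pattern and $\at{\amarking}{\acounter}\leq\abigbound$ for the relevant counters). Hence the search tree has finite outdegree.

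The two termination devices are as follows. First, whenever a node's input marking exceeds $\abigbound$ in some relevant counter, we do not expand it; instead we invoke \Cref{Lemma:Bound} to read off $\postfuncN{\anngvas}$ directly at that marking and splice the resulting ideals in as the "output" of that leaf. This cuts off all branches that escape the bounded region. Second, we add an acceleration: along a branch of a marked parse tree, if we find an ancestor node and a descendant node sharing the same $(\amarking', \anonterm')$ with the descendant's output marking strictly dominating the ancestor's (once outputs have been computed bottom-up, or once we can already compare partial outputs), we introduce $\omega$ in exactly the output coordinates where the inequality is strict. Marked parse trees are forbidden to repeat a node along a branch, and each acceleration strictly increases the number of $\omega$'s in an output coordinate, so acceleration fires at most $\adim$ times per branch; combined with finite outdegree and König's lemma, termination follows — if the search did not terminate it would have an infinite path, and along the root labels $(\amarking_i, \anonterm_i, \amarkingp_i)_{i\in\N}$ on that path, finiteness of input markings and non-terminals yields an infinite subsequence $(\amarking, \anonterm, \amarkingp_{\varphi(i)})_{i\in\N}$, and the well-quasi-order on $\Nomega^{\adim}$ together with the fact that outputs must grow unboundedly (else the subtree structure would have to repeat a node) forces $\amarkingp_{\varphi(0)}<\amarkingp_{\varphi(1)}<\cdots$, which would have triggered the acceleration — contradiction.

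Soundness of the acceleration is where I would lean on the pumping lemma for context-free languages: given the two search-tree positions with matching $(\amarking, \anonterm)$ and outputs $\amarkingp_{\varphi(0)}<\amarkingp_{\varphi(1)}$, the CFG pumping lemma lets us iterate the derivation fragment between them arbitrarily often; the initial marking coincides, so monotonicity of firing in VAS (firing from a larger marking stays enabled and yields a larger result) guarantees every iteration is enabled, and the strict growth in the accelerated coordinates means those coordinates can be pushed arbitrarily high — so the $\omega$ entries we introduce are genuinely witnessed, i.e. the ideal we record is contained in $\downclsof{\postfuncNof{\anngvas}{\amarking,\anonterm}}$. Completeness in the other direction is the routine observation that every concrete run derivable from $\anonterm$ at a specialization of $\amarking$ corresponds to some parse tree; either that tree stays within the bounded region, in which case (some finite truncation of) it appears in the search and contributes its endpoint, or it leaves the region and is subsumed by the \Cref{Lemma:Bound} cutoff; taking the downward closure and the maximal-ideal decomposition $\iddecomp$ of the finitely many endpoints produced yields exactly $\postfuncNof{\anngvas}{\amarking,\anonterm}$.

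The main obstacle I expect is not the high-level scheme but making the acceleration interact correctly with the \emph{bottom-up} nature of output markings: a node's output is only known once all its descendants are resolved, yet we want to accelerate while still growing the tree, so the bookkeeping must either (a) defer acceleration to a post-processing pass over completed marked parse trees — which risks non-termination of the growth phase unless we separately bound tree height — or (b) use a partial/monotone notion of "output so far" that can only increase as we expand, so that a strict increase detected early remains valid. I would adopt (b): since all counter updates and childNGVAS effects are monotone and the parse tree is built leaf-by-leaf with the set of pending leaves shrinking, one can track a lower bound on each node's eventual output and accelerate on that; one then checks that accelerating on the lower bound is still sound (the pumping argument only needs a genuine strictly-increasing chain of \emph{realized} markings, which the lower bounds are) and still forces termination (the lower bounds are themselves well-quasi-ordered). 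The remaining details — that the finitely many input markings really are finite given $\neg\extrab$ and the $\abigbound$ bound, that $\neg\extrac\wedge\neg\extrad$ are not actually needed here beyond having been used to reach \Cref{Lemma:Bound}, and that $\iddecomp\circ\downclsof{-}$ of the collected endpoints is computable — are then straightforward.
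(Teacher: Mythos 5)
Your overall scheme is the paper's: marked parse trees (the paper's \emph{witness trees}), cutting off any node whose input exceeds $\abigbound$ in a relevant counter and resolving it via \Cref{Lemma:Bound}, a Karp--Miller-style acceleration of \emph{output} markings justified by the context-free pumping lemma plus monotonicity of firing, and a WQO/K\"onig argument for termination. Two points deserve attention.

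First, the acceleration as you state it is reversed. Inside a marked parse tree, the sound configuration is: an ancestor node and a descendant node with the \emph{same} input marking and non-terminal, where the \emph{ancestor's} output strictly dominates the descendant's; then the derivation context between them loops on the input side and has a nonnegative effect on the output side (strictly positive exactly where the domination is strict), so by monotonicity it can be repeated from the ancestor's output, and the $\omega$'s are introduced at the ancestor. Your rule asks for the descendant's output to dominate the ancestor's; in that configuration the surrounding context has negative net effect and its repetition need not even be enabled, so accelerating there is unsound. Moreover, with your literal rule the termination argument does not fire: on an infinite search branch the later, larger tree contains the earlier one as a subtree, i.e.\ inside the parse tree it is the ancestor that carries the larger output $\amarkingp_{\varphi(1)}$ and the embedded descendant that carries $\amarkingp_{\varphi(0)}<\amarkingp_{\varphi(1)}$ --- exactly the case your rule excludes. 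Your soundness paragraph implicitly uses the correct configuration (repeating the context from the larger to the smaller output), so this is fixable, but the definition must match it; the paper's $\pumpingof{-}$ function is precisely this ancestor-side acceleration.

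Second, the "main obstacle" you identify (outputs are determined bottom-up while you grow the tree top-down) is resolved in the paper by the route you set aside as option (a), not by your option (b). The paper enumerates \emph{complete} witness trees stratified by height: $\witnesssetof{h+1}$ is built from pairs of trees in $\witnesssetof{h}$ glued under a production, with the acceleration applied at the fresh root once its subtrees are finished, so every output label is always fully determined (\Cref{Lemma:WTEffectiveness}). No separate height bound is needed: saturation $\witnesssetof{h}=\witnesssetof{h+1}$ is an effectively checkable stopping criterion, and the K\"onig/WQO argument shows it must occur because each acceleration strictly enlarges the set of $\omega$-output coordinates and there are only finitely many small inputs (\Cref{Lemma:WTConvergence}). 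Your option (b) --- accelerating on a monotone lower bound for the eventual output of a partially expanded tree --- is delicate: with pending non-terminal leaves it is unclear that such a bound is realized by an actual run, and the pumping argument needs realized (coverable) markings; the height-stratified bottom-up enumeration avoids this issue entirely.
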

\section{Complexity} \label{SectionLevelOneComplexity}
It remains to define the rank of an NGVAS and analyze the complexity of our decision procedure. 
For the latter, we employ a precise analysis of recursive procedures, which we believe will be of independent interest.
\subsection{Bad Partially Nested Sequences} \label{SubSubsectionPartiallyNestedSequences}

Recursions with ranks from \(\N^{\beta}\) for  \(\beta \in \N\) are common in the context of infinite-state systems, and there are generic tools for analyzing their complexity: \emph{controlled bad sequences} for tail recursions~\cite{FigueiraFSS11} and \emph{controlled bad nested sequences} for full (non-tail) recursions~\cite{LerouxPS14}. 
Theorem VI.1 in \cite{LerouxPS14} gives Hyper-Ackermann time bounds for full recursions. 
There is, however, a problem with these generic tools: they analyze all procedures equally but do not consider the case where some procedures are tail-recursive and others are fully recursive. 
To take into account this distinction, we introduce controlled bad \emph{partially nested} sequences, a subclass of controlled bad nested sequences that admit a finer complexity analysis.  

\newcommand{\astack}{\mathit{s}}
\newcommand{\ansf}{\mathit{y}}
\newcommand{\setfun}{\mathit{P}}
\newcommand{\afun}{\mathit{f}}
\newcommand{\afunp}{\mathit{g}}
\newcommand{\aglob}{\mathit{u}}
\newcommand{\aglobp}{\mathit{v}}
\newcommand{\aloc}{\mathit{a}}
\newcommand{\alocp}{\mathit{b}}
\newcommand{\grank}{\mathit{r}}
\newcommand{\grankof}[1]{\mathit{r}(#1)}
\newcommand{\gdom}{\mathsf{U}}
\newcommand{\ldomof}[1]{\mathsf{A}_{#1}}
\newcommand{\ldomdef}{\ldomof{\afun}}
\newcommand{\wfdom}{\mathsf{X}}
\newcommand{\lwfdomof}[1]{\wfdom_{#1}}
\newcommand{\lwfdomdef}{\lwfdomof{\afun}}
\newcommand{\funrank}[1]{\mathit{r}_{#1}}
\newcommand{\funrankof}[2]{\funrank{#1}(#2)}
\newcommand{\funrankdef}{\funrank{\afun}}
\newcommand{\funrankdefof}[1]{\funrankdef(#1)}

\newcommand{\domsf}{\mathsf{Y}}
\newcommand{\aheight}{\mathit{h}}

\newcommand{\controlfun}{\mathit{c}}
\newcommand{\controlfunof}[1]{\controlfun(#1)}
\newcommand{\initbound}{\mathit{m}}

\newcommand{\infnormof}[1]{|\!|#1|\!|_{\infty}}
\newcommand{\leqlex}{\leq_{\mathit{lex}}}

\newcommand{\fastgrowof}[1]{\mathfrak{F}_{#1}}

\newcommand{\amainrank}{\mathit{x}}
\newcommand{\anauxrank}{\mathit{z}}

\newcommand{\acomp}{\tau}
\newcommand{\absof}[1]{\alpha(#1)}
\newcommand{\aseq}{\sigma}

We explain the programming model that we use for our complexity analysis.  
The essential characteristic of our decision procedure is that (i) it is heavily recursive but (ii) the recursive functions all have the same type, they map NGVAS to sets of NGVAS.
To capture this, we define a recursive program as a finite set of functions $\setfun$ that all have type $\gdom\rightarrow \powof{\gdom}$ for some domain~$\gdom$.  
During its computation, each function $\afun\in\setfun$ may transform the input, meaning it has access to state from $\gdom$. 
Moreover, the function may maintain auxiliary state from some domain $\ldomdef$. 
An example for such auxiliary state is the Karp-Miller tree. 
We do not need to be precise about the syntax of programs and can also be rough about the semantics.  
In fact, even the domains we just introduced will be abstracted away by ranking functions. 
This high level of abstraction is needed to deal with objects as complicated as NGVAS and functions as complicated as $\postfuncN{\anngvas}$. 
We only need stack frames $\ansf=(\aglob, \afun, \aloc)$ which consist of state $\aglob\in\gdom$, the current function $\afun\in\setfun$, and auxiliary state $\aloc\in\ldomdef$. 
A stack is then a finite sequence of stack frames $\astack=\ansf_0\ldots \ansf_k$ with $\ansf_k$ being topmost.  

We are interested in proving the termination of such recursive programs. 
To this end, we assume to have a ranking function \(\grank:\gdom\rightarrow \wfdom\) that maps the state from $\gdom$ into a well-founded total order $\wfdom$. 
With $\gdom$ being NGVAS, the function $\grank$ will define their rank. 
Moreover, for every \(\afun \in \setfun\), we need a ranking function \(\funrankdef:\ldomdef\rightarrow\lwfdomdef\) that maps the auxiliary state into another well-founded total order. The auxiliary ranking function $\funrankdef$ may be chosen depending on $\afun\in\setfun$, but we expect that all functions in the program agree on the main rank $\grank$.
Finally, we assume a total order $\prec$ on the functions in $\setfun$. 

The goal of our complexity analysis is to separate recursive from tail-recursive behavior.
We achieve this by constraining the transition relation $\astack_1\rightarrow \astack_2$ between stacks, and analyzing computations that adhere to these constraints.  
An invariant of our transitions is that they do not increase the main rank.  
Moreover, recursive calls have to (i) reduce the main rank or (ii) call a function that is smaller in the total order \(\prec\).
The consequence is that transitions which only reduce the auxiliary rank cannot be recursive calls: computation on the auxiliary state is tail-recursive. 
We admit the following.     
\begin{enumerate}
\item A tail-recursive step $\astack.(\aglob, \afun, \aloc)\rightarrow \astack.(\aglobp, \afun, \alocp)$ and a return step from a recursive function $\astack.(\aglob, \afun, \aloc).(\aglob', \afun', \aloc')\rightarrow \astack.(\aglobp, \afun, \alocp)$, provided the main rank does not go up and a rank goes down: $\grankof{\aglob}\geq\grankof{\aglobp}\wedge (\grankof{\aglob}>\grankof{\aglobp}\vee \funrankdefof{\aloc} >\funrankdefof{\alocp})$. 
\item A recursive call  $\astack.(\aglob, \afun, \aloc)\rightarrow \astack.(\aglob, \afun, \aloc).(\aglobp, \afunp, \alocp)$, provided the main rank does not go up, and the main rank goes down or the callee is smaller,  $\grankof{\aglob}\geq\grankof{\aglobp}\wedge (\grankof{\aglob}>\grankof{\aglobp}\vee \afun\succ \afunp)$. 
\end{enumerate}
Note that the auxiliary state can be chosen freely if we can ensure the main rank goes down.

It is readily checked that from every initial stack every computation in our model terminates. 
The reason is that a computation is a sequence in \(\wfdom \times \setfun \times \bigcup_{f} \lwfdomdef\) that is strictly decreasing in the lexicographic ordering, and this ordering is well-founded. 
This is enough to show termination of our decision procedure. 

To estimate the complexity, we need notions from fast-growing complexity. 
By further abstracting from stack frames and even stacks, we arrive at the nested sequences from~\cite{LerouxPS14}. 
Towards this abstraction, note that our termination analysis does not need the domains $\gdom$ and $\ldomdef$, but only the well-founded total orders. 
We fix them to $(\N^{\beta}, \leqlex)$.   
Then (the abstract version of) a stack frame is an element of \(\domsf=\N^{\beta_1} \times \setfun \times \N^{\beta_2}\). 
We also abstract away the stack content and only keep the topmost stack frame and the height of the stack. 
With this, (the abstract version of) a stack is an element $(\ansf, \aheight)\in\domsf\times\N$. 
The abstraction of a computation is a \emph{nested sequence}, a sequence of stacks $(\ansf_0, \aheight_0).(\ansf_1, \aheight_1)\ldots$ whose height changes as expected, $\aheight_{k+1} \in \aheight_k + \{-1,0,1\}$ for all $k\in\N$. 
To capture the requirement that every transition decreases the rank, we first define the contrary, the moment of termination. 
A nested sequence is \emph{good}~\cite{LerouxPS14}, if there are indices $k<m$ so that $\ansf_k \leqlex \ansf_m$ and $\aheight_{k+1}, \dots, h_m \geq \aheight_k$. 
It is \emph{bad} if it is not good. 

Our complexity analysis determines a bound on the length of bad nested sequences. 
This also yields a bound on the termination time for our programming model, in particular for \(\perffun\), as follows.
For every computation $\acomp=\astack_0\rightarrow\astack_1\rightarrow\ldots$ in a program, the abstraction $\absof{\acomp}=(\ansf_0, \aheight_0).(\ansf_1, \aheight_1)\ldots$ is a bad nested sequence.

Clearly, without any restrictions even non-nested bad sequences like \((1,0) \to (0,B) \to (0,B-1) \to \dots\) in \(\N^2\) may be arbitarily long. Hence we control the way in which transitions may modify the ranks~\cite{FigueiraFSS11}. 
Let $\controlfun \colon \N \to \N$ be a strictly increasing (control) function and let $\initbound \in \N$ be a bound on the size of the initial stack. 
The nested sequence $(\ansf_0, \aheight_0).(\ansf_1, \aheight_1)\ldots$ is \emph{$(\controlfun, \initbound)$-controlled}~\cite{LerouxPS14} if $\sizeof{\ansf_k} \leq \controlfun^{(k)}(\initbound)\) for all \(k\in\N\). Here, we use \(\controlfun^{(k)}\) for the \(k\)-fold application of \(\controlfun\). 
Note that the boundedness of the stack frames is not built into our programming model. 
So to employ controlled sequences for the complexity analysis, we need to argue that \(\perffun\) respect the bounds. 
This, however, will be unproblematic as we will admit a general primitive-recursive control \(\controlfun\).

Recall that $\domsf=\N^{\beta_1} \times \setfun \times \N^{\beta_2}$. For controlled bad \emph{nested} sequences, the only upper bound on the length in the literature is $\fastgrowof{\cardof{\setfun}\cdot \omega^{\beta_1+\beta_2}}$, obtained from \cite[Theorem VI.1]{LerouxPS14}, which assumes that \(\controlfun(\initbound)=\initbound+1\).
Here, $\fastgrowof{\alpha}$ is the hierarchy of fast-growing functions, see e.g.~\cite{Schmitz16} for details.  
With our analysis, transitions are allowed to take an arbitrary primitive-recursive amount of time, and we can improve the upper bound to \(\fastgrowof{\omega^{\beta_1}}\). Observe that this eliminates the dependence on \(\beta_2\) and on \(\cardof{\setfun}\). 

What allows us to improve the analysis is the fact that, in our programming model, computation on the auxiliary rank is tail-recursive. 
Note that this is not yet included in the above definition of nested sequences. 
\begin{definition}
A nested sequence $(\ansf_0, \aheight_0).(\ansf_1, \aheight_1)\ldots$ is \emph{partially nested} if $\ansf_k=(\amainrank, \afun, \anauxrank_k)$ and $\ansf_{k+1}=(\amainrank, \afun, \anauxrank_{k+1})\) implies \(\aheight_{k+1} \leq \aheight_k\), for all $k\in\N$.
\end{definition}
We are allowed to analyze the length of controlled bad  partially nested sequences, because the abstraction $\absof{\acomp}$ of a computation $\acomp$ in our model is not only a bad nested sequence but actually a bad partially nested sequence.
The bound we obtain with this analysis will also be precise, as we have the following. 
For every bad partially nested sequence $\aseq=(\ansf_0, \aheight_0).(\ansf_1, \aheight_1)\ldots$ there is a computation $\acomp=\astack_0\rightarrow\astack_1\rightarrow\ldots$ so that $\aseq=\absof{\acomp}$.

We explain how partially nested sequences eliminate the dependence on \(\beta_2\) and \(\cardof{\setfun}\). 
Consider the case where \(\beta_1=1\) and so \(\domsf=\N \times \setfun \times \N^{\beta_2}\). 
Assume we start with value \(k\in\N\) for the main rank. 
Then a partially nested sequence can be modelled as a tail recursion with a rank in \(\N^{(k+1) \cardof{\setfun} \cdot \beta_2}\), with the following argument.
At every moment in the sequence, the call stack has height at most \((k+1) \cardof{\setfun}\). 
We can store the \((k+1)\cardof{\setfun}\) auxiliary ranks that lie on the stack in a long tuple. 
Using the bad sequence analysis of \cite[Prop. 5.2]{FigueiraFSS11}, we arrive at a complexity of \(\fastgrowof{(k+1)\cardof{\setfun} \cdot \beta_2}\subseteq \fastgrowof{\omega}\).  
In fact, already \(\beta_1 = 1 = \beta_2=\cardof{\setfun}\) would lead to \(\fastgrowof{\omega}\), as one can directly implement the Ackermann function. 
The consequence is that the details \(\beta_2\) and \(\cardof{\setfun}\) vanish once we round up to the next power of \(\omega\), which will always be \(\omega^{\beta_1}\).

\begin{restatable}{theorem}{TheoremLengthBoundBadSequences} \label{TheoremLengthBoundBadPartiallyNestedSequences}
Let \(\beta_1, \initbound, \beta_2 \in \N\) and let \(\cardof{\setfun}\) be finite. 
Let \(\alpha\) be an ordinal and let \(\controlfun\in \fastgrowof{\alpha}\) be strictly increasing. 
Then the length of \((\controlfun, \initbound)\)-controlled bad partially nested sequences over \(\N^{\beta_1} \times \setfun \times \N^{\beta_2}\) is in $\fastgrowof{\omega^{\beta_1}+\alpha}$.
\end{restatable}

\newcommand{\LexDec}[0]{\mathsf{LexDec}}
\newcommand{\LexDecof}[1]{\LexDec(#1)}
\newcommand{\maxlength}{\mathit{L}}
\newcommand{\maxlengthof}[2]{\maxlength_{#1}(#2)}
\newcommand{\vect}[1]{\mathbf{#1}}
\begin{proof}
Note that we can get rid of $\setfun$ by setting \(\beta_2'=\beta_2 \cdot \cardof{\setfun}\) and storing the auxiliary ranks of one call for every function \(\afun\) in a common larger rank from \(\N^{\beta_2'}\). 

The proof now has two steps, we first analyze the shape of a worst-case controlled bad partially nested sequence to obtain a recursive formula, and then we simplify this formula to be able to analyze it.

To obtain long bad partially nested sequences, we need to reduce ordinals in the least possible way. 
Given an ordinal \(\beta< \omega^{\omega}\) in Cantor normal form, we define the worst-case \emph{lexicographic decrement} as the function that finds the monomial $\omega^i\cdot k_i$ with the least exponent $\min(\beta)=i$, decrements the coefficient $k_i$, and fills up the missing monomials with smaller exponents. 
The coefficient that should be used for the added monomials is given as a parameter $n$ to the lexicographic decrement.  
We let \(\LexDecof{\beta, n}=\beta-\omega^{min(\beta)}+\sum_{j<\min(\beta)} \omega^{j}\cdot n\). 
For example, $\LexDecof{\omega^3+\omega^2\cdot 5, n}=\omega^3+\omega^2\cdot 4 + \omega\cdot n + n$.


Let \(\maxlengthof{\amainrank, \anauxrank,\controlfun}{\initbound}\) be the maximal length of a \((\controlfun,\initbound)\)-controlled bad partially nested sequence starting at \((\amainrank, \anauxrank) \in \N^{\beta_1} \times \N^{\beta_2'}\). 
We obtain a recursive formula for \(\maxlengthof{\amainrank, \anauxrank, \controlfun}{\initbound}\) by considering the worst-case behavior the bad sequence can show. 
A step in the sequence may execute a push on the stack, and thereby turn \((\amainrank, \anauxrank, \aheight=0)\) into \((\LexDecof{\amainrank, \controlfunof{\initbound}}, \controlfunof{\initbound}, \aheight=1)\). 
So we reduce the main rank, but in the least possible way. 
We then execute a bad nested sequence on this smaller new value.  
After a large number of steps \(t\), we return to \((\amainrank, \LexDecof{\anauxrank, \controlfun^{(t)}(\initbound)}, \aheight=0)\). 
We have only reduced the auxiliary rank, and again only in the least possible way, which now even includes the large number $\controlfun^{(t)}(\initbound)$ as the coefficient for the new monomials.
We now perform a bad nested sequence from this new stack. 

By formalizing this argument, we obtain what is often called a \emph{descent equation} \(\maxlengthof{\amainrank, \anauxrank, \controlfun}{\initbound}=1+l_1+l_2\), where 
\begin{align*}
l_1 &=\maxlengthof{\LexDec(\amainrank, \controlfun(\initbound)), \controlfun(\initbound), \controlfun}{\controlfun(\initbound)} \\
l_2 &=\maxlengthof{\amainrank, \LexDec(\anauxrank, \controlfun^{(l_1+1)}(\initbound)),\controlfun}{\controlfun^{l_1+1}(\initbound)}
\end{align*} are respectively the lengths of the first and second half of the worst-case sequence above.

Our descent equation is difficult to analyze because it refers to nested sequences. 
We therefore translate nested sequences into ordinary (non-nested) sequences, using summarization plus stuttering.
By summarization, we mean that we over-approximate the rank modification that may happen while the stack height is non-zero using a new control function and a single step. 
By stuttering, we mean that the high counter values introduced by the summary step allow us to execute a bad sequence that is at least as long as what we had. 
We illustrate the idea in the case $\beta_1=1$ studied above.
Starting at \(k\), we can bound \(\maxlength_{k, \anauxrank, \controlfun}\) by a \emph{non-nested (nn)} sequence in \(\N^{\beta_2'}\) whose control function is \(\controlfun^{(2+\maxlength_{k-1, \controlfun(\initbound), \controlfun})}\). 
Indeed, in the worst-case the first step arrives at \((k-1, \controlfun(\initbound),h=1)\), and after an additional at most \(\maxlength_{k-1, \controlfun(\initbound), \controlfun}+1\) steps we are back at \(h=0\). For any nested sequence, the subsequence defined by \(h=0\) is a standard bad sequence that is controlled by \(\controlfun^{(2+\maxlength_{k-1, \controlfun(\initbound), \controlfun})}\). 

In the general case, we define the new control function \(\controlfun_{\amainrank}= \controlfun^{(2 + l_1)}\), and emphasize that parameter $\amainrank$ is used as the initial main rank of the $\LexDec$ function that appears within $l_1$. 
Let \(\maxlengthof{nn,\beta_2',\controlfun_{\amainrank}}{\initbound}\) denote the length of the longest \((\controlfun_{\amainrank}, \initbound)\)-controlled bad non-nested sequence in \(\N^{\beta_2'}\). 
We can show \(\maxlengthof{\amainrank, \anauxrank, \controlfun}{\initbound} \leq (\maxlengthof{nn, \beta_2', \controlfun_{\amainrank}}{\initbound}+1)^2\), and call this the bridge inequality. 

To obtain a bound on $\maxlengthof{nn, \beta_2', \controlfun_{\amainrank}}{\initbound}$, we need to know how complicated \(\controlfun_{\amainrank}\) is in terms of fast-growing functions. 
Using the bridge inequality, we have: 
\begin{align*}
\controlfun_{\amainrank}(\initbound)&=\controlfun^{(2+\maxlengthof{\LexDec(\amainrank, \controlfun(\initbound)), \controlfun(\initbound), \controlfun}{\controlfun(\initbound)})}\\
&\leq \controlfun^{(2+(\maxlengthof{nn, \beta_2', \controlfun_{\LexDec(\amainrank, \controlfun(\initbound))}}{\controlfun(\initbound)}+1)^2)}
\end{align*}
This recursion is similar to $F_{\beta}(\initbound)=F_{\LexDec(\beta, \initbound-1)}^{(\initbound)}(\initbound)$, which can be used as the definition of the fast-growing functions. 
The complexity classes \(\mathfrak{F}_{\beta}\) only 
depend on the style of the recursion, but are robust against the differences. 
The consequence is that \(\controlfun_{\amainrank}\) is at level \(\amainrank \beta_2'+\alpha\) of the fast-growing hierarchy, using a calculation that is inspired by \cite[Theorem VI.1]{LerouxPS14}, and the fact \cite[Prop. 5.2]{FigueiraFSS11} that the \(\maxlength_{nn, \beta_2', \controlfun}\) function is at level \(\alpha+\beta_2'\) of the hierarchy, if \(\controlfun\) is at level \(\alpha\).
\end{proof}

\subsection{Rank of NGVAS}\label{Section:OutlineRanks}
To analyze the complexity of our decision procedure using \Cref{TheoremLengthBoundBadPartiallyNestedSequences}, we have to define a main rank (from~$\N^{\beta_1}$) and an auxiliary rank (from $\N^{\beta_2}$).
Recall that the main rank is the one in which we admit recursion, and the auxiliary rank is for non-recursive computations.  
Our decision procedure is recursive in NGVAS objects: the functions expect NGVAS as input and return sets of NGVAS as output. 
This means the main rank from $\N^{\beta_1}$ will measure the depth of the recursion that is needed to analyze an NGVAS. 
In fact, also the first $\gamma$-components of the auxiliary rank $\beta_2=\gamma+\delta$ contribute to the complexity of an NGVAS, but are not needed for the recursion. 
We now define the complexity measure from $\N^{\beta_1}\times \N^{\gamma}$, and call it the \emph{rank of an NGVAS}.

We begin by explaining why the complexity of an NGVAS is split between the main and the auxiliary rank. 
Recall that an NGVAS is a nesting of (extended) context-free grammars, and every such grammar can either be non-linear or linear.
The main rank $\N^{\beta_1}$ will (almost) only consider non-linear grammars.
The auxiliary rank $\N^{\gamma}$ will be for linear grammars. 
We use the notation $\rankof{\anngvas}=(\recrankof{\anngvas}, \itrankof{\anngvas})\in\N^{\beta_1}\times\N^{\gamma}$.
Let \(d\) be the number of counters. 

\subsubsection*{Main Rank $\recrankof{\anngvas}$}
The main rank is
\begin{align*}
    \recrankof{\anngvas}\ =\ (\cardof{\constrained}, \srankof{\anngvas}, \localgrmindexof{\anngvas})\in\N\times\N^{d+1}\times\N\ .
\end{align*}
There are three components that together yield $\beta_1=d+3$. 
We explain them starting from the most significant one. 
%
\paragraph{Component $\cardof{\constrained}\in\N$}
This is the number of counters that are subject to reachability constraints. 
Like KLMST \cite{LerouxS19}, our decomposition never removes a reachability constraint from an NGVAS. 
However, in Section \ref{Section:OutlinePumping} we often considered variants of the given NGVAS with smaller sets $\constrained$. The component $\cardof{\constrained}$ is the most important in order to ensure Lemma \ref{LemmaBasicRankProperties}.
%
\paragraph{Component $\srankof{\anngvas}\in\N^{d+1}$}
This so-called system rank is inspired by the rank for KLMST sequences~\cite{LerouxS19}. 
The main difference is that NGVAS form a branching structure. 
To bridge the gap, we identify the branch in the NGVAS that can lead to the deepest recursion. 
A \emph{branch} of $\anngvas_{0}$ is a sequence of NGVAS $\abranch=\anngvas_{0}\ldots\anngvas_{k}$ where $\anngvas_{i+1}$ is the child of $\anngvas_{i}$ for all $i<k$. 
We thus have
\begin{align*}
    \srankof{\anngvas}\ =\ \max_{\abranch\text{ branch in }\anngvas}\brankof{\abranch}\ .
\end{align*}
The depth of the recursion that is needed to analyze a branch is the sum of the depths that are needed to analyze the NGVAS that lie on this branch: 
\begin{align*}
    \brankof{\abranch}\ =\ \sum_{\anngvas\in\abranch}\lrankof{\anngvas}\ . 
\end{align*}
The idea behind the local rank of an NGVAS is to replace the syntactic notion of counters by a semantic one, and determine the number of semantic counters that the NGVAS uses. 
This means the recursion that depends on the local rank removes the semantic counters.  
To understand the idea of semantic counters, assume the NGVAS has two (syntactic) counters and every terminal with effect $k\in\Z$ on the first counter has effect $2k$ on the second. 
Then the counter values move along  $(1, 2)\cdot\N$, so only one semantic counter is needed.
The idea of semantic counters, or the dependence among syntacic counters, is naturally captured as the dimension of a suitable vector space~\cite{LerouxS19}. 
We adapt it to our needs. 

A \emph{cycle} in an NGVAS is a derivation $\anonterm\to^{*}\aword.\anonterm.\awordp$ that reproduces the non-terminal $\anonterm$, along with $\aword, \awordp\in\trms^{*}$.
The terminals $\anngvasp\in \trms$ are childNGVAS, and the effect $\ceffof{\anngvasp}\subseteq\Z^{d}$ of such a child is $\anngvasp.\restrictions$, its linear restriction.
This overapproximates the effects of runs in $\runsof{\anngvasp}$. 
For a terminal sequence $\aword=\anngvasp_{0}\ldots\anngvasp_{l}$, the effect is the sum $\ceffof{\aword}=\ceffof{\anngvasp_{0}}+\ldots+\ceffof{\anngvasp_{l}}$.
We define the \emph{left cycle space} as the vector space 
\begin{align*}
    \lcyclespaceof{\anngvas}\ =\ \mathsf{span}\setcond{\amarking\in\ceffof{\aword}}{\exists \anonterm.\ \anonterm\to^{*}\aword.\anonterm.\awordp}\subseteq\mathbb{R}^{d}.
\end{align*}
Assume $\anngvas$ is non-linear and recall that the local rank is a vector of dimension $d+1$.
This vector only has one non-zero entry, namely ($d$ minus) the dimension of the cycle space. 
This entry is the number of non-terminals:
\begin{align*}
    \lrankof{\anngvas}\ =\ 1_{d-\dimensionof{\lcyclespaceof{\anngvas}}}\cdot\cardof{\nonterms}.
\end{align*}
The subtraction moves higher dimensional spaces to the front, and in the lexicographic order $1_{i}>1_{j}$ if $i<j$. 
If $\anngvas$ is linear, $\lrankof{\anngvas}=0$.

The definition calls for comments. 
We only use the left cycle space as, for non-linear NGVAS, the left and the right cycle space coincide.
Moreover, we do not need to define the dimension for each non-terminal: it will be the same due to strong connectedness. 
Finally, we add up the local ranks when determining the branch rank, and this will fill further dimensions with values. 
\paragraph{Component $\localgrmindexof{\anngvas}\in\N$} 
%
As we have defined it so far, the main rank is for the  analysis of non-linear NGVAS. 
To also support linear NGVAS, we add the \emph{local index} as the last component of the main rank. 
We explain the recursion on linear NGVAS that needs this index. 

To decompose linear NGVAS, we fold them into VASS with nested zero tests (VASSnz)~\cite{Reinhardt08}, and then apply the recent decomposition for VASSnz~\cite{GuttenbergCL25}
We explain both steps. 
Consider a linear and non-nested NGVAS. 
Using convolution, we fold it into an ordinary VAS with twice the number of counters, 
for the run on the left and the run on the right of the single non-terminal. 
Adding the nesting that an NGVAS has, we obtain a VASSnz~\cite{AtigG11}. 
The decomposition for VASSnz~\cite{GuttenbergCL25} proceeds recursively on the nesting depth. 
The benefit of the recursion is that it can treat each nesting level as an ordinary VASS.

The \emph{local index} measures the nesting depth of the VASSnz that results from an NGVAS by folding the linear components. 
Since linear and non-linear components may alternate  in an NGVAS, the formal definition is \emph{modulo the system rank}. 
This means we treat subNGVAS thar are non-linear or have lower system rank as terminals. 
The local index $\localgrmindexof{\anngvas}\in\N$ is then the index of the resulting grammar.
If \(\anngvas\) is non-linear, then $\localgrmindexof{\anngvas}=0$. 

\subsubsection*{Auxiliary Rank $\itrankof{\anngvas}$}
The definition of the local index leads to the notion of a \emph{main branch}, which behaves like the top level VASSnz in \cite{GuttenbergCL25}.
We define the main branch $\mainbranchof{\anngvas}$ to consist of all subNGVAS of $\anngvas$ with the same system rank and the same local index as $\anngvas$.
This is unique.
%
The auxiliary rank $\itrankof{\anngvas}\in\N^{\gamma}$ is then the rank of the system represented by $\mainbranchof{\anngvas}$, as it would be ranked in \cite{GuttenbergCL25}.
%
%
%
The details can be found in the appendix.

\subsection{Analyzing PVASS using Theorem \ref{TheoremLengthBoundBadPartiallyNestedSequences}}

We now apply Theorem~\ref{TheoremLengthBoundBadPartiallyNestedSequences} to determine a bound on the complexity of our decision procedure.  
The theorem expects stack frames from $\domsf=\N^{\beta_1}\times \setfun\times\N^{\beta_2}$.  
We already discussed the main rank \(\recrankof{\anngvas}\in\N^{\beta_1}\) that controls the recursion.  
The recursive functions $\setfun$ in our decision procedure are depicted in the following graph: 
\begin{center}
\includegraphics[scale=1,page=1]{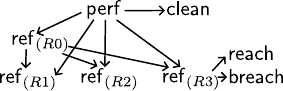}
\end{center}
An edge $\afun\rightarrow \afunp$ indicates that function $\afun(\anngvas)$ will invoke function $\afunp(\anngvasp)$, and the NGVAS argument $\anngvasp$ it passes may still have the same main rank, $\recrankof{\anngvasp}=\recrankof{\anngvas}$. 
This means the functions have to be ordered $\afun\succ\afunp$, and so the total order on the functions that Theorem~\ref{TheoremLengthBoundBadPartiallyNestedSequences} expects can be an arbitrary linearization of this graph. 
As for the auxiliary rank $\N^{\beta_2}$ with $\beta_2=\gamma+\delta$, we already defined the linear rank $\itrankof{\anngvas}\in\N^{\gamma}$. 
We now focus on the implementation of the functions, and the auxiliary rank from $\N^{\delta}$ they need. 

Function $\refinesub$ establishes perfectness condition $\perfectnesschildrennospace$ stating that all children are perfect.
Thanks to omitted cleanness conditions, we only need to invoke this for linear NGVAS.
For linear NGVAS, we do not have the guarantee that children have a smaller rank. 
Here, we rely on the recent VASSnz decomposition to avoid calls to perf and remain within our programming model.

Function $\refinepump$ computes a Karp-Miller tree to check whether a pumping situation exists and, if the check fails, a coverability grammar for the decomposition, see Section \ref{Section:OutlinePumping}. 
Both computations are non-recursive, which means the auxiliary rank~$\N^{\delta}$ has to be large enough to accomodate them. 
Before we get into the details, we note that both computations invoke the functions $\postfunc$ and $\prefunc$. 
These functions are smaller in the total order of functions, and so we can invoke them without having to reduce the main rank. 


\newcommand{\aconfig}{\mathit{c}}
\newcommand{\adcset}{\mathit{D}}
\newcommand{\typeof}[1]{\mathsf{type}(#1)}

We explain how we compute the Karp-Miller tree, 
for the coverability grammar we proceed similarly. 
We store configurations $\aconfig$ in the tree as elements from \(\N^{2d+2}\), with \(d\) counters for the input marking, \(d\) counters for the output marking, and \(2\) counters for the state as \((i, \cardof{Q}-i)\). 
For every branch \(\abranch=\aconfig_0, \dots, \aconfig_r\) in the tree, we define a rank, namely \(\cardof{\N^{2d+2} \setminus \upclsof{\{\aconfig_0, \dots, \aconfig_r\}}}\).
This is the size of the downward-closed set of configurations which may still be added to the branch.  
Using the notion of \emph{types}~\cite{FigueiraFSS11}, we can represent the size of a downward-closed set $\adcset \subseteq \N^{2d+2}$ as an element of \((\N^{2d+2}, \leqlex)\).
For this, we write the downward-closed set $\adcset$ as a union of \(\omega\)-markings/copies of \(\N^{n_i}\), i.e.\ \(\adcset=\bigcup_{i=1}^r \N^{n_i}\).
Now \(\typeof{\adcset} \in \N^{2d+2}\) counts the number of \(\omega\)-markings with a given number $k$ of $\omega$-entries: \(\typeof{\adcset}[k]=\cardof{\setcompact{i}{2d+2-n_i=k}}\). 
For example, \(\downclsof{(1, \omega)}=(0, \omega) \cup (1,\omega)\) has $\typeof{\downclsof{(1, \omega)}}[0]=0$ and $\typeof{\downclsof{(1, \omega)}}[1]=2$. 
To simplify the notation, we just write $\typeof{\abranch}$ for $\typeof{\N^{2d+2}\setminus\upclsof{\abranch}}$. 

The auxiliary rank in $\N^{\delta}$ that we need for the tree is $(\max_{\abranch}\typeof{\abranch}, \#\text{branches in current round}) \in \N^{2d+3}$. 
We build the tree in breadth-first fashion, 
and the last component is a loop counter, like above. 


It remains to compute $\postfunc$ and $\prefunc$. 
In the simple cases, and similar to $\refinesub$, we can use a recursion that calls $\perfect$.
This is justified because $\cardof{\constrained}$ decreased. 
For Hard Case~1, we have to compute \(f(r-1)\) with $r$ being the number of relevant counters.
We have an outer loop for $r$, and an inner loop that maintains the currently known minimal markings fulfilling the pumping property.
Moreover, we construct several Karp-Miller trees, but reuse the space once we have the result.
Finally, we obtain \(Bd\) using integer linear programming. 
For Hard Case 2, we compute marked parse trees similar to Karp-Miller trees, and refer to Section \ref{SectionComplexity} for details.
\vspace{-0.15cm}
\section{Conclusion}
We presented a decision procedure for 
the reachability problem in pushdown vector addition systems. 
It relies on several insights that may be of independent interest. 
The first insight is a wide tree theorem for context-free grammars that guarantees the existence of derivation trees of height logarithmic in the length of the word, at the cost of only preserving Parikh equivalence. 
The second insight is the notion of vertical pumping which, intuitively, harmonizes the manipulation of the stack and the counters.
The third insight is that Rackoff's induction generalizes beyond coverability.
The fourth insight is that the problem is best reduced to itself, with an input that is smaller in a well-founded order. 
The last insight is that a combination of recursive and tail-recursive functions over ordinals admits a better complexity analysis than what was known in the literature on fast-growing complexity.

\bibliography{main}
\newpage
\appendix

\section{Introduction to the Appendix}\label{Section:IntroToAppendix}
The appendix is divided into two parts.
The first part serves as an extended version of the main paper, and features a detailed and concrete treatment of the sections from the main paper.
We go over the individual sections.
\Cref{Section:Prelims} provides the essential definitions.
In \Cref{Section:GVAS} we state our main result with the developped terminology.
In \Cref{Section:NGVAS}, we define NGVAS in detail. 
We extensively discuss the compatibility requirements in \Cref{Section:NGVASDefinition}.
This extends the definition in \Cref{Section:OutlineNGVAS}.
In \Cref{Section:CharEqNL,Section:CharEqNL}, we define the characteristic equation systems.
We discuss the ranks in \Cref{Section:Ranks}, which extends \Cref{Section:OutlineRanks}.
We state the complete list of perfectness conditions in \Cref{Section:Perfectness}, which extends \Cref{Section:OutlinePerfectness}.
In \Cref{Section:IterationLemma}, we prove the iteration lemma in detail, extending the argument in \Cref{Section:OutlineIterationLemma}.
In \Cref{Section:Decomposition}, we discuss the details of decomposition: cleaning and simple refinements.
In \Cref{Section:PumpingMP}, we treat the pumpability problem in detail.
This extends \Cref{Section:OutlinePumping}.
Finally, in \Cref{SectionComplexity}, we consider the complexity of our algorithm, and fill in the missing details from \Cref{SectionLevelOneComplexity}.

The second part of the appendix, starting from \Cref{Section:AppendixL3}, contains the proofs and details omitted from the extended version.

\section{Preliminaries} \label{Section:Prelims}
\subsection{Basic Notation}
We fix some notation.
Let $\anindexset$ be a finite set of indices.  
Given a vector $\avec\in\Z^\anindexset$, we use $\at{\avec}{\anindex}$ for the entry at dimension $\anindex\in\anindexset$. 
We use $\normof{\avec}=\sum_{\anindex\in\anindexset} \cardof{\at{\avec}{\anindex}}$ for the size. 
We write $0, 1\in\Z^\anindexset$ for the vector with $0$ resp. $1$ in all dimensions. 
We write~$1_{\anindex}$ for the $\anindex$-th unit vector that has $1$ at dimension~$\anindex\in\anindexset$ and~$0$~otherwise. 
We write $\seqat{\asentform}{\anindexp}$ for the $\anindexp$-th entry in a sequence~$\asentform$.
We use $\leq$ to refer to the componentwise order $\mathord{\leq}\subseteq\N^{\anindexset}\times\N^{\anindexset}$, where $\amarking\leq\amarkingp$ holds if $\amarking[i]\leq\amarkingp[i]$ for all $i\in\anindexset$.  
The componentwise order on $\N^{\anindexset}$ forms a \emph{well quasi-order}.
In our development, we occasionally rely on the properties of well quasi-orders.
The most important property is the following.
For any infinite sequence $[\amarking_{i}]_{i\in\N}\in(\N^{\anindexset})^{\omega}$, there is an increasing subsequence $[\amarking_{\phi(i)}]_{i\in\N}$, that is, $\amarking_{\phi(i)}\leq\amarking_{\phi(i+1)}$ for all $i\in\N$.
We refer the reader to \cite{FinkelG09} for more information on the topic.

%
%
%

We use a Parikh image that is parameterized in the set it wishes to count. 
Fix a set $A$ and let $B\subseteq A$. 
We define $\paramparikh{B}:A^*\rightarrow \N^{B}$ 
as the function that maps a word $\asentform\in A^*$ to the vector $\paramparikhof{B}{\asentform}\in \N^{B}$ which says how often each letter from $B$ occurs in $\asentform$. 
We wish to capture the effect of productions in a context-free grammar on the number of symbols.  
For an abstract account, let $C\subseteq A$ and consider a relation $\prods\subseteq C\times A^*$. 
The effect of a pair $\aprod = (c, \asentform)$ on the number of elements from $B$ is 
$\effof{B, \aprod} = \paramparikhof{B}{\asentform} - \aconst$, where $\aconst = 1_{c}$ if $c\in B$ and $\aconst = 0$ otherwise. 
The matrix $\effof{B}\in\Z^{B\times \prods}$ has $\effof{B, \aprod}$ as column $\aprod$. 

An (ordered) \emph{tree structure} $P\subseteq\N^{*}$ is a prefix-closed set of strings whose entries are natural numbers.
This means we identify a node $\anode$ in $\N^{*}$ with the path that leads to it.
We call $\varepsilon\in P$ the root and $\anode.a$ a child of node $\anode$.
%
%
The leftmost child of $\anode$ is the node $\anode.a\in P$, where $a\in\N$ is the smallest number for which there is such an element.
%
%
The helper function $\childnodes:P\to P^{*}$ lists the children of a given node, $\childnodesof{\anode}=(\anode.a_0)\ldots(\anode.a_i)$ with $a_0<\ldots<a_i$.
A \emph{$K$-labeled tree} is a pair $\atree=(P, \nodemarking)$ consisting of a tree structure $P\subseteq\N^{*}$ and a labeling $\nodemarking:P\to K$.
%
%
%
%
The yield of a $K$-labeled tree $\atree$ is the sequence of $K$ elements labeling the leaves as encountered in a left-first traversal.
Formally, we set $\yieldof{\atree}=\yieldof{\varepsilon}$.
For a node $\anode$ with $\childnodesof{\anode}=(\anode.a_0)\ldots(\anode.a_i)$, 
we have $\yieldof{\anode}=\yieldof{\anode.a_0}\ldots \yieldof{\anode.a_i}$. 
For a leaf $\anode$, we have $\yieldof{\anode}=\nodemarking(\anode)$.
%
We say that $\atreep$ is a subtree of $\atree$ rooted at the node $\anodep\in\atree$, if $\anodep.\atreep\subseteq\atree$ and $\nodemarking(\anodep.\anode)=\nodemarking(\anode)$ for all $\anode\in\atreep$.  
We say that $\atreep$ is a subtree of $\atree$, if it is a subtree of $\atree$ rooted at some node.
We refer to the subtree rooted at the leftmost child of the root as the left-subtree.
\subsection{Context-Free Grammars}\label{Section:CFG}
A context-free grammar $\agram = (\nonterms, \trms, \prods, \startnonterm)$ consists of a finite set of non-terminal symbols~$\nonterms$, a finite set of terminal symbols $\trms$ with $\nonterms\cap\trms=\emptyset$, a start non-terminal $\startnonterm\in\nonterms$, and a finite set of  productions $\prods\subseteq \nonterms\times(\nonterms\discup\trms)^*$. 
We call sequences of non-terminals and terminals $\alpha, \asentformp\in (\nonterms\discup\trms)^*$ sentential forms. 
We use $\asentform\xrightarrow{\aprod} \asentformp$ for the derivation relation between sentential forms, which says that $\asentformp$ can be obtained from $\asentform$ by an application of the production $\aprod\in \prods$. 
We extend the relation to sequences of productions. 
We write $\asentform\xrightarrow{\aprodseq}$ if there is $\asentformp$ so that $\asentform\xrightarrow{\aprodseq}\asentformp$ holds.
We write $\asentform\rightarrow^* \asentformp$ if there is $\aprodseq$ so that $\asentform\xrightarrow{\aprodseq}\asentformp$ holds.
The language $\langof{\agram}=\setcond{\asentform\in\trms^*}{\startnonterm\rightarrow^*\asentform}$ consists of the terminal words that can be derived from the start non-terminal. 

%
Strong connectedness and branching play central roles in our development.
We say that $\nonterms_{sc}\subseteq\nonterms$ is \emph{strongly-connected}, if for all $\anonterm, \anontermp\in\nonterms_{sc}$, there is a derivation $\anonterm\to^{*}\aword.\anontermp.\awordp$ for some $\asentform, \asentformp\in (\nonterms\discup\trms)^*$. 
Note that each singleton $\set{\anonterm}$ is strongly connected under this definition.
We call strongly connected $\nonterms_{scc}\subseteq\nonterms$ a \emph{strongly-connected component (SCC)}, if there is no $\nonterms_{scc}\subsetneq\nonterms_{sc}\subseteq\nonterms$ that is strongly connected.
The grammar is strongly-connected, if $\nonterms$ is strongly-connected. 
We assign each non-terminal $\anonterm\in\nonterms$ a call set and a strongly connected component, $\callof{\anonterm}, \sccof{\anonterm}\subseteq\nonterms$.
The call set $\callof{\anonterm}$ of $\anonterm$, consists of symbols $\asymbol\in\nonterms\discup\trms$ that can be reached from $\anonterm$, that is $\anonterm\to^{*}\asentform.\asymbol.\asentformp$ for some $\asentform, \asentformp\in(\nonterms\discup\trms)^{*}$.
Note that $\anonterm\in\callof{\anonterm}$ holds because of the empty derivation.
For $\anonterm\in\nonterms$, we define $\sccof{\anonterm}$ to be the SCC that includes $\set{\anonterm}$.
Note that $\sccof{\anonterm}\subseteq\callof{\anonterm}$.
We classify productions based on whether they allow for further derivation.
A production $\aprod$ is an \emph{exit-production}, if the rule does not produce any non-terminals, $\aprod\in\nonterms\times\trms^{*}$.
If $\aprod$ is not an exit-production, it is a \emph{persisting-production}.
We call the grammar \emph{non-branching}, if every persisting production has exactly one non-terminal symbol on the right-hand side, $\prods\subseteq \nonterms\times \trms^*.\nonterms.\trms^*$.
If this is not the case, we call the grammar \emph{branching}.
We reserve the terms \emph{linear} and \emph{non-linear} for grammars that are strongly connected and non-branching resp. branching.
%

The grammar is in weak Chomsky normal form (wCNF), if every production has at most two symbols on the right-hand side, $\prods\subseteq \nonterms\times (\trms \discup\nonterms)^{\leq 2}$, and every terminal occurs on the right-hand side of a production.   
The advantage over the classical Chomsky normal form is that the notion of linearity applies without change.  
A non-terminal $\anonterm$ is useful, if it can be used to derive a terminal word: there are $\asentform_1, \asentform_2\in(\nonterms\discup\trms)^*$ and $\asentform\in\trms^*$ so that $\startnonterm\rightarrow^*\asentform_1.\anonterm.\asentform_2\rightarrow^*\asentform$. 

We need a linear-algebraic description of the derivations in a context-free grammar that can serve as an interface to VASS arguments. 
In particular, we need a way to capture the effect of derivations on the number of terminals and non-terminals. 
The notation from above helps, and we can rely on a powerful theorem due to Esparza.
\begin{theorem}[Theorem 3.1 in~\cite{Esparza97}]\label{Theorem:EEK}
Let $\agram$ only have useful non-terminals and let $1\leq \prodvec\in\N^{\prods}$. 
If $\effof{\nonterms}\cdot \prodvec\geq -1_{\startnonterm}$, then 
there is $\aprodseq\in\prods^*$ with $\startnonterm\xrightarrow{\aprodseq}$ and $\paramparikhof{\prods}{\aprodseq}=\prodvec$ .  
\end{theorem}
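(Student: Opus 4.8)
The plan is to reformulate the statement as the existence of a finite derivation tree realizing a prescribed Parikh vector of productions, and then build such a tree greedily. First I would set up the bookkeeping: a sentential form $\asentform$ reached by a production sequence $\aprodseq$ satisfies $\paramparikhof{\nonterms}{\asentform} = 1_{\startnonterm} + \effof{\nonterms}\cdot\paramparikhof{\prods}{\aprodseq}$, since each application of a production $\aprod = (c,\awordp)$ replaces one occurrence of $c$ by the non-terminals in $\awordp$, contributing exactly the column $\effof{\nonterms,\aprod}$. Hence the hypothesis $\effof{\nonterms}\cdot\prodvec \geq -1_{\startnonterm}$ is precisely the arithmetic requirement that, after using each production $\aprod$ exactly $\prodvec[\aprod]$ times starting from the single non-terminal $\startnonterm$, the multiset of remaining non-terminals is componentwise nonnegative — a necessary condition. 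The content of the theorem is that it is also sufficient: the productions can be scheduled in an order that is always \emph{enabled}, i.e.\ at each step the production's left-hand non-terminal is actually present in the current sentential form.

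The key steps, in order: (1) Induct on $\normof{\prodvec} = \sum_{\aprod}\prodvec[\aprod]$. If $\prodvec$ has a coordinate that can be ``peeled off'' — a production $\aprod_0 = (\startnonterm, \awordp)$ with $\prodvec[\aprod_0]\geq 1$ — apply it first; the residual vector $\prodvec' = \prodvec - 1_{\aprod_0}$ should be handled by applying the induction hypothesis to a modified grammar, one whose derivations start from the sentential form $\awordp$ rather than a single start symbol. To make the induction go through cleanly I would actually prove the stronger statement for an arbitrary starting \emph{multiset} of non-terminals $\mu\in\N^{\nonterms}$ in place of $1_{\startnonterm}$: if $\mu + \effof{\nonterms}\cdot\prodvec \geq 0$ and every non-terminal appearing in $\mu$ or on a right-hand side used by $\prodvec$ is useful, then the productions can be fired in some order starting from $\mu$. (2) With the generalized statement, pick any production $\aprod_0 = (c_0,\awordp_0)$ with $\prodvec[\aprod_0]\geq 1$ whose left side $c_0$ occurs in $\mu$; fire it, obtaining $\mu' = \mu - 1_{c_0} + \paramparikhof{\nonterms}{\awordp_0}$, and recurse with $\prodvec - 1_{\aprod_0}$. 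The invariant $\mu' + \effof{\nonterms}\cdot(\prodvec - 1_{\aprod_0}) = \mu + \effof{\nonterms}\cdot\prodvec \geq 0$ is preserved automatically. (3) The one place this can stall is when no production left in the support of $\prodvec$ has its left-hand non-terminal present in the current multiset $\mu$, even though $\mu + \effof{\nonterms}\cdot\prodvec \geq 0$ and $\prodvec \neq 0$.

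The main obstacle — and the heart of Esparza's argument — is exactly resolving that stalled case. The idea is: let $Z\subseteq\nonterms$ be the non-terminals currently absent from $\mu$ but appearing as left-hand sides of productions still owed by $\prodvec$; by assumption the restriction of $\prodvec$ to productions with left side in $Z$ cannot all be fired, yet the nonnegativity constraint forces these productions to ``produce'' at least as many $Z$-symbols as they ``consume''. A counting / flow argument on the bipartite structure of which productions in this stuck sub-block create which $Z$-non-terminals shows there must be a production in the stuck block whose left side $c_1$ is reachable from some non-terminal currently present in $\mu$ using \emph{only} productions already accounted elsewhere or, more carefully, that one can reorder: since every non-terminal is useful and the grammar's reachability is finite, one can interleave a short ``activating'' derivation — but crucially, Esparza's trick is to first fire all the productions whose left-hand sides \emph{are} available, possibly driving some symbols to $0$, and then show the leftover block has a non-terminal reachable from $\startnonterm$ (usefulness), contradicting maximality unless $\prodvec$ was already exhausted. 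I would cite the structure of \cite[Theorem 3.1]{Esparza97} for this delicate ordering lemma rather than re-derive it; alternatively, one can invoke a pumping/iteration argument: decompose $\prodvec$ into a ``spanning'' part forming an actual derivation tree of $\startnonterm$ plus loops that can be grafted onto it, using that each useful non-terminal lies on a derivation from $\startnonterm$. Everything else — the translation between $\aprodseq$ and $\paramparikhof{\prods}{\aprodseq}$, the induction on $\normof{\prodvec}$, and the base case $\prodvec = 0$ (the empty derivation, noting $1_{\startnonterm}\geq 0$ trivially) — is routine.
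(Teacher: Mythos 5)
You should first note how the paper itself handles this statement: it does not prove it at all. Theorem~\ref{Theorem:EEK} is imported verbatim as Theorem~3.1 of \cite{Esparza97}, so the paper's ``proof'' is the citation, and a blind attempt can only be judged as a free-standing proof of Esparza's result.

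Judged that way, your skeleton (induct on $\normof{\prodvec}$, fire an enabled production, recurse on the residual vector) is the natural one, but the strengthened statement you propose to induct on is false, and the place where it fails is exactly the ``stalled case'' you then defer. You claim: if $\mu\in\N^{\nonterms}$, $\mu+\effof{\nonterms}\cdot\prodvec\geq 0$, and all non-terminals involved are useful, then $\prodvec$ can be fired from $\mu$. Counterexample: take $\mu=1_{\startnonterm}$ and let $\prodvec$ put weight $1$ on the two productions $A\to B$ and $B\to A$ (with $A,B$ useful in the full grammar via other productions). Then $\effof{\nonterms}\cdot\prodvec=0$, so the arithmetic condition holds, yet nothing in the support of $\prodvec$ is ever enabled from $\startnonterm$. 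Usefulness in the \emph{original} grammar does not help, because the productions that would make $A$ reachable are not in the support of the residual vector. What the real theorem exploits is that $\prodvec\geq 1$ \emph{together with} reachability of every non-terminal from $\startnonterm$ (part of usefulness) forces the support of $\prodvec$ to be connected to the start symbol; this is the ``Euler'' half of the condition, and it is precisely the hypothesis that your peeling step destroys, since after removing $1_{\aprod_0}$ the residual vector no longer has full support. A correct induction must therefore carry a connectivity invariant, e.g.\ ``every production in the support of the residual vector has its left-hand non-terminal reachable from the current multiset $\mu$ using only productions in that support'' (this is essentially the form in which Esparza states Theorem~3.1 for communication-free Petri nets), and one must argue that a suitable choice of the next fired production preserves it. Your text recognizes the problem but resolves it either by a vague counting argument or by ``citing the structure of \cite{Esparza97}'', which is circular for a proof attempt; the alternative sketch (a spanning derivation of $\startnonterm$ covering each production once, with the remaining multiplicities grafted on as loops) is a viable route precisely because $\prodvec\geq 1$, but it is not developed. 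So either reduce cleanly to Esparza's statement, checking that $\prodvec\geq 1$ plus usefulness implies his reachability-of-support hypothesis, or carry the connectivity invariant explicitly through the induction; as written, the argument has a genuine gap.
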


Theorem~\ref{Theorem:EEK} allows us to turn a solution to a system of linear inequalities into a feasible production sequence.   
The following is the converse. 
\begin{lemma}\label{Lemma:EEKConverse}
Consider $\startnonterm\xrightarrow{\aprodseq}\asentform$ with $\paramparikhof{\prods}{\aprodseq}=\prodvec$.  
Then $\paramparikhof{\nonterms}{\asentform} = 1_{\startnonterm}+\effof{\nonterms}\cdot \prodvec$ and $\paramparikhof{\trms}{\asentform} = \effof{\trms}\cdot \prodvec$. 
\end{lemma}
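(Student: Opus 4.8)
The plan is to prove this by induction on the length of the production sequence $\aprodseq$, tracking how each production application changes the Parikh images of the current sentential form. Let me set up the invariant carefully. Write $\aprodseq = \aprod_1 \ldots \aprod_m$ and let $\asentform_0 = \startnonterm, \asentform_1, \ldots, \asentform_m = \asentform$ be the intermediate sentential forms, so $\asentform_{k-1} \xrightarrow{\aprod_k} \asentform_k$. Let $\prodvec_k = \paramparikhof{\prods}{\aprod_1 \ldots \aprod_k}$ be the Parikh image of the prefix of length $k$. I claim that for every $k$, both $\paramparikhof{\nonterms}{\asentform_k} = 1_{\startnonterm} + \effof{\nonterms}\cdot\prodvec_k$ and $\paramparikhof{\terms}{\asentform_k} = \effof{\terms}\cdot\prodvec_k$ hold. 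The base case $k=0$ is immediate: $\asentform_0 = \startnonterm$ has $\paramparikhof{\nonterms}{\startnonterm} = 1_{\startnonterm}$, $\paramparikhof{\terms}{\startnonterm} = 0$, and $\prodvec_0 = 0$.

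For the inductive step, suppose the claim holds for $k-1$ and consider applying $\aprod_k = (c, \beta)$, which rewrites one occurrence of $c \in \nonterms$ in $\asentform_{k-1}$ into $\beta \in (\nonterms\discup\terms)^*$. The key observation is that for any set $X \subseteq \nonterms\discup\terms$, applying this production changes the $X$-Parikh image by exactly $\paramparikhof{X}{\beta} - 1_c$ if $c \in X$, and by $\paramparikhof{X}{\beta}$ if $c \notin X$ — which is precisely the definition of $\effof{X,\aprod_k}$, the $\aprod_k$-column of $\effof{X}$. Applying this with $X = \nonterms$ gives $\paramparikhof{\nonterms}{\asentform_k} = \paramparikhof{\nonterms}{\asentform_{k-1}} + \effof{\nonterms,\aprod_k}$, and since $\prodvec_k = \prodvec_{k-1} + 1_{\aprod_k}$ we get $\effof{\nonterms}\cdot\prodvec_k = \effof{\nonterms}\cdot\prodvec_{k-1} + \effof{\nonterms,\aprod_k}$; combining with the induction hypothesis yields the desired identity. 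The case $X = \terms$ is identical except that $c \in \nonterms$ is never in $\terms$, so the correction term $1_c$ never appears, matching the definition of $\effof{\terms,\aprod_k}$. Taking $k = m$ gives the statement, since $\prodvec_m = \paramparikhof{\prods}{\aprodseq} = \prodvec$.

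I do not expect any genuine obstacle here; this is essentially a bookkeeping lemma. The only point requiring a little care is making sure the "effect" matrices $\effof{\nonterms}$ and $\effof{\terms}$ from the preliminaries are instantiated with $C = \nonterms$ (the left-hand sides of productions) and with $B = \nonterms$ resp. $B = \terms$, and that the single-production change $\paramparikhof{X}{\asentform_k} - \paramparikhof{X}{\asentform_{k-1}}$ really equals the column $\effof{X,\aprod_k}$ regardless of \emph{which} occurrence of $c$ is rewritten and regardless of the surrounding context — this holds because Parikh images ignore position and the untouched part of the sentential form contributes zero to the difference. If one prefers to avoid the explicit intermediate-form indexing, the same argument can be phrased as a straightforward induction directly on the derivation relation $\rightarrow^*$, adding one production at a time; either phrasing is routine.
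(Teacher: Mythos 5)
Your proof is correct: the induction on the length of the production sequence, with the observation that a single application of $\aprod=(c,\beta)$ changes the $X$-Parikh image by exactly the column $\effof{X,\aprod}$ independently of position and context, is precisely the intended bookkeeping argument. The paper omits an explicit proof of this lemma as routine, and your write-up supplies exactly that routine argument, so there is nothing to add.
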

For easy reference, we name some equations. 
By \emph{Esparza-Euler-Kirchhoff} and its homogeneous variant, we mean 
\begin{alignat*}{5}
\eekof{\agram}{\prodvar}:&\qquad &\effof{\nonterms}\cdot \prodvar\ &=\ -1_{\startnonterm}\\
\homeekof{\agram}{\prodvar}:&\qquad &\effof{\nonterms}\cdot \prodvar\ &=\ 0\ . 
\end{alignat*}
If $\prodvec\geq 1$ solves $\eek{}$, Theorem~\ref{Theorem:EEK} yields a feasible production sequence. 
By Lemma~\ref{Lemma:EEKConverse}, the resulting sentential form only consists of terminals. 
If $\prodvec$ solves $\homeek{}$, the resulting sentential form has a copy of the start non-terminal. 
We also have equations that convert a number of productions into the number of terminals they produce. 
With $\termvar\in\N^{\trms}$, we define  
\begin{align*}
\ptof{}{\prodvar, \termvar}:\qquad \termvar - \effof{\trms}\cdot \prodvar&= 0\ . 
\end{align*} 

\subsection{Wide Tree Theorem}
A parse tree organizes the application of productions into a tree.
The root is the start non-terminal, the inner nodes are non-terminals, and the yield of the parse tree is the sentential form that has been derived with the productions in the tree.  
We use $\treesof{\agram}$ for the set of all parse trees in $\agram$. 
The link to the derivation relation is 
$\langof{\agram}=\yieldof{\treesof{\agram}}$. 
We use $\paramparikhof{\prods}{\atree}\in\N^{\prods}$ for the productions used in parse tree $\atree$.  
The height of the tree is~$\heightof{\atree}$ and a single node has height $0$.

We show a consequence of Theorem~\ref{Theorem:EEK} and Lemma~\ref{Lemma:EEKConverse} that is interesting in its own right.
If we have a strongly connected non-linear grammar and a solution to the homogeneous variant of Esparza-Euler-Kirchhoff, then we can iterate this solution and obtain parse trees that only grow logarithmically in height.
Actually, we will be more precise, but this needs terminology.

Consider a parse tree $\atree$ with $\aconst\geq 1$ copies of a production vector $\prodvec\in\N^{\prods}$, $\paramparikhof{\prods}{\atree}=\aconst\cdot \prodvec$. 
We introduce functions that track the provenance of $\trms$-labeled leaves, the copy of $\prodvec$ that generated the terminal. 
Formally, $\prov:\leaves_{\trms}\rightarrow [1, \aconst]$ \emph{tracks provenance}, if $\provinvof{i}=\effof{\trms}\cdot \prodvec$ for all $i\in[1, \aconst]$. 
Consider a prefix $\asentform$ of $\yieldof{\atree}$. 
We say that copy $i$ of $\prodvec$ is \emph{complete} in $\asentform$, if $\provinvof{i}\subseteq \asentform$. 
Otherwise, if also the remainder of the yield contains terminals that stem from this copy, we say that copy $i$ is \emph{incomplete} in $\asentform$. 
The \emph{order} of the provenance tracking function is the maximal number of incomplete copies of $\prodvec$ in any prefix of the yield.

\begin{restatable}{theorem}{WideTreeTheorem} \label{TheoremWideTree}
Consider a context-free grammar $\agram$ that is non-linear, strongly connected, and only has useful non-terminals. 
Let $\prodvec\geq 1$ solve $\homeekof{\agram}{\prodvar}$. 
For every $\aconst\geq 1$ there is $\atreedef\in\treesof{\agram}$ with $\paramparikhof{\prods}{\atreedef}= \aconst\cdot \prodvec$ and $\heightof{\atreedef}\leq \ceilof{1 + \ld\aconst}\cdot \normof{\avec_{\prods}}$. 
Moreover, $\atreedef$ admits a provenance tracking function of order at most $\ceilof{1+\ld\aconst}$. 
\end{restatable}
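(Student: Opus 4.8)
The plan is to build $\atreedef$ recursively, roughly doubling the number of copies of $\prodvec$ at each level, so that after $\ceilof{1+\ld\aconst}$ levels we have accumulated $\aconst$ copies. First I would handle the base case: since $\prodvec \geq 1$ solves the homogeneous equation $\homeekof{\agram}{\prodvar}$, Theorem~\ref{Theorem:EEK} (applied with $\prodvec + 1_{\startnonterm}$, or directly using that $\effof{\nonterms}\cdot\prodvec = 0 \geq -1_{\startnonterm}$) yields a production sequence $\aprodseq$ with $\startnonterm \xrightarrow{\aprodseq}$ and $\paramparikhof{\prods}{\aprodseq} = \prodvec$. By Lemma~\ref{Lemma:EEKConverse}, the resulting sentential form has Parikh image $1_{\startnonterm} + \effof{\nonterms}\cdot\prodvec = 1_{\startnonterm}$ over nonterminals, i.e.\ it is $w_1.\startnonterm.w_2$ for terminal words $w_1, w_2$ (this is where non-linearity is \emph{not} yet needed, but strong connectedness lets us route back to $\startnonterm$). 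Let $t_0$ be a parse tree realizing this derivation; note $\heightof{t_0} \leq \normof{\prodvec}$ since the tree uses exactly $\normof{\prodvec}$ productions and any path through it has at most that many internal nodes. This $t_0$ is the single-copy gadget, with a distinguished ``$\startnonterm$-hole'' leaf where another copy can be grafted.

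Next I would set up the doubling recursion. The key structural observation: because the grammar is non-linear and $\prodvec \geq 1$ uses \emph{every} production, $t_0$ contains at least one application of a branching production, hence some internal node $v^*$ with two subtrees both of which contain (on some root-to-leaf path) a nonterminal from which, by strong connectedness, we can derive $w_1.\startnonterm.w_2$ and thus graft a fresh copy of $t_0$. This gives \emph{two} independent grafting sites living on incomparable branches. I would then define trees $t_k$ for $k \geq 0$ by: $t_{k+1}$ is obtained from $t_0$ by grafting a copy of $t_k$ at each of the two sites (reached via short strongly-connected detours of bounded length, which I'd fold into a uniform bound — here one should be a little careful and perhaps absorb the detour into the statement's $\normof{\avec_\prods}$ factor by choosing the detour-productions to also appear, or argue the detour adds only lower-order terms; the cleanest route is to pick the two grafting nonterminals so the detour is itself derivable and just bound everything by $\normof{\prodvec}$ per level). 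Then $t_k$ contains $f(k)$ copies of $\prodvec$ where $f(0)=1$, $f(k+1) = 1 + 2f(k)$, so $f(k) = 2^{k+1}-1 \geq 2^k$; choosing $k = \ceilof{\ld \aconst}$ overshoots $\aconst$, and to hit \emph{exactly} $\aconst$ copies I would instead graft copies of trees of \emph{varying} sizes following the binary expansion of $\aconst$ — or, more simply, observe that grafting $t_j$ instead of $t_0$ at the leaves allows realizing any count in $[2^k, 2^{k+1})$ by a mixed construction, and handle the exact value by one final careful recursion on $\aconst$. The height bound follows because each grafting level adds at most $\normof{\prodvec} \leq \normof{\avec_\prods}$ to the height and there are at most $\ceilof{1+\ld\aconst}$ levels.

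For the provenance-tracking claim, I would define $\prov$ by recording, for each $\terms$-leaf, which grafting level and which of the two branches produced it — equivalently, a length-$\leq\ceilof{1+\ld\aconst}$ binary address. Each ``copy of $\prodvec$'' corresponds to one instance of the $t_0$-gadget somewhere in the recursion, and by construction $\provinvof{i} = \effof{\terms}\cdot\prodvec$ (the copy's own terminals). For the order bound: fix a prefix $\asentform$ of $\yieldof{\atreedef}$; it ends inside some leaf, whose address is a binary string $b_1 b_2 \cdots b_m$ with $m \leq \ceilof{1+\ld\aconst}$. A copy of $\prodvec$ is incomplete in $\asentform$ iff the gadget it belongs to is an \emph{ancestor} (in the recursion tree) of the current leaf \emph{and} has unfinished terminals to the right — but along the root-to-leaf recursion path there are at most $m$ ancestor-gadgets, and at each level only the one on the current branch can be ``straddling'' the cut while its two grafted subtrees are themselves fully-before-or-fully-after. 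So at most $\ceilof{1+\ld\aconst}$ copies are incomplete at any cut, giving the claimed order.

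The main obstacle I expect is the bookkeeping to get \emph{exactly} $\aconst \cdot \prodvec$ (not merely $2^k \cdot \prodvec$) while keeping the height at $\ceilof{1+\ld\aconst}\cdot\normof{\avec_\prods}$ and the order at $\ceilof{1+\ld\aconst}$ simultaneously: the naive binary-doubling gives powers of two, and patching in extra copies can cost an extra additive level or an extra unit of order unless done carefully. I'd resolve this by a single recursion directly on $\aconst$ (split $\aconst = \lceil \aconst/2\rceil + \lfloor \aconst/2\rfloor$, graft a tree for each half at the two sites), for which the height recurrence $h(\aconst) \leq h(\lceil\aconst/2\rceil) + \normof{\prodvec}$ and order recurrence solve cleanly to the stated bounds; the secondary obstacle is making sure the two branch-detours needed to reach a ``graftable'' $\startnonterm$ in each subtree don't inflate the per-level cost, which I'd handle by choosing the detour productions once and absorbing them into the gadget $t_0$ up front.
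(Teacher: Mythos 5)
Your overall skeleton (a single-copy gadget obtained from Theorem~\ref{Theorem:EEK}, a binary recursion that grafts two subtrees at incomparable sites, height growing by one gadget-height per level, and a provenance/order argument via the at most logarithmically many ``straddling'' ancestor gadgets) is the same as the paper's. But there is a genuine gap at the point you yourself flag and then wave away: your grafting sites are internal nodes labelled by arbitrary non-terminals, and you attach a copy of the $\startnonterm$-rooted gadget there via a ``strongly-connected detour''. Any such detour consists of additional productions, and these are not part of $\aconst\cdot\prodvec$, so the construction no longer satisfies the exact requirement $\paramparikhof{\prods}{\atreedef}=\aconst\cdot\prodvec$ (your proposed fixes only discuss the height, not the production count, and ``choosing the detour-productions to also appear'' does not help, since every production already occurs exactly $\aconst$ times in $\aconst\cdot\prodvec$ and any extra occurrence breaks equality). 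The paper avoids this entirely by strengthening the induction: the statement is proved simultaneously for every start non-terminal $\anonterm$ (all non-terminals remain useful in each re-rooted grammar by strong connectedness), so the two grafting sites can be served by trees rooted at exactly the non-terminals sitting there --- one site is the single non-terminal leaf $\anonterm$ in the yield of the gadget, the other is the left-hand side $\anontermp$ of a terminal production $\anontermp\to\arun$ reached on a branch created by a non-linear production --- and no detour productions ever arise. This strengthening is the missing idea; without it your recursion cannot hit the exact Parikh image.

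Two smaller points. First, your ``simple'' recursion splits $\aconst=\lceil\aconst/2\rceil+\lfloor\aconst/2\rfloor$, but the gadget at the current level itself contributes one copy of $\prodvec$, so the correct split is $\aconst-1=\aconst_1+\aconst_2$ (as in the paper); otherwise you overshoot by one copy per level. Second, your order argument should be pinned down as in the paper: for a prefix of the yield, either it contains none of the terminals of the second grafted subtree, or it contains all terminals of the first one, and in each case the incomplete copies are the one gadget copy at the current level plus those of a single recursive call, giving the recurrence that yields order $\leq\ceilof{1+\ld\aconst}$; the ``binary address'' intuition is right but needs this case split to be a proof.
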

The proof is illustrated below.
We build a binary tree and, by strong connectedness, can strengthen the statement so that it holds for all start non-terminals. \\
\begin{center}
\includegraphics[scale=0.7,page=1]{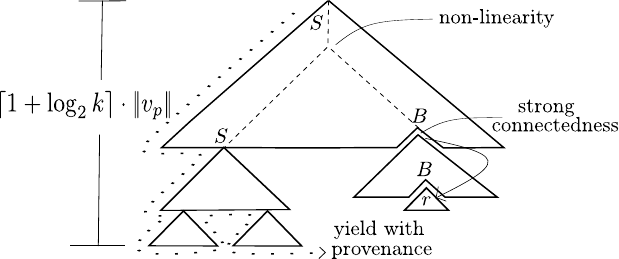}
\end{center}

\subsection{VAS}
A \emph{vector addition system} (VAS) of dimension $\adim$ is a finite set of so-called updates $\updates\subseteq \Z^{\adim}$. 
A marking of the VAS is a vector $\amark\in\N^\adim$. 
The firing relation $\fires{-}\subseteq \N^{\adim}\times\updates\times\N^{\adim}$ contains all triples $\amark \fires{\anupd}\amark'$ with $\amark'=\amark+\anupd$.
Note that $\amark'$ has to remain non-negative. 
We extend the firing relation to runs $\arun\in \updates^*$, so $\amark\fires{\arun}\amark'$ means firing the updates from $\arun$ one after the other transforms $\amark$ into~$\amark'$, and all markings encountered on the way are non-negative. 
A run $\arun$ is enabled in marking~$\amark$, denoted by~$\amark\fires{\arun}$, if there is a marking $\amark'$ so that $\amark\fires{\arun}\amark'$ holds.  
The hurdle of the run, $\hurdleof{\arun}$, is the unique least marking that enables the run. 
The effect of the run, $\vaseffof{\arun}$, is the sum of all updates in the run.
The link between the two is this. 
If on every prefix $\arun'$ of $\arun$ and on every counter $\acounter\in [1, \adim]$ we have $\at{\vaseffof{\arun'}}{\acounter}\geq 0$, then $\hurdleof{\arun}=0$.  
The reversal operation $\revof{\arun}$ reverses the run and changes the sign of all updates, $\revof{(\arun_1.\arun_2)}= \revof{\arun_2}.\revof{\arun_1}$ and $\revof{\anupd}= -\anupd$. 

The \emph{marking equations} approximate VAS reachability.
Let the matrix $\effof{\updates}\in\Z^{d\times \updates}$ have~$\anupd$ as the column for dimension~$\anupd$. 
With $\avar, \avar'\in\N^d$ and $\avarp\in\N^{\updates}$, we define
\begin{align*}
\meof{\updates}{\avar, \avarp, \avar'}:\qquad
\avar+\effof{\updates}\cdot \avarp - \avar' = 0\ . 
\end{align*}
\begin{lemma}
If $\amark\fires{\arun}\amark'$, then $\amark$, $\paramparikhof{\updates}{\arun}$, $\amark'$ solve $\meof{\updates}{\avar, \avarp, \avar'}$.
\end{lemma}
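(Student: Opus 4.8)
The statement to prove is: if $\amark\fires{\arun}\amark'$ in a VAS of dimension $\adim$, then the triple $\amark$, $\paramparikhof{\updates}{\arun}$, $\amark'$ solves the marking equation $\meof{\updates}{\avar, \avarp, \avar'}$, i.e.\ $\amark + \effof{\updates}\cdot\paramparikhof{\updates}{\arun} - \amark' = 0$. The plan is to induct on the length of the run $\arun$.

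For the base case $\arun = \varepsilon$, we have $\amark\fires{\varepsilon}\amark'$ which forces $\amark' = \amark$, while $\paramparikhof{\updates}{\varepsilon} = 0\in\N^{\updates}$, so $\effof{\updates}\cdot 0 = 0$ and the equation $\amark + 0 - \amark = 0$ holds trivially. For the inductive step, write $\arun = \arun''.\anupd$ with $\anupd\in\updates$, so that $\amark\fires{\arun''}\amark''\fires{\anupd}\amark'$ for some intermediate marking $\amark''\in\N^\adim$; such an $\amark''$ exists by the definition of the extended firing relation, which requires all intermediate markings to be non-negative. By the induction hypothesis, $\amark + \effof{\updates}\cdot\paramparikhof{\updates}{\arun''} - \amark'' = 0$. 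From the single-step firing $\amark''\fires{\anupd}\amark'$ we get $\amark' = \amark'' + \anupd$, and $\anupd = \effof{\updates}\cdot 1_{\anupd}$ since $\effof{\updates}$ has $\anupd$ as its $\anupd$-th column. Finally $\paramparikhof{\updates}{\arun} = \paramparikhof{\updates}{\arun''} + 1_{\anupd}$ because the Parikh image is additive over concatenation. Combining these, $\amark + \effof{\updates}\cdot\paramparikhof{\updates}{\arun} - \amark' = (\amark + \effof{\updates}\cdot\paramparikhof{\updates}{\arun''} - \amark'') + (\effof{\updates}\cdot 1_{\anupd} - \anupd) = 0 + 0 = 0$, which closes the induction.

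There is essentially no obstacle here: the lemma is a routine bookkeeping fact, the ``easy direction'' of the correspondence between runs and the marking equation (the converse, reconstructing a run from a solution, is the genuinely hard Theorem~\ref{Theorem:EEK}-style result, but that is not what is being claimed). The only point requiring a sliver of care is that the non-negativity constraints on $\amark$, $\amark''$, $\amark'$ play no role in the algebraic identity — they are only needed for the firing relation to be defined in the first place — so the marking equation is a genuine \emph{over}-approximation of reachability, as the surrounding text indicates.
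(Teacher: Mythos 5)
Your proof is correct, and it is the standard argument: the paper states this lemma without proof precisely because it is the routine induction on run length that you carry out (additivity of the Parikh image plus $\effof{\updates}\cdot 1_{\anupd}=\anupd$). Your closing remark that non-negativity plays no role in the algebraic identity, making the marking equation an over-approximation, is also exactly the intended reading.
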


We augment the natural numbers with a top element $\omega$, and write $\Nomega$ for $\N\discup\set{\omega}$. 
We extend addition to $\Nomega$ by $a+\omega=\omega+a=\omega$ for all $a\in\Nomega$.
A generalized marking is an element~$\amark\in\Nomega^{\adim}$. 
We use $\omegaof{\amark}\subseteq[1, \adim]$ for the dimensions where $\amark$ carries $\omega$. 
A well-known concept from VAS reachability~\cite{Lambert92} that we will also need is the specialization quasi order on $\Nomega$. 
It is defined by $\omega\sqsubseteq \omega$, $k\sqsubseteq k$, and $k\sqsubseteq\omega$ for all $k\in \N$, and lifted to generalized markings in a componentwise fashion.
For a marking $\amarking\in\Nomega^{d}$, $J\subseteq [1,d]$, and $a\in\N$ we write $\setter{\amarking}{J}{a}$ for the marking that agrees with $\amarking$ on all components outside $J$, and has value $a$ for the components in $J$.
The zero-version of a generalized marking $\zeroof{\amark}$ is defined by $\at{\zeroof{\amark}}{\anindex}=0$ if $\at{\amark}{\anindex}\in\N$, and $\at{\zeroof{\amark}}{\anindex}=\omega$ if $\at{\amark}{\anindex}=\omega$.  
The $\omega$-version $\settoomega{X}{\amark}$ depends on $X\subseteq [1, \adim]$ and is defined as $\settoomega{X}{\amark}=\setter{\amarking}{X}{\omega}$.
%
It is also useful to have a notion of compatibility between generalized markings.
For two vectors $\amarking, \amarkingp\in \Nomega^{d}$ and $i\in [1,d]$, we write $\amarking\compwithat{i}\amarkingp$ if $\amarking[i]=\omega$, $\amarkingp[i]=\omega$, or $\amarking[i]=\amarkingp[i]$.
This means that $\amarking\not\compwithat{i}\amarkingp$ holds if and only if $\amarking[i], \amarkingp[i]\in \N$ and $\amarking[i]\neq\amarkingp[i]$.
We write $\amarking\compwith\amarkingp$ if $\amarking\compwithat{i}\amarkingp$ for all $i\in[1,d]$.
For $\amarking, \amarkingp\in\Nomega^{d}$, we define $\amarking\specmeet\amarkingp\in\Nomega^{d}$ to be the largest vector that is a specialization of $\amarking$ and $\amarkingp$, $(\amarking\specmeet\amarkingp)\sqsubseteq\amarking, \amarkingp$. 
That is, for all $i\in\omegaof{\amarking}\cap\omegaof{\amarkingp}$, $(\amarking\specmeet\amarkingp)[i]=\omega$ must hold.
This is well-defined if and only if $\amarking\compwith\amarkingp$.
We extend the firing relation, enabledness, the notion of a hurdle, and also the marking equations to generalized markings. 
\section{Main Result} \label{Section:GVAS}
We study PVASS reachability through a grammar formalism. 
A $\adim$-dimensional \emph{grammar vector addition system} (GVAS) is a CFG $\agram = (\nonterms, \updates, \prods, \startnonterm)$ whose terminals are VAS updates, $\updates \subseteq \Z^{\adim}$.
The problem of interest is defined as follows.
\begin{quote}
\vspace{0.2cm}
\underline{\gvassreach}\\[0.1cm]
\given\ \ GVAS $\agram$ of dimension $\adim$ and $\amark_1, \amark_2\in\N^{\adim}$.\\
\question\ \ Is there $\arun\in\langof{\agram}$ so that $\amark_1\fires{\arun}\amark_2$? \\[-0.3cm]
\end{quote}
The following is our main result. 
\begin{theorem}\label{TheoremGVASReachDecidable}
\gvassreach\ is decidable (in Hyper-Ackermann time).
\end{theorem}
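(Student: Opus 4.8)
The plan is to reduce \gvassreach\ to the emptiness problem for NGVAS and then invoke \Cref{Theorem:EmptinessDecidable}. The reduction mirrors the standard translation of pushdown automata to context-free grammars, but carries the VAS updates along as terminals. First I would observe that the given GVAS $\agram$ with source and target $\amark_1, \amark_2$ already is, syntactically, almost an NGVAS of nesting depth zero: its terminals are updates in $\Z^{\adim}$. What is missing is (i) strong connectedness, (ii) the source/target markings as part of the object, (iii) a restriction linear set $\restrictions\subseteq\Z^\adim$, and (iv) the boundedness information. I would set $\acontextin := \amark_1$, $\acontextout := \amark_2$, take the trivial restriction $\restrictions = \Z^\adim$ (or, since the effect of a run from $\amark_1$ to $\amark_2$ is forced to lie in a coset, whatever the most permissive legal linear set is), declare the unbounded-counter sets $\adimset = \unconstrained = \emptyset$, and record the per-non-terminal start/finish values only for the finitely many bounded counters — here all $d$ of them — in a way consistent with the productions. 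Since no counter is declared unbounded, the per-non-terminal boundedness data has to be filled in; this is where one reuses the Karp--Miller--style analysis from the paper to either determine the finite windows or discover that no such annotation is consistent, in which case one splits into finitely many NGVAS along the possible value profiles.

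Second, I would handle strong connectedness and the nesting discipline exactly as described in the excerpt for arbitrary CFGs: form the graph on non-terminals induced by the call relation, compute its strongly connected components, and turn each SCC into one NGVAS with the SCCs below it appearing as childNGVAS (the terminals of the enclosing NGVAS). An exit from one SCC into a lower one becomes a childNGVAS terminal whose own source/target markings must be threaded through; to make the compatibility constraints of \Cref{Section:OutlineNGVAS} hold one again branches over the finitely many possible marking values at the interface points. This produces a finite set of NGVAS $\anngvas_1, \dots, \anngvas_k$ whose combined $\nrunsof{-}$ equals the set of runs $\arun\in\langof\agram$ with $\amark_1\fires{\arun}\amark_2$; in particular $\agram$ has such a run iff $\nrunsof{\anngvas_j}\neq\emptyset$ for some $j$.

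Third, I would apply \Cref{Theorem:EmptinessDecidable} to each $\anngvas_j$: $\nrunsof{\anngvas_j}\neq\emptyset$ is decidable in Hyper-Ackermann time, and therefore so is the disjunction over the finitely many $j$. Hence \gvassreach\ is decidable in Hyper-Ackermann time. The complexity bookkeeping is routine: the number of NGVAS produced is at most exponential in $d$ and the grammar size (one branch per interface marking profile, but the relevant bounds come from the construction, not from the target values $\amark_1, \amark_2$ which can be assumed small or folded into the control), and a single Hyper-Ackermann call dominates.

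The main obstacle I expect is not the translation to context-free shape — that is the textbook PDA-to-CFG step — but rather populating the NGVAS bookkeeping fields (the restriction, and especially the per-non-terminal start/finish values for bounded counters) in a way that is both consistent with the NGVAS definition's compatibility conditions and faithful, i.e.\ loses no reachable run and adds no spurious one. Doing this cleanly is precisely why one wants to branch over finitely many candidate annotations and argue that the resulting finite family of NGVAS exactly captures $\{\arun\in\langof\agram : \amark_1\fires\arun\amark_2\}$; verifying that faithfulness claim is the one place where a careful induction on the nesting/SCC structure is needed, and it is where all the ``minor compatibility properties'' alluded to in \Cref{Section:OutlineNGVAS} have to be checked rather than waved away.
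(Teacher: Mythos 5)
Your overall strategy --- reduce to NGVAS emptiness via the SCC-based nesting and invoke \Cref{Theorem:EmptinessDecidable} --- is exactly the paper's route (\Cref{Lemma:GVAStoNGVAS}), but the way you populate the boundedness information breaks the reduction. You set $\adimset=\unconstrained=\emptyset$ and propose to fill in concrete per-non-terminal start/finish values for \emph{all} $d$ counters, branching over ``the finitely many possible value profiles'' at the interfaces, computed by a Karp--Miller-style analysis. This does not work: the counter values occurring at a fixed non-terminal across the reaching runs form an unbounded (in general infinite) set, so there are not finitely many candidate annotations to branch over, and a Karp--Miller analysis would return precisely $\omega$-entries --- i.e.\ the information that a counter is unbounded --- which cannot be expressed once you have fixed $\adimset=\emptyset$, since the consistency conditions then force a single concrete value per counter per non-terminal that must be respected by every run. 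Moreover, any such semantic pre-analysis would not be an elementary reduction, so the ``routine'' complexity bookkeeping and the Hyper-Ackermann bound would no longer follow from a single call to \Cref{Theorem:EmptinessDecidable}.

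The fix is the opposite of what you propose, and it is what makes \Cref{Lemma:GVAStoNGVAS} elementary: track \emph{no} boundedness information at all. Take $\abdinfoleft=\abdinforight=[1,d]$, so $\infun$ and $\outfun$ are the constant functions $\omega^{d}$; consistency along productions and the requirement $\acontextin\sqsubseteq\inof{\startnonterm}$, $\acontextout\sqsubseteq\outof{\startnonterm}$ are then trivially satisfied with $\acontextin=\amark_1$, $\acontextout=\amark_2$, restriction $\Z^{\adim}$, and $\unconstrained=\emptyset$ (the reachability constraints on the counters come from the context, not from $\unconstrained$). With this choice the translation is a purely syntactic repackaging (wCNF, SCC decomposition with lower SCCs as children, removal of useless non-terminals), computable with elementary resources, and $\nrunsof{\anngvas}$ equals $\{(\amark_1,\arun,\amark_2)\mid \amark_1\fires{\arun}\amark_2,\ \arun\in\langof{\agram}\}$ essentially by definition. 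Discovering which counters are in fact bounded, and refining the annotations accordingly, is the job of the decision procedure $\perffun$ (cleaning and the refinement steps), not of the reduction; trying to do it up front is where your argument fails.
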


\section{Nested GVAS} \label{Section:NGVAS}
This is the data structure we rely on for our decision procedure. 
From now on, we fix the dimension to be $\adim$.
\subsection{Definition}\label{Section:NGVASDefinition}
A \emph{weak nested GVAS} (wNGVAS) is a triple $\anngvas=(\agram, \contextinformation, \boundednessinformation)$ with $\agram=(\nonterms, \trms, \prods, \startnonterm)$ a CFG in wCNF.
The choice of the set of terminal symbols depends on the \emph{nesting depth}, a natural number that is associated with the wNGVAS. 
A wNGVAS of nesting depth zero has as terminals counter updates in~$\Z^{\adim}$, and its only production is an exit-production.
A wNGVAS of nesting depth $i+1$ has as terminals wNGVAS of nesting depth at most~$i$. 
We call a wNGVAS that occurs as a terminal symbol of $\anngvas$ a child.
We call a wNGVAS that occurs as a terminal symbol in one of the wNGVAS defined within $\anngvas$ a subwNGVAS, so every child is a subwNGVAS but not the other way around. 
We expect that all subwNGVAS have mutually disjoint sets of terminals and non-terminals. 
We lift the terminology for the grammar $\agram$ to the wNGVAS $\anngvas$. 
For example, we say that $\anngvas$ is linear when we mean $\agram$ is linear.

The second component in a wNGVAS is called the \emph{context information} and takes the form $\contextinformation=(\unconstrained, \restrictions, \acontext)$.
The \emph{set of unconstrained counters} $\unconstrained\subseteq [1,d]$ consists of counters that are shielded from reachability constraints.
This means that a counter $i\in\unconstrained$ may not be required to reach some value $a\in\N$, neither at the current level nor in a subwNGVAS.
We formalize the propagation to subwNGVAS by requiring $\unconstrained\subseteq\anngvasp.\unconstrained$ for all $\anngvasp\in\trms$.
The component $\unconstrained$ shields its counters from reachability constraints by imposing conditions on other components of wNGVAS.
We will discuss these restrictions while explaining those components.

The \emph{restriction} $\restrictions$ is a linear set $\avec+\asetvec^*\subseteq \Z^d$ that will limit the effects of possible derivations and serve as an interface between the wNGVAS and its children. 
%
%
This means that the counters $i\in\unconstrained$ may be freely influenced by $\restrictions$.
We will also understand the restriction as the equation  $\updates \cdot \updvar-\asetvec\cdot \periodvar = \avec$ in $\updvar\in\N^{\updates}$ and $\periodvar\in\N^{\asetvec}$.
Here, $\updates$ is the set of all counter updates that appear in $\anngvas$ or in a subwNGVAS. 
We denote the restriction equation by~$\restrictions(\updvar, \periodvar)$.
We will also need a homogeneous variant, denoted by $\homrestrictions(\updvar, \periodvar)$ and defined as $\updates \cdot \updvar-\asetvec\cdot \periodvar = 0$, meaning we replace the base vector on the right-hand side by $0$. 
The component $\acontext=(\acontextin, \acontextout)\in\Nomega^{2\adim}$ consists of so-called \emph{input} and \emph{output markings}.
Here, $\unconstrained$ applies and requires $\unconstrained\subseteq\omegaof{\acontextin}\cap\omegaof{\acontextout}$.

The last component is called the \emph{boundedness information}. 
Its purpose is to track the values of counters which remain bounded in (an over-approximation of) the reaching runs. 
For readers familiar with KLMST for VASS, this corresponds to storing which counters are rigid together with corresponding values. 
It is however made more difficult especially in linear NGVAS because the letters derived 
on the left and right of the non-terminal may have different sets of rigid counters.
Before we move on to the definition, we give the intuition.
The goal is to track the counters concretely along any derivation $\anonterm\to^{*}\asentformp_{1}.\anontermp.\asentformp_{2}$ with $\asentformp_{1}, \asentformp_{2}\in\trms^{*}$.
The information is tracked forwards in $\asentformp_{1}$ resp.~backwards in $\asentformp_{2}$ starting from (a generalization of) $\acontextin$ resp $\acontextout$. Remember that generalization means that some counters which used to have concrete values now have value \(\omega\).

Formally, the boundedness information $\abdinfo=(\abdinfoleft, \abdinforight, \infun, \outfun)$ consists of sets $\abdinfoleft, \abdinforight\subseteq[1,d]$ so that $\unconstrained\subseteq\abdinfoleft\cap\abdinforight$, and functions
\begin{align*}
    \infun&:\nonterms\rightarrow \omega^{\abdinfoleft}\times \N^{[1, \adim]\setminus \abdinfoleft}\\
    \outfun&:\nonterms\rightarrow \omega^{\abdinforight}\times \N^{[1, \adim]\setminus \abdinforight}
\end{align*}
that assign to each non-terminal an input and an output marking.
In these markings, the counters from $\abdinfoleft$ have value $\omega$ on the input, and the counters from $\abdinforight$ have value $\omega$ on the output.
The boundedness information does not apply for nesting depth zero wNGVAS beyond this point.   
If the wNGVAS is branching, we expect $\abdinfoleft=\abdinforight$.
In this case, we just use $\abdinfomid$ for clarity. 
We extend $\infun$ and $\outfun$ to terminals by following their context information, $\inof{\anngvasp}=\anngvasp.\acontextin$, $\outof{\anngvasp}=\anngvasp.\acontextout$ for all $\anngvasp\in\trms^{*}$.
If we understand $\anngvasp.\acontextin$ and $\anngvasp.\acontextout$ as reachability information, $\inof{\anngvasp}$ and $\outof{\anngvasp}$ properly capture it. 
Note that $\abdinfoleft$ and $\abdinforight$ do not need to be respected in the extension.
However, the extended assignments should be consistent.
Let $\aprod=\anonterm\rightarrow\asentform_1.\asentform_2$ be a production.
Consistency requires $\inof{\anonterm}=\inof{\asentform_1}$, $\outof{\asentform_1}=\inof{\asentform_2}$, and $\outof{\asentform_2}=\outof{\anonterm}$ if $\aprod$ is a persisting production \emph{or} if the grammar is branching, and $\inof{\anonterm}\sqsupseteq\inof{\asentform_1}$, $\outof{\asentform_1}=\inof{\asentform_2}$, and $\outof{\asentform_2}\sqsubseteq\outof{\anonterm}$ in the remaining cases.
For technical reasons, we allow the exit production in a non-branching grammar to demand reachability for counters within $\abdinfoleft$ resp.~$\abdinforight$. 
In the case that $\aprod$ is a persisting rule or the grammar is branching, we also ensure that the untracked counters are shielded by requiring $\omegaof{\asentform_i.\acontextin}=\omegaof{\asentform_i.\acontextout}=\asentform_i.\unconstrained$ whenever $\asentform_i$ is a terminal for $i\in\set{1,2}$.
Note that consistency also implicitly ensures $\asentform_i.\unconstrained=\abdinfoleft$ if $\asentform_i$ is generated on the left by a non-terminal, and $\asentform_i.\unconstrained=\abdinforight$ if it is generated on the right.
Lastly, for the start non-terminal, consistency requires $\acontextin\sqsubseteq \inof{\startnonterm}$ and $\acontextout\sqsubseteq \outof{\startnonterm}$, the 
input and output markings we track take the values from the context information, or they are more abstract and use~$\omega$. 
A consequence is $\omegaof{\acontextin}, \omegaof{\acontextout}\subseteq \adimset$.

In branching grammars, exit productions are treated in the same way as persisting productions for the following reason.
Recall our goal of tracking counters along $\asentformp_{1},\asentformp_{2}\in\trms^{*}$ in $\anonterm\to^{*}\asentformp_{1}.\anontermp.\asentformp_{2}$.
If the grammar is branching, $\anonterm\to^{*}\asentformp_{1}.\anontermp.\asentformp_{2}$ can have the intermediary step $\anonterm\to^{*}\asentformp_{3}.\anontermpp.\asentformp_{4}.\anontermp.\asentformp_{2}\to\asentformp_{1}.\anontermp.\asentformp_{2}$ with $\asentformp_{3}, \asentformp_{4}\in\trms^{*}$.
In the last step, we use an exit production.
However, the result is also a part of the string $\asentformp_{1}$ where we want an unbroken chain of concretely tracked information.
Hence, we cannot allow an exit production to break the chain.
Such a situation is not possible in non-branching grammars.
There, exit productions are always the last possible derivation.

The wNGVAS inherits the notion of a language from the underlying grammar, $\langof{\anngvas}=\langof{\agram}$. 
We also associate with the wNGVAS a set of runs. 
The effect of these runs should satisfy the given restriction. 
Moreover, the run should be enabled in $\anngvas.\acontextin$ and lead to $\anngvas.\acontextout$. 
These requirements are made not only for $\anngvas$, but for all subNGVAS. 
The definition is by Noetherian induction:
\begin{align*}
    \runsof{\anngvas}=&\setcond{(\amarking, \arun, \amarkingp)\in\runsof{\asentform}}{\\ &\quad\asentform\in\langof{\agram}\ \wedge\ \updates\cdot\paramparikhof{\updates}{\arun}\in\restrictions\ \wedge \\
    &\quad\amarking\sqsubseteq\acontextin\ \wedge\ \amarkingp\sqsubseteq\acontextout}\ .
\end{align*}
Here $\runsof{\asentform}$ is defined by merging the runs of $\asentform[1] \ldots \asentform[\cardof{\asentform}]$ that agree on the intermediary markings.
That is, $\runsof{\varepsilon}=\setcond{(\amarking, \varepsilon, \amarking)}{\amarking\in\N^{d}}$, and 
\begin{align*}
    \runsof{\asentform_{0}.\asentform_{1}}=\setcond{(\amarking_{0}, \arun_{0}.\arun_{1}, \amarkingp_{1})}{(\amarking_{0}, \arun_{0}, \amarkingp_{0})\in\runsof{\asentform_{0}}\\ \qquad\text{ and }(\amarkingp_{0}, \arun_{1}, \amarkingp_{1})\in\runsof{\asentform_{1}}&}\ .
\end{align*}
%
%
%
If a terminal is an update, we use $\runsof{\anupd}=\set{(\amarking, \anupd, \amarking+\anupd) \mid \amarking, \amarking+\anupd \in \N^d}$. 
If a terminal is a childNGVAS, it has a lower nesting depth and so the set of runs is defined by the induction hypothesis.
Note that the definition yields a strong form of mononoticity on counters in $\unconstrained$.
We have $(\amarking+\amarkingpp, \arun, \amarkingp+\amarkingpp)\in\runsof{\anngvas}$ for $\amarkingpp\in\N^{d}$, if $(\amarking,\arun,\amarkingp)\in\runsof{\anngvas}$ \emph{and} $\amarkingpp[i]=0$ for all $i\not\in\unconstrained$.
This is easy to verify, since no subNGVAS is allowed to constrain the absolute value of a counter $i\in\unconstrained$.
To fix the notation, for a non-terminal $\anonterm$, we also define $\runsof{\anonterm}$ as the union of $\runsof{\asentform}$ over all $\asentform$ that can be derived from $\anonterm$. 
We also write $\updateseqof{\asentform}=\setcond{\arun\in\updates^{*}}{\exists \amarking, \amarkingp.\; (\amarking, \arun, \amarkingp)\in\runsof{\asentform}}$ for the sake of convenience if we are only interested in the update sequences of runs.

%
A wNGVAS is \emph{strong}, if it is strongly connected, all its non-terminals are useful, and in the case of a linear grammar, it has exactly one exit production.
In the rest of the development, whenever we use the term NGVAS, we mean a strong NGVAS. 
In a linear NGVAS with the (hence unique) exit production $\anonterm\to\anngvaspcenterleft.\anngvaspcenterright$, we refer to the children generated on the left resp.~on the right in this rule by $\anngvaspcenterleft$ and $\anngvaspcenterright$.
By standard arguments, we can transform a GVAS to an NGVAS while preserving the language.
We break down the GVAS into its strongly components and track no boundedness information.
\begin{lemma}\label{Lemma:GVAStoNGVAS}
Consider GVAS $\agram$ of dimension $\adim$, $\amarking_{1}, \amarking_{2}\in\N^{d}$. 
We can construct an NGVAS $\anngvas$ with elementary resources so that $\anngvas.\acontextin=\amarking_{1}$, $\anngvas.\acontextout=\amarking_{2}$, $\runsof{\anngvas}=\setcond{(\amarking_{1}, \arun, \amarking_{2})}{\amarking_{1}\fires{\arun}\amarking_{2}\wedge \arun\in\langof{\agram}}$.
\end{lemma}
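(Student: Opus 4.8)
The plan is to prove Lemma~\ref{Lemma:GVAStoNGVAS} by a structural decomposition of the GVAS $\agram$ along its strongly connected components, turning each SCC into one level of nesting. Given $\agram = (\nonterms, \updates, \prods, \startnonterm)$ and markings $\amarking_1, \amarking_2$, I would first build the \emph{condensation} of the call graph of $\agram$: the directed graph on $\nonterms$ with an edge $\anonterm\to\anontermp$ whenever some production $\anonterm\to\asentform_1.\anontermp.\asentform_2$ exists, and then contract each SCC to a single node. This condensation is a DAG, so I can process it bottom-up. For each non-terminal $\anonterm$ I introduce a fresh terminal symbol $\childsym{\anonterm}$ (a childNGVAS placeholder) representing ``the sub-GVAS derivable from $\anonterm$ that eventually leaves the SCC of $\anonterm$.'' Concretely, for the SCC $C$ containing $\startnonterm$, I form an NGVAS whose grammar is $\agram$ restricted to $C$, where every occurrence of a non-terminal $\anontermp\notin C$ on a right-hand side is replaced by the fresh terminal $\childsym{\anontermp}$, and where the context information records $\acontextin = \amarking_1$, $\acontextout = \amarking_2$; the childNGVAS $\childsym{\anontermp}$ is then defined recursively from the SCC of $\anontermp$ with its own (as yet unconstrained) context.

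The second step is to set the auxiliary data so that the NGVAS is \emph{strong} and the semantics matches. I would take $\unconstrained = \emptyset$, $\restrictions = \Z^\adim$ (equivalently, the linear set $0 + \{1_1,\ldots,1_\adim,-1_1,\ldots,-1_\adim\}^*$, which imposes no constraint on effects), and track \emph{no} boundedness information: set $\abdinfoleft = \abdinforight = \emptyset$ if we want, but more honestly, for a childNGVAS with no fixed context the cleanest choice is to put every counter into $\abdinfomid$ (all-$\omega$) so the boundedness functions are vacuous — I would check which of these choices satisfies the consistency constraints of Section~\ref{Section:NGVAS} and use it; the phrase ``elementary resources'' in the statement suggests the intended reading is that all auxiliary components are trivial/unconstrained. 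The wCNF requirement is handled by the standard transformation that breaks long right-hand sides into chains of fresh non-terminals and ensures every terminal appears on some right-hand side; this preserves the language and only polynomially enlarges the grammar. Usefulness of non-terminals is obtained by deleting useless ones (a preprocessing step that does not change $\langof{\agram}$), and strong connectedness holds by construction since we restricted to an SCC — with the caveat that a trivial SCC $\{\anonterm\}$ with no self-loop is still strongly connected under the paper's definition, and if the grammar fragment is linear we must also arrange exactly one exit production, which may require a small normalization (splitting or merging exit rules, introducing a fresh sink non-terminal).

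The third and main step is the correctness argument: $\runsof{\anngvas} = \{(\amarking_1,\arun,\amarking_2) \mid \amarking_1\fires{\arun}\amarking_2 \wedge \arun\in\langof{\agram}\}$. I would prove by Noetherian induction on nesting depth (equivalently, on the height in the condensation DAG) the stronger statement that for each childNGVAS $\childsym{\anonterm}$ and each pair of markings $\amark,\amark'$ used as its context, $(\amark,\arun,\amark')\in\runsof{\childsym{\anonterm}}$ iff $\amark\fires{\arun}\amark'$ and $\arun$ is derivable from $\anonterm$ in $\agram$. The inclusion ``$\langof{\anngvas}=\langof{\agram}$'' is the routine part: substituting a non-terminal by a childNGVAS terminal and then expanding that terminal's language gives back exactly the original derivations, because the child's language is itself the set of words derivable from the corresponding non-terminal (this is where the bottom-up induction closes). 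For the runs, the key observation is that the NGVAS semantics merges runs of consecutive symbols on agreeing intermediary markings and requires non-negativity throughout — which is precisely the VAS firing relation $\amarking_1\fires{\arun}\amarking_2$; since $\restrictions = \Z^\adim$ the effect constraint $\updates\cdot\paramparikhof{\updates}{\arun}\in\restrictions$ is vacuous, and since $\acontextin=\amarking_1$, $\acontextout=\amarking_2$ are concrete (no $\omega$), the specialization constraints $\amarking\sqsubseteq\acontextin$, $\amarkingp\sqsubseteq\acontextout$ force $\amarking=\amarking_1$, $\amarkingp=\amarking_2$.

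The step I expect to be most delicate is verifying that the chosen auxiliary data genuinely satisfies \emph{all} the NGVAS well-formedness and consistency conditions of Section~\ref{Section:NGVAS} — in particular the interplay between (i) the constraint $\unconstrained\subseteq\anngvasp.\unconstrained$ propagating down to children, (ii) the requirement $\unconstrained\subseteq\omegaof{\acontextin}\cap\omegaof{\acontextout}$, which with concrete $\amarking_1,\amarking_2$ forces $\unconstrained=\emptyset$ at the top level and hence everywhere, (iii) the boundedness-consistency equations relating $\inof{\anonterm}$, $\outof{\asentform_1}$, $\inof{\asentform_2}$ across productions, which interact awkwardly with fresh childNGVAS terminals whose contexts are only pinned down by their callers, and (iv) the single-exit-production demand in the linear case. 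Getting a uniform recipe that respects all of these simultaneously, while keeping everything ``elementary,'' is the real content; the rest is bookkeeping. I would likely handle (iii) by observing that when no boundedness is tracked (all counters $\omega$ on both sides of every non-terminal), every consistency equation degenerates to $\omega = \omega$ and is trivially met, and handle (iv) by a dedicated normal form for linear SCCs. The language-preservation and run-preservation claims then follow by the induction sketched above with no surprises.
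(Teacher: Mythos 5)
Your construction is essentially the paper's own proof: the lemma is established there by exactly the SCC decomposition you describe (break the GVAS into strongly connected components, make lower SCCs into childNGVAS, take a trivial restriction, and track no boundedness information), so the approach matches. One correction to your consistency discussion: $\unconstrained=\emptyset$ is forced only at the top level --- the propagation constraint is merely $\unconstrained\subseteq\anngvasp.\unconstrained$, and the well-formedness condition $\omegaof{\anngvasp.\acontextin}=\omegaof{\anngvasp.\acontextout}=\anngvasp.\unconstrained$ then forces each childNGVAS (whose context is all-$\omega$) to take \emph{all} counters as unconstrained rather than none, which is precisely what makes its run set equal the plain firing relation over words derivable from the corresponding non-terminal, as needed in your induction.
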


Finally, we discuss \emph{cycles}, and the notion of \emph{rigid counters}.
A \emph{cycle} $\acyc$ in an NGVAS $\anngvas$ is a derivation $\acyc=\anonterm\to^{*}\aword.\anonterm.\awordp$ with $\anonterm\in\nonterms$, and $\aword, \awordp\in\trms^{*}$.
Intuitively, a cycle captures a pump in the context-free grammar that is available when the constraints are removed.
Towards the definition of a cycle-effects, we define the \emph{effect} $\ceffof{\aword}\subseteq\Z^{d}$ of a string of NGVAS $\aword=\awordp.\awordpp\in\trms^{*}$ as follows.
We let $\ceffof{\varepsilon}=\set{0}$, $\ceffof{\aword}=\ceffof{\awordp}+\ceffof{\awordpp}$, and $\ceffof{\anngvasp}=\anngvasp.\restrictions$ for an NGVAS $\anngvasp$.
For a cycle $\acyc=\anonterm\to^{*}\aword.\anonterm.\awordp$ in $\anngvas$, we define the set of effects $\ceffof{\acyc}$ to be $\ceffof{\aword}\times\ceffof{\awordp}\subseteq\Z^{2d}$, and the set of left-effects to be $\lceffof{\acyc}=\ceffof{\aword}\subseteq\Z^{d}$. Observe in particular that the effect is now a set of vectors, since a childNGVAS \(\anngvasp \in \trms\) may have many possible derivations.
We say that a cycle $\acyc$ is $\anonterm$-\emph{centered}, if its derivation starts from $\anonterm$.

If all cycles of an NGVAS have $0$ effect on a particular counter, we call this counter \emph{fixed}.
We say that $i\in [d]$ is \emph{fixed on the left}, or $(i, \lefttag)$ is \emph{fixed} in $\anngvas$, if $\ceffof{\acyc}[i]=0$ for all cycles $\acyc$ in $\anngvas$.
We say that $i\in [d]$ is \emph{fixed on the right}, or $(i, \righttag)$ is \emph{fixed} in $\anngvas$, if $\ceffof{\acyc}[i+d]=0$ for all cycles $\acyc$ in $\anngvas$.
If the NGVAS $\anngvas$ is non-linear, we say that a counter $i$ is fixed if it is fixed on the left, or the right.
If a counter $i\in [d]$ is fixed on the left (on the right) and $\acontextin[i]\in\N$ ($\acontextout[i]\in\N$), then we call it \emph{rigid on the left (right)}.
For linear NGVAS, if a counter $i\in[d]$ is fixed on the left (on the right) and $\anngvaspcenterleft.\acontextin[i]\in\N$ ($\anngvaspcenterright.\acontextout[i]\in\N$), we call it \emph{rigid on the inside-left (inside-right)}.
If a counter is rigid on any side, we call it rigid.
Note that all counters tracked in the boundedness information are rigid, but not all rigid counters are boundedly tracked.
In our decomposition procedure we will establish the reverse inclusion.

\subsection{Characteristic Equations: Non-Linear Case}\label{Section:CharEqNL}
Consider the non-linear NGVAS $\anngvas=(\agram, \unconstrained, \restrictions, \acontext, \boundednessinformation)$. 
We approximate the set of runs that solve reachability with the \emph{characteristic system of equations} $\chareq{\anngvas}$ defined as
\begin{align*}
\eekof{\agram}{\prodvar} \wedge \evaleqof{\acontextin, \acontextout}{\prodvar, \updvar, \invar, \outvar} 
\wedge
\restrictions(\updvar, \periodvar)\tinyspace . 
\end{align*}
The equations $\evaleq{\acontextin, \acontextout}$ evaluate reachability as follows: 
\begin{align*}
\ptof{}{\prodvar, \termvar}\
&\wedge\ 
\computeupdatesof{\anngvas}{\termvar, \updvar}\
\wedge\ \meof{\updates}{\avar_{\myin}, \updvar, \avar_{\myout}}\ \\
&\wedge\ 
\invar \sqsubseteq \acontextin\ \wedge\
\outvar \sqsubseteq \acontextout\ .
\end{align*}
We have variables $\prodvar\in\N^{\prods}$ that determine how often each production should be taken. 
This choice has to satisfy Esparza-Euler-Kirchhoff. 
The productions lead to a number of terminal symbols given by~$\termvar\in\N^{\trms}$.  
We introduce variables $\updvar\in\N^{\updates}$ that store how often each update should be used, in $\anngvas$ and in the subNGVAS.  
The constraint $\computeupdatesof{\anngvas}{\termvar, \updvar}$ fills $\updvar$ with appropriate values, and we elaborate on it in a moment.
The choice of updates has to satisfy the restriction, and so $\periodvar$ states how often each period in that linear set will be used. 
We have variables $\invar$ and $\outvar$ that store the input and output markings for which the approximation of reachability holds.
They have to coincide with the given $\acontextin$ and $\acontextout$ whenever these are concrete, and can obtain arbitrary non-negative values where $\acontextin$ and $\acontextout$ hold $\omega$. 
We make sure the input and output variables are related by the marking equation.

The constraint $\computeupdatesof{\anngvas}{\termvar, \updvar}$ has to meet the following specification: the variables $\updvar$ store how often each counter update from $\updates$ is used in a run that can be derived from the number of terminal symbols $\termvar$.  
The challenge is to include the updates generated by subNGVAS.
To determine their number, we access the restriction of the immediate childNGVAS. 
Consider $\anngvasp\in\trms$ with restriction $\restrictions_{\anngvasp}=\baseeffectdef+\periodeffectdef^*\subseteq \Z^d$. 
We introduce variables $\avar_{\anngvasp, \updates}$ that store the total number of updates used in a set of runs derivable from instances of~$\anngvasp$. 
The number of instances, and so the number of runs, is given by $\termvar[\anngvasp]$.
To make sure each run can be derived, we use the restriction.  
For every run, we have a copy of the base vector. 
Moreover, we have variables $\periodeffectvardef\in\N^{\periodeffectdef}$ determining how often each period vector should be taken. 
The variables $\updvar$ are then filled by addition, and we make sure not to forget the updates $\restrictto{\termvar}{\updates}$ done by $\anngvas$. 
We define $\computeupdatesof{\anngvas}{\termvar, \updvar}$ as
\begin{align*}
\bigwedge_{\anngvasp\in\trms} \ 
\updates \cdot \updvardef - \baseeffectdef\cdot \termvar[\anngvasp] - \periodeffectdef\cdot \periodeffectvardef\ &=\ 0\\
\updvar- \restrictto{\termvar}{\updates} - \sum_{\anngvasp\in\trms} \updvardef\ &=\ 0\ .
\end{align*}

We also need a \emph{homogeneous variant} $\homchareq{\anngvas}$
 of the characteristic equations,  
\begin{align*}
\homeekof{\agram}{\prodvar}\ &\wedge \ 
\evaleqof{\zeroof{\acontextin}, \zeroof{\acontextout}}{\prodvar, \updvar,\tinyspace \invar, \outvar}\\
 &\wedge \ \homrestrictions(\updvar, \periodvar)\ .
\end{align*}
We check reachability between the zero versions of the input and output markings, and we use the homogeneous variants of Esparza-Euler-Kirchhoff and the restriction. 
The homogeneous characteristic equations are defined such that if $\asol$ solves $\chareq{\anngvas}$ and $\ahomsol$ solves $\homchareq{\anngvas}$, then also $\asol+\ahomsol$ solves $\chareq{\anngvas}$. 
To be explicit, we consider $\N$ solutions. 
Using well-quasi ordering arguments, one can then show that the variables which are unbounded in the solution space of $\chareq{\anngvas}$ are precisely the variables that receive a positive value in some homogeneous solution. 
This leads to the definition of the \emph{support} of the characteristic equations, denoted by $\suppof{\homchareq{\anngvas}}$:
\begin{align*}
\setcond{\avar\in\varsof{\homchareq{\anngvas}}}{\exists \ahomsol.\ \ahomsol\text{ solves }\homchareq{\anngvas}\wedge \ahomsol(\avar)>0}\ .
\end{align*}
As the solution space of $\homchareq{\anngvas}$ is closed under addition, there always is a \emph{full homogeneous solution} that gives a positive value to all variables in the support. 
\subsection{Characteristic Equations: Linear Case}\label{Section:CharEqLin}
Consider the linear NGVAS $\anngvas=(\agram, \unconstrained, \restrictions, \acontext, \boundednessinformation)$. 
In the linear case, we do not have the wide tree theorem available for pumping.
To overcome this problem, the first step is to let the characteristic equations require a stronger form of reachability. 
Remember the directions $\mydir\in\set{\myleft, \mycenterleft, \mycenterright, \myright}$. 
We require that the terminals produced in each direction solve reachability from $\acontextindir$ to $\acontextoutdir$. 
These markings are defined as follows (with $\anngvaspcenterleft.\acontextout=\anngvaspcenterright.\acontextin$ by consistency): 
\begin{alignat*}{5}
\acontextinleft\ &=\ \acontextin\qquad &\qquad \acontextoutleft\ &= \ 
\anngvaspcenterleft.\acontextin\\
\acontextincenterleft\ &=\ \anngvaspcenterleft.\acontextin\qquad &\qquad 
\acontextoutcenterleft\ &= \ \anngvaspcenterleft.\acontextout\\
\acontextincenterright\ &= \ \anngvaspcenterright.\acontextin
\qquad &\qquad \acontextoutcenterright\ &=\ \anngvaspcenterright.\acontextout\\
\acontextinright\ &= \ \anngvaspcenterright.\acontextout \qquad &\qquad 
\acontextoutright\ &=\ \acontextout\ .
\end{alignat*}
We have variables $\prodvarleft$ and $\prodvarright$ for the productions on the left and on the right, but we do not have a variable for the center production.
Such a variable would be bounded in the solution space, and therefore complicate a perfectness condition. 
To ease the notation, we define the following vectors: 
\begin{alignat*}{5}
\prodvecleft\ &=\ (\prodvarleft, 0, 0^{\prodsright})\qquad&\qquad
\prodveccenterleft\ &=\ \prodveccenterright\ = \ \centerprodvec\\
\prodvecright\ &=\ (0^{\prodsleft}, 0, \prodvarright) \qquad &\qquad \prodvec\ &=\ (\prodvarleft, 1, \prodvarright)\ .
\end{alignat*}
We define $\chareq{}$ as  
\begin{align*}
\bigwedge_{\mydir}&
\evaleqof{\acontextindir, \acontextoutdir}{\prodvecdir, \updvardir, \invardir, \outvardir}
\end{align*}
\vspace{-0.5cm}
\begin{alignat*}{7}
\wedge&\quad &\outvarleft - \invarcenterleft &= 0\quad &\wedge\quad &\eekof{}{\prodvec}\\
\wedge&\quad &\outvarcenterleft - \invarcenterright &= 0\quad &\wedge\quad &\restrictions(\updvar, \periodvar)\\
\wedge&\quad &\outvarcenterright - \invarright &= 0 \quad &\wedge \quad & \updvar - \sum_{\mydir} \updvardir = 0\ .
\end{alignat*}
Each instance of $\evaleq{}$ has its own copy $\termvardir, \periodeffectvardefdir, \updvardefdir$ of the variables $\termvar, \periodeffectvardef, \updvardef$. 
In $\computeupdatesof{\anngvas}{\termvardir, \updvardir}$, the conjunction only iterates over $\termsdir$, and so does the sum. 
With $\outvarleft - \invarcenterleft = 0$, the output marking we obtain for reachability on the left coincides with the input marking for the first center run, and so the runs for the two directions can be connected.
We check the restriction on the sum of the updates obtained in all four directions.

We again have a homogeneous variant of the characteristic equations. 
We remove the occurrence of the center production, 
$\prodvecleftzero=\prodvecleft, \prodvecrightzero=\prodvecright, \prodveccenterleftzero=\prodveccenterrightzero=0, \prodveczero=(\prodvarleft, 0, \prodvarright)$.
The definition is then as expected:
\begin{align*}
\bigwedge_{\mydir}&
\evaleqof{\zeroof{\acontextindir}, \zeroof{\acontextoutdir}}{\prodvecdirzero, \updvardir, \invardir, \outvardir}
\end{align*}
\vspace{-0.5cm}
\begin{alignat*}{7}
\wedge&\quad &\outvarleft - \invarcenterleft &= 0\quad &\wedge\quad &\homeekof{}{\prodveczero}\\
\wedge&\quad &\outvarcenterleft - \invarcenterright &= 0\quad &\wedge\quad &\homrestrictions(\updvar, \periodvar)\\
\wedge&\quad &\outvarcenterright - \invarright &= 0 \quad &\wedge \quad & \updvar - \sum_{\mydir} \updvardir = 0\ .
\end{alignat*}
Consider the homogeneous reachability constraint for $\mycenterleft$.
Since we have no productions, the variable $\termvarcenterleft$ will be zero. 
However, we will still collect updates, namely for the period vectors of $\anngvaspcenterleft$. 
The support is as before. 

\subsection{Ranks}\label{Section:Ranks}
As we discussed, a well founded rank underlies our decomposition procedure.
This rank is not only necessary for the termination proof, but also for a full understanding of the perfectness notion.
In the following we develop our rank.
This section is meant to be understood as an extended verision of the rank discussion in Section \Cref{Section:OutlineRanks}.
For the sake of completeness, we do not refrain from restating definitions of Section \Cref{Section:OutlineRanks}.
This section concerns itself with (strong) NGVAS, we extend these concepts to wNGVAS in \Cref{Section:Decomposition}.

We understand the rank $\rankof{\anngvas}$ of an NGVAS $\anngvas$ in two parts, 
$$\rankof{\anngvas}=(\recrankof{\anngvas}, \itrankof{\anngvas})\in\N^{d+3}\times\N^{2d+1}.$$ 
First is the \emph{main rank}, $\recrankof{\anngvas}$, which we also call the \emph{non-linear rank}.
It captures the recursive behaviour.
The second is the \emph{auxiliary rank}, $\itrankof{\anngvas}$, which we also call the \emph{linear rank}.
It captures the iterative / VASS-like behaviour of the NGVAS.
Both parts of the rank rely on quantifying the effects of cycles of an NGVAS in a tangible way. 
The corresponding local rank is an extension of a definition by Leroux \cite{LerouxS19} for VASS.

Below, we first define cycles in a grammar.
Then, we build up the concepts required to define the non-linear rank, and conclude the subsection by defining the linear rank.

\paragraph*{Cycles} 
Then, the vector spaces $\cyclespaceof{\anngvas}$, $\lcyclespaceof{\anngvas}$ spanned by the effects resp. left-effects of cycles are
%
\begin{align*}
    \cyclespaceof{\anngvas}=\ &\spanof{\setcond{\amarking\in\ceffof{\acyc}}{ \acyc\text{ a cycle in }\anngvas}}\subseteq\mathbb{R}^{2d}\\
    \lcyclespaceof{\anngvas}=\ &\spanof{\setcond{\amarking\in\lceffof{\acyc}}{ \acyc\text{ a cycle in }\anngvas}}\subseteq\mathbb{R}^{d}.
\end{align*}
Clearly, projecting $\cyclespaceof{\anngvas}$ to the first $d$ components yields $\lcyclespaceof{\anngvas}$.
As usual, the dimension of a vector space $\dimensionof{\mathbf{V}}$ is the minimal number of generators.

\paragraph*{Local-rank} Using the dimensionality of cycle spaces, we define the local-rank of an NGVAS \(\anngvas\).
The \emph{local-rank} of an NGVAS $\anngvas$ is
$$\lrankof{\anngvas}=1_{\dimensionof{\lcyclespaceof{\anngvas}}}\cdot|\anngvas.\nonterms|$$
if $\anngvas$ is non-linear, and 
$$\lrankof{\anngvas}=0$$
if $\anngvas$ is linear. 
In the non-linear case, the rank is based on the dimension of the vector space spanned by \emph{cycle effects} $\dimensionof{\cyclespaceof{\anngvas}}$, as adapted from \cite{LerouxS19}. 
If it is linear, the local rank is $0$.
Note that we only use the left cycle space, and not the full cycle space.
Intuitively, this is because non-linear NGVAS cannot tightly control which non-terminal appears to the left resp. right of another non-terminal.
Thus, the left and the right cycle spaces coincide.
%
%

\paragraph*{Branch-rank} The rank needs to evaluate the whole structure globally, and not only the current top-level NGVAS locally.
To this end, our rank definition considers the branches of the NGVAS.
An \emph{NGVAS branch} $\abranch=\anngvas_{0}\ldots\anngvas_{k}$ is a sequence of NGVASes linked by the relation $\anngvas_{i+1}\in\anngvas_{i}.\trms$ for all $i< k$.
We say that $\abranch=\anngvas_{0}.\abranch'$ is a branch of $\anngvas_{0}$.
Similarly to the ranks for KLM sequences \cite{LerouxS19}, the rank of a branch $\abranch=\anngvas_{0}\ldots\anngvas_{k}$ is then the sum of the local ranks along it,  
$$\brankof{\abranch}=\sum_{i\leq k}\lrankof{\anngvas_{i}}.$$

\paragraph*{System-rank}
In contrast to the KLM sequences treated in \cite{LerouxS19}, an NGVAS may contain multiple branches. 
To deal with these, we simply take the maximum, i.e.\ we define the system-rank $\srankof{\anngvas}$ as
$$\srankof{\anngvas}=\max_{\abranch\text{ a branch of }\anngvas}\brankof{\abranch}.$$
the maximum rank along all branches.
This definition is the backbone of our non-linear rank.
The intuitive idea behind using the maximum branch rank is the following.
Our procedure works bottom-up, it refines each subNGVAS and then the top level NGVAS.
If the top level NGVAS is not perfect, it can be refined, the ranks of all branches go down.
Thus, the maximum decreases. But the maximum has the large advantage that copying branches does not increase the rank, allowing us to replace one child with multiple similar children without increasing the rank.
We remark that the branch $\anngvas_{0}\ldots\anngvas_{k}$ that witnesses the rank will always be unextendable, meaning $\anngvas_{k}$ will be a nesting depth $0$ NGVAS.

\paragraph*{Local Index}
As their name implies, linear grammars can only produce linear derivation trees.
However, linear NGVAS may produce higher index derivation trees, thanks to their nested structure. 
Indeed, they can produce multiple copies of a childNGVAS, which may be of high index, or even non-linear.
Thus, they also behave non-linearly, albeit in a bounded, and locally linear way.
We capture this in our non-linear rank with the local index.

The local index $\localgrmindexof{\anngvas}$ of an NGVAS is the index of the (full) derivation trees it produces, if we cut them at non-linear NGVASes, or NGVASes of smaller system rank.
To keep the definition syntactical, we let the local index $\localgrmindexof{\anngvas}\in\N$ of linear $\anngvas$ be the maximum of 
\begin{enumerate}
    \item[(i)] $1$,
    \item[(ii)]  $\localgrmindexof{\anngvasp}$ for all $\anngvasp\in\anngvas.\trms$ with $\srankof{\anngvasp}=\srankof{\anngvas}$,
    \item[(iii)] $\localgrmindexof{\anngvasp}+1$ for all $\anngvasp\in\anngvas.\rectrms$ with $\srankof{\anngvasp}=\srankof{\anngvas}$, where \(\rectrms\) is the set of children \(\anngvasp \in \trms\) which can be derived by some persistent production, i.e.\ not only the exit production,
    \item[(iv)] $\localgrmindexof{\anngvaspcenterleft}+1$ if $\srankof{\anngvaspcenterleft}=\srankof{\anngvaspcenterright}=\srankof{\anngvas}$, and $\localgrmindexof{\anngvaspcenterleft}=\localgrmindexof{\anngvaspcenterright}$ for the exit rule $\anonterm\to\anngvaspcenterleft.\anngvaspcenterright$.
\end{enumerate}
If the NGVAS $\anngvas$ is non-linear, then $\localgrmindexof{\anngvas}=0$.
It is easy to check that this syntactical definition of index matches the usual semantical one.

\paragraph*{Non-linear rank and the main branch}
Now, we contextualize these concepts within the non-linear NGVAS rank.
We define the non-linear rank $\recrankof{\anngvas}$ of $\anngvas$ as
$$\recrankof{\anngvas}=(\cardof{\constrained}, \srankof{\anngvas}, \localgrmindexof{\anngvas})\in\N\times\N^{d+1}\times\N.$$
The most significant component of the rank is $\cardof{\constrained}\in\N$, the number of counters with reachability constraints.
The remaining counters can start from arbitrarily high values, and can be seen as not having positivity constraints.
Thus, this component captures the number of $\N$-counters.
The second most significant component of the rank is $\srankof{\anngvas}\in\N^{d+1}$, the system rank we discussed above.
Finally, the least significant component is $\localgrmindexof{\anngvas}$, the local index. 

First consequence of the recursive rank definition is that all childNGVAS of a non-linear NGVAS have a smaller recursive rank.
\begin{lemma}\label{Lemma:NonLinearChildRank}
    Let $\anngvas$ be a non-linear NGVAS, and let $\anngvasp\in\anngvas.\trms$.
    Then, $\recrankof{\anngvas}>\recrankof{\anngvasp}$. 
\end{lemma}
For linear NGVAS, this might not be the case.
The definition of the local rank, which feeds into the definition of the system rank, ignores linear NGVASes.
Furthermore, the local index definition allows the NGVAS to produce one NGVAS with the same system-rank and local index.

To understand the linear case better, we need the concept of a main branch.
We call $\abranch=\anngvas_{0}.\ldots.\anngvas_{k}$ a \emph{main branch} of $\anngvas_{0}$, if $\srankof{\anngvas_{i}}=\srankof{\anngvas_{0}}$ and $\localgrmindexof{\anngvas_{i}}=\localgrmindexof{\anngvas_{0}}$ for all $i\leq k$.
By \Cref{Lemma:NonLinearChildRank}, the only main branch of non-linear $\anngvas$ is $\abranch=\anngvas$, since all children have lower system-rank.
In the linear case, we observe that $\anngvasp_{i+1}$ can neither be recurring in $\anngvas$, nor be produced with a same index, same system-rank sibling.
If this were the case, then we would get $\localgrmindexof{\anngvas_{i}}\geq\localgrmindexof{\anngvas_{i-1}}+1$ by conditions (iii) and (iv).
By extending this analysis, we observe that there is a unique maximal main branch. 
\begin{lemma}\label{Lemma:LinearMainBranch}
    Let $\anngvas$ be an NGVAS.
    Then, there is a main branch $\abranch$ of $\anngvas$, such that all main branches $\abranch'$ of $\anngvas$ are a prefix of $\abranch$.
\end{lemma}
We call this branch \emph{the main branch} of $\anngvas$, and refer to it by $\mainbranchof{\anngvas}$ ($\mainbranch$ if $\anngvas$ is clear from the context).

\paragraph*{Linear rank}
The linear rank draws on the connection between linear grammars, and left-linear grammars (regular languages).
Using convolution, linear grammars can be turned into left-linear grammars \cite{AtigG11}.
This process does not preserve the language, but keeps enough information to decide VAS-related queries.
With the same intuition, our linear case decomposition procedure closely mirrors the KLM decomposition \cite{LerouxS19}, or the nested VASS decomposition \cite{GuttenbergCL25}.
It treats the main branch as a $2d$-counter KLM sequence that calls NGVAS with its transitions.
The linear rank captures this.
Towards its definition, we develop the notion of linear local-rank 
$$\linlrankof{\anngvas}=1_{\dimensionof{\cyclespaceof{\anngvas}}}\cdot\cardof{\anngvas.\nonterms}\in\N^{2d+1}.$$
for linear $\anngvas$, and $\linlrankof{\anngvas}=0$ otherwise.
We also define the linear branch-rank 
$$\linbrankof{\abranch}=\sum_{i\leq k}\linlrankof{\anngvas_{i}}.$$
for $\abranch=\anngvas_{0}\ldots\anngvas_{k}$ similarly to $\brank$.
The linear rank is then the rank of the main branch,
$$\itrankof{\anngvas}=\linbrankof{\mainbranchof{\anngvas}}.$$
The linear rank is similar to the system rank definition.
We find it useful to underline the three main differences.
First, in opposition to the non-linear rank, the linear rank ignores non-linear NGVAS, and only evaluates linear NGVAS.
Second, it evaluates only one branch, $\mainbranch$.
Finally, it uses the full cycle space, $\cyclespaceof{\anngvas}$, instead of the left cycle-space $\lcyclespaceof{\anngvas}$.
This has the following reasoning behind it.
Linear grammars can tightly control which terminals are produced on the left resp. right side of the central branch in the derivation tree.
Thus, the left- and right-effects remain heterogenous.
To soundly capture the cyclical behaviour of a linear NGVAS, our definition must consider the interplay between the sides.
This is only possible with $\cyclespaceof{\anngvas}\subseteq\mathbb{R}^{2d}$.
\subsection{Perfectness}\label{Section:Perfectness}
We define perfectness conditions on NGVAS that allow us to construct reaching runs. 
The corresponding iteration lemma is our first technical achievement. 
Let $\anngvas=(\agram, \restrictions, \acontext, \boundednessinformation)$ have the restriction $\restrictions = \baseeffect+\periodeffect^*$.
Let the sets of unbounded counters be $\adimsetleft, \adimsetright\subseteq [1, \adim]$ with  $\adimsetleft=\adimsetright$ in the non-linear case.  
We call $\anngvas$ \emph{clean}, if \perfectnesssol to \perfectnesssubclean hold, and \emph{perfect} if it is clean and additionally \perfectnesschildren to \perfectnesspumping hold: 
\begin{description}[leftmargin=0cm, labelsep=-0.05cm,itemsep=4pt]
\item[\perfectnesssol] For every $\baseeffectchoice\in\restrictions$ there is a solution $\asol$ to $\chareq{\anngvas}$ with $\asol[\updvar]=\baseeffectchoice$. 
For every $\periodeffectchoice\in\N^{\periodeffect}$ with $\periodeffectchoice\geq 1$ there is a full homogeneous solution $\ahomsol$ to $\homchareq{\anngvas}$ with $\ahomsol[\updvar]=\periodeffect\cdot \periodeffectchoice$. 
\item[\perfectnesscounters] All unbounded counters in a reachability constraint are in the support. 
\item[\perfectnessrecchildren] All subNGVAS $\anngvasp\in \anngvas$ not along $\mainbranchof{\anngvas}$ are perfect. 
\item[\perfectnessbase] The base effects of all recurring childNGVAS are enabled, for all $\anngvasp\in\trms$ with restriction $\baseeffectdef+\periodeffectdef^*$ there is $\baserundef\in\runsof{\anngvasp}$ with $\updates\cdot\paramparikhof{\updates}{\baserundef}=\baseeffectdef$.
\item[\perfectnesschildperiodsbd] For linear $\anngvas$, and center childNGVAS $\anngvasp\in\trms$, and $\mydir\in\set{\mycenterleft, \mycenterright}$ we have $\periodeffectvardef, \periodeffectvardefdir\subseteq\suppof{\homchareq{\anngvas}}$.
\item[\perfectnessrigid] All rigid counters are concretely tracked in the boundedness information.
\item[\perfectnesssubclean] All subNGVAS $\anngvasp\in\anngvas$ are clean. 
\item[\perfectnesschildren] All subNGVAS $\anngvasp\in\anngvas$ are perfect.
\item[\perfectnessprods]  All productions, as well as all period vectors in the restrictions of childNGVAS $\anngvasp\in\trms$ that can be produced in cycles are in support. That is, for all $\mydir$ we have $\prodvar, \periodeffectvardef, \periodeffectvardefdir\subseteq \suppof{\homchareq{\anngvas}}$.
\item[\perfectnesspumpingint]
This requirement only applies in the linear case. Intuitively, the right-linear nested VASS resulting from convolution has a down-pumping cycle. 

There exist a derivation $\startnonterm\rightarrow \asentformpumpleftint.\startnonterm.\asentformpumprightint$, runs $\downseqint \in\runsof{\asentformpumpleftint}$, \(\upseqint\in\runsof{\asentformpumprightint}\) and markings $\inmarkint, \outmarkint\in\N^{\adim}$ with $\inmarkint\sqsubseteq \acontextincenterleft, \outmarkint\sqsubseteq \acontextoutcenterright$ so that $\inmarkint\fires{\revof{\downseqint}}\amark_3$ and $\outmarkint\fires{\upseqint}\amark_4$. 
The runs have an effect $(\amark_3-\inmarkint)[\adimsetleft\setminus \omegaof{\acontextinint}]\geq 1$ and $(\amark_4-\outmarkint)[\adimsetright\setminus \omegaof{\acontextoutint}]\geq 1$. 

\item[\perfectnesspumping]
There are up-pumping and down-pumping runs. 

Formally, there exist a derivation $\startnonterm\rightarrow \asentformpumpleft.\startnonterm.\asentformpumpright$, runs $\upseq \in\runsof{\asentformpumpleft}$ and \(\downseq \in \runsof{\asentformpumpright}\) and markings  $\inmark, \outmark\in\N^{\adim}$ with $\inmark\sqsubseteq \acontextin$ and $\outmark\sqsubseteq \acontextout$ so that $\inmark\fires{\upseq}\amark_1$, $\outmark\fires{\revof{\downseq}}\amark_2$, $\at{(\amark_1-\inmark)}{\adimsetleft\setminus \omegaof{\acontextin}}\geq 1$ and $\at{(\amark_2-\outmark)}{\adimsetright\setminus \omegaof{\acontextout}}\geq 1$. 
I.e.\ the runs have a strictly positive effect on counters that become $\omega$ in $\adimsetleft$ but are not $\omega$ in the input marking respectively a strictly negative effect on the counters that are $\omega$ in $\adimsetright$ but not  $\omega$ in the output marking.
\end{description}

Observe the subtle difference between \perfectnesspumpingint and \perfectnesspumpingnospace: In \perfectnesspumpingint the left pumping sequence is fired in reverse, and in \perfectnesspumping the right pumping sequence is fired in reverse, and the starting markings are different. The idea is the following: For derivations $\startnonterm\rightarrow \asentformpumpleftint.\startnonterm.\asentformpumprightint$ and the corresponding runs \(r_{\text{left}}, r_{\text{right}}\), there are \emph{four} crucial markings: The source/target markings of \(r_{\text{left}}\), and the source/target markings of \(r_{\text{right}}\). In \perfectnesspumpingint we force \(r_{\text{left}}\) to have a negative effect and \(r_{\text{right}}\) to have a positive effect, and in \perfectnesspumping it is the other way around.
\section{Iteration Lemma}\label{Section:IterationLemma}

From now on, $\anngvas$ is for NGVAS and $\anngvasp$ is for children.

\TheoremIterationLemmaMain*
We proceed by Noetherian induction on the nesting depth. The base case of an update is obvious. For the induction step, we distinguish between the linear and the non-linear case.

Consider $\baseeffectchoice\in\restrictions$ and $\periodeffectchoice\in\N^{\periodeffect}$ with $\periodeffectchoice\geq 1$. We start as follows: Perfectness \perfectnesssol gives us a solution~$\asol$ to $\chareq{\anngvas}$ with $\updates \cdot \asol[\updvar]=\baseeffectchoice$ and a full homogeneous solution~$\ahomsol$ to $\homchareq{\anngvas}$ with $\updates \cdot \ahomsol[\updvar]=\periodeffect\cdot \periodeffectchoice$. 
We consider a non-linear \(\anngvas\) first.
\subsection{Non-Linear Case}
\subsubsection{Reaching Derivation}
We define a constant \(\maxconst\) and a new solution $\asol' = \asol+ \maxconst\cdot \ahomsol$ to $\chareq{\anngvas}$  that is large enough to embed non-negative runs for all instances of all childNGVAS.
It is a solution because $\ahomsol$ is homogeneous. 
The situation is the following. 
The derivation induced by $\asol'$ will introduce several instances of the terminal symbols $\anngvasp\in\trms$. 
Each instance $\anngvasp$ is a childNGVAS that now has to provide its own derivation. 
There are two requirements on the derivations of $\anngvasp$: 
\begin{itemize}
\item[(1)] The run given by the derivation has to be enabled (solving Problem 2 from the overview).
\item[(2)] The derivations of all $\anngvasp$ instances together have to give the number of updates $\at{\asol'}{\updvardef}$, in order to guarantee the desired effect.   
\end{itemize}
To achieve (1), our plan is to use the perfectness property~\perfectnessbase for all instances of $\anngvasp$ except one. 
Property~\perfectnessbase gives us a derivable run that is guaranteed to be enabled in the input marking and correspond to the base vector $\baseeffectdef$ of the restriction $\restrictions_{\anngvasp}=\baseeffectdef+\periodeffectdef^*$. 
This, however, does not yet guarantee~(2). 
The solution $\asol'$ may also ask for repetitions of the period vectors in~$\periodeffectdef$. 
For the one instance of $\anngvasp$ that we left out, we obtain the non-negative run by invoking the induction hypothesis with appropriate $\baseeffectchoicedef$ and~$\periodeffectchoicedef$.

We define $\maxconst$.
Let $\anngvasp\in\trms$ with $\restrictionsdef=\baseeffectdef+\periodeffectdef^*$.  
To invoke the induction hypothesis, we let $\baseeffectchoicedef$ include the base vector of the restriction $\baseeffectdef$ and repetitions of the period vectors as required by $\asol$. 
For $\periodeffectchoicedef$, we follow the full homogeneous solution:
\begin{align*}
\baseeffectchoicedef\ &=\ \baseeffectdef + \periodeffectdef\cdot \asol[\periodeffectvardef]\\
\periodeffectchoicedef\ &=\ \ahomsol[\periodeffectvardef]\ .
\end{align*}
Since $\anngvasp$ is perfect by~\perfectnesschildrennospace, $\baseeffectchoicedef\in\restrictionsdef$, and $\periodeffectchoicedef\geq 1$ by~\perfectnesschildperiodsbdnospace, we can invoke the induction hypothesis. 
It yields $\initconstdef$ so that for every $\aconst\geq \initconstdef$ we have a run $\iterrundefof{\aconst}\in \runsof{\anngvasp}$ with effect $\vaseffof{\iterrundefof{\aconst}}=\baseeffectchoice_{\anngvasp} +  \aconst \cdot \periodeffectchoicedef$.

Recall that $\asol'=\asol+\maxconst\cdot \ahomsol$, and that for \emph{every} childNGVAS \(\anngvasp\) we hence have to insert \(\maxconst\) many periods. Hence $\maxconst$ has to be larger than all \(\initconstdef\), i.e. $\maxconst:=\max_{\anngvasp} \initconstdef$. 

We show that $\asol'$ induces a derivation.
Since~$\asol'$ solves Esparza-Euler-Kirchhoff, contains a copy of $\ahomsol$, and $\at{\ahomsol}{\prodvar}\geq 1$ by~\perfectnessprodsnospace, Theorem~\ref{Theorem:EEK} yields $\startnonterm\xrightarrow{\aprodseqreachanngvas}\asentformreach$. 
By Lemma~\ref{Lemma:EEKConverse}, $\asentformreach$ is a sequence of terminal symbols in $\anngvas$. 
Thanks to $\ptof{}{\prodvar, \termvar}$, even $\paramparikhof{\trms}{\asentformreach}=\at{\asol'}{\termvar}$ holds.  
Since we use a Noetherian induction, we are not sure whether the terminals are updates $\anupd\in\Z^{\adim}$ or childNGVAS $\anngvasp$, but admit both. 

We still have to derive runs for the childNGVAS~in~$\asentformreach$. 
Together, we then have 
\begin{align*}
\startnonterm\xrightarrow{\aprodseqreachanngvas}\asentformreach
\text{ and }\reachrun\in\runsof{\asentformreach}.
\end{align*}
Recall that $\runsof{\asentformreach}$ connects runs of the terminals in $\asentformreach$, $\reachrun=\arun_1\ldots\arun_{\cardof{\asentformreach}}$.  
If~$\at{\asentform}{\mathit{i}}$ is an update $\anupd$, then $\arun_i=\anupd$. 
If~$\at{\asentform}{\mathit{i}}$ is the first instance of a child $\anngvasp$, then $\arun_i=\arun_{\anngvasp}^{(\maxconst)}$ as defined above.  
If $\at{\asentform}{\mathit{i}}$ is another instance of $\anngvasp$, then we use \perfectnessbase and get $\arun_i=\baserundef$ with effect $\baseeffectdef$. 

Thanks to the marking equation $\meof{\anngvas}{\invar, \updvar, \outvar}$, the number of updates given by $\at{\asol'}{\avar_{\updates}}$ transforms 
$\at{\asol'}{\invar}\sqsubseteq \acontextin$ into $\at{\asol'}{\outvar}\sqsubseteq\acontextout$. 
In the appendix, we show that $\reachrun$ has precisely the desired effect:
\begin{align}
\vaseffof{\reachrun}\ = \ \updates \cdot \asol'[\avar_{\updates}].\tag{\updatesreach}\label{Equation:UpdatesReach} 
\end{align}

In particular, \(\reachrun\) respects the restriction \(\restrictions\), i.e.\ respects contexts. However, $\reachrun$ is not guaranteed to stay non-negative. 
Neither do the updates in $\asentformreach$ guarantee non-negativity, nor do the runs of the childNGVAS guarantee non-negativity \emph{on the counters from} $\adimset$, since they are \emph{considered} \(\omega\) inside any childNGVAS \(\anngvasp\). 
We now address this problem (Problem 1 in overview). \\

\subsubsection{Pumping Derivation and Embedding}
The end idea is simple, surround $\reachrun$ by a number of up-pumping and down-pumping runs to increase the counters from $\adimset$.  
So we wish to repeatedly apply the pumping derivation $\startnonterm\xrightarrow{\aprodseqpump}\upseq.\startnonterm.\downseq$ that exists by~\perfectnesspumpingnospace.  
Unfortunately, we cannot even insert a single copy of $\aprodseqpump$ without running the risk of no longer satisfying $\chareq{\anngvas}$ (Problem 3 in overview).
The way out is to embed the pumping derivation into a homogeneous solution, because adding a homogeneous solution to $\asol$ remains a solution.
Embedding the pumping derivation into a homogeneous solution means we provide another derivation 
$\startnonterm\xrightarrow{\aprodseqdiff}\diffrunleft.\startnonterm.\diffrunright$ that should be understood as the difference between the homogeneous solution and the pumping derivation. 
Formally, the two derivations together,  
\begin{align*}
\startnonterm\xrightarrow{\aprodseqpump}\upseq.\startnonterm.\downseq\xrightarrow{\aprodseqdiff}\upseq.\diffrunleft.\startnonterm.\diffrunright.\downseq \ ,
\end{align*}
will have the effect prescribed by the homogeneous solution. 
In this section, we make the notion of embedding precise. 

We first inspect the pumping runs. 
There is a derivation $\startnonterm \xrightarrow{\aprodseqpumpanngvas} \asentformpumpleft.\startnonterm.\asentformpumpright$ with $\upseq.\downseq\in\runsof{\asentformpumpleft.\asentformpumpright}$. 
Moreover, by Lemma~\ref{Lemma:EEKConverse}, $\paramparikhof{\prods}{\aprodseqpumpanngvas}$ solves $\homeek{\anngvas}$. 

The homogeneous solution $\ahomsol$ we are given may not be large enough to embed $\aprodseqpump$.  
We scale it to $\sumconst\cdot\ahomsol$ with a factor $\sumconst=\embedconst+\enableconst$ we will now define.
The constant $\embedconst$ should be understood as the least natural number large enough so that $\embedconst\cdot \ahomsol$ embeds $\aprodseqpump$. 
Embedding $\aprodseqpump$ is made formal with two requirements we give next. 
These requirements are monotonic in that if they hold for $\embedconst\cdot \ahomsol$, then they will hold for $\sumconst\cdot \ahomsol$. 
The first requirement is that we want to be able to subtract $\paramparikhof{\prods}{\aprodseqpumpanngvas}$ from $\at{(\embedconst\cdot \ahomsol)}{\prodvar}$.
Even more, in the difference we want to retain a copy of each production to be able to invoke Esparza-Euler-Kirchhoff.  
This means $\embedconst$ has to be large enough so that 
\begin{align}
\at{(\embedconst\cdot \ahomsol)}{\prodvar} - \paramparikhof{\prods}{\aprodseqpumpanngvas}\ \geq\ 1\ .\tag{\embeddingprods}\label{Equation:EmbeddingProductions}
\end{align} 
This can be achieved as the productions belong to the support, due to \perfectnessprodsnospace. The next requirement is that $\embedconst\cdot \ahomsol$ has to cover the updates in $\upseq.\downseq$. 
For the updates from $\asentformpumpleft.\asentformpumpright$, we use the inequality 
\begin{align}
\at{(\embedconst\cdot \ahomsol)}{\updvar} - \paramparikhof{\updates}{\asentformpumpleft.\asentformpumpright} \geq 0. \tag{\embeddingupdates}\label{Equation:EmbeddingUpdates}
\end{align} 
We also have to cover the updates produced by the instances of the childNGVAS in $\asentformpumpleft.\asentformpumpright$.  
Consider~$\anngvasp$ with $\restrictionsdef=\baseeffectdef+\periodeffectdef^*$.  
By $\computeupdatesof{\anngvas}{\termvar, \updvar}$, we are sure $\at{(\embedconst\cdot \ahomsol)}{\updvardef}$ contains precisely one copy of the base vector $\baseeffectdef$ for every instance of~$\anngvasp$ in $\asentformpumpleft.\asentformpumpright$. 
However, $\aprodseqpump$ may also produce copies of the period vectors. 
Let $\periodeffectpumpdef\in\N^{\periodeffectdef}$ count the period vectors in all runs that belong to an instance of~$\anngvasp$ in $\asentformpumpleft.\asentformpumpright$. 
For every childNGVAS $\anngvasp\in\trms$, we want 
\begin{align}
\at{(\embedconst\cdot \ahomsol)}{\periodeffectvardef} - \periodeffectpumpdef\ \geq\ 0\ .\tag{\embeddingperiods} \label{Equation:EmbeddingEffect}
\end{align} 
This can be achieved as the period vectors of all childNGVAS belong to the support, \perfectnesschildperiodsbdnospace.\\

\subsubsection{Difference Derivation}
The goal is to turn the difference $(\sumconst\cdot \ahomsol)[\prodvar] - \paramparikhof{\prods}{\aprodseqpumpanngvas}$ into a run. 
To this end, we define $\aprodseqdiff = \aprodseqdiffanngvas.\aprodseqdiffterms$ as a production sequence in~$\anngvas$ followed by productions in the descendants. 
On the way, we define the missing $\enableconst$. 

To obtain $\aprodseqdiffanngvas$, we use Theorem~\ref{Theorem:EEK}. 
For the applicability, note that $(\sumconst\cdot \ahomsol)[\prodvar]$ solves $\homeek{\anngvas}$ because $\ahomsol[\prodvar]$ does. 
We already argued that also $\paramparikhof{\prods}{\aprodseqpumpanngvas}$ solves $\homeek{\anngvas}$. 
Hence, the difference $(\sumconst\cdot \ahomsol)[\prodvar] - \paramparikhof{\prods}{\aprodseqpumpanngvas}$ solves $\homeek{\anngvas}$. 
By Requirement \eqref{Equation:EmbeddingProductions}, $(\sumconst\cdot \ahomsol)[\prodvar] - \paramparikhof{\prods}{\aprodseqpumpanngvas}\geq 1$. 
Theorem~\ref{Theorem:EEK} yields a derivation $\startnonterm\xrightarrow{\aprodseqdiffanngvas} \asentformdiffleft.\startnonterm.\asentformdiffright$ with $\paramparikhof{\prods}{\aprodseqdiffanngvas} = (\sumconst\cdot \ahomsol)[\prodvar] - \paramparikhof{\prods}{\aprodseqpumpanngvas}$. 

We construct $\diffrunleft.\diffrunright\in\runsof{\asentformdiffleft.\asentformdiffright}$ using a sequence of productions $\aprodseqdiffterms$. 
The idea is similar to before. 
For all instances of childNGVAS~$\anngvasp$ except the first in~$\asentformdiffleft.\asentformdiffright$, we use the run $\baserundef$ from \perfectnessbasenospace, meaning we embed no periods.
For the first instance of a childNGVAS~$\anngvasp$, we use a run $\arun_{\anngvasp}'^{(\sumconst)}$ which will be given by the induction hypothesis. 
This run will compensate the $(\embedconst\cdot \ahomsol)[\periodeffectvardef] - \periodeffectpumpdef\geq 0$ excess in period vectors from the pumping derivation.


To construct $\iterrunpenable$ that compensates the excess in period vectors, we invoke the induction hypothesis with 
\begin{align*}
\baseeffectchoicedef'\ &=\ \baseeffectdef + \periodeffectdef\cdot [ (\embedconst\cdot \ahomsol)[\periodeffectvardef] - \periodeffectpumpdef]\\
\periodeffectchoicedef'\ &=\ \ahomsol[\periodeffectvardef]\ .
\end{align*}
We have $\baseeffectchoicedef'\in\restrictionsdef$ by \eqref{Equation:EmbeddingEffect} for $\embedconst$. 
As moreover $\anngvasp$ is perfect by~\perfectnesschildren and $\periodeffectchoicedef'\geq 1$ by \perfectnesschildperiodsbdnospace, the hypothesis applies and yields $\initconstdef'\geq 1$ so that for every $\aconst\geq \initconstdef'$ we have a run $\iterrunpdefof{\aconst}\in \runsof{\anngvasp}$ with updates $\paramparikhof{\updates}{\iterrunpdefof{\aconst}}=\baseeffectchoice_{\anngvasp}' +  \periodeffectdef\cdot \aconst \cdot \periodeffectchoicedef'$.
We define $\enableconst=\max_{\anngvasp} \initconstdef'$.

To sum up, $\startnonterm\rightarrow^*\upseq.\diffrunleft.\startnonterm.\diffrunright.\downseq$ using $\aprodseqpump.\aprodseqdiff$ with $\aprodseqpump = \aprodseqpumpanngvas.\aprodseqpumpterms$ and $\aprodseqdiff=\aprodseqdiffanngvas.\aprodseqdiffterms$. 
Similar to \eqref{Equation:UpdatesReach}, the effect is the one expected by $\sumconst\cdot \ahomsol$: 
\begin{align}
\vaseffof{\upseq.\diffrunleft.\diffrunright.\downseq}\tinyspace =\tinyspace
\updates \cdot \sumconst\cdot \ahomsol[\updvar]\tinyspace .\tag{\updatespumpdiff}\label{Equation:UpdatesPumpDiff}
\end{align} 

At this point there is a minor step we did not mention in the overview. So far, we can only create runs for multiples of $\sumconst\cdot \ahomsol$, instead of any $\aconst \cdot \ahomsol$ with $\aconst \geq \initconst$. 
The repair is simple, once found. 
We repeat the above paragraphs with $\sumconst'=\sumconst+1$ (remember all requirements were monotone) to obtain production sequences $\aprodseqdiff'=\aprodseqdiffanngvas'.\aprodseqdiffterms'$ and runs $\diffrunleft'.\diffrunright'$ with $\startnonterm\rightarrow^*\diffrunleft'.\startnonterm.\diffrunright'$ and $\vaseffof{\upseq.\diffrunleft'.\diffrunright'.\downseq}=\updates \cdot (\sumconst'\cdot \ahomsol)[\updvar]$, meaning we now embed $\aprodseqpump$ into $\sumconst'$ many homogeneous solutions. 
For every $\aconst \geq \sumconst^2+\maxconst$, we can write $\aconst-\maxconst = j_1 \cdot \sumconst + j_2\cdot \sumconst'$ and then $j_1$ many times embed the pumping sequence using $\aprodseqdiff$ and $j_2$ many times embed the pumping sequence using $\aprodseqdiff'$, in total using $\aconst$ many homogeneous solutions. 


\subsubsection{Pumping}
We now make sure the reaching run is enabled by surrounding the reaching derivation by repetitions of the pumping derivation and the difference derivation. Observe first that counters $\acounter \not \in \adimset$,  counters which are concretely stored in the non-terminals, cannot create problems because of the consistency conditions on the $\infun, \outfun$ functions. 
It remains to deal with counters $\acounter \in \adimset$, which can be concrete both in input and output (Case 1), concrete only in input (Case 2), concrete only in output (Case 3), or concrete neither in input nor output (Case 4). We only deal with Case 1 here. 
\paragraph*{Case 1}
The counter $\acounter \in \adimset$ is concrete in input and output.
Here is what we know about $\acounter$.
The runs $\upseq$ and~$\downseq$ have a strictly positive resp.~a strictly negative effect on \(\acounter\), by \perfectnesspumpingnospace.
Together, the runs $\upseq.\downseq$ and $\diffrunleft.\diffrunright$ from the pumping derivation resp. the difference derivation have effect zero on \(\acounter\).
This is by the use of $\zeroof{\acontextin}$ and $\zeroof{\acontextout}$ in the homogeneous variant of the characteristic equations. 
As a consequence, $\upseq.\diffrunleft.\diffrunright$ has a strictly positive effect on~$\acounter$.
Repeating the pumping and the difference derivation in a naive way then yields
\begin{align*}
\startnonterm\xrightarrow{\aprodseqpump^{j_1+j_2}}&\ \upseq^{j_1+j_2}.\startnonterm.\downseq^{j_1+j_2}\\
\xrightarrow{\aprodseqdiff^{j_1}}&\ \upseq^{j_1+j_2}.\diffrunleft^{j_1}.\startnonterm.\diffrunright^{j_1}.\downseq^{j_1+j_2} \\
\xrightarrow{\aprodseqdiff^{`j_2}}&\ \upseq^{j_1+j_2}.\diffrunleft^{j_1}.\diffrunleft'^{j_2}.\startnonterm.\diffrunright'^{j_2}\diffrunright^{j_1}.\downseq^{j_1+j_2} \\
\xrightarrow{\aprodseqreach}&\ \upseq^{j_1+j_2}.\diffrunleft^{j_1}.\diffrunleft'^{j_2}.\reachrun.\diffrunright'^{j_2}\diffrunright^{j_1}.\downseq^{j_1+j_2}\ .
\end{align*}

Unfortunately, the resulting run does not yet stay non-negative on \(\acounter\).
The problem is (as mentioned in the overview as Problem 4) that $\diffrunleft$ and $\diffrunright$ are not placed next to each other, but we first generate the copies of $\diffrunleft$ and later the copies of $\diffrunright$. 
There is no guarantee that $\upseq.\diffrunleft$ has a positive effect on the unbounded counters.

To overcome this problem, we use the wide tree theorem on the difference derivation. 
Then the negative effect contributed by the copies of the difference run will only grow logarithmically in the number of tokens produced by the pumping run.
To see that the wide tree theorem applies, note that~$\anngvas$ is non-linear, strongly connected, and only has useful non-terminals. 
We also observe that $\prodvec = \paramparikhof{\prods}{\aprodseqdiffanngvas}$ (resp. $\prodvec' = \paramparikhof{\prods}{\aprodseqdiffanngvas'}$) solves $\homeekof{\anngvas}{\prodvar}$. 
We can thus arrange $j_1$ copies of $\prodvec$ and $j_2$ copies of $\prodvec'$ in a parse tree $\atree$ of height $\ceilof{1+\ld (j_1+j_2)}\cdot \max\set{\normof{\prodvec}, \normof{\prodvec'}}$, for every $j_1+j_2\geq 1$. 
Let $\yieldof{\atree}=\asentform_{\aconst, 1}.\startnonterm.\asentform_{\aconst, 2}$. 
The theorem guarantees that for every prefix $\asentform$ of $\asentform_{\aconst, 1}.\asentform_{\aconst, 2}$, the number of incomplete copies of $\prodvec$ and $\prodvec'$ is bounded by $\ceilof{1+\ld (j_1+j_2)}$.
It remains to turn $\asentform_{\aconst, 1}.\asentform_{\aconst, 2}$ into a run $\arun_{\aconst, 1}.\arun_{\aconst, 2}$ using $\aprodseqdiffterms$ and $\aprodseqdiffterms'$. 
To be precise, we have one use of $\aprodseqdiffterms$ per copy of $\prodvec$ and similar for $\aprodseqdiffterms'$ and $\prodvec'$. 
For large \(\aconst\) a linear counter value beats logarithmically many incomplete copies, i.e.\ guarantees that the counter values remain high:

\begin{restatable}{lemma}{LemmaLowerBoundNonLinear}\label{Lemma:LowerBound}
There is $\iterlb\in\N$ so that for all $j_1+j_2\geq\iterlb$, for all prefixes $\arun$ of $\arun_{\aconst, 1}.\arun_{\aconst, 2}$, and for all $\acounter\in\adimset \setminus \omegaof{\acontextin}$ we have 
\begin{align*}
\at{\vaseffof{\upseq^{j_1+j_2}.\arun}}{\acounter}\ \geq\ \frac{1}{2} \cdot (j_1+j_2)\ .
\end{align*}
\end{restatable}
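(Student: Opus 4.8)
The plan is to follow the effect on a single relevant counter $\acounter\in\adimset\setminus\omegaof{\acontextin}$ along $\upseq^{j_1+j_2}.\arun$, separating the (linear) contribution of the $\upseq$-block from the contribution of $\arun$ and controlling the latter through the provenance function supplied by the wide tree theorem. Write $m=j_1+j_2$ and $\delta_\acounter=\at{\vaseffof{\upseq}}{\acounter}\geq 1$, the inequality being \perfectnesspumpingnospace. Since $\at{\vaseffof{\upseq^{m}.\arun}}{\acounter}=m\,\delta_\acounter+\at{\vaseffof{\arun}}{\acounter}$, it suffices to prove $\at{\vaseffof{\upseq^{m}.\arun}}{\acounter}\geq m-C\,\ceilof{1+\ld m}$ for a constant $C\geq 1$ independent of $\aconst$; then $m-C\,\ceilof{1+\ld m}\geq\tfrac12 m$ once $m\geq\iterlb$ for $\iterlb$ large enough, and the same $\iterlb$ works for all finitely many relevant $\acounter$.

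First I would establish a per-copy bound: $\at{\vaseffof{\diffrunleft.\diffrunright}}{\acounter}\geq 1-\delta_\acounter$ and $\at{\vaseffof{\diffrunleft'.\diffrunright'}}{\acounter}\geq 1-\delta_\acounter$. By \eqref{Equation:UpdatesPumpDiff} we have $\at{\vaseffof{\upseq.\diffrunleft.\diffrunright.\downseq}}{\acounter}=\sumconst\cdot\at{(\updates\cdot\ahomsol[\updvar])}{\acounter}$, and the marking equation of $\homchareq{\anngvas}$ together with $\at{\ahomsol[\invar]}{\acounter}=0$, which is forced by $\acounter\notin\omegaof{\acontextin}$, gives $\at{(\updates\cdot\ahomsol[\updvar])}{\acounter}=\at{\ahomsol[\outvar]}{\acounter}\geq 0$. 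If $\acounter\notin\omegaof{\acontextout}$, this value is forced to $0$, hence $\at{\vaseffof{\diffrunleft.\diffrunright}}{\acounter}=-\at{\vaseffof{\upseq}}{\acounter}-\at{\vaseffof{\downseq}}{\acounter}$, and since $\at{\vaseffof{\downseq}}{\acounter}\leq -1$ on $\acounter\in\adimsetright\setminus\omegaof{\acontextout}$ by \perfectnesspumpingnospace, this is $\geq 1-\delta_\acounter$. If $\acounter\in\omegaof{\acontextout}$, then $\outvar[\acounter]\in\suppof{\homchareq{\anngvas}}$ by \perfectnesscountersctx and $\ahomsol$ is a full homogeneous solution by \perfectnesssol, so $\at{\ahomsol[\outvar]}{\acounter}\geq 1$; enlarging the (monotone) constant $\sumconst$ so that $\sumconst\geq\normof{\vaseffof{\upseq}}+\normof{\vaseffof{\downseq}}+1$ then gives $\at{\vaseffof{\diffrunleft.\diffrunright}}{\acounter}=\sumconst\cdot\at{\ahomsol[\outvar]}{\acounter}-\at{\vaseffof{\upseq}}{\acounter}-\at{\vaseffof{\downseq}}{\acounter}\geq 1\geq 1-\delta_\acounter$. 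The primed run $\diffrunleft'.\diffrunright'$ is treated identically with $\sumconst'=\sumconst+1$ in place of $\sumconst$.

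Next I would split $\arun$, a prefix of $\arun_{\aconst,1}.\arun_{\aconst,2}$, by provenance: every update of $\arun_{\aconst,1}.\arun_{\aconst,2}$ belongs to one of the $j_1$ copies of the difference derivation $\aprodseqdiffanngvas.\aprodseqdiffterms$ or one of the $j_2$ copies of $\aprodseqdiffanngvas'.\aprodseqdiffterms'$. A copy that is complete in $\arun$ contributes exactly $\at{\vaseffof{\diffrunleft.\diffrunright}}{\acounter}$ resp.\ $\at{\vaseffof{\diffrunleft'.\diffrunright'}}{\acounter}$. A copy that is incomplete in $\arun$ contributes the effect of some \emph{prefix} of $\diffrunleft.\diffrunright$ resp.\ $\diffrunleft'.\diffrunright'$ on $\acounter$ --- this uses the precise shape of the tree built in the wide tree theorem, namely that in each copy the leaves of $\asentformdiffleft$ precede all leaves of the copies plugged deeper, which in turn precede the leaves of $\asentformdiffright$ --- hence at least $-C$, where $C:=1+\max\{-\at{\vaseffof{\arun'}}{\acounterp}\,:\,\acounterp\in[1,\adim],\ \arun'\ \text{a prefix of}\ \diffrunleft.\diffrunright\ \text{or of}\ \diffrunleft'.\diffrunright'\}$ is a constant. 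The wide tree theorem bounds the number of incomplete copies in $\arun$ by $\ceilof{1+\ld m}$, so writing $a,b$ for the numbers of complete copies of the two kinds, with $0\leq a+b\leq m$, we get $\at{\vaseffof{\arun}}{\acounter}\geq(a+b)(1-\delta_\acounter)-C\,\ceilof{1+\ld m}$. Adding $m\,\delta_\acounter$ and using that $t\mapsto m\,\delta_\acounter+t(1-\delta_\acounter)$ is non-increasing on $[0,m]$ with value $m$ at $t=m$ (both because $\delta_\acounter\geq 1$), we obtain $\at{\vaseffof{\upseq^{m}.\arun}}{\acounter}\geq m-C\,\ceilof{1+\ld m}$, which finishes the argument.

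I expect the main obstacle to be the per-copy bound on $\at{\vaseffof{\diffrunleft.\diffrunright}}{\acounter}$: for counters concrete on the output side one reads strict negativity of $\downseq$ off \perfectnesspumpingnospace, whereas for counters carrying $\omega$ on the output the relevant cancellation goes the other way and one must invoke \perfectnesscountersctx --- that the corresponding output variable lies in the support --- to see that the full homogeneous solution already moves the counter strictly upward, and then check that the extra constraint this places on $\sumconst$ is compatible with the monotone way $\sumconst$ was fixed in the embedding step. A smaller but genuine point is verifying that an incomplete copy contributes a \emph{prefix} of its difference run rather than an arbitrary sub-multiset of its updates, which is precisely where the concrete wide tree, as opposed to an arbitrary derivation tree of logarithmic height, is used.
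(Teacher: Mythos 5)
Your overall route is the paper's own: combine a per-copy bound (one $\upseq$ together with one complete difference copy has effect $\geq 1$ on each relevant counter), a constant lower bound for the at most $\ceilof{1+\ld(j_1+j_2)}$ incomplete copies supplied by the provenance order of Theorem~\ref{TheoremWideTree}, and then "linear beats logarithmic". You even fold in the case $\acounter\in\omegaof{\acontextout}$ (support of $\outvar[\acounter]$ via \perfectnesscountersctxnospace, plus a monotone strengthening of $\sumconst$), which the paper handles separately as its "Case 2" remark right after the lemma; that part of your argument matches the intended fix exactly.

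There is one concrete inaccuracy, precisely at the point you flagged: an incomplete copy does \emph{not} contribute a prefix of the specific run $\diffrunleft.\diffrunright$ (resp.\ $\diffrunleft'.\diffrunright'$), and the structural claim you invoke about the wide tree (leaves of $\asentformdiffleft$ first, then the plugged-in copies, then $\asentformdiffright$) is not what the construction gives. In the proof of Theorem~\ref{TheoremWideTree} each copy of $\prodvec$ is re-derived from scratch by Theorem~\ref{Theorem:EEK}, possibly from a different non-terminal, so only its \emph{Parikh image} of productions and terminals is controlled, and the two deeper subtrees are inserted at two interior positions of that copy's yield. Consequently the part of a copy lying before the cut is merely some subsequence (sub-multiset) of a run with the same terminal multiset as $\diffrunleft.\diffrunright$, and your constant $C$, taken only over prefixes of the two fixed runs, can be strictly too small (a reordered copy can expose a more negative partial sum than any prefix of the original ordering). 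The repair is immediate and is exactly what the paper does: define the constant as a minimum over \emph{all} runs derivable, using the productions of $\aprodseqdiffterms$ resp.\ $\aprodseqdiffterms'$, from sentential forms whose terminal Parikh image is dominated by $\termvec$ resp.\ $\termvec'$ (equivalently, bound an incomplete copy by the total norm of all update and child-run effects occurring in one copy, which is independent of $\aconst$). With $C$ redefined this way, the rest of your accounting, including the monotonicity trick $m\,\delta_\acounter+t(1-\delta_\acounter)\geq m$ for $t\leq m$, goes through verbatim.
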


Lemma \ref{Lemma:LowerBound} shows that for all large enough \(\aconst\), the run does not go negative on \(\adimset\), and we have hence found our actual run.
\subsection{Linear Case}
We construct runs
\begin{align*}
\iterrundef\ &= \ \leftrundef.\reachruncenterleftdef.\reachruncenterrightdef.\rightrundef\\
\leftrundef\ &=\ \upseq^{j_1+j_2} \reachrunleft. \diffrunleft'^{j_2}.\diffrunleft^{j_1}.\downseqint^{j_1+j_2}\\
\rightrundef\ &=\ \upseqint^{j_1+j_2}.\diffrunright^{j_1}.\diffrunright'^{j_2} \reachrunright.\downseq^{j_1+j_2}\ .
\end{align*}
The reader familiar with Lambert's iteration lemma for VAS reachability will already see that we have created two pumping situations, on the left and right respectively. Furthermore, we have reused the $\aconst-\maxconst=j_1\cdot \sumconst + j_2\cdot (\sumconst+1)$ trick. Observe that \(\leftrundef\) and \(\rightrundef\) are seemingly mirrored, this is due to the runs otherwise not being derivable in the grammar.

What is new, not only compared to VAS reachability but also compared to the non-linear case, is that the reaching runs for the center $\reachruncenterleftdef$ and $\reachruncenterrightdef$ change with the iteration count~$\aconst$.
Here is why. 
To solve reachability, the number of updates in the overall run has to take the form $\at{(\asol+\abound\cdot \ahomsol)}{\updvar}$ for some $\abound\in\N$. 
If we only iterate the pumping and the difference derivation, however, we may not quite obtain a homogeneous solution. 
We may be missing repetitions of the period vectors for the children in the center. 
Indeed, in the homogeneous characteristic equations, the variables $\periodeffectvardefcenterleft$ and $\periodeffectvardefcenterright$ are not forced to be zero.
If they receive positive values in $\ahomsol$, the homogeneous solution expects pumping to happen in the children. 
By making $\reachruncenterleftdef$ and $\reachruncenterrightdef$ dependent on the number of iterations, we can incorporate this pumping. 

Why have we not seen the problem in the non-linear case? 
In the non-linear case, every production variable belongs to the support by \perfectnessprodsnospace, and hence receives a positive value in the (full) homogeneous solution. 
Since every terminal symbol occurs on the right-hand side of a production, every derivation corresponding to the homogeneous solution creates at least one instance of each terminal.
This in particular holds for the combination of pumping and difference derivation. 
Now, if the homogeneous solution expects pumping in a childNGVAS, we carry out this pumping in the new instance of the child. 
In the linear case, the center production cannot be repeated and we have not created a variable for it. 
Hence, the trick does not apply. 
The trick again works for the runs $\reachrunleft$ and $\reachrunright$.
We turn to the construction of the runs. 

We construct $\reachrunleft$ and $\reachrunright$ by first invoking Theorem~\ref{Theorem:EEK}.
We then have to produce runs for the children. 
For all instances of a child except one, we use \perfectnessbasenospace.
For the one instance that is left, we use a run that exists by the induction hypothesis.  
The details are like in the non-linear case except for one aspect.  
The runs $\reachrunleft, \reachruncenterleftzero, \reachruncenterrightzero, \reachrunright$ have to agree on the factor $\maxconst$ of how many homogeneous solutions to add into \(\asol\) to obtain \(\asol'\).
We achieve this by defining $\maxconst$ as the maximum over the $\initconstdef$ of the \(\anngvasp\) in \emph{all} directions.

For $\reachruncenterleftdef$ and $\reachruncenterrightdef$, we start from a childNGVAS and have to construct a reaching run. 
Constant~$\maxconst$ already allows us to invoke the induction hypothesis.
Rather than invoking it with $\maxconst$, however, we invoke it with $\aconst$.

To construct the difference runs, we first have to modify the requirements on the embedding constant so as to take into account the internal pumping sequence and the directions. 
Let $\periodeffectpumpdefdir$ and $\periodeffectpumpintdefdir$ with $\mydir\in\set{\myleft, \myright}$ be the numbers of period vectors produced by the pumping resp. the internal pumping sequence in the given direction.  
Embedding requires 
\begin{align*}
\at{(\embedconst\cdot \ahomsol)}{\prodvar} - \paramparikhof{\prods}{\aprodseqpumpanngvas.\aprodseqpumpintanngvas}\ &\geq\ 1\\
\at{(\embedconst\cdot \ahomsol)}{\periodeffectvardefdir} - (\periodeffectpumpdefdir+\periodeffectpumpintdefdir)\ &\geq\ 0\ .
\end{align*} 

For the construction of the difference runs, we start by calling Theorem~\ref{Theorem:EEK}. 
Note that this needs the first embedding requirement. 
To also construct runs for the children, we again combine \perfectnessbase with an invokation of the induction hypothesis for one instance per childNGVAS.
In this invokation,  when the child is $\anngvasp$ and the direction is $\mydir$, we use the base vector $\baseeffectdef + \at{(\embedconst\cdot \ahomsol)}{\periodeffectvardefdir} - (\periodeffectpumpdefdir+\periodeffectpumpintdefdir)$. 
Once the pumping sequences are added to the base vector, we have $\at{(\embedconst\cdot \ahomsol)}{\periodeffectvardefdir}$ repetitions of the period vectors. 
For enabledness, we determine a factor $\enableconst$, similar to the non-linear case, except it has to be common to both directions, and set $\sumconst=\embedconst+\enableconst$. 
Also, as in the non-linear case, we have to repeat this construction with $\sumconst'=\sumconst+1$.

\section{Decomposition}\label{Section:Decomposition}
\textbf{Recap}: We have seen that if an NGVAS is perfect, then the target is reachable. 
It remains to decompose a given (weak) NGVAS into a finite set of perfect NGVAS, by computing a decomposition of lower rank.

In this section, we add details for the notions of deconstruction, decomposition, which we skipped in the overview, and then define the procedures $\refinesub$,  $\refineeq$, and $\clean$.
The procedures $\refineintpump$, and $\refinepump$, require a lot of development, and are handled in \Cref{Section:PumpingMP}.
We give the intuition for how the refinement procedures satisfy the specifications given in \Cref{Proposition:Step}, \Cref{Lemma:RefineFunctions}, and \Cref{Lemma:CleannessGuaranteeOverview}.
The full proofs resp. constructions are involved, and are moved to \Cref{Appendix:Decompositions}.
\subsection{Preliminaries}\label{Section:DecompositionPrelims}
%
We formalize the structure of a decomposition wrt. the well-order $(\ranks, <)$. 
Our algorithms take one NGVAS as input, and produce a set of NGVAS.
We add details to the notions of deconstruction and refinement.
Formally, a set of NGVAS $\adecomp$ is a \emph{deconstruction} of $\anngvas$, if $\runsof{\anngvas}=\runsof{\adecomp}$, and for all $\anngvas'\in\adecomp$, (i) $\anngvas'.\unconstrained=\anngvas.\unconstrained$, (ii) $\anngvas'.\acontextin\sqsubseteq\anngvas.\acontextin$, (iii) $\anngvas'.\acontextout\sqsubseteq\anngvas.\acontextout$, (iv) $\anngvas'.\restrictions\subseteq\anngvas.\restrictions$.
We explain these conditions.
A deconstruction must preserve the enabled runs of an NGVAS. 
The same counters must be shielded from reachability constraints as stated by (i).
By (ii) and (iii), the context information of an NGVAS in the deconstruction must be a specialization of the context information for $\anngvas$.
Condition (iv) says that the new restrictions must be contained in the restrictions of $\anngvas$.
As we previously defined, a set of NGVAS $\adecomp$ is a \emph{decomposition} of $\anngvas$, if it is a deconstruction, and $\rankof{\anngvas'}<\rankof{\anngvas}$ holds for all $\anngvas'\in\adecomp$.


%

%
Recall that the system rank of an NGVAS is the maximum rank along all branches.
Thus, if $\adecomp$ is a decomposition of $\anngvasp$, then $\rankof{\anngvas\replace{\anngvasp}{\adecomp}}\leq\rankof{\anngvas}$.

%
%
%
%
%

%
\subsection{From Weak to Strong NGVAS}
We develop the necessary tools for dealing with wNGVAS.
Remember that a set \(\nonterms'\) of non-terminals is strongly-connected if for all \(\anonterm, \anonterm' \in \nonterms'\) there exists a derivation \(\anonterm \to^{*} \aword \anonterm' \awordp\) for sentence forms \(\aword, \awordp\). We write \(\sccpair{\anngvas, \anonterm}\) for the SCC of \(\anngvas\) containing \(\anonterm\).
%

We call an SCC $\sccpair{\anngvas, \anonterm}$ \emph{linear}, if $\anngvas$ contains no productions $\anontermp\to\anontermpp_{0}.\anontermpp_{1}$ with $\anontermp, \anontermpp_{0}, \anontermpp_{1}\in\sccpair{\anngvas, \anonterm}$.
Otherwise, we call it \emph{non-linear}. In case that a grammar is strongly-connected this coincides with the definition of a linear grammar.
To uniformly access the SCCs of non-terminals and top SCC of a child NGVAS, we also write $\sccpair{\anngvas, \anngvasp}$ to denote $\sccpair{\anngvasp, \anngvasp.\startnonterm}$ for a childNGVAS $\anngvasp$ of an NGVAS $\anngvas$.

To make working with SCCs cleaner, we define a few helper concepts.
First, we define a set of rules local to an SCC.
With slight abuse of notation, we write 
$$\sccpair{\anngvas, \anonterm}.\prods=\setcond{\anontermp\to\aword\in\anngvas.\prods}{\anontermp\in\sccpair{\anngvas, \anonterm}}$$
to denote the set of rules local to $\sccpair{\anngvas, \anonterm}$.

Second, we define the relation $\callscc$ between strongly connected components.
We write $\sccpair{\anngvas, \anonterm}\callscc\sccpair{\anngvas, \asymbol}$ if 
$\asymbol\in\anngvas.(\nonterms\cup\trms)\setminus\sccpair{\anngvas, \anonterm}$ can be derived from $\anonterm$ only using rules in \(\sccpair{\anngvas, \anonterm}.\prods\).
Note that if $\asymbol\in\trms$, then the right side of the relation is the top SCC of a childNGVAS.
In strong NGVAS, this relation corresponds to the child relation.
Indeed, for a strong NGVAS, $\sccpair{\anngvas, \anonterm}\callscc\sccpair{\anngvasp, \anontermp}$ iff $\anngvasp\in\anngvas.\trms$. 
We further this analogy by defining the set of local terminals
\begin{align*}
    \sccpair{\anngvas,\anonterm}.\trms=\setcond{\aword[i]}{&(\anngvasp\to\aword)\in\sccpair{\anngvas, \anonterm}.\prods,\; \\
    &\sccpair{\anngvas, \anonterm}\callscc\aword[i],\; i < \cardof{\aword}}.
\end{align*}
We also define a set of local recurring terminals
\begin{align*}
    \sccpair{\anngvas, \anonterm}.\rectrms=\setcond{\aword.\awordp[i]}{&\anonterm\to^{*}\aword.\anonterm.\awordp,\; \\
    &\aword.\awordp\in\sccpair{\anngvas, \anonterm}.\trms^{*},\; 
    i<\cardof{\aword.\awordp}}
\end{align*}
Note that both $\sccpair{\anngvas, \anonterm}.\trms$ and $\sccpair{\anngvas, \anonterm}.\rectrms$ are subsets of $\anngvas.(\nonterms\cup\trms)$.

In NGVAS, some production rules only produce terminals.
If the NGVAS is linear, this rule is unique, and can only be taken once.
To capture this behaviour in the weak case, we define $\exitpairsof{\anngvas, \anonterm}\subseteq\anngvas.\prods$ for an SCC $\sccpair{\anngvas, \anonterm}$ such that 
\begin{align*}
    \exitpairsof{\anngvas, \anonterm}=\setcond{\anontermp\to\asymbol.\asymbolp}{\anontermp\in\sccpair{\anngvas, \anonterm},\; \asymbol, \asymbolp\not\in\sccpair{\anngvas, \anonterm}}.
\end{align*}

\paragraph*{Weak NGVAS ranks} 
We extend the rank notion from \Cref{Section:Ranks} to wNGVAS.
To do this, we extend the individual rank-related notions.
We modify the definitions so that instead of working with NGVAS and the childNGVAS relation, they work with SCCs and the $\callscc$ relation.
The rank of a wNGVAS $\anngvas$ is the rank of $\sccpair{\anngvas, \anngvas.\startnonterm}$.
For the rest of these definitions, fix a wNGVAS $\anngvas$, and an SCC $\sccpair{\anngvas, \anonterm}$.

The structure of the rank is the same as the strong case
$$\rankof{\anngvas, \anonterm}=(\recrankof{\anngvas, \anonterm}, \itrankof{\anngvas, \anonterm})\in\N^{d+3}\times\N^{2d+1}.$$
It is made up of a non-linear rank $\recrankof{\anngvas, \anonterm}\in\N^{d+3}$ and a linear rank $\itrankof{\anngvas, \anonterm}\in\N^{2d+1}$.

First notion we extend is the cycle space.
Here, it becomes important which non-terminal the cycle is centered at.
This is because we can no longer assume all non-terminals can call one another.
The vector spaces $\cyclespaceof{\anngvas, \anonterm}$, $\lcyclespaceof{\anngvas, \anonterm}$ spanned by the effects resp. left-effects of $\anonterm$-centered cycles is
\begin{align*}
    \cyclespaceof{\anngvas, \anonterm}=\ &\spanof{\setcond{\amarking\in\ceffof{\acyc}}{\\ &\acyc\text{ is }\anonterm\text{-centered cycle in }\anngvas}}\subseteq\Z^{2d}\\
    \lcyclespaceof{\anngvas, \anonterm}=\ &\spanof{\setcond{\amarking\in\lceffof{\acyc}}{\\ &\acyc\text{ is }\anonterm\text{-centered cycle in }\anngvas}}\subseteq\Z^{d}.
\end{align*}

Now, we extend the local rank.
We let
$$\lrankof{\anngvas, \anonterm}=1_{\dimensionof{\lcyclespaceof{\anngvas, \anonterm}}}\cdot|\sccpair{\anngvas, \anonterm}|$$
if $\sccpair{\anngvas, \anonterm}$ is non-linear, and $\lrankof{\anngvas, \anonterm}=0$ if it is linear.

We move on to the branch rank.
To extend the branch rank to wNGVAS, we must first extend the branch notion to wNGVAS.
A \emph{wNGVAS branch} $\abranch$ is a sequence of SCC's $\abranch=\sccpair{\anngvas_0, \anonterm_0}\ldots\sccpair{\anngvas_k, \anonterm_k}$ linked by the $\callscc$ relation.
We say that $\abranch$ is a branch of the SCC $\sccpair{\anngvas, \anonterm}$, if $\abranch=\sccpair{\anngvas, \anonterm}.\abranch'$ for some branch $\abranch'$.
Same as the strong case, the rank of a branch $\abranch=\sccpair{\anngvas_{0},\anonterm_{0}}\ldots\sccpair{\anngvas_{k}, \anonterm_{k}}$ is the sum of the local ranks along it,  
$$\brankof{\abranch}=\sum_{i\leq k}\lrankof{\anngvas_{i}, \anonterm_{i}}.$$
The system rank is also defined as the maximum rank along all branches
$$\srankof{\anngvas, \anonterm}=\max_{\abranch\text{ a branch in }\sccpair{\anngvas, \anonterm}}\brankof{\abranch}.$$
Now, we extend the notion of the local index.
We define the local index $\localgrmindexof{\anngvas, \anonterm}$ to be the maximum of 
\begin{itemize}
    \item[(i)] 1,
    \item[(ii)] $\localgrmindexof{\anngvas, \asymbol}$ for all $\asymbol\in\sccpair{\anngvas, \anonterm}.\trms$ with $\srankof{\anngvas, \asymbol}=\srankof{\anngvas, \anonterm}$,
    \item[(iii)] $\localgrmindexof{\anngvas, \asymbol}+1$ for all $\asymbol\in\sccpair{\anngvas, \anonterm}.\rectrms$ with $\srankof{\anngvas, \asymbol}=\srankof{\anngvas, \anonterm}$,
    \item[(iv)] $\localgrmindexof{\anngvas, \asymbol}+1$ for all $(\anontermp\to\asymbol.\asymbolp)\in\exitpairsof{\anngvas, \anonterm}$ with $\srankof{\anngvas, \asymbol}=\srankof{\anngvas, \asymbolp}=\srankof{\anngvas, \anonterm}$ and $\localgrmindexof{\anngvas, \asymbol}=\localgrmindexof{\anngvas, \asymbolp}$.
\end{itemize}

The same components as before make up the non-linear rank
$$\recrankof{\anngvas, \anonterm}=(\cardof{\anngvas.\constrained}, \srankof{\anngvas, \anonterm}, \localgrmindexof{\anngvas, \anonterm}).$$
We adopt the definition of a main branch.
We say that the branch $\abranch=\sccpair{\anngvas_{0}, \anonterm_{0}}\ldots\sccpair{\anngvas_{k}, \anonterm_{k}}$ is a main branch, if $\srankof{\anngvas_{i}, \anonterm_{i}}=\srankof{\anngvas_{0}, \anonterm_{0}}$ and $\localgrmindexof{\anngvas_{i}, \anonterm_{i}}=\localgrmindexof{\anngvas_{0}, \anonterm_{0}}$ for all $i\leq k$.
In contrast to the strong case, weak NGVAS may have multiple main branches.
This is because one SCC may have multiple exit rules, even though they cannot be used in the same derivation tree.

For the linear rank, we extend the linear local- and branch- ranks.
We write 
\begin{align*}
    \linlrankof{\anngvas, \anonterm}&=1_{\dim\cyclespaceof{\anngvas, \anonterm}}\cdot\cardof{\sccpair{\anngvas, \anonterm}}\text{, if }\sccpair{\anngvas,\anonterm}\text{ linear, }\\
    \linlrankof{\anngvas, \anonterm}&=0\text{, else, }\\
    \linbrankof{\abranch}&=\sum_{i\leq k}\linlrankof{\anngvas_{i}, \anonterm_{i}}.
\end{align*}
for a branch $\abranch=\sccpair{\anngvas_{0}, \anonterm_{0}}\ldots\sccpair{\anngvas_{k}, \anonterm_{k}}$.
The linear rank is then the maximum rank among \emph{all main branches}.
$$\itrankof{\anngvas, \anonterm}=\max_{\abranch\text{ main branch}\text{ in }\sccpair{\anngvas, \anonterm}}\linbrankof{\abranch}.$$

\paragraph*{Reestablishing strongness}
Our refinement algorithms make use of the rich structure provided by strong NGVAS.
For this reason, reestablishing strongness is essential.
The procedure $\strongdec:\ngvas\to\powof{\ngvas}$ achieves this.
The key property of this procedure is that it preserves ranks.

\begin{lemma}\label{Lemma:StrongDecRank}
    Let $\anngvas$ be a wNGVAS.
    Then, $\strongdecof{\anngvas}$ is a deconstruction of $\anngvas$ consisting of strong NGVASes, and for all $\anngvas'\in\strongdecof{\anngvas}$, $\rankof{\anngvas'}\leq\rankof{\anngvas}$.
\end{lemma}

The procedure $\strongdec$ works on SCCs bottom up, and turns them into NGVAS without losing runs.
Let $\anngvas_{\anonterm}$ be the wNGVAS $\anngvas$, where we replace the starting symbol with $\anonterm\in\anngvas.\nonterms$, the in- and out-markings with $\inof{\anonterm}$ and $\outof{\anonterm}$, and the restrictions with $\Z^{d}$.
For $\anngvasp\in\anngvas.\trms$, let $\anngvas_{\anngvasp}=\anngvasp$.
The procedure replaces the rules with $\sccpair{\anngvas, \startnonterm}.\prods$.
Then it calls itself via $\strongdecof{\anngvas_{\asymbol}}$ for all $\asymbol\in\anngvas.(\nonterms\cup\trms)$ with $\sccpair{\anngvas, \startnonterm}\callscc\sccpair{\anngvas, \asymbol}$.
It replaces $\asymbol$ with the results of $\strongdecof{\anngvas_{\asymbol}}$ for each such $\asymbol$.
Let the result of this process be $\anngvas_{1}$.
If $\sccpair{\anngvas, \startnonterm}$ is non-linear, $\anngvas_{1}$ already a strong NGVAS thanks to the structure of wNGVAS, and we return $\anngvas_{1}$.
If it is linear, the result may not yet be strong for the following reason.
For $\anngvas_{1}$ to be a linear NGVAS, we would need $\cardof{\exitpairsof{\anngvas_{1}, \startnonterm}}=1$.
To solve this, for each $\aprod\in\exitpairsof{\anngvas_{1}, \startnonterm}$ we produce $\anngvas_{1, \aprod}$ by removing all rules in $\exitpairsof{\anngvas_{1}, \startnonterm}\setminus\set{\aprod}$.

Now, we show \Cref{Lemma:StrongDecRank}.
\begin{proof}
    First, we argue that neither of the steps lose runs.
    By standard structural induction, we can show that $\runsof{\asymbol}=\runsof{\anngvas_{\asymbol}}$ for each $\asymbol\in\anngvas.(\nonterms\cup\trms)$.
    This covers the first step.
    For the second step, consider that the derivation tree of a linear grammar may only have one branch with non-terminals.
    Thus, taking an exit rule $\aprod\in\exitpairsof{\anngvas_{1}, \startnonterm}$ concludes the derivation on the level of the current SCC.
    So, we get $\runsof{\anngvas_{1}}=\bigcup_{\aprod\in\exitpairsof{\anngvas_{1}, \startnonterm}}\runsof{\anngvas_{1, \aprod}}$.
    This concludes the argument for run preservation.

    Now we argue that the rank does not increase.
    To see this, it suffices to closely consider the definition of $\strongdec$, and the rank for NGVAS resp. wNGVAS.
    Each subNGVAS in $\anngvas'\in\strongdecof{\anngvas}$ corresponds to an SCC in $\anngvas$ with the same (linear) local rank.
    Furthermore, each branch in an NGVAS $\anngvas'\in\strongdecof{\anngvas}$ has a corresponding branch in $\anngvas$ in this sense.
    Finally, note that we only restrict the branching capabilities of the NGVAS.
    We also preserve the linearity resp. non-linearity of the SCCs.
    Then, the local index of an SCC may only decrease when moving to a corresponding subNGVAS.
    Thus, all elements that define the rank either remain the same, or decrease.
    We conclude $\rankof{\anngvas'}\leq\rankof{\anngvas}$ for all $\anngvas'\in\strongdecof{\anngvas}$.
\end{proof}

\subsection{A Closer Look at $\refinepar{\aprop}$ and $\clean$}\label{Section:TheDecompAlgorithm}
Now we take a closer look at the refinement steps and the cleaning process.
The detailed constructions, and correctness proofs have been moved to \Cref{Appendix:Decompositions}. We explain all perfectness conditions in order, i.e.\ we first explain the procedure $\clean$ followed by $\refinesub$ and $\refineeq$.
The refinement steps $\refineintpump$ and $\refinepump$ require a lot of development.
For this reason, we do not explain them here.
They are handled in their own section, \Cref{Section:PumpingMP}, dedicaded to all coverability arguments.

\paragraph*{Cleaning}
We proceed by a top-down approach.
First, we start with the main cleaning procedure, $\clean$, followed by a discussion of the subprocedures it relies on.
The call $\clean$ starts from an arbitrary wNGVAS $\anngvas$.
It assumes that $\perfect$ is reliable up to $\rankof{\anngvas}$.
As its first step, the procedure calls $\strongdecof{\anngvas}$.
Following this, the procedure works bottom-up in principle, first establishing the conditions related to childNGVAS, then moving on to the parent NGVAS.
However, modifying childNGVASes may break conditions relating to the parent, and vice-versa.
For this reason, the procedure $\clean$ applies the cleaning step $\cleanstep$ until it reaches a fixed point.
\begin{align*}
    \clean=\strongdec.\cleanstep^{*}
\end{align*}
The pre- and post-conditions of $\cleanstep$ are similar to that of $\perfect$.
For an input NGVAS $\anngvas$, it expects the reliability of $\perfect$ up to $\rankof{\anngvas}$.
As its post-condition, it guarantees that the output is a deconstruction, and that rank of the output NGVAS are smaller, $\cleanstepof{\anngvas}<\rankof{\anngvas}$, or that $\cleanstepof{\anngvas}$ is a set of clean NGVASes.
It returns the input if the input is a clean NGVAS.
This means that at the fixed-point, the result is a set of clean NGVASes.

In the following, we go into the individual subprocedures of $\cleanstep$.
The procedure behaves differently depending on whether its input is linear NGVAS.
The broad strokes of the cases are the same, but the linear case requires additional care of the main branch.
For this reason, we start with the non-linear case.

\textbf{Cleaning steps for non-linear NGVAS}.

\textbf{Step 1: Rigid Counters}. As its first step, $\cleanstep$ determines the set of counters $\acounterset\subseteq [d]$ that are rigid.
This yields a pair of markings $\amarking_{\asymbol, in}, \amarking_{\asymbol, out}\in\N^{\acounterset}\times\Nomega^{[d]\setminus \acounterset}$ for each symbol $\asymbol\in\anngvas.(\nonterms\cup\trms)$, such that in any run $\arun\in\runsof{\anngvas}$, the part of the run $\arun_{\asymbol}$ derived from $\asymbol$, starts from a counter valuation $\amarkingp_{in, \asymbol}\sqsubseteq\amarking_{in, \asymbol}$, and ends at a counter valuation $\amarkingp_{out, \asymbol}\sqsubseteq\amarking_{out, \asymbol}$.
We refer the reader to \Cref{Section:DecompositionLevel3} for the details.
If the set of counters $\acounterset$ is not concretely tracked in the boundedness information, the procedure first removes the symbols whose assigned markings are negative.
Then it specializes the marking of each terminal $\asymbol\in\trms$ with the markings $\amarking_{\asymbol, in}$ and $\amarking_{\asymbol, out}$.
Thanks to the previous observation, this does not lose any runs.
If we have removed a non-terminal in this step, we return the result.
In this case, the returned wNGVAS is of lower recursive rank.
This is because we did not modify any other part of the NGVAS, and removed one non-terminal.

\textbf{Step 2: Perfect Children}. As its next step, $\cleanstep$ makes the (modified) children $\anngvasp$ perfect, and removes them if $\runsof{\anngvasp}=\emptyset$.
To make the children perfect, $\cleanstep$ first calls $\clean$ and then calls $\perfect$, since \(\perfect\) expects a clean NGVAS as input.
This removes children with $\runsof{\anngvasp}=\emptyset$ since their perfect deconstruction is empty.

\textbf{Step 3: Base Effects Enabled}.
To establish that their base effects are enabled, the procedure uses a subroutine called $\basisfire$. 
This subroutine  expects a perfect NGVAS, and is therefore the last step in the process.
The procedure $\basisfire$ starts from a perfect NGVAS $\anngvas$, and assumes the reliability of $\perfect$ for and up to $\rankof{\anngvas}$.
It constructs the set of minimal $\amarkingpp\in\N^{\periodeffect}$, such that the effect $\baseeffect+\periodeffect\cdot\amarkingpp$ is the effect of an enabled run.
Because $\N^{\periodeffect}$ is a well quasi order, this set is guaranteed to be finite.
Finally, it returns the set of $\anngvas\minrestrto{\amarkingpp}$ for such minimal applications $\amarkingpp$.
Here, $\anngvas\minrestrto{\amarkingpp}$ is an NGVAS identical to $\anngvas$ up its base effect, and the base effect of $\anngvas\minrestrto{\amarkingpp}$ is obtained by adding $\amarkingpp[\aperiod]$ applications of $\aperiod\in\periodeffect$ to the base effect of $\anngvas$.
Clearly, this set of NGVAS has the same set of runs, and the base effect of each $\anngvas\minrestrto{\amarkingpp}$ is enabled.
To collect the set of minimal enabled applications, $\basisfire$ uses a recursive procedure that calls $\perfect$ on versions of $\anngvas$ with more and more restricted period sets.
Here, the reliability of $\perfect$ for $\anngvas$ is necessary.

As an example in \(\N^2\), we first find \emph{any} run, say corresponding to point \((2,3)\). Next we recursively use \(\perffun\) to check emptiness of the set of runs corresponding to the lines/set of points \((1,\N), (0,\N), (\N,2), (\N, 1)\) and \((\N, 0)\), i.e.\ for the decomposition of the complement of \((2,3)\uparrow\).  For each of these five calls we either immediately find emptiness, or we find a point, say \((1,7)\) for the first query. Now it only remains to check reachability for \((1,0)\) to \((1,6)\) (finitely many points), and accordingly for the other four lower calls. Then we can simply collect the results to end up with the minimal points. In particular, even though the first point \((2,3)\) might not have been minimal, we still eventually end up with all minimal points.

\textbf{Step 4: Remove Useless Non-terminals}. This is a standard construction for context-free grammars, and we do not repeat it here.

We remark that the order of these steps is important.
A non-terminal may lose usefulness, if all terminals $\anngvasp$ it can produce are found to have $\runsof{\anngvasp}=\emptyset$.
Marking rigid counters takes place before the refinement, since introducing a concrete marking may disturb perfectness.


\textbf{Step 5: Conditions \perfectnesscountersnospace, and \perfectnesssolnospace}.
These require that $\omega$-marked counters are in the support, and that the solution space of the characteristic equation corresponds to the restrictions.
To establish them, $\cleanstep$ concretizes the $\omega$-markings that are not in the support, and updates the solution space.
The procedure calls $\strongdec$ on the results, and returns.

So far, it may be unclear that this procedure fulfills the pre- and post-conditions we discussed above.
There are three key insights.
First, if removing the rules leading to non-useful (non-)terminals results in a linear component, the recursive rank of the NGVAS decreases.
To see this, let $\anngvas'$ be an NGVAS we get after applying the steps.
No branch $\anngvas'_{0}\ldots\anngvas'_{k}$ can have two subNGVASes that correspond to the same SCC in $\anngvas$, and at least one SCC is missing or linear.
Since we did not introduce new cycles, $\dimensionof{\lcyclespaceof{\anngvas'_{i}}}\leq \dimensionof{\lcyclespaceof{\anngvas}}$ for all $i\leq k$.
This means that each non-terminal in $\anngvas'$ (or related subNGVAS) counts at most as much as a non-terminal in $\anngvas$. 
Since $\lrankof{\anngvas_{i}'}=0$ for a linear $\anngvas_{i}'$, at least one non-terminal is not counted in the $\srankof{\anngvas'}$ sum, leading to a lesser rank.
Second, if the results are not clean, then it must be that (i) a further counter $j$ has become rigid in a subNGVAS or (ii) a subNGVAS is not perfect.
Assume (i) holds.
Then, for a subNGVAS $\anngvas'_{i}$, we have $\lcyclespaceof{\anngvas'_{i}}\subseteq\lcyclespaceof{\anngvas}\cap \set{0}\times\Z^{[d]\setminus j}$.
Furthermore, it must be that there are cycles in $\anngvas$ with non-zero left-effect on $j$.
This is because if $j$ is fixed in $\anngvas$, then $\acontextin[j]=\acontextout[j]=\omega$ holds, and thus $\anngvas'.\acontextin[j]=\anngvas'.\acontextout[j]=0$, contradicting rigidness.
If $j$ is not fixed in $\anngvas$, we have $\lcyclespaceof{\anngvas'_{i}}\subsetneq\lcyclespaceof{\anngvas}$ and thus $\dimensionof{\lcyclespaceof{\anngvas'_{i}}}<\dimensionof{\lcyclespaceof{\anngvas}}$.
Then, similarly to the reasoning before, one non-terminal must count less in the rank calculation, so the rank decreases.
We argue that (ii) cannot hold.
The deciding condition is \perfectnesspumpingnospace, the existence of a pumping derivation.
This holds trivially for the newly introduced subNGVAS, since the structure of the NGVAS ensures that the markings and the reachability information agree on the concrete components.

\textbf{Cleaning for linear NGVAS}.
The fundamentals of the cleaning step are the same between the linear and non-linear cases.
We only highlight the key differences here, the details can be found in the last part of this paper.
There are two differences to the non-linear case.
First, we need to consider the entire main branch $\sccpair{\anngvas_{0}, \anonterm_{0}}\ldots \sccpair{\anngvas_{k}, \anonterm_{k}}$ in order to remain within our recursion scheme. An important concept in this regard are \emph{siblings}: In a normal VASS, leaving an SCC only enters a single follow-up SCC, but in an NGVAS an exit rule may produce two childNGVAS. In such a case the \emph{sibling} of $\anngvas_{i+1}$ is the NGVAS $\anngvasp$ that is produced by the same rule in $\anngvas_{i}$.
Second, we have an additional cleanness condition \perfectnesschildperiodsbd that we need to care about.

The procedure works as follows: First it extracts the main branch.
The following step is to concretize the rigid-but-not-tracked counters.
Similarly to the non-linear case, this is followed by cleaning, perfecting, and enabling the subNGVAS that are not along the main branch.
Note that these calls are allowed by our recursion scheme, the children outside of the main branch have strictly lesser recursive rank.
Then, we remove non-useful (non-)terminals.
In analogy to the non-linear case, the next step is establishing \perfectnesssol and \perfectnesscountersnospace.
We also need to ensure \perfectnesschildperiodsbdnospace.
The steps towards \perfectnesssol and \perfectnesscounters are the same as in the non-linear case, but we apply them to each NGVAS along the main branch.
To establish \perfectnesschildperiodsbdnospace, we restrict the bounded periods of the NGVAS along the main branch, and their siblings.
Unfortunately, establishing \perfectnesschildperiodsbd for the parent may break \perfectnesscounters and \perfectnesschildperiodsbd for the child.
This is because to establish \perfectnesschildperiodsbdnospace, we restrict the periods of the child, which changes the homogenous equation system.
Furthermore, restricting the periods or counters of the sibling of a main branch NGVAS may break its perfectness.
For these reasons, this step does the following until a fixed point is reached: (i) concretize the counters and restrict the period vectors of the main branch NGVASes and their siblings, (ii) clean, perfect, and enable the siblings of the main branch NGVAS.
This loop terminates, since the number of counters are bounded, and each restriction of the period vectors decrease their dimensionality.
The latter follows from the same argument as \cite[Claim 4.7]{LerouxS19}.
The procedure $\cleanstep$ applies $\strongdec$ to the results, and returns.
The arguments for correctness wrt. pre- and post- conditions are the same.

\textbf{Refinement for \perfectnesschildrennospace}.
The procedure $\refinesub$ expects a clean NGVAS, and decomposes when \perfectnesschildren does not hold.
Note that the cleanness condition \perfectnessrecchildren readily ensures that the children of the input $\anngvas$ are perfect, unless they lie on $\mainbranch$.
Since $\mainbranchof{\anngvas}=\anngvas$ for non-linear $\anngvas$ this condition is only relevant for linear NGVAS.
The procedure could naively ensure \perfectnesschildren by calling $\perfect$ on all children, and replace them with the results.
However, this approach would call $\perfect$ on a child along $\mainbranch$, which has the same recursive rank as the parent.
Thus, this would not fit our recursion scheme.
To overcome this problem, $\refinesubof{\anngvas}$ first extracts the main branch $\mainbranchof{\anngvas}=\anngvas_{0}.\anngvas_{1}.\anngvas_{2}\ldots\anngvas_{k}$.
We remark that all NGVAS on $\abranch$ are linear.
Starting from the deepest NGVAS $\anngvas_{k}$, it uses the refinement functions $\refineeq$, $\refineintpump$, and $\refinepump$.
If these output a decomposition, inserting the decomposition to $\anngvas$ reduces the overall iterative rank, so we return this result.
If not, the NGVAS is perfect, so we proceed to the parent NGVAS and iterate the process once more.
If we establish that $\anngvas_{1}$ is perfect, we return $\anngvas$, since this means \perfectnesschildren holds.

\textbf{Refinement for \perfectnessprodsnospace}.
The procedure $\refineeq$ expects a clean NGVAS, and decomposes it when \perfectnessprods does not hold.
This means that there is a bound $b\in\N$, and sets of bounded productions $\prods_{\mathsf{bd}}$ resp. bounded child periods $\periodeffect_{\mathsf{bd}}$ that can only be taken at most $b$ times in runs of $\anngvas$.
In line with classical VASS reachability \cite{Kosaraju82,Lambert92,LerouxS19}, $\refineeq$ extends the grammar of $\anngvas$ by adding a budget component $c\in[0, \ldots, b]^{\prods_{\mathsf{bd}}\cup\periodeffect_{\mathsf{bd}}}$.

We ensure that the productions in $\prods_{\mathsf{bd}}$ and child periods in $\periodeffect_{\mathsf{bd}}$ are taken at most $b$ times by different mechanisms.
For the bounded rules, we decrement the coordinate $\aprod\in\prods_{\mathsf{bd}}$ in the counter whenever $\aprod$ is taken.
Because of the grammatical nature of our setting, the productions also partition the budget of the consumed non-terminal.
Given a production $\anonterm\to\asymbol.\asymbolp\not\in\prods_{\mathsf{bd}}$, we have the rules
$$(\anonterm, c)\to(\asymbol, c_0).(\asymbolp, c_1)\qquad \text{for all }c_0+c_1=c.$$
If $\aprod=\anonterm\to\asymbol.\asymbolp\in\prods_{\mathsf{bd}}$, we would instead require the sum to equal $c-1_{\aprod}$. 
We control the bounded applications of a child period by hard coding them into the childNGVAS.
The pair $(\anngvasp, c)$ for a terminal $\anngvasp$ refers to a version of $\anngvasp$, where the bounded period vectors can be taken exactly as often as prescribed by $c$.

We remark that it would be easier to only deal with a single bounded object at once, but similar to the corresponding decomposition for VASS \cite{LerouxS19}, this would not decrease the rank. 
\emph{Every} bounded production and period has to be removed, then we can argue as follows:
The effect of any cycle $(\anonterm, c)\to^{*}\aword.(\anonterm, c).\awordp$ cannot contain the effect of bounded components, because it must keep the budget constant.
Then, in the non-linear case, the dimension of the vector space spanned by the cycle effects must be less than that of the initial NGVAS, by the same argument as in \cite[Claim 4.7]{LerouxS19}.
There is one caveat to this argument in the linear case.
It may be that only the periods of the center children are bounded.
After applying our construction, the resulting NGVAS would have exactly the same cycles as the input.
We avoid this problem thanks to the cleanness condition \perfectnesschildperiodsbdnospace.
It states that the periods of the center children are unbounded in $\chareq{}$ of the parent.
Thus, if all other components are already unbounded, \perfectnessprods holds.

\section{Pumping}\label{Section:PumpingMP}
Up to this point, we have covered the refinement step $\refineeq$ and the process of cleaning.
We now discuss the refinement steps $\refinepump$ and $\refineintpump$.
The discussion in this section should be understood as a continuation of the discussion in \Cref{Section:OutlinePumping}.
We make our statements and definitions precise, cover the arguments in more detail, but to keep the presentation clean, we have moved the more involved constructions to the appendix.
We fix an NGVAS $\anngvas$ that has all perfectness conditions except \perfectnesspumpingnospace\ (resp. \perfectnesspumpingnospace\ and \perfectnesspumpingintnospace\ in the linear case).
This is the point from which both $\refinepump$ and $\refineintpump$ start.

We not only have to consider $\anngvas$, but versions of it where the start symbol and the context information have been modified.
For $\amarking, \amarkingp\in\Nomega^{d}$ and $\anonterm\in\nonterms$ with $\amarking\sqsubseteq\inof{\anonterm}$ and $\amarkingp\sqsubseteq\outof{\anonterm}$, we use $\otherctxNGVAS{\anngvas}{\amarking, \anonterm, \amarkingp}$ to denote the NGVAS 
\begin{align*}
    \otherctxNGVAS{\anngvas}{\amarking, \anonterm, \amarkingp}=\ &(\agram_{\anonterm}, (\amarking, \amarkingp), \Z^{d}, \omegaof{\amarking}\cap\omegaof{\amarkingp}, \abdinfo)\ .
\end{align*} 
The grammar $\agram_{\anonterm}=(\nonterms, \trms, \anonterm, \prods)$ starts from symbol~$\anonterm$ instead of $\startnonterm$.
The restrictions have been removed, all effects from $\Z^{d}$ are allowed.
If $i\in\omegaof{\amarking}\cap\omegaof{\amarkingp}$ holds, the counter $i$ is unconstrained in $\otherctxNGVAS{\anngvas}{\amarking, \anonterm, \amarkingp}$.
Since we kept the grammar, the boundedness information also remains intact.
We refer to the NGVAS $\otherctxNGVAS{\anngvas}{\amarking, \anonterm, \amarkingp}$ as \emph{variants} of $\anngvas$.

\subsection{The Karp-Miller Construction and Linear NGVAS}\label{Section:KarpMillerMP}

Recall the Karp-Miller construction from \Cref{Section:OutlinePumpingDec} that reduces the task of deciding \perfectnesspumpingnospace\ to computing $\postfunc$ and $\prefunc$.
We now strengthen this argument to include the variants of $\anngvas$, and observe that we do not need to be able to compute the entire domain of $\postfunc$ and $\prefunc$.
The relevant fragment of the domain suffices.
For
\begin{align*}
    \omegamrkdomainof{\acounterset}&=\setcond{\amarking\in\Nomega^{d}}{\acounterset\subseteq\omegaof{\amarking}\subseteq\abdinfomid},
\end{align*} 
the following holds.
\begin{restatable}{lemma}{LemmaKarpMillerCompMainPaper}\label{Lemma:ExtPumpComp}
    Let $\anngvas$ be non-linear and let $\otherctxNGVAS{\anngvas}{\amarking, \anonterm, \amarkingp}$ be a variant.
    Assume $\omegamrkdomainof{\omegaof{\amarking}}\times(\nonterms\cup\trms)$ is computable for the domain $\infuncdomain$ and $\prefuncN{\anngvas}$ for $\omegamrkdomainof{\omegaof{\amarkingp}}\times(\nonterms\cup\trms)$.
    Then the Karp-Miller tree $\extkmgraph$ can be effectively constructed, and we can decide \perfectnesspumpingnospace.
\end{restatable}
We consider the linear case in more detail. 
The first observation is that we can further restrict the above domain.  
Recall that for each rule $\anonterm\to\anontermp_0.\anontermp_1$, the Karp-Miller construction choses a non-terminal branch to explore, and abstracts the remaining symbol with $\postfunc$ and $\prefunc$ calls.
But in the linear case, each rule has at most one non-terminal.
This means we only need to compute $\postfunc$ and $\prefunc$ for terminals.
By \Cref{Lemma:NonLinearChildRank}, these have lesser non-linear rank, and we can handle them via $\perfect$ calls.

The second observation is that, in the linear case, the Karp-Miller tree not only allows us to check  $\perfectnesspumpingnospace$, it also allows us to compute a decomposition in case $\perfectnesspumpingnospace$ fails. 
Indeed, if the tree does not contain the non-terminal $(\inof{\startnonterm}, \startnonterm, \outof{\startnonterm})$, then for every derivation there is at least one counter where the run from $\acontextin$ resp. $\acontextout$ does not achieve the $\omega$-entry required by $\inof{\startnonterm}$ resp. $\outof{\startnonterm}$. 
The tree then gives us a bound that holds for all these cases.  
We incorporate this bound into the boundedness information to obtain the desired decomposition.

The final observation is that the Karp-Miller tree also allows us to compute $\refineintpump$. 
The argument is the same as for \perfectnesspumpingnospace, only the direction changes.
Instead of simulating pumps from the outside to the inside, starting at $(\acontextin, \startnonterm, \acontextout)$, we simulate pumps from the inside to the outside, starting at $(\anngvasp.\acontextin, \anontermp, \anngvasp'.\acontextout)$, where $\anontermp\to\anngvasp.\anngvasp'$ is the exit production of $\anngvas$.
We use the shorthands
$$\inmarkingdomain=\omegamrkdomainof{\omegaof{\acontextin}}\quad\text{ and }\quad\outmarkingdomain=\omegamrkdomainof{\omegaof{\acontextout}}$$
in the lemma below.

\begin{restatable}{lemma}{LemmaExternalPumpCompLinMainPaper}\label{Lemma:ExtPumpCompLin}
    Let $\anngvas$ be linear.
    Assume $\postfuncN{\anngvas}$ is computable for $\inmarkingdomain\times\trms$ and $\prefuncN{\anngvas}$ for $\outmarkingdomain\times\trms$.
    Then the Karp-Miller tree $\extkmgraph$ can be effectively constructed, and we can compute $\refinepump$ and $\refineintpump$.
\end{restatable}
To compute a decomposition in the non-linear case, we need a finer construction.
\subsection{The Coverability Grammar}\label{Section:CoverabilityGrammarMP}
The coverability grammar is inspired by the Karp-Miller tree construction, but has a novel component that we call \emph{promises}.
Each non-terminal makes a promise for the input and output values it will produce. 
The actual derivation under the non-terminal must produce a value that specializes the promise.
This structure ensures that the boundedness information we compute is compatible with the structure of an NGVAS.

Recall that coverability grammars are defined relative to functions that approximate reachability forwards resp. backwards.
We now make the requirements on these functions explicit.
\paragraph*{Approximation.}
%
We focus on post-approximators, functions that over-approximate the values obtainable on the output for a given input. 
The requirements on so-called pre-approximators are similar, and can be found in the appendix.  
A \emph{post-approximator for $\anngvas$} is a function $\postapprox:\Nomega^{d}\times(\nonterms\cup\trms)\to\powof{\Nomega^{d}}$ that always outputs a finite set, and has the following three properties.  
Let the input be $(\amarking, \asymbol)\in\Nomega^{d}\times(\nonterms\cup\trms)$.   
The first property is the soundness wrt. information about concrete values and unboundedness: for every output $\amarkingp\in\postapproxof{\amarking,\asymbol}$, we have  
$\amarkingp\sqsubseteq\outof{\asymbol}$ and  $\omegaof{\amarking}\subseteq\omegaof{\amarkingp}$. 
The second is the over-approximation of the reachability relation: for every run $(\amarking', \arun, \amarkingp')\in\runsof{\asymbol}$ with $\amarking'\sqsubseteq\amarking$, there is $\amarkingp\in\postapproxof{\amarking, \asymbol}$ with $\amarkingp'\sqsubseteq\amarkingp$.
The last property is the compatibility with the derivation relation: for every $\amarkingp'\in \postapproxof{\amarking, \asentform}$ with $\asymbol\to\asentform$, there is $\amarkingp\in\postapproxof{\amarking, \asymbol}$ with $\amarkingp'\sqsubseteq\amarkingp$.
Here, we generalize $ \postapprox$ to sentential forms by defining $\postapproxof{\amarking, \asentform.\asymbolp}=\postapproxof{\postapproxof{\amarking, \asentform}, \asymbolp}$.

The functions $\postfunc$ and $\prefunc$ from \Cref{Section:OutlinePumping} fail to be post- resp. pre-approximators.  
The problem is that the ideal decomposition may drop elements that we need to approxiate wrt. $\sqsubseteq$.
To turn $\postfunc$ and $\prefunc$ into actual actual post- resp. pre-approximators, we simply add all smaller concrete values: 
%
%
%
\begin{align*}
    \natpostapproxof{\amarking, \asymbol}=\setcond{&\amarkingp\in\Nomega^{d}}{ \exists\amarkingp'\in\postfuncNof{\anngvas}{\amarking, \asymbol}.\\
    &\;\amarkingp\leq\amarkingp',\;\omegaof{\amarkingp}=\omegaof{\amarkingp'},\; \amarkingp\sqsubseteq\outof{\asymbol}}\\
    \natpreapproxof{\amarkingp, \asymbol}=\setcond{&\amarking\in\Nomega^{d}}{\exists\amarking'\in\prefuncNof{\anngvas}{\amarkingp, \asymbol}.\\
    &\;\amarking\leq\amarking',\;\omegaof{\amarking}=\omegaof{\amarking'},\; \amarking\sqsubseteq\inof{\anonterm}}\ .
\end{align*}
%
%
\begin{restatable}{lemma}{LemmaNatPostPreApproximatorsMainPaper}\label{Lemma:NatPostPreApproximators}
    $\natpostapprox$, $\natpreapprox$ are post-/pre-approximators.
\end{restatable}

\paragraph*{Details of the Coverability Grammar.}
Consider the post- and pre-approximations $\postapprox$ and $\preapprox$, and a variant $\otherctxNGVAS{\anngvas}{\amarkingpp, \anontermp, \amarkingppp}$ of $\anngvas$.
The \emph{coverability grammar} $\wtgrammarof{\otherctxNGVAS{\anngvas}{\amarkingpp, \anonterm, \amarkingppp}, \postapprox, \preapprox}=(\wtnonterms, \wtterms, \wtstartnonterm, \wtprods)$ is a context-free grammar with non-terminals $\wtnonterminals\subseteq\Nomega^{d}\times\Nomega^{d}\times\nonterms\times\Nomega^{d}\times\Nomega^{d}\times\N$, terminals $\wtalph\subseteq\Nomega^{d}\times\Nomega^{d}\times\trms\times\Nomega^{d}\times\Nomega^{d}\times\N$, start symbol $\wtstartnonterm=((\amarkingpp, \outof{\anontermp}), \anontermp, (\inof{\anontermp}, \amarkingppp), 0)$, and production rules $\wtprods=\wtprodsinit\uplus\wtprodssim\uplus\wtprodspump$.

A symbol $((\amarking, \aprommarking), \asymbol, (\aprommarkingp, \amarkingp), \ahist)$ should be read as the conjuction of two statements: $\asymbol$ takes $\amarking$ to $\aprommarking$ forwards (with $\postapprox$), and $\amarkingp$ to $\aprommarkingp$ backwards (with $\preapprox$).

Non-terminals and terminals have a last component from $\N$ 
that keeps information about the history of the derivation.
The goal of this history information is to make sure the creation relation between non-terminals forms a tree with backlinks.
This means a non-terminal $\anonterm$ may create a non-terminal that is fresh or that can be found on the path from the root to $\anonterm$.

We construct the grammar in an iterative fashion. 
To this end, we maintain a set of non-terminals whose productions have not yet been defined, and terminate when this set runs empty.
We omit the maintenance of this set in our description. 
The construction starts from the grammar $(\wtnonterminalsinit, \emptyset, \wtstartnonterm, \wtprodsinit)$.
The productions $\wtprodsinit$ introduce the given $\amarkingpp$, $\anontermp$, and $\amarkingppp$, and have the form $\wtstartnonterm\to ((\amarkingpp, \aprommarking'), \anontermp, (\aprommarkingp', \amarkingppp), 0)$ for all $\aprommarking'\in\postapproxof{\amarkingpp, \anontermp}$ and $\aprommarkingp'\in\preapproxof{\amarkingppp, \anontermp}$. 
The set $\wtnonterminalsinit$ consists of all non-terminals we encounter in this construction. 
To explain the iteration step, let $\wtgrammar'=(\wtnonterms', \wtterms', \wtstartnonterm', \wtprods')$ be the current grammar and $\awtnonterm=((\amarking, \aprommarking), \anonterm, (\aprommarkingp, \amarkingp), \ahist)\in\wtnonterms$ the non-terminal waiting for productions. 

The productions  in $\wtprodspump$ capture the vertical pumping situations for the non-terminal $\awtnonterm$ and accelerate the input and output markings accordingly. 
A vertical pumping situation is characterized by another non-terminal $((\amarking', \aprommarking'), \anonterm, (\aprommarkingp', \amarkingp'), \ahist')\in\wtnonterms'$ that can (transitively) create $\awtnonterm$ and satisfies $(\amarking, \amarkingp)>(\amarking', \amarkingp')$.
We then have the production $((\amarking, \aprommarking), \anonterm, (\aprommarkingp, \amarkingp), \ahist)\to ((\amarking_{\omega}, \aprommarking''), \anonterm, (\aprommarkingp'',\amarkingp_{\omega}), \ahist'')$, for each $\aprommarking''\in\postapproxof{\amarking_{\omega}, \anonterm}$ and $\aprommarkingp''\in\preapproxof{\amarkingp_{\omega}, \anonterm}$. 
The accelerated markings $\amarking_{\omega}, \amarkingp_{\omega}\in\Nomega^{d}$ are defined as expected: $(\amarking_{\omega}, \amarkingp_{\omega})[i]=\omega$ if $(\amarking, \amarkingp)[i]>(\amarking', \amarkingp')[i]$, else $(\amarking_{\omega}, \amarkingp_{\omega})[i]=(\amarking, \amarkingp)[i]$.
All non-terminals created by the productions are fresh, meaning $\ahist''$ is new and unique.

The productions in the set $\wtprodssim$ simulate the original productions, including the effect on the counters in both directions.
Formally, for every production $\anonterm\to\asymbol.\asymbolp$ in $\anngvas$, for all  $\aprommarking'\in\postapproxof{\amarking, \asymbol}$ and $\aprommarking''\in\postapproxof{\aprommarking', \asymbolp}$, and for all 
$\aprommarkingp'\in\preapproxof{\amarkingp, \asymbolp}$ and 
$\aprommarkingp''\in\preapproxof{\aprommarkingp', \asymbol}$, where $\aprommarking''\sqsubseteq\aprommarking$, $\aprommarkingp''\sqsubseteq\aprommarkingp$, and $\aprommarking'\compwith\aprommarkingp'$, we have the production 
\begin{align*}
    \awtnonterm\to ((\amarking, \aprommarking'), \asymbol, (\aprommarkingp'', \aprommarkingp'), \ahist').((\aprommarking', \aprommarking''), \asymbolp, (\aprommarkingp', \amarkingp), \ahist''). 
\end{align*}
Recall that the creation relation should form a tree with backlinks.
This means a non-terminal on the right-hand side is made fresh (using $\ahist'$ resp. $\ahist'')$, if there is no non-terminal in $\wtnonterms'$ that (i) can (transitively) create $\awtnonterm$ and (ii) has the same first five components.
If such a non-terminal exists, the rule uses it, thereby introducing a backlink in the creation relation.

If the approximators are computable for the domains determined by the input resp. output markings, then the construction terminates.
\begin{restatable}{lemma}{LemmaCGTerminationMainPaper}\label{Lemma:CGTermination}
    Let $\otherctxNGVAS{\anngvas}{\amarking, \anonterm, \amarkingp}$ be a variant of $\anngvas$.
    Let the post- and pre-approximations $\postapprox$, $\preapprox$ be computable for the domains $\omegamrkdomainof{\omegaof{\amarking}}\times(\nonterms\cup\trms)$ and let $\prefuncN{\anngvas}$ be computable for the domain $\omegamrkdomainof{\omegaof{\amarkingp}}\times(\nonterms\cup\trms)$.
    Then, the construction of $\wtgrammarof{\otherctxNGVAS{\anngvas}{\amarking, \anonterm, \amarkingp}, \postapprox, \preapprox}$ terminates.
\end{restatable}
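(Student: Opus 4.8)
The plan is to prove two statements: \textbf{(E)} each iteration of the construction terminates and emits only finitely many rules, and \textbf{(F)} only finitely many non-terminals are ever created; together with the fact that every iteration either moves a new non-terminal off the worklist or halts the construction, these give termination. For \textbf{(E)}, fix an iteration expanding a non-terminal $((\amarking, \aprommarking), \anonterm, (\aprommarkingp, \amarkingp))$. Every marking on which $\postapprox$ or $\preapprox$ is queried in this iteration is obtained from $\amarking$ resp.\ $\amarkingp$ by iterated application of these functions (simulation case) or by a single acceleration (pumping case). I would first establish, by a straightforward induction over the construction, the invariant that the first component $m_1$ of every non-terminal satisfies $\omegaof{\amarking}\subseteq\omegaof{m_1}\subseteq\abdinfomid$, where $\amarking$ is the input of the variant fixed in the lemma, and symmetrically for fifth components and the variant's output. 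This uses the ``correct unboundedness'' property of the approximators together with the fact that an acceleration can only turn into $\omega$ a coordinate on which a pumping situation strictly increased, and such coordinates lie outside the concretely tracked ones, hence in $\abdinfomid$, by the exactness of the approximators on concretely tracked counters and the consistency of the boundedness information. Given the invariant, all queries in the iteration fall in the domains on which $\postapprox$ and $\preapprox$ are assumed computable; since $\prods$ is finite and each approximator call returns a finite set, the iteration emits finitely many rules, each with at most two right-hand-side symbols, hence finitely many fresh non-terminals.

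For \textbf{(F)} I would argue by contradiction in the style of the Karp--Miller termination proof behind \Cref{Lemma:WitnessGrammarTermination}. If the construction does not terminate, then since every non-terminal has finite out-degree (by \textbf{(E)}), K\"onig's lemma applied to the tree that records, for each non-terminal other than $\wtstartnonterm$, the iteration creating it as a fresh child, yields an infinite branch $\wtstartnonterm = Y_0, Y_1, Y_2, \dots$ of non-terminals. The argument rests on three ingredients. \emph{(i) Bounded acceleration.} Let $\Omega(Y)$ be the number of $\omega$-coordinates among the first and fifth components of $Y$. The invariant from \textbf{(E)} shows $\Omega$ is non-decreasing along every call chain, and a fresh pumping child strictly increases $\Omega$: a pumping step that did not would produce a node whose first five components coincide with those of an already present ancestor, which by the reuse rule would be referenced rather than created fresh. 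As $\Omega\le 2d$, the branch has only finitely many pumping edges, hence an infinite suffix of simulation edges along which $\Omega$, and so the $\omega$-sets of the first and fifth components, are constant. \emph{(ii) No repeated five-tuples.} By the reuse rule, no two $Y_i, Y_j$ with $1\le i<j$ carry the same first five components, since when $Y_j$ was created $Y_i\neq\wtstartnonterm$ was already present and could call the parent of $Y_j$. \emph{(iii) Finiteness of the relevant data.} On the suffix the triples of first component, underlying nonterminal, and fifth component range over a well quasi-order (the $\omega$-sets being fixed, the remaining coordinates finite-valued and $\nonterms$ finite); since no $Y_i$ on the suffix is in a pumping situation, the triple of an earlier $Y_i$ is never strictly below that of a later $Y_j$, so a standard well-quasi-order argument bounds the number of distinct such triples on the suffix. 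Finally, a direct invariant of the construction is that for any non-terminal other than $\wtstartnonterm$, the second component lies in $\postapproxof{m_1, \anonterm}$ and the fourth in $\preapproxof{m_5, \anonterm}$, where $m_1, m_5, \anonterm$ are its first, fifth, and underlying components, and these are finite sets. Hence only finitely many distinct five-tuples occur on the suffix, contradicting \emph{(ii)}.

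The main obstacle is ingredient \emph{(i)}: showing that the acceleration and the reuse rule together bound the number of accelerations along a branch, i.e.\ that a fresh pumping child must strictly increase $\Omega$. This is where it is essential that the forward and backward triple constructions are kept independent --- so that $\omega$-coordinates, once introduced, are never removed and $\Omega$ is genuinely monotone along call chains --- and it is the step in which the auxiliary properties of approximators (exactness on concretely tracked counters, precision) do most of their work. The remaining bookkeeping closely parallels the termination proof of \Cref{Lemma:WitnessGrammarTermination}.
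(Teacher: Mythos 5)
Your overall route is the one the paper takes: the paper's proof of this lemma is essentially a two-line reference to the classical Karp--Miller termination argument of \Cref{Lemma:ExtPumpComp}, applied to the $(\wtinlabel, \wtsymlabel, \wtoutlabel)$ components, with the domain invariant supplying effectiveness of the approximator calls. Your part \textbf{(E)}, the domain invariant, and the bookkeeping in \textbf{(iii)} using that the second and fourth components range over the finite sets $\postapproxof{m_1,\anonterm}$ and $\preapproxof{m_5,\anonterm}$ are correct elaborations of that sketch.

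The genuine gap is ingredient \emph{(i)}. First, its justification contradicts the construction: the reuse rule is only applied to symbols produced by rules in $\wtprodssim$; every symbol added by a $\wtprodspump$ rule is explicitly made fresh, so ``would be referenced rather than created fresh'' is not available for pumping children, and their second and fourth components are freshly chosen from $\postapproxof{\amarking_{\omega},\anonterm}\times\preapproxof{\amarkingp_{\omega},\anonterm}$, hence need not coincide with those of any ancestor. Second, the claim itself does not follow from the construction as written: a pumping situation only requires a predecessor with the same non-terminal and $(\amarking',\amarkingp')<(\amarking,\amarkingp)$, and the strict coordinates may all be coordinates where the current node already carries $\omega$ (the predecessor being concrete there). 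In that case $(\amarking_{\omega},\amarkingp_{\omega})=(\amarking,\amarkingp)$, the fresh pumping child has the same $\omega$-count (indeed the same first and fifth components) and is again in a pumping situation with respect to the same predecessor, so your bound on the number of pumping edges along the branch --- the step on which all of \textbf{(F)} rests --- does not go through. Closing this needs an additional ingredient, e.g.\ reading the acceleration step so that it only fires (or only creates a fresh symbol) when the strict increase occurs at a coordinate that is concrete in the current node, which is what the Karp--Miller analogy silently assumes; under that reading \emph{(i)} is immediate, since such a coordinate becomes a genuinely new $\omega$, and it lies in $\abdinfomid$ by exactness, as you note. A smaller instance of the same slip: your \emph{(ii)} invokes the reuse rule for arbitrary $Y_i,Y_j$ on the branch, which again fails when $Y_j$ is a pumping child; it is only needed, and only valid, on the simulation-only suffix.
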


If a coverability grammar $\wtgrammarof{\otherctxNGVAS{\anngvas}{\amarking, \anonterm, \amarkingp}, \postapprox, \preapprox}$ does not contain a non-terminal of the form $(\inof{\anontermpp}, \anontermpp, \outof{\anontermpp})$, we say that it \emph{remains bounded}.
If instead, there is such a non-terminal, we say that the grammar \emph{shows unboundedness}.
We observe that, if the grammar shows unboundedness for complete approximators $\natpostapprox$ and $\natpreapprox$, then \perfectnesspumpingnospace\ holds.
This is similar to the case of the Karp-Miller graph. 
\begin{restatable}{lemma}{LemmaCGPumpabilityMainPaper}\label{Lemma:CGPumpability}
    If $\wtgrammarof{\otherctxNGVAS{\anngvas}{\amarking, \anonterm, \amarkingp}, \natpostapprox, \natpreapprox}$ shows unboundedness, then $\otherctxNGVAS{\anngvas}{\amarking, \anonterm, \amarkingp}$ fulfills \perfectnesspumpingnospace.
\end{restatable}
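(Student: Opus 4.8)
The plan is to show that whenever the coverability grammar $\wtgrammarof{\otherctxNGVAS{\anngvas}{\amarking, \anonterm, \amarkingp}, \natpostapprox, \natpreapprox}$ produces a non-terminal of the form $(\inof{\anontermpp}, \anontermpp, \outof{\anontermpp})$ (i.e.\ one whose markings carry $\omega$ exactly on $\abdinfomid$ and agree with the concretely tracked values elsewhere), then this $\omega$-richness can be traded for genuine pumpability of the corresponding counters in a real derivation of $\otherctxNGVAS{\anngvas}{\amarking, \anonterm, \amarkingp}$. Since with the complete approximators $\natpostapprox, \natpreapprox$ the post/pre components of a coverability non-terminal faithfully over-approximate what $\postfuncN{\anngvas}$ resp.\ $\prefuncN{\anngvas}$ say, a derivation reaching such a fully-$\omega$ symbol must, by the acceleration steps that introduced those $\omega$'s, witness a genuine vertical pump: a derivation $\startnonterm\to^{*}\asentformpumpleft.\startnonterm.\asentformpumpright$ together with runs whose effects strictly increase the relevant counters in $\adimsetleft\setminus\omegaof{\acontextin}$ on the up side and strictly decrease those in $\adimsetright\setminus\omegaof{\acontextout}$ on the down side. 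That is exactly what \perfectnesspumpingnospace\ demands.

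Concretely, I would proceed as follows. First, fix a derivation tree of the coverability grammar reaching a fully-$\omega$ non-terminal $\awtnonterm^{\ast}=(\inof{\anontermpp}, \anontermpp, \outof{\anontermpp})$, and locate along the path from the root the pumping-situation productions from $\wtprodspump$ that actually introduced the $\omega$-coordinates needed (counters in $\abdinfomid$ that are concrete in $\acontextin$ resp.\ $\acontextout$). Each such production was added because some earlier coverability non-terminal $((\amarking', \aprommarking'), \anonterm, (\aprommarkingp', \amarkingp'))$ could call $((\amarking, \aprommarking), \anonterm, (\aprommarkingp, \amarkingp))$ with $(\amarking, \amarkingp)>(\amarking', \amarkingp')$ strictly on those coordinates. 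Second, I would \emph{project} each such coverability derivation segment back to a derivation in the underlying grammar of $\otherctxNGVAS{\anngvas}{\amarking, \anonterm, \amarkingp}$: dropping the marking/promise components of a coverability production $((\amarking, \aprommarking), \anonterm, (\aprommarkingp, \amarkingp))\to\cdots$ yields an honest production $\anonterm\to\asymbol.\asymbolp$, and by the over-approximation and precision properties of $\natpostapprox, \natpreapprox$ (\Cref{Lemma:NatPostPreApproximators}) the intermediate markings chosen are witnessed by actual runs of the terminals/children. This gives, for each pumped counter, a cycle $\anonterm\to^{*}w.\anonterm.w'$ whose induced run has a strictly positive (resp.\ strictly negative, fired in reverse) effect on that counter while remaining enabled from a specialization of the appropriate context marking.

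Third, I would combine the finitely many per-counter cycles into a single pumping derivation. Here the fact that $\anngvas$ is a strong NGVAS (strongly connected, all non-terminals useful) is essential: the relevant non-terminals lie in one SCC, so the individual $\anonterm$-centered cycles can be chained into a single $\startnonterm$-centered derivation $\startnonterm\to^{*}\asentformpumpleft.\startnonterm.\asentformpumpright$, and by monotonicity of firing in a VAS the positive/negative effects do not cancel but accumulate: running all the up-cycles in sequence yields $\at{(\amark_1-\inmark)}{\adimsetleft\setminus\omegaof{\acontextin}}\geq 1$, and symmetrically for the down side after reversal. Coordinates that are already $\omega$ in $\acontextin$ resp.\ $\acontextout$ carry no obligation, so we only need cycles for the finitely many relevant counters, matching exactly the statement of \perfectnesspumpingnospace. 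Unconstrained and concretely tracked counters behave consistently by the NGVAS consistency conditions, so enabledness is preserved throughout.

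I expect the main obstacle to be step two: making precise that a coverability-grammar derivation reaching a fully-$\omega$ symbol can be \emph{soundly reflected} into genuine NGVAS runs with the claimed strict effects. The subtlety is that $\natpostapprox$ and $\natpreapprox$ over-approximate with respect to $\sqsubseteq$ (they are allowed to guess smaller concrete values and freely add $\omega$'s), so a single coverability symbol may correspond to many real runs of differing effects; one must argue that the \emph{strict increase} recorded by an acceleration step in $\wtprodspump$ is not an artifact of the approximation but is backed by a real run with a strictly larger value on the accelerated coordinate — this is where the precision property and the definition of $\natpostapprox$ via $\postfuncN{\anngvas}$ (which returns downward-closed ideal decompositions of \emph{actual} reachability sets) are used, together with the pumping lemma for context-free languages to actually iterate the witnessing sub-derivation. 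The remaining steps — projecting productions, chaining cycles in the SCC, accumulating effects by VAS monotonicity — are then relatively routine, paralleling the analogous Karp-Miller soundness argument referenced just before the statement.
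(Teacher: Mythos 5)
Your route is essentially the paper's: the appendix proof also walks the coverability-grammar path from $\wtstartnonterm$ to the fully-$\omega$ symbol, reflects the $\wtprodssim$ steps into real derivations via the runs witnessing the $\natpostapprox$/$\natpreapprox$ images (these exist precisely because the approximators are built from $\postfuncN{\anngvas}$ and $\prefuncN{\anngvas}$, i.e.\ from actual reachability), uses the genuine vertical pump behind each $\wtprodspump$ acceleration (\Cref{Lemma:CGPumping}), and then uses strong connectedness to close everything into a $\startnonterm$-centered pumping derivation as \perfectnesspumpingnospace\ demands.

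One step of your write-up would fail as literally stated: concatenating the finitely many up-cycles once each and appealing to monotonicity does not give a simultaneously positive effect, because a later cycle may consume a counter that only an earlier cycle pumps, and the later cycle's enabledness may itself require that earlier counter to be large (its witnessing run was extracted at a coverability node where that counter was already $\omega$). The repair is the standard nested iteration: build, by induction along the coverability path, a \emph{family} of derivations achieving arbitrarily high values on all $\omega$-counters accumulated so far, and choose the repetition counts of earlier pumps large enough both to enable and to absorb the later ones; this is exactly what the paper's inductive ``sequence of derivations pumping the $\omega$'s'' accomplishes, and your closing appeal to the Karp--Miller soundness argument points at the same fix. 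Also, the CFL pumping lemma is not needed: each acceleration already sits on a cycle $\anonterm\to^{*}\aword.\anonterm.\awordp$ of the underlying grammar (the dominating and dominated coverability nodes carry the same non-terminal), which can be iterated directly.
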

If the coverability grammar $\wtgrammarof{\otherctxNGVAS{\anngvas}{\amarking, \anonterm, \amarkingp}}$ remains bounded instead, and we have $(\amarking, \amarkingp)\in\inmarkingdomain\times\outmarkingdomain$, \emph{regardless of the approximation}, we get a decomposition.
This is a crucial fact that will help us in the next subsection.
\begin{restatable}{lemma}{LemmaDecompGivenCGMainPaper}\label{Lemma:DecompGivenCG}
    Let $\otherctxNGVAS{\anngvas}{\amarking, \anonterm, \amarkingp}$ be a variant of $\anngvas$ with $(\amarking, \amarkingp)\in\inmarkingdomain\times\outmarkingdomain$, and let $\wtgrammarof{\otherctxNGVAS{\anngvas}{\amarking, \anonterm, \amarkingp}, \postapprox, \preapprox}$, a coverability grammar that remains bounded, be given.
    Then, using elementary resources, we can compute a decomposition $\adecomp$ of $\otherctxNGVAS{\anngvas}{\amarking, \anonterm, \amarkingp}$ with $\recrankof{\anngvas'}<\recrankof{\anngvas}$ for all $\anngvas'\in\adecomp$.
\end{restatable}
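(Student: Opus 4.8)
The plan is to extract a single NGVAS $\anngvas_{\agram}$ from the bounded coverability grammar $\agram = \wtgrammarof{\otherctxNGVAS{\anngvas}{\amarking, \anonterm, \amarkingp}, \postapprox, \preapprox}$ and show it is a head-dominated deconstruction. First I would build the \emph{derivation graph} of $\agram$: nodes are the non-terminals of $\agram$, with an edge $\awtnonterm \to \awtnontermp$ whenever some production of $\agram$ has $\awtnonterm$ as left-hand side and $\awtnontermp$ on its right-hand side. Computing its strongly connected components and contracting them gives a DAG that becomes the nesting structure of the new NGVAS: each SCC yields one (w)NGVAS whose children are the SCCs immediately below it together with the original childNGVAS of $\anngvas$ that appear as terminals. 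The markings for a node $((\amarking_1, \aprommarking), \asymbol, (\aprommarkingp, \amarkingp_2))$ are read off by intersecting the two triples via the specialization meet: input $\amarking_1 \specmeet \aprommarking$ and output $\amarking_2 \specmeet \aprommarkingp$ (well-defined by the compatibility condition $\aprommarking \compwith \aprommarkingp$ that the construction enforces, cf.\ the overview). The unconstrained set stays $\omegaof{\amarking} \cap \omegaof{\amarkingp}$ of the variant, and the restriction is set to $\Z^d$ (or the variant's restriction); the boundedness information $\abdinfoleft, \abdinforight$ is the set of counters that carry $\omega$ in \emph{every} node of the relevant SCC, again taking the intersection $\amarking_1 \specmeet \aprommarking$ resp.\ $\amarking_2 \specmeet \aprommarkingp$ to stay as precise as possible. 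One then has to massage this into a genuine wNGVAS: put the grammar into weak Chomsky normal form, discard non-useful non-terminals, and check the consistency conditions on the $\infun, \outfun$ maps — here the precision property of $\postapprox, \preapprox$ and the fact that the grammar construction matches markings along productions are exactly what make consistency hold.

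Next I would prove the \emph{deconstruction} property $\runsof{\otherctxNGVAS{\anngvas}{\amarking, \anonterm, \amarkingp}} = \runsof{\anngvas_{\agram}}$, together with conditions (i)--(iv) from the definition. Inclusion $\supseteq$ is the routine direction: any run of $\anngvas_{\agram}$ projects, by forgetting the coverability annotations, to a derivation in the original grammar whose run respects all the contexts (the annotations only \emph{added} constraints), so it is a run of the variant; conditions (ii)--(iv) hold because the meet operation only \emph{specializes} markings and we only shrink the restriction. Inclusion $\subseteq$ is where the coverability-grammar design pays off: given a run $(\amarking', \arun, \amarkingp') \in \runsof{\otherctxNGVAS{\anngvas}{\amarking, \anonterm, \amarkingp}}$ with a derivation tree $\atree$ in the base grammar, I would annotate $\atree$ top-down with $\postapprox$-markings from $\amarking$ and bottom-up with $\preapprox$-markings from $\amarkingp$; the over-approximation property guarantees these annotations exist and dominate the true intermediate markings, and the acceleration steps of the construction (introducing $\omega$ when a pumping situation is detected) only \emph{weaken} the promises, so the annotated tree is a legal derivation in $\agram$. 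The ``promises'' mechanism is precisely what ensures the annotated tree respects the new, finer boundedness information: a node's actual reachable marking specializes its promised marking by construction.

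The final and, I expect, hardest part is \emph{head domination} $\anngvas_{\agram} \headdom \anngvas$ — really $\anngvas_{\agram} \headdom \otherctxNGVAS{\anngvas}{\amarking,\anonterm,\amarkingp}$, but the rank of a variant is controlled by that of $\anngvas$. By the inductive definition of $\headdom$, I must exhibit for the top SCC that $\lrankof{\anngvas_{\agram}, \anontermp} < \lrankof{\anngvas}$ for all non-terminals $\anontermp$, and that every child of $\anngvas_{\agram}$ is head dominated by $\anngvas$ (for the original childNGVAS this follows from \perfectnesschildren, which gives perfect, hence strong, children satisfying \perfectnessbase and \perfectnesschildren; for the SCCs coming from the coverability grammar one recurses). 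The key quantitative claim is that in \emph{every} SCC of the coverability grammar, the counter whose promised unboundedness failed to be witnessed — the counter that caused $\agram$ to remain bounded rather than show unboundedness — is now concretely tracked, i.e.\ bounded, in that SCC. This is because a cycle $\awtnonterm \to^{*} w.\awtnonterm.w'$ in the coverability grammar cannot accelerate that counter to $\omega$: had it been able to increase it, the construction would have detected a pumping situation on it and accelerated, contradicting boundedness of $\agram$. Hence $\cardof{\abdinfoleft} + \cardof{\abdinforight}$ drops by at least one for every SCC relative to $\anngvas.\adimset$ counted with the correct multiplicity, and crucially the cycle-space dimension does not increase because the projection of coverability-grammar cycles to the base grammar maps into $\cyclespaceof{\anngvas,\anonterm}$. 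Together these give $\lrankof{\anngvas_{\agram},\anontermp} < \lrankof{\anngvas}$, which is the inductive clause of $\headdom$, and the base clause $\srankof{\cdot} \lexlessof{\lrankof{\anngvas}} \srankof{\cdot}$ for the perfect leaves follows since those leaves were untouched. Making the multiplicity bookkeeping in $\srank$ precise — that dropping a $\Z$-counter uniformly across all SCCs actually lowers the system rank in the $\lrankof{\anngvas}$ component and not somewhere harmless to the left — is the delicate step, and mirrors Claim~4.7 of~\cite{LerouxS19} lifted to the nested setting.
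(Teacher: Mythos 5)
Your plan follows essentially the same route as the paper's proof: the paper also turns the bounded coverability grammar into a nested structure with one (w)NGVAS per SCC (built inductively over the set of callable symbols), reads the new context and boundedness information off the labels and promises via the specialization meet, proves run preservation by annotating reachability trees forwards with $\postapprox$ and backwards with $\preapprox$, and derives head domination from the fact that a bounded grammar contains no non-terminal that is fully $\omega$ on both sides, together with the embedding of cycle spaces into those of the original NGVAS. Three points where your sketch is off or too loose. First, your meet pairing is swapped: for a node $((\amarking_1,\aprommarking),\asymbol,(\aprommarkingp,\amarkingp_2))$ the input must be $\amarking_1\specmeet\aprommarkingp$ (forward-computed input met with the backward \emph{promise about the input}) and the output $\amarkingp_2\specmeet\aprommarking$; pairing $\amarking_1$ with the post-promise $\aprommarking$ intersects a marking describing the state before the symbol with one describing the state after it, which is semantically meaningless and would break the consistency conditions on $\infun,\outfun$ (compare the overview's $\amarking_1\cap\amarkingp_2$, $\amarking_2\cap\amarkingp_1$). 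Second, the completeness direction of run preservation needs more than annotating top-down and bottom-up: the two directional annotations must be merged into a single derivation of the coverability grammar, and this is exactly where the pumping rules in $\wtprodspump$ and the promise-compatibility checks enter; the paper handles this with a separate induction/saturation argument over $\wtgrammar$-annotations, which your sketch leaves implicit. Third, the head-domination step is simpler than you anticipate: the promise structure forces all non-terminals of one SCC to share the same input and output $\omega$-sets, so boundedness of the grammar directly gives that, for every SCC, the combined number of $\omega$-counters on input and output is strictly below $\cardof{\abdinfoleft}+\cardof{\abdinforight}$; together with the cycle-space embedding this already yields $\lrankof{\cdot}<\lrankof{\anngvas}$, and no Claim-4.7-style support argument or delicate $\srank$ multiplicity bookkeeping is required here (that machinery belongs to $\refineeq$ and $\centerdec$). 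Note also that the concretely tracked counters may differ between SCCs; what matters is only the per-SCC strict decrease, not one globally bounded counter.
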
 
The proof is involved, but the intuition behind it is the following.
The promise information required by the coverability grammar is weaker than reachability, and can be lifted from a parse tree of a run in $\runsof{\otherctxNGVAS{\anngvas}{\amarking, \anonterm, \amarkingp}}$.
Every reachability derivation can thus be captured by a coverability grammar derivation.
The main challenge in obtaining the decomposition is establishing the structure needed for an NGVAS.
To compute the decomposition, we just break the coverability grammar to its SCCs, and mark it using the information from the coverability grammar.
The boundedness information we assign to the non-terminal $((\amarking, \aprommarking), \anonterm, (\aprommarkingp, \amarkingp))\in\wtnontermssim$ is $\amarking\sqcap\aprommarkingp$ on the input, and $\amarkingp\sqcap\aprommarking$ on the output.
The key property of a coverability grammar that allows an easy coversion to an NGVAS is the following.
Let $\awtnonterm^{(0)}, \awtnonterm^{(1)}, \awtnonterm^{(2)}\in\wtnonterms$ with $\awtnonterm^{(i)}=((\amarking^{(i)}, \aprommarking^{(i)}), \anonterm^{(i)}, (\aprommarkingp^{(i)}, \amarkingp^{(i)}))$ for $i\in\set{0,1,2}$ belong to the same SCC in $\wtgrammar$, and let $\awtnonterm^{(0)}\to\awtnonterm^{(1)}.\awtnonterm^{(2)}\in\wtprods$.
Then, the promise structure ensures that $I, O\subseteq [1,d]$ with $\omegaof{\amarking^{(i)}}=\omegaof{\aprommarking^{(i)}}=I$ and $\omegaof{\amarkingp^{(i)}}=\omegaof{\aprommarkingp^{(i)}}=O$ for all $i\in\set{0, 1, 2}$.
Thus, the boundedness information on input and output $\amarking^{(i)}\sqcap\aprommarkingp^{(i)}$ and $\amarkingp^{(i)}\sqcap\aprommarking^{(i)}$, has $\omegaof{\amarking^{(i)}\sqcap\aprommarkingp^{(i)}}=\omegaof{\amarkingp^{(i)}\sqcap\aprommarking^{(i)}}=I\cap O$ for all $i\in\set{0, 1, 2}$.
This is exactly the type of boundedness information we require in a non-linear NGVAS.
Note that the boundedness information in a Karp-Miller graph does not give this guarantee.
This boundedness leads to a decrease in non-linear rank, \emph{if} it is discovered in a non-rigid counter.
To ensure that this happens, we require membership to $\inmarkingdomain\times\outmarkingdomain$ in the premise of \Cref{Lemma:DecompGivenCG}.
The details can be found in \Cref{Section:CoverabilityGrammarL3}.

\subsection{Computing $\postfunc$ and $\prefunc$}\label{Section:ComputingPostAndPreMP}
By \Cref{Lemma:DecompGivenCG} and \Cref{Lemma:CGPumpability}, it suffices to show that $\natpostapprox$ and $\natpreapprox$ are computable for the domains 
\begin{align*}
    \infuncdomain=\inmarkingdomain\times(\nonterms\cup\trms)\qquad
    \outfuncdomain=\outmarkingdomain\times(\nonterms\cup\trms).
\end{align*}
However, by definition, $\natpostapprox$ and $\natpreapprox$ are only thin wrappers for the functions $\postfunc$ and $\prefunc$.
Thus, it suffices to show that $\postfunc$ and $\prefunc$ are computable for the domains $\infuncdomain$ resp. $\outfuncdomain$.
This is our goal for the rest of this section.
\begin{restatable}{lemma}{LemmaPostPreComputableMainPaper}\label{Lemma:PostPreComputable}
    Let $\anngvas$ be a non-linear NGVAS with all the perfectness properties excluding \perfectnesspumpingnospace, and let $\perfect$ be reliable up to $\rankof{\anngvas}$.
    Then functions $\postfunc$ and $\prefunc$ are computable for the respective domains $\infuncdomain$ and $\outfuncdomain$.
\end{restatable}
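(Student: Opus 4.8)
The plan is to focus on $\postfunc$ (the argument for $\prefunc$ being symmetric, by reversal) and to carry out the case analysis already sketched in Sections~\ref{Section:OutlineSimpleCases}--\ref{Section:PostHardCase2}, assembling the pieces that were stated there as restatable lemmas. First I would observe that to compute $\postfuncNof{\anngvas}{\amarking,\asymbol}$ it suffices to compute the set of reachable concrete markings from each $\amarking'\sqsubseteq\amarking$ derivable from $\asymbol$, take the downward closure, and apply the ideal decomposition $\iddecomp$; since there are only finitely many $\amarking'$ with $\amarking'\sqsubseteq\amarking$ restricted to the concretely tracked counters (the $\omega$-counters being irrelevant by the strong monotonicity built into the NGVAS semantics on $\unconstrained$), the whole task reduces to computing, for a fixed concrete-on-non-$\omega$-counters marking, a finite ideal representation of the reachable set. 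I would then dispatch the easy cases exactly as in the outline: if $\asymbol\in\terms$ is a childNGVAS (assumption \extraa), it has strictly smaller rank by \Cref{Lemma:ParentChildRank}, so reliability of $\perffun$ up to $\rankof{\anngvas}$ yields a perfect deconstruction from which $\postfunc$ is read off via the iteration lemma (\Cref{Proposition:Iteration}) together with the characteristic equations; if $\unconstrained\subsetneq\omegaof{\amarking}$ (assumption \extrab), I pass to the variant $\otherctxNGVAS{\anngvas}{\amarking,\asymbol,\outof{\asymbol}}$, which has the same reachable set but strictly smaller $d-|\unconstrained|$ component and hence lower rank, and again invoke reliability.

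The bulk of the work is the remaining case $\unconstrained=\omegaof{\amarking}\subseteq\abdinfomid$, $\asymbol=\anonterm$ a non-terminal, with the "relevant" counters $\abdinfomid\setminus\unconstrained$ to be pumped. Here I would split further according to assumptions \extrac\ (\perfectnesspumpingnospace\ already fails in an approximation $\postapprox_{\acounter},\preapprox_{\acounterp}$ ignoring a pair of relevant counters) and \extrad\ (\perfectnesspumpingnospace\ fails in the integer relaxation): in either sub-case the approximations are themselves computable by the already-handled \extrab-style reasoning (for \extrac) or trivially by integer linear programming (for \extrad), the Karp-Miller construction of \Cref{Lemma:WitnessGrammarTermination} resp. ILP detects the failure, \Cref{Lemma:Decomp} produces a rank-reducing decomposition, and we again finish as in \extraa. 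After these reductions we are in the situation of Hard Case~1, i.e. the hypotheses of \Cref{Lemma:SimpleBound}, \Cref{Lemma:Bound} and \Cref{Lemma:PostComputableOverview} are met: $\neg\extrab\wedge\neg\extrac\wedge\neg\extrad$. \Cref{Lemma:Bound} supplies a computable bound $\abigbound$ such that $\postfuncNof{\anngvas}{\amarking,\anonterm}$ is computable whenever some relevant counter of $\amarking$ exceeds $\abigbound$ (the almost-pumping runs get stitched into genuine pumping runs by the Rackoff-style construction of \Cref{Lemma:SimpleBound}, making the NGVAS perfect and allowing the read-off), and \Cref{Lemma:PostComputableOverview} handles the complementary finite regime $\amarking\le\abigbound^{\adim}$ by the marked-parse-tree search with the output-acceleration. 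Combining the two covers all $\amarking$, so $\postfunc$ is computable on $\infuncdomain$; dualizing gives $\prefunc$ on $\outfuncdomain$.

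Two points need care. First, the whole argument is embedded in the Noetherian recursion on rank: every appeal to "reliability of $\perffun$" or to \Cref{Lemma:Decomp} is on an object of strictly smaller rank (or head-dominated), so I must check the rank-decrease claims carefully — in the \extrab\ reduction that $\omegaof{\amarking}$ strictly increasing the $\unconstrained$-component lowers $d-|\unconstrained|$, in the childNGVAS case that \Cref{Lemma:ParentChildRank} applies, and in the \extrac/\extrad\ cases that the decomposition from \Cref{Lemma:Decomp} is head-dominated hence rank-reducing. Second, and this is the main obstacle, the arguments supporting \Cref{Lemma:Bound} and \Cref{Lemma:PostComputableOverview} are the genuinely new content: for \Cref{Lemma:Bound} one must verify that the Rackoff bound obtained by induction on the number of pumpable relevant counters — with the minimal enabling markings computed as a fixed point using the Karp-Miller tree and the length bound feeding back into the counter-value bound via \Cref{Lemma:SimpleBound} — actually terminates and is correct (the interplay "length bound of the induction hypothesis $\Rightarrow$ counter-value bound $k\cdot(\rackfunof{l}+1)$" is the delicate spot, since it must be well-founded and must not depend circularly on the quantity being bounded); and for \Cref{Lemma:PostComputableOverview} one must argue soundness of the output-acceleration (via the context-free pumping lemma, using that initial markings coincide and final markings strictly increase, plus VAS monotonicity) together with termination of the marked-parse-tree search ($\adim$ accelerations per branch, no repeated nodes). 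I expect the correctness of this Rackoff-style bound and of the new acceleration to be where essentially all the difficulty lies; the rest is bookkeeping over the case split and invocation of the cited lemmas.
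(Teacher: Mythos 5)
Your plan is correct and follows essentially the same route as the paper: the case split \extraa--\extrad\ with rank-reducing variants and coverability-grammar decompositions, then the Rackoff-style bound for inputs with a large relevant counter and the witness-tree (marked-parse-tree) search with output acceleration for the bounded regime, all embedded in the rank recursion. The only step you gloss over is that the query variants $\otherctxNGVAS{\anngvas}{\amarking,\anonterm,\outof{\anonterm}}$ do not automatically inherit \perfectnesscountersnospace\ from $\anngvas$, which the paper repairs before the hard cases by computing $\omegaoutcount$ and passing to the family $\thezdecomp$ — bookkeeping of exactly the kind you anticipate, not a change of approach.
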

We make our notion of computability precise.
We treat the functions $\postfunc$ and $\prefunc$ as if they have the domain $\Nomega^{d}\times(\nonterms\cup\trms)$, but in accordance with our programming model, they also have the NGVAS $\anngvas$ as an input.
Computability for a domain should be understood as a promise problem.
The functions expect the input NGVAS $\anngvas$ to have the perfectness properties except \perfectnesspumpingnospace.
Thus, if it does not satisfy the necessary perfectness conditions, there is no guarantee on the correctness of the function.

As we discussed in \Cref{Section:OutlineSimpleCases}, our strategy in tackling this problem is to establish assumptions as strong as before actually computing $\postfunc$ and $\prefunc$.
We have discussed these assumptions \extraa-\extrad\ in \Cref{Section:OutlineSimpleCases}.
For the rest of this subsection, we fix an input $(\amarking_{in}, \asymbol_{in})\in\infuncdomain$ and focus our attention to $\postfunc$ and inputs from the domain $\infuncdomain$.
All arguments also apply to $\prefunc$ and the domain $\outfuncdomain$.

For the sake of brevity, we call a variant $\otherctxNGVAS{\anngvas}{\amarking, \anonterm, \amarkingp}$ a \emph{relevant variant}, if $(\amarking, \anonterm, \amarkingp)\in\inmarkingdomain\times\nonterms\times\outmarkingdomain$.
For notational convenience, assume $\asymbol_{in}\not\in\trms$ for the moment.
We understand the $\postfunc$ query as the NGVAS $\otherctxNGVAS{\anngvas}{\amarking_{in}, \asymbol_{in}, \outof{\asymbol_{in}}}$.
The runs of this NGVAS contains every run considered by $\postfunc$.
The assumptions \extraa-\extrad, each by different mechanisms, guarantee a workaround that allows us to refine $\otherctxNGVAS{\anngvas}{\amarking_{in}, \asymbol_{in}, \outof{\asymbol_{in}}}$.
From this refinement, we can use reliability to call $\perfect$, and get a perfect deconstruction.
We can then read the output values off the perfect deconstruction.
Here, we discuss the conditions and the mechanisms in detail.
The cases which we cannot handle using these conditions, are the hard cases.
These require special attention.

\paragraph*{Case \extraa, ChildNGVAS.}
The case \extraa\ is the case where we have a $\postfunc$ query formulated over a child NGVAS $\asymbol=\anngvasp\in\rectrms$.
We express the query as an NGVAS and call $\perfect$. 
Since \Cref{Lemma:NonLinearChildRank} holds, the NGVAS $\anngvasp$ in question is already of lower non-linear rank.
We replace the $\acontextin$ component of $\anngvasp$ with $\amarking_{in}$ to get $\anngvasp_{\amarking_{in}}$.
If $\amarking_{in}\not\compwith\anngvasp.\acontextin$, the construction might not yield a sound NGVAS.
But this is not a problem, since if $\amarking_{in}\not\compwith\anngvasp.\acontextin$, then there is no run $(\amarking_{in}', \arun, \amarkingp_{in}')\in\runsof{\anngvasp}$ with $\amarking_{in}'\sqsubseteq\amarking_{in}$, and we get $\postfuncNof{}{\amarking_{in}, \anngvasp}=\emptyset$.
The assumption $\unconstrained\subseteq\omegaof{\amarking_{in}}$ of $\infuncdomain$ ensures $\anngvas.\unconstrained\subseteq\anngvasp.\unconstrained$ is kept, so we get $\recrankof{\anngvasp_{\amarking_{in}}}<\recrankof{\anngvas}$ as well.
Then we know that $\perfect$ is reliable for $\anngvasp_{\amarking_{in}}$, which means that the call $\perfectof{\cleanof{\anngvas_{\amarking_{in}}}}$ terminates correctly.
Thanks to these observations, we get computability for the domain $\afuncdomain_{\rectrms}=\setcond{(\amarking, \asymbol)\in\Nomega^{d}\times\rectrms}{\unconstrained\subseteq\omegaof{\amarking}\subseteq\omegaof{\abdinfomid}}$.
We have ommited some technical details related to cleanness, which can be found in \Cref{Section:CoverabilityEasyCasesL3}
\begin{restatable}{lemma}{LemmaTermsPostPreComputableMainPaper}\label{Lemma:TermsPostPreComputable}
    Let $\perfect$ be reliable up to $\rankof{\anngvas}$.
    We can compute $\postfuncN{\anngvas}$ and $\prefuncN{\anngvas}$ restricted to the domain $\afuncdomain_{\rectrms}$.
\end{restatable}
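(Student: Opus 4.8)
We describe the algorithm for $\postfuncN{\anngvas}$; the one for $\prefuncN{\anngvas}$ is symmetric, with the roles of input and output markings (and hence of $\acontextin$ and $\acontextout$) exchanged. Fix a pair $(\amarking, \anngvasp)\in\afuncdomain_{\terms}$, so that $\anngvasp\in\terms$ is a childNGVAS of $\anngvas$ and $\unconstrained\subseteq\omegaof{\amarking}\subseteq\abdinfomid$. By the consistency conditions for a childNGVAS (for non-linear $\anngvas$) we have $\omegaof{\anngvasp.\acontextin}=\omegaof{\anngvasp.\acontextout}=\anngvasp.\unconstrained=\abdinfomid$, hence $\omegaof{\amarking}\subseteq\omegaof{\anngvasp.\acontextin}$. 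We must compute $\postfuncNof{\anngvas}{\amarking, \anngvasp}=\iddecompof{\downclsof{\setcond{\amarkingp'\in\N^{d}}{(\amarking', \arun, \amarkingp')\in\runsof{\anngvasp},\;\amarking'\sqsubseteq\amarking}}}$. If $\amarking\not\compwith\anngvasp.\acontextin$, then no run $(\amarking', \arun, \amarkingp')\in\runsof{\anngvasp}$ has $\amarking'\sqsubseteq\amarking$: such a run already satisfies $\amarking'\sqsubseteq\anngvasp.\acontextin$, which together with $\amarking'\sqsubseteq\amarking$ would force $\amarking\compwith\anngvasp.\acontextin$; so $\postfuncNof{\anngvas}{\amarking, \anngvasp}=\iddecompof{\downclsof{\emptyset}}$ and we are done. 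Otherwise $\amarking\compwith\anngvasp.\acontextin$, and since moreover $\omegaof{\amarking}\subseteq\omegaof{\anngvasp.\acontextin}$ we in fact get $\amarking\sqsubseteq\anngvasp.\acontextin$, so $\amarking\specmeet\anngvasp.\acontextin=\amarking$.

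The plan is to express the restricted run set as an NGVAS of smaller rank. Let $\anngvasp_{\amarking}$ be obtained from $\anngvasp$ by replacing the input marking $\anngvasp.\acontextin$ with $\amarking$ and the set of unconstrained counters $\anngvasp.\unconstrained$ with $\omegaof{\amarking}$, leaving the grammar, the restriction, the output marking and the boundedness information untouched. This yields a well-formed context/boundedness datum, since $\omegaof{\amarking}\subseteq\omegaof{\amarking}\cap\omegaof{\anngvasp.\acontextout}$ and $\omegaof{\amarking}\subseteq\abdinfomid$; the minor cleanness defects introduced by changing $\anngvasp.\acontextin$ (essentially $\perfectnesssolnospace$, the counter-support conditions and, in the linear case, $\perfectnesschildperiodsbdnospace$) will be repaired by the cleaning machinery below. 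Two observations: first, $\runsof{\anngvasp_{\amarking}}=\setcond{(\amarking', \arun, \amarkingp')\in\runsof{\anngvasp}}{\amarking'\sqsubseteq\amarking}$, because a run of $\anngvasp_{\amarking}$ starts below $\amarking\sqsubseteq\anngvasp.\acontextin$ and otherwise obeys the unchanged grammar, restriction and output context of $\anngvasp$, and conversely; second, since the grammar and all subNGVAS are unchanged, all local ranks along every branch are unchanged, so $\srankof{\anngvasp_{\amarking}}=\srankof{\anngvasp}$. Consequently $\rankof{\anngvasp_{\amarking}}<\rankof{\anngvas}$: by \Cref{Lemma:ParentChildRank}, $\srankof{\anngvasp}\lexlessof{\lrankof{\anngvas}}\srankof{\anngvas}$, so $\srankof{\anngvasp_{\amarking}}\lexless\srankof{\anngvas}$; and for the most significant rank component, $\unconstrained\subseteq\omegaof{\amarking}$ gives $d-|\omegaof{\amarking}|\leq d-|\unconstrained|$, so either the comparison is decided there, or the components are equal and $\srankof{\anngvasp_{\amarking}}\lexless\srankof{\anngvas}$ decides it.

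Now $\anngvasp_{\amarking}$ inherits $\perfectnesschildrennospace$ and $\perfectnessbasenospace$ from the perfectness of $\anngvasp$ (these concern only the unchanged children of $\anngvasp$), so the cleaning machinery applies and, since $\perffun$ is reliable below $\rankof{\anngvas}$ and $\rankof{\anngvasp_{\amarking}}<\rankof{\anngvas}$, produces a clean deconstruction of $\anngvasp_{\amarking}$ whose rank is still below $\rankof{\anngvas}$; applying $\perffun$ elementwise then yields a finite perfect deconstruction $\adeconst$ of $\anngvasp_{\amarking}$, with $\runsof{\adeconst}=\runsof{\anngvasp_{\amarking}}$. For each perfect clean $\anngvaspp\in\adeconst$, the iteration lemma (\Cref{Proposition:Iteration}), together with $\perfectnesssolnospace$ and $\perfectnesspumpingnospace$, shows that the downward closure of $\setcond{\amarkingp'\in\N^{d}}{(\amarking', \arun, \amarkingp')\in\runsof{\anngvaspp}}$ is exactly the ideal generated by $\anngvaspp.\acontextout$ (the concrete counters are pinned to their values there, and the counters carrying $\omega$ are genuinely pumpable), hence computable. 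Taking the union over $\anngvaspp\in\adeconst$ and re-extracting the maximal ideals produces $\postfuncNof{\anngvas}{\amarking, \anngvasp}$.

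The main obstacle is the interaction of the context change with cleanness and rank: one must verify that re-establishing the cleanness conditions destroyed by replacing $\anngvasp.\acontextin$ can be done by a bounded amount of cleaning (for instance via $\centerdec$, which only needs $\perfectnesschildrennospace$ and $\perfectnessbasenospace$) that keeps the rank of the resulting deconstruction strictly below $\rankof{\anngvas}$, so that the recursive calls to $\perffun$ are licensed. These bookkeeping details, together with the minor point of how $\asymbol_{in}\in\terms$ is compared with $\anngvasp.\acontextin$ when $\amarking\not\compwith\anngvasp.\acontextin$, are exactly the "technical details related to cleanness" deferred to the appendix; the passage from a perfect deconstruction to the value of $\postfunc$ is conceptually routine but still rests on the iteration lemma and the precise shape of perfectness.
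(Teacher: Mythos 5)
Your proposal is correct and follows essentially the same route as the paper's proof: handle the incompatible case separately, encode the query as a child variant $\anngvasp_{\amarking}$ of strictly smaller rank, re-establish cleanness via $\centerdec$, invoke $\perffun$ by reliability, and read off the outputs from the $\acontextout$ components of the perfect deconstruction using the iteration lemma together with \perfectnesscountersnospace. The only cosmetic differences are that the paper sets the new input to $\amarking\sqcap\anngvasp.\acontextin$ while keeping $\anngvasp.\unconstrained$ (whereas you replace the unconstrained set by $\omegaof{\amarking}$), and that your consistency argument is phrased for the non-linear case only; neither affects the substance.
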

For the remaining cases, we may assume $\asymbol_{in}\not\in\rectrms$.
Because we will be working with the non-linear case, where $\rectrms=\trms$, we assume $\asymbol_{in}\not\in\trms$.
We use the symbol $\anonterminalin\in\nonterms$ for $\asymbol_{in}$ to make this clear.
Note that when taken with \Cref{Lemma:ExtPumpCompLin}, \Cref{Lemma:TermsPostPreComputable} already shows that $\refineintpumpof{\anngvas}$ and $\refinepumpof{\anngvas}$ can be computed for linear $\anngvas$.
\paragraph*{Case \extrab, Lower Dimensions.}
We move on to case \extrab.
Here, one counter that admits reachability constraints is already $\omega$ on the input.
Intuitively, such a counter is not relevant for $\postfunc$, and therefore gets shielded from reachability constraints in the NGVAS $\otherctxNGVAS{\anngvas}{\amarkingin, \anonterminalin, \outof{\anonterminalin}}$.
This lowers the rank wrt. $\anngvas$, and lets us call $\perfect$ in agreement with our programming model.

Formally, this is the case $(\amarkingin, \anonterminalin)\in\easydomain$, where 
$$\easydomain=\bigcup_{\unconstrained\subsetneq\acounterset}\omegamrkdomainof{\acounterset}\times(\nonterms\cup\trms).$$
Consider the NGVAS $\otherctxNGVAS{\anngvas}{\amarkingin, \anonterminalin, \outof{\anonterminalin}}$.
By definition, we have $\otherctxNGVAS{\anngvas}{\amarkingin, \anonterminalin, \outof{\anonterminalin}}.\unconstrained=\omegaof{\amarkingin}\cap\omegaof{\outof{\anonterminalin}}=\omegaof{\amarkingin}\cap\abdinfomid$.
Since $\omegaof{\amarkingin}\subseteq\abdinfomid$, we have $\omegaof{\amarkingin}\cap\omegaof{\outof{\anonterminalin}}=\omegaof{\amarkingin}$.
However, $\unconstrained\subsetneq\omegaof{\amarkingin}$, so we get $\recrankof{\otherctxNGVAS{\anngvas}{\amarkingin, \anonterm, \outof{\anonterminalin}}}<\recrankof{\anngvas}$ by the most significant component.
So, similarly to the previous case, we can call $\perfect$ without needing to further simplify the NGVAS.
Also similarly to the previous case, cleanness details that have been moved to \Cref{Section:CoverabilityEasyCasesL3}.
We get decidability for the domain $\easydomain$.
\begin{restatable}{lemma}{LemmaLowDimPostPreComputableMainPaper}\label{Lemma:LowDimPostPreComputable}
    Let $\perfect$ be reliable up to $\rankof{\anngvas}$. 
    Then, we can compute $\prefuncN{\anngvas}$ and $\postfuncN{\anngvas}$ restricted to the domain $\easydomain$.
\end{restatable}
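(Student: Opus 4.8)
Fix $(\amarking,\asymbol)\in\easydomain$, so that $\unconstrained\subsetneq\omegaof{\amarking}\subseteq\abdinfomid$. If $\asymbol\in\terms$ we are already done by \Cref{Lemma:TermsPostPreComputable}, since $\afuncdomain_{\terms}$ contains the terminal part of $\easydomain$; so assume $\asymbol=\anonterm\in\nonterms$. I describe the computation of $\postfuncN{\anngvas}$; the case of $\prefuncN{\anngvas}$ is symmetric, obtained by reversing runs, exchanging input and output markings, and starting from the variant $\otherctxNGVAS{\anngvas}{\inof{\anonterm},\anonterm,\amarking}$ instead. The idea, indicated for Assumption~\extrab in \Cref{Section:OutlineSimpleCases}, is to realise the query as the variant $V:=\otherctxNGVAS{\anngvas}{\amarking,\anonterm,\outof{\anonterm}}$, observe that its rank is strictly below $\rankof{\anngvas}$, use the reliability of $\perffun$ to obtain a perfect deconstruction of $V$, and read the answer off it.

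\textbf{Step 1: well-formedness and rank drop.} I would first check that $V$ is a genuine (weak) NGVAS: relative to $\anngvas$ it only changes the start symbol to $\anonterm$, the context to $(\amarking,\outof{\anonterm})$, relaxes the restriction to $\Z^{d}$, and enlarges the unconstrained set to $\omegaof{\amarking}\cap\omegaof{\outof{\anonterm}}$. Since $\anngvas$ is branching, $\omegaof{\outof{\anonterm}}=\abdinfomid$, so this set is $\omegaof{\amarking}$; and as $\omegaof{\amarking}\subseteq\abdinfomid$, every subNGVAS already carries $\omegaof{\amarking}$ inside its own unconstrained set and both new context markings carry $\omega$ on $\omegaof{\amarking}$, so all compatibility conditions of an NGVAS survive. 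Now $V.\unconstrained=\omegaof{\amarking}\supsetneq\unconstrained$, hence the most significant component of the rank, $d-\cardof{V.\unconstrained}$, is strictly smaller for $V$ than for $\anngvas$; thus $\rankof{V}<\rankof{\anngvas}$.

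\textbf{Step 2: perfect deconstruction and read-off.} Since $\rankof{V}<\rankof{\anngvas}$ and $\perffun$ is reliable up to $\rankof{\anngvas}$, invoking $\perffun$ on $\cleanof{V}$ (the head-domination bookkeeping that licenses this $\clean$-call is carried out exactly as in case~\extraa and is deferred to the appendix) yields a finite, computable perfect deconstruction $\adeconst$ of $V$ with $\runsof{\adeconst}=\runsof{V}$. Next, by the semantics of the variant, the set of output markings occurring in $\runsof{V}$ is exactly $\setcond{\amarkingp'\in\N^{d}}{(\amarking',\arun,\amarkingp')\in\runsof{\anonterm},\ \amarking'\sqsubseteq\amarking}$: one inclusion is immediate, and the other uses that a run derivable from $\anonterm$ respects the input and output contexts of every childNGVAS it passes through, so it ends below $\outof{\anonterm}$. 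Finally, exactly as in case~\extraa, one reads $\postfuncNof{\anngvas}{\amarking,\anonterm}$ off $\adeconst$: each $\anngvasp'\in\adeconst$ is perfect, so by the iteration lemma (\Cref{TheoremIterationLemmaNonLinearOverview}) together with \perfectnesscounters and \perfectnesspumpingnospace the downward closure of the output markings of $\runsof{\anngvasp'}$ equals the ideal described by $\anngvasp'.\acontextout$ (its concrete counters are hit exactly, its $\omega$-counters can be pumped to any value). Hence $\postfuncNof{\anngvas}{\amarking,\anonterm}=\iddecompof{\downclsof{\setcond{\anngvasp'.\acontextout}{\anngvasp'\in\adeconst}}}$, which is computable because $\adeconst$ is finite and computable.

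\textbf{Main obstacle.} The well-formedness verification for $V$ and, more seriously, the head-domination bookkeeping needed to legitimately run $\clean$ on $V$ are the delicate technical points; I would handle them as in the treatment of case~\extraa and relegate the details to the appendix. The conceptual core is (i) that passing to the variant genuinely lowers the rank in its most significant component, which is what unlocks the recursion through $\perffun$, and (ii) that a perfect deconstruction faithfully represents the downward-closed reachability set together with its $\omega$-pattern, which is where the iteration lemma and the perfectness conditions enter.
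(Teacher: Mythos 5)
Your proposal is essentially the paper's own proof: realise the query as the variant $\otherctxNGVAS{\anngvas}{\amarking,\anonterm,\outof{\anonterm}}$, note that its unconstrained set becomes $\omegaof{\amarking}\supsetneq\unconstrained$ so the most significant rank component strictly drops, invoke the reliability of $\perffun$ to get a perfect deconstruction, and read the output markings off the $\acontextout$ components. Two cosmetic differences: the paper first checks $\amarking\compwith\inof{\anonterm}$ and returns $\emptyset$ if a concretely tracked counter disagrees (a case your well-formedness step silently assumes away), and it establishes cleanness via the lighter $\centerdec$ (justified by the variant keeping \perfectnesschildren and \perfectnessbasenospace) rather than a full $\clean$ call with head-domination bookkeeping.
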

For the following cases, we can assume $(\amarking, \anonterm)\not\in\easydomain$.
Note that if $\unconstrained\subsetneq\omegaof{\acontextin}$ already holds, then $\infuncdomain\setminus\easydomain=\emptyset$.
Since $\unconstrained\subseteq\omegaof{\acontextin}$ is clear by the NGVAS definition, \emph{we implicitly assume $\omegaof{\acontextin}=\unconstrained=\omegaof{\amarking}$ for all statements beyond this point.}

\paragraph*{Case \extrac, No Lower Dimensional Pump.}
We move on to \extrac, we consider the case where \perfectnesspumping does not hold, even if we ignore some of the counters.
So far, our focus had been on the NGVAS $\otherctxNGVAS{\anngvas}{\amarkingin, \anonterminalin, \outof{\anonterminalin}}$.
However, for the future cases, we will need to decompose this NGVAS into $\setcond{\otherctxNGVAS{\anngvas}{\amarkingin, \anonterminalin, \amarkingp_{i}}}{i\in\anindexset}$, where potentially $\amarkingp_{i}\neq\outof{\anonterminalin}$ holds.
In order to benefit from \extrac\ in the future cases, we formulate it with respect to an arbitrary variant $\otherctxNGVAS{\anngvas}{\amarking, \anonterm, \amarkingp}$ of $\anngvas$.
The idea is the following.
If we ignore one counter $i\in\abdinfomid\setminus\unconstrained$ and $j\in\abdinfomid\setminus\unconstrained$, the $\postfunc$ and $\prefunc$ calls we need to construct the coverability grammar become simpler.
In case \perfectnesspumping does not hold even after ignoring these counters, then the coverability grammar gives us a decomposition.
We can handle the rest by calling $\perfect$.
We formalize this with a new set of approximators 
\begin{align*}
    \forgetfulpostapproxof{i}{\amarkingpp, \asymbolp}&=\natpostapproxof{\settoomega{\unconstrained\cup i}{\amarkingpp}, \asymbolp}\\
    \forgetfulpreapproxof{i}{\amarkingpp, \asymbolp}&=\natpreapproxof{\settoomega{\unconstrained\cup i}{\amarkingpp}, \asymbolp}
\end{align*}
for $i\in\abdinfomid$.
These just call $\natpostapprox$ and $\natpreapprox$ with additional $\omega$'s.
Clearly, the approximator conditions carry over from $\natpostapprox$ and $\natpreapprox$.
\begin{restatable}{lemma}{LemmaForgetfulApproximatorsMainPaper}\label{Lemma:ForgetfulApproximators}
    Let $i\in\abdinfomid\setminus\unconstrained$.
    Then, $\forgetfulpostapprox{i}$ and $\forgetfulpreapprox{i}$ are post- resp. pre-approximators.
\end{restatable}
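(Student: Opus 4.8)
The plan is to derive each of the five requirements in the definition of a post-approximator — finiteness of outputs, exactness on concretely tracked counters, over-approximation of reachability, correct unboundedness, and precision — directly from the corresponding requirement for $\natpostapprox$, which holds by \Cref{Lemma:NatPostPreApproximators}. The only properties of the map $\amarking\mapsto\settoomega{\unconstrained\cup i}{\amarking}$ I would use are that it is $\sqsubseteq$-monotone, that it only enlarges the $\omega$-set — so $\amarking\sqsubseteq\settoomega{\unconstrained\cup i}{\amarking}$ and $\omegaof{\settoomega{\unconstrained\cup i}{\amarking}}=\omegaof{\amarking}\cup\unconstrained\cup\set{i}\supseteq\omegaof{\amarking}$ — and that it acts as the identity on any marking whose $\omega$-set already contains $\unconstrained\cup\set{i}$. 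The pre-approximator claim for $\forgetfulpreapprox{i}$ is entirely symmetric, interchanging $\inof{\cdot}$ with $\outof{\cdot}$ and forward with backward reachability, so I would only write out the post case.

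Four of the five checks are bookkeeping. Output finiteness is immediate since $\forgetfulpostapproxof{i}{\amarking, \asymbol}=\natpostapproxof{\settoomega{\unconstrained\cup i}{\amarking}, \asymbol}$ and $\natpostapprox$ has finite outputs. Exactness holds because every $\amarkingp$ returned by $\natpostapprox$ already satisfies $\amarkingp\sqsubseteq\outof{\asymbol}$. Correct unboundedness follows from $\omegaof{\amarking}\subseteq\omegaof{\settoomega{\unconstrained\cup i}{\amarking}}\subseteq\omegaof{\amarkingp}$, the last inclusion by correct unboundedness of $\natpostapprox$. For over-approximation, given a derivation $\asymbol\to^{*}\aword\in\terms^{*}$, a run $(\amarking, \arun, \amarkingp)\in\runsof{\aword}$, and $\amarking_{\omega}$ with $\amarking\sqsubseteq\amarking_{\omega}$, I would note $\amarking\sqsubseteq\amarking_{\omega}\sqsubseteq\settoomega{\unconstrained\cup i}{\amarking_{\omega}}$ and apply over-approximation of $\natpostapprox$ to the marking $\settoomega{\unconstrained\cup i}{\amarking_{\omega}}$, obtaining $\amarkingp_{\omega}\in\forgetfulpostapproxof{i}{\amarking_{\omega}, \asymbol}$ with $\amarkingp\sqsubseteq\amarkingp_{\omega}$.

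The one step that needs a genuine (if small) observation is precision. Given $(\amarking, \anonterm)$, a rule $\anonterm\to\asymbol.\asymbolp$, and $\amarkingp'\in\forgetfulpostapproxof{i}{\forgetfulpostapproxof{i}{\amarking, \asymbol}, \asymbolp}$, I would pick the witnessing $\amarkingpp\in\forgetfulpostapproxof{i}{\amarking, \asymbol}=\natpostapproxof{\settoomega{\unconstrained\cup i}{\amarking}, \asymbol}$ with $\amarkingp'\in\natpostapproxof{\settoomega{\unconstrained\cup i}{\amarkingpp}, \asymbolp}$. Correct unboundedness of $\natpostapprox$ applied to the first membership gives $\unconstrained\cup\set{i}\subseteq\omegaof{\settoomega{\unconstrained\cup i}{\amarking}}\subseteq\omegaof{\amarkingpp}$, hence $\settoomega{\unconstrained\cup i}{\amarkingpp}=\amarkingpp$, so in fact $\amarkingp'\in\natpostapproxof{\natpostapproxof{\settoomega{\unconstrained\cup i}{\amarking}, \asymbol}, \asymbolp}$. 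Precision of $\natpostapprox$ then yields $\amarkingp\in\natpostapproxof{\settoomega{\unconstrained\cup i}{\amarking}, \anonterm}=\forgetfulpostapproxof{i}{\amarking, \anonterm}$ with $\amarkingp'\sqsubseteq\amarkingp$.

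The main obstacle — really the only non-routine point — is this collapse $\settoomega{\unconstrained\cup i}{\amarkingpp}=\amarkingpp$: it is what lets the nested application $\forgetfulpostapprox{i}\circ\forgetfulpostapprox{i}$ factor through $\natpostapprox\circ\natpostapprox$ so that precision of $\natpostapprox$ becomes applicable, and it relies on $\natpostapprox$ never deleting $\omega$-entries. Everything else reduces term-by-term to \Cref{Lemma:NatPostPreApproximators}, and the analogous checks for $\forgetfulpreapprox{i}$ go through verbatim with input/output and forward/backward reachability swapped; this establishes that $\forgetfulpostapprox{i}$ and $\forgetfulpreapprox{i}$ are post- resp.\ pre-approximators.
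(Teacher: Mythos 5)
Your proof is correct and takes essentially the same route as the paper, which simply asserts that the approximator conditions carry over from $\natpostapprox$ and $\natpreapprox$ (\Cref{Lemma:NatPostPreApproximators}) without writing out the checks. The one detail the paper leaves implicit — that in the precision check the intermediate marking satisfies $\settoomega{\unconstrained\cup i}{\amarkingpp}=\amarkingpp$ because correct unboundedness of $\natpostapprox$ preserves the added $\omega$-entries, so the nested application factors through $\natpostapprox$ — is exactly the observation you supply, and it is sound.
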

As we discussed in \extrab, we can already compute $\postfunc$ and $\prefunc$ for $\easydomain$.
Then, by \Cref{Lemma:LowDimPostPreComputable}, we already know that these approximators are also computable.
Thanks to \Cref{Lemma:DecompGivenCG}, we know that we can compute $\wtgrammarof{\otherctxNGVAS{\anngvas}{\amarking, \anonterm, \amarkingp}, \forgetfulpostapprox{i}, \forgetfulpreapprox{j}}$ for $i, j\in\abdinfomid\setminus\unconstrained$ and the arbitrary variant $\otherctxNGVAS{\anngvas}{\amarking, \anonterm, \amarkingp}$.
\begin{restatable}{corollary}{CorollaryForgetfulApproxComputabilityMainPaper}\label{Corollary:ForgetfulApproxComputability}
    Let $\perfect$ be reliable up to $\rankof{\anngvas}$, and $\otherctxNGVAS{\anngvas}{\amarking, \anonterm, \amarkingp}$ a relevant variant of $\anngvas$.
    For $i, j\in\abdinfomid\setminus\unconstrained$, the approximators $\forgetfulpostapprox{i}$ and $\forgetfulpreapprox{j}$ are computable. 
    Furthermore, we can effectively construct $\wtgrammarof{\otherctxNGVAS{\anngvas}{\amarking, \anonterm, \amarkingp}, \forgetfulpostapprox{i}, \forgetfulpreapprox{j}}$.
\end{restatable}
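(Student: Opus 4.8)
The plan is to reduce both claims to facts already proved about $\natpostapprox$, $\natpreapprox$ and to the low-dimensional computability result \Cref{Lemma:LowDimPostPreComputable}. First I would unfold the definitions: by construction $\forgetfulpostapproxof{i}{\amarkingpp, \asymbolp}=\natpostapproxof{\settoomega{\unconstrained\cup i}{\amarkingpp}, \asymbolp}$ and symmetrically for the pre-side, so $\forgetfulpostapprox{i}$ is exactly $\natpostapprox$ precomposed with the (trivially computable) operation $\amarkingpp\mapsto\settoomega{\unconstrained\cup i}{\amarkingpp}$. Hence its computability on a domain $E$ follows as soon as $\natpostapprox$ is computable on the image of $E$ under this operation, and likewise for $\forgetfulpreapprox{j}$ and $\natpreapprox$. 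That the forgetful functions are post- resp.\ pre-approximators is already \Cref{Lemma:ForgetfulApproximators}.

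The key observation is that this image always lands in $\easydomain$. Indeed, on the domain on which \Cref{Lemma:CGTermination} requires the post-approximator we have $\omegaof{\amarkingpp}\subseteq\abdinfomid$, so $\omegaof{\settoomega{\unconstrained\cup i}{\amarkingpp}}=\omegaof{\amarkingpp}\cup\unconstrained\cup\set{i}$ is still contained in $\abdinfomid$ (using $i\in\abdinfomid$ and $\unconstrained\subseteq\abdinfomid$) and strictly contains $\unconstrained$ because $i\notin\unconstrained$; thus $(\settoomega{\unconstrained\cup i}{\amarkingpp}, \asymbolp)\in\easydomain$. Now \Cref{Lemma:LowDimPostPreComputable}, which needs only the reliability of $\perfect$ up to $\rankof{\anngvas}$, yields that $\postfuncN{\anngvas}$ and $\prefuncN{\anngvas}$ are computable on $\easydomain$; since $\natpostapprox$ and $\natpreapprox$ are only thin wrappers that, from each output of $\postfuncN{\anngvas}$ resp.\ $\prefuncN{\anngvas}$, enumerate the finitely many $\sqsubseteq$-smaller markings with the same $\omega$-support (cf.\ the discussion around \Cref{Lemma:NatPostPreApproximators}), they too are computable on $\easydomain$. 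Combining the last two observations, $\forgetfulpostapprox{i}$ and $\forgetfulpreapprox{j}$ are computable on the domains required by the coverability-grammar construction for $\otherctxNGVAS{\anngvas}{\amarking, \anonterm, \amarkingp}$.

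For the second claim I would then simply invoke \Cref{Lemma:CGTermination}: once the two approximators are computable on the domains determined by the input and output markings of the variant $\otherctxNGVAS{\anngvas}{\amarking, \anonterm, \amarkingp}$, the iterative construction of $\wtgrammarof{\otherctxNGVAS{\anngvas}{\amarking, \anonterm, \amarkingp}, \forgetfulpostapprox{i}, \forgetfulpreapprox{j}}$ terminates, i.e.\ the grammar is effectively constructible. The only point needing care, and the \textbf{main obstacle} (a mild one), is the bookkeeping on domains: one must check that for a variant of $\anngvas$ the domain on which \Cref{Lemma:CGTermination} needs the approximators is contained in the domain on which the previous paragraph proves them computable, and in particular that passing to $\settoomega{\unconstrained\cup i}{\amarkingpp}$ never pushes a coordinate outside $\abdinfomid$ — which is precisely where the hypotheses $i, j\in\abdinfomid\setminus\unconstrained$ enter.
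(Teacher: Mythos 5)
Your proposal is correct and follows essentially the same route as the paper: unfold $\forgetfulpostapprox{i}$, $\forgetfulpreapprox{j}$ as $\natpostapprox$, $\natpreapprox$ precomposed with $\settoomega{\unconstrained\cup i}{-}$, note that since $i,j\in\abdinfomid\setminus\unconstrained$ the resulting inputs lie in $\easydomain$, apply \Cref{Lemma:LowDimPostPreComputable} (plus the thin-wrapper observation behind $\natpostapprox$, $\natpreapprox$), and conclude effective constructibility of the coverability grammar from its termination lemma. The domain bookkeeping you flag as the main obstacle is exactly the content the paper relies on implicitly, so nothing is missing.
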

Now we put these approximators to use.
We call a triple $(\amarking, \anonterm, \amarkingp)\in\inmarkingdomain\times\nonterms\times\outmarkingdomain$ \emph{simply decomposable}, denoted $(\amarking, \anonterm, \amarkingp)\in\simplydecomps$, if the coverability grammar $\wtgrammarof{\otherctxNGVAS{\anngvas}{\amarking, \anonterm, \amarkingp}, \forgetfulpostapprox{i}, \forgetfulpreapprox{j}}$ remains bounded for any $i, j\in\abdinfomid\setminus\unconstrained$.
We get two consequences.
First, if $(\amarking, \anonterm, \amarkingp)\in\simplydecomps$, then we get a perfect deconstruction.
\begin{restatable}{lemma}{LemmaSimplePerfectDecompositionMainPaper}\label{Lemma:SimplePerfectDecomposition} 
    Let $\perfect$ be reliable up to $\rankof{\anngvas}$, and let $\otherctxNGVAS{\anngvas}{\amarking, \anonterm, \amarkingp}$ be a relevant variant of $\anngvas$.
    We can decide whether $(\amarking, \anonterm, \amarkingp)\in\simplydecomps$ holds.
    If $(\amarking, \anonterm, \amarkingp)\in\simplydecomps$, we can compute a perfect deconstruction $\adecomp$ of $\otherctxNGVAS{\anngvas}{\amarking, \anonterm, \amarkingp}$.
\end{restatable}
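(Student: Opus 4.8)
The plan is to reduce everything to the tools already established: the computability of the forgetful approximators and their coverability grammars (Corollary~\ref{Corollary:ForgetfulApproxComputability}), the decomposition-from-bounded-grammar result (Lemma~\ref{Lemma:DecompGivenCG}), and the reliability of $\perfect$ below $\rankof{\anngvas}$. First I would settle decidability of membership in $\simplydecomps$. By Corollary~\ref{Corollary:ForgetfulApproxComputability}, for every pair $i,j\in\abdinfomid\setminus\unconstrained$ the approximators $\forgetfulpostapprox{i}$ and $\forgetfulpreapprox{j}$ are computable and $\wtgrammarof{\otherctxNGVAS{\anngvas}{\amarking, \anonterm, \amarkingp}, \forgetfulpostapprox{i}, \forgetfulpreapprox{j}}$ can be effectively constructed (its construction terminates by Lemma~\ref{Lemma:CGTermination}, whose computability hypotheses are met precisely because we are in the regime $\neg\extrab$, i.e.\ $\omegaof{\acontextin}=\unconstrained$, so at least one relevant counter is set to $\omega$ by the forgetful approximator). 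Since $\abdinfomid\setminus\unconstrained$ is finite, we enumerate all pairs, build each coverability grammar, and check whether it contains a non-terminal of the form $(\inof{\anontermpp}, \anontermpp, \outof{\anontermpp})$; $(\amarking, \anonterm, \amarkingp)\in\simplydecomps$ iff all of them remain bounded. This is a finite, effective check.

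Now suppose $(\amarking, \anonterm, \amarkingp)\in\simplydecomps$. Fix any admissible pair $i,j$. The coverability grammar $\agram=\wtgrammarof{\otherctxNGVAS{\anngvas}{\amarking, \anonterm, \amarkingp}, \forgetfulpostapprox{i}, \forgetfulpreapprox{j}}$ remains bounded, so Lemma~\ref{Lemma:DecompGivenCG} applies: using elementary resources we compute an NGVAS $\anngvas_{\agram}$ that is a head-dominated deconstruction of $\otherctxNGVAS{\anngvas}{\amarking, \anonterm, \amarkingp}$, i.e.\ $\runsof{\anngvas_{\agram}}=\runsof{\otherctxNGVAS{\anngvas}{\amarking, \anonterm, \amarkingp}}$ and $\anngvas_{\agram}\headdom\otherctxNGVAS{\anngvas}{\amarking, \anonterm, \amarkingp}$. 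The key point is the rank bound: head-domination implies $\rankof{\anngvas_{\agram}}<\rankof{\otherctxNGVAS{\anngvas}{\amarking, \anonterm, \amarkingp}}$, and — since the variant $\otherctxNGVAS{\anngvas}{\amarking, \anonterm, \amarkingp}$ shields at least the counters in $\unconstrained$, which here equals $\omegaof{\acontextin}$ — one checks $\rankof{\otherctxNGVAS{\anngvas}{\amarking, \anonterm, \amarkingp}}\leq\rankof{\anngvas}$, so $\rankof{\anngvas_{\agram}}<\rankof{\anngvas}$. Hence $\perfect$ is reliable for $\anngvas_{\agram}$ by hypothesis. We then apply $\clean$ followed by $\perfect$ to $\anngvas_{\agram}$: $\cleanof{\anngvas_{\agram}}$ is a clean deconstruction of lower rank (here the head-domination bookkeeping of Lemma~\ref{Lemma:CleannessGuaranteeOverview} is used to keep the rank below $\rankof{\anngvas}$), and $\perfect$ on the result produces a perfect deconstruction. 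Composing, $\adecomp=\perfof{\cleanof{\anngvas_{\agram}}}$ is a perfect deconstruction of $\otherctxNGVAS{\anngvas}{\amarking, \anonterm, \amarkingp}$, since deconstruction is transitive and $\perffun$ returns a perfect deconstruction of its argument by Proposition~\ref{Proposition:Deconst}.

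I expect the main obstacle to be the careful rank bookkeeping in the passage $\rankof{\anngvas_{\agram}}<\rankof{\anngvas}$ and the verification that $\clean$ followed by $\perfect$ stays within the reliability window. The subtlety is that $\anngvas_{\agram}$ is head-dominated only by the \emph{variant} $\otherctxNGVAS{\anngvas}{\amarking, \anonterm, \amarkingp}$, not by $\anngvas$ itself, so one must argue that the variant's rank does not exceed $\rankof{\anngvas}$ — which relies on $\unconstrained=\omegaof{\amarking}$ (the negation of \extrab, already available) so that the most significant rank component $d-\cardof{\unconstrained}$ is not increased, and on the system rank not growing because restrictions are dropped. A secondary technical point, deferred to the appendix as the lemma statement already signals, is handling the cleanness preconditions of $\perfect$: $\anngvas_{\agram}$ as produced by Lemma~\ref{Lemma:DecompGivenCG} need not be clean, which is exactly why $\clean$ is interposed, and one must check its precondition $\anngvas_{\agram}\headdom\anngvasdom$ for a suitable dominating $\anngvasdom$ with $\rankof{\anngvasdom}\leq\rankof{\anngvas}$, so that Lemma~\ref{Lemma:CleannessGuaranteeOverview} yields $\rankof{\cleanof{\anngvas_{\agram}}}<\rankof{\anngvas}$ and the subsequent $\perfect$ call is legitimate.
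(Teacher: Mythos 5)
Your proposal follows the paper's proof almost verbatim: construct the finitely many coverability grammars $\wtgrammarof{\otherctxNGVAS{\anngvas}{\amarking, \anonterm, \amarkingp}, \forgetfulpostapprox{i}, \forgetfulpreapprox{j}}$ for $i,j\in\abdinfomid\setminus\unconstrained$ (computable by \Cref{Corollary:ForgetfulApproxComputability}), decide boundedness, then for a bounded grammar invoke \Cref{Lemma:DecompGivenCG} to get a head-dominated deconstruction of the variant, observe $\rankof{\otherctxNGVAS{\anngvas}{\amarking, \anonterm, \amarkingp}}\leq\rankof{\anngvas}$ from the rank definition, and conclude with $\clean$ followed by $\perffun$ inside the reliability window. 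Your worry about the rank bookkeeping is exactly the point the paper settles the same way.

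One small but real slip: you take $(\amarking,\anonterm,\amarkingp)\in\simplydecomps$ to mean that the grammar remains bounded for \emph{all} pairs $i,j$, whereas the notion the paper actually uses (as its own proof sketch ``we can find $i,j$'' shows, and as is forced by \Cref{Lemma:SimplyUndecomposable}, whose conclusion is the negation ``unbounded for every $i,j$'') is that \emph{some} pair $i,j$ yields a bounded grammar. So your membership test should answer yes as soon as one pair's grammar remains bounded, and in the second part you should fix that witnessing pair rather than ``any admissible pair''; with this adjustment the rest of your argument is unchanged and coincides with the paper's.
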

Second, if $(\amarking, \anonterm, \amarkingp)\not\in\simplydecomps$, then the coverability grammar $\wtgrammarof{\otherctxNGVAS{\anngvas}{\amarking, \anonterm, \amarkingp}, \forgetfulpostapprox{i}, \forgetfulpreapprox{j}}$ finds unboundedness for all $i, j\in\abdinfomid\setminus\unconstrained$.
This implies we get pumping derivations whenever we ignore $i, j\in\unconstrained$ on the input resp. output.
\begin{restatable}{lemma}{LemmaSimplyUndecomposableMainPaper}\label{Lemma:SimplyUndecomposable}
    Let $\otherctxNGVAS{\anngvas}{\amarking, \anonterm, \amarkingp}$ be a relevant variant of $\anngvas$ with $(\amarking, \anonterm, \amarkingp)\not\in\simplydecomps$.
    Then, for any $i, j\in\abdinfomid\setminus\unconstrained$, $\otherctxNGVAS{\anngvas}{\settoomega{i}{\amarking}, \anonterm, \settoomega{j}{\amarkingp}}$ has a pumping derivation.
\end{restatable}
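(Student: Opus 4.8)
The plan is to reduce the claim, for each fixed pair $i, j\in\abdinfomid\setminus\unconstrained$, to \Cref{Lemma:CGPumpability} applied to the variant $\otherctxNGVAS{\anngvas}{\settoomega{i}{\amarking}, \anonterm, \settoomega{j}{\amarkingp}}$. Unfolding the definition of $\simplydecomps$, the hypothesis $(\amarking, \anonterm, \amarkingp)\not\in\simplydecomps$ says that for every such $i, j$ the coverability grammar $\wtgrammarof{\otherctxNGVAS{\anngvas}{\amarking, \anonterm, \amarkingp}, \forgetfulpostapprox{i}, \forgetfulpreapprox{j}}$ shows unboundedness, i.e.\ it contains a non-terminal of the form $(\inof{\anontermpp}, \anontermpp, \outof{\anontermpp})$. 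It therefore suffices to show that this grammar coincides, as far as witnessing unboundedness is concerned, with $\wtgrammarof{\otherctxNGVAS{\anngvas}{\settoomega{i}{\amarking}, \anonterm, \settoomega{j}{\amarkingp}}, \natpostapprox, \natpreapprox}$, and then to invoke \Cref{Lemma:CGPumpability} on the latter.

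I would first check that $\otherctxNGVAS{\anngvas}{\settoomega{i}{\amarking}, \anonterm, \settoomega{j}{\amarkingp}}$ is again a variant of $\anngvas$: from $\amarking\sqsubseteq\inof{\anonterm}$ and $\amarkingp\sqsubseteq\outof{\anonterm}$ one gets $\omegaof{\amarking}, \omegaof{\amarkingp}\subseteq\abdinfomid$, so adjoining $i$ on the input and $j$ on the output, both lying in $\abdinfomid$, keeps the new context a specialization of $\inof{\anonterm}$ resp.\ $\outof{\anonterm}$. Then I would compare the two constructions clause by clause. Throughout the construction every forward marking already carries $\omega$ on $\unconstrained$: this holds at the start non-terminal, whose input $\amarking$ satisfies $\unconstrained\subseteq\omegaof{\amarking}$ by hypothesis, and it is preserved by post-approximator calls and accelerations. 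Hence each call $\forgetfulpostapproxof{i}{\amarkingpp, \asymbol}=\natpostapproxof{\settoomega{\unconstrained\cup i}{\amarkingpp}, \asymbol}$ occurring in the construction reduces to $\natpostapprox$ evaluated on $\amarkingpp$ with coordinate $i$ overwritten by $\omega$, and symmetrically for $\forgetfulpreapprox{j}$ and coordinate $j$. The only residual difference between the two grammars is that, along the chain of leftmost children descending from the start non-terminal, the actual-input component of a non-terminal stays equal to $\amarking$ rather than $\settoomega{i}{\amarking}$, since the simulation rules copy the parent's input marking to the left child. But that component is consulted only in two ways — it is fed into $\forgetfulpostapprox{i}$, which discards its $i$-entry, and it is compared in the pumping-situation test, where all non-terminals on this chain carry the same concrete value $\amarking[i]$ at position $i$, so that position contributes neither to the strict-increase test nor to the acceleration decision. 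Overwriting these occurrences of $\amarking[i]$ by $\omega$ therefore changes neither the rule set nor which entries get accelerated, and the grammar obtained is exactly $\wtgrammarof{\otherctxNGVAS{\anngvas}{\settoomega{i}{\amarking}, \anonterm, \settoomega{j}{\amarkingp}}, \natpostapprox, \natpreapprox}$. So this grammar also shows unboundedness, and \Cref{Lemma:CGPumpability} gives that $\otherctxNGVAS{\anngvas}{\settoomega{i}{\amarking}, \anonterm, \settoomega{j}{\amarkingp}}$ fulfills \perfectnesspumpingnospace, i.e.\ has a pumping derivation; since $i, j$ were arbitrary, the lemma follows.

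The main obstacle is exactly the middle step: making rigorous that the left-spine discrepancy in the input component — a concrete value in one grammar versus $\omega$ at position $i$ in the other — never leaks into the rule set or the acceleration choices. This requires re-inspecting every clause of the coverability-grammar construction (the two triple-simulation rules, the pumping-situation detection and its acceleration, and the freshness/merging condition on right-hand-side symbols) and verifying that the $i$-entry of a non-terminal's actual-input component is only ever used through the forgetful post-approximator or in comparisons against an identical copy of itself. Everything else — well-formedness of the variant, soundness of the forgetful approximators (\Cref{Lemma:ForgetfulApproximators}), and the final appeal to \Cref{Lemma:CGPumpability} — is routine.
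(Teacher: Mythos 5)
Your reduction has a genuine gap at its central step, the claim that the two constructions produce \emph{exactly} the same grammar. You correctly observe that along the copied-input chain the $i$-entry only enters the pumping-situation comparisons against nodes carrying the identical value $\amarking[i]$ (indeed, a node whose input is concrete at $i$ can only be called by nodes that are concrete at $i$), and that all approximator images already carry $\omega$ at $i$. But the $i$-entry is also consulted in the freshness/merging clause: a newly produced right-hand-side symbol is identified with an existing non-terminal only if \emph{all five} label components coincide and the call condition holds. In $\wtgrammarof{\otherctxNGVAS{\anngvas}{\amarking, \anonterm, \amarkingp}, \forgetfulpostapprox{i}, \forgetfulpreapprox{j}}$ a copied-chain node (concrete at $i$) can never be identified with an approximator-image symbol ($\omega$ at $i$), whereas in $\wtgrammarof{\otherctxNGVAS{\anngvas}{\settoomega{i}{\amarking}, \anonterm, \settoomega{j}{\amarkingp}}, \natpostapprox, \natpreapprox}$ their images carry identical labels and may be merged. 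So your stated invariant (``the $i$-entry is only ever used through the forgetful post-approximator or in comparisons against an identical copy of itself'') fails precisely for this clause, and once a merge happens in one grammar but not the other, the call relation, the detected pumping situations, and hence the rule sets can diverge arbitrarily downstream; this construction is history-dependent in the Karp--Miller sense, so even the weaker claim that unboundedness transfers from the forgetful grammar to the natural one would need its own simulation argument, which you do not supply. As written, the ``grammar obtained is exactly \ldots'' assertion is unproven and likely false in general.

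The intended argument avoids comparing grammars altogether: one works directly on $\wtgrammarof{\otherctxNGVAS{\anngvas}{\amarking, \anonterm, \amarkingp}, \forgetfulpostapprox{i}, \forgetfulpreapprox{j}}$, which shows unboundedness by hypothesis, and repeats the extraction argument from the proofs of \Cref{Lemma:CGPumpability} and \Cref{Lemma:ZPumpImpliesUnbounded}. The key point is that every $\forgetfulpostapproxof{i}{\amarkingpp, \asymbolp}=\natpostapproxof{\settoomega{\unconstrained\cup i}{\amarkingpp}, \asymbolp}$ image is witnessed by (sequences of) runs enabled from markings that specialize an input with $\omega$ at $i$, i.e.\ by runs of the variant $\otherctxNGVAS{\anngvas}{\settoomega{i}{\amarking}, \anonterm, \settoomega{j}{\amarkingp}}$ (symmetrically for $\forgetfulpreapprox{j}$ and the output side), and the $\wtprodspump$ rules yield the repeatable increase via \Cref{Lemma:CGPumping}. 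Composing these witnesses along the path from the start non-terminal to the fully-$\omega$ non-terminal directly produces a pumping derivation of the $\omega$-ified variant, i.e.\ \perfectnesspumpingnospace\ for it. If you want to keep your black-box use of \Cref{Lemma:CGPumpability}, you would have to either prove a genuine bisimulation between the two constructions that also handles the merging clause, or strengthen \Cref{Lemma:CGPumpability} so that it applies to arbitrary post-/pre-approximators whose images are witnessed in the target variant; either option is essentially the extraction argument in disguise.
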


This assumption will play an important role later in Hard Case 1, \Cref{Section:HardCaseOne}.

\paragraph*{Case \extrad, No $\Z$-Pump.}

Finally, we have the case \extrad.
This is the case where we might not need to make a complete analysis to ensure that \perfectnesspumping does not hold, but an integer-based approximation suffices.
Just as in \extrac, we assume arbitrary variants.
Towards a formalization of $\Z$-approximation, let $\unbcoords:\powof{\Nomega^{d}}\to\powof{[1,d]}$ be a function that extracts the unbounded counters in $K\subseteq\Nomega^{d}$, that is 
$$\unbcoordsof{K}=\setcond{i\in[1,d]}{\forall a\in\N.\;\exists \amarking\in K.\amarking[i]\geq a}.$$
Now, let $\omegazer:\powof{\Nomega^{d}}\to\powof{\Nomega^{d}}$ be the function that abstracts the unboundedly growing components in (potentially infinite) set of markings, formally defined as $\omegazerof{K}=\setcond{\settoomega{\unbcoordsof{K}}{\amarking}}{\amarking\in K}$ for $K\subseteq\Nomega^{d}$.
We define the post and pre approximations $\intpostapprox$ and $\intpreapprox$, which formalize the notion of $\Z$-approximation.
To explain these functions, let $(\amarking, \asymbol)\in\Nomega^{d}\times(\nonterms\cup\trms)$ be an input.
The functions $\intpostapprox$ and $\intpreapprox$ reject invalid inputs.
That is, if $\amarking\not\sqsubseteq\inof{\asymbol}$, $\intpostapprox$ returns $\emptyset$, and if $\amarking\not\sqsubseteq\outof{\asymbol}$, $\intpreapprox$ returns $\emptyset$.  
Now, we assume that the input is valid.
Then, if $\asymbol\in\trms$, and $\amarking\in\Nomega^{d}$, we let 
\begin{align*}
    \intpostapproxof{\amarking, \asymbol}&=\omegazerof{\setcond{\amarking+\amarkingp\in\Nomega^{d}}{\amarkingp\in\asymbol.\restrictions}}\\
    \intpreapproxof{\amarking, \asymbol}&=\omegazerof{\setcond{\amarking-\amarkingp\in\Nomega^{d}}{\amarkingp\in\asymbol.\restrictions}}
\end{align*}
For $\asymbol=\anonterm\in\nonterms$ and $\amarking\in\Nomega^{d}$, we define the approximations similarly, but this time wrt. the characteristic equations.
\begin{align*}
    \intpostapproxof{\amarking, \anonterm}&=\omegazerof{\setcond{\asol[\outvar]}{\\ 
    &\hspace{2em}\asol\text{ solves }\otherctxNGVAS{\anngvas}{\amarking, \anonterm, \outof{\anonterm}}.\chareq{}}}\\
    \intpreapproxof{\amarking, \anonterm}&=\omegazerof{\setcond{\asol[\invar]}{\\
    &\hspace{2em}\asol\text{ solves }\otherctxNGVAS{\anngvas}{\inof{\anonterm}, \anonterm, \amarking}.\chareq{}}}
\end{align*}
It can be readily verified that these functions are suitable for constructing coverability grammars.
\begin{restatable}{lemma}{LemmaZPostPreApproximatorsMainPaper}\label{Lemma:ZPostPreApproximators}
    The functions $\intpostapprox$ and $\intpreapprox$ are post- resp. pre-approximations, and are computable.
\end{restatable}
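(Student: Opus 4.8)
The plan is to verify the four defining properties of a post-approximator (exactness on concretely tracked counters, over-approximation of reachability, correct unboundedness behaviour, and the precision property) for $\intpostapprox$, and then observe that the argument for $\intpreapprox$ is the mirror image, only swapping the roles of input/output and of $\invar/\outvar$. Computability is handled separately at the end. Throughout, the key structural facts are: (i) $\chareq{}$ and the childNGVAS restrictions are semilinear/Presburger-definable systems, so their solution sets are effectively computable; (ii) the function $\omegazer$ turns an arbitrary (possibly infinite) semilinear set $K$ into a finite set, since $\unbcoordsof{K}$ is a single fixed subset of $[1,d]$ and $\settoomega{\unbcoordsof{K}}{K}$ is then a \emph{finite} set of $\Nomega$-markings (each concrete coordinate is bounded once the unbounded ones are sent to $\omega$, because a semilinear set with all its unbounded directions collapsed is finite). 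This last point is where I would be most careful.

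First I would treat the terminal case $\asymbol\in\terms$. For exactness, note that $\intpostapprox$ first rejects inputs with $\amarking\not\sqsubseteq\inof{\asymbol}$; for a valid input, every $\amarking+\amarkingp$ with $\amarkingp\in\asymbol.\restrictions$ satisfies $\amarking+\amarkingp\sqsubseteq\outof{\asymbol}$ on the concretely tracked coordinates, because on those coordinates $\asymbol.\restrictions$ prescribes exactly the difference $\outof{\asymbol}-\inof{\asymbol}$ (this is the consistency requirement on $\restrictions$ relative to $\acontext$); applying $\omegazer$ only raises some coordinates to $\omega$, which preserves $\sqsubseteq\outof{\asymbol}$. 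For over-approximation, any genuine run $(\amarking,\arun,\amarkingp)\in\runsof{\asymbol}$ has $\updates\cdot\paramparikhof{\updates}{\arun}\in\asymbol.\restrictions$, hence $\amarkingp=\amarking+(\text{effect})\in\{\amarking+\amarkingp'\mid\amarkingp'\in\asymbol.\restrictions\}$; for a larger generalized marking $\amarking_\omega\sqsupseteq\amarking$ the same vector lies in the shifted set, and $\omegazer$ only enlarges markings, so $\amarkingp\sqsubseteq$ some element of $\intpostapproxof{\amarking_\omega,\asymbol}$. Correct unboundedness behaviour ($\omegaof{\amarking}\subseteq\omegaof{\amarkingp}$) is immediate since $\omega+a=\omega$ and $\omegazer$ never removes an $\omega$. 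The precision property for terminals is vacuous (no production $\anonterm\to\asymbol.\asymbolp$ has a terminal on the left).

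Next I would do the non-terminal case $\asymbol=\anonterm$, where $\intpostapproxof{\amarking,\anonterm}=\omegazerof{\{\asol[\outvar]\mid\asol\text{ solves }\otherctxNGVAS{\anngvas}{\amarking,\anonterm,\outof{\anonterm}}.\chareq{}\}}$. Exactness follows because $\chareq{}$ contains $\invar\sqsubseteq\acontextin$, $\outvar\sqsubseteq\acontextout$ (here $\acontextout=\outof{\anonterm}$), so every solution already has $\asol[\outvar]\sqsubseteq\outof{\anonterm}$, preserved by $\omegazer$. Over-approximation is the converse direction of the characteristic equations: a real run from $\amarking'\sqsubseteq\amarking$ under $\anonterm$ yields, via Lemma~\ref{Lemma:EEKConverse} and the marking equation, a solution $\asol$ of $\otherctxNGVAS{\anngvas}{\amarking,\anonterm,\outof{\anonterm}}.\chareq{}$ with $\asol[\invar]=\amarking'$ and $\asol[\outvar]=\amarkingp$ (the variant drops the restriction, so this is unconstrained beyond reachability), whence $\amarkingp\in\omegazerof{\ldots}$ up to raising coordinates to $\omega$. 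Correct unboundedness holds because $\amarking\sqsubseteq\inof{\anonterm}$ forces the $\omega$-coordinates of $\amarking$ to remain $\omega$ through $\invar\sqsubseteq\acontextin$ and the marking equation, and again $\omegazer$ only adds $\omega$'s. The precision property is the one genuine computation: given $\anonterm\to\asymbol.\asymbolp$ and $\amarkingp'\in\intpostapproxof{\intpostapproxof{\amarking,\asymbol},\asymbolp}$, I would compose the two solutions/restriction-shifts for $\asymbol$ and $\asymbolp$ — matching the intermediate marking — into a single solution of $\otherctxNGVAS{\anngvas}{\amarking,\anonterm,\outof{\anonterm}}.\chareq{}$ (using $\eekof{}{\prodvar}$ applied to the production $\anonterm\to\asymbol.\asymbolp$ together with whatever derivations realize $\asymbol,\asymbolp$), obtaining $\amarkingp\in\intpostapproxof{\amarking,\anonterm}$ with $\amarkingp'\sqsubseteq\amarkingp$; the $\omegazer$ on the outer call dominates any $\omega$'s introduced inside.

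Finally, computability: the sets $\asymbol.\restrictions$ are given linear sets, and $\otherctxNGVAS{\anngvas}{\amarking,\anonterm,\outof{\anonterm}}.\chareq{}$ is an explicitly written system of linear equations and $\sqsubseteq$-constraints over $\N$, hence Presburger-definable; one computes $\unbcoords$ of the (semilinear) projection onto $\outvar$ by standard linear-programming / semilinear-set techniques, and then enumerates the finitely many $\omega$-markings of $\omegazerof{\ldots}$. The main obstacle I anticipate is the precision property in the non-terminal case, specifically making the composition of the inner and outer solutions respect \emph{all} of $\chareq{}$ simultaneously (Esparza-Euler-Kirchhoff for the combining production, the $\computeupdates$ bookkeeping for child periods, and the marking-equation chaining) while showing the resulting $\outvar$-value $\sqsubseteq$-dominates $\amarkingp'$; everything else is routine chasing of $\sqsubseteq$ and the monotonicity of $\omegazer$.
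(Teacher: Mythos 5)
Your overall plan (check finiteness of the image, the four approximator properties, and computability via Presburger/ILP reasoning) is exactly the intended route; the paper itself only asserts that the verification is routine, and your treatment of the terminal case, correct unboundedness, over-approximation, and computability is essentially sound, modulo one recurring imprecision: you twice argue that applying $\omegazer$ ``preserves $\sqsubseteq\outof{\asymbol}$''. Literally, raising a coordinate to $\omega$ does \emph{not} preserve $\sqsubseteq$ against a concrete bound; the correct reason exactness survives $\omegazer$ is that on coordinates where $\outof{\asymbol}$ is concrete the value is pinned (by cleanness of the children, i.e.\ \perfectnesssol\ for the child restrictions, resp.\ by the constraint $\outvar\sqsubseteq\outof{\anonterm}$ in $\chareq{}$), hence such coordinates are bounded in the set, lie outside $\unbcoordsof{-}$, and are never raised. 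This is easy to repair, but as written the sentence is false.

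The genuine gap is the precision property in the non-terminal case, which you correctly flag as the hard part but then dispatch with ``compose the two solutions --- matching the intermediate marking''. That matching is not available. If $\amarking_{1}\in\intpostapproxof{\amarking,\asymbol}$, the second solution $\asol_{2}$ only satisfies $\asol_{2}[\invar]\sqsubseteq\amarking_{1}$, so at every coordinate that the inner $\omegazer$ (or the input) turned into $\omega$ it may posit an arbitrary concrete value; unboundedness of that coordinate across the first solution set does \emph{not} give exact joint achievability of that value as an output of the first system (e.g.\ an output set of the shape $\setcond{(n,n)}{n\in\N}$ yields $\amarking_{1}=(\omega,\omega)$, yet $(3,5)$ is not an achievable intermediate). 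So a verbatim composition does not produce a solution of $\otherctxNGVAS{\anngvas}{\amarking,\anonterm,\outof{\anonterm}}.\chareq{}$. What is needed is a repair argument: mismatches can only occur at coordinates that are $\omega$ in $\amarking_{1}$; those are free input coordinates of the second system and either free input coordinates or support coordinates of the homogeneous system of the first, so the smaller side can be raised (by shifting a free input, or by adding homogeneous solutions, which leave non-support output coordinates untouched); the resulting overshoot only perturbs coordinates that are $\omega$ in $\amarkingp'$, and these are harmless once one separately shows $\omegaof{\amarkingp'}\subseteq\unbcoordsof{\setcond{\asol[\outvar]}{\asol\text{ solves the outer system}}}$, so that the outer $\omegazer$ absorbs them, while the coordinates concrete in $\amarkingp'$ are reproduced exactly. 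Your sketch contains neither the observation that exact matching fails nor this monotone-repair/unboundedness-propagation argument, and the final clause ``the $\omegazer$ on the outer call dominates any $\omega$'s introduced inside'' is precisely the claim that has to be proved rather than assumed.
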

Then, if $\wtgrammarof{\otherctxNGVAS{\anngvas}{\amarking, \anonterm, \amarkingp}, \intpostapprox, \intpreapprox}$ remains bounded, we can compute a perfect decomposition as we did in \extrac.
\begin{restatable}{lemma}{LemmaZApproxPumpDecompMainPaper}\label{Lemma:ZApproxPumpDecomp}
    Let $\otherctxNGVAS{\anngvas}{\amarking, \anonterm, \amarkingp}$ be a relevant variant.
    Further let $\wtgrammarof{\otherctxNGVAS{\anngvas}{\amarking, \anonterm, \amarkingp}, \intpostapprox, \intpreapprox}$ remain bounded, and let $\perfect$ be reliable up to $\rankof{\anngvas}$
    Then, we can compute a perfect decomposition of $\otherctxNGVAS{\anngvas}{\amarking, \anonterm, \amarkingp}$.
\end{restatable}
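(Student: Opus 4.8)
The statement is, for the $\Z$-approximators $\intpostapprox,\intpreapprox$, the exact analogue of \Cref{Lemma:SimplePerfectDecomposition}, so the plan is to reuse the same three-phase argument: build the coverability grammar, extract from it a head dominated deconstruction, and then clean and perfect the resulting members using reliability. Concretely, I would first form $\agram:=\wtgrammarof{\otherctxNGVAS{\anngvas}{\amarking, \anonterm, \amarkingp}, \intpostapprox, \intpreapprox}$. By \Cref{Lemma:ZPostPreApproximators} the functions $\intpostapprox$ and $\intpreapprox$ are computable post- resp.\ pre-approximators for $\anngvas$ on all of $\Nomega^{d}\times(\nonterms\cup\terms)$, so \Cref{Lemma:CGTermination} guarantees that the construction of $\agram$ terminates. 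By hypothesis $\agram$ remains bounded, hence \Cref{Lemma:DecompGivenCG} yields, using elementary resources, a head dominated deconstruction $\adecomp$ of $\otherctxNGVAS{\anngvas}{\amarking, \anonterm, \amarkingp}$: we have $\runsof{\adecomp}=\runsof{\otherctxNGVAS{\anngvas}{\amarking, \anonterm, \amarkingp}}$ together with the remaining deconstruction conditions, and $\anngvas'\headdom\otherctxNGVAS{\anngvas}{\amarking, \anonterm, \amarkingp}$ for every $\anngvas'\in\adecomp$.

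The crucial bookkeeping is that $\rankof{\otherctxNGVAS{\anngvas}{\amarking, \anonterm, \amarkingp}}\le\rankof{\anngvas}$. A variant keeps the productions of $\anngvas$ (hence, since $\anngvas$ is strong, the same SCC and therefore the same $\anonterm$-centered cycle space, so the same local rank), keeps all descendant NGVAS (so the same system rank), and keeps the concretely tracked counters; and $\otherctxNGVAS{\anngvas}{\amarking, \anonterm, \amarkingp}.\unconstrained=\omegaof{\amarking}\cap\omegaof{\amarkingp}\supseteq\unconstrained$ for the variants in play here (as in the rest of the case analysis $\unconstrained=\omegaof{\acontextin}\subseteq\omegaof{\amarking}$ once case \extrab does not apply, and likewise on the output side), so the most significant rank component does not increase. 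In particular $\perfect$ is reliable up to $\rankof{\otherctxNGVAS{\anngvas}{\amarking, \anonterm, \amarkingp}}$ as well. Since each $\anngvas'\in\adecomp$ is head dominated by $\otherctxNGVAS{\anngvas}{\amarking, \anonterm, \amarkingp}$, \Cref{Lemma:CleannessGuaranteeOverview} applies with $\anngvasdom=\otherctxNGVAS{\anngvas}{\amarking, \anonterm, \amarkingp}$: $\cleanof{\anngvas'}$ terminates with a clean deconstruction of $\anngvas'$ all of whose members $\anngvas''$ satisfy $\rankof{\anngvas''}<\rankof{\otherctxNGVAS{\anngvas}{\amarking, \anonterm, \amarkingp}}\le\rankof{\anngvas}$.

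Finally I would make these members perfect. Each such $\anngvas''$ is clean and of rank strictly below $\rankof{\anngvas}$, so reliability of $\perfect$ up to $\rankof{\anngvas}$ lets us compute the perfect deconstruction $\perfof{\anngvas''}$; moreover the decomposition functions never increase the rank (by \Cref{Proposition:Step} a fixed point leaves the NGVAS untouched and otherwise the rank strictly drops), hence neither does $\perfect$, so the members of $\perfof{\anngvas''}$ still have rank below $\rankof{\otherctxNGVAS{\anngvas}{\amarking, \anonterm, \amarkingp}}$. Taking the union of all the sets $\perfof{\anngvas''}$ over $\anngvas''$ in the cleaning of $\anngvas'$ and over $\anngvas'\in\adecomp$, $\runsof{}$ is preserved along the whole chain (a composition of deconstructions is again a deconstruction, which also propagates equality of $\unconstrained$, specialization of the contexts, and containment of the restrictions), every element is perfect, and every element has rank below $\rankof{\otherctxNGVAS{\anngvas}{\amarking, \anonterm, \amarkingp}}$; so this union is a perfect decomposition of $\otherctxNGVAS{\anngvas}{\amarking, \anonterm, \amarkingp}$, and every step is effective.

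I expect the main obstacle to be precisely this rank accounting: one must ensure that a variant never has strictly larger rank than $\anngvas$ (so that reliability of $\perfect$ transfers to it and to the pieces lying below it) and that cleaning followed by $\perfect$ keeps those pieces strictly below the variant's rank rather than merely below $\rankof{\anngvas}$. This is exactly where head domination and the rank-non-increasing behaviour of the decomposition functions (hence of $\perfect$) enter, together with the fact that the variants arising in Hard Case~1 shield at least the counters of $\unconstrained$. Everything else is a routine verification that the deconstruction side conditions compose and that the constructions above are computable.
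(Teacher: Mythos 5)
Your proposal is correct and follows essentially the same route as the paper: the paper proves this lemma by repeating the argument from case \extrac\ (\Cref{Lemma:SimplePerfectDecomposition}), i.e.\ computability of $\intpostapprox,\intpreapprox$ (\Cref{Lemma:ZPostPreApproximators}) plus \Cref{Lemma:CGTermination} and \Cref{Lemma:DecompGivenCG} to get a head-dominated deconstruction, then $\rankof{\otherctxNGVAS{\anngvas}{\amarking, \anonterm, \amarkingp}}\leq\rankof{\anngvas}$ so that cleaning and the reliable $\perffun$ finish the job. Your extra care about the rank bookkeeping matches what the paper handles via \Cref{Lemma:OtherCTXPropertiesI} and \Cref{Lemma:CleanSpec}.
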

Similarly to the \extrac\ case, we deduce further information out of the assumption $\neg\extrad$.
If the coverability grammar under $\Z$-approximations shows unboundedness for $\otherctxNGVAS{\anngvas}{\amarking, \anonterm, \amarkingp}$, then we get a derivation $\anonterm\to\aword.\anonterm.\awordp$ that pumps all counters $\abdinfomid\setminus\omegaof{\amarking}$ and $\abdinfomid\setminus\omegaof{\amarkingp}$, but with a caveat.
The pump ignores the positivity constraints, and we need some assumptions on $\omegaof{\amarkingp}$.
This is notion of pumping is formalized by the $\Z$-pump.
This object is defined relative to $\anngvas$, and two sets of counters $I, O\subseteq[1,d]$, instead of relative to a variant.
Formally, a $(I,O)$-$\Z$-pump for $\anngvas$ a tuple $(\anonterm\to^{*}\aword.\anonterm. \awordp, \amarkingpp_{fwd}, \amarkingpp_{bck})$ consisting of a cycle $\anonterm\to^{*}\aword.\anonterm. \awordp$, and two effects $\amarkingpp_{fwd}\in\ceffof{\aword}$ and $\amarkingpp_{bck}\in\ceffof{\amarkingp}$, such that $\amarkingpp[i]\geq 1$ and $-\amarkingpp[j]\geq 1$ for all $i\in \abdinfomid\setminus I$ and $j\in \abdinfomid\setminus O$.
The assumption we need on $\omegaof{\amarkingp}$ is $\omegaof{\acontextout}\subseteq\omegaof{\amarkingp}$.
In this case, if the coverability grammar shows unboundedness under $\Z$-approximations, then we get a $\Z$-pump.
We need this assumption in order to use the full support assumptions we have from $\anngvas$, and construct connected derivations.
Without this, $\intpostapprox$ and $\intpreapprox$ might yield results that cannot be realized by connected derivations.
\begin{restatable}{lemma}{LemmaZPumpImpliesUnboundedMainPaper}\label{Lemma:ZPumpImpliesUnbounded}
    Let $\otherctxNGVAS{\anngvas}{\amarking, \anonterm, \amarkingp}$ be a relevant variant.
    Then, if $\wtgrammarof{\otherctxNGVAS{\anngvas}{\amarking, \anonterm, \amarkingp}, \intpostapprox, \intpreapprox}$ shows unboundedness, then $\otherctxNGVAS{\anngvas}{\amarking, \anonterm, \amarkingp}$ has a $(\omegaof{\amarking}, \omegaof{\amarkingp})$-$\Z$-pump. 
\end{restatable}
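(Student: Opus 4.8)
The plan is to unfold the definition of ``shows unboundedness'' for the coverability grammar $\wtgrammarof{\otherctxNGVAS{\anngvas}{\amarking, \anonterm, \amarkingp}, \intpostapprox, \intpreapprox}$: there is a non-terminal of the form $(\inof{\anontermpp}, \anontermpp, \outof{\anontermpp})$ reachable from the start symbol, and, by the way acceleration steps are introduced, this non-terminal carries $\omega$-entries exactly on the counters that were detected to grow unboundedly along a repeatable vertical pump. First I would trace the derivation path in $\wtgrammar$ that produced this accelerated non-terminal back to the two witnessing occurrences of the same $\anonterm$-labelled symbol $((\amarking', \aprommarking'), \anonterm, (\aprommarkingp', \amarkingp'))$ and $((\amarking, \aprommarking), \anonterm, (\aprommarkingp, \amarkingp))$ with $(\amarking,\amarkingp)>(\amarking',\amarkingp')$; this is precisely a coverability-grammar cycle $\anonterm\to^{*}\aword.\anonterm.\awordp$. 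Because the approximators here are the $\Z$-approximators $\intpostapprox, \intpreapprox$, the markings attached along this cycle are the marking components of solutions to the characteristic equations $\otherctxNGVAS{\anngvas}{-,-,-}.\chareq{}$ for each symbol, projected via $\omegazer$. The increase $(\amarking,\amarkingp)>(\amarking',\amarkingp')$ on the relevant counters $I\cap\abdinfomid$ resp.\ $O\cap\abdinfomid$ then says that the corresponding characteristic-equation solutions have a strictly positive forward effect on $I\cap\abdinfomid$ and a strictly negative backward effect on $O\cap\abdinfomid$, which is exactly the data $(\amarkingpp_{fwd},\amarkingpp_{bck})$ with $\amarkingpp_{fwd}\in\ceffof{\aword}$, $\amarkingpp_{bck}\in\ceffof{\awordp}$ required by the definition of an $(\omegaof{\amarking},\omegaof{\amarkingp})$-$\Z$-pump.

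The second step is to convert the coverability-grammar cycle into an actual cycle $\anonterm\to^{*}\aword.\anonterm.\awordp$ of $\anngvas$ with the prescribed effect. The underlying grammar of $\wtgrammar$ is an annotated copy of $\agram$, so projecting away the marking annotations of the cycle yields a genuine derivation $\anonterm\to^{*}\aword.\anonterm.\awordp$ in $\agram$ with $\aword,\awordp\in\terms^{*}$. The effects $\amarkingpp_{fwd}\in\ceffof{\aword}$ and $\amarkingpp_{bck}\in\ceffof{\awordp}$ are read off by assembling, for each child NGVAS $\anngvasp$ occurring in $\aword$ resp.\ $\awordp$, the period-vector multiplicities that the $\intpostapprox$/$\intpreapprox$ computations chose (these live in $\anngvasp.\restrictions$, hence contribute to $\ceffof{\anngvasp}=\anngvasp.\restrictions$). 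Here is where the hypothesis $\omegaof{\acontextout}\subseteq\omegaof{\amarking}$, together with $\unconstrained\subseteq\omegaof{\amarking}\cap\omegaof{\amarkingp}$, is needed: it guarantees that the local characteristic-equation solutions produced by $\intpostapprox, \intpreapprox$ along the cycle can be glued into one coherent derivation-with-effect, because the counters on which $\intpostapprox$ freely abstracts (those it sets to $\omega$ via $\omegazer$, or those unconstrained in the relevant variant) are precisely those for which no gluing constraint must be respected — on the remaining counters the $\evaleq{}$ part of $\chareq{}$ forces consistency. Without $\omegaof{\acontextout}\subseteq\omegaof{\amarking}$ the backward-reachability solutions $\intpreapprox$ produces might reference concrete values on counters that the forward side has already abstracted, and the two one-sided solutions would not combine.

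The main obstacle I expect is exactly this gluing: showing that the per-symbol solutions to the (many, context-dependent) characteristic systems that label the coverability-grammar derivation actually assemble into a single cycle of $\anngvas$ with an effect in $\ceffof{\aword}\times\ceffof{\awordp}$, rather than just existing symbol-by-symbol. I would handle it by a straightforward but careful induction along the coverability-grammar parse tree of the cycle: at each production $\anonterm\to\asymbol.\asymbolp$ the coverability grammar already enforces the promise-compatibility conditions $\aprommarking''\sqsubseteq\aprommarking$, $\aprommarkingp''\sqsubseteq\aprommarkingp$, and $\aprommarking'\compwith\aprommarkingp'$, and these, under the dimension assumptions above, are enough to add the two one-sided integer solutions and obtain an integer solution of the combined system; the $\omegazer$-abstraction ensures no spurious bound is imposed on the counters we want to pump. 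The strict increase from the acceleration then delivers the $\geq 1$ and $\leq -1$ effect requirements on $I\cap\abdinfomid$ and $O\cap\abdinfomid$ respectively, completing the construction of the $(\omegaof{\amarking},\omegaof{\amarkingp})$-$\Z$-pump. The remaining bookkeeping — that $\aword,\awordp$ really lie in $\terms^{*}$, that $\anonterm$ is the same non-terminal at both ends, that the chosen effects indeed belong to $\ceffof{\aword}$ and $\ceffof{\awordp}$ — is routine given the annotated-grammar structure.
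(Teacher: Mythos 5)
Your overall direction (walk the coverability-grammar path, extract cycles from acceleration steps, realize the $\Z$-approximation effects as derivation effects) is the right shape, but two key steps are missing or mis-argued. First, you build the pump from \emph{one} acceleration: you trace the unbounded symbol back to a single pair of comparable occurrences and claim "the strict increase from the acceleration then delivers the $\geq 1$ and $\leq -1$ effect requirements" on all relevant counters. In general, unboundedness on all of $\abdinfomid$ is reached through a \emph{chain} of $\wtprodspump$ steps, each introducing $\omega$ only on the counters that strictly increased at that step; moreover, a cycle extracted from a later acceleration has no sign guarantee on counters that were already $\omega$ when it was discovered, so it can have an arbitrarily negative effect there. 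The paper's proof therefore combines the cycles given by \Cref{Lemma:CGPumping} along the whole path, iterating each one with multiplicities large enough to compensate the effects of the connecting derivations and of the other cycles, and closes the cycle at $\anonterm$ using strong connectedness. None of this compensation arithmetic appears in your plan, and without it the required strict inequalities on all of the relevant counters simultaneously do not follow.

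Second, your "gluing" step misidentifies both the difficulty and the role of the hypothesis. The markings produced by $\intpostapprox$ and $\intpreapprox$ are $\omega$-abstracted solutions of characteristic equations of \emph{variants}; a solution of such a system is a Parikh-style object and does not by itself correspond to a connected derivation $\anonterm\to^{*}\aword\in\terms^{*}$ whose effect lies in $\ceffof{\aword}$. Turning it into one is exactly \Cref{Lemma:SupportAtTheEndOfTheTunnel}: one adds the full-support homogeneous solution guaranteed by \perfectnessprods and invokes Theorem~\ref{Theorem:EEK}, and this is where the hypotheses enter --- via \Cref{Lemma:RelaxedSupport}, $\omegaof{\acontextout}\subseteq\omegaoutcount\subseteq\omegaof{\amarkingp}$ ensures the homogeneous system of the backward variant is no stricter than that of $\anngvas$, so the full-support solution is still admissible. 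Your stated reason ("the backward solutions might reference concrete values on counters the forward side has abstracted") is not the issue; the paper explicitly arranges the two triple constructions so that they do not interact. Also note the promise-compatibility conditions you lean on live in the coverability grammar, not in the variants' equation systems, so they cannot substitute for the support/EEK realizability argument. With these two repairs --- realizability via the support argument, and combination of several pumping cycles with compensating multiplicities --- your outline becomes essentially the paper's proof.
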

The $\Z$-pump is a weaker version of the derivation we search for \perfectnesspumpingnospace, since it has no positivity guarantees.
But, because $\anngvas$ is almost perfect, we are able establish the positivity for counters in $[1,d]\setminus\abdinfomid$ by only starting from a $\Z$-pump.
In other words, we can at least assume that we derive runs, instead of vectors that stem from restrictions.
The proof is similar to the proof of iteration lemma.
We refer to this kind of a pumping derivation as a free-$\N$-pump.
The free $\N$-pump is also defined relative to two sets $I, O\subseteq [1,d]$.
Formally, a $(I, O)$-free-$\N$-pump for $\otherctxNGVAS{\anngvas}{\amarking, \anonterm, \amarkingp}$ is a tuple $(\anonterm\to^{*}\aword.\anonterm.\awordp, \arun, \arunp)$ consisting of a cycle $\anonterm\to^{*}\aword.\anonterm.\awordp$, and two update sequences $\arun\in\updateseqof{\aword}$ and $\arunp\in\updateseqof{\awordp}$ with $\updates\cdot\paramparikhof{\updates}{\arun}[i]\geq 1$ and $-\updates\cdot\paramparikhof{\updates}{\arunp}[j]\geq 1$ for all $i\in\abdinfomid\setminus I$ 
and $j\in\abdinfomid\setminus O$.
The size of a free $\N$ pump $(\anonterm\to^{*}\aword.\anonterm.\awordp, \arun, \arunp)$ is $\max(\normof{\arun}, \normof{\arunp})$ where $\normof{\arun}=\sum_{i\leq \cardof{\arun}}\normof{\arun[i]}$.
An $(I, O)$-$\Z$-pump, leads to an $(I, O)$-free-$\N$-pump thanks to the almost-perfectness of $\anngvas$.
We make a stronger statement, and say that we can even compute an upper bound on the size of the runs.
\begin{restatable}{lemma}{LemmaPrecalculationIMainPaper}\label{Lemma:PrecalculationI}
    Let $\anngvas$ have all perfectness conditions except \perfectnesspumpingnospace, and let $\anngvas$ have an $(I, O)$-$\Z$-pump for $I, O\subseteq [1,d]$.
    We can compute an $\incconst\in\N$ such that for all $\anonterm\in\nonterms$, there is a $(I, O)$-free-$\N$-pump $(\anonterm\to^{*}\aword.\anonterm.\awordp, \arun, \arunp)$ of size less than $\incconst$.
\end{restatable}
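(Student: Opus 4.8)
The plan is to follow the structure of the iteration lemma (Theorem~\ref{TheoremIterationLemmaNonLinearOverview}), but without needing to reach a prescribed target marking, since a free-$\N$-pump only has to pump the counters in $\abdinfomid \cap I$ (resp.\ $\abdinfomid \cap O$) strictly positively (resp.\ strictly negatively) while staying non-negative \emph{on the remaining, concretely tracked counters} $[1,d]\setminus\abdinfomid$. Recall we are given a $(I,O)$-$\Z$-pump $(\anonterm\to^{*}\aword.\anonterm.\awordp, \amarkingpp_{fwd}, \amarkingpp_{bck})$. The effects $\amarkingpp_{fwd}\in\ceffof{\aword}$, $\amarkingpp_{bck}\in\ceffof{\awordp}$ are witnessed by choices of base/period vectors in the restrictions of the childNGVAS appearing in $\aword, \awordp$, hence by a solution to $\homchareq{\anngvas}$ (the cycle $\anonterm \to^* \aword.\anonterm.\awordp$ gives a homogeneous production count by Lemma~\ref{Lemma:EEKConverse}, and the restriction choices are absorbed into $\periodeffectvardef$). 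First I would scale this to a full homogeneous solution $\ahomsol$ that is positive on the whole support $\suppof{\homchareq{\anngvas}}$ — possible since the solution space is closed under addition — which in particular makes it positive on all productions (\perfectnessprodsnospace) and all child period vectors.

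Next I would instantiate the childNGVAS: every terminal instance in the derivation generated from $\ahomsol$ is a perfect childNGVAS (\perfectnesschildrennospace), so the iteration lemma applies to it and, with $\baseeffectchoicedef = \baseeffectdef + \periodeffectdef\cdot\ahomsol[\periodeffectvardef]$ and $\periodeffectchoicedef = \ahomsol[\periodeffectvardef]\ge 1$ (the latter by \perfectnesschildperiodsbdnospace\ in the linear case, trivially otherwise), yields actual runs $\iterrundefof{\aconst}$ for all $\aconst$ above a threshold $\initconstdef$; take $\maxconst = \max_{\anngvasp}\initconstdef$, a computable quantity because the thresholds $\initconstdef$ are computable from the perfect childNGVAS (by reliability of $\perffun$ up to $\rankof{\anngvas}$, and because the iteration-lemma construction is effective). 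For the one distinguished instance per child we invoke the induction hypothesis on nesting depth to realize the extra periods demanded by $\aconst\cdot\ahomsol$; for the others we use \perfectnessbasenospace. This produces, from $\anonterm\to^{*}\aword.\anonterm.\awordp$ via $\maxconst$-fold insertion of the homogeneous derivation, a cycle $\anonterm\to^{*}\arunll.\anonterm.\arunrr$ with $\arunll, \arunrr$ honest update sequences whose effects on $\abdinfomid\cap I$ (resp.\ $\abdinfomid\cap O$) are $\ge \maxconst$ (resp.\ $\le -\maxconst$) — because adding homogeneous solutions does not change the sign pattern inherited from $\amarkingpp_{fwd},\amarkingpp_{bck}$ and the $\Z$-pump was strictly positive/negative there — and whose effect on the concretely tracked counters $[1,d]\setminus\abdinfomid$ is exactly zero (by the use of $\zeroof{\acontextin}$, $\zeroof{\acontextout}$ in $\homchareq{\anngvas}$, together with consistency of $\infun,\outfun$). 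Firing each $\arunll$-prefix from the concretely tracked input marking therefore only meets bounded hurdles; a further bounded number of up-pumping repetitions from \perfectnesspumpingnospace-style loops is \emph{not} available (that is precisely what we are trying to establish), so instead I would use that on the concrete counters the hurdle of $\arunll$ is itself bounded by $\normof{\arunll}$, and shift the starting marking by a computable additive constant — but since a free-$\N$-pump has no fixed source marking, we are free to pick any non-negative start, so this is harmless and the only thing to certify is that such a start exists and the size is bounded.

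For the size bound I would simply track the length of every piece: $\normof{\aword},\normof{\awordp}$ are bounded by the $\Z$-pump (whose minimal witness is computable by integer linear programming over $\homchareq{\anngvas}$ and the restrictions), the homogeneous derivation inserted $\maxconst$ times has computable size $\maxconst\cdot\normof{\ahomsol}$, and the child runs substituted in have size bounded by the effective bounds coming from the (recursively computable) iteration-lemma construction for the perfect children and from \perfectnessbasenospace. Summing and maximizing over the finitely many $\anonterm\in\nonterms$ gives the computable $\incconst$. The main obstacle I expect is purely bookkeeping-of-effectiveness: ensuring that every threshold produced along the way — the $\initconstdef$ from the iteration lemma applied to children, the Rackoff-style bounds hidden in Lemma~\ref{Lemma:SimpleBound}/\ref{Lemma:Bound} if they are invoked, and the minimal $\Z$-pump witness — is \emph{computable}, not merely existent, which is where the reliability-of-$\perffun$-up-to-$\rankof{\anngvas}$ hypothesis and the effectiveness of the iteration-lemma proof must be threaded carefully through the Noetherian induction on nesting depth.
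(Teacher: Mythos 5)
Your overall strategy (reuse the iteration-lemma machinery: children are perfect by \perfectnesschildrennospace, so realize child effects with \perfectnessbase runs plus one iteration-lemma run per child, and track all thresholds to get a computable bound) is the right family of ideas, but the proposal has a gap exactly at the point that makes this lemma nontrivial. You claim that scaling the $\Z$-pump into a full homogeneous solution of $\homchareq{\anngvas}$ and embedding it ``does not change the sign pattern'' on the pumped counters. This is unjustified, for two reasons. First, a $\Z$-pump is just a cycle $\anonterm\to^{*}\aword.\anonterm.\awordp$ with effects chosen from $\ceffof{\aword}$ and $\ceffof{\awordp}$; nothing forces these to satisfy $\homchareq{\anngvas}$ (which additionally demands effect zero on concretely tracked counters and compatibility with the homogeneous restriction), so the ``hence a homogeneous solution'' step already fails in general. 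Second, and more importantly, the homogeneous system only constrains \emph{total} effects; it says nothing about how the embedded material splits between the left word and the right word of the cycle, nor about the sign of its contribution on the counters in $\abdinfomid$ that must be pumped. After embedding, the left run's effect could lose strict positivity on $I\cap\abdinfomid$ (symmetrically for the right run and $O$). The paper's proof never embeds into a generic homogeneous solution precisely for this reason: it pads the $\Z$-pump only with derivations of \emph{bounded} size (one to re-center the cycle, one that produces every child and every period at least once), dominates each perturbation by taking sufficiently many copies of the $\Z$-pump itself, so the left and right effects remain explicit vectors with the required signs, and finally makes the run effects an \emph{exact} positive multiple of these vectors by firing base-effect runs (\perfectnessbasenospace) in all but one batch of copies and a single iteration-lemma run carrying all periods, scaled by the repetition count $\aconst$.

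A second, independent omission: the lemma requires a free-$\N$-pump centered on \emph{every} $\anonterm\in\nonterms$, while the hypothesis only gives a $\Z$-pump centered on one non-terminal. Your plan only ``maximizes over the finitely many $\anonterm$'' at the end, which presupposes the pumps exist; constructing them needs strong connectedness to route $\anonterm\to^{*}\aword_{in}.\startnonterm.\awordp_{in}$ and back, and again enough repetitions of the given $\Z$-pump to absorb the bounded effects of these connecting words without destroying the sign conditions. Both fixes are quantitative (repeat-until-dominated with explicitly computed constants), and once you adopt them your threshold bookkeeping for $\incconst$ goes through essentially as you describe; but as written, the sign-preservation step and the re-centering step are not established.
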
 
This means that whenever the $\Z$-approximation fails for an NGVAS with a suitable output marking, we can assume a free-$\N$-pump under the bound $\incconst$.

\paragraph*{Reestablishing Perfectness.}
We return back to the input $(\amarkingin, \anonterminalin)$.
As we discussed, our strategy is to make our assumptions as strong as possible.
But even though $\anngvas$ has all perfectness conditions up to \perfectnesspumpingnospace, $\otherctxNGVAS{\anngvas}{\amarkingin, \anonterminalin, \outof{\anonterminalin}}$ might not have them.
This is because the context infomation has changed.
We reestablish these conditions before moving further in our development.
Since we changed the input and output markings, condition that needs our care is \perfectnesscountersnospace.
To fix this, we define the set 
\begin{align*}
    \omegaoutcount&=\setcond{i\in[1,d]}{\\
    &\hspace{2em}\outvar[i]\in\suppof{\otherctxNGVAS{\anngvas}{\acontextin, \startnonterm, \outof{\startnonterm}}.\homchareq{}}}
\end{align*}
that consists of output counters that are in the support of homogenous the characteristic equation of $\otherctxNGVAS{\anngvas}{\acontextin, \startnonterm, \outof{\startnonterm}}$.
Since $\unconstrained=\omegaof{\acontextin}=\omegaof{\amarkingin}$, the homogeneous systems of $\otherctxNGVAS{\anngvas}{\acontextin, \startnonterm, \outof{\startnonterm}}$ and $\otherctxNGVAS{\anngvas}{\amarkingin, \anonterminalin, \outof{\anonterminalin}}$ are the same, while $\otherctxNGVAS{\anngvas}{\acontextin, \startnonterm, \acontextout}.\homchareq{}$ is stricter.
Note that this implies $\omegaof{\acontextout}\subseteq\omegaoutcount$, because all $i\in\omegaof{\acontextout}$ must already be in the support of $\anngvas.\homchareq{}$.
We expect output markings that carry $\omega$'s in $\omegaoutcount$ counters.
This means that we artificially enforce \perfectnesscountersnospace.
This is also the reason why we need assumptions from \extrac\ and \extrad\ to apply to arbitrary variants.
Now, we need to make sure that we work with a set of NGVAS $\thezdecomp=\setcond{\otherctxNGVAS{\anngvas}{\amarkingin, \anonterminalin, \amarkingp_{i}}}{i\in\anindexset}$ where $\omegaof{\amarkingp_{i}}=\omegaoutcount$ for all $i\in\anindexset$, but also $\runsof{\thezdecomp}=\runsof{\otherctxNGVAS{\anngvas}{\amarkingin, \anonterminalin, \outof{\anonterminalin}}}$.
Here, our definition of $\omegaoutcount$ comes to our rescue.
The set $[1,d]\setminus\omegaoutcount$ of counters are already those, that are not in the support of the homogeneous equation, which means they are bounded.
This means that we can colllect these values, and get the set $\thezdecomp$ we need.
Formally, we define 
\begin{align*}
    \thezdecomp=\ &\setcond{\otherctxNGVAS{\anngvas}{\amarkingin, \anonterminalin, \amarkingp}}{\\ &\asol\text{ solves }\otherctxNGVAS{\anngvas}{\amarkingin, \anonterminalin, \outof{\anonterminalin}}.\chareq{}\;\\
    & \wedge\;\asol[\outvar]\sqsubseteq\amarkingp\;\wedge\;\omegaof{\amarkingp}=\omegaoutcount}.
\end{align*}
This is precisely the set we wanted.
\begin{restatable}{lemma}{LemmaTheZDecompMainPaper}\label{Lemma:TheZDecomp}
    We have $\runsof{\otherctxNGVAS{\anngvas}{\amarkingin, \anonterminalin, \amarkingp}}=\runsof{\thezdecomp}$.
    All $\anngvas'\in\thezdecomp$ have all perfectness conditions except \perfectnesspumping and \perfectnesssolnospace. 
\end{restatable}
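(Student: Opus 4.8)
The plan is to prove the two assertions separately. For the run-equality $\runsof{\otherctxNGVAS{\anngvas}{\amarkingin, \anonterminalin, \amarkingp}}=\runsof{\thezdecomp}$, I would argue by double inclusion using the characteristic equations. For the inclusion $\supseteq$, observe that every $\anngvas'\in\thezdecomp$ has the form $\otherctxNGVAS{\anngvas}{\amarkingin, \anonterminalin, \amarkingp}$ with $\amarkingp$ a specialization of $\asol[\outvar]$ for some solution $\asol$, and $\amarkingp$ differs from $\outof{\anonterminalin}$ only on counters not in $\omegaoutcount$. Since counters outside $\omegaoutcount$ are by definition bounded in the solution space of the homogeneous characteristic equations, the value $\asol[\outvar]$ on these counters is fixed across all solutions, so any run witnessing $\runsof{\anngvas'}$ already reaches a marking specializing $\outof{\anonterminalin}$; conversely every output value reached by a run of $\otherctxNGVAS{\anngvas}{\amarkingin, \anonterminalin, \outof{\anonterminalin}}$ yields (via the marking equation, Lemma for $\meof{\updates}{}$) a solution $\asol$ with $\asol[\outvar]$ equal to that value, hence is covered by exactly one $\amarkingp$ with $\omegaof{\amarkingp}=\omegaoutcount$. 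This gives the nontrivial inclusion $\subseteq$ too, since $\thezdecomp$ enumerates all such $\amarkingp$. The key point to spell out carefully is that restricting the output marking of a variant does not remove runs whose actual output already lies below that marking — this is immediate from the definition of $\runsof{-}$, which only imposes $\amarkingp\sqsubseteq\acontextout$.

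For the second assertion, I would go through the perfectness conditions one at a time and check that each one, \emph{except} \perfectnesspumping and \perfectnesssolnospace, transfers from $\anngvas$ to every $\anngvas'\in\thezdecomp$. Conditions \perfectnesschildren and \perfectnessbase only speak about childNGVAS and their restrictions, which are untouched by the variant construction (the grammar, terminals, and boundedness information are unchanged; only $\startnonterm$, $\acontextin$, $\acontextout$, $\restrictions$, and $\unconstrained$ change, and here $\unconstrained$ and the grammar structure up to the start symbol are preserved since $\unconstrained=\omegaof{\acontextin}=\omegaof{\amarkingin}$). Since $\anngvas$ is non-linear, \perfectnesspumpingint and \perfectnesschildperiodsbd are vacuous, and \perfectnesscountersin is vacuous. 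Condition \perfectnessprods concerns productions and child period vectors lying in $\suppof{\homchareq{\anngvas'}}$; the homogeneous characteristic equations of $\otherctxNGVAS{\anngvas}{\amarkingin, \anonterminalin, \amarkingp}$ coincide with those of $\otherctxNGVAS{\anngvas}{\acontextin, \startnonterm, \outof{\startnonterm}}$ (both use $\zeroof{\amarkingin}=\zeroof{\acontextin}$ and $\zeroof{\amarkingp}=\zeroof{\outof{\startnonterm}}$ because $\omegaof{\amarkingp}=\omegaoutcount\supseteq\omegaof{\acontextout}$ and the out-counters outside $\omegaoutcount$ are forced to $0$ in the homogeneous system regardless), so the support is the same and the condition is inherited from $\anngvas$.

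The one condition requiring real work is \perfectnesscounters, specifically \perfectnesscountersctx: we need $\at{\invar}{\omegaof{\amarkingin}}$ and $\at{\outvar}{\omegaof{\amarkingp}}$ to lie in $\suppof{\homchareq{\anngvas'}}$. For the input part, $\omegaof{\amarkingin}=\unconstrained=\omegaof{\acontextin}$ and $\anngvas$ already satisfies \perfectnesscountersctx, so $\at{\invar}{\omegaof{\acontextin}}\subseteq\suppof{\homchareq{\anngvas}}=\suppof{\homchareq{\anngvas'}}$ by the support-equality just noted. The output part is exactly what the choice $\omegaof{\amarkingp}=\omegaoutcount$ is engineered to deliver: by definition $\omegaoutcount$ is the set of output counters $i$ with $\outvar[i]\in\suppof{\otherctxNGVAS{\anngvas}{\acontextin, \startnonterm, \outof{\startnonterm}}.\homchareq{}}$, which is the same support, so $\at{\outvar}{\omegaoutcount}\subseteq\suppof{\homchareq{\anngvas'}}$ holds by construction. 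I would also verify that $\omegaof{\amarkingp}\subseteq\abdinfomid$ (needed for $\amarkingp$ to be a legal output marking of a variant with the inherited boundedness information): this follows because $\omegaoutcount\supseteq\omegaof{\acontextout}$ and, for the fresh $\omega$-entries, any $i\in\omegaoutcount$ has $\outvar[i]$ unbounded in the solution space, which forces $i\in\abdinfomid$ by \perfectnesscountersctx-type reasoning on the original $\anngvas$. I expect the main obstacle to be bookkeeping: making sure that every clause of the variant construction (new $\unconstrained$, new contexts, trivialized restriction $\Z^d$) interacts correctly with each perfectness clause, and that the claimed identity $\suppof{\homchareq{\anngvas'}}=\suppof{\homchareq{\otherctxNGVAS{\anngvas}{\acontextin,\startnonterm,\outof{\startnonterm}}}}$ is airtight — the rest is routine inheritance.
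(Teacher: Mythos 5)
Your proposal is correct in substance and follows essentially the same route as the paper: the run-equality comes from the correspondence between runs and the characteristic equations, \perfectnesschildren and \perfectnessbase are inherited because the children are untouched, and the homogeneous-solution conditions come from the definition of $\omegaoutcount$ together with the fact that the homogeneous system of each $\anngvas'\in\thezdecomp$ is a relaxation of $\anngvas.\homchareq{}$. Two small imprecisions to fix: $\zeroof{\amarkingp}=\zeroof{\outof{\startnonterm}}$ is false whenever $\omegaoutcount\subsetneq\abdinfomid$ — the two homogeneous systems merely have the same \emph{solution set}, because every counter outside $\omegaoutcount$ is zero in every homogeneous solution of the relaxed system — and similarly $\suppof{\homchareq{\anngvas}}=\suppof{\homchareq{\anngvas'}}$ is false in general (dropping the restriction and relaxing the output constraint can strictly enlarge the support); all your argument needs, and all that holds, is the inclusion $\suppof{\homchareq{\anngvas}}\subseteq\suppof{\homchareq{\anngvas'}}$, which follows from strictness. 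Likewise, output counters outside $\omegaoutcount$ are bounded but not necessarily fixed across solutions; this does not hurt you, since $\thezdecomp$ enumerates one $\amarkingp$ per solution anyway.
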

\begin{proof}[Proof Sketch]
    The run preservation is clear from the correspondence between the characteristic equation and runs.
    The perfectness conditions \perfectnesschildren and \perfectnessbase follow from $\anngvas$, since we did not modify the children.
    All the homogeneous-solution related conditions \perfectnesscountersnospace, \perfectnesschildperiodsbdnospace, and \perfectnessprodsnospace\ also hold.
    This is because of the definition of $\omegaoutcount$, and the following fact. 
    The full support solution to $\anngvas.\homchareq{}$, is also a support solution to $\otherctxNGVAS{\anngvas}{\amarkingin, \anonterminalin, \amarkingp}.\homchareq{}$ for all $\otherctxNGVAS{\anngvas}{\amarkingin, \anonterminalin, \amarkingp}\in\thezdecomp$, since $\otherctxNGVAS{\anngvas}{\acontextin, \startnonterm, \acontextout}.\homchareq{}$ is stricter than $\otherctxNGVAS{\anngvas}{\amarkingin, \anonterminalin, \outof{\anonterminalin}}.\homchareq{}$.
    Finally, we have no rigid counters outside of the boundedness information.
    This is because $\omegaoutcount\subseteq\omegaof{\acontextout}$ and $\omegaof{\amarkingin}\subseteq\omegaof{\acontextin}$, and counter $i$ being rigid in $\otherctxNGVAS{\anngvas}{\amarkingin, \anonterminalin, \amarkingp}\in\thezdecomp$ would imply it being rigid in $\anngvas$.
\end{proof}

Our assumptions carry over to this set.
In particular, because all $\anngvas'\in\thezdecomp$ share the same input resp. output $\omega$'s, the (mis)existence of a $\Z$-pump effects them equally.
If the $\Z$-pump does not exist, we conclude with \Cref{Lemma:ZPumpImpliesUnbounded} and \Cref{Lemma:ZApproxPumpDecomp}.
\begin{restatable}{lemma}{LemmaTheZDecompAssumption}\label{Lemma:TheZDecompAssumption}
    Let $\anngvas$ have no $(\unconstrained, \omegaoutcount)$-$\Z$-pump.
    Then, we can compute a perfect decomposition of the relevant variant $\otherctxNGVAS{\anngvas}{\amarking, \anonterm, \outof{\anonterm}}$.
\end{restatable}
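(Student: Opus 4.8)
The plan is to handle the NGVAS $\otherctxNGVAS{\anngvas}{\amarkingin, \anonterminalin, \outof{\anonterminalin}}$ through the deconstruction $\thezdecomp$ built above, reducing everything to the $\Z$-approximation coverability grammar. By \Cref{Lemma:TheZDecomp}, $\thezdecomp$ is a finite, effectively computable deconstruction of $\otherctxNGVAS{\anngvas}{\amarkingin, \anonterminalin, \outof{\anonterminalin}}$, and every $\anngvas' \in \thezdecomp$ satisfies all perfectness conditions except \perfectnesssolnospace\ and \perfectnesspumpingnospace. Moreover each such $\anngvas'$ has the form $\otherctxNGVAS{\anngvas}{\amarkingin, \anonterminalin, \amarkingp}$ with $\omegaof{\amarkingp} = \omegaoutcount$; since $\omegaof{\amarkingin} = \unconstrained$ and $\omegaof{\acontextout} \subseteq \omegaoutcount$, this gives $\omegaof{\anngvas'.\acontextin} = \unconstrained$, $\omegaof{\anngvas'.\acontextout} = \omegaoutcount$, and $\anngvas'.\unconstrained = \unconstrained$. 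Because $\otherctxNGVAS{\anngvas}{\cdot}$ touches neither the grammar, nor the children, nor (in the present case) $\unconstrained$, every member of $\thezdecomp$ has the same rank as $\otherctxNGVAS{\anngvas}{\amarkingin, \anonterminalin, \outof{\anonterminalin}}$. Consequently, if for each $\anngvas' \in \thezdecomp$ we compute a perfect \emph{decomposition} $\adecomp_{\anngvas'}$ of $\anngvas'$, then the union $\bigcup_{\anngvas'} \adecomp_{\anngvas'}$ is a perfect decomposition of $\otherctxNGVAS{\anngvas}{\amarkingin, \anonterminalin, \outof{\anonterminalin}}$ (a union of perfect decompositions of the members of a deconstruction is a perfect deconstruction, and the rank bound carries over). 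So it remains to treat a single $\anngvas' \in \thezdecomp$.

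First I would transfer the hypothesis from $\anngvas$ to $\anngvas'$. A $(\unconstrained, \omegaoutcount)$-$\Z$-pump is defined solely in terms of the cycles of the grammar and the restrictions of the children, both of which $\anngvas'$ inherits unchanged from $\anngvas$ (only the start symbol and the context information differ); hence a $(\unconstrained, \omegaoutcount)$-$\Z$-pump for $\anngvas'$ would be one for $\anngvas$, which by assumption does not exist. Next I would invoke \Cref{Lemma:ZPumpImpliesUnbounded} for $\anngvas'$: its hypotheses hold, as $\omegaof{\acontextout} \subseteq \omegaoutcount = \omegaof{\anngvas'.\acontextout}$ and $\unconstrained \subseteq \omegaof{\anngvas'.\acontextin} \cap \omegaof{\anngvas'.\acontextout}$. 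Therefore, if the coverability grammar $\wtgrammarof{\anngvas', \intpostapprox, \intpreapprox}$ showed unboundedness, $\anngvas'$ would admit a $(\omegaof{\anngvas'.\acontextin}, \omegaof{\anngvas'.\acontextout})$-$\Z$-pump, i.e.\ a $(\unconstrained, \omegaoutcount)$-$\Z$-pump, contradicting the previous step. Hence $\wtgrammarof{\anngvas', \intpostapprox, \intpreapprox}$ remains bounded.

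Finally, since $\intpostapprox$ and $\intpreapprox$ are computable post- resp.\ pre-approximators (\Cref{Lemma:ZPostPreApproximators}), the coverability grammar $\wtgrammarof{\anngvas', \intpostapprox, \intpreapprox}$ can be constructed effectively and its boundedness detected. \Cref{Lemma:ZApproxPumpDecomp}, using that $\perfect$ is reliable up to $\rankof{\anngvas}$, then yields an effectively computable perfect decomposition of $\anngvas'$. Combining these decompositions over $\anngvas' \in \thezdecomp$ completes the construction. The only subtle point in this argument is the $\omega$-set bookkeeping: $\omegaoutcount$ and $\thezdecomp$ were engineered exactly so that every member of $\thezdecomp$ carries the same input and output $\omega$-pattern ($\unconstrained$ resp.\ $\omegaoutcount$); this is what makes ``no $\Z$-pump'' transfer uniformly and lets \Cref{Lemma:ZPumpImpliesUnbounded} be applied with matching parameters. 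Everything else is a routine composition of already-established lemmas.
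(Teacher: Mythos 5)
Your proposal is correct and follows essentially the same route as the paper: the paper concludes this lemma exactly by noting that all members of $\thezdecomp$ share the input/output $\omega$-pattern $(\unconstrained,\omegaoutcount)$, so the absence of a $(\unconstrained,\omegaoutcount)$-$\Z$-pump (which depends only on the grammar and children, not on the variant's context) forces every coverability grammar $\wtgrammarof{\anngvas',\intpostapprox,\intpreapprox}$ to remain bounded via the contrapositive of \Cref{Lemma:ZPumpImpliesUnbounded}, after which \Cref{Lemma:ZApproxPumpDecomp} yields perfect decompositions that are combined over the deconstruction. Your additional bookkeeping (equal ranks of the variants, transitivity of the deconstruction conditions, and the premise check $\omegaof{\acontextout}\subseteq\omegaoutcount$ via \Cref{Lemma:RelaxedSupport}) is exactly the detail the paper leaves implicit.
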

For the remainder of the section, we assume that $\anngvas$ has $(\unconstrained, \omegaoutcount)$-free-$\N$-pumps, centered on each non-terminal, and with size less than 
$\incconst\in\N$ from \Cref{Lemma:PrecalculationII}.
We have covered the cases where $\asymbolin\in\trms$, or $\omegaof{\amarking}=\omegaof{\acontextin}=\unconstrained$ does not hold, or $\anngvas$ does not have a $(\unconstrained, \omegaoutcount)$-$\Z$-pump.
Then, only the case $\asymbolin\in\nonterms$, where $\anngvas$ has a $(\unconstrained, \omegaoutcount)$-$\Z$-pump is open.
We distinguish these cases based on the sizes of the values in the input counters, all small, or at least one large.
First, in \Cref{Section:HardCaseOne}, we handle the case where one counter in $\amarkingin$ is large.
Lastly, in \Cref{Section:HardCaseTwo}, we handle the remaining case by using all our previous computability assumptions.

\subsection{Hard Case 1: Large Input Counters}\label{Section:HardCaseOne}

The main observation of this case is the following.
If at least one input and one output counter in $\anngvas'\in\thezdecomp$ are large enough, we observe that we can make a Rackoff argument.
Unless $\anngvas'\in\simplydecomps$ holds, which we handled in \extrac, a pair of large input and output counters give us \perfectnesspumpingnospace.
Since $\anngvas'\in\thezdecomp$ already have all other perfectness conditions, we can read off the output values to get the values covered by runs in $\runsof{\anngvas'}$ for $\anngvas'\in\thezdecomp$.
We use this to compute $\postfunc$ and $\prefunc$ in the case of a large input counter.

In this section, we work with pumping derivations as an object with size, which we formalize as follows.
A \emph{pumping derivation} in the variant $\otherctxNGVAS{\anngvas}{\amarking, \anonterm, \amarkingp}$ is a tuple $(\anonterm\to^{*}\aword.\anonterm.\awordp, \arun, \arunp)$ consisting of a cycle $\anonterm\to^{*}\aword.\anonterm.\awordp$, and two sequences $\arun\in\updateseqof{\aword}$, and $\arunp\in\updateseqof{\awordp}$ with $\amarkingpp, \amarkingppp\in\Nomega^{d}$ where $\amarking\fires{\arun}\amarkingpp$, $\amarkingp\fires{\arunp^{rev}}\amarkingppp$, $\amarking[i]<\amarkingpp[i]$ for all $i\in\abdinfomid\setminus\omegaof{\amarking}$, and $\amarkingp[i]<\amarkingppp[i]$ for all $i\in\abdinfomid\setminus\omegaof{\amarkingp}$.
Similarly to the free-$\N$-pump, the size of a pumping derivation $(\anonterm\to^{*}\aword.\anonterm.\awordp, \arun, \arunp)$ is $\max(\normof{\arun}, \normof{\arunp})$.
The core of the argument is the following lemma.
This is the detailed version of \Cref{Lemma:SimpleBound} from the outline.
\begin{restatable}{lemma}{LemmaSimpleBoundDetailedMainPaper}\label{Lemma:SimpleBoundDetailed}
    Let $\otherctxNGVAS{\anngvas}{\amarking, \anonterm, \amarkingp}$ be a variant of $\anngvas$, let there be a $(\omegaof{\amarking}, \omegaof{\amarkingp})$-free-$\N$-pump, centered on $\anonterm$, of size at most $k$.
    Let $I, O\subseteq\abdinfomid$, where $\otherctxNGVAS{\anngvas}{\settoomega{I}{\amarking}, \anonterm, \settoomega{O}{\amarkingp}}$ has a pumping derivation of size at most $\ell\in\N$, as well as $\amarking[i]\geq k\cdot (\ell+1)$ for all $i\in I$ and $\amarkingp[j]\geq k\cdot(\ell+1)$ for all $j\in O$.
    Then, $\otherctxNGVAS{\anngvas}{\amarking, \anonterm, \amarkingp}$ has \perfectnesspumpingnospace.
\end{restatable}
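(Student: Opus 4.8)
The plan is to stitch together the two given objects exactly as in the proof of \Cref{Lemma:SimpleBound}, but now keeping careful track of the derivation so that the resulting run is still derivable from $\anonterm$ and still lands in the correct variant. Write $F = (\anonterm\to^{*}\aword_F.\anonterm.\awordp_F,\ \arun_F, \arunp_F)$ for the free-$\N$-pump of size at most $k$ and $P = (\anonterm\to^{*}\aword_P.\anonterm.\awordp_P,\ \arun_P, \arunp_P)$ for the pumping derivation of size at most $\ell$ in $\otherctxNGVAS{\anngvas}{\settoomega{I}{\amarking}, \anonterm, \settoomega{O}{\amarkingp}}$. Since both derivations are centered on $\anonterm$, we may compose them freely: the cycle $\anonterm\to^{*}(\aword_F)^{a}.\aword_P.\anonterm.\awordp_P.(\awordp_F)^{b}$ is derivable in the grammar for any $a,b\ge 0$, and its update sequences are $(\arun_F)^{a}.\arun_P$ on the left and $\arunp_P.(\arunp_F)^{b}$ on the right (here I am implicitly reusing the fact that the children along $\aword_F,\awordp_F$ admit the runs $\arun_F,\arunp_F$, exactly as in the free-$\N$-pump definition). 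The goal is to pick $a$ (and symmetrically $b$ for the right side) so that, fired from $\amarking$, this composite run stays non-negative on every counter and strictly increases every counter of $\abdinfomid\setminus\omegaof{\amarking}$ — which is precisely \perfectnesspumpingnospace.

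First I would isolate which counters can cause trouble. On counters outside $\abdinfomid$ (concretely tracked) nothing can go wrong by the consistency conditions on $\infun,\outfun$, exactly as in the iteration-lemma argument; on counters in $\omegaof{\amarking}$ the marking is $\omega$ so non-negativity and the strict-increase requirement are vacuous. So the only counters of interest are the relevant ones, $\abdinfomid\setminus\omegaof{\amarking}$ on the input side. Split these into $I$ (the ones the pumping derivation $P$ is allowed to ignore / set to $\omega$) and the complement $(\abdinfomid\setminus\omegaof{\amarking})\setminus I$. On the complement, $P$ already has a strictly positive effect and stays non-negative when fired from any marking that is $\ge$ its hurdle there; since the updates are bounded and $P$ has length $\le \ell$, a value of $\ell$ suffices — and in fact we get strict increase for free. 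The difficulty is $I$: there $P$ may have a negative total effect and a large hurdle, so firing $P$ from $\amarking$ can drive those counters below zero. This is where $F$ comes in: $F$ has a strictly positive effect on every counter of $\abdinfomid\setminus\omegaof{\amarking}$ (in particular on $I$), but only a $\Z$-run, so its hurdle is large. The remedy, following \Cref{Lemma:SimpleBound}, is to run $F$ first enough times to build a buffer on the counters in $I$, then run $P$: the negative excursion of $P$ on $I$ is bounded by $\ell$ (length times max update $1$, or times the largest absolute update in the general case), so a buffer of $\ell$ on each counter of $I$ survives $P$. Concretely one takes $a = a(\ell,k)$ on the order of $k\cdot(\ell+1)$ copies of $F$; firing $(\arun_F)^{a}$ needs value $k$ in every relevant counter to get started (the hurdle of a single $\arun_F$ is at most $k$), and the hypothesis $\amarking[i]\ge k\cdot(\ell+1)$ for $i\in I$ is exactly what makes $(\arun_F)^{a}$ enabled and leaves a surplus of $\ge \ell$ on $I$ before $P$ is fired. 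The symmetric construction on the right uses $\arunp_F,\arunp_P$, the reversed firing from $\amarkingp$, and the hypothesis $\amarkingp[j]\ge k\cdot(\ell+1)$ for $j\in O$.

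Two points will need care, and I expect the second to be the real obstacle. (i) Derivability and the variant: the composite left derivation must end with a copy of $\anonterm$ in the right spot so that it can be concatenated with the right derivation and with the reaching run in between; because $F$ and $P$ are both $\anonterm$-centered this is immediate, but one must also check that the composite run witnesses \perfectnesspumpingnospace\ for $\otherctxNGVAS{\anngvas}{\amarking,\anonterm,\amarkingp}$ itself, i.e.\ with input $\sqsubseteq\amarking$ and output $\sqsubseteq\amarkingp$ — which holds because we fired literally from $\amarking$ (resp.\ backwards from $\amarkingp$) and never relaxed anything. (ii) The interaction between $I,O$ and the $\omega$-set of the pumping derivation's variant: $P$ lives in $\otherctxNGVAS{\anngvas}{\settoomega{I}{\amarking},\anonterm,\settoomega{O}{\amarkingp}}$, so its update sequences are guaranteed to keep counters outside $I$ (resp.\ $O$) non-negative, but on $I,O$ they are only $\Z$-runs; one has to verify that the buffer built by $F$ restores non-negativity on exactly those counters, and that $F$'s own non-negativity on $I,O$ is not itself an issue — it is not, because $F$'s effect there is strictly positive, so iterating $F$ only raises those counters. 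Assembling these bookkeeping facts and checking that the single threshold $k\cdot(\ell+1)$ simultaneously enables the left construction on all of $I$ and the right construction on all of $O$ (they are independent, so no conflict arises) completes the argument; the resulting run, prefixed by the $F$-buffer and with the reaching derivation spliced at the central $\anonterm$, is the required up-/down-pumping pair, establishing \perfectnesspumpingnospace.
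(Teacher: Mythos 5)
There is a genuine gap, and it lies in the order in which you compose the two ingredients: your run fires the free-$\N$-pump \emph{before} the pumping derivation, and in that order it need not be enabled. The free-$\N$-pump is only an update sequence of size at most $k$ with no enabledness guarantee; its hurdle can be as large as $k$ on \emph{every} relevant counter, in particular on the counters in $(\abdinfomid\setminus\omegaof{\amarking})\setminus I$. On those counters the hypothesis of the lemma gives no lower bound on $\amarking$ at all --- they may be $0$ (and in the intended application, \Cref{Lemma:FreePumpNoSDBothSidesRackoff}, $I$ and $O$ are singletons, so almost all relevant counters lie in this unconstrained set). Hence your explicit claim that ``the hypothesis $\amarking[i]\geq k\cdot(\ell+1)$ for $i\in I$ is exactly what makes $(\arun_{F})^{a}$ enabled'' is false: that hypothesis constrains only the $I$-counters, while enabling $\arun_{F}$ requires a value of about $k$ on all of $\abdinfomid\setminus\omegaof{\amarking}$. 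The same problem occurs symmetrically on the output side with $O$.

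The paper's proof composes in the opposite order, and this is the missing idea. It first fires the pumping derivation $k$ times: this \emph{is} enabled from $\amarking$, because on the relevant counters outside $I$ it is a genuine non-negative pumping run (its variant agrees with $\amarking$ there), and on the $I$-counters its total excursion over $k$ iterations is at most $k\cdot\ell$, which the assumed value $k\cdot(\ell+1)$ absorbs while keeping at least $k$ tokens throughout. After this prefix every relevant counter holds at least $k$, so the free-$\N$-pump (hurdle at most $k$) becomes enabled, and iterating it $k\cdot\ell+1$ times overcompensates the at most $k\cdot\ell$ loss incurred on $I$, yielding the strictly positive effect required by \perfectnesspumpingnospace; the right side is handled symmetrically with the reversed runs. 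Note also that your appeal to \Cref{Lemma:SimpleBound} inverts its construction: there, too, the run is $(\upseqi)^{\cardof{\upseqz}}.(\upseqz)^{\cardof{\upseqz}\cdot\cardof{\upseqi}+1}$, i.e.\ the counter-ignoring pumping run comes first and the $\Z$-run second. The rest of your bookkeeping (derivability of the composed cycle, concretely tracked counters, the effect arithmetic, independence of the two sides) is fine once the order is corrected.
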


\begin{proof}
    Let $\otherctxNGVAS{\anngvas}{\amarking, \anonterm, \amarkingp}$, $I, O\subseteq\abdinfomid$, and $\ell\in\N$ as given in the lemma.
    Let $(\anonterm\to^{*}\aword.\anonterm.\awordp, \arun_{f}, \arunp_{f})$ be the $(\omegaof{\amarking}, \omegaof{\amarkingp})$-free-$\N$-pump with size less than $k\in\N$, and $(\anonterm\to^{*}\aword.\anonterm.\awordp, \arun_{pd}, \arunp_{pd})$ the pumping derivation of $\otherctxNGVAS{\anngvas}{\settoomega{I}{\amarking}, \anonterm, \settoomega{O}{\amarkingp}}$ with size less than $\ell\in\N$.
    Let $I_{pmp}=\abdinfomid\setminus\omegaof{\amarking}$ and $O_{pmp}=\abdinfomid\setminus\omegaof{\amarkingp}$.
    As a shorthand, we write $\anonterm\to^{*}\arun.\anonterm.\arunp$ denote the existence of a derivation $\anonterm\to^{*}\aword.\anonterm.\awordp$ with $\arun\in\updateseqof{\aword}$ and $\arunp\in\updateseqof{\awordp}$.
    First, we argue that there is a derivation $\anonterm\to^{*}\arun_{1}.\anonterm.\arunp_{1}$ with
    where $\amarking\fires{\arun_{1}}\amarking'$, $\amarkingp\fires{\arunp_{1}^{rev}}\amarkingp'$, where $\amarking'[m]\geq k$ for all $m\in I_{pmp}$ and $\amarkingp'[m]\geq k$ for all $m\in O_{pmp}$.
    Note that this is weaker than our main goal, since the constraint in the claim is on the outcome of the derivation, and we want the effect to be positive. 
    The derivation $\anonterm\to^{*}\arun_{pd}^{k}.\anonterm.\arunp_{pd}^{k}$ readily satisfies our requirements.
    The effect of $\arun_{pd}$ is positive on counters $I_{pmp}\setminus I$, and it is enabled from these counters by the definition of the pumping derivation.
    Then, iterating it $k$ times, we get an effect of $k$.
    For the counters in $I$, we already started with at least $k\cdot (\ell+1)$ counters.
    So, anywhere along the execution of $\arun_{pd}^{k}$, the value of this counter is at least $k\cdot (\ell+1)-k\cdot \ell=k$.
    The argument for $\arunp_{pd}^{k}$ is similar.
    The free-$\N$-pump $\anonterm\to^{*}\arun_{f}\anonterm.\arunp_{f}$ handles the rest.
    Since the size of the pump is at most $k$, $\arun_{f}$ is enabled from $\amarking'$.
    The same argument applies backwards from $\amarkingp'$ for $\arunp_{f}$.
    Then, the complete derivation 
    $$\anonterm\to^{*}\arun_{pd}^{k}.\arun_{f}^{k\cdot \ell+1}.\anonterm.\arunp_{f}^{k\cdot\ell+1}.\arunp_{pd}^{k}$$
    is indeed a pumping derivation.
    The runs are enabled, and the effect (forwards from the input resp. backwards from the output) is at least $k\cdot \ell+1- k\cdot \ell=1$ on each counter.
\end{proof}
Now, we observe that we can use this argument to compute an upper bound on all lower dimensional pumping derivations.
The proof has been moved to \Cref{Section:Precalculation}.
Intuitively, we make a Rackoff argument, and apply \Cref{Lemma:SimpleBoundDetailed} repeatedly to establish upper bounds that allow more and more concrete positions.
\begin{restatable}{lemma}{LemmaPrecalculationIIMainPaper}\label{Lemma:PrecalculationII}
    Let $\anngvas$ be an NGVAS.
    Let it have all perfectness conditions excluding \perfectnesspumpingnospace, and a $(\unconstrained, \omegaoutcount)$-free-$\Z$-pump.
    Let $\otherctxNGVAS{\anngvas}{\amarking, \anonterm, \amarkingp}$ be a relevant variant with $\amarking\in\omegamrkdomainof{\unconstrained}$ and $\amarkingp\in\omegamrkdomainof{\omegaoutcount}$.
    Furthermore, let $\perfect$ be reliable up to $\rankof{\anngvas}$.
    We can compute a bound $\extcovconst\in\N$, such that the following holds for all $i,j\in\abdinfomid\setminus\omegaof{\amarking}$, $\amarking'\in\omegamrkdomainof{\omegaof{\amarking}}$, $\amarkingp'\in\omegamrkdomainof{\omegaof{\amarkingp}}$, and $\anontermp\in\nonterms$.
    If $\otherctxNGVAS{\anngvas}{\settoomega{i}{\amarking'}, \anontermp, \settoomega{j}{\amarkingp'}}$ has a pumping derivation, then it has one of size at most $\extcovconst$.
\end{restatable}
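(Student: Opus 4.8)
The plan is to prove \Cref{Lemma:PrecalculationII} by a Rackoff-style induction on the number of concretely tracked counters that remain concrete (i.e.\ on $|\abdinfomid\setminus(\omegaof{\amarking'}\cup\{i\})|$ or, more precisely, on how many counters are ``still small''), iterating \Cref{Lemma:SimpleBoundDetailed}. The base case is when so many counters have been set to $\omega$ that $\neg\extrac$ guarantees directly, via \Cref{Lemma:SimplyUndecomposable}, a pumping derivation; here the pumping derivation is obtained from the coverability grammar $\wtgrammarof{\otherctxNGVAS{\anngvas}{\settoomega{i}{\amarking'}, \anontermp, \settoomega{j}{\amarkingp'}}, \forgetfulpostapprox{k}, \forgetfulpreapprox{l}}$ showing unboundedness, and its size is bounded by the size of that finite grammar, which is computable by \Cref{Corollary:ForgetfulApproxComputability} (it has finitely many non-terminals, and a derivation witnessing a node $(\inof{\anontermpp},\anontermpp,\outof{\anontermpp})$ can be extracted of length at most the number of non-terminals times a branching factor). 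So the bound for the base case is a computable function of $\anngvas$ alone.

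For the induction step, fix a variant $\otherctxNGVAS{\anngvas}{\settoomega{i}{\amarking'}, \anontermp, \settoomega{j}{\amarkingp'}}$ with a pumping derivation, and suppose we have already computed (by the induction hypothesis) a bound $\ell$ valid for all variants with strictly more counters set to $\omega$. First, I would invoke \Cref{Lemma:PrecalculationI}: since we are in the case where $\anngvas$ \emph{does} have a $(\unconstrained, \omegaoutcount)$-$\Z$-pump (the assumption standing at this point in the development), there is a computable $\incconst$ bounding the size of a $(\abdinfomid\setminus\omegaof{\amarking'}, \abdinfomid\setminus\omegaof{\amarkingp'})$-free-$\N$-pump centered on $\anontermp$ — note the free-$\N$-pump ignores positivity exactly on the relevant counters, so it is available regardless of the concrete values in $\amarking',\amarkingp'$. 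Next, take any counter $m$ that is concrete in $\amarking'$ (respectively $\amarkingp'$). If $\at{\amarking'}{m} \geq \incconst\cdot(\ell+1)$, then \Cref{Lemma:SimpleBoundDetailed} applies with $k=\incconst$, $I=\{i\}$ collapsed, and the pumping derivation of size $\le\ell$ obtained for $\otherctxNGVAS{\anngvas}{\settoomega{m}{\amarking'}, \anontermp, \dots}$ from the induction hypothesis — this yields a pumping derivation for the present variant. But actually what I need is a \emph{size bound} on such a derivation; the construction inside the proof of \Cref{Lemma:SimpleBoundDetailed} builds it explicitly as $\arun_{ps}^{k}.\arun_{f}^{k\ell+1}.\anonterm.\arunp_{f}^{k\ell+1}.\arunp_{ps}^{k}$, so its size is bounded by $k\cdot\ell + (k\ell+1)\cdot\incconst + \ldots$, a computable function of $\incconst$ and $\ell$. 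If instead all concrete counters are below $\incconst\cdot(\ell+1)$, there are only finitely many candidate input/output markings (finitely many values below this threshold in each of the $\le 2d$ concrete coordinates), and for each we can decide the existence of a pumping derivation and, if it exists, extract one of bounded size via the Karp–Miller construction of \Cref{Lemma:ExtPumpComp} (using that $\postfunc,\prefunc$ are computable on the relevant domain, which holds by \extrab / \Cref{Lemma:LowDimPostPreComputable} since at least $i$ or $j$ is set to $\omega$). Taking the maximum of the finitely many extracted sizes together with the large-counter bound gives $\extcovconst$.

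The final bound $\extcovconst$ is then the maximum over the $O(d^2)$ choices of the pair $i,j$ and over the two regimes (large counter vs.\ all small) of the recursively accumulated bounds; since the recursion depth is at most $d$ and each step multiplies the previous bound by a computable factor depending only on $\incconst$ and $d$, the whole quantity is computable. I would be careful to phrase the statement so that the resulting $\extcovconst$ is uniform over all $\anontermp\in\nonterms$ (there are finitely many) and over all $\amarking',\amarkingp'$ with the prescribed $\omega$-sets (handled either by the size bound, which does not depend on the concrete values, or by the finiteness of small markings).

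The main obstacle, and the part needing the most care, is the interplay between \Cref{Lemma:SimpleBoundDetailed} and the induction: the lemma converts a pumping derivation for a \emph{more abstract} variant (one more $\omega$) plus a free-$\N$-pump into one for the current variant, but only when the collapsed counters hold a value $\ge k\cdot(\ell+1)$. So the induction is really a Rackoff ``$r$ rounds'' argument: at round $r$ we know a bound for all variants with $\ge r$ extra $\omega$'s, and we lift it to $r-1$ extra $\omega$'s by a case split on whether the single ``reinstated'' concrete counter is large or small, exactly mirroring \Cref{Section:PostHardCase1} / \Cref{Lemma:Bound} in the overview. The bookkeeping — making sure the free-$\N$-pump used has the right $(I,O)$ (which is forced by which counters are $\omega$ in the current variant, and these are among $\omegaof{\amarking'}=\omegaof{\amarking}$ or one extra), and that $\neg\extrac$ / $\neg\extrad$ supply the base case rather than leaving a gap — is where a careless proof would break, so I would state the induction invariant very explicitly before starting the computation.
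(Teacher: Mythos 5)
Your overall skeleton is the same as the paper's: an induction on how many counters of $\abdinfomid$ are abstracted to $\omega$, a threshold of the form $\incconst\cdot(\ell+1)$ obtained from \Cref{Lemma:PrecalculationI}, the large-counter case lifted through \Cref{Lemma:SimpleBoundDetailed}, and the small-counter case handled by finiteness plus decidability of pumping on the lower-dimensional variants (the paper packages the same argument as computing the finite set $\extcovmap_{I,O}$ of minimal pumpable triples, using monotonicity, in Lemmas \ref{Lemma:PumpingMonotonous}--\ref{Lemma:ExtCovMapComputable}). However, two steps as you wrote them do not go through. First, your base case is misframed: $\neg\extrac$ (i.e.\ $(\amarking,\anonterm,\amarkingp)\not\in\simplydecomps$) is \emph{not} a hypothesis of this lemma -- the statement is purely conditional (``if a pumping derivation exists, then a small one exists'') and must hold for simply decomposable triples as well, so \Cref{Lemma:SimplyUndecomposable} cannot be what anchors the induction. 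The correct anchor is the trivial case $I=O=\abdinfomid$, where the empty cycle is already a pumping derivation (this is exactly the paper's base case $\extcovmap_{\abdinfomid,\abdinfomid}=\setcond{(\inof{\anonterm},\anonterm,\outof{\anonterm})}{\anonterm\in\nonterms}$). Second, you claim size bounds can be read off ``the size of the coverability grammar'' resp.\ ``via the Karp--Miller construction''; this is unjustified, because both constructions accelerate to $\omega$, so the NGVAS derivation witnessing a node of the form $(\inof{\anontermpp},\anontermpp,\outof{\anontermpp})$ is in general much longer than the abstract object and no bound on its length follows from the grammar's size. The paper avoids this by separating decision from extraction: \Cref{Lemma:ConstructableSimplePumps} first \emph{decides} existence (possible because at least one counter beyond $\unconstrained$ is $\omega$ on each side, so \Cref{Lemma:LowDimPostPreComputable} and \Cref{Lemma:ExtPumpComp} apply) and then enumerates derivations until the shortest pumping derivation is found; the bound is whatever that enumeration returns, not a bound computed from the Karp--Miller tree.

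A smaller but real bookkeeping gap: in your large-counter step you apply the induction hypothesis to the variant with one more counter set to $\omega$, but you never argue that this more abstract variant has a pumping derivation at all; you need the monotonicity statement (\Cref{Lemma:PumpingMonotonous}) both here and to conclude that the single bound you compute covers \emph{all} markings with the prescribed $\omega$-sets whose concrete entries exceed the threshold (the constructed derivation for the capped marking is reused verbatim for every larger marking). With the base case corrected, the size extraction replaced by decide-then-enumerate, and monotonicity made explicit, your argument becomes essentially the paper's proof of \Cref{Lemma:PrecalculationIIBreakdown}.
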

Using \Cref{Lemma:PrecalculationI}, \Cref{Lemma:PrecalculationII}, and \Cref{Lemma:SimpleBoundDetailed}, we obtain the following.
\begin{restatable}{lemma}{LemmaFreePumpNoSDBothSidesRackoffMainPaper}\label{Lemma:FreePumpNoSDBothSidesRackoff}
    Let $\anngvas$ have a $(\unconstrained, \omegaoutcount)$-$\Z$-pump.
    Let $\anngvas$ have all perfectness conditions excluding \perfectnesspumpingnospace, and let $\perfect$ be reliable up to $\rankof{\anngvas}$.
    Then, we can compute a $\abigconst\in\N$ such that the following holds for all $\amarking\in\omegamrkdomainof{\unconstrained}$ and $\amarkingp\in\intpostapproxof{\amarking, \anonterm}$.
    If $\amarking[i]\geq \abigconst$ and $\amarkingp[j]\geq\abigconst$ for some $i,j\in \abdinfomid\setminus\unconstrained$, and $(\amarking, \anonterm, \amarkingp)\not\in\simplydecomps$, then $\otherctxNGVAS{\anngvas}{\amarking, \anonterm, \amarkingp}$ has all the perfectness conditions, excluding \perfectnesssolnospace.
\end{restatable}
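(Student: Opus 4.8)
The plan is to assemble the statement from the three quantitative results that immediately precede it; no genuinely new argument is required, only a careful matching of parameters. First I would compute the constant $\abigconst$. Since $\anngvas$ has all perfectness conditions except \perfectnesspumpingnospace\ and carries a $(\unconstrained,\omegaoutcount)$-$\Z$-pump, \Cref{Lemma:PrecalculationI} produces a computable $\incconst\in\N$ such that every non-terminal is the centre of a free-$\N$-pump of size less than $\incconst$; for a variant whose input $\omega$-set is $\unconstrained$ and whose output $\omega$-set is $\omegaoutcount$ this is precisely the free-$\N$-pump that appears as the first ingredient of \Cref{Lemma:SimpleBoundDetailed}. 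Next, \Cref{Lemma:PrecalculationII}---applicable because $\perfect$ is reliable up to $\rankof{\anngvas}$ and the variants in question have $\omegaof{\amarking}=\unconstrained$, $\omegaof{\amarkingp}=\omegaoutcount$---gives a computable bound $\extcovconst\in\N$ such that whenever $\otherctxNGVAS{\anngvas}{\settoomega{i}{\amarking'},\anontermp,\settoomega{j}{\amarkingp'}}$ admits a pumping derivation (for any $i,j\in\abdinfomid\setminus\unconstrained$, any non-terminal $\anontermp$, and any $\amarking',\amarkingp'\in\Nomega^{d}$ with the same $\omega$-pattern as above) it admits one of size at most $\extcovconst$. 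I would then set $\abigconst:=\incconst\cdot(\extcovconst+1)$.

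For the implication, I would fix $\amarking\in\Nomega^{d}$ with $\omegaof{\amarking}=\unconstrained$ and $\amarkingp\in\intpostapproxof{\amarking,\anonterm}$. As observed when reestablishing perfectness, the $\omegazer$-abstraction built into $\intpostapprox$ records exactly the counters unbounded in the solution set of the characteristic equations, and since $\omegaof{\amarking}=\omegaof{\acontextin}=\unconstrained$ makes that system coincide with the one defining $\omegaoutcount$, such an $\amarkingp$ has $\omegaof{\amarkingp}=\omegaoutcount$ and $\otherctxNGVAS{\anngvas}{\amarking,\anonterm,\amarkingp}\in\thezdecomp$. Hence by \Cref{Lemma:TheZDecomp} this variant already satisfies every perfectness condition other than \perfectnesssolnospace\ and \perfectnesspumpingnospace\ (and \perfectnesspumpingintnospace\ is vacuous in the non-linear case), so it suffices to derive \perfectnesspumpingnospace. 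Now assume $\amarking[i]\geq\abigconst$, $\amarkingp[j]\geq\abigconst$ for some $i,j\in\abdinfomid\setminus\unconstrained$, and $(\amarking,\anonterm,\amarkingp)\notin\simplydecomps$. From $\unconstrained\subseteq\omegaof{\amarking}\cap\omegaof{\amarkingp}$ and $(\amarking,\anonterm,\amarkingp)\notin\simplydecomps$, \Cref{Lemma:SimplyUndecomposable} supplies a pumping derivation for $\otherctxNGVAS{\anngvas}{\settoomega{i}{\amarking},\anonterm,\settoomega{j}{\amarkingp}}$, and by the choice of $\extcovconst$ one of size at most $\extcovconst$. Finally I would invoke \Cref{Lemma:SimpleBoundDetailed} with $I=\set{i}$, $O=\set{j}$ (both $\subseteq\abdinfomid$), with $k=\incconst$ (the free-$\N$-pump centred on $\anonterm$ from the first step) and $\ell=\extcovconst$ (the pumping derivation just obtained); the numeric hypotheses $\amarking[i],\amarkingp[j]\geq\incconst\cdot(\extcovconst+1)$ hold by the definition of $\abigconst$, and the lemma concludes that $\otherctxNGVAS{\anngvas}{\amarking,\anonterm,\amarkingp}$ has \perfectnesspumpingnospace, which finishes the proof.

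The only real work is bookkeeping, and the step I expect to require the most care is lining up the index sets across the three cited lemmas: verifying that the free-$\N$-pump furnished by \Cref{Lemma:PrecalculationI} for the pair $(\unconstrained,\omegaoutcount)$ is literally the object \Cref{Lemma:SimpleBoundDetailed} expects for a variant whose input and output $\omega$-sets are $\unconstrained$ and $\omegaoutcount$, and that the uniform bound of \Cref{Lemma:PrecalculationII} ranges over enough variants (all non-terminals and all markings with the fixed $\omega$-pattern) to cover the particular variant $\otherctxNGVAS{\anngvas}{\settoomega{i}{\amarking},\anonterm,\settoomega{j}{\amarkingp}}$ that is fed into \Cref{Lemma:SimpleBoundDetailed}. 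I would also double-check the claim $\omegaof{\amarkingp}=\omegaoutcount$ for $\amarkingp\in\intpostapproxof{\amarking,\anonterm}$, since that is exactly what licenses the transfer of the remaining perfectness conditions through \Cref{Lemma:TheZDecomp}.
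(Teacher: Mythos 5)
Your proposal is correct and follows essentially the same route as the paper, which likewise obtains the pumping derivation from \Cref{Lemma:SimplyUndecomposable}, bounds it via \Cref{Lemma:PrecalculationII}, takes the $(\unconstrained,\omegaoutcount)$-free-$\N$-pump of size $\incconst$ from \Cref{Lemma:PrecalculationI}, and feeds both into \Cref{Lemma:SimpleBoundDetailed} with the threshold $\incconst\cdot(\extcovconst+1)$. You even make explicit the transfer of the remaining perfectness conditions via \Cref{Lemma:TheZDecomp} and the identity $\omegaof{\amarkingp}=\omegaoutcount$, which the paper's terse proof leaves to the surrounding discussion.
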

Using ILP techniques, we can strengthen this statement to only restrict the input side, if $\amarkingp\in\intpostapproxof{\amarking, \anonterm}$ is maximal.
The argument can be found in \Cref{Section:Rackoffesque}.
\begin{restatable}{lemma}{LemmaFreePumpNoSDRackoffMainPaper}\label{Lemma:FreePumpNoSDRackoff}
    Let $\anngvas$ have a $(\unconstrained, \omegaoutcount)$-$\Z$-pump.
    Let $\anngvas$ have all perfectness conditions excluding \perfectnesspumpingnospace, and let $\perfect$ be reliable up to $\rankof{\anngvas}$.
    Then, we can compute a $\abigbound\in\N$ such that the following holds for all $\amarking\in\omegamrkdomainof{\unconstrained}$, and \underline{maximal} $\amarkingp\in\intpostapproxof{\amarking, \anonterm}$.
    If $\amarking[i]\geq \abigbound$ for some $i\in \abdinfomid\setminus\unconstrained$, and $(\amarking, \anonterm, \amarkingp)\not\in\simplydecomps$, then $\otherctxNGVAS{\anngvas}{\amarking, \anonterm, \amarkingp}$ has all the perfectness conditions, excluding \perfectnesssolnospace.
\end{restatable}
Using \Cref{Lemma:FreePumpNoSDRackoff} and \Cref{Lemma:SimplePerfectDecomposition}, we can compute $\postfunc$ in the case that $\anngvas$ has a $(\unconstrained, \omegaoutcount)$-$\Z$-pump, and one concrete counter in the input is large. 
By iterating through each element in $\anngvas'\in\thezdecomp$, we either have \Cref{Lemma:SimplePerfectDecomposition}, or there is another $\anngvas''\in\thezdecomp$, whose output marking is larger.
In the former case, there is a perfect deconstruction, and in the latter case, $\anngvas''$ is already perfect up to \perfectnesssolnospace.
The condition \perfectnesssol can easily be established since characteristic equation of $\anngvas''$ has a solution by the definition of $\intpostapprox$.
Then, the runs of $\anngvas'$ were already covered by $\anngvas''$.
\begin{restatable}{lemma}{LemmaFreePumpRackoffMainPaper}\label{Lemma:FreePumpRackoff}
    Let $\anngvas$ be an NGVAS with a $(\unconstrained, \omegaoutcount)$-$\Z$-pump, and with all perfectness conditions excluding \perfectnesspumpingnospace.
    Let $\perfect$ be reliable up to $\rankof{\anngvas}$.
    Then, we can compute a bound $\abigbound\in\N$ such that we can compute $\postfuncN{\anngvas}$ for the domain $\setcond{(\amarking, \anonterm)\in\omegamrkdomainof{\unconstrained}\times\nonterms}{\; \exists i\in\abdinfomid\setminus\unconstrained.\; \amarking[i]\geq \abigbound}$.
\end{restatable}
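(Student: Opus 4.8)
The plan is to combine the Rackoff-style bound of \Cref{Lemma:FreePumpNoSDRackoff} with the decomposition results \Cref{Lemma:TheZDecomp} and \Cref{Lemma:SimplePerfectDecomposition}, and then read off $\postfuncNof{\anngvas}{\amarking, \anonterm}$ from the resulting perfect NGVAS. I take $\abigbound\in\N$ to be the bound produced by \Cref{Lemma:FreePumpNoSDRackoff} applied to $\anngvas$. Fix an input $(\amarking, \anonterm)$ in the claimed domain, so $\omegaof{\amarking}=\unconstrained$ and $\amarking[i]\geq\abigbound$ for some $i\in\abdinfomid\setminus\unconstrained$; by the earlier cases we may assume $\anonterm\in\nonterms$ and that $\anngvas$ carries a $(\unconstrained, \omegaoutcount)$-$\Z$-pump, since otherwise \Cref{Lemma:TheZDecompAssumption} already settles the query. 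The relevant object is the query NGVAS $\otherctxNGVAS{\anngvas}{\amarking, \anonterm, \outof{\anonterm}}$, whose runs are exactly those contributing to $\postfuncNof{\anngvas}{\amarking, \anonterm}$; by \Cref{Lemma:TheZDecomp} it is deconstructed by the computable finite set $\thezdecomp$, all of whose members share the output $\omega$-pattern $\omegaoutcount$ and already satisfy every perfectness condition except \perfectnesspumpingnospace\ and \perfectnesssolnospace. Since $\postfunc$ applies $\iddecompof{\downclsof{-}}$, it is enough to compute the reachable markings of $\otherctxNGVAS{\anngvas}{\amarking, \anonterm, \amarkingp}$ for every $\sqsubseteq$-maximal $\amarkingp\in\intpostapproxof{\amarking, \anonterm}$ — the set $\intpostapproxof{\amarking, \anonterm}$ being computable by \Cref{Lemma:ZPostPreApproximators} — and then take the ideal decomposition of the union, as runs with a smaller output marking are absorbed.

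For a fixed maximal $\amarkingp$ I would first decide, using \Cref{Lemma:SimplePerfectDecomposition}, whether $(\amarking, \anonterm, \amarkingp)\in\simplydecomps$. If it is, \Cref{Lemma:SimplePerfectDecomposition} returns a perfect deconstruction of $\otherctxNGVAS{\anngvas}{\amarking, \anonterm, \amarkingp}$. If it is not, \Cref{Lemma:FreePumpNoSDRackoff} applies — this is exactly where $\amarking[i]\geq\abigbound$ enters — and guarantees that $\otherctxNGVAS{\anngvas}{\amarking, \anonterm, \amarkingp}$ has all perfectness conditions except \perfectnesssolnospace; the missing condition is reinstated by replacing the variant's restriction $\Z^{d}$ with the linear set generated by the effect of one solution of its characteristic equations — such a solution exists precisely because $\amarkingp\in\intpostapproxof{\amarking, \anonterm}$ — together with the periods of a full homogeneous solution, which does not disturb the remaining conditions. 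Either way we obtain a finite set of perfect NGVAS whose runs are exactly the runs of $\otherctxNGVAS{\anngvas}{\amarking, \anonterm, \amarkingp}$.

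From a perfect NGVAS $\anngvasp$ the reachable markings are extracted as in the earlier simple cases (cf.\ \Cref{Lemma:TermsPostPreComputable}): $\runsof{\anngvasp}\neq\emptyset$ by \Cref{Proposition:Iteration}, every run ends $\sqsubseteq\anngvasp.\acontextout$, so on the concretely tracked coordinates the endpoint is fixed to $\anngvasp.\acontextout$, while on the $\omega$-coordinates of $\anngvasp.\acontextout$ arbitrarily large endpoints are realized by iterating the pumping derivation of \perfectnesspumpingnospace\ inside the iteration lemma \Cref{TheoremIterationLemmaNonLinearOverview}; hence the downward closure of the reachable markings of $\anngvasp$ is the single ideal $\anngvasp.\acontextout$. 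Collecting these ideals over all the perfect NGVAS produced above and applying $\iddecompof{\downclsof{-}}$ yields $\postfuncNof{\anngvas}{\amarking, \anonterm}$, and the computation for $\prefuncN{\anngvas}$ is symmetric. The step I expect to be the main obstacle is the completeness bookkeeping: one has to be certain that the finitely many output candidates coming from $\intpostapprox$, once filtered through the $\simplydecomps$/Rackoff dichotomy and turned into perfect NGVAS, capture \emph{every} maximal reachable ideal — in particular that passing from a $\Z$-solution to an actual run via the iteration lemma on a perfect deconstruction, which may shrink the output, never loses an ideal not contributed by another branch — and to make the ``read off the reachable ideals from a perfect deconstruction'' move fully precise, since perfectness only yields nonemptiness and one must still pin down exactly which sub-markings of $\acontextout$ arise as endpoints of runs.
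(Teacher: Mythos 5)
Your overall route is the paper's route: the same bound $\abigbound$ from \Cref{Lemma:FreePumpNoSDRackoff}, the breakdown of $\otherctxNGVAS{\anngvas}{\amarking, \anonterm, \outof{\anonterm}}$ along $\intpostapproxof{\amarking, \anonterm}$ via \Cref{Lemma:TheZDecomp}, the dichotomy between $\simplydecomps$ (handled by \Cref{Lemma:SimplePerfectDecomposition}) and the Rackoff case (handled by \Cref{Lemma:FreePumpNoSDRackoff}, with \perfectnesssolnospace\ restored by adjusting the restriction, which is exactly what $\solclean$ does in the paper), and reading off outputs from perfect NGVAS via \Cref{TheoremIterationLemmaNonLinearOverview}. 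However, there is a genuine gap in your completeness argument, and it is exactly the point you flag at the end without resolving: you restrict the computation to the \emph{maximal} elements of $\intpostapproxof{\amarking, \anonterm}$ and claim that "runs with a smaller output marking are absorbed." That absorption is not valid in general. All elements of $\intpostapproxof{\amarking, \anonterm}$ share the same $\omega$-set, so the runs of a non-maximal variant $\otherctxNGVAS{\anngvas}{\amarking, \anonterm, \amarkingp}$ end at concrete values that \emph{differ} from those of any maximal $\amarkingp'>\amarkingp$; they are not runs of the $\amarkingp'$-variant. Since $\intpostapprox$ is only a $\Z$-level overapproximation, a maximal $\amarkingp'$ may be \emph{simply decomposable} with a perfect deconstruction whose outputs do not dominate $\amarkingp$ (it can even be empty, when $\amarkingp'$ is reachable only by $\Z$-runs), while the skipped non-maximal, simply decomposable $\amarkingp$ does contribute actual reachable markings. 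In that situation your procedure drops a maximal ideal of $\postfuncNof{\anngvas}{\amarking, \anonterm}$, so the computed set is too small.

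The paper avoids this by iterating over \emph{all} $\amarkingp\in\intpostapproxof{\amarking, \anonterm}$ (maximal first): every simply decomposable triple, maximal or not, is decomposed and its outputs are collected; only the non-maximal, non-simply-decomposable triples are deferred to a maximal $\amarkingp'>\amarkingp$. That deferral is sound precisely because non-membership in $\simplydecomps$ is monotone under enlarging the markings (the forgetful coverability grammars still show unboundedness), so $\amarkingp'$ is also not simply decomposable, \Cref{Lemma:FreePumpNoSDRackoff} makes its variant perfect up to \perfectnesssolnospace, and after restoring \perfectnesssolnospace\ the iteration lemma yields actual runs witnessing all of $\downclsof{\amarkingp'}$ — only in this branch is the "absorption" of smaller outputs legitimate. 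So to repair your proof you must drop the restriction to maximal elements and replicate this case split; the rest of your argument (including the read-off of $\anngvasp.\acontextout$ as the single ideal of a perfect NGVAS) then goes through as in the paper.
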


\subsection{Hard Case 2: Small Input Counters, Witness Tree Search}\label{Section:HardCaseTwo}
We have dealt with the cases \extraa, \extrab, \extrac, \extrad, and the case in \Cref{Lemma:FreePumpRackoff}, where we assume a $\Z$-pump, and the case from \Cref{Section:HardCaseTwo}, where one $\abdinfomid\setminus\unconstrained$ counter has a larger value then our computed bound $\abigbound$.
It remains to deal with the remaining case, where all themselves non-$\omega$ input counters are below $\abigbound$.
Before moving further, we collect all our assumptions into one place. 
To this end, let $\fullinputdom=\omegamrkdomainof{\unconstrained}$ be the domain of input markings considered by \Cref{Lemma:PostPreComputable}.
Let $\smallinputdom_{\abigbound}=[0, \ldots, \abigbound]^{[1,d]\setminus\unconstrained}\times\omega^{\unconstrained}\subseteq\Nomega^{d}$ be the set of markings whose set of $\omega$-marked counters is $\unconstrained$, and whose concrete counters are all valued less than $\abigbound$. 
To avoid soundness problems, we can assume wlog. that $\abigbound$ is larger than any concrete counter in the boundedness information $\boundednessinformation$.
Conversely, let $\largeinputdom_{\abigbound}=\fullinputdom\setminus\smallinputdom_{\abigbound}\subseteq\Nomega^{d}$ be the set of remaining markings we consider.
Thanks to our assumptions so far, we can decide $\postfuncN{\anngvas}$ for $\largeinputdom_{\abigbound}\times\nonterms$.
For some $\amarking\in\largeinputdom_{\abigbound}$, there are two cases.
We might have $\amarking[i]\in\N$ and $\amarking[i]\geq\abigbound$, in this case \Cref{Lemma:FreePumpRackoff} applies to show computability.
If this is not the case, then we have $\amarking[i]=\omega$, and \Cref{Lemma:LowDimPostPreComputable} applies to show computability.
Combining this with \Cref{Lemma:TermsPostPreComputable} we obtain our starting assumption.
\begin{restatable}{corollary}{CorollaryWTSearchAssumptionMainPaper}\label{Corollary:WTSearchAssumption}
    Let $\anngvas$ be an NGVAS with a $\Z$-pump, and with all perfectness conditions excluding \perfectnesspumpingnospace, and let $\perfect$ be reliable up to $\rankof{\anngvas}$.
    Then, we can compute a bound $\abigbound\in\N$ such that we can compute $\postfuncN{\anngvas}$ restricted to the domain $\fullinputdom\times\trms\cup\largeinputdom_{\abigbound}\times\nonterms$.
\end{restatable}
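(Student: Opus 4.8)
The statement is an assembly corollary: it collects the computability results from Case \extraa\ (\Cref{Lemma:TermsPostPreComputable}), Case \extrab\ (\Cref{Lemma:LowDimPostPreComputable}), and the Rackoff-style argument of \Cref{Lemma:FreePumpRackoff}. First I would fix the bound once and for all. Since $\anngvas$ has a $(\unconstrained,\omegaoutcount)$-$\Z$-pump, satisfies every perfectness condition except \perfectnesspumpingnospace, and $\perffun$ is reliable up to $\rankof{\anngvas}$, \Cref{Lemma:FreePumpRackoff} yields a bound; take $\abigbound$ to be that bound, enlarged if necessary so that it strictly exceeds every concrete entry occurring in the boundedness information $\boundednessinformation$ of $\anngvas$ and of its subNGVAS. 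The point of fixing a single $\abigbound$ is that the same constant must serve both as the cut-off defining $\smallinputdom$ and $\largeinputdom$ and as the Rackoff threshold in the non-terminal case; committing to it up front is what makes the domains line up.

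With $\abigbound$ fixed, I would establish computability of $\postfuncN{\anngvas}$ on $\fullinputdom\times\terms\cup\largeinputdom\times\nonterms$ by a case split on the second argument. If it is a childNGVAS, the input marking lies in $\fullinputdom$, and $\fullinputdom\times\terms$ is exactly the domain $\afuncdomain_{\terms}$ of \Cref{Lemma:TermsPostPreComputable} — which uses reliability of $\perffun$ up to $\rankof{\anngvas}$ together with the strict rank drop of \Cref{Lemma:ParentChildRank} to re-root the child at the queried input, clean it, call $\perffun$, and read off the output values — so this part is immediate. If the second argument is a non-terminal $\anonterm$ and $\amarking\in\largeinputdom=\fullinputdom\setminus\smallinputdom$: if $\amarking$ disagrees with $\inof{\anonterm}$ on a concretely-tracked counter the value is empty and we are done, so assume compatibility, and split further. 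If $\omegaof{\amarking}\neq\unconstrained$, then $\amarking\in\fullinputdom$ forces $\unconstrained\subsetneq\omegaof{\amarking}\subseteq\abdinfomid$, i.e.\ $(\amarking,\anonterm)\in\easydomain$, and \Cref{Lemma:LowDimPostPreComputable} applies. Otherwise $\omegaof{\amarking}=\unconstrained$, and $\amarking\notin\smallinputdom$ forces some entry $\amarking[i]$ to be concrete and above $\abigbound$; since $\abigbound$ dominates the constants in $\boundednessinformation$ and $\amarking$ is compatible with $\inof{\anonterm}$, that counter cannot be one the boundedness information tracks concretely, so $i\in\abdinfomid$, and being concrete $i\in\abdinfomid\setminus\unconstrained$. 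Hence $(\amarking,\anonterm)$ lies in the domain of \Cref{Lemma:FreePumpRackoff}, and computability follows with exactly the $\abigbound$ we fixed.

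These cases are exhaustive, which gives the claim (and the mirror argument, using the pre-approximation versions of the same lemmas, gives the symmetric statement for $\prefuncN{\anngvas}$ that Hard Case 2 also uses, cf.\ \Cref{Lemma:PostPreComputable}). I do not expect a genuine obstacle: all the real work lives inside the cited lemmas, and this corollary is only the gluing step that supplies the starting assumption for the witness-tree search. The one place demanding care is the domain alignment in the last subcase — checking that a concrete entry above $\abigbound$ is necessarily located in $\abdinfomid\setminus\unconstrained$ rather than at a bounded, concretely-tracked counter — which is precisely why the convention that $\abigbound$ dominates the constants of $\boundednessinformation$ was imposed just before the statement, and why a single $\abigbound$ has to play the two roles described above.
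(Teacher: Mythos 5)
Your proposal is correct and follows essentially the same route as the paper: fix $\abigbound$ from \Cref{Lemma:FreePumpRackoff} (enlarged past the concrete entries of $\boundednessinformation$), handle terminals via \Cref{Lemma:TermsPostPreComputable}, and split $\largeinputdom\times\nonterms$ into the extra-$\omega$ case (\Cref{Lemma:LowDimPostPreComputable}) and the large-concrete-counter case (\Cref{Lemma:FreePumpRackoff}). Your added care that the oversized concrete counter must lie in $\abdinfomid\setminus\unconstrained$ is exactly the purpose of the paper's wlog assumption on $\abigbound$, so nothing is missing.
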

We drop the subscripts, and fix $\abigbound\in\N$ to be this bound.
Now, we show the remaining hard case as stated below, and complete the proof of \Cref{Lemma:PostPreComputable}.
\begin{restatable}{lemma}{LemmaFreePumpPostLowMainPaper}\label{Lemma:FreePumpPostLow}
    Let $\anngvas$ be an NGVAS with a $\Z$-pump, and with all perfectness conditions excluding \perfectnesspumpingnospace, let $\perfect$ be reliable up to $\rankof{\anngvas}$.
    Then, we can compute $\postfuncN{\anngvas}$ restricted to the domain $\smallinputdom\times\nonterms$.
\end{restatable}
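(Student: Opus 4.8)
The plan is to carry out the exhaustive search over marked parse trees that was sketched in \Cref{Section:PostHardCase2}. I would start by fixing the input $(\amarkingin, \anonterminalin) \in \smallinputdom \times \nonterms$ and build a search tree whose nodes are labeled by \emph{marked parse trees}: parse trees of the underlying grammar of $\anngvas$ whose nodes carry input/output marking annotations $(\amarking, \asymbol, \amarkingp)$ consistent with firing, and where whenever a node's input marking leaves $\smallinputdom$ (i.e.\ exceeds $\abigbound$ in some concrete counter, or is a childNGVAS terminal) we do not expand it but instead invoke the already-available computability of $\postfuncNof{\anngvas}{\amarking, \asymbol}$ from \Cref{Corollary:WTSearchAssumption} (via \Cref{Lemma:FreePumpRackoff}, \Cref{Lemma:LowDimPostPreComputable}, \Cref{Lemma:TermsPostPreComputable}) to read off the reachable output markings. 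The parent--child relation in the search tree is ``is a subtree of'', so a child expands its parent's parse tree by one more production application somewhere; since the set of input markings appearing is drawn from the finite set $\smallinputdom$ (plus the finitely many boundary cases), for each fixed height there are only finitely many marked parse trees, hence the search tree has finite outdegree.

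The second ingredient is the acceleration that guarantees termination. I would forbid a marked parse tree from repeating a node label $(\amarking, \asymbol, \amarkingp)$ along a branch, and add $\omega$-acceleration on \emph{output} markings: along an infinite path in the search tree, consider the root labels $(\amarking_i, \anonterm_i, \amarkingp_i)_{i \in \N}$ of the successive marked parse trees; by finiteness of input markings and non-terminals there is a subsequence with a common $(\amarking, \anonterm)$, and by the well-quasi-order on $\Nomega^{\adim}$ together with the assumption that the output markings grow unboundedly we extract a strictly increasing chain $\amarkingp_{\varphi(0)} < \amarkingp_{\varphi(1)} < \cdots$; we then introduce $\omega$ in precisely the output coordinates where the inequality is strict. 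Soundness of this acceleration rests on the pumping lemma for context-free grammars: the derivation between two such nodes can be iterated, and since their input markings agree and $\amarkingp_{\varphi(0)} < \amarkingp_{\varphi(1)}$, monotonicity of firing in a VAS ensures every iteration stays enabled, so arbitrarily large output values (hence $\omega$) are indeed realizable — and conversely $\downclsof{-}$ absorbs any over-approximation. Since acceleration can fire at most $\adim$ times on any branch of a marked parse tree and node labels may not repeat, every branch is finite; combined with finite outdegree and König's lemma, the search tree is finite, so the search terminates.

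It remains to assemble $\postfuncNof{\anngvas}{\amarkingin, \anonterminalin}$ from the terminated search. For every leaf of the search tree whose marked parse tree is fully expanded (all leaves are updates or boundary/childNGVAS nodes resolved via the assumed computability), the root output marking (after the $\omega$ accelerations) is a candidate reachable marking; the union of all these, closed downward and decomposed into maximal ideals via $\iddecomp$, gives the answer. Correctness in the $\Rightarrow$ direction: any run $(\amarking', \arun, \amarkingp') \in \runsof{\anonterminalin}$ with $\amarking' \sqsubseteq \amarkingin$ has a parse tree; its prefix of nodes living inside $\smallinputdom$ is finite up to the acceleration cut-offs, so the search explores a marked parse tree whose root output dominates $\amarkingp'$ modulo the downward closure. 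The $\Leftarrow$ direction is the soundness argument above. The whole computation makes only subcalls to $\postfuncN{\anngvas}$ on the boundary domains, which are available by \Cref{Corollary:WTSearchAssumption}, and to $\perffun$ on NGVAS of strictly smaller rank, which is legitimate by the reliability hypothesis; so the procedure is effective. The main obstacle I expect is proving that the output markings along any non-terminating branch genuinely grow unboundedly in the coordinates we want to accelerate — i.e.\ that the acceleration is \emph{applicable} often enough to cut off every branch within $\adim$ steps — which requires a careful argument that a non-repeating, non-accelerating branch through a finite label set cannot be infinite, together with the bookkeeping needed to show the accelerated trees still faithfully represent (downward-closed sets of) genuine runs.
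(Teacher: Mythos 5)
Your proposal is correct and is essentially the paper's own proof: the paper formalizes exactly this search as \emph{witness trees} — marked parse trees whose boundary leaves (childNGVAS or inputs outside $\smallinputdom$) are closed off by the already-computable $\postfuncN{\anngvas}$ from \Cref{Corollary:WTSearchAssumption}, and whose subtree outputs are accelerated by the side-pump operator $\pumpingof{-}$, with soundness and completeness (\Cref{Lemma:WTSound}, \Cref{Lemma:WTComplete}) proved by the repeat-the-context-plus-monotonicity argument you sketch. The "obstacle" you flag is handled in \Cref{Lemma:WTConvergence}, where termination is phrased as saturation ($\witnesssetof{h}=\witnesssetof{h+1}$ for some $h$) and proved by the same WQO/K\"onig extraction you describe, using non-repetition of labels along branches to force strictly increasing outputs and hence at most $d$ accelerations.
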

To show this lemma, we conduct a search for trees that witness the output values, which we aptly call witness trees.
In the following, we define witness trees and show that we can principally search them.
A witness tree is a tree that extends a parse tree in the grammar of $\anngvas$, by soundly tracked input and output markings.
The witness trees also enable a certain type of pumping we will discuss shortly.
Formally, a \emph{witness tree} $\atree$ is a $\Nomega^{d}\times(\nonterms\cup\trms)\times\Nomega^{d}$-labeled tree that yields a (possibly incomplete) parse tree when projected to its $\nonterms\cup\trms$ component, and satisfies the following for all its subtrees $\atreep$.
To ease the notation, we write $\anode.\inlabel$, $\anode.\symlabel$ and $\anode.\outlabel$ to denote the components in $\nodemarkingof{\anode}=(\anode.\inlabel, \anode.\symlabel, \anode.\outlabel)$.
This notation refers to the label of the root node, when used with a tree, i.e. $\atree.\inlabel=\atree.\arootnode.\inlabel$.
%
\begin{itemize}
    \item We have $\atreep.\inlabel\in\fullinputdom$.
    \item For any node $\anode\in\atreep$ where $\childnodesof{\anode}=\anodep.\anodepp$, $\anode.\inlabel=\anodep.\inlabel$, and $\anodep.\outlabel=\anodepp.\inlabel$.
    \item No non-root node in $\atreep$ has the label $\nodemarkingof{\atreep.\arootnode}$.
    \item We have $(\atreep.\inlabel, \atreep.\symlabel)\in\fullinputdom\times\trms\cup\largeinputdom\times\nonterms$ iff $\atreep$ is a leaf. 
    \item If $\atreep$ is a leaf, then we have $\atreep.\outlabel\in\postfuncNof{\anngvas}{\atreep.\inlabel, \atreep.\symlabel}$.
    \item For every subtree $\atreep$ of $\atree$, we have $\atreep.\outlabel=\pumpingof{\atreep}$.
\end{itemize}
The first condition says that the input of $\atree$ comes from the correct domain.
By the second condition, the inputs must be propagated soundly.
The third condition disallows redundancies in the tree.
By the next condition, a leaf are exactly those nodes who carry a childNGVAS, or a non-terminal and an input marking has one large valued counter as label.
In either case, the output marking must be consistent with respect to $\postfuncN{\anngvas}$.
Note that so far, we did not specify how the output values should be propagated from the right child to the parent. 
The final condition says that these $\omega$ entries may not be chosen arbitrarily, but have to be determined by what we call side-pumps.
The intuition for side-pumps lies in the following one dimensional example.
Consider a derivation rule $\anonterm\to \anngvasp.\anonterm.+1$, where $\arun$ has $0$ effect for $\arun\in\runsof{\anngvasp}$, and $\arunp\in\runsof{\anonterm}$ exists.
Then, if a counter value of $a\in\N$ enables $\arun$ and $\arunp$, then it also enables $\arun.\arunp, \arun^{2}.\arunp, \arun^{3}.\arunp, \ldots$.
This means that starting from $a\in\N$, we can repeat $\anonterm\to \anngvasp.\anonterm+1$, to get $\anonterm\to \arun.\arunp.+1\to (\arun)^{2}.\arunp.(+1)^{2}\to\ldots$.
The prefix remains stable, while the suffix pumps higher and higher, we conclude $\postfuncNof{\anngvas}{a, \anonterm}=\set{\omega}$.
We search for such situations in higher dimensions.
Formally, $\pumpingof{\atreep}\in\Nomega^{d}$ is a marking that is obtained via side-pumps as follows.
Let $\anode$ with $\nodemarkingof{\anode}=(\amarking, \asymbol, \amarkingp)$ be the rightmost child of the root in $\atreep$. 
We define $\at{\pumpingof{\atreep}}{\acounter}=\omega$, if there is a subtree $m$ of $\atreep$ with $m.\inlabel=\atreep.\inlabel$, $m.\outlabel \leq\amarkingp$, and $\at{m}{\acounter}<\at{\amarkingp}{\acounter}$.
We let $\at{\pumpingof{\atreep}}{\acounter}=\at{\amarkingp}{\acounter}$ otherwise. 
The following figure illustrates this situation.
\begin{center}
\includegraphics[scale=0.8,page=1]{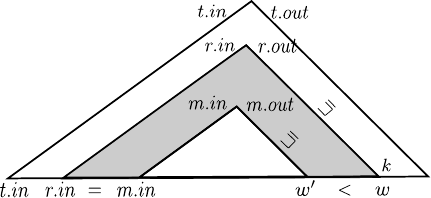}
\end{center}

We denote the set of witness trees by $\witnessset$, and the set of witness trees with maximal height $h\in\N$ by $\witnesssetof{h}$.
We also write $\witnessset(\amarking, \asymbol)=\setcond{\atree\in\witnessset}{\atree.\inlabel=\amarking,\;\atree.\symlabel=\asymbol}$, and
$\witnesssetof{h}(\amarking, \asymbol)=\witnesssetof{h}\cap\witnessset(\amarking, \asymbol)$ for $h\in\N$ to restrict the input label and root symbol by $(\amarking, \asymbol)\in\fullinputdom\times(\nonterms\cup\trms)$.

Witness trees are sound with respect to coverability.
The formal proof has ben moved to \Cref{Section:WitnessTreesL3}, but the only interesting case is the soundness of output markings introduced by $\pumpingof{\atree}$.
The argument here is the same as the one we employed in the one dimensional example.
The function $\pumpingof{\atree}$ discovers a context, such that under pumping, the input side remains stable but the output side grows unboundedly in $\omegaof{\pumpingof{\atree}}\setminus\omegaof{\atree_{\righttag}.\outlabel}$, where $\atree_{\righttag}$ is the right-subtree of $\atree$.

\begin{restatable}{lemma}{LemmaWTSoundMainPaper}\label{Lemma:WTSound}
    For each $\atree\in\witnessset$, we have $\atree.\inlabel\sqsubseteq\inof{\atree.\symlabel}$, $\atree.\outlabel\sqsubseteq\outof{\atree.\symlabel}$, $\omegaof{\atree.\inlabel}\subseteq\omegaof{\atree.\outlabel}$, and $\atree.\outlabel\in\downclsof{\postfuncNof{\anngvas}{\atree.\inlabel, \atree.\symlabel}}$.
\end{restatable}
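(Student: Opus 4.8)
The plan is to prove \Cref{Lemma:WTSound} by structural induction on the witness tree $\atree$, matching the inductive structure of the witness-tree definition (a witness tree is built from leaves by applying grammar productions $\anonterm\to\asymbol.\asymbolp$ together with the side-pumping decoration at each inner node). I would first dispose of the routine conjuncts and then focus on the one genuinely new claim, namely that the $\omega$ entries introduced by $\pumpingof{\atree}$ are sound with respect to $\postfuncN{\anngvas}$.

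\textbf{Base case (leaves).} If $\atree$ is a leaf, then by definition $(\atree.\inlabel,\atree.\symlabel)\in\fullinputdom\times\terms\cup\largeinputdom\times\nonterms$ and $\atree.\outlabel\in\postfuncNof{\anngvas}{\atree.\inlabel,\atree.\symlabel}$. The claim $\atree.\inlabel\sqsubseteq\inof{\atree.\symlabel}$ holds because $\fullinputdom$ and $\largeinputdom$ only contain markings with $\unconstrained\subseteq\omegaof{\amarking}\subseteq\abdinfomid$, and $\postfuncN{\anngvas}$ returns $\emptyset$ unless the input specializes $\inof{\asymbol}$; the properties $\atree.\outlabel\sqsubseteq\outof{\atree.\symlabel}$, $\omegaof{\atree.\inlabel}\subseteq\omegaof{\atree.\outlabel}$, and membership in $\downclsof{\postfuncNof{\anngvas}{-,-}}$ follow directly from the definition of $\postfuncN{\anngvas}$ via $\iddecompof{\downclsof{-}}$ and the fact that the reachable-marking sets it decomposes satisfy these specialization and downward-closure properties.

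\textbf{Inductive step (inner nodes).} Let the root of $\atree$ have children $\anodep,\anodepp$ coming from a production $\anonterm=\atree.\symlabel\to\asymbol.\asymbolp$, with left-subtree $\atree_{\lefttag}$ and right-subtree $\atree_{\righttag}$. By the induction hypothesis applied to $\atree_{\lefttag}$ and $\atree_{\righttag}$, the in/out labels of both subtrees are sound. Consistency of the tree ($\atree.\inlabel=\anodep.\inlabel$, $\anodep.\outlabel=\anodepp.\inlabel$) plus the precision property of $\postfuncN{\anngvas}$ (the fourth approximator condition, stated for $\postfunc$ as $\postfuncNof{\anngvas}{\postfuncNof{\anngvas}{\amarking,\asymbol},\asymbolp}$ refining $\postfuncNof{\anngvas}{\amarking,\anonterm}$) gives that $\atree_{\righttag}.\outlabel\in\downclsof{\postfuncNof{\anngvas}{\atree.\inlabel,\atree.\symlabel}}$. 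The properties $\atree.\inlabel\sqsubseteq\inof{\atree.\symlabel}$ and $\atree_{\righttag}.\outlabel\sqsubseteq\outof{\atree.\symlabel}$ follow from IH together with consistency of $\infun,\outfun$ across the production. It remains to pass from $\atree_{\righttag}.\outlabel$ to $\atree.\outlabel=\pumpingof{\atree}$; since $\pumpingof{\atree}$ only replaces some finite entries of $\atree_{\righttag}.\outlabel$ by $\omega$, the two specialization conjuncts and $\omegaof{\atree.\inlabel}\subseteq\omegaof{\atree.\outlabel}$ are immediate from IH, so the real work is to show $\pumpingof{\atree}\in\downclsof{\postfuncNof{\anngvas}{\atree.\inlabel,\atree.\symlabel}}$.

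\textbf{The main obstacle: soundness of side-pumps.} This is the heart of the proof and mirrors the one-dimensional example in the text. Suppose $\at{\pumpingof{\atree}}{\acounter}=\omega$; then by definition there is a subtree $m$ of $\atree$ with $m.\inlabel=\atree.\inlabel$, $m.\outlabel\leq\anodepp.\outlabel$, and $\at{m.\outlabel}{\acounter}<\at{\anodepp.\outlabel}{\acounter}$. Here $m$ and the root $\atree$ carry the same input label and (since $m$ is a subtree inside the left part of $\atree$, whose yield ends with the occurrence of $\atree.\symlabel$-descendants producing $\anodepp$) the same symbol, so the context-free pumping lemma lets us iterate the derivation chunk between $\atree$'s root and $m$'s root arbitrarily often. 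Because $m.\inlabel=\atree.\inlabel$, the input side of the pumped context is unchanged, and because $m.\outlabel\leq\anodepp.\outlabel$ with a strict increase in the $\acounter$-coordinate, the monotonicity of firing in VAS (the subtree below $m$ being replaceable by a subtree whose output dominates $m.\outlabel$) guarantees that repeating the pump $t$ times yields genuine runs from $\atree.\inlabel$ whose output in coordinate $\acounter$ grows without bound while staying $\geq$ the non-$\acounter$ entries of $\anodepp.\outlabel$. Doing this simultaneously for all $\acounter\in\omegaof{\pumpingof{\atree}}\setminus\omegaof{\anodepp.\outlabel}$ (the set of side-pumped coordinates) — which is possible because the corresponding subtrees $m$ can be composed, or by iterating one coordinate at a time and using downward-closure/monotonicity to keep the earlier gains — shows that for every $a\in\N$ there is a run from $\atree.\inlabel$ derivable from $\atree.\symlabel$ whose output is $\geq$ the vector agreeing with $\anodepp.\outlabel$ outside these coordinates and having value $\geq a$ on them. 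Hence every marking specializing $\pumpingof{\atree}$ lies in $\downclsof{\postfuncNof{\anngvas}{\atree.\inlabel,\atree.\symlabel}}$, and since $\postfuncN{\anngvas}$ is the ideal decomposition of that downward-closed set, $\pumpingof{\atree}$ itself is dominated by one of its generalized markings, i.e. $\pumpingof{\atree}\in\downclsof{\postfuncNof{\anngvas}{\atree.\inlabel,\atree.\symlabel}}$. The delicate bookkeeping — verifying that the pumped context is actually a valid derivation fragment in the branching grammar (so that $m$ really sits on the correct branch relative to $\anodepp$), and that the simultaneous pumping of several coordinates does not interfere — is where I expect the proof to require care; everything else reduces to the approximator axioms for $\postfuncN{\anngvas}$ and standard VAS monotonicity.
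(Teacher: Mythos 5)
Your high-level plan is the same as the paper's: the first three conjuncts are routine, and soundness of the side-pumps is argued by repeating the derivation context between the root and a dominated subtree, keeping the input side stable and exploiting VAS monotonicity on the strictly increased output coordinates. The genuine gap is that you run a structural induction on the lemma's statement itself, and that statement is too weak to carry the pumping step. Since labels are generalized markings, the conjunct $\atreep.\outlabel\in\downclsof{\postfuncNof{\anngvas}{\atreep.\inlabel,\atreep.\symlabel}}$ for subtrees only yields families of concrete runs whose outputs approximate the root labels of those subtrees; it says nothing about the \emph{context} pieces of the yield (between the root and the justifying subtree $m$, and between $m$ and the rightmost child), which is precisely what you must realize as enabled update sequences in order to iterate the pump. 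Your key sentence, that ``the subtree below $m$ [is] replaceable by a subtree whose output dominates $m.\outlabel$'', is not provided by the induction hypothesis: domination holds only in the limit on the $\omega$-coordinates, and the enabledness of the iterated suffix runs from the markings actually reached, together with the choice of a middle run long enough to feed them, requires quantitative bookkeeping that you defer. The paper closes exactly this hole by strengthening the induction to two simultaneous claims: (i) for \emph{every} interior node $\anode$ whose input has the same $\omega$-set as the root, there is a derivation $\atree.\symlabel\to^{*}\aword_{\anode}.(\anode.\symlabel).\awordp_{\anode}$ with fixed sequences $\arun_{\anode}\in\updateseqof{\aword_{\anode}}$, $\arunp_{\anode}\in\updateseqof{\awordp_{\anode}}$ satisfying $\atree.\inlabel\fires{\arun_{\anode}}\anode.\inlabel$ and $\settoomega{\omegaof{\atree_{\righttag}.\outlabel}}{\anode.\outlabel}\fires{\arunp_{\anode}}\atree_{\righttag}.\outlabel$, and (ii) a sequence of runs in $\runsof{\atree.\symlabel}$ converging to the output label; the witnessing runs are then assembled as $\arun_{\anode_{0}}^{i}\ldots\arun_{\anode_{\ell}}^{i}.\arunpp_{j_{i}}.\arunp_{\anode_{\ell}}^{i}\ldots\arunp_{\anode_{0}}^{i}$ with $j_{i}$ chosen large enough that the iterated suffixes stay enabled. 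Without such a strengthened hypothesis (a single iterable suffix word, not a drifting family of approximating runs), your inductive step does not close.

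Two smaller points. You assert that $m$ and the root ``carry the same symbol'' and justify this by $m$ sitting in the left part of the tree; that justification is wrong (being a subtree of the left part says nothing about its symbol). The iteration does need the middle non-terminal to be the root symbol --- the paper invokes its claim (i) with $\anontermp=\atree.\symlabel$ --- so this must be read into the definition of $\pumpingof{-}$ rather than derived. Also, the several newly pumped coordinates are not handled ``one at a time'': the paper nests all contexts $\anode_{0},\ldots,\anode_{\ell}$ around the root symbol and uses that every suffix $\arunp_{\anode_{j}}$ has non-negative effect on all concrete output coordinates (because $\anode_{j}.\outlabel\leq\atree_{\righttag}.\outlabel$), so the pumps cannot interfere; this is easy, but it is an argument, not a remark.
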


The witness trees are also complete with respect to coverability.
The formal proof is a straightforward induction over the parse tree depth, and can be found in \Cref{Section:WitnessTreesL3}.
We use the fact that the image of $\postfunc$ overapproximate the reachable markings, and $\pumpingof{\atree}$ only introduces more $\omega$ counters.
\begin{restatable}{lemma}{LemmaWTCompleteMainPaper}\label{Lemma:WTComplete}
    Let $\asymbol\in\nonterms\cup\trms$, and $(\amarking, \arun, \amarkingp)\in\runsof{\asymbol}$.
    Then, for all $\amarking_{\omega}\in\fullinputdom$, with $\settoomega{\unconstrained}{\amarking}\sqsubseteq\amarking_{\omega}$, there is a tree $\atree\in\witnessset$ with $\atree.\inlabel=\amarking_{\omega}$, $\atree.\symlabel=\asymbol$, and $\amarkingp\leq\atree.\outlabel$. 
\end{restatable}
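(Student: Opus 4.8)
The plan is to prove \Cref{Lemma:WTComplete} by induction on a derivation witnessing $(\amarking, \arun, \amarkingp)\in\runsof{\asymbol}$, i.e.\ on the size of the underlying (partial) parse tree, with the statement kept exactly as stated; the freedom to choose an \emph{arbitrary} $\amarking_{\omega}\sqsupseteq\settoomega{\unconstrained}{\amarking}$ in $\fullinputdom$ is what makes the induction self-supporting. Two preliminary observations will be used throughout. First, since $\unconstrained\subseteq\omegaof{\amarking_{\omega}}$ and $\amarking\in\N^{d}$, the hypothesis $\settoomega{\unconstrained}{\amarking}\sqsubseteq\amarking_{\omega}$ forces $\amarking[i]=\amarking_{\omega}[i]$ for every $i\notin\omegaof{\amarking_{\omega}}$, hence $\amarking\sqsubseteq\amarking_{\omega}$. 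Second, as $\anngvas$ is clean its boundedness information is sound, so every run through a symbol keeps the concretely tracked counters $[1,d]\setminus\abdinfomid$ pinned to the values prescribed by $\infun,\outfun$, and $\abdinfomid$ equals $\anngvasp.\unconstrained$ for every childNGVAS $\anngvasp\in\terms$.

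\textbf{Base case (leaves).}
This is the case where the root is a witness-tree leaf, i.e.\ $\asymbol\in\terms$ is a childNGVAS, or $\asymbol\in\nonterms$ and $\amarking_{\omega}\in\largeinputdom$. I would take the single-node tree labelled $(\amarking_{\omega}, \asymbol, \amarking_{\omega}')$ with $\amarking_{\omega}'$ chosen in $\postfuncNof{\anngvas}{\amarking_{\omega}, \asymbol}$ so that $\amarkingp\leq\amarking_{\omega}'$. Such a generator exists: from $\amarking\sqsubseteq\amarking_{\omega}$ and $(\amarking, \arun, \amarkingp)\in\runsof{\asymbol}$, the marking $\amarkingp$ lies in the downward-closed set that $\postfuncN{\anngvas}$ closes and decomposes into maximal ideals, hence below one of the resulting generalized markings. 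All witness-tree conditions for a single node then hold immediately: the input is in $\fullinputdom$ by assumption, the child and non-repetition conditions are vacuous, the leaf/non-leaf dichotomy matches by the case assumption, and the side-pump condition on a childless root is trivial.

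\textbf{Inductive step.}
Here $\asymbol=\anonterm\in\nonterms$ and $\amarking_{\omega}\in\smallinputdom$, so $\anonterm$ is expanded. Peeling off the first production $\anonterm\to\asymbol_{l}.\asymbol_{r}$ of the wCNF grammar splits the run into $(\amarking, \arun_{0}, \amarkingpp)\in\runsof{\asymbol_{l}}$ followed by $(\amarkingpp, \arun_{1}, \amarkingp)\in\runsof{\asymbol_{r}}$ for an intermediate $\amarkingpp\in\N^{d}$ (if the rule carries a single symbol or is an exit production, one or both sub-runs is a base case). The induction hypothesis applied to the left sub-run with input $\amarking_{\omega}$ yields a witness tree $\atree_{0}$ with $\atree_{0}.\inlabel=\amarking_{\omega}$, $\atree_{0}.\symlabel=\asymbol_{l}$ and $\amarkingpp\leq\atree_{0}.\outlabel$. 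By \Cref{Lemma:WTSound}, $\omegaof{\amarking_{\omega}}\subseteq\omegaof{\atree_{0}.\outlabel}\subseteq\abdinfomid$, so $\atree_{0}.\outlabel\in\fullinputdom$, and $\atree_{0}.\outlabel\sqsubseteq\outof{\asymbol_{l}}$ together with soundness of the boundedness information forces $\atree_{0}.\outlabel$ to agree with $\amarkingpp$ on $[1,d]\setminus\abdinfomid$; it can exceed $\amarkingpp$ only on $\abdinfomid$-counters, which are exactly the counters every childNGVAS treats as unconstrained. Hence, picking a sufficiently large concretization of $\atree_{0}.\outlabel$ and refiring $\arun_{1}$ there via monotonicity of VAS firing (strong monotonicity on the $\unconstrained$-part, which holds also inside childNGVAS), we obtain a run ending at some $\amarkingp'\geq\amarkingp$, and applying the induction hypothesis to it with input $\atree_{0}.\outlabel$ (which trivially satisfies $\settoomega{\unconstrained}{\text{-}}\sqsubseteq\atree_{0}.\outlabel$) gives a witness tree $\atree_{1}$ with $\atree_{1}.\inlabel=\atree_{0}.\outlabel$, $\atree_{1}.\symlabel=\asymbol_{r}$ and $\amarkingp\leq\amarkingp'\leq\atree_{1}.\outlabel$. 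The sought tree $\atree$ has root symbol $\anonterm$, input $\amarking_{\omega}$, left subtree $\atree_{0}$, right subtree $\atree_{1}$, and output $\pumpingof{\atree}$; since $\pumpingof{}$ only replaces some coordinates of $\atree_{1}.\outlabel$ by $\omega$, we retain $\amarkingp\leq\atree.\outlabel$. It remains to restore the ``no repeated label along a branch'' requirement, which I would do by contraction: whenever a node carries the same label as a strict ancestor, splice the lower subtree into the higher node; same-label splicing preserves the projection to a partial parse tree and the $\postfuncN{\anngvas}$-consistency of leaves, preserves the output lower bound by monotonicity, and strictly decreases tree size, so it terminates.

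\textbf{Main obstacle.}
The delicate point is precisely the threading of $\sqsubseteq$-soundness through the left-to-right composition: $\postfuncN{\anngvas}$ over-approximates reachability only with respect to $\leq$, whereas a witness tree demands that the right child's input \emph{specialize} ($\sqsubseteq$) whatever the left child outputs. The fix sketched above — on $[1,d]\setminus\abdinfomid$ the left output is pinned by soundness of the boundedness annotation to the run, while on $\abdinfomid$-counters monotonicity lets us lift the right-hand run to the larger marking — is the crux, and is where the cleanness hypotheses on $\anngvas$ are genuinely used. The remaining ingredients (the leaf argument via ideal decompositions and the over-approximation property of $\postfuncN{\anngvas}$, the observation that $\pumpingof{}$ only introduces $\omega$-entries, the contraction argument for non-repetition, and termination of the recursion since parse trees are finite and the recursion stops at childNGVAS or large-input nodes) are routine.
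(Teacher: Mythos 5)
Your proposal is correct and follows essentially the same route as the paper's proof: induction over the structure of the run's derivation, leaves handled directly via the definition of $\postfuncN{\anngvas}$, and in the step the induction hypothesis applied to the left child, followed by shifting the right sub-run along a vector supported on $\abdinfomid$ (the children's unconstrained counters, using strong monotonicity) so that its start specializes the left witness tree's output, then the induction hypothesis on the right and the $\pumpingof{-}$ relabelling of the root. Your contraction step for the no-repetition condition coincides in effect with the paper's fix (the only possible clash is the fresh root label, and the repeated-label subtree inside $\atree_{\lefttag}$ or $\atree_{\righttag}$ is itself the desired witness tree), so there is no gap.
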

Witness trees under a depth bound $h$ are also effectively constructable if given an input marking.
The formal proof can be found in \Cref{Section:WitnessTreesL3}, but the broad argument is as follows.
We have only finitely many parse trees of a given depth.
We go from left-to-right, and call $\postfunc$ using our assumption \Cref{Corollary:WTSearchAssumption}.
If an input counter is outside of $\smallinputdom$, we need to close the branch.
Thanks to \Cref{Corollary:WTSearchAssumption}, we can compute $\postfunc$ for this input and achieve our goal.
A final consideration of $\pumpingof{-}$ yields the result. 
\begin{restatable}{lemma}{LemmaWTEffectivenessMainPaper}\label{Lemma:WTEffectiveness}
    Let $(\amarking, \asymbol)\in\fullinputdom\times(\nonterms\cup\trms)$ and $h\in\N$.
    Then, we can effectively construct $\witnesssetof{h}(\amarking, \asymbol)$.
\end{restatable}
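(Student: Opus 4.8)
The plan is to prove \Cref{Lemma:WTEffectiveness} by an induction on the height bound $h$, building the finite set $\witnesssetof{h}(\amarking, \asymbol)$ from the finite sets of smaller witness trees. For $h=0$ the tree must be a single leaf, so I would check whether $(\amarking, \asymbol)\in\fullinputdom\times\terms\cup\largeinputdom\times\nonterms$; if so, compute $\postfuncNof{\anngvas}{\amarking, \asymbol}$ using \Cref{Corollary:WTSearchAssumption} (this is where the computability assumption enters), and for each $\amarkingp$ in that finite set emit the leaf labelled $(\amarking, \asymbol, \amarkingp)$. Here I also have to account for the final witness-tree condition $\atreep.\outlabel=\pumpingof{\atreep}$, but for a leaf there is no rightmost child of a root so $\pumpingof{-}$ degenerates and the condition is vacuous (or I set it up so that leaves are excluded from the side-pump condition). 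If $(\amarking, \asymbol)\notin\fullinputdom\times\terms\cup\largeinputdom\times\nonterms$, then by the witness-tree definition no leaf is allowed with this label, so $\witnesssetof{0}(\amarking, \asymbol)=\emptyset$.

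For the induction step, suppose $\witnesssetof{h'}(\amarkingpp, \asymbolp)$ has been constructed for all $h'<h$ and all $(\amarkingpp,\asymbolp)\in\fullinputdom\times(\nonterms\cup\terms)$. To build $\witnesssetof{h}(\amarking, \asymbol)$ I would first include all trees of $\witnesssetof{h-1}(\amarking, \asymbol)$ of height $<h$, and then add the genuinely new trees of height exactly $h$. Such a tree has a root labelled $(\amarking, \asymbol, \cdot)$ which is not a leaf, so $\asymbol=\anonterm\in\nonterms$ and the projected parse tree uses a production $\anonterm\to\asymbol_1.\asymbol_2$ (or $\anonterm\to\asymbol_1$, the unary case, handled analogously). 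By the propagation condition the left subtree has input label $\amarking$, and by the induction hypothesis there are only finitely many choices for it in $\bigcup_{h'<h}\witnesssetof{h'}(\amarking, \asymbol_1)$; each such choice fixes an output marking $\amarkingp_1$, which must equal the input label of the right subtree, so again by the induction hypothesis there are finitely many choices of right subtree in $\bigcup_{h'<h}\witnesssetof{h'}(\amarkingp_1, \asymbol_2)$. This bounds the combinations to a finite, enumerable set. For each candidate pair of subtrees I then compute $\pumpingof{\atree}$ — this requires searching the two subtrees (both finite objects) for a subtree $m$ with $m.\inlabel=\amarking$, $m.\outlabel\le\amarkingp_1$ (where here $\amarkingp_1$ is the output of the rightmost child) and a strictly smaller entry — which is a decidable finite computation, and I set the root's output label to $\pumpingof{\atree}$. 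Finally I discard any candidate that violates the no-repeated-label condition (again a finite check over finitely many nodes). What survives is exactly $\witnesssetof{h}(\amarking, \asymbol)$ restricted to the new height, and together with the inherited part this gives the full set.

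The main obstacle I expect is justifying correctness of the $\pumpingof{-}$ computation and the interplay between the "no node repeats the root label" condition and the recursion: a subtree that is itself a legal witness tree need not remain legal when glued under a new root, because a descendant might now coincide with the new root's label, or the newly-computed $\pumpingof{\cdot}$ at the root might change an output label in a way that was not anticipated when the subtree was built. The fix is that the witness-tree conditions are all \emph{local} (conditions on each subtree refer only to that subtree and its own root label), so re-checking the no-repetition condition at assembly time and recomputing $\pumpingof{\cdot}$ only at the new root suffices; the subtree conditions are preserved verbatim. A secondary subtlety is ensuring the recursion bottoms out: every internal node strictly decreases the available height budget, and a leaf can only occur at a label in $\fullinputdom\times\terms\cup\largeinputdom\times\nonterms$, for which $\postfuncN{\anngvas}$ is computable by \Cref{Corollary:WTSearchAssumption} — this is precisely the hypothesis the lemma is allowed to assume, so no circularity arises. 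Once $\witnesssetof{h}(\amarking, \asymbol)$ is constructed, $\witnessset(\amarking, \asymbol)$ itself is of course infinite in general, but the lemma only asks for the height-bounded version, which the induction delivers.
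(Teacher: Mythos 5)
Your proposal is correct and follows essentially the same route as the paper: an induction on the height bound, with leaves handled via $\postfuncN{\anngvas}$ and \Cref{Corollary:WTSearchAssumption}, and height-$h$ trees assembled from the finitely many lower-height witness trees for the symbols of each production, recomputing $\pumpingof{-}$ at the new root. Your explicit re-check of the no-repeated-label condition (and of which labels may form leaves) at assembly time is a detail the paper's proof leaves implicit, but it does not change the argument.
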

Finally, we observe the most crucial property, saturation.
Saturation allows us to effectively find a maximal depth, beyond which there are no further witness trees.
This concludes the proof of \Cref{Lemma:FreePumpPostLow}.
This is thanks to the pumping property.
We suppose that we get unboundedly high witness trees, and a K\"onig's Lemma argument applies to show unboundedly many pumps along one branch.
This cannot happen, since we only have $2d$ counters.
\begin{restatable}{lemma}{LemmaWTConvergenceMainPaper}\label{Lemma:WTConvergence}
    Let $h\in\N$.
    If $\witnesssetof{h}=\witnesssetof{h+1}$, then $\witnesssetof{h}=\witnesssetof{h'}$ for all $h'\in\N$ with $h'\geq h$.
    Furthermore, there is an $h\in\N$ with $\witnesssetof{h}=\witnesssetof{h+1}$. 
\end{restatable}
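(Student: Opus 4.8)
I treat the two assertions separately, using throughout that every subtree of a witness tree is again a witness tree: all defining conditions of a witness tree are quantified over all subtrees, and a subtree of a subtree of $\atree$ is a subtree of $\atree$.

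\emph{Stabilization.} The hypothesis $\witnesssetof{h}=\witnesssetof{h+1}$ says exactly that no witness tree has height exactly $h+1$. I claim this forces that no witness tree has height strictly larger than $h$, which gives $\witnessset=\witnesssetof{h}$ and hence $\witnesssetof{h'}=\witnesssetof{h}$ for every $h'\geq h$. Suppose towards a contradiction that $\atree\in\witnessset$ with $\heightof{\atree}=h'>h$. Descending repeatedly into a child subtree of maximal height, one obtains for every $i\leq h'$ a subtree $\atree_{i}$ of $\atree$ with $\heightof{\atree_{i}}=i$; in particular $\atree_{h+1}$ is a witness tree of height exactly $h+1$, a contradiction.

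\emph{Existence.} It suffices to bound $\setcond{\heightof{\atree}}{\atree\in\witnessset}$; if $h$ is an upper bound then $\witnessset=\witnesssetof{h}=\witnesssetof{h+1}$. Assume the heights are unbounded. By the descent argument just used, there is then a witness tree of every height, hence a witness tree containing a root-to-leaf path $\anode_{0},\anode_{1},\dots,\anode_{\ell}$ (with $\anode_{i+1}$ a child of $\anode_{i}$) of arbitrarily large length $\ell$. By the leaf condition, every internal node $\anode_{i}$ ($i<\ell$) carries a non-terminal and an input label in $\smallinputdom$; since $\smallinputdom\times\nonterms$ is finite, for $\ell$ large enough some pair $(\bar{\amarking},\bar{\anonterm})$ occurs as $(\anode_{i}.\inlabel,\anode_{i}.\symlabel)$ at arbitrarily many nested positions $\anode_{i_{1}},\anode_{i_{2}},\dots$ along the path. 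The corresponding nested subtrees all share the input $\bar{\amarking}$ and the symbol $\bar{\anonterm}$, so by the ``no repeated label'' condition their output labels are pairwise distinct. Invoking the well-quasi-ordering of $(\Nomega^{d},\leq)$ together with the definition of $\pumpingof{-}$, a long enough such chain is forced to exhibit a side-pump: some nested subtree whose output is $\leq$ the output produced along the spine and strictly smaller in some coordinate, which by the last witness-tree condition propagates a fresh $\omega$ into the output label of an ancestor on the branch. Each such event strictly enlarges the set of $\omega$-coordinates of a node label, and this enlargement is preserved further up the branch (by \Cref{Lemma:WTSound} for the output side, and by input propagation for the input side). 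Since the input and output markings of a node together have only $2d$ coordinates, at most $2d$ such events can occur along one branch, contradicting the existence of arbitrarily long paths. Hence the heights are bounded.

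\emph{Main obstacle.} The delicate step is the existence part, namely turning ``many nested subtrees with a common $(\inlabel,\symlabel)$'' into ``enough genuine side-pump events on one branch''. One must account for the fact that the comparison inside $\pumpingof{-}$ is against the output of the \emph{rightmost} child rather than of the node itself, and must track how the $\omega$-entries introduced by $\pumpingof{-}$ travel toward the root so that the $2d$-bound is legitimate; making this count effective amounts to bounding a controlled bad sequence over $\smallinputdom\times\nonterms\times\Nomega^{d}$, which is exactly where the finiteness of $\smallinputdom\times\nonterms$ and the well-quasi-ordering of $\Nomega^{d}$ enter. The stabilization part is routine.
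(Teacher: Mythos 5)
Your stabilization argument is fine and is essentially the paper's: both reduce to the fact that subtrees of witness trees are witness trees. The genuine gap is in the existence part, at the step ``a long enough such chain is forced to exhibit a side-pump''. Well-quasi-orderedness of $(\Nomega^{d},\leq)$ only guarantees a dominated pair in an \emph{infinite} sequence; for a finite sequence of pairwise distinct output labels along one branch there is no length that forces a deeper subtree's output to lie below (the rightmost-child output of) a shallower one. The outputs could, for instance, strictly increase with depth, or form a long antichain, and nothing in your argument excludes this: the finite entries of output labels are not bounded along a branch, since leaves may carry inputs from $\largeinputdom$ and their outputs come from $\postfuncNof{\anngvas}{\cdot\,,\cdot}$, so no controlled-bad-sequence bound comes for free. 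You flag exactly this as the ``main obstacle'', but flagging it is where the actual work lies; as written the central implication is unproven. Two smaller issues: even when a dominated pair exists, the strict coordinate may be one where the ancestor's rightmost-child output is already $\omega$, in which case $\pumpingof{-}$ introduces no fresh $\omega$, so the ``each event enlarges the $\omega$-set'' count needs refinement; and input-$\omega$'s propagate \emph{downwards} along a branch, not upwards, so the $2d$ count should be run on outputs only.

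The paper avoids the finite-chain problem by manufacturing an infinite sequence rather than working inside one tall tree: assuming $\witnesssetof{h}\neq\witnesssetof{h+1}$ for all $h$, it builds a graph whose nodes are witness trees of exact height $h$, with edges given by the subtree relation between consecutive heights, shows it is infinite but finitely branching (this is where \Cref{Lemma:WTEffectiveness} and the finiteness of $\smallinputdom\times\nonterms$ are really needed), and applies K\"onig's Lemma to obtain an infinite ascending chain of witness trees, each a subtree of the next. Applying the wqo to the \emph{infinite} sequence of their root labels (constant input and symbol, strictly increasing outputs) then yields a strictly growing set of $\omega$-coordinates in the root outputs, contradicting the bound $d$. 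If you want to keep your single-tree, per-branch route, you would have to replace the bare wqo appeal by a genuine length-function argument for controlled bad sequences, including an explicit control function bounding how output values can grow along a branch; that is a substantial extra development which your sketch does not supply.
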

\begin{proof}
    First, we show that for all $i\in\N$ if $\witnesssetof{i}=\witnesssetof{i+1}$, then $\witnesssetof{i+1}=\witnesssetof{i+2}$ by an inductive argument. 
    For any $\atree\in\witnesssetof{i+2}$, the left- and right-subtrees are witness trees.
    Therefore, they must belong to $\witnesssetof{i+1}$, but since $\witnesssetof{i+1}=\witnesssetof{i}$, the height of $\atree$ is at most $i+1$, which implies $\atree\in\witnesssetof{i+1}$.
    
    Now we show that there is an $h\in\N$ with $\witnesssetof{h}=\witnesssetof{h+1}$.
    Suppose that $\witnesssetof{h}\neq\witnesssetof{h+1}$ for all $h\in\N$.
    Then, for each $h\in\N$, there is a $\atree\in\witnesssetof{h+1}\setminus\witnesssetof{h}$.
    Consider the graph $H=(Y, E)$, where 
    \begin{align*}
        Y&=\setcond{(h, \atree)\in\N\times\witnessset}{h\neq 0,\;\atree\in\witnesssetof{h}\setminus\witnesssetof{h-1}}\\
        E&=\setcond{((h, \atree), (h', \atree'))\in Y^{2}}{\\
        &\hspace{6em} h'=h+1,\;\atree\text{ is a subtree of }\atree'}
    \end{align*}
    The nodes of the graph are witness trees annotated with their height, $(h, \atree)$.
    The node $\atree$ is an immediate successor of $\atree'$, if their heights are adjacent, and $\atree$ is a subtree of $\atree'$.
    Thus, the edges capture the bottom up construction of the tree.
    It must hold that $H$ is infinite.
    Clearly, any $(h+1, \atree)\in Y$ with $h\geq 1$ has a predecessor $(h, \atreep)\in Y$.
    If this did not hold, then the contradiction $\atree\in\witnesssetof{h}$ would hold.
    We claim that for any $h\geq 1$, $\witnesssetof{h}\setminus\witnesssetof{h-1}$ is finite.
    By \Cref{Lemma:WTEffectiveness}, $\witnesssetof{h}(\amarking, \asymbol)$ is effectively constructable and therefore finite for all $(\amarking, \asymbol)\in\fullinputdom\times(\nonterms\cup\trms)$.
    We argue that witness trees beyond depth $1$ do not have input labels resp. symbol labels outside of $\fullinputdom\times\trms\cup\largeinputdom\times\nonterms$.
    Note that if $(\amarking, \asymbol)\in\fullinputdom\times\trms\cup\largeinputdom\times\nonterms$ then, any node $\anode$ with $(\anode.\inlabel, \anode.\symlabel)=(\amarking, \asymbol)$ must be a leaf.
    Thus, $\witnesssetof{h}(\amarking, \asymbol)\subseteq\witnesssetof{0}$ for $(\amarking, \asymbol)\in\fullinputdom\times\trms\cup\largeinputdom\times\nonterms$.
    Then, $\witnesssetof{h}\setminus\witnesssetof{h-1}\subseteq\bigcup_{(\amarking, \asymbol)\in\smallinputdom\times\nonterms}\witnesssetof{h}(\amarking, \asymbol)$.
    However, $\smallinputdom\times\nonterms$ is finite, then so is $\witnesssetof{h}\setminus\witnesssetof{h-1}$.
    Then, since only edges that exist go from height $h$ to $h+1$, the graph $H$ is finitely branching.
    Since all nodes are connected to at least one node in $(\witnesssetof{1}\setminus\witnesssetof{0})\times\set{1}$, and this set is finite, $H$ only has finitely many components.
    We apply K\"onig's Lemma to get a sequence $[(h, \atree_{h})]_{h\in\N\setminus\set{0}}$ in $H$ with $((h, \atree_{h}), (h+1, \atree_{h+1}))\in E$ for all $h\in\N\setminus\set{0}$.
    This implies that for all $h\in\N\setminus\set{0}$, the tree $\atree_{h}$ has height $h$, and that it is a subtree of $\atree_{h+1}$.
    Using the fact that $\Nomega^{d}$ is a WQO and that $\smallinputdom\times\nonterms$ is finite, we get a subsequence $[(\phi(h), \atreep_{h})]_{h\in\N}$ of $[(h, \atree_{h})]_{h\in\N\setminus\set{0}}$, where $\atreep_{h}.\inlabel$ and $\atreep_{h}.\symlabel$ are constant across $h\in\N$, and $\atreep_{h}.\outlabel\leq\atreep_{h+1}.\outlabel$ for all $h\in\N$.
    Also, since $\atreep_{h}$ is a subtree of $\atreep_{h+1}$, and no node may have the same labeling as its successor, we know that $\atreep_{h}.\outlabel<\atreep_{h+1}.\outlabel$ must hold for all $h\in\N$.
    But $\atreep_{h}.\inlabel=\atreep_{h+1}.\inlabel$, and $\atreep_{h}.\symlabel=\atreep_{h+1}.\symlabel$, and we have some $j\leq d$ with $\atreep_{i}.\outlabel[j]<\atreep_{h+1}.\outlabel[j]$, which implies $\pumpingof{\atreep_{h+1}}[j]=\omega$.
    As a consequence, we get $\cardof{\omegaof{\atreep_{h}.\outlabel}}<\cardof{\omegaof{\atreep_{h+1}.\outlabel}}$.
    Then, $[|\omegaof{\atreep_{h}.\outlabel}|]_{h\in\N}$ must grow unboundedly.
    This is a contradiction to $\omegaof{\atreep_{h}.\outlabel}\subseteq [1,d]$ for all $h\in\N$. 
\end{proof}


\newcommand{\rank}[0]{\text{rank}}
\newcommand{\last}[0]{\text{last}}
\newcommand{\inc}[0]{\text{inc}}
\newcommand{\fastgrowing}{F}
\newcommand{\fastgrowingof}[1]{\fastgrowing(#1)}
\newcommand{\anelemfunc}{g}
\newcommand{\anelemnfuncof}[1]{\anelemfunc(#1)}
\newcommand{\anordinal}{\alpha}
\newcommand{\anordinalp}{\beta}
\newcommand{\fundseq}{\lambda}
\newcommand{\fundseqof}[1]{\fundseq_{#1}}
\newcommand{\funcnames}{\mathsf{Func}}
\newcommand{\names}{\mathsf{Name}}
\newcommand{\maxprogramcounter}{\mathsf{pc}}

\newcommand{\karpmillerfunc}{\mathsf{km}}
\newcommand{\cgfull}{\mathsf{cg}_{\N, \acounterset, \acountersetp}}
\newcommand{\cgint}{\mathsf{cg}_{\Z}}
\newcommand{\cgforget}{\mathsf{cg}_{low}}
\newcommand{\treesearch}{\mathsf{witness}}
\newcommand{\postperf}{\mathsf{post}_{\perfect}}
\newcommand{\preperf}{\mathsf{pre}_{\perfect}}
\newcommand{\postsearch}{\mathsf{post}_{\mathsf{search}}}
\newcommand{\presearch}{\mathsf{pre}_{\mathsf{search}}}

\newcommand{\dimrank}{\mathsf{Dim}}
\newcommand{\ngvasrank}{\mathsf{Rank}_{ngvas}}
\newcommand{\intprogrank}{\mathsf{Rank}_{prog}}
\newcommand{\asnapshot}{\sigma}
\newcommand{\asnapshotp}{\tau}
\newcommand{\ahistory}{h}
\newcommand{\atime}{t}

\newcommand{\acountersetp}{Y}

\section{Complexity} \label{SectionComplexity}

So far, we have settled the decidability of the emptiness problem of (N)GVAS.
In this section we analyze the complexity of the algorithm in terms of the fast-growing functions of the Grzegorczyk hierarchy, see \cite{Schmitz16} for a thorough introduction.
In particular, we prove the following theorem.

\begin{restatable}{theorem}{TheoremTimeBoundPerf}
The complexity of \(\perffun\) is in \(\mathfrak{F}_{\omega^{d+3}}\), where \(d\) is the number of counters of the input NGVAS. \label{TheoremLabelTimeBoundPerf}
\end{restatable}

The ideas of our complexity analysis were already explained in Section \ref{SectionLevelOneComplexity}: Let us remind the reader that in Section \ref{SectionLevelOneComplexity} we explained a programming model with support for both fully recursive and tail-recursive calls, and the goal of this section, Section \ref{SectionComplexity}, is to show that \(\perffun\) adheres to this programming model, as well as to finish the proof of Theorem \ref{TheoremLengthBoundBadPartiallyNestedSequences}. 

Remember that the programming model mainly considers ranks in \(\N^{\beta_1} \times \setfun \times \N^{\beta_2}\). Specifically, for \(\perffun\) we will use \(\beta_1=d+3\) which corresponds to the recursive rank \(\recrankof{\anngvas}\) of the current NGVAS \(\anngvas\), and \(\beta_2=2d+1+\beta_f\) which contains the iterative rank \(\itrankof{\anngvas}\) of \(\anngvas\) and an auxiliary rank \(\beta_f \leq 2d+3\) for every function \(f \in \setfun\). We will prove that any step of \(\perffun\) as in the programming model requires at most primitive-recursive time, s.t. the execution of \(\perffun\) gives rise to a \((\controlfun, |\anngvas_0|)\)-controlled bad partially nested sequence in \(\N^{d+3} \times \setfun \times \N^{3d+5}\), where \(\controlfun\) is some primitive-recursive function. At that point Theorem \ref{TheoremLengthBoundBadPartiallyNestedSequences} applies and leads to the stated bound of \(\mathfrak{F}_{\omega^{d+3}}\).

The rest of this section is hence structured as follows. First, in Section \ref{SectionFastGrowingFunctions}, we formally define the fast-growing functions. Then, in Section \ref{SubsectionLevelTwoProofOfLengthBound}, we prove Theorem \ref{TheoremLengthBoundBadPartiallyNestedSequences}. Next, in Section \ref{SectionListSubprocedures}, we provide the full list of functions \(\setfun\) called by \(\perffun\). Afterwards, in Section \ref{SectionAuxiliaryRanksForProcedures}, we will explain how to obtain a local rank for each one of the functions in sequence.

\subsection{Fast-Growing Functions} \label{SectionFastGrowingFunctions}

The fast-growing functions \(F_{\anordinal}\) for ordinals \(\anordinal \leq \omega^{\omega}\) are defined by 
\begin{align*}
&F_0(n):=n+1 && F_{\anordinal+1}(n):=F^{(n+1)}_{\anordinal}(n) && F_{\anordinal}(n):=F_{\lambda_n(\anordinal)}(n),
\end{align*}  where \(F^{(n)}\) is \((n)\)-fold application of \(F\) and \([\fundseqof{n}(\anordinal)]_{n\in\N}\) for a limit ordinal \(\alpha\) denotes the \emph{fundamental sequence} for this limit ordinal. 
The fundamental sequence of a limit ordinal $\alpha$ is a sequence of ordinals whose supremum is $\alpha$, and which is defined by the following rules for all $n\in\N$ and $k\geq 1$.
It holds that $\fundseqof{n}(\anordinalp+\omega^{k})=\anordinalp+\omega^{k-1}\cdot n$, where $\anordinalp+\omega^{k}$ is in Cantor Normal Form \cite{Schmitz16}, and $\fundseqof{n}(\omega^{\omega})=\omega^{n}$.

For example \(F_1(n)=F_0^{(n+1)}(n)=2n+1\), \(F_2(n)\) is an exponential function and \(F_3(n)=F_2^{(n+1)}(n)\) is related to the tower of exponentials. 
Finally, \(F_{\omega}(n)=F_n(n)\) is (one variant of) the Ackermann-function, defined by diagonalizing over \(F_{n}\) for \(n \in \N\). 
As we see in these examples, incrementing \(\alpha\) by a natural number corresponds to repeated application, and limit ordinals \(\alpha\) correspond to diagonalization.

Observe more generally that, reusing the notion of lexicographic decrement \(\LexDec(\alpha, n)=\alpha+\omega^{i-1} \cdot n + \dots +\omega^0 \cdot n\) as in the proof of Theorem \ref{TheoremLengthBoundBadPartiallyNestedSequences}, the definition of the fast-growing functions can be expanded to \(F_{\alpha}(n)=F_{\LexDec(\alpha, n-1)}^{(n)}(n)\): Namely 
\begin{align*}
F_{\alpha+\omega^i}(n)&=F_{\alpha+\omega^{i-1} \cdot n}(n)=F_{\alpha+\omega^{i-1} \cdot (n-1) + \omega^{i-2} \cdot n}(n)=\\
&=\dots=F_{\LexDec(\alpha, n-1)}^{(n)}(n).
\end{align*}

This observation will be used in the proof of Theorem \ref{TheoremLengthBoundBadPartiallyNestedSequences}.

%
%

The function \(F_{\omega^{\omega}}\) is called \emph{Hyper-Ackermann}. By the above definition, and since \(\lambda_n(\omega^{\omega})=\omega^n\) is the canonical fundamental sequence for this limit ordinal, we have \(F_{\omega^{\omega}}(n)=F_{\omega^n}(n)\).

\subsection{Proof of Theorem \ref{TheoremLengthBoundBadPartiallyNestedSequences}} \label{SubsectionLevelTwoProofOfLengthBound}

\TheoremLengthBoundBadSequences*

\begin{proof}
The first part is a repetition of Section \ref{SectionLevelOneComplexity}, we add it for coherence. We mark in boldface the point when the proof starts diverging.

Note that we can get rid of $\setfun$ by setting \(\beta_2'=\beta_2 \cdot \cardof{\setfun}\) and storing the auxiliary ranks of one call for every function \(\afun\) in a common larger rank from \(\N^{\beta_2'}\). 

The proof now has two steps, we first analyze the shape of a worst-case controlled bad partially nested sequence to obtain a recursive formula, and then we simplify this formula to be able to analyze it.

To obtain long bad partially nested sequences, we need to reduce ordinals in the least possible way. 
Given an ordinal \(\beta< \omega^{\omega}\) in Cantor normal form, we define the worst-case \emph{lexicographic decrement} as the function that finds the monomial $\omega^i\cdot k_i$ with the least exponent $\min(\beta)=i$, decrements the coefficient $k_i$, and fills up the missing monomials with smaller exponents. 
The coefficient that should be used for the added monomials is given as a parameter $n$ to the lexicographic decrement.  
We let \(\LexDecof{\beta, n}=\beta-\omega^{min(\beta)}+\sum_{j<\min(\beta)} \omega^{j}\cdot n\). 
For example, $\LexDecof{\omega^3+\omega^2\cdot 5, n}=\omega^3+\omega^2\cdot 4 + \omega\cdot n + n$. 

Let \(\maxlengthof{\amainrank, \anauxrank,\controlfun}{\initbound}\) be the maximal length of a \((\controlfun,\initbound)\)-controlled bad partially nested sequence starting at \((\amainrank, \anauxrank) \in \N^{\beta_1} \times \N^{\beta_2'}\). 
We obtain a recursive formula for \(\maxlengthof{\amainrank, \anauxrank, \controlfun}{\initbound}\) by considering the worst-case behavior the bad sequence can show. 
A step in the sequence may execute a push on the stack, and thereby turn \((\amainrank, \anauxrank, \aheight=0)\) into \((\LexDecof{\amainrank, \controlfunof{\initbound}}, \controlfunof{\initbound}, \aheight=1)\). 
So we reduce the main rank, but in the least possible way. 
We then execute a bad nested sequence on this smaller new value.  
After a large number of steps \(t\), we return to \((\amainrank, \LexDecof{\anauxrank, \controlfun^{(t)}(\initbound)}, \aheight=0)\). 
We have only reduced the auxiliary rank, and again only in the least possible way, which now even includes the large number $\controlfun^{(t)}(\initbound)$ as the coefficient for the new monomials.
We now perform a bad nested sequence from this new stack. 

By formalizing this argument, we obtain what is often called a \emph{descent equation} \(\maxlengthof{\amainrank, \anauxrank, \controlfun}{\initbound}=1+l_1+l_2\), where 
\begin{align*}
l_1 &=\maxlengthof{\LexDec(\amainrank, \controlfun(\initbound)), \controlfun(\initbound), \controlfun}{\controlfun(\initbound)} \\
l_2 &=\maxlengthof{\amainrank, \LexDec(\anauxrank, \controlfun^{(l_1+1)}(\initbound)),\controlfun}{\controlfun^{l_1+1}(\initbound)}
\end{align*} are respectively the lengths of the first and second half of the worst-case sequence above.

Our descent equation is difficult to analyze because it refers to nested sequences.
We therefore translate nested sequences into ordinary (non-nested) sequences, using summarization plus stuttering.
By summarization, we mean that we over-approximate the rank modification that may happen while the stack height is non-zero using a new control function and a single step. 
By stuttering, we mean that the high counter values introduced by the summary step allow us to execute a bad sequence that is at least as long as what we had. 
We illustrate the idea in the case $\beta_1=1$ studied above.
Starting at \(k\), we can bound \(\maxlength_{k, \anauxrank, \controlfun}\) by a \emph{non-nested (nn)} sequence in \(\N^{\beta_2'}\) whose control function is \(\controlfun^{(2+\maxlength_{k-1, \controlfun(\initbound), \controlfun})}\). 
Indeed, in the worst-case the first step arrives at \((k-1, \controlfun(\initbound),h=1)\), and after an additional at most \(\maxlength_{k-1, \controlfun(\initbound), \controlfun}+1\) steps we are back at \(h=0\). For any nested sequence, the subsequence defined by \(h=0\) is a standard bad sequence that is controlled by \(\controlfun^{(2+\maxlength_{k-1, \controlfun(\initbound), \controlfun})}\). 

\textbf{Starting from here} the text is diverging.

We define the new control function 
\[\controlfun_{\amainrank}= \controlfun^{(2 + \maxlengthof{\LexDec(\amainrank, inp), inp, \controlfun}{\controlfun(\initbound)})},\]
where we write \(inp\) to mean that the input value \(\controlfun(\initbound)\) of the function should be substituted here. Otherwise in the notation \(\maxlengthof{\LexDec(\amainrank, \controlfun(\initbound)), \controlfun(\initbound), \controlfun}{\controlfun(\initbound)}\), conventions might suggest that the value \(\controlfun(\initbound)\) of the subscript is fixed. We want to clarify that it is not fixed.

Let \(\maxlengthof{nn,\beta_2',\controlfun_{\amainrank}}{\initbound}\) denote the length of the longest \((\controlfun_{\amainrank}, \initbound)\)-controlled bad non-nested sequence in \(\N^{\beta_2'}\). By considering the subsequence defined by \(h=0\), we obtain \(\maxlengthof{\amainrank, \anauxrank, \controlfun}{\initbound} \leq (\maxlengthof{nn, \beta_2', \controlfun_{\amainrank}}{\initbound}+1)^2\): Here, the \(+1\) arises from the fact that ending at \((\amainrank, \vect{0})\), we still have to perform one more sequence afterwards, and squaring deals with the intermediate steps: Per step of the new non-nested sequence, in the worst-case we had \(\maxlengthof{nn, \beta_2', \controlfun_{\amainrank}}{\initbound}\) many steps in the nested sequence.

We obtain the following recursion for \(\controlfun_{\amainrank}\).
\begin{align*}
\controlfun_{\amainrank}(\initbound)&=\controlfun^{(2+\maxlengthof{\LexDec(\amainrank, inp), inp, \controlfun}{\controlfun(\initbound)})}\\
&\leq \controlfun^{(2+(\maxlengthof{nn, \beta_2', \controlfun_{\LexDec(\amainrank, inp)}}{\controlfun(\initbound)}+1)^2)}
\end{align*}
Similar to the recursion \(F_{\alpha}(n)=F_{\LexDec(\alpha, n-1)}^{(n)}(n)\), \(\controlfun_{\amainrank}\) is defined depending on a lexicographic decrement \(\controlfun_{\LexDec(\amainrank, inp)}\), though the precise recursion is different. 

Now it only remains to analyze this recursion. Let us first state the inequality, following by intuition. 
\[\controlfun_{\amainrank}(\initbound) \leq F_{\amainrank \beta_2' + \alpha}(\controlfun((\beta_1+\beta_2') \cdot \initbound)),\] 
i.e.\ \(\controlfun_{\amainrank}\) is at level \(\amainrank \beta_2'+\alpha\) of the fast-growing hierarchy. This inequality is proven by induction on \(\amainrank\). In addition to the steps of \cite[Theorem VI.1]{LerouxPS14} for nested sequences, since our recursion refers to the length of non-nested sequences, the inductive step also has to redo some of the analysis in \cite[Prop. 5.2]{FigueiraFSS11} for non-nested sequences. 

Instead of the computation, let us explain the general intuition. The fast-growing hierarchy is robust w.r.t. the exact definition, as long as the recursion scheme of refering to \(\LexDec\) is the same, one obtains hierarchy level \(\mathfrak{F}_{\amainrank}\). The reason for the factor \(\beta_2'\) is that \(\maxlength_{nn, \beta_2', \controlfun}\) is at level \(\mathfrak{F}_{\alpha+\beta_2'}\) of the hierarchy \cite[Prop. 5.2]{FigueiraFSS11}, if \(\controlfun\) is at level \(\alpha\). In our case, every single step of the recursion hence adds \(\beta_2'\) to the subscript. The \(+\alpha\) is due to \(\controlfun \in \mathfrak{F}_{\alpha}\).
\end{proof}

\subsection{List of Subprocedures} \label{SectionListSubprocedures}

The list of subprocedures is as follows, with explanations afterwards.

\begin{align*}
    \setfun=\ &\set{\perfect, \clean, \basisfire, \centerdec, \\
    &\refineeq, \refinepump, \refineintpump, \postfunc, \prefunc, \\
    &\postsearch, \presearch, \karpmillerfunc, \cgfull, \cgint}.
\end{align*}

We explain the names.
The functions from $\perfect$ to $\prefunc$ are those from our development.
The functions $\postsearch$ and $\presearch$ handle the hard case 2 of computing $\postfunc$ (\Cref{Section:PostHardCase2}), where we conduct a witness tree search.
The Karp-Miller construction is handled by the function $\karpmillerfunc$.
The coverability grammar construction, where we assume the set of $\omega$'s $\acounterset$ on the input, and $\acountersetp$ on the output is handled by $\cgfull$.
Finally, the coverability grammar construction that uses the $\Z$-approximations is handled by $\cgint$.
The rest of this section provides an auxiliary rank for every function.


\subsection{Auxiliary Ranks for Subprocedures} \label{SectionAuxiliaryRanksForProcedures}

For many procedures in our algorithm, the auxiliary ranks are simple, the sizes of working lists, the number of SCCs, etc.
This insight extends to all functions except $\clean$, $\karpmillerfunc$, $\cgfull$, $\cgint$, and $\presearch$ easily.
We argue that these also have auxiliary ranks that can be expressed in $\N^{3d+5}$.

\textbf{Auxiliary Rank for }\(\perffun\): 
A $\perfect$ call iteratively constructs a set $\adecomp$ of NGVASes by applying refinements, and after each refinement step, the rank of an NGVAS in the set decreases.
We understand $\perfectof{\anngvas}$ as a round-robin process that decomposes each NGVAS once per round.
Namely, it constructs the decomposition $\adecomp_{i}$ at round $i$, with $\adecomp_{0}=\set{\anngvas}$, where $\adecomp_{i+1}$ is constructed from $\adecomp_{i}$ by applying the decompositions once to each $\anngvasp\in\adecomp_{i}$.
Thus, the measure of progress at round $i$ is $(\max_{\anngvasp\in\adecomp_{i}}\rankof{\anngvasp}, \cardof{\adecomp_{wl}})\in\N^{3d+4}\times\N$, where $\adecomp_{wl}\subseteq\adecomp_{i}$ is the set NGVASes that are waiting for decomposition in the current round.
After the round is complete, each NGVAS is decomposed at least once, so $\max_{\anngvasp\in\adecomp_{i}}\rankof{\anngvasp}$ decreases.

%

\textbf{Auxiliary Rank for }\(\clean\): In $\clean$, we call $\clean$, $\perfect$, and $\basisfire$ on lower SCCs of the wNGVAS at hand.
Since these are lower SCCs, we do not run into issues with the comparability of the $\ngvasrank$ component between call levels.
We incorporate the number of recursive calls we need to make into the auxiliary rank: For example for the subprocedure making all children perfect, we keep a counter storing how many children have not yet been handled. This was already mentioned in Section \ref{SectionLevelOneComplexity}.
%

For functions $\karpmillerfunc$, $\cgint$, $\cgfull$, $\postsearch$, and $\presearch$, progress is less direct.
These functions call $\postfunc$ resp. $\prefunc$ until the set of explored configurations resp. trees reach a saturation.
The saturation is only ensured by a well quasi order.

\textbf{Auxiliary Rank for }\(\karpmillerfunc\): Remember that in Section \ref{SectionLevelOneComplexity} we explained types, and how they can be used as auxiliary rank for \(\karpmillerfunc\). We hence do not repeat this part of \(\karpmillerfunc\) here.

There is however an implementation detail that is crucial for the soundness of the complexity analysis.
Our algorithm depends on finding an upper bound on pumping derivations in NGVAS that are less complicated than the current input.
In the decidability proof, we make an enumeration argument for the sake of simplicity.
Here, we need to assume an implementation of $\karpmillerfunc$ that not only verifies \perfectnesspumping resp. \perfectnesspumpingintnospace, but also returns an upper bound on the derivations if these conditions hold.
We can ensure this by assuming the following two modifications.
First, we assume that $\postfunc$ and $\prefunc$ not only return the set of coverable values, but also return a set of perfect NGVAS that witness these values.
The intention is to use \Cref{TheoremIterationLemmaNonLinearOverview}, and cheaply construct runs that pump the $\omega$ counters in the images of $\postfunc$ and $\prefunc$.
In almost all cases, $\postfunc$ and $\prefunc$ already construct perfect decompositions of the NGVAS that correspond to their query.
However, this is not strictly true for the Hard Case 2 in \Cref{Section:PostHardCase2}, where we conduct a witness tree search, whose leaves are $\postfunc$ resp. $\prefunc$ calls.
But, assuming that $\postfunc$ and $\prefunc$ readily return NGVASes, these trees can be encoded as a larger NGVAS that incorporates the return values of $\postfunc$ resp. $\prefunc$.
Second, in order to at all use \Cref{TheoremIterationLemmaNonLinearOverview} for pumping, we assume an implementation where each perfect NGVAS also stores the  pumping derivations that witness their perfectness.
With this assumption, we can construct pumps in primitive recursive time as described by \Cref{TheoremIterationLemmaNonLinearOverview}.

\textbf{The remaining functions}: Even though the witness tree search conducted by $\postsearch$ and $\presearch$ seems to have a different structure, the same argument as for \(\karpmillerfunc\) can also be applied here.
In the context of e.g. $\postsearch$, we have a bound $\abigbound\in\N$ that we have already computed, i.e. part of the size of our call stack, and we consider input markings in $[0, \ldots, \abigbound, \omega]^{d}$.
Whenever we encounter a node with an input marking that does not belong to $[0, \ldots, \abigbound, \omega]^{d}$, said node is closed by a $\postfunc$ call.
This means that if a witness tree has height $\geq 2$, then at most one of its subtrees is closed by a $\postfunc$ call.
Thus, the accelerations in a witness tree have a linear structure.
To formalize this, we argue over the graph $H=(Y,E)$ constructed in the %
termination proof of Section 2.3.5. 
Each edge between trees of height $\geq 2$ incurs a blow-up caused by at most one $\postfunc$ call.
This is the same behaviour we encounter in $\karpmillerfunc$.
Thus, the branches can be similarly ranked, and progress is ensured by a measure in $\N^{2d+2}$.
The height $=1$ is a special case, because each subtree can incur a $\postfunc$ blow up.
Thanks to the program counter, this is handled without causing comparability.
This concludes our argument.

\newpage
~
\newpage

\section{The Details of Proofs and Constructions}\label{Section:AppendixL3}
The following sections of the appendix contain details omitted from the main paper, or the first part of the appendix.
These sections are meant to be consulted as they are needed, and there is not much continuity between the sections.

\section{Appendix: Wide Tree Theorem}

\WideTreeTheorem*
\begin{proof}
Let us write $\arun$ for terminal sequences in this proof.   
For every non-terminal $\anonterm\in\nonterms$, we define the grammar $\agramdef$ that coincides with $\agram$ except that it has $\anonterm$ as the start non-terminal. 
Since all non-terminals are useful in $\agram$ and $\agram$ is strongly connected, all non-terminals are useful in $\agramdef$. 
Indeed, for $\anontermp\in\nonterms$ we have
\begin{align*}
\anonterm\rightarrow^*\asentform_1.\startnonterm.\asentform_2\rightarrow^*\asentform_1.\asentformp_1\anontermp.\asentformp_2.\asentform_2\rightarrow^*\asentform_1.\arun.\asentform_2\rightarrow^*\arun'\ .
\end{align*}
The first derivation is by strong connectedness. 
The second and third derivations exist, because $\anontermp$ is useful in $\agram $. 
The last derivation uses the fact that from every non-terminal we can derive a terminal sequence, by usefulness in $\agram$. 

We strengthen the statement and show that for every number of copies $\aconst\in\N$ and for every $\anonterm\in\nonterms$, we can obtain a parse tree $\atreeparamdef\in\treesof{\agramdef}$ and a provenance tracking function on $\atreeparamdef$ as promised. 
The notation $\atreeparamdef$ is meant to indicate that the yield of this tree has the form $\arun_1.\anonterm.\arun_2$ with $\arun_1, \arun_2\in\trms^*$, so $\anonterm$ is the single non-terminal. 

{\bfseries Base case}\quad Let $\aconst=1$ and $\anonterm\in\nonterms$.  
The homogeneous variant of Esparza-Euler-Kirchhoff is independent of the choice of the start non-terminal: the equations are the same for $\agram$ and for $\agramdef$. 
Combined with the remark that all non-terminals are useful in $\agramdef$, we can invoke Theorem~\ref{Theorem:EEK} and obtain $\startnonterm\xrightarrow{\aprodseq}\asentform$ with $\paramparikhof{\prods}{\aprodseq}=\avec_{\prods}$. 
For the shape of the sentential form, Lemma~\ref{Lemma:EEKConverse} 
shows $\paramparikhof{\nonterms}{\asentform}=1_{\anonterm} + \effof{\nonterms}\cdot \avec_{\prods} = 1_{\anonterm}$. 
So $\asentform = \arun_1.\anonterm.\arun_2$ with $\arun_1, \arun_2\in\trms^*$. 
We turn this derivation sequence into a parse tree  $\atreeparam{1}{\anonterm}$. 
The height requirement is trivial and the provenance tracking can have at most $1$ incomplete copy of $\prodvec$ in a prefix. 

{\bfseries Step case}\quad Let $\aconst>1$ and consider $\anonterm\in\nonterms$. 
We determine the parse tree $\atreeparam{1}{\anonterm}$ as we have done in the base case.  
Let the yield be $\arun_1.\anonterm.\arun_2$. 
Since the grammar is non-linear and $\avec_{\prods}$ uses every production, $\atreeparam{1}{\anonterm}$ contains a node with at least two children that are non-terminals.
One of them may lead to the leaf $\anonterm$. 
The other, however, will lead to a production $\anontermp\rightarrow \arun$ that adds $\arun\in\trms^*$ to $\arun_1$ or $\arun_2$, say $\arun_2$. 
Then the parse tree can be written as $\atreepparam{1}{\anonterm, \anontermp\rightarrow \arun}$.  
We define $\aconst_1=\floorof{\frac{k-1}{2}}$ and $\aconst_2=\ceilof{\frac{k-1}{2}}$ so that $\aconst = \aconst_1+\aconst_2+1$. 
We invoke the induction hypothesis twice, for~$\anonterm$ with~$\aconst_1$ and for~$\anontermp$ with~$\aconst_2$.
There is the special case $\aconst_1=0$ in which we skip the first invokation. 
The hypothesis yields parse trees $\atreeparam{\aconst_1}{\anonterm}\in \treesof{\agramof{\anonterm}}$ and $\atreeparam{\aconst_2}{\anontermp}\in \treesof{\agramof{\anontermp}}$ together with provenance functions that have the properties in the strengthened statement. 
We insert these trees into $\atreepparam{1}{\anonterm, \anontermp\rightarrow \arun}$ and obtain
\begin{align*}
\atreeparam{\aconst}{\anonterm}\ =\ \atreepparam{1}{\atreeparam{\aconst_1}{\anonterm}, \atreeparam{\aconst_2}{\anontermp\rightarrow \arun}}\ .
\end{align*} 
Note that we moved $\anontermp\rightarrow \arun$ to the yield of $\atreeparam{\aconst_2}{\anontermp}$. 
So $\atreeparam{\aconst}{\anonterm}$ indeed has $\anonterm$ as the single non-terminal in the yield.

For the number of productions, we have
\begin{align*}
\paramparikhof{\prods}{\atreeparam{\aconst}{\anonterm}}\ \overset{\text(IH)}{=}
 \ \prodvec + \aconst_1 \cdot \prodvec + \aconst_2\cdot \prodvec\  =\ \aconst\cdot \prodvec\ .
\end{align*}
For the height, we argue similarly
\begin{align*}
\heightof{\atreeparam{\aconst}{\anonterm}}\  \leq\ &\ \heightof{\atreeparam{1}{\anonterm}}+ \max\set{\heightof{\atreeparam{\aconst_1}{\anonterm}}, \heightof{\atreeparam{\aconst_2}{\anontermp}}} \\
\argument{(IH)}\leq\ &\ \normof{\avec_{\prods}} + \ceilof{1+\ld \aconst_2}\cdot \normof{\prodvec}\\
 =\ &\ \ceilof{2+\ld \ceilof{\tfrac{\aconst-1}{2}}} \cdot \normof{\prodvec}\\
\leq\ &\ \ceilof{2+\ld \tfrac{\aconst}{2}} \cdot \normof{\prodvec}\\
=\ &\ \ceilof{2 - \ld 2 + \ld \aconst } \cdot \normof{\prodvec}\ .
\end{align*}

The provenance tracking function is defined as expected by combining the provenance tracking functions $\prov_1$ for $\atreeparam{1}{\anonterm}$, $\prov_{\anonterm}$ for $\atreeparam{\aconst_1}{\anonterm}$, and $\prov_{\anontermp}$ for $\atreeparam{\aconst_2}{\anontermp}$. 
More precisely, we shift the output of the latter functions by $+1$ resp. $1+\aconst_1$, and note that the order is invariant under the shift of identities.
We now have $\prov = \prov_1\discup\prov_{\anonterm}^{+1}\discup\prov_{\anontermp}^{1+\aconst_1}$. 
Note that the terminals created by $\anontermp\rightarrow \arun$ still have $\provof{\arun}=1$. 
For the order, consider a prefix $\asentform$ of $\yieldof{\atreeparam{\aconst}{\anonterm}}$.
By the shape of $\atreeparam{\aconst}{\anonterm}$, this prefix either (i) does not contain symbols from $\yieldof{\atreeparam{\aconst_2}{\anontermp\rightarrow \arun}}$ or (ii) it contains the full $\yieldof{\atreeparam{\aconst_1}{\anonterm}}$. 
In the former case, the order is bounded by $1+\ceilof{1+\ld \aconst_1}$. 
The copy of $\prodvec$ in $\atreeparam{1}{\anonterm}$ may be incomplete, and to this we add the maximal number of incomplete copies in a prefix of $\yieldof{\atreeparam{\aconst_1}{\anonterm}}$. 
The latter is bounded by $\ceilof{1+\ld \aconst_1}$ by the induction hypothesis. 
In the latter case, note that $\yieldof{\atreeparam{\aconst_1}{\anonterm}}$ does not contribute to the order, because it only contains complete copies of $\prodvec$. 
Hence, the order is bounded by $1+\ceilof{1+\ld \aconst_2}$. 
We then conclude with an estimation similar to that for the height. 
\end{proof}

\section{Proofs from Sections~\ref{Section:OutlineILProof} and~\ref{Section:IterationLemma}}

In order to make the argument of Section~\ref{Section:IterationLemma} formal, we need to prove many minor claims, for example explain why equation \eqref{Equation:UpdatesReach} holds, as well as minor consequences of \eqref{Equation:EmbeddingProductions}, etc. This appendix is hence a list of many minor lemmas to fill in the gaps, followed by finishing the explanation of pumping (cases 2 and 3 etc.) and the proof of Lemma \ref{Lemma:LowerBound}.

We start with some of these minor properties.
\begin{lemma}
\eqref{Equation:UpdatesReach} holds. 
\end{lemma}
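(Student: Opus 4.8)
The plan is to evaluate $\vaseffof{\reachrun}$ as the sum of the effects of the pieces constituting $\reachrun=\arun_1\ldots \arun_{\cardof{\asentformreach}}$, grouping these pieces according to whether $\at{\asentform}{i}$ is an update $\anupd\in\updates$ or a childNGVAS $\anngvasp\in\terms$. Recall we already observed $\paramparikhof{\terms}{\asentformreach}=\asol'[\termvar]$, by Lemma~\ref{Lemma:EEKConverse} together with $\ptof{}{\prodvar, \termvar}$. First I would record a small observation that is needed for the grouping to make sense: every childNGVAS $\anngvasp\in\terms$ actually occurs in $\asentformreach$, that is $\asol'[\termvar][\anngvasp]\geq 1$. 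Indeed, by \perfectnessprodsnospace\ every production lies in $\suppof{\homchareq{\anngvas}}$, so $\asol'[\prodvar]\geq 1$; since the grammar is in wCNF, every terminal occurs on the right-hand side of some production, whence $\asol'[\termvar][\anngvasp]=(\effof{\terms}\cdot\asol'[\prodvar])[\anngvasp]\geq 1$. Thus ``the first instance of $\anngvasp$'' is well defined and carries the run $\iterrundefof{\maxconst}$.

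The contributions are then easy to list. Every $\arun_i$ with $\at{\asentform}{i}=\anupd\in\updates$ equals $\anupd$, so the updates performed by $\anngvas$ itself contribute $\updates\cdot\restrictto{\asol'[\termvar]}{\updates}$. Fix a childNGVAS $\anngvasp$ with $\restrictionsdef=\baseeffectdef+\periodeffectdef^*$. Its first instance contributes $\vaseffof{\iterrundefof{\maxconst}}$, and since the induction hypothesis was invoked with $\baseeffectchoicedef=\baseeffectdef+\periodeffectdef\cdot\asol[\periodeffectvardef]$, $\periodeffectchoicedef=\ahomsol[\periodeffectvardef]$ and $\aconst=\maxconst$, this effect is
\begin{align*}
\baseeffectchoicedef+\maxconst\cdot\periodeffectdef\cdot\periodeffectchoicedef\ =\ \baseeffectdef+\periodeffectdef\cdot\bigl(\asol[\periodeffectvardef]+\maxconst\cdot\ahomsol[\periodeffectvardef]\bigr)\ =\ \baseeffectdef+\periodeffectdef\cdot\asol'[\periodeffectvardef],
\end{align*}
using $\asol'=\asol+\maxconst\cdot\ahomsol$. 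Each of the remaining $\asol'[\termvar][\anngvasp]-1$ instances of $\anngvasp$ is the run $\baserundef$ and contributes $\baseeffectdef$ by \perfectnessbasenospace. So the $\anngvasp$-instances contribute $\asol'[\termvar][\anngvasp]\cdot\baseeffectdef+\periodeffectdef\cdot\asol'[\periodeffectvardef]$ in total.

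Finally I would collapse the sum using the two equalities contained in $\computeupdatesof{\anngvas}{\termvar, \updvar}$, evaluated at the solution $\asol'$. The first equality, $\updates\cdot\asol'[\updvardef]=\baseeffectdef\cdot\asol'[\termvar][\anngvasp]+\periodeffectdef\cdot\asol'[\periodeffectvardef]$, identifies the contribution of the $\anngvasp$-instances with $\updates\cdot\asol'[\updvardef]$. Adding the update contribution and summing over all $\anngvasp\in\terms$ yields
\begin{align*}
\vaseffof{\reachrun}\ =\ \updates\cdot\Bigl(\restrictto{\asol'[\termvar]}{\updates}+\sum_{\anngvasp\in\terms}\asol'[\updvardef]\Bigr),
\end{align*}
and the second equality, $\asol'[\updvar]=\restrictto{\asol'[\termvar]}{\updates}+\sum_{\anngvasp\in\terms}\asol'[\updvardef]$, rewrites this as $\updates\cdot\asol'[\avar_{\updates}]$, which is \eqref{Equation:UpdatesReach}. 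The computation is pure bookkeeping; the only step that genuinely needs care is the claim of the first paragraph that every childNGVAS occurs at least once in $\asentformreach$ — without it, a childNGVAS carrying $\asol'[\periodeffectvardef]>0$ but no instances would make the right-hand side strictly exceed $\vaseffof{\reachrun}$ and the identity would fail.
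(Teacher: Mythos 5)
Your proof is correct and follows essentially the same route as the paper: sum the effects of the updates, the one distinguished instance of each child carrying $\iterrundefof{\maxconst}$, and the remaining instances carrying $\baserundef$, then collapse via the definitions of $\baseeffectchoicedef$, $\periodeffectchoicedef$ and the two equations of $\computeupdatesof{\anngvas}{\termvar, \updvar}$. Your explicit check that $\asol'[\termvar][\anngvasp]\geq 1$ (via \perfectnessprodsnospace\ and wCNF) is a point the paper leaves implicit here, and it is a worthwhile addition.
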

\begin{proof}
Our goal is to show 
\begin{align*}
\vaseffof{\reachrun}\ = \ \updates \cdot \asol'[\avar_{\updates}].
\end{align*}
By \(\ptof{}{\prodvar, \termvar}\) we have $\paramparikhof{\trms}{\asentformreach}=\asol'[\avar_{\trms}]$.
As a first consequence, $\paramparikhof{\updates}{\asentformreach}=\restrictto{\asol'[\avar_{\trms}]}{\updates}$, the number of updates in $\reachrun$ that stem from $\asentformreach$ is as expected by $\asol'$. 
As a second consequence, we know that $\asentformreach$ contains  $\asol'[\termvar[\anngvasp]]$ many instances of~$\anngvasp$. 
Except for the first instance, we used (P4) to construct a run with the effect $\baseeffectdef$. 
For the first instance, we used the run $\arun_{\anngvasp}^{(\maxconst)}$.  
Summing the effects of these updates and runs we obtain
\begin{align*}
\ \vaseffof{\reachrun}=&\updates \cdot \restrictto{\asol'[\termvar]}{\updates}+\\
 &\hspace{1em}\sum_{\anngvasp \in \trms} (\asol'[\termvar[\anngvasp]]-1) \cdot \baseeffectdef + \baseeffectchoicedef  + \\
 &\hspace{7.5em}\maxconst \cdot \periodeffectdef \cdot \periodeffectchoicedef\\
=&\ \argument{Definition of $\baseeffectchoicedef$ and~\(\periodeffectchoicedef\)}\\
&\ \updates \cdot \restrictto{\asol'[\termvar]}{\updates}+ \\
&\hspace{1em}\sum_{\anngvasp \in \trms} \baseeffectdef \cdot \asol'[\termvar[\anngvasp]] + \\
&\hspace{5em}\periodeffectdef \cdot (\asol[\avar_{\anngvasp, \periodeffect}] + \maxconst \cdot \ahomsol[\avar_{\anngvasp, \periodeffect}])\\
=&\ \argument{\(\asol'=\asol+\maxconst \cdot \ahomsol\)}\\
&\argument{Definition of \(\computeupdatesof{\anngvas}{\termvar, \updvar}\), upper line}\\
&\ \updates \cdot \restrictto{\asol'[\termvar]}{\updates}+ \sum_{\anngvasp \in \trms} \updates \cdot \asol'[\avar_{\anngvasp, \updates}]\\
=&\ \argument{Definition of \(\computeupdatesof{\anngvas}{\termvar, \updvar}\), lower line}\\
&\ \updates \cdot \asol'[\avar_{\updates}]\qedhere
\end{align*}
\end{proof}
A simple but important consequence of \eqref{Equation:EmbeddingProductions} is:
\begin{lemma}
The following equation holds:
\begin{align}
\at{(\embedconst\cdot \ahomsol)}{\termvar} - \paramparikhof{\trms}{\asentformpumpleft.\asentformpumpright}\geq 0\ . \tag{$*$}\label{Equation:EmbeddingConsequence}
\end{align} 
\end{lemma}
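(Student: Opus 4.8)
The plan is to reduce \eqref{Equation:EmbeddingConsequence} to \eqref{Equation:EmbeddingProductions} through the equation $\ptof{}{\prodvar, \termvar}$, which links production counts to terminal counts. Recall that $\ptof{}{\prodvar, \termvar}$ is the constraint $\termvar - \effof{\terms}\cdot \prodvar = 0$, and that it is a conjunct of $\evaleq{}$ and hence of the homogeneous characteristic equations $\homchareq{\anngvas}$. Since $\embedconst\cdot \ahomsol$ solves $\homchareq{\anngvas}$, it satisfies $\at{(\embedconst\cdot \ahomsol)}{\termvar} = \effof{\terms}\cdot \at{(\embedconst\cdot \ahomsol)}{\prodvar}$. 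On the other side, $\aprodseqpumpanngvas$ is the production sequence with $\startnonterm\xrightarrow{\aprodseqpumpanngvas}\asentformpumpleft.\startnonterm.\asentformpumpright$, so Lemma~\ref{Lemma:EEKConverse} gives $\paramparikhof{\terms}{\asentformpumpleft.\startnonterm.\asentformpumpright} = \effof{\terms}\cdot \paramparikhof{\prods}{\aprodseqpumpanngvas}$; as $\startnonterm$ is a non-terminal, the left-hand side equals $\paramparikhof{\terms}{\asentformpumpleft.\asentformpumpright}$. Subtracting the two identities yields
\[
\at{(\embedconst\cdot \ahomsol)}{\termvar} - \paramparikhof{\terms}{\asentformpumpleft.\asentformpumpright}\ =\ \effof{\terms}\cdot \bigl(\at{(\embedconst\cdot \ahomsol)}{\prodvar} - \paramparikhof{\prods}{\aprodseqpumpanngvas}\bigr)\ .
\]

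Next I would observe that every entry of the matrix $\effof{\terms}$ is non-negative. Indeed, for a production $\aprod = (\anonterm, \asentform)$ the corresponding column is $\effof{\terms, \aprod} = \paramparikhof{\terms}{\asentform} - \aconst$ with $\aconst = 1_{\anonterm}$ if $\anonterm\in\terms$ and $\aconst = 0$ otherwise; since the left-hand side $\anonterm$ of a production is always a non-terminal, $\aconst = 0$, hence $\effof{\terms, \aprod} = \paramparikhof{\terms}{\asentform}\geq 0$. So $\effof{\terms}$ maps non-negative vectors to non-negative vectors.

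Finally I would invoke \eqref{Equation:EmbeddingProductions}, which states $\at{(\embedconst\cdot \ahomsol)}{\prodvar} - \paramparikhof{\prods}{\aprodseqpumpanngvas}\geq 1\geq 0$. Combining this with the non-negativity of $\effof{\terms}$ and the displayed identity gives $\at{(\embedconst\cdot \ahomsol)}{\termvar} - \paramparikhof{\terms}{\asentformpumpleft.\asentformpumpright}\geq 0$, which is exactly \eqref{Equation:EmbeddingConsequence}. There is essentially no obstacle; the only point meriting a line of care is checking that the relation $\ptof{}{\prodvar, \termvar}$ genuinely holds for both sides — for $\embedconst\cdot \ahomsol$ through its membership in the solution set of $\homchareq{\anngvas}$, and for the derivation underlying $\aprodseqpumpanngvas$ through Lemma~\ref{Lemma:EEKConverse} — which is immediate from the definitions.
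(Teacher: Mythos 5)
Your proof is correct and follows essentially the same route as the paper: express both terms via $\effof{\terms}$ using $\ptof{}{\prodvar, \termvar}$ and Lemma~\ref{Lemma:EEKConverse}, factor out $\effof{\terms}$ by linearity, and conclude from \eqref{Equation:EmbeddingProductions} together with the non-negativity (monotonicity) of $\effof{\terms}$. Your explicit check that the columns of $\effof{\terms}$ are non-negative because production left-hand sides are non-terminals is exactly the justification behind the paper's appeal to monotonicity.
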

\begin{proof}
\begin{align*}
&\ (\embedconst\cdot \ahomsol)[\termvar] - \paramparikhof{\trms}{\asentformpumpleft.\asentformpumpright}\\
=&\ \argument{$\ptof{\anngvas}{\prodvar, \termvar}$ and Lemma~\ref{Lemma:EEKConverse}}\\
&\ \effof{\trms}\cdot (\embedconst\cdot \ahomsol)[\prodvar]  - \effof{\trms}\cdot \paramparikhof{\prods}{\aprodseqpumpanngvas}\\
=&\ \argument{Linearity}\\
&\ \effof{\trms}\cdot ((\embedconst\cdot \ahomsol)[\prodvar]  - \paramparikhof{\prods}{\aprodseqpumpanngvas})\\
\geq &\ \argument{Inequality~\eqref{Equation:EmbeddingProductions} and monotonicity of $\effof{\trms}$}\\
&\ \effof{\trms}\cdot 1\\
\geq &\ \argument{Monotonicity of $\effof{\trms}$}\\
&\ 0\ .\qedhere
\end{align*}
\end{proof}

By~\eqref{Equation:EmbeddingConsequence} and $\computeupdatesof{\anngvas}{\termvar, \updvar}$, we in particular know that $\at{(\embedconst\cdot \ahomsol)}{\updvardef}$ contains precisely one copy of the base vector $\baseeffectdef$ for every instance of~$\anngvasp$ in $\asentformpumpleft.\asentformpumpright$, as claimed in the main text.

Similar to equation \eqref{Equation:UpdatesReach} we also have:
\begin{lemma}
\eqref{Equation:UpdatesPumpDiff} holds. 
\end{lemma}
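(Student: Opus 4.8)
\textbf{Plan for proving \eqref{Equation:UpdatesPumpDiff}.}

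The statement to be shown is
\[
\vaseffof{\upseq.\diffrunleft.\diffrunright.\downseq}\ =\ \updates \cdot \sumconst\cdot \ahomsol[\updvar]\ ,
\]
and the plan is to mirror exactly the bookkeeping that was carried out for the reaching run in the proof of \eqref{Equation:UpdatesReach}, but now applied to the combined pumping-plus-difference derivation $\startnonterm\xrightarrow{\aprodseqpump.\aprodseqdiff}\upseq.\diffrunleft.\startnonterm.\diffrunright.\downseq$. First I would split the terminal sequence underlying this derivation into the part produced directly in $\anngvas$ (the updates and the childNGVAS instances in $\asentformpumpleft.\asentformpumpright$ together with those in $\asentformdiffleft.\asentformdiffright$) and, for each childNGVAS instance, the run that was chosen for it. By Lemma~\ref{Lemma:EEKConverse} applied to $\aprodseqpump.\aprodseqdiff$, whose Parikh image is $\paramparikhof{\prods}{\aprodseqpumpanngvas}+\paramparikhof{\prods}{\aprodseqdiffanngvas}=\sumconst\cdot\ahomsol[\prodvar]$ by the construction of $\aprodseqdiffanngvas$, the number of $\anngvas$-level terminals produced is $\paramparikhof{\terms}{\asentformpumpleft.\asentformpumpright.\asentformdiffleft.\asentformdiffright}=\sumconst\cdot\ahomsol[\termvar]$, using $\ptof{}{\prodvar,\termvar}$.

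Next I would sum the effects. The $\updates$-valued terminals contribute $\updates\cdot\restrictto{(\sumconst\cdot\ahomsol)[\termvar]}{\updates}$. For each childNGVAS $\anngvasp$ with restriction $\baseeffectdef+\periodeffectdef^*$, all instances except the first (in the difference part) used $\baserundef$ from \perfectnessbasenospace, contributing $\baseeffectdef$; the first instance in $\asentformpumpleft.\asentformpumpright$ came with excess period vectors $\periodeffectpumpdef$ already present in $\upseq.\downseq$, and the single instance handled via the induction hypothesis, $\iterrunpdefof{\sumconst}$, has updates $\baseeffectchoicedef' + \periodeffectdef\cdot\sumconst\cdot\periodeffectchoicedef' = \baseeffectdef + \periodeffectdef\cdot[(\embedconst\cdot\ahomsol)[\periodeffectvardef]-\periodeffectpumpdef] + \periodeffectdef\cdot\sumconst\cdot\ahomsol[\periodeffectvardef]$. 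Adding the $\periodeffectpumpdef$ already carried by the pumping run, the total period contribution for $\anngvasp$ becomes $\periodeffectdef\cdot\big[(\embedconst\cdot\ahomsol)[\periodeffectvardef] + \sumconst\cdot\ahomsol[\periodeffectvardef]\big]$; I would then fold this into $\computeupdatesof{\anngvas}{\termvar,\updvar}$ exactly as in \eqref{Equation:UpdatesReach}. The one arithmetic point to check carefully is that the $\embedconst$ occurring here (rather than $\sumconst$) is harmless: since $\sumconst=\embedconst+\enableconst$ and $\ahomsol$ is homogeneous, the extra $\enableconst\cdot\ahomsol[\periodeffectvardef]$ periods are precisely what is needed so that the grand total over all childNGVAS, once combined via the two defining equations of $\computeupdatesof{\anngvas}{-,-}$, equals $\updates\cdot(\sumconst\cdot\ahomsol)[\avar_{\updates}]$.

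The computation then collapses in the same chain of equalities as in \eqref{Equation:UpdatesReach}: substitute the definitions of $\baseeffectchoicedef'$ and $\periodeffectchoicedef'$, use $\computeupdatesof{\anngvas}{\termvar,\updvar}$ for the childNGVAS contributions (upper line), recall that the $\anngvas$-level terminal count is $(\sumconst\cdot\ahomsol)[\termvar]$, and finally apply the lower line of $\computeupdatesof{\anngvas}{\termvar,\updvar}$ to obtain $\updates\cdot(\sumconst\cdot\ahomsol)[\updvar]$. I expect the main obstacle to be purely notational, namely keeping the indexing of "first instance vs.\ remaining instances" consistent between the pumping part $\asentformpumpleft.\asentformpumpright$ and the difference part $\asentformdiffleft.\asentformdiffright$ — in particular making sure each childNGVAS $\anngvasp$ is credited with exactly one induction-hypothesis run overall (in the difference part) while the pumping part contributes only base vectors plus the already-accounted excess periods $\periodeffectpumpdef$. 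Once that bookkeeping is pinned down, no genuinely new idea beyond the proof of \eqref{Equation:UpdatesReach} is required.
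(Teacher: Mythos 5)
Your overall strategy is the right one and matches the paper's: split the terminals into those produced directly by $\anngvas$ and the childNGVAS instances, use Lemma~\ref{Lemma:EEKConverse} with $\paramparikhof{\prods}{\aprodseqpumpanngvas}+\paramparikhof{\prods}{\aprodseqdiffanngvas}=(\sumconst\cdot\ahomsol)[\prodvar]$ and $\ptof{}{\prodvar,\termvar}$ to get the terminal count, and then verify per child $\anngvasp$ that the period-vector count matches $(\sumconst\cdot\ahomsol)[\periodeffectvardef]$. However, the per-child accounting as you wrote it does not close. You invoke the induction hypothesis with exponent $\sumconst$, i.e.\ you use $\iterrunpdefof{\sumconst}$ with updates $\baseeffectchoicedef' + \periodeffectdef\cdot\sumconst\cdot\periodeffectchoicedef'$. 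Summing the period copies then gives $\periodeffectpumpdef$ (from $\aprodseqpumpterms$) plus $(\embedconst\cdot\ahomsol)[\periodeffectvardef]-\periodeffectpumpdef$ (from $\baseeffectchoicedef'$) plus $\sumconst\cdot\ahomsol[\periodeffectvardef]$ (from the iteration), i.e.\ $(\embedconst+\sumconst)\cdot\ahomsol[\periodeffectvardef]$ in total, which overshoots the required $(\sumconst\cdot\ahomsol)[\periodeffectvardef]$ by $\embedconst\cdot\ahomsol[\periodeffectvardef]$. This excess is nonzero in general: $\embedconst\geq 1$ and $\ahomsol[\periodeffectvardef]\geq 1$ because the child periods lie in the support (\perfectnessprodsnospace). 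Your attempted reconciliation --- that ``the extra $\enableconst\cdot\ahomsol[\periodeffectvardef]$ periods are precisely what is needed'' --- is therefore incorrect as stated: the mismatch is an \emph{excess} of $\embedconst\cdot\ahomsol[\periodeffectvardef]$, not a missing $\enableconst\cdot\ahomsol[\periodeffectvardef]$, and no reshuffling through $\computeupdatesof{\anngvas}{\termvar,\updvar}$ can absorb it, since that constraint fixes $(\sumconst\cdot\ahomsol)[\updvardef]$ exactly.

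The repair is exactly what the paper's proof does (and is the reason $\enableconst$ and $\sumconst=\embedconst+\enableconst$ are defined the way they are): the single distinguished instance of $\anngvasp$ in the difference derivation must be the run obtained from the induction hypothesis with exponent $\enableconst$, not $\sumconst$. Then the pumping part together with the base choice $\baseeffectchoicedef'$ contributes $\embedconst\cdot\ahomsol[\periodeffectvardef]$ copies, the iteration contributes $\enableconst\cdot\ahomsol[\periodeffectvardef]$ copies, and the total is $(\embedconst+\enableconst)\cdot\ahomsol[\periodeffectvardef]=(\sumconst\cdot\ahomsol)[\periodeffectvardef]$ as required; the role of $\enableconst=\max_{\anngvasp}\initconstdef'$ is precisely to guarantee that this run exists for every child. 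With that single exponent corrected, the rest of your chain of equalities goes through as in the proof of \eqref{Equation:UpdatesReach}.
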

\begin{proof}
We rely on $\computeupdatesof{\anngvas}{\termvar, \updvar}$ to see that 
\begin{align*}
&(\sumconst\cdot\ahomsol)[\updvar]= 
 \restrictto{(\sumconst\cdot\ahomsol)[\termvar]\ }{\updates}+\sum_{\anngvasp\in\trms}(\sumconst\cdot\ahomsol)[\updvardef].
\end{align*}
We first show that
\begin{align*}
\paramparikhof{\updates}{\asentformpumpleft.\asentformpumpright.\asentformdiffleft.\asentformdiffright}\ = \ \restrictto{(\sumconst\cdot\ahomsol)[\termvar]\ }{\updates}\ .
\end{align*}
The equation says that the number of updates produced directly by $\anngvas$ is as expected by the scaled homogeneous solution. 
With Lemma~\ref{Lemma:EEKConverse} and $\ptof{\anngvas}{\prodvar, \termvar}$, it suffices to realize that 
\begin{align*}
&\ \paramparikhof{\prods}{\aprodseqpumpanngvas}+\paramparikhof{\prods}{\aprodseqdiffanngvas}\\ 
=&\ \argument{Definition $\aprodseqdiffanngvas$}\\
&\ \paramparikhof{\prods}{\aprodseqpumpanngvas}+ (\sumconst\cdot \ahomsol)[\prodvar] - \paramparikhof{\prods}{\aprodseqpumpanngvas}\\
 =&\ (\sumconst\cdot \ahomsol)[\prodvar]\ .\\
\end{align*}
We now consider the childNGVAS $\anngvasp$ with restriction $\restrictionsdef=\baseeffectdef+\periodeffectdef^*$. 
The task is to show that $\aprodseqpumpterms$ and $\aprodseqdiffterms$ together create precisely $(\sumconst\cdot\ahomsol)[\updvardef]$ many updates 
in the runs derived from the $\anngvasp$ instances. 
By $\computeupdatesof{\anngvas}{\termvar, \updvar}$, we have
\begin{align*}
(\sumconst\ahomsol)[&\updvardef]\ =\ \\
&\baseeffectdef (\sumconst\ahomsol)[\termvar[\anngvasp]] - \periodeffectdef (\sumconst\ahomsol)[\periodeffectvardef]
\end{align*}
With the argument from the previous paragraph, we know that $\aprodseqpumpanngvas$ and $\aprodseqdiffanngvas$ together create precisely $(\sumconst\cdot\ahomsol)[\termvar[\anngvasp]]$ many instances of $\anngvasp$. 
Moreover, the run we derive from such an instance using $\aprodseqpumpterms$ and $\aprodseqdiffterms$ contains at least the number of updates prescribed by the base vector $\baseeffectdef$. 
The run may contain further updates that together make up copies of the period vectors. \\\\
It remains to check that the expected number of copies of the period vectors $(\sumconst\cdot\ahomsol)[\periodeffectvardef]$  coincides with the number of copies produced by $\aprodseqpumpterms$ and $\aprodseqdiffterms$. 
In $\aprodseqpumpterms$, we produce $\periodeffectpumpdef$ many copies.
In $\aprodseqdiffterms$, there is a single instance of $\anngvasp$ that produces copies of the period vectors. By the induction hypothesis, it produces $(\embedconst\cdot \ahomsol)[\periodeffectvardef]-\periodeffectpumpdef$ copies with $\baseeffectchoicedef'$ plus an additional $\enableconstdef\cdot \periodeffectchoicedef'$ many copies. 
Then 
\begin{align*}
&\ \periodeffectpumpdef\quad +\quad  (\embedconst\cdot \ahomsol)[\periodeffectvardef]-\periodeffectpumpdef\\
&\quad +\quad \enableconst\cdot \periodeffectchoicedef'\\
 =& \ \argument{Definition $\periodeffectchoicedef'$}\\
 &\ (\embedconst\cdot \ahomsol)[\periodeffectvardef] + (\enableconst\cdot \ahomsol)[\periodeffectvardef]\\
 =& \ \argument{Definition $\sumconst$}\\
 &\  (\sumconst\cdot \ahomsol)[\periodeffectvardef]\ .\qedhere 
\end{align*}
\end{proof}

After these minor lemmas, we finish with the missing proofs of the \emph{pumping} part of the extended paper.

\LemmaLowerBoundNonLinear*

\begin{proof}[Proof of Lemma~\ref{Lemma:LowerBound}]
We first bound the negative effect that~$\arun$ may have on $\acounter$. 
The parts of $\arun$ that stem from complete copies of~$\termvec, \termvec'$  contribute $\maxnegdefcompl=\min\set{0, \at{\vaseffof{\diffrunleft.\diffrunright}}{\acounter}}$ for \(\termvec\) and respectively \(\maxnegdefcompl'=\min\set{0, \at{\vaseffof{\diffrunleft'.\diffrunright'}}{\acounter}}\) for \(\termvec'\) to the negative effect.   
For the incomplete copies of~$\termvec$, we iterate over all sentential forms~$\asentform$ that could result from them and over all runs~$\arun'$ that could be derived from $\asentform$ using $\aprodseqdiffterms$. 
Let \(Pos_{\asentform}\) be the (finite) set of all words \(\asentform \in \Sigma^{\ast}\) which fulfill \(\paramparikhof{\trms}{\asentform} \leq \termvec\) or \(\paramparikhof{\trms}{\asentform} \leq \termvec'\). Let \(Pos_{\arun}\) be the (finite) set of runs \(\arun'\) which can be derived from an \(\alpha \in Pos_{\asentform}\) using only the productions in \(\aprodseqdiffterms\) or \(\aprodseqdiffterms'\). Define
\begin{align*}
\maxnegdefincompl:=\min \{0, \min\setcond{\at{\vaseffof{\arun'}}{\acounter}}{\arun' \in Pos_{\arun}}\}.
\end{align*} 
Then $\at{\vaseffof{\arun}}{\acounter}$ can be lower bounded by
\begin{align*} 
j_1 \cdot \maxnegdefcompl + j_2 \cdot \maxnegdefcompl' + \maxnegdefincompl\cdot \ceilof{1+\log (j_1+j_2)}\ .
\end{align*}
The lower bound on the token growth in $\upseq^{j_1+j_2}.\arun$ is by 
\begin{align*}
&\ \at{\vaseffof{\upseq^{j_1+j_2}.\arun}}{\acounter}\\
\geq &\ j_1+j_2+ \maxnegdefincompl\cdot \ceilof{1+\log (j_1+j_2)}\\
\geq &\ j_1+j_2 - (2\cdot \cardof{\maxnegdefincompl}\ +\ \cardof{\maxnegdefincompl}\cdot \log (j_1+j_2))\\
\geq &\ \frac{1}{2} \cdot (j_1+j_2) \quad\text{for $j_1+j_2\in\N$ large enough}.
\end{align*}
The first inequality is the above lower bound on $\at{\vaseffof{\arun}}{\acounter}$, where we immediately lower bound \( \maxnegdefcompl\) and \( \maxnegdefcompl'\) by \(1\) each, which is valid since $\at{\vaseffof{\upseq.\diffrunleft.\diffrunright}}{\acounter}, \at{\vaseffof{\upseq}}{\acounter}\geq 1$.  
We then have $\ceilof{1+\log (j_1+j_2)}\leq 2+\log (j_1+j_2)$. 
We finally use $a_1+a_2\cdot \log (j_1+j_2)\leq \frac{1}{2}\cdot (j_1+j_2)$ for $a_1, a_2\geq 0$ and $j_1+j_2\in\N$ large enough. 
Since $\maxnegdefincompl$ in the last inequality only depends on $\acounter$, and since $\adimset$ is finite, we can define $\iterlb$ as the maximum over the definitions of ``large enough''.
\end{proof}

\textbf{The other cases}: Case 2: Counter \(\acounter\) is concrete only in the input. Then there is a trick here which we did not mention in the main text: We have to ensure that \(\diffrunleft.\diffrunright\) and respectively \(\diffrunleft'.\diffrunright'\) have a positive effect on \(\acounter\). We know that any homogeneous solution has a positive effect on counter \(\acounter\), since \(\acounter\) is concrete in the input and by perfectness \perfectnesscounters in the support in the output. In particular \(\upseq.\diffrunleft.\diffrunright.\downseq\) has a positive effect on \(\acounter\). Hence if we choose \(\sumconst\) large enough, then the effect of \(\upseq.\downseq\) (which is constant) will be less than the effect of \(\upseq.\diffrunleft.\diffrunright.\downseq\), which equals \(\sumconst \cdot \ahomsol\). In particular, the difference \(\diffrunleft.\diffrunright\) and respectively \(\diffrunleft'.\diffrunright'\) will have a positive effect. 

At that point Lemma \ref{Lemma:LowerBound} applies the same way also to this counter \(\acounter\): Since \(\acounter\) is concrete in the input, it is increased by \(\upseq\), and since we applied the above trick, i.e.\ chose \(\diffrunleft.\diffrunright\) to have a positive effect on \(\acounter\), in particular also \(\upseq.\diffrunleft.\diffrunright\) has a positive effect.

Case 3: Counter \(\acounter\) is concrete only in the output. This case is dual to case 2.

Case 4: Counter \(\acounter\) is neither concrete in the input nor in the output: Then we know that the homogeneous solution \(\ahomsol\) increases the value of \(\acounter\) both in the input and output. Via a similar calculation as in case 2, we have to guarantee that the effect \(\ahomsol\) has on the starting value is more than the (possible) negative effect of \(\upseq\) on \(\acounter\). Moreover, similarly for the effect of \(\ahomsol\) on the target value and the (possible) negative effect of \(\downseq\) on \(\acounter\).

\textbf{Finishing the proof}. After this case distinction, we can finally define \(\initconst\) as the maximum of all the ``large enough'' requirements we have encountered thus far, for example including \(\initconst \geq \sumconst^2\), \(\initconst \geq \iterlb\) from Lemma \ref{Lemma:LowerBound}, etc., where \(\sumconst\) has a ``large enough'' requirement from the cases above etc.

It remains to argue that each run $\iterrun{\aconst}$ with $\aconst\geq \initconst$ has the properties formulated in the iteration lemma: 
it solves reachability and uses a number of updates that is as desired.

We first argue that the run is enabled.
The value of the bounded counters is tracked explicitly by the boundedness information, so there is nothing to do. 
For the unbounded counters, we reason as follows.
For the up-pumping sequence, enabledness holds by \perfectnesspumpingnospace. 
For $\upseq^{\aconst\cdot\iterfull}.\arun_{\aconst\cdot\iterfull, 1}$, we use $\iterfull\geq \iterlb$ and Lemma~\ref{Lemma:LowerBound}. 
More precisely, since we have a non-negative effect on all prefixes, we know that we have met the hurdle, using the remark in the preliminaries.
For the remainder of the computation, a similar argument applies. 

\(\Z\)-Reachability and the effect can be studied on a run that has the same Parikh image, i.e.\ on
\begin{align*}
\iterruntildedef=\upseq^{j_1+j_2}.\diffrunleft^{j_1}.\diffrunleft^{`j_2}.\reachrun.\diffrunright^{`j_2}.\diffrunright^{j_1}.\downseq^{j_1+j_2}\ .
\end{align*}

We argue that we solve reachability. 
Since $\asol'$ is a solution to the characteristic equations and $\ahomsol$ is a homogeneous solution,  also \(\asol''\ =\ \asol'\ +\ \aconst \cdot \ahomsol\) solves the characteristic equations.

Due to $\meof{\anngvas}{\invar, \updvar, \outvar}$, this in particular means $\asol''[\invar]+\updates \cdot \asol''[\updvar]=\asol''[\outvar]$. 
By \eqref{Equation:UpdatesReach}, we have \(\vaseffof{\reachrun}\ = \ \updates \cdot \asol'[\avar_{\updates}]\). 
By \eqref{Equation:UpdatesPumpDiff}, the equality \(\vaseffof{\upseq.\diffrunleft.\diffrunright.\downseq}\tinyspace =\tinyspace
\updates \cdot \sumconst\cdot \ahomsol[\updvar]\) holds. Equivalently $\vaseffof{\upseq.\diffrunleft'.\diffrunright'.\downseq}=\updates \cdot \sumconst'\cdot \ahomsol[\updvar]$ holds. By definition of \(j_1, j_2\) we have \(\aconst=j_1 \sumconst + j_2 \sumconst'\). Hence indeed \(\vaseffof{\iterruntildedef}=\updates \cdot \asol''[\updvar]\), i.e.\ the Parikh vector satisfies the marking equation.

\textbf{Iteration Lemma Linear Case}

With $\mydir\in\set{\myleft, \myright}, \mycenter\in\set{\mycenterleft, \mycenterright}$, one can show
\begin{align*}
\vaseffof{\reachrundir}\ &=\ \updates \cdot (\asol+\maxconst\cdot \ahomsol)[\updvardir]\\
\vaseffof{\reachruncenterdef}\ &=\ \updates \cdot (\asol+(\maxconst+\aconst)\cdot\ahomsol)[\updvarcenter]\\
\vaseffof{\upseq.\diffrunleft.\downseqint}\ &=\ \updates \cdot (\sumconst\cdot \ahomsol)[\updvarleft]\\
\vaseffof{\upseqint.\diffrunright.\downseq}\ &=\ \updates \cdot (\sumconst\cdot \ahomsol)[\updvarright]\ .
\end{align*}

It remains to argue that the rest of the run $\iterrundef$ is enabled, and that we reach the claimed target. To this end, we study the effect that 
the homogeneous solution has on a counter. 
The following applies to all directions $\mydir\in\set{\myleft, \mycenterleft, \mycenterright, \myright}$, because the reachability constraint is the same for all of them. 
It also holds for concatenations of directions, say from $\acontextinleft$ to $\acontextoutcenterleft$, due to the equality constraints between the markings reached in the directions. 
(i) If counter $\acounter$ is concrete in input and output, $\at{\acontextindir}{\acounter}, \at{\acontextoutdir}{\acounter}\in\N$, then the updates $\updvardir$ given by the homogeneous solution have effect zero, $\at{(\effof{\updates}\cdot \updvardir)}{\acounter}=0$.
This holds by the definition of $\zeroof{\acontextindir}$ and $\zeroof{\acontextoutdir}$. 
(ii) If the counter is concrete in the input but $\omega$ in the output, the homogeneous solution has a positive effect. 
This holds because the counter is in the support.
(iii) Vice versa, if the counter is $\omega$ in the input but concrete in the output, then the homogeneous solution has a negative effect on the counter. 

It is this argument that made us split the reachability requirement for the center into the left and the right side. 
With the splitting, we could add the variable in the middle to the support, and rely on the pumping in the left child.

Enabledness for $\leftrundef$ is by a pumping argument that already Lambert used for VAS reachability. 
We obtained the center runs with an invokation of the induction hypothesis, and so can rely on the reachability $\anngvaspcenterleft.\acontextin\fires{\reachruncenterleftdef}\anngvaspcenterleft.\acontextout$ respectively 
$\anngvaspcenterright.\acontextin\fires{\reachruncenterrightdef}\anngvaspcenterright.\acontextout$. 
By definition, $\acontextincenterleft=\anngvaspcenterleft.\acontextin$ and similar for the other markings used in the characteristic equations. 
The pumping behavior of the homogeneous solution discussed in the previous paragraph then allows us to pump counters that are $\omega$ in $\acontextincenterleft$ resp. $\acontextincenterright$, and so eventually enable the center runs.
We can concatenate them thanks to $\anngvaspcenterleft.\acontextout = \anngvaspcenterright.\acontextin$ in consistency. 
For the run $\rightrundef$, we again rely on Lambert.

\section{Details for Section~\ref{Section:Decomposition}}\label{Section:DecompositionLevel3}

\newcommand{\makescc}{\mathsf{mkscc}}
\newcommand{\makesccof}[1]{\makescc(#1)}
\newcommand{\lindecomp}{\mathsf{lindecomp}}
\newcommand{\lindecompof}[1]{\lindecomp(#1)}


\paragraph*{Rigidness related facts}
In \Cref{Section:Decomposition}, we use that if $\acounterset$ is rigid for non-linear $\anngvas$, then there are a pair of markings $\amarking_{\asymbol, in}, \amarking_{\asymbol, out}\in\N^{\acounterset}\times\Nomega^{[d]\setminus \acounterset}$ for each symbol $\asymbol\in\anngvas.(\nonterms\cup\trms)$, such that in any run $\arun\in\runsof{\anngvas}$, the part of the run $\arun_{\asymbol}$ derived from $\asymbol$, starts from a counter valuation $\amarkingp_{in, \asymbol}\sqsubseteq\amarking_{in, \asymbol}$, and ends at a counter valuation $\amarkingp_{out, \asymbol}\sqsubseteq\amarking_{out, \asymbol}$.
Here, we show the validity of this statement.

First, we prove that if a counter is fixed on a side in a non-linear NGVAS, then all symbols have the same effect wrt. this counter.
We write $\ceffof{\awordpp}[i]$ to denote $\setcond{\amarkingpp[i]}{\amarkingpp\in\ceffof{\awordpp}}$.
\begin{lemma}\label{Lemma:FixedNLEffects}
    Let $\anngvas$ be a non-linear NGVAS, and $i\in[d]$ fixed on a side.
    Then for all $\asymbol\in\nonterms\cup\trms$, there is a $\amarkingp\in\Z$ with $\ceffof{\aword}[i]=\amarkingp$ for all $\aword\in\trms^{*}$ with $\asymbol\to^{*}\aword$.
\end{lemma}

\begin{proof}
    Suppose that this is not the case.
    Let $i\in[d]$ be wlog. fixed on the left. 
    Then, there is a symbol $\asymbol\in\nonterms\cup\trms$ with $\aword_{0}, \aword_{1}\in\trms^{*}$, $\asymbol\to^{*}\aword_{0}$, $\asymbol\to^{*}\aword_{1}$, $\amarkingp_{0}\in\ceffof{\aword_{0}}$, $\amarkingp_{1}\in\ceffof{\aword_{1}}$ and $\amarkingp_{0}[i]\neq\amarkingp_{1}[i]$.
    Since all symbols are reachable from all non-terminals in an NGVAS, there is a derivation $\anonterm\to^{*}\awordp_{\anonterm, \asymbolp}.\asymbolp.\awordpp_{\anonterm, \asymbolp}$ for all $\anonterm\in\nonterms$ and $\asymbolp\in\nonterms\cup\trms$ with $\awordp_{\anonterm, \asymbolp}, \awordpp_{\anonterm, \asymbolp}\in\trms^{*}$.
    Since $\anngvas$ is non-linear, it has a rule of the form $\anonterm\to\anontermp.\anontermpp$.
    First, we apply the following derivations 
    \begin{align*}
        \startnonterm\to^{*}\awordp_{\startnonterm, \anonterm}.\anonterm.\awordpp_{\startnonterm, \anonterm}\to^{*}
        \awordp_{\startnonterm, \anonterm}.\anontermp.\anontermpp.\awordpp_{\startnonterm, \anonterm}\to^{*}\\
        \awordp_{\startnonterm, \anonterm}.(\awordp_{\anontermp, \asymbol}.\asymbol.\awordpp_{\anontermp, \asymbol}).(\awordp_{\anontermpp, \startnonterm}.\startnonterm.\awordpp_{\anontermpp, \startnonterm}).\awordpp_{\startnonterm, \anontermp}.
    \end{align*}
    Using $\asymbol\to^{*}\aword_{0}$ and $\asymbol\to^{*}\aword_{1}$, we get the cycles 
    \begin{align*}
        \acyc_{0}=\startnonterm\to^{*}\awordpp_{0}.\aword_{0}.\awordpp_{1}.\startnonterm.\awordpp_{2}\ \text{ and }\ 
        \acyc_{1}=\startnonterm\to^{*}\awordpp_{0}.\aword_{1}.\awordpp_{1}.\startnonterm.\awordpp_{2}.
    \end{align*}
    for $\awordpp_{0}, \awordpp_{1}, \awordpp_{2}\in\trms^{*}$ with $\awordpp_{0}=\awordp_{\startnonterm, \anonterm}.\awordp_{\anontermp, \asymbol}$, $\awordpp_{1}=\awordpp_{\anontermp, \asymbol}.\awordp_{\anontermpp, \startnonterm}$, and $\awordpp_{2}=\awordpp_{\anontermpp, \startnonterm}.\awordpp_{\startnonterm, \anontermp}$.
    Since the counter is fixed on the left, $\ceffof{\awordpp_{0}.\aword_{0}.\awordpp_{1}}=\set{0}$.
    Thus, there are $\amarkingpp_{0}\in\ceffof{\awordpp_{0}}$ and $\amarkingpp_{1}\in\ceffof{\awordpp_{1}}$ with $\amarkingpp_{0}[i]+\amarkingp_{0}[i]+\amarkingpp_{1}[i]=0$.
    This means that $\amarkingpp_{0}[i]+\amarkingp_{1}[i]+\amarkingpp_{1}[i]\neq 0$ since $\amarkingp_{1}[i]\neq 0$.
    But $\amarkingpp_{0}[i]+\amarkingp_{1}[i]+\amarkingpp_{1}[i]\in\ceffof{\awordpp_{0}.\aword_{1}.\awordpp_{1}}$.
    So, $\acyc_{1}$ has a non-zero effect on counter $i$.
\end{proof} 

Now, we show that if a counter is fixed on the left (right), then for each symbol, the left (right) effects of all derivations that lead to it are the same.
\begin{lemma}\label{Lemma:FixedEffects}
    Let $\anngvas$ be an NGVAS, and let $i\in [d]$ be fixed on the left (right).
    Then, for all $\anonterm\in\anngvas.(\nonterms\cup\trms)$, there is a $\amarkingp\in\Z$ such that for all $\startnonterm\to^{*}\aword_{\lefttag}.\anonterm.\aword_{\righttag}$ that do not derive center-children, we have $\ceffof{\aword_{\lefttag}}[i]=\set{\amarkingp}$ ($\ceffof{\aword_{\righttag}}[i]=\set{\amarkingp}$).
\end{lemma}

\begin{proof}[Proof Sketch]
    The proof is similar to the proof of \Cref{Lemma:FixedNLEffects}.
    The negation of the lemma yields two derivations with distinct effects on $i$.
    Using the derivations $\anonterm\to^{*}\awordp_{\anonterm, \asymbolp}.\asymbolp.\awordpp_{\anonterm, \asymbolp}$ that go from any $\anonterm\in\nonterms$ to any symbol $\asymbolp\in\trms$, we construct two cycles with distinct effects on $i$.
    Since at most one of them can be non-zero, this contradicts the fixedness of $i$.
\end{proof}

With these results, the validity of our statement becomes clear.
Let $i$ be rigid on the left, the argument for the right-case is similar.
If a counter $i$ is rigid on the left, then it is fixed on the left with $\acontextin[i]\in\N$.
By \Cref{Lemma:FixedEffects}, there is a fixed effect such that all derivations that lead to $\asymbol$ from $\startnonterm$ have this effect on the left side.
Since $\ceffof{-}$ overapproximates the effects of childNGVAS runs, and $\acontextin[i]$ is fixed, all runs reach the same value at the ``enterance'' of $\asymbol$. 
In the non-linear case, we can determine a value for the right-side as well with \Cref{Lemma:FixedNLEffects}.

\paragraph*{Details of $\basisfire$}

We discuss the $\basisfire$ procedure in detail.
Its purpose is to ensure that the base effect of an NGVAS is enabled.
To do this, it repeatedly calls $\perfect$ with various applications of period vectors in $\restrictions$ to determine the minimal applications that lead to enabledness.
This requires that $\perfect$ is reliable up to and for $\rankof{\anngvas}$, but since $\clean$ uses this procedure only on recurring children, termination is still ensured.
\begin{lemma}\label{Lemma:FiresSpec}
    Let $\anngvas$ be a perfect NGVAS, and let $\perfect$ be reliable up to and for $\rankof{\anngvas}$.
    Then $\basisfireof{\anngvas}$ is a perfect deconstruction of $\anngvas$, with $\rankof{\anngvas'}\leq\rankof{\anngvas}$ for all $\anngvas'\in\basisfireof{\anngvas}$, such that all $\anngvas'\in\basisfireof{\anngvas}$ have their base effects enabled. 
\end{lemma}

We develop our notation.
For $z\in\N^{\periodeffect}$ and $z_{\omega}\in\Nomega^{d}$, we let $\anngvas\minrestrto{z}$, and $\anngvas\restrto{\sqsubseteq z_{\omega}}$ be the NGVAS identical to $\anngvas$ except at the restriction, where instead of $\anngvas.\restrictions=(\baseeffect, \periodeffect)$ we have $\anngvas\minrestrto{z}=(\baseeffect + \periodeffect\cdot z, \periodeffect)$ and $\anngvas\restrto{\sqsubseteq z_{\omega}}=(\baseeffect + \sum_{\aperiod\in\periodeffect\setminus\omegaof{z_{\omega}}} z_{\omega}[\aperiod]\cdot \aperiod, \omegaof{z})$.
Intuitively, the former prescribes a minimal amount of applications for each period vector, while the latter prescribes an exact amount of applications for some period vectors. 
The call $\basisfire$ relies on a subcall $\basisfinder:\Nomega^{\periodeffect}\to\powof{\N^{\periodeffect}}$.
Intuitively, $\basisfinderof{\amarking}$ for some $\amarking\in\Nomega^{\periodeffect}$ returns all minimal applications of $\periodeffect$ more specialized than $\amarking$ that correspond to effects of runs in $\anngvas$. 
Using $\basisfinder$, $\basisfire$ returns 
$$\basisfireof{\anngvas}=\setcond{\anngvas\minrestrto{z}}{z\in\basisfinderof{\omega^{\periodeffect}}}.$$  

The call $\basisfinder$ is a recursive function that proceeds as follows.
Let $z\in\Nomega^{\periodeffect}$ be the input of $\basisfinder$.
If $\omegaof{z}=\emptyset$, then $\basisfinder$ checks whether there is a run with the effect described by $z$, by constructing $\perfectof{\cleanof{\anngvas\restrto{\sqsubseteq z}}}$.
If $\perfectof{\cleanof{\anngvas\restrto{\sqsubseteq z}}}\neq\emptyset$, a run has been found and $\basisfinderof{z}$ returns $\set{z}$.
Otherwise, $\basisfinderof{z}$ returns $\emptyset$.
If $\omegaof{z}\neq\emptyset$, $\basisfinder$ constructs $\perfectof{\cleanof{\anngvas\restrto{\sqsubseteq z}}}$ once more.
If $\perfectof{\cleanof{\anngvas\restrto{\sqsubseteq z}}}=\emptyset$, then $\basisfinderof{z}$ returns $\emptyset$.
Else, as a consequence of \Cref{TheoremIterationLemmaNonLinearOverview}, we observe that for $\anngvasp\in
\perfectof{\cleanof{\anngvas\restrto{\sqsubseteq z}}}$, we have $\runsof{\anngvasp}\neq\emptyset$, and we can construct one such $\arun\in\updateseqof{\anngvasp}$.
Then, we construct $z_{\mathsf{conc}}\in\N^{\periodeffect}$ with $z_{\mathsf{conc}}\sqsubseteq z$, and $\updates\cdot\paramparikhof{\updates}{\arun}=\periodeffect\cdot z_{\mathsf{conc}}+\baseeffect$.
Note that such a $z_{\mathsf{conc}}$ must exist.
In this case, the call $\basisfinderof{z}$ returns the union of $\set{z_{\mathsf{conc}}}$ and the sets $\basisfinderof{\setter{z}{i}{l}}$ for all $i\in\omegaof{z}$ and $l< z_{\mathsf{conc}}[i]$.

\begin{proof}[Proof of \Cref{Lemma:FiresSpec}]
    Let $\anngvas$ be perfect.
    As we alluded to in the construction, first assume that $\basisfinderof{\omega^{\periodeffect}}$ terminates, and fulfills the following properties.
    For all $z\in\basisfinderof{\omega^{\periodeffect}}$, (a) there is a sequence $\arun\in\updateseqof{\anngvas}$ with $\updates\cdot\paramparikhof{\updates}{\arun}=\periodeffect\cdot z$, and that (b) for any sequence $\arun\in\updateseqof{\anngvas}$, there is a $z\in\upclsof{\basisfinderof{\omega^{\periodeffect}}}$ where $\updates\cdot\paramparikhof{\updates}{\arun}=\periodeffect\cdot z$.
    In this case, it is clear that $\anngvas\minrestrto{z}$ have enabled base effects for all $z\in\basisfinderof{\omega^{\periodeffect}}$.
    Furthermore, the effect of each run $(\amarking, \arun, \amarkingp)\in\runsof{\anngvas}$ is also captured in $\anngvas\minrestrto{z}.\restrictions$ for some $z\in\basisfinderof{\omega^{\periodeffect}}$, which implies $(\amarking, \arun, \amarkingp)\in\runsof{\anngvas\minrestrto{z}}$ for this $z$.
    Then, $\runsof{\anngvas}=\runsof{\basisfireof{\anngvas}}$.
    Clearly, the construction only modifies the base effect of $\anngvas$, which means that the NGVAS $\anngvas'\in\basisfireof{\anngvas}$ are perfect, and have the same rank as $\anngvas$.
    This shows \Cref{Lemma:FiresSpec}.
    It remains to show our assumption.

    For $z\in\Nomega^{\periodeffect}$, we proceed by an induction on $\cardof{\omegaof{z}}$ to show three claims (a), (b), and (c).
    Namely, we show that (a) for all $z'\in\basisfinderof{z}$, we have $z'\sqsubseteq z$, and a sequence $\arun\in\updateseqof{\anngvas}$ with $\updates\cdot\paramparikhof{\updates}{\arun}=\periodeffect\cdot z'$, (b) for all sequences $\arun\in\updateseqof{\anngvas}$, for which there is a $z'\sqsubseteq z$ with $\updates\cdot\paramparikhof{\updates}{\arun}=\periodeffect\cdot z'$, we have $z'\in\upclsof{\basisfinderof{z}}$, and (c) $\basisfinderof{z}$ terminates. 
    Let $z\in\Nomega^{\periodeffect}$ and assume that $\perfect$ is reliable up to $\rankof{\anngvas}$, and for $\rankof{\anngvas}$.
    
    Before moving on to proving (a), (b) and (c), we note that for any $z'\in\Nomega^{\periodeffect}$, $\rankof{\anngvas}=\rankof{\anngvas\restrto{\sqsubseteq z'}}$ holds by the definition of the rank.
    By the reliability of $\perfect$, the construction $\perfectof{\cleanof{\anngvas\restrto{\sqsubseteq z'}}}$ terminates with the correct return value for all $z'\in\Nomega^{\periodeffect}$.
    Namely, we know that $\perfectof{\cleanof{\anngvas\restrto{\sqsubseteq z'}}}$ is a perfect deconstruction of $\anngvas\restrto{\sqsubseteq z'}$.

    We proceed with the proof of the base case, $\omegaof{z}=\emptyset$.
    For (c), we observe that $\perfectof{\cleanof{\anngvas\restrto{\sqsubseteq z}}}$, and thus $\basisfinderof{z}$ both terminate.
    Towards (a) and (b) consider that $\updateseqof{\anngvas\restrto{\sqsubseteq z}}$ consists exactly of those sequences in $\updateseqof{\anngvas}$, whose effect is comprised of adding $\anngvas.\baseeffect$ once, $\aperiod\in\anngvas.\periodeffect$ exactly $z[\aperiod]$ times if $z[\aperiod]=\omega$, and an arbitrary amount of the remaining period vectors.
    Since $\omegaof{z}=\emptyset$, this constrains the applications of all period vectors.
    We return $\set{z}$ if $\perfectof{\cleanof{\anngvas\restrto{\sqsubseteq z}}}\neq\emptyset$, and $\emptyset$ else.
    Consider the former case.
    Because $\perfectof{\cleanof{\anngvas\restrto{\sqsubseteq z}}}$ is a perfect deconstruction of $\anngvas\restrto{\sqsubseteq z}$, its non-emptiness is a witness for a run with the correct effect $\periodeffect\cdot z$.
    Since $z$ is the only vector with $\sqsubseteq z$, (a) and (b) both hold.
    In the latter case, (a) holds since $\basisfinderof{z}=\emptyset$, and (b) holds since $\updateseqof{\anngvas\restrto{\sqsubseteq z}}=\emptyset$, which implies that there is no such sequence as given in the premise of (b).
    
    We move on to the inductive case.
    For (c), observe that the only procedures called by $\basisfinderof{z}$ are $\perfectof{\cleanof{\anngvas\restrto{\sqsubseteq z}}}$, and $\basisfinderof{z'}$ for some $z'\in\Nomega^{\periodeffect}$, where $\cardof{\omegaof{z'}}<\cardof{\omegaof{z}}$.
    We have already argued that $\perfectof{\cleanof{\anngvas\restrto{\sqsubseteq z}}}$ terminates correctly.
    We use the inner induction hypothesis to observe that $\basisfinderof{z'}$ terminates for all $z'\in\Nomega^{\periodeffect}$ with $\cardof{\omegaof{z'}}<\cardof{\omegaof{z}}$.
    This concludes the proof of (c).
    For the proof of (a) and (b), we make a case distinction.
    First, consider the case that $\perfectof{\cleanof{\anngvas\restrto{\sqsubseteq z}}}=\emptyset$.
    In this case (a) is trivially fulfilled since $\basisfinderof{z}=\emptyset$ as well, and (b) is trivially fulfilled since $\perfectof{\cleanof{\anngvas\restrto{z}}}=\emptyset$ implies $\updateseqof{\anngvas\restrto{\sqsubseteq z}}=\emptyset$.
    Now, assume $\perfectof{\cleanof{\anngvas\restrto{z}}}\neq\emptyset$.
    We show (a).
    Let $z'\in\basisfinderof{z}$.
    By construction, we have one of two cases.
    In the first case, we have constructed $z'\in\N^{\periodeffect}$ by applying \Cref{TheoremIterationLemmaNonLinearOverview} to some $\anngvasp'\in\perfectof{\cleanof{\anngvas\restrto{\sqsubseteq z}}}$, and constructing a sequence $\arun\in\updateseqof{\anngvasp'}$.
    This implies that $z' \sqsubseteq z$ and $\updates\cdot \paramparikhof{\updates}{\arun}=\periodeffect\cdot z'$.
    This shows (a).
    In the second case, $z'\in\basisfinderof{z''}$ for some $z''\in\Nomega^{\periodeffect}$ with $z''\sqsubseteq z$, where $\cardof{\omegaof{z''}}<\cardof{\omegaof{z'}}$.
    Here, the inner induction hypothesis applies to complete the proof of (a).
    Now, we show (b).
    Let $\arun\in\updateseqof{\anngvas}$, and let $z'\in\N^{\periodeffect}$ with $z'\sqsubseteq z$, where $\periodeffect\cdot z'=\updates\cdot\paramparikhof{\updates}{\arun}$.
    Further let $z_{\mathsf{conc}}\in\N^{\periodeffect}$ be the vector with the same name in the construction of $\basisfinderof{z}$. 
    We know that $z_{\mathsf{conc}}\sqsubseteq z$ holds, and by \Cref{TheoremIterationLemmaNonLinearOverview}, we know that there is a sequence $\arun\in\updateseqof{\anngvas}$, with $\periodeffect\cdot z_{\mathsf{conc}}=\updates\cdot\paramparikhof{\updates}{\arun}$.
    We either have $z_{\mathsf{conc}}\leq z'$, or there is a $\aperiod\in\periodeffect$, where $z_{\mathsf{conc}}[\aperiod]>z'[\aperiod]$.
    In the former case, it already holds that $z'\in\upclsof{\set{z_{\mathsf{conc}}}}\subseteq\upclsof{\basisfinderof{z}}$.
    In the latter case, we have $z'\sqsubseteq {\setter{z}{\aperiod}{z'[\aperiod]}}$, and thus $z'\in\upclsof{\basisfinderof{{\setter{z}{\aperiod}{z'[\aperiod]}}}}$ by the innner induction hypothesis.
    Since $z_{\mathsf{conc}}[\aperiod]<z'[\aperiod]$, we have $\upclsof{\basisfinderof{{\setter{z}{\aperiod}{z''[\aperiod]}}}}\subseteq\upclsof{\basisfinderof{z}}$.
    This concludes the proof.
\end{proof}

\newcommand{\asolp}{z}

\newcommand{\awtsymbolp}{\asymbolp_{cg}}
\newcommand{\awtstartnonterm}{S_{\mathsf{cg}}}

\section{Details for Pumpability Decompositions (from Sections~\ref{Section:OutlinePumping} and ~\ref{Section:PumpingMP})}\label{Section:WitnessingGrammarsPlus}

We proceed with a few definitions that we use throughout this section.

\subsection{Definitions}
\subparagraph*{Marked Parse Trees.}
A marked parse tree $\atree$ of $\anngvas$, is a $\Nomega^{d}\times(\nonterms\cup\trms)\times\Nomega^{d}$-labeled tree, that is a parse in the grammar of $\anngvas$ with input and output markings at each node.
We additionally require that the values are continuous between siblings.
Formally, we require that for all $\anode\in\atree$, (i) $\nodemarkingof{\anode}\in\Nomega^{d}\times\trms\times\Nomega^{d}$ holds if and only if $\anode$ is a leaf, and (ii) if $\nodemarkingof{\anode}\in\Nomega^{d}\times\nonterms\times\Nomega^{d}$, then there is a rule $\anonterm\to\asymbol.\asymbolp$, and $\amarking, \amarkingp, \amarkingpp\in\Nomega^{d}$ with $\nodemarkingof{\anode_{\lefttag}}=(\amarking, \asymbol,\amarkingp)$ and $\nodemarkingof{\anode_{\righttag}}=(\amarkingp, \asymbolp, \amarkingpp)$, where $\anode$ has exactly two children $\anode_{\lefttag}$, and $\anode_{\righttag}$, which are the left- resp- right-children of $\anode$.
The first condition ensures that the parse trees are unrolled up to the childNGVAS, and the second condition ensures that the tree follows the production rules.
For notational convenience, we refer to the individual components of the label of $\anode\in\atree$ by $\nodemarkingof{\anode}=(\anode.\inlabel, \anode.\symlabel, \anode.\outlabel)$.
We also write $\atree.\inlabel$, $\atree.\symlabel$, and $\atree.\outlabel$ to denote $\anode.\inlabel$, $\anode.\symlabel$, and $\anode.\outlabel$ where $\anode$ is the root node of $\atree$.
Note that the definition of a marked parse tree does not impose any restrictions on the markings.
This is because different restrictions will be useful in different contexts.

\paragraph*{Reachability Trees.} A reachability tree is a tree $\atree$ is a marked parse tree that captures the derivation that witnesses a given run.
On top of being a marked parse tree, we require markings from $\N^{d}$, and the soundness of the labels wrt. runs.
Formally, $\atree$ is a reachability tree if it is a parse tree, and for all $\anode\in\atree$,  we have (i) $\anode.\inlabel,\anode.\outlabel\in\N^{d}$, (ii) if $\anode.\symlabel\in\trms$, then there is a $\arun\in\updates^{*}$ with $(\atree.\inlabel, \arun, \atree.\outlabel)\in\runsof{\anode.\symlabel}$, and (iii) if $\anode.\symlabel\in\nonterms$, then for left- and right-children of $\anode_{\lefttag}$ and $\anode_{\righttag}$, we have $\anode.\inlabel=\anode_{\lefttag}.\inlabel$ and $\anode.\outlabel=\anode_{\righttag}.\outlabel$.
The first rule ensures the concreteness of the markings, second soundness wrt. childNGVAS, and third continuity between siblings.
 
We denote the set of reachability trees of $\anngvas$ via $\reachtreesof{\anngvas}$.
We write $\reachtreesof{\anngvas, (\amarking, \asymbol, \amarkingp)}$ for $(\amarking, \asymbol, \amarkingp)\in\Nomega^{d}\times(\nonterms\cup\trms)\times\Nomega^{d}$ to denote the set of trees $\atree\in\reachtreesof{\anngvas}$ with $\atree.\inlabel\sqsubseteq\amarking$, $\atree.\symlabel=\asymbol$, and $\atree.\outlabel\sqsubseteq\amarkingp$.
The \emph{runs} $\runsof{\atree}$ of a reachability tree $\atree\in\reachtreesof{\anngvas}$, are the runs whose derivation is captured by $\atree$.
Formally, we let $\runsof{\atree}=\setcond{\runsof{\aword}}{\aword\in\yieldof{\atree}}$ be the runs that are captured by the yield $\yieldof{\atree}\in(\Nomega^{d}\times\trms\times\Nomega^{d})^{*}$ where
$$\runsof{\amarking, \anngvasp, \amarkingp}=\setcond{(\amarking, \arun, \amarkingp)\in\therunset}{(\amarking, \arun, \amarkingp)\in\runsof{\anngvasp}}$$
and the inductive case $\runsof{\aword.\awordp}$ for $\aword, \awordp\in(\Nomega^{d}\times\trms\times\Nomega^{d})^{*}$ is defined the same as in the case of NGVAS, by merging the runs in $\runsof{\aword}$ and $\runsof{\awordp}$ at input resp. output markings.

\paragraph*{Maximal Inputs and Outputs.}
We recall the definitions of the two functions $\postfuncN{\anngvas}, \prefuncN{\anngvas}:\Nomega^{d}\times(\nonterms\cup\trms)\to\powof{\Nomega^{d}}$ from the main paper.
As we discussed, they extract the maximal output (resp. input) values that runs derivable from the input symbol can reach, given an input (resp. output) value. 
\begin{align*}
     \postfuncNof{\anngvas}{\amarking, \asymbol} &=\iddecompof{\downclsof{\setcompact{\amarkingppp\in\N^{d}}{(\amarkingpp, \arun, \amarkingppp)\in\runsof{\asymbol},\;\amarkingpp\sqsubseteq\amarking}}}\\
        \prefuncNof{\anngvas}{\amarkingp, \asymbol}&=\iddecompof{\downclsof{\setcompact{\amarkingpp\in\N^{d}}{(\amarkingpp, \arun, \amarkingppp)\in\runsof{\asymbol},\;\amarkingppp\sqsubseteq\amarkingp}}}.
    \end{align*}
A simple argument shows that $\postfuncN{\anngvas}$ and $\prefuncN{\anngvas}$ are computable when restricted to inputs from $\Nomega^{d}\times\trms$, given reliability assumptions on $\perfect$.
This is a detailed version of the argument for \Cref{Lemma:TermsPostPreComputable} from \Cref{Section:CoverabilityGrammarMP}.
\LemmaTermsPostPreComputableMainPaper*
\begin{proof}
    We only argue for $\postfuncN{\anngvas}$, since the argument for $\prefuncN{\anngvas}$ is similar.
    Let $(\amarking, \anngvasp)\in\Nomega^{d}\times\rectrms$ such that $\unconstrained\subseteq\omegaof{\amarking}$.
    If $\amarking\not\compwith\anngvasp.\acontextin$, then it is clear that there is no run $(\amarking', \arun, \amarkingp')\in\runsof{\anngvasp}$ with $\amarking'\sqsubseteq\amarking$, which implies $\postfuncNof{\anngvas}{\amarking, \anngvasp}=\emptyset$.
    Let $\amarking\compwith\anngvasp.\acontextin$.
    To compute $\postfuncNof{\anngvas}{\amarking, \anngvasp}$, we construct the NGVAS $\anngvasp_{\amarking}$, which is identical to $\anngvasp$, except at the context-information, where we have $\anngvasp_{\amarking}.\acontextin=\amarking\sqcap\anngvasp.\acontextin$ and $\anngvasp_{\amarking}.\acontextout=\anngvasp.\acontextout$.
    Since the children of $\anngvasp$ and $\anngvasp_{\amarking}$ are the same, and $\anngvasp$ is perfect since $\anngvas$ is clean, $\anngvasp_{\amarking}$ has \perfectnesschildren and \perfectnessbasenospace.
    We also have $\srankof{\anngvasp_{\amarking}}=\srankof{\anngvasp}<\srankof{\anngvas}$ because $\anngvasp$ is a child of $\anngvas$ in $\rectrms$.
    Since $\unconstrained\subseteq\omegaof{\amarking}$, $\recrankof{\anngvasp_{\amarking}}<\recrankof{\anngvas}$ also holds.
    Then, $\perfect$ is reliable for and up to $\anngvasp_{\amarking}$.  
    Bringing these together, we know that $\cleanof{\anngvasp_{\amarking}}$ returns a clean deconstruction that is not increasing in rank.
    Then, $\perfect$ is reliable for each $\anngvasp'\in\cleanof{\anngvasp_{\amarking}}$.
    We compute $\perfectof{\cleanof{\anngvasp_{\amarking}}}$, and return $\downclsof{\setcond{\anngvasp'.\acontextout}{\anngvasp'\in\cleanof{\anngvasp_{\amarking}}}}$.
    We argue that this is the desired return value.
    First, note that $\runsof{\anngvasp_{\amarking}}=\setcond{(\amarking', \arun', \amarkingp')}{\amarking'\sqsubseteq\amarking,\; (\amarking', \arun', \amarkingp')\in\runsof{\anngvasp}}$ holds.
    Since deconstruction does not lose any runs, for each $(\amarking', \arun, \amarkingp')\in\runsof{\anngvasp}$ with $\amarking'\sqsubseteq\amarking$, there is a $\anngvasp'\in\perfectof{\cleanof{\anngvasp_{\amarking}}}$ with $\amarkingp'\sqsubseteq\anngvasp'.\acontextout$.
    By \Cref{TheoremIterationLemmaNonLinearOverview}, and the perfectness condition \perfectnesscountersnospace, we know that for each $\anngvasp'\in\perfectof{\cleanof{\anngvasp_{\amarking}}}$, there is a sequence of runs $[(\amarking'_{i}, \arun_{i}, \amarkingp'_{i})]_{i\in\N}\in\runsof{\anngvasp'}^{\omega}\subseteq\runsof{\anngvasp}^{\omega}$, where $\amarkingp'_{i}\sqsubseteq\anngvasp'.\acontextout$ for all $i\in\N$, and $(\amarkingp'_{i}[j])_{i\in\N}\in\N^{d}$ strictly increasing for all $j\in[1,d]$ with $\anngvasp'.\acontextout[j]=\omega$.
    This concludes the proof.
\end{proof}

The decompositions for establishing \perfectnesspumping for the non-linear NGVAS, and \perfectnesspumping resp. \perfectnesspumpingint for linear NGVAS differ significantly.
The linear case consists of an adaptation of classical arguments (the Karp-Miller graph, etc.) to this setting.
In constrast, the non-linear case needs novel techniques.

We first present the Karp-Miller construction for NGVAS, and argue that it allows us to decide whether \perfectnesspumping holds, if $\postfuncN{\anngvas}$ and $\prefuncN{\anngvas}$ can be computed.
Then, we proceed with the linear case, and argue that this construction already gives us what we need for computing a decomposition when \perfectnesspumping resp. \perfectnesspumpingint do not hold for a linear NGVAS.
Finally, we handle the decomposition for the non-linear case.
This is the extention of the argument from \Cref{Section:KarpMillerMP}.

\subsection{The Karp-Miller Graph (Details from \Cref{Section:OutlinePumpingDec})}
We present the formal construction of the Karp-Miller graph, adapted to our setting, which we discussed in \Cref{Section:OutlinePumping} and \Cref{Section:PumpingMP}.
\subparagraph*{Construction.}
The (adapted) Karp-Miller graph $\extkmgraph=(\extkmnodes, \extkmedges)$ is a graph, whose nodes are symbols with input and output markings, and a history component $\extkmnodes\subseteq\Nomega^{d}\times(\nonterms\cup\trms)\times\Nomega^{d}\times\N$.
The history component $\N$ is used to distinguish nodes with different histories, same as in the classical Karp-Miller graph.
We ommit this component in the construction.
Note that we make statements about the properties of this construction wrt. to the variants of $\anngvas$, but here, we state the construction wrt. $\anngvas$ to reduce clutter.
The construction of the Karp-Miller graph starts with nodes and edges $\extkmnodes=\set{(\acontextin, \startnonterm, \acontextout)}$, $\extkmedges=\emptyset$, and iteratively extends it until a fixed point is reached.
At each iteration, the construction obtains a previously unexplored node $(\amarking, \anonterm, \amarkingp)$, and checks whether it has a predecessor $(\amarking', \anonterm, \amarkingp')$ such that $(\amarking', \amarkingp')<(\amarking, \amarkingp)$ holds (in the product order).
If such a pumping situation is present, then it adds a fresh node $(\amarking_{\omega}, \anonterm, \amarkingp_{\omega})$, and an edge $(\amarking, \anonterm, \amarkingp)\to(\amarking_{\omega}, \anonterm, \amarkingp_{\omega})$ to $\extkmedges$ that represents the acceleration, where $\amarking_{\omega}, \amarkingp_{\omega}\in\Nomega^{d}$ are chosen such that $(\amarking_{\omega}, \amarkingp_{\omega})[i]=\omega$ if $(\amarking', \amarkingp')[i]<(\amarking, \amarkingp)[i]$ and $(\amarking_{\omega}, \amarkingp_{\omega})[i]=(\amarking, \amarkingp)[i]$ else for all $i\in [1,d]$.
If it is not present, then it simulates a derivation as follows.
Intuitively, the Karp-Miller construction only tracks one branch of the derivation tree at a time, and incorporates the effect of other branches by $\postfuncN{\anngvas}$ and $\prefuncN{\anngvas}$ calls.
Formally, it considers all rules $\anonterm\to\aword$, and all factorizations $\aword=\aword_{\lefttag}.\anontermp.\awordp_{\righttag}$ with $\aword_{\lefttag}, \aword_{\righttag}\in(\nonterms\cup\trms)^{0}\cup(\nonterms\cup\trms)^{1}$, where $\anontermp\in\nonterms$.
Here, $\anontermp$ is the non-terminal that is on the tracked branch.
If the right-hand side of a rule only contains terminals, the rule is not considered.
Note that the same rule can be considered multiple times, since the choice of factorization may differ.
For each such rule and factorization, we construct $\postfuncNof{\anngvas}{\amarking, \aword_{\lefttag}}$, and $\prefuncNof{\anngvas}{\amarkingp, \aword_{\righttag}}$, where we assume $\postfuncNof{\anngvas}{\amarkingpp, \varepsilon}=\prefuncNof{\anngvas}{\amarkingpp, \varepsilon}=\amarkingpp$ for all $\amarkingpp\in\Nomega^{d}$.
Then, for each $\amarking'\in\postfuncNof{\anngvas}{\amarking, \aword_{\lefttag}}$, and $\amarkingp'\in\prefuncNof{\anngvas}{\amarking, \aword_{\righttag}}$, we add an edge $(\amarking, \anonterm, \amarkingp)\to(\amarking', \anontermp, \amarkingp')$.
The produced $(\amarking', \anontermp, \amarkingp')$ refers to a predecessor of $(\amarking, \anonterm, \amarkingp)$ with the same three components as $(\amarking', \anontermp, \amarkingp')$, and a fresh symbol otherwise.

The termination is also clear by the same arguments as the classical Karp-Miller graph \cite{KM69}.
For this reason, we ommit the proof.
We observe that if $\postfuncN{\anngvas}$ and $\prefuncN{\anngvas}$ are computable for the right domains, then $\extkmgraph$ can be effectively constructed.
\LemmaKarpMillerCompMainPaper*
\begin{proof}
    We only prove the effectiveness, and export the relation to \perfectnesspumpingnospace\ to \Cref{Lemma:ExtPumpChar}.
    Since $\postfuncN{\anngvas}$ and $\prefuncN{\anngvas}$ ensure that all $\omega$ counters on the input remain $\omega$ on their output, it suffices to show termination for the effectiveness. 
    Suppose that the construction does not terminate.
    Let $(\extkmnodes^{(k)}, \extkmedges^{(k)})$ be the graph constructed at the $k$-th iteration.
    Then, by a simple Koenig's Lemma argument, we get an infinite sequence of distinct nodes $[(\amarking_{i}, \anonterm_{i}, \amarkingp_{i})]_{i\in\N}$, where for all $i\in\N$, there is a $k_i\in\N$ where $(\amarking_{i}, \anonterm_{i}, \amarkingp_{i})\in\wtnonterms^{(k_i)}$ and $(\amarking_{i}, \anonterm_{i}, \amarkingp_{i})\to(\amarking_{i+1}, \anonterm_{i+1}, \amarkingp_{i+1})\in\extkmedges$.
    Note that the construction ensures that there are no two distinct symbols with the exact same first three components that can call each other.
    Then, we obtain that $(\amarking_{i}, \anonterm_{i}, \amarkingp_{i})\neq(\amarking_{j}, \anonterm_{j}, \amarkingp_{j})$ for all distinct $i,j\in\N$.
    A standard well-quasi-order argument yields a sequence $(\amarking_{\phi(i)}, \amarkingp_{\phi(i)})_{i\in\N}\in\Nomega^{d}\times\Nomega^{d}$ strictly increasing in the product order, where $\anonterm_{\phi(i)}=\anonterm_{\phi(i+1)}=\anonterm$ for all $i\in\N$.
    However, if $(\amarking_{\phi(i)}, \anonterm, \amarkingp_{\phi(i)})$ can call $(\amarking_{\phi(i+1)}, \anonterm, \amarkingp_{\phi(i+1)})$, and $(\amarking_{\phi(i)}, \amarkingp_{\phi(i)})>(\amarking_{\phi(i+1)}, \amarkingp_{\phi(i+1)})$, then $\omegaof{\amarking_{\phi(i)}, \amarkingp_{\phi(i)}}\subsetneq\omegaof{\amarking_{\phi(i+1)}, \amarkingp_{\phi(i+1)}}$.
    This implies an infinitely increasing chain $\omegaof{\amarking_{\phi(0)}, \amarkingp_{\phi(0)}}\subsetneq\omegaof{\amarking_{\phi(1)}, \amarkingp_{\phi(1)}}\subsetneq\ldots$.
    However, since $\omegaof{\amarking_{\phi(i)}, \amarkingp_{\phi(i)}}\subseteq [1,2d]$ for all $i\in\N$, there can be no such infinite chain.
    This is a contradiction. 
\end{proof}
By relying on the standard arguments, for an NGVAS, we observe that \perfectnesspumping holds if and only if $\extkmgraph$ contains a node labeled $(\inof{\anonterm}, \anonterm, \outof{\anonterm})$ for some $\anonterm\in\nonterms$, completing the proof of \Cref{Lemma:ExtPumpChar}.
\begin{lemma}\label{Lemma:ExtPumpChar}
    Let $\anngvas$ be an NGVAS, and let $\extkmgraph=(\extkmnodes, \extkmedges)$ be its Karp-Miller graph.
    There is an $\anonterm\in\nonterms$ with $(\inof{\anonterm}, \anonterm, \outof{\anonterm})\in\extkmnodes$ if and only if $\anngvas$ has \perfectnesspumpingnospace.
\end{lemma}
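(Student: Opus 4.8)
The plan is to adapt the classical correctness proof of the Karp--Miller construction \cite{KM69} to the present setting, where "firing an edge" of $\extkmgraph$ means performing one grammar derivation step while the off-branch subtrees are collapsed by $\postfuncN{\anngvas}$ on the input side and by $\prefuncN{\anngvas}$ on the output side. Three facts about these functions are used throughout: they compute exact (downward-closed, ideal-decomposed) reachability, they preserve $\omega$-entries, and, together with the consistency conditions on $\infun$ and $\outfun$, they never introduce an $\omega$ on a counter outside $\adimsetleft$ on the input component or outside $\adimsetright$ on the output component. I also use that $\inof{\anonterm}$ (resp.\ $\outof{\anonterm}$) is obtained from $\acontextin$ (resp.\ $\acontextout$) by setting exactly the counters of $\adimsetleft\setminus\omegaof{\acontextin}$ (resp.\ $\adimsetright\setminus\omegaof{\acontextout}$) to $\omega$, since $\omegaof{\acontextin}\subseteq\adimsetleft$ and, on the concretely tracked counters, $\acontextin\sqsubseteq\inof{\startnonterm}$.

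For the "if" direction, assume \perfectnesspumpingnospace, witnessed by a derivation $D\colon\startnonterm\to^{*}\asentformpumpleft.\startnonterm.\asentformpumpright$ and runs $\upseq\in\runsof{\asentformpumpleft}$, $\downseq\in\runsof{\asentformpumpright}$ with the stated strict effects on $\adimsetleft\setminus\omegaof{\acontextin}$ (forward, left) and $\adimsetright\setminus\omegaof{\acontextout}$ (backward, right). I iterate $D$ and trace, in $\extkmgraph$, the branch of the $k$-fold iterate that ends at the central copy of $\startnonterm$; at each derivation step I pick the rule and factorization dictated by $D$ and the marking in the appropriate $\postfuncN{\anngvas}$- resp.\ $\prefuncN{\anngvas}$-set, which by exactness records the largest markings reachable by the corresponding pieces of $\upseq^{k}$ resp.\ $\downseq^{k}$. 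On the concretely tracked counters these markings are constant (consistency), on $\omegaof{\acontextin}$ the input stays $\omega$, and on $\adimsetleft\setminus\omegaof{\acontextin}$ a single traversal of $D$ strictly increases them, so iterating $D$ drives them unboundedly. A standard well-quasi-order / K\"onig argument then shows that, after finitely many iterations, an acceleration fires turning every counter of $\adimsetleft\setminus\omegaof{\acontextin}$ to $\omega$ on the input component and, symmetrically via $\downseq$, every counter of $\adimsetright\setminus\omegaof{\acontextout}$ to $\omega$ on the output component; since no other counter can become $\omega$, the accelerated node is precisely $(\inof{\startnonterm},\startnonterm,\outof{\startnonterm})$.

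For the "only if" direction, assume $(\inof{\anonterm},\anonterm,\outof{\anonterm})\in\extkmnodes$ for some $\anonterm$. By induction over the construction of $\extkmgraph$ I establish that every node $(\amarking,\anontermp,\amarkingp)$ is \emph{realizable}: for every bound $B$ there is a derivation $\startnonterm\to^{*}a.\anontermp.b$ with $a,b\in\terms^{*}$ and runs $r_{a}\in\runsof{a}$ fired from some $\inmark\sqsubseteq\acontextin$, $\revof{r_{b}}$ fired from some $\outmark\sqsubseteq\acontextout$, whose endpoints agree with $\amarking$ resp.\ $\amarkingp$ on the concretely tracked counters and are at least $B$ on the $\omega$-counters. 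Simulation edges are handled because the $\postfuncN{\anngvas}$- and $\prefuncN{\anngvas}$-values are maximal generalized markings of downward closures of genuinely reachable sets (and the abstracted off-branch symbols derive terminal words, as all non-terminals are useful), and since the two side-constructions are independent their realizing derivations combine into a single one; acceleration edges are handled by the usual loop-iteration trick, iterating the discovered pumping loop $\Theta(B)$ times after an initial pre-pump that clears the loop's hurdles. Applying realizability to $(\inof{\anonterm},\anonterm,\outof{\anonterm})$, whose $\omega$-coordinates are exactly $\adimsetleft$ on the left and $\adimsetright$ on the right, yields an $\anonterm$-centered derivation with runs that simultaneously pump all of $\adimsetleft\setminus\omegaof{\acontextin}$ forward and all of $\adimsetright\setminus\omegaof{\acontextout}$ backward. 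Finally I relocate this to $\startnonterm$: using strong connectedness and usefulness, fix terminal-word derivations $\startnonterm\to^{*}a.\anonterm.b$ and $\anonterm\to^{*}c.\startnonterm.d$; splice a large power of the $\anonterm$-pump between them; choosing $a$ via realizability with a sufficiently large $B$ so that the marking reached after $a$ clears the hurdles of the pump, monotonicity of VAS firing makes the composed left run enabled from $\inmark\sqsubseteq\acontextin$ with net effect $\geq 1$ on each counter of $\adimsetleft\setminus\omegaof{\acontextin}$ for a large enough power, and symmetrically on the right; this is exactly a derivation $\startnonterm\to^{*}\upseq.\startnonterm.\downseq$ witnessing \perfectnesspumpingnospace.

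The main obstacle is the realizability induction in the "only if" direction: it must carry the order-theoretic semantics of $\postfuncN{\anngvas}$ and $\prefuncN{\anngvas}$ (ideal decompositions of downward closures, not literal reachability), exploit the independence of the two side-constructions so that their realizing derivations can be merged, and then transfer the simultaneous pump from an arbitrary $\anonterm$ to $\startnonterm$ \emph{for all relevant counters at once} without losing enabledness, which is what forces the pre-pumping/monotonicity argument. The "if" direction and the remaining bookkeeping are a routine transcription of the classical Karp--Miller argument.
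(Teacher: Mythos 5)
Your proof is correct and takes essentially the route the paper itself intends: the paper gives no detailed argument for this lemma and discharges it ``by relying on the standard arguments,'' i.e.\ exactly the classical Karp--Miller correctness proof you spell out (domination of the iterated pumping derivation, with acceleration, for one direction; a realizability induction over simulation and acceleration edges, followed by relocation to $\startnonterm$ via strong connectedness, for the other), which also matches the paper's sketch for the analogous coverability-grammar statement. One cosmetic remark: the ``large power of the $\anonterm$-pump'' you splice in is not actually furnished by your realizability statement (which yields derivations $\startnonterm\to^{*}a.\anonterm.b$ with arbitrarily large endpoint values, not a loop on $\anonterm$), but it is also not needed---realizability with sufficiently large $B$ followed by one fixed derivation $\anonterm\to^{*}c.\startnonterm.d$ already gives enabledness and net effect $\geq 1$ on the relevant counters, which is what your ``sufficiently large $B$'' clause is in fact doing.
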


Now we argue that each branch of a reachability tree $(\atree, \nodemarking)\in\reachtreesof{\anngvas}$ has an additional counter bounded on one side.
This proof is also standard, but we still include it for the sake of completeness.
\begin{lemma}\label{Lemma:LinearExtPumpingBoundedness}
    Let $\anngvas$ be an NGVAS with the Karp-Miller graph $\extkmgraph=(\extkmnodes, \extkmedges)$, and let $\atree\in\reachtreesof{\anngvas, (\acontextin, \startnonterm, \acontextout)}$.
    Let there be no $\anonterm\in\nonterms$ with $(\inof{\anonterm}, \anonterm, \outof{\anonterm})\in\extkmnodes$, and let $\extkmconst\in\N$ be the largest non-$\omega$ value used in a label of $\extkmnodes$.
    For each branch $(\anode_{i})_{i\leq k}\in\atree$ of $\atree$, one of the following holds:
    (i) there is a $j\in\abdinfoleft$ such that for all $i\leq k$ with $\anode_{i}.\symlabel\in\nonterms$, we have $\anode_{i}.\inlabel[j]\in[0, \extkmconst]$ and (ii) there is a $j\in\abdinforight$ such that for all $i\leq k$ with $\anode_{i}.\symlabel\in\nonterms$, we ahve $\anode_{i}.\outlabel[j]\in[0, \extkmconst]$.
\end{lemma}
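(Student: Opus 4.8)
\textbf{Proof plan for Lemma~\ref{Lemma:LinearExtPumpingBoundedness}.}
The plan is to track a fixed branch of the reachability tree $\atree$ against the Karp--Miller graph $\extkmgraph$ and argue that, since no node of the form $(\inof{\anonterm}, \anonterm, \outof{\anonterm})$ occurs in $\extkmnodes$, the branch cannot see \emph{both} a left counter in $\abdinfoleft$ and a right counter in $\abdinforight$ grow beyond $\extkmconst$ at the non-terminal nodes along it. First I would set up the correspondence: given the branch $(\anode_i)_{i\le k}$ of $\atree$, for each non-terminal node $\anode_i$ the concrete markings $\anode_i.\inlabel, \anode_i.\outlabel \in \N^d$ are an actual reachable pair, and the other subtrees hanging off the branch contribute effects that are exactly captured by $\postfuncN{\anngvas}$ on the prefix-side words and $\prefuncN{\anngvas}$ on the suffix-side words. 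Hence, abstracting these concrete markings to generalized markings and inserting $\omega$ wherever the values along the branch are unbounded, the sequence of abstractions embeds into a path in $\extkmgraph$ from $(\acontextin, \startnonterm, \acontextout)$: each derivation step $\anode_i \to \anode_{i+1}$ (selecting the branch non-terminal and abstracting the siblings via $\postfunc/\prefunc$) is exactly an edge of the Karp--Miller construction, and the acceleration steps of the construction only add $\omega$'s.

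The key steps in order: (1) formalize the embedding of the branch into a Karp--Miller path, using $\runsof{-}$ monotonicity and the over-approximation property of $\postfuncN{\anngvas}, \prefuncN{\anngvas}$ (their defining equations via $\iddecomp\circ\downcls$), so that $\anode_i.\inlabel \le$ the corresponding $\extkmnodes$-label's input component and similarly for outputs; (2) invoke the definition of $\extkmconst$ as the largest non-$\omega$ value appearing in any label of $\extkmnodes$, so that any branch-counter whose value along the branch exceeds $\extkmconst$ at some non-terminal node must correspond to an $\omega$ entry in the matching Karp--Miller node; (3) argue by contradiction: if neither (i) nor (ii) holds, then there is a counter $j_L \in \abdinfoleft$ that exceeds $\extkmconst$ on the input side at some non-terminal node and a counter $j_R \in \abdinforight$ that exceeds $\extkmconst$ on the output side at some non-terminal node on the same branch; (4) push these through the embedding to obtain a Karp--Miller node whose input carries $\omega$ on $j_L$ and whose output carries $\omega$ on $j_R$ --- and then, by the acceleration mechanism (once a counter becomes $\omega$ it stays $\omega$ along descendants, since the abstraction is monotone in $\sqsubseteq$ and $\postfunc/\prefunc$ preserve $\omega$-entries), continuing the branch down to a leaf or to a point where the branch non-terminal equals $\startnonterm$ again forces, via the "pumping situation'' detection of the construction, a node $(\inof{\anonterm}, \anonterm, \outof{\anonterm})$ to appear (the abstracted input dominates $\inof{\anonterm}$ because the bounded counters are tracked concretely by consistency of the boundedness information, and the unbounded ones are $\omega$ in both). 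This contradicts the hypothesis, so (i) or (ii) must hold.

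The main obstacle I expect is step (4): turning "a left counter and a right counter are simultaneously large along the branch'' into "a full node $(\inof{\anonterm}, \anonterm, \outof{\anonterm})$ appears in $\extkmgraph$''. The subtlety is that the Karp--Miller construction only tracks one branch of the parse tree at a time, and the $\omega$'s introduced on the left and on the right may be introduced at different depths and via different accelerations; one must check that the abstraction never \emph{removes} an $\omega$ once introduced (this is where the product-order monotonicity of the acceleration and the $\omega$-preservation clauses of $\postfuncN{\anngvas},\prefuncN{\anngvas}$ are essential), and that when the branch non-terminal revisits an SCC so that the input is a specialization of $\inof{\anonterm}$ and the output a specialization of $\outof{\anonterm}$, the concretely tracked counters match by consistency of $\infun,\outfun$ and the unbounded ones are $\omega$. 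Once this bookkeeping is done carefully, the contradiction with the assumed absence of such a node closes the argument; the rest is routine. I would also need the minor observation that $\extkmconst$ is well-defined and finite, which is immediate from termination of the construction (Lemma~\ref{Lemma:ExtPumpComp}).
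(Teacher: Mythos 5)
Your step (1) — embedding the branch of the reachability tree into a path of $\extkmgraph$ whose labels componentwise dominate the branch labels, using monotonicity on untracked counters and the over-approximation property of $\postfuncN{\anngvas}$, $\prefuncN{\anngvas}$ — is exactly the technical core of the paper's proof, and your step (2) is a correct consequence of it. The gap is in steps (3)–(4), i.e.\ in how you close the argument. First, the negation of ``(i) or (ii)'' is not ``there exist one $j_L\in\abdinfoleft$ and one $j_R\in\abdinforight$ that get large somewhere''; it is the much stronger statement that \emph{every} counter of $\abdinfoleft$ exceeds $\extkmconst$ at some (possibly different) node and \emph{every} counter of $\abdinforight$ does too. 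Second, and decisively, the implication you rely on — one $\omega$ on the input side plus one $\omega$ on the output side eventually forces a node $(\inof{\anonterm},\anonterm,\outof{\anonterm})$ to appear via acceleration — is false. Such a node requires \emph{all} counters of $\abdinfoleft$ to be $\omega$ in the input and \emph{all} counters of $\abdinforight$ to be $\omega$ in the output of a single node; the acceleration step only introduces $\omega$ at the components where a strict increase is detected and never fills in the remaining ones. Concretely, if $\abdinfoleft$ contains two counters, one of them can stay below $\extkmconst$ at every non-terminal node (so (i) holds and no full-$\omega$ node can ever appear), while the other input counter and some output counter blow up: your hypotheses for the contradiction are then satisfied, yet there is nothing to contradict. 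In effect your argument, if it worked, would prove the false strengthening ``all of $\abdinfoleft$ stays bounded on the input, or all of $\abdinforight$ stays bounded on the output.''

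The fix is to argue directly rather than by contradiction, as the paper does: along the constructed Karp--Miller path the set of $\omega$-components is monotone (an $\omega$, once introduced by $\postfuncN{\anngvas}$, $\prefuncN{\anngvas}$ or by acceleration, persists), and the non-$\abdinfoleft$ resp.\ non-$\abdinforight$ components are tracked concretely, so a node whose input has $\omega$ on all of $\abdinfoleft$ and whose output has $\omega$ on all of $\abdinforight$ would be exactly $(\inof{\anonterm},\anonterm,\outof{\anonterm})$. Since no such node exists in $\extkmnodes$, already at the last node of the path some single counter $j$ is non-$\omega$ on the input side with $j\in\abdinfoleft$, or on the output side with $j\in\abdinforight$; by persistence it is non-$\omega$, hence at most $\extkmconst$, at every node of the path, and the domination from your step (1) transfers this bound to $\anode_i.\inlabel[j]$ resp.\ $\anode_i.\outlabel[j]$ at every non-terminal node of the branch. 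With this replacement the remaining work is precisely the careful inductive construction of the dominating path (handling simulation steps and acceleration steps separately), which you have sketched correctly but which carries the real bookkeeping burden.
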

\begin{proof}
    Let $\atree\in\reachtreesof{\anngvas}$, and $(\anode_{i})_{i\leq k}$ a branch of $\atree$.
    Further let $(\amarking_{i}, \anonterm_{i}, \amarkingp_{i})_{i\leq k}$ be the labels of these nodes.
    We inductively construct a path $(\amarkingpp_{i}, \anonterm_{i}, \amarkingppp_{i})_{i\leq \ell}\in\extkmnodes$ in $\extkmgraph$ such that there is a surjective non-decreasing map $f:[0, \ell]\to[0, k]$, where for all $i\in[0, \ell]$, $\amarking_{f(i)}\leq\amarkingpp_{i}$ and $\amarkingp_{f(i)}\leq\amarkingppp_{i}$ hold.
    Thanks to the definition of $\postfuncN{\anngvas}$, $\prefuncN{\anngvas}$, and $\extkmgraph$, we know that if a counter on one side becomes labeled $\omega$ at some node along a path in $\extkmgraph$, than it remains $\omega$ for the rest of the path.
    Since there is no node $(\inof{\anonterm}, \anonterm, \outof{\anonterm})\in\extkmnodes$, we know that across all of $(\amarking_{i}, \anonterm_{i}, \amarkingp_{i})_{i\leq k}$, a counter on the input or output side must remain non-$\omega$, and therefore in $[0, \extkmconst]$.
    This shows \Cref{Lemma:LinearExtPumpingBoundedness}.

    We move on to the inductive construction.
    We make an outer induction on $k'$, and an inner induction on $2d-\cardof{\omegaof{\amarkingpp_{\ell}}}-\cardof{\omegaof{\amarkingppp_{\ell}}}$, to show the following statement.
    There is a path $(\amarkingpp_{i}, \anonterm_{i}, \amarkingppp_{i})_{i\leq \ell}\in\extkmnodes$ in $\extkmgraph$ such that there is a surjective non-decreasing map $f:[0, \ell]\to[0, k']$, where for all $i\in[0, \ell]$, $\amarking_{f(i)}\leq\amarkingpp_{i}$ and $\amarkingp_{f(i)}\leq\amarkingppp_{i}$ hold.
    For the base case, we have $k'=0$.
    By the definition of the reachability tree, it must hold that $\amarking_{0}\leq\acontextin$, $\anonterm_{0}=\startnonterm$, and $\amarkingp_{0}\leq\acontextout$.
    The $\extkmgraph$ construction guarantees $(\acontextin, \startnonterm, \acontextout)\in\extkmnodes$.
    We let $(\amarkingpp_{0}, \anonterm_{0}, \amarkingppp_{0})=(\acontextin, \startnonterm, \acontextout)$, and $f(0)=0$.
    
    Now for the inductive case, we are given a path $\rho=(\amarkingpp_{i}, \anonterm_{i}, \amarkingppp_{i})_{i\leq \ell}$ in $\extkmgraph$ and a surjective map $f:[0, \ell]\to[0, k']$.
    We show that if $k'\neq k$, then we can extend $\rho$ and $f$.
    We skip the inner base case.
    We make a case distinction based on how the node $(\amarkingpp_{\ell}, \anonterm_{\ell}, \amarkingppp_{\ell})$ was explored during the construction of $\extkmgraph$.
    If a pumping has occured, then there is an edge $(\amarkingpp_{\ell}, \anonterm_{\ell}, \amarkingppp_{\ell})\to(\amarkingpp', \anonterm, \amarkingppp')$ such that $\amarkingpp_{\ell}\sqsubseteq\amarkingpp'$, $\amarkingppp_{\ell}\sqsubseteq\amarkingppp'$, and $\cardof{\omegaof{\amarkingpp_{\ell}}}+\cardof{\omegaof{\amarkingppp_{\ell}}}<\cardof{\omegaof{\amarkingpp'}}+\cardof{\omegaof{\amarkingppp'}}$.
    In this case, we let $(\amarkingpp_{\ell+1}, \anonterm_{\ell+1}, \amarkingppp_{\ell+1})=(\amarkingpp', \anonterm', \amarkingppp')$, and $f(\ell+1)=f(\ell)=k'$.
    Since the number of $\omega$'s has increased without decreasing $k'$, this concludes the proof.
    Now let the exploration of $(\amarkingpp_{\ell}, \anonterm_{\ell}, \amarkingppp_{\ell})$ have involved no pumping.
    Consider the rule applied at $\anode_{k'}$.
    If it is an exit-rule, we are done, since we must have $k'=k$.
    Let $\anode_{k'+1}$ be the right-child of $\anode_{k'}$ the proof for the left-child case is similar.
    Then, the rule applied at $\anode_{k'}$ must be $\anonterm_{k'}\to\asymbol.\anonterm_{k'+1}$ for some $\asymbol\in\nonterms$.
    Let $\anodep$ be the left-child of $\anode_{k'}$ with the label $\nodemarkingof{\anodep}=(\amarking', \asymbol, \amarkingp')$.
    Since the tree rooted at $\anodep$ is a reachability tree, there is a derivation $\asymbol\to^{*}\awordp$ with $\awordp\in\trms^{*}$ and $(\amarking', \arunp, \amarkingp')\in\runsof{\awordp}$ for some $\arunp\in\updates^{*}$.
    Note that $\amarking'=\amarking_{k'}$ since $\anodep$ is the left-child of $\anode_{k'}$.
    We argue that for all $i<\cardof{\awordp}$, $\awordp[i].\unconstrained=\abdinfoleft$.
    If $\anngvas$ is linear, we have $\cardof{\awordp}=1$ and since it is generated on the left side of a remaining-rule, we have $\awordp[i].\unconstrained=\abdinfoleft$.
    If $\anngvas$ is non-linear, then it is also branching, and it must hold that $\anngvasp.\unconstrained=\abdinfoleft=\abdinforight$ for all $\anngvasp\in\trms$.
    Recall that $\amarking_{k'}\leq\amarkingpp_{\ell}$, so there must be a $\amarkingpp_{diff}$ such that $\amarking_{k'}+\amarkingpp_{diff}\sqsubseteq\amarkingpp_{\ell}$.
    The reachable counter values are constant across all runs for concretely tracked counters, so $\amarking_{k'}, \amarkingpp_{k'}\sqsubseteq \inof{\anonterm}$ hold.
    Then, we know that $\amarkingpp_{diff}[j]=0$ for all $j\in[1,d]\setminus\abdinfoleft$.
    Since the reachability relation of an NGVAS is monotonous on unconstrained counters, we have $(\amarkingpp_{\ell}, \asymbol, \amarkingp'+\amarkingpp_{diff})=(\amarking'+\amarkingpp_{diff}, \asymbol, \amarkingp'+\amarkingpp_{diff})\in\runsof{\awordp}$.
    There must be a $\amarkingpp'\in\postfuncNof{\anngvas}{\amarkingpp_{\ell}, \asymbol}$ where $\amarkingp'\leq\amarkingp'+\amarkingpp_{diff}\leq\amarkingpp'$.
    Thus, we know by the construction of $\extkmgraph$ that there must be a node $(\amarkingpp', \anontermp, \amarkingppp_{\ell})\in\extkmnodes$ with an edge $(\amarkingpp_{\ell}, \anonterm_{\ell}, \amarkingppp_{\ell})\to(\amarkingpp', \anontermp, \amarkingppp_{\ell})$ in $\extkmedges$.
    Letting $(\amarkingpp_{\ell+1}, \anonterm_{\ell+1}, \amarkingppp_{\ell+1})=(\amarkingpp', \anontermp, \amarkingppp_{\ell})$ and $f(\ell+1)=k'+1$ concludes the proof.
\end{proof}

As we discuss in the extended paper, if the NGVAS is linear, then we only need a weaker premise, since all non-terminals appear on the same branch, and the construction never applies $\postfuncN{\anngvas}$ resp. $\prefuncN{\anngvas}$ on non-terminals.
Since we have already proven \Cref{Lemma:TermsPostPreComputable}, we readily get that we can compute $\extkmgraph$ for linear NGVAS.
\begin{lemma}\label{Lemma:LinExtPumpCompHalf}
    If $\anngvas$ is a linear NGVAS, and $\perfect$ is reliable up to $\rankof{\anngvas}$, then $\extkmgraph$ can be effectively constructed.
\end{lemma}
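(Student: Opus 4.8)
\textbf{Proof plan for Lemma~\ref{Lemma:LinExtPumpCompHalf}.}
The statement asserts that for a \emph{linear} NGVAS $\anngvas$ with $\perfect$ reliable up to $\rankof{\anngvas}$, the Karp--Miller graph $\extkmgraph$ can be effectively constructed. The plan is to reduce this to the computability result for terminals that we already have, namely \Cref{Lemma:TermsPostPreComputable}. The crucial structural observation is that in a linear NGVAS every persisting production has exactly one non-terminal on its right-hand side, and there is a unique exit production. Hence in every factorization $\aword = \aword_{left}.\anontermp.\awordp_{right}$ used in the Karp--Miller construction, the flanking pieces $\aword_{left}$ and $\aword_{right}$ are terminal strings (either empty or a single terminal/childNGVAS). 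Therefore every call to $\postfuncN{\anngvas}$ and $\prefuncN{\anngvas}$ made during the construction of $\extkmgraph$ has its second argument in $\terms^{\leq 1}$, not in $\nonterms$.

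First I would make this factorization claim precise: unfold the definition of the Karp--Miller construction and check that when the currently explored node carries a non-terminal $\anonterm$, the rules considered are $\anonterm\to\aword$ with $\aword$ of length at most two, and the tracked symbol $\anontermp$ is a non-terminal; because the grammar is non-branching, the remaining symbols on either side are terminals. The empty-string cases are handled by the convention $\postfuncNof{\anngvas}{\amarkingpp,\varepsilon}=\prefuncNof{\anngvas}{\amarkingpp,\varepsilon}=\amarkingpp$, which is trivially computable, and the single-terminal cases reduce to $\postfuncN{\anngvas}$ resp.\ $\prefuncN{\anngvas}$ applied to a childNGVAS.

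Second, I would invoke \Cref{Lemma:TermsPostPreComputable}: since $\perfect$ is reliable up to $\rankof{\anngvas}$, the functions $\postfuncN{\anngvas}$ and $\prefuncN{\anngvas}$ are computable on the domain $\afuncdomain_{\terms}$, and in particular for any childNGVAS argument and any relevant input marking with $\unconstrained\subseteq\omegaof{\amarking}\subseteq\abdinfomid$ (the markings that actually arise as node labels of $\extkmgraph$ all satisfy this, since acceleration only inserts $\omega$ into coordinates that were already bounded above and the construction starts from $\acontextin,\acontextout$ whose $\omega$-set is contained in $\adimset$). This gives an effective oracle for every step of the graph construction.

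Third, I would combine this with the termination argument already established in the proof of \Cref{Lemma:ExtPumpComp}: the monotone growth of the $\omega$-set along paths, together with the well-quasi-order on $\Nomega^{2d}$ and the fact that no two distinct nodes with identical first three components can call each other, bounds the depth and branching, so the fixed-point iteration halts. Since every individual step is effective (rule enumeration is finite, and the $\postfuncN{\anngvas}/\prefuncN{\anngvas}$ calls terminate by \Cref{Lemma:TermsPostPreComputable}) and there are finitely many iterations, $\extkmgraph$ is effectively constructible. The main obstacle — if any — is purely bookkeeping: verifying carefully that the markings appearing as node labels stay inside the domain on which \Cref{Lemma:TermsPostPreComputable} guarantees computability, i.e.\ that $\unconstrained$ stays inside the $\omega$-set and the $\omega$-set stays inside $\abdinfomid$; this follows because the Karp--Miller acceleration inherits the $\omega$-positions of its predecessor and only adds coordinates where a strict increase was detected, and such coordinates are necessarily outside $\abdinfomid$-tracked-concretely reasoning is not even needed here since by \perfectnesscounters-type consistency the concretely tracked counters never increase along a reachability run. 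No new technique is required; the lemma is essentially a corollary of \Cref{Lemma:TermsPostPreComputable} plus the non-branching structure.
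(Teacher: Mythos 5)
Your proposal matches the paper's own argument: in a linear NGVAS every rule has at most one non-terminal, so the Karp--Miller construction only ever invokes $\postfuncN{\anngvas}$ and $\prefuncN{\anngvas}$ on terminals (or $\varepsilon$), which is exactly what \Cref{Lemma:TermsPostPreComputable} makes computable under the reliability assumption, and termination is inherited from the general argument of \Cref{Lemma:ExtPumpComp}. This is the same reduction the paper uses, so the proposal is correct and essentially identical in approach.
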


\subsection{Pumpability Decompositions for Linear NGVAS}

The pumpability refinement for linear NGVAS proceeds in two steps.
First, it checks whether the relevant condition (\perfectnesspumping and \perfectnesspumpingintnospace) holds by constructing the Karp-Miller graph.
If it does not hold, as we show in \Cref{Lemma:LinearExtPumpingBoundedness}, the Karp-Miller graph yields a large constant $\extkmconst\in\N$, such that for any reachability tree, and each branch, there is a counter and a side (input resp. output), such that this counter remains bounded across it.
Linear NGVAS have exactly one branch that contains non-terminals, so it suffices to track a counter up to a bound to get a refinement.
Thus, the procedure extends the grammar by a finite counter that counts up to $\extkmconst$, and does not allow derivations to exceed this value.
The specification of the refinement, adapted from \Cref{SectionProofOutline}.
is as follows.
\begin{lemma}\label{Lemma:LinearExtPumpingDecomp}
    Let $\anngvas$ be a clean, linear-NGVAS, and let $\perfect$ be reliable up to $\rankof{\anngvas}$.
    Then $\refinepumpof{\anngvas}$ terminates with $\refinepumpof{\anngvas}=\set{\anngvas}$ if \perfectnesspumping holds, and if not, $\refinepumpof{\anngvas}$ is a decomposition of $\anngvas$.
\end{lemma}

\paragraph*{Construction.}
As its first step, the procedure constructs the (adapted) Karp-Miller graph $\extkmgraph=(\extkmnodes, \extkmedges)$.
If the graph $\extkmgraph=(\extkmnodes, \extkmedges)$ contains a node $(\inof{\anonterm}, \anonterm, \outof{\anonterm})\in\extkmnodes$ for some $\anonterm\in\nonterms$, then the procedure returns $\set{\anngvas}$.
By \Cref{Lemma:ExtPumpChar}, \perfectnesspumping holds in this case.
If the graph does not contain such a node, then we extract the largest non-$\omega$ value $\extkmconst\in\N$ that appears in the input or output marking of a node.
Then for each $i\in\abdinfoleft$, where $\acontextin[i]\neq\omega$, we construct the NGVAS $\anngvas_{\lefttag, i}$, and for each $i\in\abdinforight$, where $\acontextout[i]\neq\omega$, we construct the NGVAS $\anngvas_{\righttag, i}$, and return $\setcond{\anngvas_{\lefttag, i}}{i\in\abdinfoleft,\; \acontextin[i]\neq \omega}\cup\setcond{\anngvas_{\righttag, i}}{i\in\abdinforight,\; \acontextout[i]\neq\omega}$.
The NGVAS $\anngvas_{\lefttag, i}$ tracks the counter $i$ up to $\extkmconst$ on the input side, and $\anngvas_{\righttag, i}$ does the same for $i$ on the output side.
We only present the construction of $\anngvas_{\lefttag, i}$ for $i\in\abdinfoleft$ with $\acontextin[i]\neq\omega$, as the construction for the other side is similar.
We let 
\begin{align*}
    \anngvas_{\lefttag, i}&=(\agram_{\lefttag, i}, \acontext, \restrictions, \unconstrained, (\abdinfoleft\setminus\set{i}, \abdinforight, \infun_{new}, \outfun))\\
    \agram_{\lefttag, i}&=(\nonterms\times[0, \extkmconst], \trms_{\lefttag, i}, (\acontextin[i], \startnonterm), \prods_{\lefttag, i})
\end{align*}
where $\infun_{new}(\anonterm, a)=\setter{\inof{\anonterm}}{i}{a}$ for all $(\anonterm, a)\in\nonterms\times[0, \extkmconst]$, and $\trms_{\lefttag, i}=\setcond{\anngvasp_{a, \omega},\;\anngvasp'}{a\in[0, \extkmconst],\;\anonterm\to\anngvasp.\anngvasp'}\cup\setcond{\anngvasp_{a, b}}{a, b\in[0, \extkmconst],\;\anngvasp\in\trms}$.
We define $\prods_{\lefttag, i}$ shortly,
Here, $\anngvasp_{a, b}$ is the NGVAS that is identical to $\anngvasp$ except at the context information, where we have $\anngvasp_{a, b}.\acontextout[j]=\anngvasp.\acontextout[j]$ and $\anngvasp_{a, b}.\acontextin[j]=\anngvasp.\acontextin[j]$ for all $j\in[1,d]\setminus \set{i}$, along with $\anngvasp_{a, b}.\acontextin[i]=\anngvasp.\acontextin[i]\sqcap a$ and $\anngvasp_{a, b}.\acontextout[i]=\anngvasp.\acontextout[i]\sqcap b$.
Now, we define the rules $\prods_{\lefttag, i}$.
\begin{align*}
    (\anonterm, a)&\to (\anontermp, a).\anngvasp\in\prods_{\lefttag, i},\text{ for all }\anonterm\to\anontermp.\anngvasp\in\prods\\
    (\anonterm, a)&\to\anngvasp_{a, b}. (\anontermp, b)\in\prods_{\lefttag, i},\text{ for all }\anonterm\to\anngvasp.\anontermp\in\prods\\
    (\anonterm, a)&\to \anngvasp_{a, \omega}.\anngvasp'\in\prods_{\lefttag, i},\\
    &\text{ if the exit rule }\anonterm\to\anngvasp.\anngvasp'\text{ has }\anngvasp.\acontextin[i]\compwith a
\end{align*}

\paragraph*{The proof of specification.}

Since $\postfuncN{\anngvas}$ and $\prefuncN{\anngvas}$ are computable for by \Cref{Lemma:TermsPostPreComputable}, the construction is effective by \Cref{Lemma:LinExtPumpCompHalf}.
By \Cref{Lemma:ExtPumpChar}, we get the correctness of the decomposition if \perfectnesspumping holds for $\anngvas$.
\begin{lemma}\label{Lemma:LinearExtPumpingTerminates}
    Let $\anngvas$ be a linear-NGVAS.
    Then, $\refinepumpof{\anngvas}$ terminates.
    If $\refinepumpof{\anngvas}=\anngvas$, then $\anngvas$ has \perfectnesspumpingnospace.
\end{lemma}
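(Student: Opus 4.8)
\textbf{Proof plan for Lemma~\ref{Lemma:LinearExtPumpingTerminates}.}
The plan is to verify the two assertions separately, relying on the already-established machinery. For termination, first observe that the only potentially unbounded subcomputation is the construction of the Karp--Miller graph $\extkmgraph$. By \Cref{Lemma:LinExtPumpCompHalf}, this construction is effective because $\anngvas$ is linear and (by assumption) $\perffun$ is reliable up to $\rankof{\anngvas}$, which via \Cref{Lemma:TermsPostPreComputable} makes $\postfuncN{\anngvas}$ and $\prefuncN{\anngvas}$ computable on the domain of terminals---the only domain the Karp--Miller construction queries in the linear case, since each rule has at most one non-terminal on its right-hand side. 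Once $\extkmgraph$ is in hand, the remaining steps are purely finite: test whether some node $(\inof{\anonterm}, \anonterm, \outof{\anonterm})$ occurs (finitely many nodes), and if not, extract the finite maximal non-$\omega$ constant $\extkmconst$ and build the finitely many NGVAS $\anngvas_{\lefttag, i}$ and $\anngvas_{\righttag, i}$, each of which has a finite grammar because the added budget component ranges over $[0, \extkmconst]$. Hence $\refinepumpof{\anngvas}$ terminates.

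For the second assertion, suppose $\refinepumpof{\anngvas} = \set{\anngvas}$. By the construction, this happens exactly when $\extkmgraph = (\extkmnodes, \extkmedges)$ contains a node of the form $(\inof{\anonterm}, \anonterm, \outof{\anonterm})$ for some $\anonterm \in \nonterms$. Now I would simply invoke \Cref{Lemma:ExtPumpChar}, which states precisely that the presence of such a node is equivalent to $\anngvas$ satisfying \perfectnesspumpingnospace. This gives the claim directly.

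I do not expect a genuine obstacle here, since this lemma is a bookkeeping statement that packages the effectiveness result (\Cref{Lemma:LinExtPumpCompHalf}) and the characterization (\Cref{Lemma:ExtPumpChar}) into the form needed by the specification \Cref{Lemma:LinearExtPumpingDecomp}. The only point requiring a sentence of care is confirming that every step \emph{after} the Karp--Miller construction is finite; this follows from the finiteness of $\extkmnodes$ (guaranteed by the termination part of the Karp--Miller construction, itself a standard well-quasi-order argument already referenced in \Cref{Lemma:ExtPumpComp}) and from $\extkmconst$ being a concrete natural number. The harder content---namely that the decomposition $\refinepumpof{\anngvas}$ is actually a deconstruction and that each $\anngvas' \in \refinepumpof{\anngvas}$ is head dominated by $\anngvas$ when \perfectnesspumpingnospace fails---is deferred to the separate statement \Cref{Lemma:LinearExtPumpingDecomp}, and is not part of what this lemma asserts.
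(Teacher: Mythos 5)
Your proposal is correct and matches the paper's own argument: the paper likewise derives termination from the computability of $\postfuncN{\anngvas}$ and $\prefuncN{\anngvas}$ on terminals (\Cref{Lemma:TermsPostPreComputable}) combined with the effectiveness of the Karp--Miller construction for linear NGVAS (\Cref{Lemma:LinExtPumpCompHalf}), and obtains the second assertion directly from \Cref{Lemma:ExtPumpChar}. The only point worth keeping explicit, as you did, is that the computability (and hence the whole lemma) rests on the standing assumptions of the section, namely cleanness of $\anngvas$ and reliability of $\perffun$ up to $\rankof{\anngvas}$.
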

Now we bring these lemmas together and prove \Cref{Lemma:LinearExtPumpingDecomp}.

\begin{proof}[Proof of \Cref{Lemma:LinearExtPumpingDecomp}]
    The construction is guaranteed to terminate by \Cref{Lemma:LinearExtPumpingTerminates}.
    It is also clear that each $\anngvas'\in\refinepumpof{\anngvas}$ is indeed a weak-NGVAS.
    By \Cref{Lemma:LinearExtPumpingTerminates}, we know that if $\refinepumpof{\anngvas}=\set{\anngvas}$, then $\anngvas$ fulfills \perfectnesspumpingnospace.
    Now, we argue that $\refinepumpof{\anngvas}$ is a deconstruction of $\anngvas$.
    Since $\refinepump$ does not modify the $\unconstrained$, $\acontextin$, $\acontextout$, and $\restrictions$ components, the deconstruction conditions (i)-(iv) hold.
    Then, it only remains to argue that $\runsof{\refinepumpof{\anngvas}}=\runsof{\anngvas}$.
    We sketch out the argument.
    Since we explicitly restricted the runs, $\runsof{\refinepumpof{\anngvas}}\subseteq\runsof{\anngvas}$ is clear.
    We argue $\runsof{\refinepumpof{\anngvas}}\supseteq\runsof{\anngvas}$ as follows.
    By \Cref{Lemma:LinearExtPumpingBoundedness}, for each reachability tree $\atree\in\reachtreesof{\anngvas, (\acontextin, \startnonterm, \acontextout)}$, there is a side (input resp. output), and a counter $i\in\abdinfoleft$ or $i\in\abdinforight$ depending on the side, such that $i$ remains bounded at this side of $\nodemarkingof{\anode}$ for all $\anode\in\atree$ with $\anode.\symlabel\in\nonterms$.
    Let the counter $i\in\abdinfoleft$ be bounded on the input side, as the other case is similar.
    We argue that all leaves $\anodep$ that are the left-child of their parent have $\anodep.\inlabel[i]\in[0, \extkmconst]$, and if $\anodep$ is the result of applying a remain-rule, we additionally have $\anodep.\outlabel[i]\in[0,\extkmconst]$.
    By the construction of $\anngvas_{\lefttag, i}$, this implies $\runsof{\atree}\subseteq\runsof{\anngvas_{\lefttag, i}}$.
    Let $\anodep$ be a leaf, and the left-child of its parent $\anode$.
    Then, $\anode$ has $\anode.\symlabel\in\nonterms$, and thus $\anode.\inlabel[i]\in [0, \extkmconst]$.
    We get $\anodep.\inlabel[i]=\anode.\inlabel[i]\in [0, \extkmconst]$.
    Let $\anodep$ be produced in a remaining-rule.
    Then, the right-child $\anodepp$ of $\anode$ has $\anodepp.\symlabel\in\nonterms$, and thus $\anodepp.\inlabel[i]\in[0, \extkmconst]$.
    We get $\anodep.\outlabel[i]=\anodepp.\inlabel[i]\in [0, \extkmconst]$.

    Finally, we argue that all $\anngvas'\in\refinepumpof{\anngvas}$ have $\rankof{\anngvas'}<\rankof{\anngvas}$.
    The construction only modifies the context information of the children, so their rank does not change.
    This means that either (i) the system rank or the local index decreased when moving to $\anngvas'$, or (ii) the main branch (beyond the components in $\anngvas'$) remained the same.
    We argue that $\cyclespaceof{\anngvas', \anonterm}\subsetneq\cyclespaceof{\anngvas}$, and thus $\linlrankof{\anngvas', \anonterm}<\linlrankof{\anngvas}$, for all $\anonterm\in\anngvas'.\nonterms$.
    In both cases (i) and (ii), the shows $\rankof{\anngvas'}<\rankof{\anngvas}$.
    We already have $\cyclespaceof{\anngvas', \anonterm}\subseteq\cyclespaceof{\anngvas}$ since we do not introduce new cycles.
    The argument is the same as the one used in \cite{LerouxS19} in the case of coverability.
    Since counter $i$ is concretely tracked on the left, for all cycles $\anonterm\to^{*}\aword.\anonterm.\awordp$ in $\anngvas'$, we have $\ceffof{\aword}=\set{0}$.
    Since $\anngvas$ is clean, the counter $i\not\in\adimsetleft$ is not rigid on the left.
    Since $i\not\in\omegaof{\acontextin}$, the counter $i$ is not fixed on the left.
    Thus, there is a $\amarkingpp\in\cyclespaceof{\anngvas}$ with $\amarkingpp[i]\neq 0$, but for all $\amarkingpp\in\cyclespaceof{\anngvas', \anonterm}$, $\amarkingpp[i]=0$.
    This implies $\cyclespaceof{\anngvas}\neq\cyclespaceof{\anngvas', \anonterm}$, and concludes the proof.
\end{proof}

The decomposition $\refineintpump$, which establishes \perfectnesspumpingint is very similar to $\refinepump$, with only the direction changing.
Instead of simulating pumps from outside to the inside starting at $(\acontextin, \startnonterm, \acontextout)$, it simulates pumps from inside to outside starting at $(\anngvasp.\acontextin, \anontermp, \anngvasp'.\acontextout)$ where $\anontermp\to\anngvasp.\anngvasp'$ is the exit-rule of $\anngvas$.
Here, the rigidity concern is rigidness on the inside-left, and the inside-right.
Since this case is very similar to the case of $\refinepump$, we ommit the proofs, and only state the specification.
\begin{lemma}\label{Lemma:LinearIntPumpingDecomp}
    Let $\anngvas$ be a linear-NGVAS, and let $\perfect$ be reliable up to $\rankof{\anngvas}$.
    Then $\refineintpumpof{\anngvas}$ terminates with $\refineintpumpof{\anngvas}=\set{\anngvas}$ if \perfectnesspumpingint holds, and if not, $\refineintpumpof{\anngvas}$ is a decomposition of $\anngvas$.  
\end{lemma}
Together with \Cref{Lemma:LinearExtPumpingDecomp}, \Cref{Lemma:LinearIntPumpingDecomp} completes the proof of the \Cref{Lemma:ExtPumpCompLin}.
\LemmaExternalPumpCompLinMainPaper*

\subsection{The Coverability Grammar} \label{Section:CoverabilityGrammarL3}
The coverability grammar is a Karp-Miller-graph-like construction, with a novel component called ``promises'' which serves to decompose non-linear NGVAS that do not fulfill \perfectnesspumpingnospace.
The construction is defined relative to the over-approximators of reachability values obtainable on the output, for a given input and vice-versa.
We call such approximators post-approximators and pre-approximators.

In this section, we fill in the details missing from \Cref{Section:CoverabilityGrammarMP}.
For the sake of completeness, text and arguments may repeat.

\paragraph*{Approximators.}

The definition of a pre-approximator is that of the post-approximator definition, but in the reverse direction.
For the sake of completeness, we repeat the \emph{pre-approximator} version of the definition, and refer the reader to \Cref{Section:CoverabilityGrammarMP} for post-approximators.
A \emph{pre-approximator for $\anngvas$} is a function $\preapprox:\Nomega^{d}\times(\nonterms\cup\trms)\to\powof{\Nomega^{d}}$ that always outputs a finite set, and has the following three properties.  
Let the input be $(\amarking, \asymbol)\in\Nomega^{d}\times(\nonterms\cup\trms)$.   
The first property is the soundness wrt. information about concrete values and unboundedness: for every output $\amarking\in\preapproxof{\amarkingp,\asymbol}$, we have  
$\amarking\sqsubseteq\outof{\asymbol}$ and  $\omegaof{\amarkingp}\subseteq\omegaof{\amarking}$. 
The second is the over-approximation of the reachability relation: for every run $(\amarking', \arun, \amarkingp')\in\runsof{\asymbol}$ with $\amarkingp'\sqsubseteq\amarkingp$, there is $\amarking\in\preapproxof{\amarkingp, \asymbol}$ with $\amarking'\sqsubseteq\amarking$.
The last property is the compatibility with the derivation relation: for every $\amarking'\in \preapproxof{\amarkingp, \asentform}$ with $\asymbol\to\asentform$, there is $\amarking\in\preapproxof{\amarkingp, \asymbol}$ with $\amarking'\sqsubseteq\amarking$.
Here, we generalize $ \preapprox$ to sentential forms by defining $\preapproxof{\amarking, \asentform.\asymbolp}=\preapproxof{\preapproxof{\amarking, \asentform}, \asymbolp}$.

For the statements we make in this subsection, it will suffice to consider one pair of post- and pre-approximators.
Further into the paper, we will make use of other approximators.
The approximators we use here are the approximators $\natpostapprox,\natpreapprox:\Nomega^{d}\times(\nonterms\cup\trms)\to\powof{\Nomega^{d}}$, as defined in \Cref{Section:CoverabilityGrammarMP}.
Before moving on to the construction of the coverability grammar, we observe that these are indeed approximators.
\LemmaNatPostPreApproximatorsMainPaper*

\begin{proof}
    It is clear that both functions over approximate the reachability output resp. input values.
    Exactness and the correct unboundedness conditions are also clear by definition.
    Now, we show that $\natpostapprox$ fulfills the precision conditions of post-approximators.
    The proof of the corresponding pre-approximation condition for $\natpreapprox$ is similar, and therefore omitted.
    Let $\amarking\in\Nomega^{d}$, $\anonterm\to\asymbol.\asymbolp\in\prods$, $\amarking_{0}\in\natpostapproxof{\amarking, \asymbol}$, and $\amarking_{1}\in\natpostapproxof{\amarking_{0}, \asymbol}$.
    The membership $\amarking_{0}\in\natpostapproxof{\amarking, \asymbol}$ implies a $\amarking_{0}'\in\postfuncNof{\anngvas}{\amarking, \asymbol}$ with $\amarking_{0}\leq\amarking_{0}'$ and $\omegaof{\amarking_{0}}=\omegaof{\amarking_{0}'}$.
    Then, there is a sequence of runs derivable from $\asymbol$ that reach the $\amarking_{0}'[i]$ at counter $i\in[1,d]$ if $\amarking_{0}[i]'\in\N$, and are unboundedly growing at counter $i\in[1,d]$ if $\amarking_{0}'[i]=\omega$. 
    By the monotonicity of the reachability relation, and by applying the same argument, we observe that there must be some $\amarking_{1}'\in\postfuncNof{\anngvas}{\amarking_{0}', \asymbol}$ with $\amarking_{1}\leq\amarking_{1}'$.
    But, by the definition $\postfuncN{\anngvas}$, all runs derivable from $\asymbol.\asymbolp$ must be also considered for the input $\anonterm$, this yields $\amarking'\in\postfuncNof{\anngvas}{\amarking, \anonterm}$ with $\amarking_{1}\leq\amarking'$.
    Then, we can establish the $\omega$-generalization by obtaining $\amarking''\in\natpostapproxof{\amarking, \anonterm}$ with $\amarking''[i]=\amarking_{1}[i]\leq\amarking'[i]$ for all $i\in[1,d]$ with $\amarking'[i]\neq\omega$, and $\amarking''[i]=\omega$ for all $i\in[1, d]$ with $\amarking'[i]=\omega$.
    This yields $\amarking_{1}\sqsubseteq\amarking''$, and concludes the proof.
\end{proof}

\paragraph*{The Coverability Grammar.}

To keep our presentation unified, we treat the inclusion $\wtprodsinit\subseteq\wtprodspump$ as true.
The construction terminates by the same argument as the Karp-Miller graph \cite{KM69}.
For $\awtsymbol\in\wtnonterms$, we only need to consider the components $\awtsymbol.\wtinlabel$, $\awtsymbol.\wtsymlabel$, and $\awtsymbol.\wtoutlabel$ for termination.
Then, the argument is the same as in the proof of \Cref{Lemma:ExtPumpComp}.
We only need the post- and pre- approximations to be computable for the relevant domain of inputs.
\LemmaCGTerminationMainPaper*

First, by the exactness property of the approximators, we observe that the concretely tracked counters in $\anngvas$ are also concretely tracked in $\wtgrammarof{\anngvas, \postapprox, \preapprox}$.
\begin{lemma}\label{Lemma:CGConcrete}
    Let $\wtgrammarof{\otherctxNGVAS{\anngvas}{\amarking, \anonterm, \amarkingp}, \postapprox, \preapprox}=(\wtnonterms, \wtterms, \wtstartnonterm, \wtprods)$ for post- and pre-approximators $\postapprox$ and $\preapprox$, and let $\awtsymbol\in\wtnonterms\cup\wtterms$.
    Then, $\awtsymbol.\wtinlabel, \awtsymbol.\wtprepromise \sqsubseteq \inof{\awtsymbol.\wtsymlabel}$ and $\awtsymbol.\wtoutlabel, \awtsymbol.\wtpostpromise\sqsubseteq \outof{\awtsymbol.\wtsymlabel}$.
\end{lemma}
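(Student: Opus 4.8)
The plan is to prove all four specialization statements simultaneously by induction on the iteration of the coverability grammar construction, that is, on the order in which symbols are added to $\wtnonterms \cup \wtterms$. The base case concerns the start non-terminal $\wtstartnonterm = ((\amarking, \outof{\anonterm}), \anonterm, (\inof{\anonterm}, \amarkingp))$, where all four components are immediate: $\amarking \sqsubseteq \inof{\anonterm}$ and $\amarkingp \sqsubseteq \outof{\anonterm}$ hold because $\otherctxNGVAS{\anngvas}{\amarking, \anonterm, \amarkingp}$ is a variant of $\anngvas$, hence $\amarking \sqsubseteq \inof{\anonterm}$ and $\amarkingp \sqsubseteq \outof{\anonterm}$ are required of variants (see the definition of $\otherctxNGVAS{\anngvas}{\amarking, \anonterm, \amarkingp}$ in \Cref{Section:PumpingMP}), and the remaining two components are literally $\inof{\anonterm}$ and $\outof{\anonterm}$, for which $\sqsubseteq$ is reflexive.

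For the inductive step, I would consider the three ways a fresh symbol can be produced during the construction. First, if $\awtsymbol$ is produced as $((\amarkingpp, \aprommarking''), \anontermp, (\aprommarkingp'', \amarkingppp))$ via a pumping rule from $\wtstartnonterm$, then $\aprommarking'' \in \postapproxof{\amarkingpp, \anontermp}$ gives $\aprommarking'' \sqsubseteq \outof{\anontermp}$ by the exactness property of $\postapprox$, and symmetrically $\aprommarkingp'' \sqsubseteq \inof{\anontermp}$ by exactness of $\preapprox$; the $\wtinlabel$ and $\wtoutlabel$ components are $\amarkingpp$ and $\amarkingppp$, handled by the base case. Second, if $\awtsymbol = ((\amarking_\omega, \aprommarking''), \anonterm, (\aprommarkingp'', \amarkingp_\omega))$ is produced via a pumping rule from a non-start non-terminal, I would use that $\amarking_\omega, \amarkingp_\omega$ arise by raising some coordinates of $\amarking, \amarkingp$ to $\omega$, where by the induction hypothesis $\amarking \sqsubseteq \inof{\anonterm}$ and $\amarkingp \sqsubseteq \outof{\anonterm}$; since $\omegaof{\inof{\anonterm}}, \omegaof{\outof{\anonterm}}$ contain exactly the coordinates that can ever be raised to $\omega$ along a derivation — this is guaranteed because the correct-unboundedness property of the approximators forces $\omega$-coordinates to propagate and the acceleration only accelerates coordinates that have been observed to grow, which for concretely-tracked counters cannot happen by the NGVAS consistency conditions — we still get $\amarking_\omega \sqsubseteq \inof{\anonterm}$ and $\amarkingp_\omega \sqsubseteq \outof{\anonterm}$; the promise components again follow by exactness. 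Third, if $\awtsymbol$ is produced on the right-hand side of a simulation rule $((\amarking, \aprommarking), \anonterm, (\aprommarkingp, \amarkingp)) \to ((\amarking, \aprommarking'), \asymbol, (\aprommarkingp'', \aprommarkingp')).((\aprommarking', \aprommarking''), \asymbolp, (\aprommarkingp', \amarkingp))$, then for the left child I note $\aprommarking' \in \postapproxof{\amarking, \asymbol}$ so $\aprommarking' \sqsubseteq \outof{\asymbol}$, $\aprommarkingp'' \in \preapproxof{\aprommarkingp', \asymbol}$ so $\aprommarkingp'' \sqsubseteq \inof{\asymbol}$, $\aprommarkingp' \in \preapproxof{\amarkingp, \asymbolp}$ so $\aprommarkingp' \sqsubseteq \inof{\asymbolp} = \outof{\asymbol}$ by consistency of $\infun, \outfun$ across the production $\anonterm \to \asymbol.\asymbolp$, and the first coordinate $\amarking$ is $\sqsubseteq \inof{\anonterm} = \inof{\asymbol}$ by the induction hypothesis together with consistency; the right child is symmetric.

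The main obstacle I expect is the careful bookkeeping in the third case: one has to match up, for each of the two children, which generalized marking plays the role of $\wtinlabel$, $\wtpostpromise$, $\wtsymlabel$, $\wtprepromise$, $\wtoutlabel$, and then invoke the right mix of (a) exactness of $\postapprox$ or $\preapprox$, (b) the NGVAS consistency equalities $\inof{\anonterm} = \inof{\asymbol}$, $\outof{\asymbol} = \inof{\asymbolp}$, $\outof{\asymbolp} = \outof{\anonterm}$ for persisting / branching productions (or their $\sqsupseteq / \sqsubseteq$ weakenings for exit productions in the non-branching case), and (c) the induction hypothesis applied to the parent non-terminal. None of these steps is deep, but the notational overhead is substantial, so I would set up a small table of component assignments before checking the four $\sqsubseteq$-inequalities one at a time. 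A secondary subtlety is that a right-hand-side symbol may be identified with an already-existing non-terminal rather than created fresh; in that case the claim for it holds by the induction hypothesis at the earlier stage where it was created, so nothing new is needed.
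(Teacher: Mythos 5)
Your proposal is correct in substance and is the natural elaboration of what the paper leaves implicit: the paper offers no detailed proof of this lemma, only the remark that it follows from the exactness property of the approximators, and your induction over the construction steps (start symbol, pumping rules, simulation rules, with identified symbols covered by the stage at which they were created) is exactly how that remark has to be fleshed out. The base case, the promise components via exactness, and the simulation case via exactness plus the consistency equalities $\inof{\anonterm}=\inof{\asymbol}$, $\outof{\asymbol}=\inof{\asymbolp}$, $\outof{\asymbolp}=\outof{\anonterm}$ all go through as you describe.

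Two points should be tightened. First, in the acceleration case your justification is roundabout: the clean argument is simply that the accelerated coordinates are those where $(\amarking,\amarkingp)$ strictly exceeds $(\amarking',\amarkingp')$ for a predecessor carrying the \emph{same} non-terminal $\anonterm$, and the induction hypothesis applies to both symbols; on any coordinate where $\inof{\anonterm}$ (resp.\ $\outof{\anonterm}$) is concrete, both markings are pinned to that same value, so strict growth is impossible there, hence only $\omega$-coordinates of $\inof{\anonterm}$ resp.\ $\outof{\anonterm}$ get accelerated. Neither the correct-unboundedness property nor the NGVAS consistency conditions are the right thing to cite here; the induction hypothesis (applied also to the predecessor) is. Second, your contingency for exit productions in the non-branching case does not work: from $\amarking\sqsubseteq\inof{\anonterm}$ and $\inof{\asymbol}\sqsubseteq\inof{\anonterm}$ you cannot conclude $\amarking\sqsubseteq\inof{\asymbol}$, so the inherited components of the children of an exit production would not be covered by the weakened $\sqsupseteq/\sqsubseteq$ consistency. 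The way out is that the coverability grammar is only built for the non-linear, hence branching, $\anngvas$ fixed in this section, and in a branching wNGVAS exit productions are subject to the same equality constraints as persisting ones; you should state this restriction explicitly rather than plan to handle the weakened case, which would in fact fail.
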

Since the approximators do not turn $\omega$-inputs to concrete inputs, and unconstrained counters are $\omega$ all across the NGVAS, we know that all unconstrained counters remain $\omega$ at all markings.
\begin{lemma}\label{Lemma:CGUnconstrained}
    Let $\wtgrammarof{\otherctxNGVAS{\anngvas}{\amarking, \anonterm, \amarkingp}, \postapprox, \preapprox}=(\wtnonterms, \wtterms, \wtstartnonterm, \wtprods)$ for post- and pre-approximators $\postapprox$ and $\preapprox$, and let $\awtsymbol\in\wtnonterms\cup\wtterms$.
    Then we have $\unconstrained\subseteq\omegaof{\awtsymbol.\wtinlabel}\cap\omegaof{\awtsymbol.\wtoutlabel}\cap\omegaof{\awtsymbol.\wtpostpromise}\cap\omegaof{\awtsymbol.\wtprepromise}$.
\end{lemma}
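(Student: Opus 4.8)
\textbf{Proof proposal for \Cref{Lemma:CGUnconstrained}.}

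The plan is a routine structural induction on the order in which symbols are added to the coverability grammar during its iterative construction. Recall from \Cref{Lemma:CGConcrete} that every symbol $\awtsymbol\in\wtnonterms\cup\wtterms$ satisfies $\awtsymbol.\wtinlabel,\awtsymbol.\wtprepromise\sqsubseteq\inof{\awtsymbol.\wtsymlabel}$ and $\awtsymbol.\wtoutlabel,\awtsymbol.\wtpostpromise\sqsubseteq\outof{\awtsymbol.\wtsymlabel}$. By the definition of wNGVAS, $\unconstrained\subseteq\omegaof{\inof{\asymbol}}\cap\omegaof{\outof{\asymbol}}$ holds for every $\asymbol\in\nonterms\cup\terms$: this is explicit for non-terminals via the consistency requirements on $\infun,\outfun$ (namely $\unconstrained\subseteq\abdinfoleft\cap\abdinforight$ together with $\unconstrained\subseteq\omegaof{\acontextin}\cap\omegaof{\acontextout}$ and the fact that lower-level context markings are specialized accordingly), and for terminals $\anngvasp$ via $\inof{\anngvasp}=\anngvasp.\acontextin$, $\outof{\anngvasp}=\anngvasp.\acontextout$ and the requirement $\anngvas.\unconstrained\subseteq\anngvasp.\unconstrained$. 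Hence it would suffice to show that the four marking components of $\awtsymbol$ keep \emph{at least} the $\omega$-entries of $\inof{\awtsymbol.\wtsymlabel}$ resp. $\outof{\awtsymbol.\wtsymlabel}$ on the coordinates of $\unconstrained$; but in fact the stronger statement that these components carry $\omega$ \emph{exactly} on $\unconstrained$ or a superset follows directly once we track that the construction never replaces an $\omega$ by a concrete value on a $\unconstrained$-coordinate.

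First I would handle the base case: the start symbol $\wtstartnonterm=((\amarkingpp,\outof{\anontermp}),\anontermp,(\inof{\anontermp},\amarkingppp))$, where $\otherctxNGVAS{\anngvas}{\amarkingpp,\anontermp,\amarkingppp}$ is the variant under consideration. Here $\awtnonterm.\wtpostpromise=\outof{\anontermp}$ and $\awtnonterm.\wtprepromise=\inof{\anontermp}$, which already carry $\omega$ on $\unconstrained$ by the above; and $\awtnonterm.\wtinlabel=\amarkingpp$, $\awtnonterm.\wtoutlabel=\amarkingppp$ carry $\omega$ on $\unconstrained$ because the variant is built with unconstrained set $\omegaof{\amarkingpp}\cap\omegaof{\amarkingppp}$, and any legitimate variant query satisfies $\anngvas.\unconstrained\subseteq\omegaof{\amarkingpp}\cap\omegaof{\amarkingppp}$ (this is guaranteed wherever coverability grammars are invoked, e.g. on inputs from $\infuncdomain$). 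For the inductive step there are three ways a new symbol is created: (a) as a right-hand-side symbol of a simulation rule $((\amarking,\aprommarking),\anonterm,(\aprommarkingp,\amarkingp))\to((\amarking,\aprommarking'),\asymbol,(\aprommarkingp'',\aprommarkingp')).((\aprommarking',\aprommarking''),\asymbolp,(\aprommarkingp',\amarkingp))$; (b) as the target of a pumping rule $\awtnonterm\to((\amarking_\omega,\aprommarking''),\anonterm,(\aprommarkingp'',\amarkingp_\omega))$; (c) as the target of the special first pumping rule $\wtstartnonterm\to((\amarkingpp,\aprommarking''),\anontermp,(\aprommarkingp'',\amarkingppp))$. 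In case (a), the $\wtinlabel$ of the left child is $\amarking$ (inherited from the parent, so $\omega$ on $\unconstrained$ by IH), its $\wtpostpromise$ is $\aprommarking'\in\postapproxof{\amarking,\asymbol}$, which by the correct-unboundedness property of post-approximators satisfies $\omegaof{\amarking}\subseteq\omegaof{\aprommarking'}$, hence carries $\omega$ on $\unconstrained$; symmetrically for its $\wtprepromise$ and $\wtoutlabel$; and likewise for the right child, noting $\aprommarking''\in\postapproxof{\aprommarking',\asymbolp}$ so $\omegaof{\aprommarking'}\subseteq\omegaof{\aprommarking''}$. In case (b), $\amarking_\omega$ is obtained from $\amarking$ by \emph{only adding} $\omega$-entries (never removing them), so $\omegaof{\amarking}\subseteq\omegaof{\amarking_\omega}$, and similarly $\omegaof{\amarkingp}\subseteq\omegaof{\amarkingp_\omega}$; the promise components $\aprommarking''\in\postapproxof{\amarking_\omega,\anonterm}$, $\aprommarkingp''\in\preapproxof{\amarkingp_\omega,\anonterm}$ again inherit all $\omega$'s by correct unboundedness, so they carry $\omega$ on $\unconstrained$. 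Case (c) is identical to (b) with $\amarking_\omega=\amarkingpp$, $\amarkingp_\omega=\amarkingppp$.

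I do not expect any real obstacle here; the only thing to be careful about is that the argument is by induction on the construction order (so that whenever a new right-hand-side symbol references an \emph{already existing} non-terminal, that non-terminal was created earlier and already satisfies the invariant), and that the three approximator properties we use — correct unboundedness for all of $\postapprox,\preapprox$ and the syntactic shape of accelerated markings $\amarking_\omega$ — are invoked in exactly the spots above. I would conclude by remarking that, combined with \Cref{Lemma:CGConcrete}, this shows every marking component of every symbol of $\wtgrammarof{\otherctxNGVAS{\anngvas}{\amarking,\anonterm,\amarkingp},\postapprox,\preapprox}$ carries $\omega$ precisely on a set containing $\unconstrained$ and contained in $\abdinfomid$ (for non-linear $\anngvas$), which is exactly the boundedness shape required to read an NGVAS off the coverability grammar.
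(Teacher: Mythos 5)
Your proof is correct and follows exactly the argument the paper intends (the paper states this lemma without a written proof, as an immediate consequence of the preceding remark that the approximators never turn $\omega$-inputs concrete and that unconstrained counters are $\omega$ throughout the NGVAS); your induction on construction order, using correct unboundedness of $\postapprox,\preapprox$ and the shape of accelerated markings, is just the detailed formalization of that remark. Your flagged caveat that the variant must satisfy $\unconstrained\subseteq\omegaof{\amarkingpp}\cap\omegaof{\amarkingppp}$ is the right observation, and it indeed holds at every invocation of the coverability grammar in the paper.
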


The coverability grammar construction gives us information about pumping derivations, however, the completeness of this information is dependent on the approximation.
This is formulated by \Cref{Lemma:CGPumping}, which follows directly from the construction, and usual Karp-Miller arguments.

\begin{lemma}\label{Lemma:CGPumping}
    Let $\wtgrammarof{\otherctxNGVAS{\anngvas}{\amarkingin, \anonterm, \amarkingout}, \postapprox, \preapprox}=(\wtnonterms, \wtterms, \wtstartnonterm, \wtprods)$ for post- and pre-approximators $\postapprox$ and $\preapprox$.
    For any rule $\awtnonterm\to\awtnontermp\in\wtprodspump$, we have $(\awtnonterm.\inlabel, \awtnonterm.\outlabel)\sqsubseteq(\awtnontermp.\inlabel, \awtnontermp.\outlabel)$, and there are $\amarking, \amarkingp\in\Nomega^{d}$ and $\aword, \awordp\in(\trms\cup\nonterms)^{*}$ where the following holds.
    
    We have $\amarking\sqsubseteq\inof{\awtnonterm.\symlabel}$, $\amarkingp\sqsubseteq\outof{\awtnonterm.\symlabel}$, $(\amarking, \amarkingp)\leq(\awtnonterm.\inlabel, \awtnonterm.\outlabel)$ where $\awtnonterm.\symlabel\to^{*}\aword.(\awtnonterm.\wtsymlabel).\awordp$, $\awtnonterm.\inlabel\in\postapproxof{\amarking, \aword}$, $\awtnonterm.\outlabel\in\preapproxof{\amarkingp, \awordp}$, and $(\amarking, \amarkingp)[i]<(\awtnonterm.\inlabel, \awtnonterm.\outlabel)[i]$ whenever $(\awtnontermp.\inlabel, \awtnontermp.\outlabel)[i]=\omega$ and $(\awtnonterm.\inlabel, \awtnonterm.\outlabel)[i]\neq\omega$ for $i\in[1, 2d]$.
\end{lemma}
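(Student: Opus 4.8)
\textbf{Proof proposal for Lemma~\ref{Lemma:CGPumping}.}

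The plan is to prove the statement by a straightforward induction on the construction of the coverability grammar, tracking exactly when rules get added to $\wtprodspump$. Since $\wtprodspump$ only gains a new production $\awtnonterm\to\awtnontermp$ at the moment we explore an unexplored non-terminal $\awtnonterm$ that is detected to lie in a pumping situation, it suffices to analyze the two cases of the construction in which a $\wtprodspump$-rule is created: the start-symbol case $\awtnonterm=\wtstartnonterm$, and the accelerating case $\awtnonterm\neq\wtstartnonterm$ with a predecessor $\awtnonterm'=((\amarking',\aprommarking'),\anonterm,(\aprommarkingp',\amarkingp'))$ satisfying $(\awtnonterm.\wtinlabel,\awtnonterm.\wtoutlabel)>(\awtnonterm'.\wtinlabel,\awtnonterm'.\wtoutlabel)$.

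For the start-symbol case, I would observe that $\awtnonterm=\wtstartnonterm=((\amarkingin,\outof{\anonterm}),\anonterm,(\inof{\anonterm},\amarkingout))$ and the rules added are $\wtstartnonterm\to((\amarkingin,\aprommarking''),\anonterm,(\aprommarkingp'',\amarkingout))$ for $\aprommarking''\in\postapproxof{\amarkingin,\anonterm}$, $\aprommarkingp''\in\preapproxof{\amarkingout,\anonterm}$. Here $\awtnontermp.\wtinlabel=\awtnonterm.\wtinlabel=\amarkingin$ and $\awtnontermp.\wtoutlabel=\awtnonterm.\wtoutlabel=\amarkingout$, so $(\awtnonterm.\wtinlabel,\awtnonterm.\wtoutlabel)\sqsubseteq(\awtnontermp.\wtinlabel,\awtnontermp.\wtoutlabel)$ holds trivially (with equality). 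We may then take $\amarking=\awtnonterm.\wtinlabel$, $\amarkingp=\awtnonterm.\wtoutlabel$, $\aword=\awordp=\varepsilon$ (so $\awtnonterm.\symlabel\to^{*}\aword.(\awtnonterm.\wtsymlabel).\awordp$ is the empty derivation), and use the convention $\postapproxof{\amarking,\varepsilon}=\amarking$, $\preapproxof{\amarkingp,\varepsilon}=\amarkingp$ together with $\amarkingin\sqsubseteq\inof{\anonterm}$, $\amarkingout\sqsubseteq\outof{\anonterm}$ (the latter from the fact that $\otherctxNGVAS{\anngvas}{\amarkingin,\anonterm,\amarkingout}$ is a well-formed variant). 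The final clause about strict inequalities at $\omega$-positions is vacuous since $(\awtnonterm.\wtinlabel,\awtnonterm.\wtoutlabel)$ and $(\awtnontermp.\wtinlabel,\awtnontermp.\wtoutlabel)$ agree. For the accelerating case, the key step is to reconstruct the derivation witnessing the pumping situation: by how $\awtnonterm$ and its predecessor $\awtnonterm'$ were generated, there is a chain of $\wtprodssim$-productions from $\awtnonterm'$ down to $\awtnonterm$, and projecting this chain onto the $\nonterms\cup\terms$-components yields a derivation $\anonterm\to^{*}w_L.(\awtnonterm.\wtsymlabel).w_R$ in $\anngvas$; moreover the first three components $(\amarking',\ \cdot\ ,\amarkingp')$ of $\awtnonterm'$ were themselves obtained from some concrete-side-compatible $(\amarking,\amarkingp)$ with $\amarking\sqsubseteq\inof{\anonterm}$, $\amarkingp\sqsubseteq\outof{\anonterm}$ by applying $\postapprox$ to $w_L$ and $\preapprox$ to $w_R$. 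Then $\awtnonterm.\wtinlabel\in\postapproxof{\amarking,w_L}$ and $\awtnonterm.\wtoutlabel\in\preapproxof{\amarkingp,w_R}$ by following the same chain with the larger markings. The inequalities $(\amarking,\amarkingp)\le(\awtnonterm.\wtinlabel,\awtnonterm.\wtoutlabel)$ and the strictness at accelerated positions follow from $(\amarking',\amarkingp')<(\awtnonterm.\wtinlabel,\awtnonterm.\wtoutlabel)$ combined with the definition of $\amarking_\omega,\amarkingp_\omega$ (which sets exactly the strictly-increased coordinates to $\omega$), while $(\awtnonterm.\wtinlabel,\awtnonterm.\wtoutlabel)\sqsubseteq(\awtnontermp.\wtinlabel,\awtnontermp.\wtoutlabel)=(\amarking_\omega,\amarkingp_\omega)$ is immediate from that same definition.

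The main obstacle I anticipate is making precise the claim that the chain of $\wtprodssim$-rules between $\awtnonterm'$ and $\awtnonterm$ really does project to an honest derivation $\anonterm\to^{*}w_L.(\awtnonterm.\wtsymlabel).w_R$ of $\anngvas$ with the approximation relations composing correctly along the way — in particular, one must unwind that each $\wtprodssim$-step of the form $((\amarking,\aprommarking),\anonterm,(\aprommarkingp,\amarkingp))\to((\amarking,\aprommarking'),\asymbol,(\aprommarkingp'',\aprommarkingp')).((\aprommarking',\aprommarking''),\asymbolp,(\aprommarkingp',\amarkingp))$ corresponds to the NGVAS rule $\anonterm\to\asymbol.\asymbolp$ and that the promise/label bookkeeping is consistent with the post- and pre-extensions $\postapprox^{\mathit{post}}$, $\preapprox^{\mathit{pre}}$. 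This is essentially the invariant that makes the coverability grammar meaningful, so I would isolate it as a separate sublemma (an invariant on reachable non-terminals $\awtnonterm$ asserting the existence of the witnessing $\amarking,\amarkingp,w_L,w_R$) proved by induction on the construction, and then Lemma~\ref{Lemma:CGPumping} becomes a direct corollary obtained by specializing that invariant to the two points where $\wtprodspump$-rules are introduced. Beyond this, the argument is routine bookkeeping, using only $\sqsubseteq$-monotonicity of the approximators (the precision and over-approximation axioms) and the conventions for the empty word.
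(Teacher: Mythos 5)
Your two-case unwinding (start symbol vs.\ acceleration) together with the idea of isolating an invariant about call chains is exactly what the paper has in mind -- the paper itself offers no proof beyond ``follows directly from the construction'', and your treatment of the start-symbol case (empty words, the convention for $\postapprox$/$\preapprox$ on $\varepsilon$, vacuous strictness clause) is fine. For the acceleration case, the cleanest instantiation of your invariant is to take the witness $(\amarking,\amarkingp)$ to be the predecessor's labels $(\awtnonterm'.\wtinlabel,\awtnonterm'.\wtoutlabel)$ themselves: then $\amarking\sqsubseteq\inof{\awtnonterm.\symlabel}$ is \Cref{Lemma:CGConcrete}, the $\leq$ and the strictness at newly-$\omega$ coordinates come from the pumping-situation comparison and the definition of the acceleration, and the memberships $\awtnonterm.\wtinlabel\in\postapproxof{\amarking,\aword}$, $\awtnonterm.\wtoutlabel\in\preapproxof{\amarkingp,\awordp}$ come from composing the labels along the chain. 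Your phrasing that $\awtnonterm'$'s labels ``were themselves obtained from some $(\amarking,\amarkingp)$'' and that one then follows ``the same chain with the larger markings'' is both unnecessary and dangerous: the approximator axioms (exactness, over-approximation, correct unboundedness, precision) contain no monotonicity in the input marking, so no step of the argument may rely on re-running a chain from a larger start value.

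The genuine gap is your unproven assertion that the chain witnessing ``$\awtnonterm'$ can call $\awtnonterm$'' consists of $\wtprodssim$-productions only. The pumping-situation check, as written, uses the call relation of the grammar constructed so far, and that grammar already contains $\wtprodspump$-rules; a chain passing through an earlier acceleration step does \emph{not} project to a composition of $\postapprox$/$\preapprox$ applications, because an acceleration replaces finite entries by $\omega$ and is not an application of either approximator, and (as above) no axiom lets you bridge such a jump. Restricting to the suffix after the last acceleration does not repair this, since the suffix need not start at the symbol $\awtnonterm.\symlabel$ and its source labels need not satisfy the required $\leq$ and strictness conditions. So your invariant sublemma must either (i) argue that whenever a pumping situation is detected, one can always choose a predecessor and a witnessing chain that uses only $\wtprodssim$-rules (equivalently, that the construction's comparison is intended to range over such chains), or (ii) be strengthened so that it survives intermediate $\wtprodspump$-steps. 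Without one of these, the central membership claims $\awtnonterm.\wtinlabel\in\postapproxof{\amarking,\aword}$ and $\awtnonterm.\wtoutlabel\in\preapproxof{\amarkingp,\awordp}$ are not established in the mixed-chain case.
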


If a coverability grammar $\wtgrammarof{\otherctxNGVAS{\anngvas}{\amarking, \anonterm, \amarkingp}, \postapprox, \preapprox}$ does not contain a non-terminal of the form $(\inof{\anontermpp}, \anontermpp, \outof{\anontermpp})$, we say that it \emph{remains bounded}.
If instead, there is such a non-terminal, we say that the grammar \emph{shows unboundedness}.
We observe that, if the grammar shows unboundedness for complete approximators $\natpostapprox$ and $\natpreapprox$, then \perfectnesspumpingnospace\ holds.
This is similar to the case of the Karp-Miller graph. 
If we use the approximators $\natpostapprox$ and $\natpreapprox$, the information is complete enough to expose the boundedness of $\anngvas$-runs, whenever $\anngvas$ does not have \perfectnesspumpingnospace.
\LemmaCGPumpabilityMainPaper*
\begin{proof}[Proof Sketch]
    Let $\awtnonterm\in\nonterms$ with $\omegaof{\awtnonterm.\wtinlabel}=\omegaof{\awtnonterm.\wtoutlabel}=\abdinfomid$.
    There must be a sequence of rules $[\awtnontermp^{(i)}\to\aword_{i}.\awtnontermp^{(i+1)}.\awordp_{i}]_{i\leq k}$ in $\wtprods$ such that $\awtnontermp^{(0)}.\wtsymlabel=\startnonterm$, $\awtnontermp^{(0)}.\wtinlabel=\acontextin$, $\awtnontermp^{(0)}.\wtoutlabel=\acontextout$, and $\awtnontermp^{(k+1)}=\awtnontermp$.
    By induction on $i\leq k$, we show that there is a sequence of derivations $[\startnonterm\to^{*}\aword_{i, j}.(\awtnonterm^{(i)}.\symlabel).\awordp_{i, j}]_{j\in\N}$ that pumps the $\omega$'s obtained when moving from $\startnonterm$ to $\awtnonterm$.
    The base case is trivial, and in the inductive case for $i+1$, we extend the derivations in the sequence from step $i$ to pump towards the markings of $\awtnonterm^{(i+1)}$.
    We make a case distinction on whether $(\awtnontermp^{(i)}\to\aword_{i}.\awtnontermp^{(i+1)}.\awordp_{i})\in\wtprodssim$.
    If this is the case, then the runs that witness the $\natpostapprox$, $\natpreapprox$, $\postfuncN{\anngvas}$ and $\prefuncN{\anngvas}$ images yield the extending derivations we need.
    If instead $(\awtnontermp^{(i)}\to\aword_{i}.\awtnontermp^{(i+1)}.\awordp_{i})\in\wtprodspump$, then \Cref{Lemma:CGPumping} yields the extending derivations.
\end{proof}

\paragraph*{Capturing the Language.}

The coverability grammar has the purpose of overapproximating $\runsof{\anngvas}$.
To make this notion precise, we define a notion of runs captured by a coverability grammar.
Towards a definition, we develop an annotation relation between the parse trees in a coverability grammar, and the reachability trees of $\anngvas$. 
For the rest of the subsection, fix a coverability grammar $\wtgrammar=\wtgrammarof{\anngvas, \postapprox, \preapprox}=(\wtnonterms, \wtterms, \wtstartnonterm, \wtprods)$ for some post- and pre-approximations $\postapprox$ and $\preapprox$.
We say that a complete derivation tree $\atree$ in $\wtgram$ is a $\wtgrammar$-annotation for a reachability tree $\atree\in\reachtreesof{\anngvas}$, denoted $(\atree, \atreep)\in\cgannof{\wtgrammar}\subseteq\reachtreesof{\anngvas}\times\treesof{\wtgrammar}$ if there is a map $\covproj:\atreep\to\atree$, where the child-relations are respected up to $\wtprodspump$ applications, and the labels of $\atreep$ generalize the labels of $\atree$.
Formally $(\atree, \atreep)\in\cgannof{\wtgrammar}$, if there is a map $\covproj:\atreep\to\atree$ where for all $\anode\in\atreep$ it holds that 
\begin{itemize}
    \item (Generalization) $\anode.\wtsymlabel=\covprojof{\anode}.\symlabel$, $\anode.\inlabel\sqsubseteq\covprojof{\anode}.\wtinlabel, \covprojof{\anode}.\wtprepromise$, and $\anode.\outlabel\sqsubseteq\covprojof{\anode}.\wtoutlabel, \covprojof{\anode}.\wtpostpromise$
    \item (Pumping) If $\anode\in\atreep$ is labeled $\awtnonterm\in\wtnontermspump$, then $\childnodesof{\anode}=\anodep\in\atreep^{1}$, and we have $\covprojof{\anode}=\covprojof{\anodep}$.
    \item (Local Bijection) If $\anode\in\atreep$ is labeled $\awtnonterm\in\wtnontermssim$, then for $\childnodesof{\anode}=\anode_{\lefttag}.\anode_{\righttag}$ and $\childnodesof{\anodep}=\anodep_{\lefttag}.\anodep_{\righttag}$, we have $\covprojof{\anode_{\lefttag}}=\anodep_{\lefttag}$ and $\covprojof{\anode_{\lefttag}}=\anodep_{\righttag}$.
\end{itemize}

We define the set 
\begin{align*}
    \cgreachtreesof{\wtgrammar, \awtnonterm}&=\setcond{\atree\in\reachtreesof{\anngvas}}{\\
    &\hspace{2em}(\atree, \atreep)\in\cgannof{\wtgrammar},\;\nodemarkingof{\atreep.\arootnode}=\awtnonterm}
\end{align*}
to be the set of $\anngvas$-reachability trees that can be $\wtgrammar$-annotated by a tree with the given root $\awtnonterm\in\wtnonterms$.
The runs of a coverability grammar, with the root node $\awtnonterm\in\wtnonterms$, is then defined to be 
$$\cgrunsof{\wtgrammar, \anonterm}=\bigcup_{\atree\in\cgreachtreesof{\wtgrammar, \awtnonterm}}\runsof{\atree}$$
We claim that the runs of $\wtgrammar$ overapproximate the runs of $\anngvas$, while still containing only the derivable runs.
In the following, we show that the runs of $\wtgrammar$ overapproximate the runs $\anngvas$, while still consisting of reaching runs.
The latter inclusion is already clear from the fact that the runs of $\wtgrammar$ from $\wtstartnonterm$ are the runs of $\atree\in\reachtreesof{\anngvas}$ that have the root symbol $\startnonterm$ while going from $\atree.\inlabel\sqsubseteq\acontextin$ to $\atree.\outlabel\sqsubseteq\acontextout$ by (Generalization).
Then, we only need to argue the the former inclusion.
\begin{lemma}\label{Lemma:CGLanguage}
    $\runsof{\anngvas}\subseteq\cgrunsof{\wtgrammar, \wtstartnonterm}\subseteq\setcond{(\amarking, \arun, \amarkingp)\in\runsof{\aword}}{\startnonterm\to^{*}\aword,\;\amarking\sqsubseteq\acontextin, \amarkingp\sqsubseteq\acontextout}$.
\end{lemma}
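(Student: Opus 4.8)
The statement splits into two inclusions. The right-hand inclusion is essentially unwinding definitions: if $(\amarking,\arun,\amarkingp)\in\cgrunsof{\wtgrammar,\wtstartnonterm}$, it lies in $\runsof{\atree}$ for some reachability tree $\atree\in\reachtreesof{\anngvas}$ carrying a $\wtgrammar$-annotation $\atreep$ with $\nodemarkingof{\atreep.\arootnode}=\wtstartnonterm$. Since (Local Bijection) and (Pumping) force $\covproj$ to be downward locally surjective, the root of $\atreep$ must project onto the root of $\atree$, so (Generalization) gives $\atree.\symlabel=\wtstartnonterm.\wtsymlabel=\startnonterm$, $\atree.\inlabel\sqsubseteq\wtstartnonterm.\wtinlabel=\acontextin$ and $\atree.\outlabel\sqsubseteq\wtstartnonterm.\wtoutlabel=\acontextout$. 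As the terminal yield $\aword$ of $\atree$ is derivable, $\startnonterm\to^{*}\aword$, and $(\amarking,\arun,\amarkingp)\in\runsof{\aword}$ by the definition of $\runsof{\atree}$, the element lies in the right-hand set.

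For $\runsof{\anngvas}\subseteq\cgrunsof{\wtgrammar,\wtstartnonterm}$, fix $(\amarking,\arun,\amarkingp)\in\runsof{\anngvas}$. First I would realise the run by a reachability tree: there is a derivation $\startnonterm\to^{*}\aword\in\terms^{*}$ with $(\amarking,\arun,\amarkingp)\in\runsof{\aword}$, and decorating the parse tree with the intermediate markings of the run (all in $\N^{d}$) gives $\atree\in\reachtreesof{\anngvas}$ with root symbol $\startnonterm$, $\atree.\inlabel=\amarking\sqsubseteq\acontextin$, $\atree.\outlabel=\amarkingp\sqsubseteq\acontextout$, and $(\amarking,\arun,\amarkingp)\in\runsof{\atree}$. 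It then suffices to construct a $\wtgrammar$-annotation $\atreep$ of $\atree$ rooted at $\wtstartnonterm$: this places $\atree$ in $\cgreachtreesof{\wtgrammar,\wtstartnonterm}$, hence $(\amarking,\arun,\amarkingp)\in\runsof{\atree}\subseteq\cgrunsof{\wtgrammar,\wtstartnonterm}$.

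The annotation is built by a recursion that walks $\atree$ top-down while following derivations of $\wtgrammar$, maintaining the invariant that the $\wtgrammar$-symbol $\awtsymbol$ currently attached to an $\atree$-node with label $(\amarkingpp,\asymbol,\amarkingppp)$ satisfies $\awtsymbol.\wtsymlabel=\asymbol$, $\amarkingpp\sqsubseteq\awtsymbol.\wtinlabel$, $\amarkingpp\sqsubseteq\awtsymbol.\wtprepromise$, $\amarkingppp\sqsubseteq\awtsymbol.\wtoutlabel$, $\amarkingppp\sqsubseteq\awtsymbol.\wtpostpromise$. At the root I use that $\wtstartnonterm=((\acontextin,\outof{\startnonterm}),\startnonterm,(\inof{\startnonterm},\acontextout))$ is by construction always in a pumping situation, so $\wtgrammar$ has rules $\wtstartnonterm\to((\acontextin,\aprommarking),\startnonterm,(\aprommarkingp,\acontextout))$ for all $\aprommarking\in\postapproxof{\acontextin,\startnonterm}$, $\aprommarkingp\in\preapproxof{\acontextout,\startnonterm}$; the over-approximation property of $\postapprox$ applied to the derivation realising $\atree$ (with $\amarking\sqsubseteq\acontextin$), and of $\preapprox$, supply choices with $\amarkingp\sqsubseteq\aprommarking$ and $\amarking\sqsubseteq\aprommarkingp$, while consistency of the NGVAS ($\acontextin\sqsubseteq\inof{\startnonterm}$, $\acontextout\sqsubseteq\outof{\startnonterm}$) makes $\wtstartnonterm$ itself satisfy the invariant. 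In the recursive step, for an $\atree$-node $\anode$ with symbol $\asymbol$ and attached $\wtgrammar$-symbol $\awtsymbol$ there are three cases: if $\asymbol\in\terms$ then $\anode$ is a leaf, $\awtsymbol\in\wtterms$, and a matching leaf of $\atreep$ finishes this branch (the invariant is exactly (Generalization)); if $\awtsymbol\in\wtnontermspump$, only pumping rules are available, and we apply one, recursing on the same $\atree$-node with a child whose promises are chosen by over-approximation on the subtree at $\anode$ — this preserves the invariant because the accelerated coordinates only turn finite entries into $\omega$, and the chain of consecutive pumping nodes is finite of length at most $2d$ since each such step strictly enlarges $\omegaof{(\awtsymbol.\wtinlabel,\awtsymbol.\wtoutlabel)}\subseteq[1,2d]$; if $\awtsymbol\in\wtnontermssim$ and $\asymbol\in\nonterms$, $\anode$ uses a rule $\asymbol\to\asymbol'.\asymbolp'$ with children labeled $(\amarkingpp,\asymbol',\amarking_{mid})$ and $(\amarking_{mid},\asymbolp',\amarkingppp)$, and we must produce a $\wtprodssim$-rule from $\awtsymbol$ mirroring $\asymbol\to\asymbol'.\asymbolp'$ whose two right-hand-side symbols satisfy the invariant for the children.

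The main obstacle is this last case. The $\wtprodssim$-rules from $\awtsymbol$ range over all intermediate markings $\aprommarking'\in\postapproxof{\awtsymbol.\wtinlabel,\asymbol'}$, $\aprommarkingp'\in\preapproxof{\awtsymbol.\wtoutlabel,\asymbolp'}$ and child promises $\aprommarking''\in\postapproxof{\aprommarking',\asymbolp'}$, $\aprommarkingp''\in\preapproxof{\aprommarkingp',\asymbol'}$ with $\aprommarking''\sqsubseteq\awtsymbol.\wtpostpromise$, $\aprommarkingp''\sqsubseteq\awtsymbol.\wtprepromise$ and $\aprommarking'\compwith\aprommarkingp'$; to keep the invariant we additionally need $\amarking_{mid}\sqsubseteq\aprommarking',\aprommarkingp'$, $\amarkingppp\sqsubseteq\aprommarking''$ and $\amarkingpp\sqsubseteq\aprommarkingp''$. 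The over-approximation property, applied to the subtrees rooted at the two children, yields candidates above the concrete reachable markings, so $\aprommarking'$, $\aprommarkingp'$ over-approximating $\amarking_{mid}$ exist and are automatically compatible (both dominate a concrete value); the delicate point is $\aprommarking''\sqsubseteq\awtsymbol.\wtpostpromise$ and $\aprommarkingp''\sqsubseteq\awtsymbol.\wtprepromise$, i.e.\ that the children's approximations lie below the parent's promises. This is exactly what the precision property of approximators is for — composing the children's approximations stays below some element of $\postapproxof{\awtsymbol.\wtinlabel,\asymbol}$ — and I expect the proof to need a strengthened invariant asserting that the promise currently attached to $\anode$ is already a tight enough approximation of the subtree below $\anode$, so that ``some element'' can be taken to be that promise. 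This is precisely the design purpose of the promise component of the coverability grammar, so I expect the argument to go through, but formulating the invariant to be both maintainable and strong enough for the side conditions is the part requiring care. Once $\atreep$ is built, $(\atree,\atreep)\in\cgannof{\wtgrammar}$ with root $\wtstartnonterm$ and the inclusion follows as above; the structural facts \Cref{Lemma:CGConcrete}, \Cref{Lemma:CGUnconstrained} and \Cref{Lemma:CGPumping} can be used to streamline the verification of the side conditions.
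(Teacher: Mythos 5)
Your right-hand inclusion and the overall plan (realise the run as a reachability tree, then build a $\wtgrammar$-annotation rooted at $\wtstartnonterm$) match the paper. But the left-hand inclusion has a genuine gap, and it is exactly the point you flag and then leave open: in the $\wtprodssim$ step you must find child promises with $\aprommarking''\sqsubseteq\awtsymbol.\wtpostpromise$ and $\aprommarkingp''\sqsubseteq\awtsymbol.\wtprepromise$, where the parent's promises were fixed at an earlier stage of your top-down recursion. The precision property does not close this: it only says that the composition of the children's approximations is dominated by \emph{some} element of $\postapproxof{\awtsymbol.\wtinlabel,\asymbol}$, not by the particular promise you already committed to for $\awtsymbol$. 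A greedy, locally chosen promise can therefore fail to dominate what the subtree later forces, and "I expect a strengthened invariant exists" is the entire content of the lemma, not a detail.

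The paper's fix is to make that invariant concrete by precomputing the promises globally before building the annotation. One first shows (the paper's \Cref{Lemma:CGDirectionalAnnotation}, by induction on tree height) that every reachability tree admits a \emph{forward-annotation} and a \emph{backward-annotation} starting from any $\sqsubseteq$-larger root input resp.\ output; it is in this bottom-up construction that precision is used, so that by definition a parent's annotated output dominates the composed approximations of its children. Then, in the analogue of your recursion (the paper's \Cref{Lemma:CGAnnotation}, an induction on the number of coverability-grammar non-terminals callable from the current symbol, with the pumping nodes as the induction step — this plays the role of your bounded-$\omega$-chain argument), the promise components attached to a node are \emph{defined} to be the forward- and backward-annotation values at that node, so the side conditions of the $\wtprodssim$-rules, including $\aprommarking''\sqsubseteq\awtsymbol.\wtpostpromise$ and $\aprommarkingp''\sqsubseteq\awtsymbol.\wtprepromise$, hold by construction rather than having to be re-derived locally. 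If you replace your on-the-fly promise choices by these precomputed directional annotations (and keep your saturation/recursion to assemble the tree and your treatment of pumping leaves), your argument becomes the paper's proof; as written, the decisive step is missing.
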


We define three notions of annotations that constitute the parts of a $\wtgrammar$-annotation, namely forward-, and backward-annotations.
We show \Cref{Lemma:CGLanguage} in two steps.
In the first step, we show that forward-, and backward-annotations exist for any reachability tree.
Then, we use these annotations to construct a $\wtgrammar$-annotation for reachability trees.

A marked parse tree $\atreep$ with labels in $\setcond{(\amarking, \asymbol, \amarkingp)\in\Nomega^{d}\times(\nonterms\cup\trms)\times\Nomega^{d}}{\amarkingp\in\postapproxof{\amarking, \asymbol}}$ is called a \emph{forward-annotation} of a $\anngvas$-reachability tree $\atree$, if there is a bijection $\covproj:\atreep\to\atree$ that preserves the child-relation where the following holds.
First, it must hold that $\covprojof{\anode}.\inlabel\sqsubseteq\anode.\inlabel$, $\covprojof{\anode}.\symlabel=\anode.\symlabel$ and $\covprojof{\anode}.\outlabel\sqsubseteq\anodep.\symlabel$.
Second, for the right child $\anodep$ of $\anode$, it must hold that $\anodep.\outlabel\sqsubseteq\anode.\outlabel$.
The \emph{backward-annotation} is defined similarly, with only the direction reversed and the labels stemming from $\setcond{(\amarking, \asymbol, \amarkingp)\in\Nomega^{d}\times(\nonterms\cup\trms)\times\Nomega^{d}}{\amarking\in\preapproxof{\amarkingp, \asymbol}}$.
We argue that all reachability trees have forward-, and backward-annotations that start from any suitable root node.

\begin{lemma}\label{Lemma:CGDirectionalAnnotation}
    Let $\atree\in\reachtreesof{\anngvas}$, and let $\amarking, \amarkingp\in\Nomega^{d}$ with $\atree.\inlabel\sqsubseteq\amarking$, and $\atree.\outlabel\sqsubseteq\amarkingp$.
    Then it has a forward-annotation $\atreep_{fwd}$ with $\atreep_{fwd}.\inlabel=\amarking$, and a backward annotiation $\atreep_{bck}$ with $\atreep_{bck}.\outlabel=\amarkingp$.
\end{lemma}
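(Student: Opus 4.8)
\textbf{Proof plan for Lemma~\ref{Lemma:CGDirectionalAnnotation}.} The plan is to build the forward-annotation by a top-down induction on the structure of the reachability tree $\atree$, propagating post-approximation markings from the root downward and to the right. By symmetry (reversing all directions), the backward-annotation case is identical and I would only remark on it. So fix $\atree\in\reachtreesof{\anngvas}$ together with $\amarking\sqsubseteq\inof{\atree.\symlabel}$ such that $\atree.\inlabel\sqsubseteq\amarking$; I want to produce a marked parse tree $\atreep_{fwd}$ with the same underlying parse tree, root input label $\amarking$, all labels of the required form $(\amarking',\asymbol,\amarkingp')$ with $\amarkingp'\in\postapproxof{\amarking',\asymbol}$, and a child-relation-preserving bijection $\covproj$ satisfying the three stated conditions.

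First I would handle the base case: $\atree$ is a single leaf labeled $(\atree.\inlabel,\anngvasp,\atree.\outlabel)$ with $(\atree.\inlabel,\arun,\atree.\outlabel)\in\runsof{\anngvasp}$ for some $\arun$. By the over-approximation property of the post-approximator $\postapprox$ applied to the derivation $\anngvasp\to^{\ast}\arun$ (here $\arun$ is literally a terminal word since $\anngvasp$ has nesting depth handled by the NGVAS semantics), together with $\atree.\inlabel\sqsubseteq\amarking$, there is $\amarkingp'\in\postapproxof{\amarking,\anngvasp}$ with $\atree.\outlabel\sqsubseteq\amarkingp'$. Set $\atreep_{fwd}$ to be the single leaf $(\amarking,\anngvasp,\amarkingp')$. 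For the induction step, suppose the root of $\atree$ applies a rule $\anonterm\to\asymbol.\asymbolp$ with left subtree $\atree_{\lefttag}$ and right subtree $\atree_{\righttag}$, where $\atree_{\lefttag}.\inlabel=\atree.\inlabel\sqsubseteq\amarking$ and $\atree_{\righttag}.\inlabel=\atree_{\lefttag}.\outlabel$, $\atree_{\righttag}.\outlabel=\atree.\outlabel$. I apply the induction hypothesis to $\atree_{\lefttag}$ with the marking $\amarking$ (noting $\amarking\sqsubseteq\inof{\asymbol}=\inof{\anonterm}$ by consistency), obtaining a forward-annotation $\atreep_{\lefttag}$ with root $(\amarking,\asymbol,\amarkingp_{\lefttag})$ where $\amarkingp_{\lefttag}\in\postapproxof{\amarking,\asymbol}$ and $\atree_{\lefttag}.\outlabel\sqsubseteq\amarkingp_{\lefttag}$. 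Then I apply the induction hypothesis to $\atree_{\righttag}$ with the marking $\amarkingp_{\lefttag}$ (using $\atree_{\righttag}.\inlabel=\atree_{\lefttag}.\outlabel\sqsubseteq\amarkingp_{\lefttag}$, and $\amarkingp_{\lefttag}\sqsubseteq\outof{\asymbol}=\inof{\asymbolp}$ by the exactness of $\postapprox$ and consistency), obtaining $\atreep_{\righttag}$ with root $(\amarkingp_{\lefttag},\asymbolp,\amarkingp_{\righttag})$, $\amarkingp_{\righttag}\in\postapproxof{\amarkingp_{\lefttag},\asymbolp}$ and $\atree_{\righttag}.\outlabel=\atree.\outlabel\sqsubseteq\amarkingp_{\righttag}$. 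I then assemble $\atreep_{fwd}$ with root $(\amarking,\anonterm,\amarkingp_{\righttag})$ and the two subtrees $\atreep_{\lefttag},\atreep_{\righttag}$; the root output marking $\amarkingp_{\righttag}$ lies in $\postapproxof{\amarking,\anonterm}$ because $\amarkingp_{\righttag}\in\postapproxof{\postapproxof{\amarking,\asymbol},\asymbolp}$ and by the precision property of $\postapprox$ there is $\amarkingp\in\postapproxof{\amarking,\anonterm}$ with $\amarkingp_{\righttag}\sqsubseteq\amarkingp$ --- here I would take $\amarkingp$ rather than $\amarkingp_{\righttag}$ as the root label so that the label is genuinely of the post-approximator form, adjusting the bookkeeping so that $\atree.\outlabel\sqsubseteq\amarkingp$ still holds by transitivity of $\sqsubseteq$. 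The map $\covproj$ is the obvious identity on tree positions; the three conditions are immediate from the $\sqsubseteq$-inequalities just collected.

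The main obstacle I anticipate is the mild mismatch between what the precision property of $\postapprox$ gives (a marking in $\postapproxof{\amarking,\anonterm}$ dominating the composed marking) and what the definition of forward-annotation demands for internal nodes --- one has to be careful that at a node with children, the node's output label is a valid $\postapprox$-image while simultaneously dominating the right child's output label, and that the right-child condition $\anodep.\outlabel\sqsubseteq\anode.\outlabel$ is met. This is resolved exactly by applying precision once per internal node and keeping all the $\sqsubseteq$ relations chained; nothing deep is involved, but it is the place where the proof could silently go wrong if one conflated "the marking reached by composing" with "a marking in the image of $\postapprox$". The backward case is obtained verbatim by replacing $\postapprox$ with $\preapprox$, "left" with "right", and propagating from the root downward and to the left, using the symmetric over-approximation and precision properties of a pre-approximator; I would state this as "by an entirely symmetric argument" rather than repeat it.
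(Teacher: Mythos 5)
Your proof matches the paper's: it proceeds by induction on the tree, uses the over-approximation property at leaves, chains the left annotation's output marking into the induction hypothesis for the right subtree, and invokes the precision property at each internal node to replace the composed marking by a genuine element of $\postapproxof{\amarking, \anonterm}$ dominating it, with the backward case symmetric. The only cosmetic difference is the extra side condition $\amarking\sqsubseteq\inof{\atree.\symlabel}$ you carry through the induction, which neither the paper's argument nor, in substance, yours actually needs.
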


\begin{proof}
    The proof is by induction on the height of $\atree$.
    We only show the forward case, since the backward case is similiar.
    The base case is trivial, as we only need to choose a $\amarking'$ such that $\amarking'\in\postapproxof{\amarking, \atree.\symlabel}$ and $\atree.\outlabel\sqsubseteq\amarking'$.
    Such a $\amarking'$ must exist by the definitions of the reachability tree and post-approximation.
    We move on to the inductive case.
    Let $\atree_{\lefttag}$ and $\atree_{\righttag}$ be the subtrees rooted on the left and right children of $\atree$, and let $\anode$ be the root node of $\atree$.
    Let $\amarking\in\Nomega^{d}$ with $\atree.\inlabel\sqsubseteq\amarking$.
    By the induction hypothesis, we get the forward-annotation $\atreep_{\lefttag}$ of $\atree_{\lefttag}$ with $\atree_{\lefttag}.\inlabel=\amarking$.
    The forward-annotation property yields $\atree_{\lefttag}.\outlabel\sqsubseteq\atreep_{\lefttag}.\outlabel$, and we have $\atree_{\righttag}.\inlabel=\atree_{\lefttag}.\outlabel$ by the definition of the reachability tree.
    Then, we can call the induction hypothesis with the tree $\atree_{\righttag}$ and the marking $\atreep_{\lefttag}.\outlabel$.
    This yields a forward-annotation $\atreep_{\righttag}$ of $\atree_{\righttag}$, where $\atreep_{\righttag}.\inlabel=\atreep_{\lefttag}.\outlabel$ and $\atree_{\righttag}.\outlabel\sqsubseteq\atreep_{\righttag}.\outlabel$.
    Note that the definition of forward-annotation yields $\atreep_{\lefttag}.\outlabel\in\postapproxof{\atreep_{\lefttag}.\inlabel, \atreep_{\lefttag}.\symlabel}$, and $\atreep_{\righttag}.\outlabel\in\postapproxof{\atreep_{\righttag}.\inlabel, \atreep_{\righttag}.\symlabel}$.
    There must be a rule $\anode.\symlabel\to(\atree_{\lefttag}.\symlabel).(\atree_{\righttag}.\symlabel)=(\atreep_{\lefttag}.\symlabel).(\atreep_{\righttag}.\symlabel)$ by the definition of the reachability tree.
    The precision property yields a $\amarkingp\in\postapproxof{\amarking, \anonterm}$ with $\atreep_{\righttag}.\outlabel\sqsubseteq\amarkingp$.
    It can be readily checked that the labeled tree $\atreep$ with nodes $\anode\cup 0.\atreep_{\lefttag}\cup 1.\atreep_{\righttag}$, where $\atreep_{\lefttag}$ and $\atreep_{\righttag}$ are the left- and right-subtrees, and $\nodemarkingof{\anode}=(\amarking, \atree.\symlabel, \amarkingp)$ fulfills the conditions for a forward-annotation of $\atree$.
\end{proof}

Now, we show the following result.
Note that this implies \Cref{Lemma:CGLanguage}, since $\wtstartnonterm\in\wtnontermspump$.

\begin{lemma}\label{Lemma:CGAnnotation}
    Let $\atree\in\reachtreesof{\anngvas}$, and let $\awtnonterm\in\wtnontermspump$ such that $\atree.\wtinlabel\sqsubseteq\awtnonterm.\inlabel$, $\atree.\symlabel=\awtnonterm.\wtsymlabel$ and $\atree.\outlabel\sqsubseteq\awtnonterm.\wtoutlabel$.
    Then, there is a $\wtgrammar$-annotation $\atreep$ of $\atree$ with $\nodemarkingof{\atree.\arootnode}=\awtnonterm$.
\end{lemma}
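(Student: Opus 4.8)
The plan is to prove, by induction on $\heightof{\atree}$ (with a secondary measure discussed below), the following stronger claim: for every $\atree\in\reachtreesof{\anngvas}$, every forward-annotation $\atreep_{\mathsf{fw}}$ of $\atree$, every backward-annotation $\atreep_{\mathsf{bw}}$ of $\atree$, and every symbol $\awtsymbol$ of $\wtgrammar$ with $\awtsymbol.\wtsymlabel=\atree.\symlabel$, $\awtsymbol.\wtinlabel=\atreep_{\mathsf{fw}}.\inlabel$, $\awtsymbol.\wtpostpromise=\atreep_{\mathsf{fw}}.\outlabel$, $\awtsymbol.\wtoutlabel=\atreep_{\mathsf{bw}}.\outlabel$ and $\awtsymbol.\wtprepromise=\atreep_{\mathsf{bw}}.\inlabel$, there is a $\wtgrammar$-annotation of $\atree$ whose root is labelled $\awtsymbol$. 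The four $\sqsubseteq$-inequalities demanded by the generalization clause at the root are then immediate from the defining clauses of forward/backward-annotation. To deduce \Cref{Lemma:CGAnnotation} for $\awtnonterm=\wtstartnonterm$ (the only case needed, when proving \Cref{Lemma:CGLanguage}): given $\atree$ with $\atree.\symlabel=\startnonterm$, $\atree.\inlabel\sqsubseteq\acontextin$, $\atree.\outlabel\sqsubseteq\acontextout$, apply \Cref{Lemma:CGDirectionalAnnotation} to obtain $\atreep_{\mathsf{fw}}$ with input $\acontextin$ and $\atreep_{\mathsf{bw}}$ with output $\acontextout$; pick the $\wtprodspump$-rule out of $\wtstartnonterm$ whose right-hand side symbol $\awtsymbol'$ carries the post-promise $\atreep_{\mathsf{fw}}.\outlabel\in\postapproxof{\acontextin,\startnonterm}$ and pre-promise $\atreep_{\mathsf{bw}}.\inlabel\in\preapproxof{\acontextout,\startnonterm}$; apply the strengthened claim to $(\atree,\atreep_{\mathsf{fw}},\atreep_{\mathsf{bw}},\awtsymbol')$ and prepend the root node labelled $\wtstartnonterm$, with $\covproj$ sending both nodes to $\atree.\arootnode$. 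The generalization clause at this root uses exactly the consistency inequalities $\acontextin\sqsubseteq\inof{\startnonterm}$ and $\acontextout\sqsubseteq\outof{\startnonterm}$.

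\textbf{Base and pumping cases.} If $\awtsymbol$ is a terminal of $\wtgrammar$ then $\atree$ is a reachability-tree leaf, and the single node labelled $\awtsymbol$ with the evident $\covproj$ is the required annotation (the pumping and local-bijection clauses are vacuous). If $\awtsymbol\in\wtnontermspump$, then $\atree$ is internal and $\awtsymbol$ has an outgoing $\wtprodspump$-rule $\awtsymbol\to\awtsymbol'$ whose right-hand side has $\wtinlabel$ and $\wtoutlabel$ obtained from $\awtsymbol.\wtinlabel$, $\awtsymbol.\wtoutlabel$ by $\omega$-acceleration; since acceleration only adds $\omega$'s, \Cref{Lemma:CGDirectionalAnnotation} yields fresh forward/backward-annotations of $\atree$ with those accelerated markings, whose roots pin down legal post/pre-promises for $\awtsymbol'$ (so the rule with that right-hand side is indeed present). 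Apply the induction hypothesis to $\awtsymbol'$ with the new annotations, prepend the root labelled $\awtsymbol$, and let $\covproj$ collapse the root onto the child; the pumping clause holds by construction, and the generalization clause at the root holds because the strengthened hypotheses provide $\atree.\inlabel\sqsubseteq\awtsymbol.\wtinlabel,\awtsymbol.\wtprepromise$ and $\atree.\outlabel\sqsubseteq\awtsymbol.\wtoutlabel,\awtsymbol.\wtpostpromise$.

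\textbf{Simulation case.} If $\awtsymbol\in\wtnontermssim$, let $\anonterm\to\asymbol.\asymbolp$ be the grammar rule at $\atree.\arootnode$ and $\atree_{\lefttag},\atree_{\righttag}$ the two subtrees; the matching subtrees of $\atreep_{\mathsf{fw}}$ and $\atreep_{\mathsf{bw}}$ are forward/backward-annotations of $\atree_{\lefttag},\atree_{\righttag}$. Put $\aprommarking'=\atreep_{\mathsf{fw},\lefttag}.\outlabel$, $\aprommarking''=\atreep_{\mathsf{fw},\righttag}.\outlabel$, $\aprommarkingp'=\atreep_{\mathsf{bw},\righttag}.\inlabel$, $\aprommarkingp''=\atreep_{\mathsf{bw},\lefttag}.\inlabel$. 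One checks that this quadruple meets all side conditions of the $\wtprodssim$-construction: the $\postapprox$/$\preapprox$ memberships and the inequalities $\aprommarking''\sqsubseteq\awtsymbol.\wtpostpromise$, $\aprommarkingp''\sqsubseteq\awtsymbol.\wtprepromise$ are the defining clauses of forward/backward-annotation (right child $\sqsubseteq$ parent, resp. left child $\sqsubseteq$ parent), while $\aprommarking'\compwith\aprommarkingp'$ holds because both markings are specializations of the single concrete marking $\atree_{\lefttag}.\outlabel=\atree_{\righttag}.\inlabel$, hence agree on every coordinate where both are finite. Therefore the rule $\awtsymbol\to((\awtsymbol.\wtinlabel,\aprommarking'),\asymbol,(\aprommarkingp'',\aprommarkingp')).((\aprommarking',\aprommarking''),\asymbolp,(\aprommarkingp',\awtsymbol.\wtoutlabel))$ is in $\wtprodssim$; applying the induction hypothesis to each of its right-hand-side symbols (with the respective subtree annotations) and gluing the two results under the root labelled $\awtsymbol$ gives an annotation whose local-bijection clause holds by construction.

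\textbf{Main obstacle.} Two points need care. The promise-compatibility conditions in the simulation case are the crux; threading the global forward- and backward-annotations through the induction is exactly what turns them into bookkeeping already discharged by the annotation definitions, and choosing the strengthened hypothesis in the shape above is the key design decision. Secondly, the pumping case does not reduce $\heightof{\atree}$, so the induction should be on the pair $(\heightof{\atree},\, d-\#\{\text{finite coordinates of }(\awtsymbol.\wtinlabel,\awtsymbol.\wtoutlabel)\})$ — a genuine pumping step strictly increases the number of $\omega$'s, since a $\wtprodspump$-rule is emitted only on a strict increase at a finite coordinate, so degenerate pumping loops do not occur; this is the same bounded-acceleration phenomenon underlying termination of the construction (\Cref{Lemma:CGTermination}). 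A routine final check is that every right-hand-side symbol used above genuinely occurs in the already-built $\wtgrammar$ (fresh or reused), which holds for any rule the construction emits, the only exclusion being $\wtstartnonterm$ on right-hand sides — harmless, as the top-level argument only uses $\wtstartnonterm$ at the very root.
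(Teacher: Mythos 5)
Your proposal is correct in substance but takes a genuinely different route from the paper. The paper fixes one forward/backward annotation pair at the top (at the markings of the pump rule out of $\awtnonterm$) and then runs a single induction on the number of coverability-grammar non-terminals callable from $\awtnonterm$: it grows a ``half-annotation'' through the $\wtnontermssim$ part, saturates it, and closes off the $\wtnontermspump$-labelled leaves by the induction hypothesis, the strict decrease coming from the fact that a pump non-terminal cannot call itself (\Cref{Lemma:CGSelfCall}). You instead thread the forward/backward annotations through the recursion itself, strengthening the statement so that the coverability symbol's five components coincide with the annotation roots; this makes the promise side conditions of the $\wtprodssim$ rules (the $\sqsubseteq$-inequalities and $\compwith$) fall out of the annotation definitions, and it dispenses with the half-annotation/saturation machinery — a cleaner bookkeeping, and it also silently supplies the promise-specialization facts that the root Generalization clause needs but that the lemma's hypotheses do not state (which is why your restriction of the final statement to $\awtnonterm=\wtstartnonterm$, whose promises are $\inof{\anontermp}$, $\outof{\anontermp}$, is the honest instance; you should say explicitly that this is the only instance the paper uses).

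Two points need repair, both local. First, your secondary measure is oriented the wrong way: adding $\omega$'s makes $d-\#\{\text{finite coordinates}\}$ grow, and the pair $(\awtsymbol.\wtinlabel,\awtsymbol.\wtoutlabel)$ has $2d$ coordinates; take the number of non-$\omega$ coordinates of the pair (bounded by $2d$, strictly decreasing) as the second lexicographic component. Second, and more substantively, your claim that every emitted $\wtprodspump$ rule strictly increases the $\omega$-count is not forced by the construction as literally written: the pumping comparison $(\amarking,\amarkingp)>(\amarking',\amarkingp')$ can be strict only at a coordinate where the current node already carries $\omega$ (gained from the approximators on the way down), in which case the acceleration adds nothing. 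This is exactly the reading under which \Cref{Lemma:CGTermination} holds, so your appeal to it is defensible, but if you do not want to lean on that, replace the secondary component by the paper's measure — the number of coverability non-terminals callable from the current symbol — which strictly decreases along every pump rule (the child cannot call the pump parent, else the parent would call itself, contradicting \Cref{Lemma:CGSelfCall}) and requires no reasoning about $\omega$'s at all. Finally, note that the parent/left-child input-label agreement you use in the simulation case is a property of the annotations built in \Cref{Lemma:CGDirectionalAnnotation} rather than of the bare forward/backward-annotation definition; either add it to your strengthened hypothesis or quantify only over annotations produced by that construction.
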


\begin{proof}
    Let $\atree\in\reachtreesof{\anngvas}$, $\awtnonterm\in\wtnontermspump$, $\atree.\inlabel\sqsubseteq\awtnonterm.\wtinlabel$, $\atree.\symlabel=\awtnonterm.\wtsymlabel$ and $\atree.\outlabel\sqsubseteq\awtnonterm.\wtoutlabel$.
    The proof is by induction on the number of non-terminals that can be called by $\awtnonterm$ in $\wtgrammar$.
    We construct the annotation top down, and whenever we can, call the induction hypothesis.
    The base case and the inductive cases are similar, so we only handle the inductive case.
    By the construction of the coverability grammar, there are $\amarking, \amarkingp\in\Nomega^{d}$ with $\awtnonterm.\wtinlabel\sqsubseteq\amarking$, $\awtnonterm.\wtoutlabel\sqsubseteq\amarkingp$, where (a) for all $(\awtnonterm\to\awtnontermp)\in\wtprods$, $\awtnontermp.\wtinlabel=\amarking$, $\awtnontermp.\wtoutlabel=\amarkingp$, and (b) for all $\aprommarking\in\postapproxof{\amarking, \awtnontermp.\symlabel}$, $\aprommarkingp\in\preapproxof{\amarkingp, \awtnontermp.\symlabel}$, we have $(\awtnonterm\to\awtnontermp)\in\wtprods$ where $\awtnontermp.\wtpostpromise=\aprommarking$ and $\awtnontermp.\wtprepromise=\aprommarkingp$.
    \Cref{Lemma:CGDirectionalAnnotation} applies for these $\amarking, \amarkingp\in\Nomega^{d}$,  since $\atree.\inlabel\sqsubseteq\awtnonterm.\wtinlabel\sqsubseteq\amarking$ and $\atree.\outlabel\sqsubseteq\awtnonterm.\wtoutlabel\sqsubseteq\amarkingp$.
    By \Cref{Lemma:CGDirectionalAnnotation}, we obtain forward- and backward-annotations $\atreep_{fwd}$ and $\atreep_{bck}$ of $\atree$ with $\atreep_{fwd}.\inlabel=\amarking$ and $\atreep_{bck}.\outlabel=\amarkingp$.
    Let $\pi_{fwd}:\atreep_{fwd}\to\atree$, $\pi_{bck}:\atreep_{bck}\to\atree$ be the bijections that witness the annotations, and $\gamma_{fwd}:\atree\to\atreep_{fwd}$ and $\gamma_{bck}:\atree\to\atreep_{bck}$ the inverses of these bijections.
    For the purposes of this proof, we call the pair $(\atreep, \covproj)$ consisting of a derivation tree $\atreep$, and a map $\covproj:\atreep\to\atree$ a \emph{half-annotation}, if $\covprojof{\atreep.\arootnode}=\atree.\arootnode$, all non-leaf nodes $\anode\in\atreep$ have (Generalization), (Pumping), (Local Bijection),  all leaf nodes $\anode\in\atreep$ have (Generalization), for $\covprojof{\anode}=\anodepp$, we have 
    \begin{align*}
        \nodemarkingof{\anode}&=((\gamma_{fwd}(\anodepp).\inlabel, \gamma_{fwd}(\anodepp).\outlabel), \\
        &\hspace{4em}\anodepp.\symlabel, (\gamma_{bck}(\anodepp).\inlabel, \gamma_{bck}(\anodepp).\outlabel))
    \end{align*}
    and $\nodemarkingof{\anode}\in\wtnontermspump$ implies that $\nodemarkingof{\anode}$ can call less non-terminals than $\awtnonterm$.
    We call it a saturated-annotation, if there is no leaf $\anode\in\atreep$ with $\nodemarkingof{\anode}\in\wtnontermssim$.
    The existence of a saturated-annotation concludes the proof of \Cref{Lemma:CGAnnotation}: In this case, for all leaves $\anode\in\atreep$ we have $\nodemarkingof{\anode}\in\wtterms$ or $\nodemarkingof{\anode}\in\wtnontermspump$ where $\nodemarkingof{\anode}$ can call less non-terminals than $\awtnonterm$.
    For all leaves where the latter case holds, the induction hypothesis applies, and we can extend the annotation towards a complete $\wtgrammar$-annotation.
    
    The construct a saturated-annotation inductively.
    First, we show that there is a half-annotation.
    Then, we show given a half-annotation $(\atreep, \covproj)$, it is a saturated-annotation, or there is a half-annotation $(\atreep', \covproj')$ that has $\covprojof{\atreep}\subsetneq\covprojof{\atreep'}$.
    A simple argument by contradiction shows the existence of a saturated-annotation.
    Otherwise, applying the extention to the initial half-annotation must yield a half-iteration where all leaves are labeled $\wtterms$, which contradicts our assumption, or, there must be an infinite sequence of sets $\covprojof{\atreep_{0}}\subsetneq\covprojof{\atreep_{1}}\ldots$ that are all subsets of a finite set $\atree$.
    Now, we construct a half-annotation.
    Let $\atreep$ consist of a root node $\anode$ and a child $\anodep$, where $\nodemarkingof{\anode}=\awtnonterm$, and $\nodemarkingof{\anodep}=((\atree_{fwd}.\inlabel, \atree_{fwd}.\outlabel), \atree.\symlabel, (\atree_{bck}.\inlabel, \atree_{bck}.\outlabel))$.
    Let $\covproj:\atreep\to\atree$ with $\covprojof{\anode}=\covprojof{\anodep}=\atree.\arootnode$.
    Since $\atree_{fwd}.\inlabel=\amarking$, $\atree_{bck}.\outlabel=\amarkingp$, and $\awtnonterm$ as given in the premise, we know that (Generalization) holds everywhere.
    We also know that (Pumping) and (Local Bijection) holds for $\anode$.
    It is clear that, if $\nodemarkingof{\anodep}\in\wtnontermspump$, then $\anodep$ cannot call $\awtnonterm$ and $\nodemarkingof{\anodep}$ by the nature of the pumping rules.  
    However, $\awtnonterm$ can call all non-terminals $\nodemarkingof{\anodep}$ can.
    Thus, $\nodemarkingof{\anodep}$ can call strictly less non-terminals, showing that $(\atreep, \covproj)$ is indeed a half-annotation.
    Now let $(\atreep, \covproj)$ be a half-annotation.
    Let $\anode\in\atreep$ with $\nodemarkingof{\anode}=\awtnontermp\in\wtnontermssim$, if there is no such $\anode$, then $(\atreep, \covproj)$ is already a saturated-annotation.
    Let $\anodep=\covprojof{\anode}$.
    We know that 
    \begin{align*}
        \nodemarkingof{\anodep}&=((\gamma_{fwd}(\anodep).\inlabel, \gamma_{fwd}(\anodep).\outlabel),\\
        &\hspace{4em}\anode.\symlabel, (\gamma_{bck}(\anodep).\inlabel, \gamma_{bck}(\anodep).\outlabel)).
    \end{align*}
    Let $\childnodesof{\anodep}=\anodep_{\lefttag}.\anodep_{\righttag}$.
    The definition of the forward- and backward-annotations, along with the rules in the $\awtnontermp\in\wtnontermssim$ case, already imply that there are $\awtsymbol, \awtsymbolp\in\wtnonterms\cup\wtterms$, where $\awtnonterm\to\awtsymbol^{\lefttag}.\awtsymbolp^{\righttag}$ and 
    \begin{align*}
        \awtsymbol^{dir}&=((\gamma_{fwd}(\anodep_{dir}).\inlabel, \gamma_{fwd}(\anodep_{dir}).\outlabel), \atree.\symlabel,\\
        &(\gamma_{bck}(\anodep_{dir}).\inlabel, \gamma_{bck}(\anodep_{dir}).\outlabel))
    \end{align*}
    for $dir\in\set{\lefttag, \righttag}$.
    We extend $\atreep$ by two nodes $\anode_{\lefttag}$ and $\anode_{\righttag}$, with $\nodemarkingof{\anode_{dir}}=\awtsymbol^{dir}$ and $\covprojof{\anode_{dir}}=\anodep_{dir}$ for $dir\in\set{\lefttag, \righttag}$.
    The condition (Generalization), (Pumping), and (Local Bijection) clearly hold for all non-leaf nodes, and (Generalization) holds for the new leaves by the definition of the forward- and backward-annotations.
    If $\awtsymbol^{dir}\in\wtnontermspump$ for some $dir\in\set{\lefttag, \righttag}$, it is clear by an argument similar to the construction of the initial half-annotation that $\awtsymbol^{dir}$ can call less non-terminals than $\awtnonterm$.
    Then, the extention is also a half-annotation.
    Clearly, we have extended the image of $\covproj$.
    This concludes the proof.
\end{proof}
\newcommand{\wtomegafun}{\Omega_{cg}}
\newcommand{\wtomegaof}[1]{\wtomegafun(#1)}
\newcommand{\dimofN}{\mathsf{u}}

\subsection{The Pumpability Decomposition, Non-Linear Case}
This section is dedicaded to proving the specification of $\refinepump$ for non-linear NGVAS $\anngvas$.
Our goal is to show that we can construct a decomposition of relevant $\otherctxNGVAS{\anngvas}{\amarking, \anonterm, \amarkingp}$ for $(\amarking, \anonterm, \amarkingp)\in\Nomega^{d}\times\nonterms\times\Nomega^{d}$, given a coverability grammar that exposes boundedness.
This is \Cref{Lemma:DecompGivenCG} from the extended paper.
\LemmaDecompGivenCGMainPaper*
For the rest of this section, fix a $\amarking, \amarkingp\in\Nomega^{d}$ and $\anonterm\in\nonterms$ with $\amarking\sqsubseteq\inof{\anonterm}$, $\amarkingp\sqsubseteq\outof{\anonterm}$, and $\unconstrained\subseteq\omegaof{\amarking}, \omegaof{\amarkingp}$.
Also fix a coverability grammar $\wtgrammarof{\otherctxNGVAS{\anngvas}{\amarking, \anonterm, \amarkingp}, \postapprox, \preapprox}=(\wtnonterms, \wtterms, \wtstartnonterm, \wtprods)$.
To show \Cref{Lemma:DecompGivenCG}, we show \Cref{Lemma:DecompGivenCGLowLevel} that implies it.
We make a few definitions.
For $\awtsymbol\in\wtnonterms$, we let $\nonterms_{\awtsymbol}^{\calltag}\subseteq\wtnonterms\cup\wtterms$ be the set of coverability grammar symbols that can be called by $\awtsymbol$.
Further define $\wtinfun, \wtoutfun:\wtnonterms\to\Nomega^{d}$, $\wtsymfun:\wtnonterms\to\Nomega^{d}$ and $\wtomegafun:\wtnonterms\to\powof{[1,d]}$ as 
\begin{align*}
    \wtinof{\awtsymbol}&=\awtsymbol.\wtinlabel\sqcap\awtsymbol.\wtprepromise,\\
    \wtoutof{\awtsymbol}&=\awtsymbol.\wtoutlabel\sqcap\awtsymbol.\wtpostpromise,\\
    \wtsymof{\awtsymbol}&=\asymbol,\\
    \wtomegaof{\awtsymbol}&=\omegaof{\wtinof{\awtsymbol}}\cap\omegaof{\wtoutof{\awtsymbol}}
\end{align*}
for all $\awtsymbol\in\wtnonterms$.
With these definitions at hand, we are ready to state \Cref{Lemma:DecompGivenCGLowLevel}.
\begin{lemma}\label{Lemma:DecompGivenCGLowLevel}
    Let $\wtgrammarof{\otherctxNGVAS{\anngvas}{\amarking, \anonterm, \amarkingp}, \postapprox, \preapprox}$ remain bounded.
    Then, for each $\awtsymbol\in\wtnonterms\cup\wtterms$, we can construct the decomposition $\set{\anngvas_{\awtsymbol}}$ of $\otherctxNGVAS{\anngvas}{\amarking, \anonterm, \amarkingp}$, with $\recrankof{\anngvas_{\awtsymbol}}<\recrankof{\anngvas}$, where 
    \begin{align*}
        \runsof{\anngvas_{\awtsymbol}}&=\cgrunsof{\otherctxNGVAS{\anngvas}{\amarking, \anonterm, \amarkingp}, \awtsymbol}\\
        \anngvas_{\awtsymbol}.\restrictions&=\Z^{d}\\
        \awtsymbol&\in\wtnonterms\\
        \anngvas_{\awtsymbol}.\unconstrained&=\wtomegaof{\awtsymbol}\\
        \anngvas_{\awtsymbol}.\acontextin&=\wtinof{\awtsymbol}\\
        \anngvas_{\awtsymbol}.\acontextout&=\wtoutof{\awtsymbol}
    \end{align*}
\end{lemma}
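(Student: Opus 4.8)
The plan is to prove Lemma~\ref{Lemma:DecompGivenCGLowLevel} by turning the coverability grammar $\wtgrammarof{\otherctxNGVAS{\anngvas}{\amarking, \anonterm, \amarkingp}, \postapprox, \preapprox}$ into a nesting of NGVAS that mirrors the SCC structure of $\wtgrammar$. First I would fix the call graph on $\wtnonterms\cup\wtterms$: nodes are coverability symbols, there is an edge $\awtsymbol\to\awtsymbolp$ whenever some production of $\wtgrammar$ has $\awtsymbol$ on the left and $\awtsymbolp$ in the right-hand side, and I would compute its strongly connected components. For each $\awtsymbol$, $\anngvas_{\awtsymbol}$ will be the NGVAS whose underlying grammar is the restriction of $\wtgrammar$ to $\sccof{\awtsymbol}$ (with $\awtsymbol$ as start symbol), whose child terminals are the NGVAS $\anngvas_{\awtsymbolp}$ built for the symbols $\awtsymbolp$ lying in strictly lower SCCs (that is, the symbols on the right-hand side of a production that leave $\sccof{\awtsymbol}$), whose context information is $(\wtinof{\awtsymbol}, \wtoutof{\awtsymbol})$, whose restriction is $\Z^{d}$ when $\awtsymbol\in\wtnonterms$ (and the base/period data of $\wtsymof{\awtsymbol}$ lifted appropriately when $\awtsymbol\in\wtterms$), and whose unconstrained set is $\wtomegaof{\awtsymbol}=\omegaof{\wtinof{\awtsymbol}}\cap\omegaof{\wtoutof{\awtsymbol}}$. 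The construction is by induction on the height of $\awtsymbol$ in the SCC DAG, so the childNGVAS are already available; termination of $\wtgrammar$ (Lemma~\ref{Lemma:CGTermination}) makes this a finite object.

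The second step is to establish the three assertions about $\anngvas_{\awtsymbol}$. For the boundedness information one has to check that the generalized markings in a single SCC of $\wtgrammar$ are compatible in the sense the NGVAS definition demands: the key fact, stated in the main paper, is that for any production $\awtsymbol^{(0)}\to\awtsymbol^{(1)}.\awtsymbol^{(2)}\in\wtprodssim$ inside one SCC one has $\omegaof{\awtsymbol^{(i)}.\wtinlabel}=\omegaof{\awtsymbol^{(i)}.\wtprepromise}$ and $\omegaof{\awtsymbol^{(i)}.\wtoutlabel}=\omegaof{\awtsymbol^{(i)}.\wtpostpromise}$ are constant across $i$, so $\wtinof{-}$ and $\wtoutof{-}$ have a fixed $\omega$-set $I$ resp.\ $O$ on that SCC; this is exactly what is needed to define the $\infun,\outfun$ functions (and the sets $\abdinfoleft=\abdinforight=\abdinfomid$) for a non-linear NGVAS, and it forces $\anngvas_{\awtsymbol}.\unconstrained=I\cap O=\wtomegaof{\awtsymbol}$. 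The promise mechanism is precisely what buys this alignment, which a Karp--Miller graph would not give. For the run equality $\runsof{\anngvas_{\awtsymbol}}=\cgrunsof{\otherctxNGVAS{\anngvas}{\amarking, \anonterm, \amarkingp}, \awtsymbol}$ I would argue by induction on nesting depth using the characterisation of $\cgrunsof{-}$ via $\wtgrammar$-annotations ($\cgannof{\wtgrammar}$): a parse tree of $\anngvas_{\awtsymbol}$ with its childNGVAS runs plugged in is exactly a $\wtgrammar$-annotated reachability tree rooted at $\awtsymbol$, where the $\wtprodspump$-productions get absorbed into the $\Z^{d}$-restriction of the nodes in the same SCC (since the context information on both ends of a pump production is $\sqsubseteq$-related, a pump production does not constrain the run and can be collapsed). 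Strong connectedness of each $\anngvas_{\awtsymbol}$ holds by construction of the SCCs, and usefulness of its non-terminals follows from the fact that $\wtgrammar$ only ever adds symbols it then explores, so every $\wtgrammar$-non-terminal can derive a terminal word via the top-down exploration.

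The third step is head-domination: $\anngvas_{\awtsymbol}\headdom\otherctxNGVAS{\anngvas}{\amarking, \anonterm, \amarkingp}$. By the inductive definition of $\headdom$ it suffices to show (i) $\anngvas_{\awtsymbolp}\headdom\otherctxNGVAS{\anngvas}{\amarking, \anonterm, \amarkingp}$ for every childNGVAS $\anngvasp=\anngvas_{\awtsymbolp}$, which is the induction hypothesis when $\awtsymbolp\in\wtnonterms$ lies in a lower SCC and, when $\awtsymbolp\in\wtterms$, follows because $\anngvas_{\awtsymbolp}$ is a perfect child obtained from an original child of $\otherctxNGVAS{\anngvas}{\amarking, \anonterm, \amarkingp}$ with a further-restricted context (so its rank does not rise), and (ii) $\lrankof{\anngvas_{\awtsymbol}, \awtnonterm} < \lrankof{\otherctxNGVAS{\anngvas}{\amarking, \anonterm, \amarkingp}}$ for every non-terminal $\awtnonterm$ of $\anngvas_{\awtsymbol}$. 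This last inequality is the crux and is where the assumption that the grammar \emph{remains bounded} (the negation of \perfectnesspumpingnospace\ under this approximation) is used: since no node of the form $(\inof{\awtnonterm'}, \awtnonterm', \outof{\awtnonterm'})$ appears in $\wtgrammar$, on every SCC of $\wtgrammar$ at least one of the relevant counters in $\abdinfomid\setminus\unconstrained$ stays non-$\omega$ throughout (an analogue of Lemma~\ref{Lemma:LinearExtPumpingBoundedness} for the coverability grammar), so the number of concretely-tracked counters $|\anngvas_{\awtsymbol}.\abdinfoleft|+|\anngvas_{\awtsymbol}.\abdinforight|$ strictly exceeds that of $\otherctxNGVAS{\anngvas}{\amarking, \anonterm, \amarkingp}$; combined with the fact that every cycle effect of $\anngvas_{\awtsymbol}$ can be imitated in $\otherctxNGVAS{\anngvas}{\amarking, \anonterm, \amarkingp}$ (so $\dim\cyclespaceof{\anngvas_{\awtsymbol},\awtnonterm}\le\dim\cyclespaceof{\otherctxNGVAS{\anngvas}{\amarking, \anonterm, \amarkingp}}$), one concludes the strict drop in local rank. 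The main obstacle I expect is precisely establishing this ``every SCC of $\wtgrammar$ kills a counter'' statement rigorously — tracking how $\postapprox,\preapprox$ propagate $\omega$-entries along productions and arguing that a new $\omega$ cannot appear without the acceleration of Lemma~\ref{Lemma:CGPumping}, which in turn would produce the forbidden node — together with the bookkeeping needed to collapse the $\wtprodspump$-productions in the run-equality argument.
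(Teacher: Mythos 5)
Your plan is essentially the paper's proof: the paper also builds $\anngvas_{\awtsymbol}$ by induction over the call structure of the coverability grammar, restricting the grammar to $\sccof{\awtsymbol}$, using the NGVAS of lower symbols as terminals, taking context $(\wtinof{\awtsymbol},\wtoutof{\awtsymbol})$ and unconstrained set $\wtomegaof{\awtsymbol}$, proving the run equality in both directions via $\wtgrammar$-annotations, and deriving head domination from cycle-space inclusion together with the strict drop in the number of $\omega$-counters forced by boundedness of the grammar (the children being handled by the induction hypothesis resp.\ perfectness of the original children).

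Three small corrections to your sketch. First, $\wtprodspump$-productions never connect two nodes of the same SCC: acceleration strictly adds $\omega$'s and $\omega$-sets only grow along calls, so these rules always cross SCC boundaries; in the paper they give rise to trivial one-non-terminal NGVAS whose exit rules point to the accelerated successor and whose boundedness functions are taken from the \emph{successor's} labels (which carry the new $\omega$'s) — a detail your uniform SCC construction has to reproduce, rather than ``absorbing'' pump rules inside an SCC. Second, usefulness of the coverability non-terminals does not come for free: the promise-compatibility constraints can leave symbols non-productive, so the construction must tolerate empty realizations; this is harmless because the lemma only needs weak NGVAS and head domination, with strongness restored by the later cleaning step, but your claim that every non-terminal derives a terminal word is not justified. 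Third, the fact you single out as the main obstacle — that on each SCC some counter of $\abdinfomid\setminus\unconstrained$ stays concrete on the input or the output side — is immediate from the hypothesis that the grammar remains bounded combined with the exactness and correct-unboundedness properties of the approximators (every label specializes $\inof{-}$ resp.\ $\outof{-}$ and keeps $\unconstrained$ at $\omega$); no analogue of the Karp--Miller boundedness lemma is required. The genuinely laborious parts of the paper's proof lie elsewhere: the NGVAS-consistency checks based on the invariance of $\omega$-sets inside an SCC, and the two-directional run-equality argument.
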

We argue that this implies \Cref{Lemma:DecompGivenCG}.
\begin{proof}[Proof of \Cref{Lemma:DecompGivenCG}]
    Using \Cref{Lemma:DecompGivenCGLowLevel}, we get a $\anngvas_{\wtstartnonterm}$, with smaller non-linear rank than $\otherctxNGVAS{\anngvas}{\amarking, \anonterm, \amarkingp}$, and 
    $\runsof{\anngvas_{\wtstartnonterm}}=\cgrunsof{\otherctxNGVAS{\anngvas}{\amarking, \anonterm, \amarkingp}, \wtstartnonterm}$.
    By \Cref{Lemma:CGLanguage}, we get that 
    \begin{align*}
        &\runsof{\otherctxNGVAS{\anngvas}{\amarking, \anonterm, \amarkingp}}\subseteq\cgrunsof{\otherctxNGVAS{\anngvas}{\amarking, \anonterm, \amarkingp}, \wtstartnonterm}\subseteq\\
        &\setcond{(\amarking', \arun, \amarkingp')\in\runsof{\aword}}{\anonterm\to^{*}\aword, \;\amarking'\sqsubseteq\amarking,\;\amarkingp'\sqsubseteq\amarkingp}.
    \end{align*}
    Since $\runsof{\otherctxNGVAS{\anngvas}{\amarking, \anonterm, \amarkingp}}$ does not impose a restriction not imposed by $\anngvas_{\wtstartnonterm}$ we get $\runsof{\otherctxNGVAS{\anngvas}{\amarking, \anonterm, \amarkingp}}=\runsof{\anngvas_{\wtstartnonterm}}$.
    We show the deconstruction conditions (i)-(iv).
    We have $\wtinof{\wtstartnonterm}\sqsubseteq \amarking$ and $\wtoutof{\wtstartnonterm}\sqsubseteq \amarkingp$ by the definition of the coverability grammar.
    Since $\omegaof{\amarking}\cap\omegaof{\amarkingp}\subseteq\omegaof{\wtinof{\wtstartnonterm}}\cap\omegaof{\wtoutof{\wtstartnonterm}}$ by \Cref{Lemma:CGUnconstrained}, and $\anngvas_{\wtstartnonterm}.\restrictions=\Z^{d}$ by \Cref{Lemma:DecompGivenCGLowLevel} as well, deconstruction conditions hold.
\end{proof}
Now we show \Cref{Lemma:DecompGivenCGLowLevel}.
Just as in the previous sections, we first lay out the construction, and then prove that it is correct.
\paragraph*{The Construction.} 
The NGVAS $\agram_{\awtsymbol}$ is constructed inductively on $\cardof{\nonterms_{\awtsymbol}^{\calltag}}$.
As a common component across the construction, let $\acontext_{\awtsymbol}=(\wtinof{\awtsymbol}, \wtoutof{\awtsymbol})$ for $\awtsymbol\in\wtnonterms\cup\wtterms$.
We proceed with the base case $\cardof{\nonterms_{\awtsymbol}^{\calltag}}=0$, which implies $\awtsymbol\in\wtterms$.
For $\awtsymbol.\wtsymlabel=\anngvasp$, we let
\begin{align*}
    \anngvas_{\awtsymbol}=(\anngvasp.\agram, \acontext_{\awtsymbol}, \anngvasp.\restrictions, \wtomegaof{\awtsymbol}, \anngvasp.\abdinfo)
\end{align*}
have the same components as $\anngvasp$, up to the context information and the unconstrained counters.
Note that $\acontext_{\awtsymbol}\sqsubseteq\anngvasp.\acontext$.
For the inductive case, we have $\cardof{\nonterms_{\awtsymbol}^{\calltag}}>0$, and thus $\awtsymbol=\awtnonterm\in\wtnonterms$.
We distinguish between the cases $\awtnonterm\in\wtnontermspump$ and $\awtnonterm\in\wtnontermssim$.
In both cases, we have 
\begin{align*}
    \anngvas_{\awtnonterm}&=(\agram_{\awtnonterm}, \acontext_{\awtnonterm}, \Z^{d}, \wtomegaof{\awtnonterm}, \abdinfo^{\awtnonterm})\\
    \agram_{\awtnonterm}&=(\sccof{\awtnonterm}, \trms_{\awtnonterm}, \awtnonterm, \prods_{\awtnonterm})
\end{align*}
but we handle the definitions of components $\prods_{\awtnonterm}$, $\trms_{\awtnonterm}$ and $\abdinfo^{\awtnonterm}$ differently between the cases.
First consider the case $\awtnonterm\in\wtnontermspump$.
Here, we have 
\begin{align*}
    \prods_{\awtnonterm}&=\setcond{\awtnonterm\to\anngvas_{\awtnontermp}}{(\awtnonterm\to\awtnontermp)\in\wtprods}\\
    \trms_{\awtnonterm}&=\setcond{\anngvas_{\awtnontermp}}{(\awtnonterm\to\awtnontermp)\in\wtprods}\\
    \abdinfo&=(\omegaof{\awtnontermp.\wtinlabel}, \omegaof{\awtnontermp.\wtoutlabel}, \infun_{\awtnonterm}, \outfun_{\awtnonterm})
\end{align*}
where $\infun_{\awtnonterm}(\awtnonterm)=\awtnontermp.\wtinlabel$ and $\outfun_{\awtnonterm}(\awtnonterm)=\awtnontermp.\wtoutlabel$ for some $\awtnontermp\in\wtnonterms$ with $(\awtnonterm\to\awtnontermp)\in\wtprods$.
Note that the choice of $\awtnontermp\in\wtprods$ does not matter here, since all such $\awtnontermp\in\wtnonterms$ agree on the $\wtinlabel$ and $\wtoutlabel$ components by the construction of the coverability grammar.
Also note that our weak-NGVAS definition only allows rules that produce two symbols, and we violate this rule here.
This is a notational choice: we could make the construction strictly formally correct by appending a base-case NGVAS $\anngvas_{\varepsilon}$ with fitting context information to the right hand side of rules in $\prods_{\awtnonterm}$,  where $\anngvas_{\varepsilon}$ only produces $\varepsilon\in\updates^{*}$.
We ommit this component to keep the presentation clear.
Now consider the case $\awtnonterm\in\wtnontermspump$.
In this case, we let $\abdinfo=(\omegaof{\wtinof{\awtnonterm}}, \omegaof{\wtoutof{\awtnonterm}}, \wtinfun, \wtoutfun)$, 
\begin{align*}
    \prods_{\awtnonterm}&=\setcond{\awtnontermp\to\awordp}{(\awtnontermp\to\aword)\in\wtprods,\; \\
&\hspace{4em}\awtnontermp\in\sccof{\awtnonterm},\;\awordp\in\realizationof{}{\aword}},
\end{align*}
and $\trms_{\awtnonterm}=\setcond{\aterm'\in\realizationof{}{\awtsymbol}}{\awtsymbol\in\nonterms_{\awtnonterm}^{\calltag}\setminus\sccof{\awtnonterm}}$, defined with the help of a homeomorphism $\realization{}:(\set{\awtnonterm}\cup\nonterms_{\awtnonterm}^{\calltag})^{*}\to\ngvas^{*}$.
The homeomorphism $\realization{}$ is defined by its base cases $\realizationof{}{\awtsymbol}=\adecomp_{\awtsymbol}$ for all $\awtsymbol\in\nonterms_{\awtnonterm}^{\calltag}\setminus\sccof{\awtnonterm}$, and $\realizationof{}{\awtnontermp}=\set{\awtnontermp}$ for all $\awtnontermp\in\sccof{\awtnonterm}$.

\paragraph*{Proof.}
Before we move on to the proof of \Cref{Lemma:DecompGivenCGLowLevel}, we first show some properties relating to the unboundedness structure in the coverability grammar.
First, we observe that the call-structure imposes certain monotonity conditions on the components of $\awtnonterm\in\wtnonterms$.
Namely, as we apply $\wtprodssim$ rules, then the $\wtinlabel$-$\wtoutlabel$ components get generalized, and $\wtpostpromise$-$\wtprepromise$ have an upper bound on the set of their unbounded counters.
\begin{lemma}\label{Lemma:CGCallMonotonicity}
    Let $\awtnonterm\in\wtnonterms$, $\awtsymbol\in(\wtnonterms\cup\wtterms)$, and let $\awtnonterm\to\aword.\awtsymbol.\awordp\in\wtprodssim$ for some $\aword, \awordp\in(\wtnonterms\cup\wtterms)^{1}\cup(\wtnonterms\cup\wtterms)^{1}$.
    Then, 
    \begin{align*}
        \omegaof{\awtnonterm.\wtinlabel}\subseteq\omegaof{\awtsymbol.\wtinlabel}&\subseteq\omegaof{\awtsymbol.\wtpostpromise}\subseteq\omegaof{\awtnonterm.\wtpostpromise}\\
        \omegaof{\awtnonterm.\wtoutlabel}\subseteq\omegaof{\awtsymbol.\wtoutlabel}&\subseteq\omegaof{\awtsymbol.\wtprepromise}\subseteq\omegaof{\awtnonterm.\wtprepromise}
    \end{align*}
\end{lemma}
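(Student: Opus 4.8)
The statement concerns how the four markings attached to a coverability-grammar symbol propagate along a $\wtprodssim$-derivation step. The plan is to unfold the definition of the $\wtprodssim$-rules in the coverability grammar construction and read off the required containments directly. Recall that a rule $\awtnonterm\to\awtsymbol^{\lefttag}.\awtsymbolp^{\righttag}\in\wtprodssim$ (for a production $\anonterm\to\asymbol.\asymbolp$ of $\anngvas$) is added precisely when there are $\aprommarking'\in\postapproxof{\amarking, \asymbol}$, $\aprommarking''\in\postapproxof{\aprommarking', \asymbolp}$, $\aprommarkingp'\in\preapproxof{\amarkingp, \asymbolp}$, $\aprommarkingp''\in\preapproxof{\aprommarkingp', \asymbol}$ with $\aprommarking''\sqsubseteq\aprommarking$, $\aprommarkingp''\sqsubseteq\aprommarkingp$, and $\aprommarking'\compwith\aprommarkingp'$, and then the left child has label $((\amarking, \aprommarking'), \asymbol, (\aprommarkingp'', \aprommarkingp'))$ and the right child has label $((\aprommarking', \aprommarking''), \asymbolp, (\aprommarkingp', \amarkingp))$. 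So for $\awtsymbol = \awtsymbol^{\lefttag}$ we have $\awtsymbol.\wtinlabel = \amarking = \awtnonterm.\wtinlabel$, $\awtsymbol.\wtpostpromise = \aprommarking'$, $\awtsymbol.\wtoutlabel = \aprommarkingp''$, $\awtsymbol.\wtprepromise = \aprommarkingp'$, and for $\awtsymbol = \awtsymbolp^{\righttag}$ we have $\awtsymbol.\wtinlabel = \aprommarking'$, $\awtsymbol.\wtpostpromise = \aprommarking''$, $\awtsymbol.\wtoutlabel = \amarkingp = \awtnonterm.\wtoutlabel$, $\awtsymbol.\wtprepromise = \aprommarkingp'$.

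\textbf{Key steps.} I would carry out a case split on whether $\awtsymbol$ is the left or the right child in the derivation step $\awtnonterm\to\aword.\awtsymbol.\awordp$ (here $\aword.\awtsymbol.\awordp$ has length two, so $\awtsymbol$ is one of the two children). In each case, the first chain
$\omegaof{\awtnonterm.\wtinlabel}\subseteq\omegaof{\awtsymbol.\wtinlabel}$ follows either from equality (left child) or from the third property of the post-approximator $\omegaof{\amarking}\subseteq\omegaof{\aprommarking'}$ with $\aprommarking'\in\postapproxof{\amarking,\asymbol}$ (right child). The middle containment $\omegaof{\awtsymbol.\wtinlabel}\subseteq\omegaof{\awtsymbol.\wtpostpromise}$ is again the third (correct-unboundedness) property of the approximator applied to $\awtsymbol.\wtpostpromise\in\postapproxof{\awtsymbol.\wtinlabel, \awtsymbol.\wtsymlabel}$, which holds because the construction always produces the $\wtpostpromise$ component as a member of $\postapproxof{\cdot}$ of the $\wtinlabel$ component. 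Finally the last containment $\omegaof{\awtsymbol.\wtpostpromise}\subseteq\omegaof{\awtnonterm.\wtpostpromise}$ follows from the side condition $\aprommarking''\sqsubseteq\aprommarking$ built into the rule (for the right child $\awtsymbol.\wtpostpromise=\aprommarking''$ and $\awtnonterm.\wtpostpromise=\aprommarking$, so $\sqsubseteq$ gives the $\omega$-containment) combined, for the left child, with the fact that $\awtsymbol.\wtpostpromise=\aprommarking'$ and $\aprommarking''\in\postapproxof{\aprommarking',\asymbolp}$, so $\omegaof{\aprommarking'}\subseteq\omegaof{\aprommarking''}\subseteq\omegaof{\aprommarking}$. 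The $\wtoutlabel$/$\wtprepromise$ chain is entirely symmetric, using the pre-approximator properties and the side condition $\aprommarkingp''\sqsubseteq\aprommarkingp$.

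\textbf{Main obstacle.} There is no deep difficulty here — the lemma is a bookkeeping statement about the construction. The one point requiring care is that the statement is phrased for a single derivation step but the chains must be tracked across the two possible positions of $\awtsymbol$ (left vs. right child), and in the right-child case the intermediate promise $\aprommarking'$ of the \emph{left} sibling enters the bound for the right child's $\wtinlabel$; one must be careful to apply the correct-unboundedness property at the right composition point (namely $\omegaof{\amarking}\subseteq\omegaof{\aprommarking'}$ via $\aprommarking'\in\postapproxof{\amarking,\asymbol}$, then $\omegaof{\aprommarking'}\subseteq\omegaof{\aprommarking''}$ via $\aprommarking''\in\postapproxof{\aprommarking',\asymbolp}$). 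Once the labels are written out explicitly as above, each of the eight containments is immediate, so the proof is short: a case split followed by direct substitution and invocation of the approximator axioms and the $\sqsubseteq$ side conditions of the construction.
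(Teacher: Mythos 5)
Your proposal is correct and is essentially the paper's own proof: unfold the $\wtprodssim$-rule, obtain the first two containments in each chain from the correct-unboundedness property of the approximators, and the last one from the side conditions $\aprommarking''\sqsubseteq\aprommarking$ resp.\ $\aprommarkingp''\sqsubseteq\aprommarkingp$ of the rule (the paper organizes your left/right case split as a split on $\cardof{\awordp}\in\set{0,1}$, which is the same thing, using the post-extension of $\postapprox$ over the prefix $\aword$). One slip to fix when writing it out: with the paper's convention $\awtsymbol=((\awtsymbol.\wtinlabel,\awtsymbol.\wtpostpromise),\awtsymbol.\wtsymlabel,(\awtsymbol.\wtprepromise,\awtsymbol.\wtoutlabel))$, the left child $((\amarking,\aprommarking'),\asymbol,(\aprommarkingp'',\aprommarkingp'))$ has $\wtprepromise=\aprommarkingp''$ and $\wtoutlabel=\aprommarkingp'$ — you swapped these two; taken literally, your ``symmetric'' chain for the left child would then require $\omegaof{\aprommarkingp''}\subseteq\omegaof{\aprommarkingp'}$, which points the wrong way, whereas with the correct reading the chain is exactly $\omegaof{\amarkingp}\subseteq\omegaof{\aprommarkingp'}\subseteq\omegaof{\aprommarkingp''}\subseteq\omegaof{\aprommarkingp}$ and goes through from the pre-approximator properties and $\aprommarkingp''\sqsubseteq\aprommarkingp$, just as you intend.
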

\begin{proof}
    We only show $\omegaof{\awtnonterm.\wtinlabel}\subseteq\omegaof{\awtsymbol.\wtinlabel}\subseteq\omegaof{\awtsymbol.\wtpostpromise}\subseteq\omegaof{\awtnonterm.\wtpostpromise}$, since the proofs of other inequalities are similar to these ones.
    We have $\awtsymbol.\wtinlabel=\postapproxof{\awtnonterm.\wtinlabel, \aword.\wtsymlabel}$ where we write $\varepsilon.\wtsymlabel=\varepsilon$.
    By the correct unboundedness property, we observe $\omegaof{\awtnonterm.\wtinlabel}\subseteq\omegaof{\awtsymbol.\wtinlabel}$.
    Note that since $\omegaof{\awtnonterm.\wtinlabel}\subseteq\omegaof{\awtsymbol.\wtinlabel}$ and $\awtsymbol.\wtpostpromise\in\postapproxof{\awtnonterm.\wtinlabel, \awtsymbol.\wtsymlabel}$, we have $\omegaof{\awtsymbol.\wtinlabel}\subseteq\omegaof{\awtsymbol.\wtpostpromise}$.
    Let $\cardof{\awordp}=0$.
    Then, by construction, $\awtsymbol.\wtpostpromise\sqsubseteq\awtnonterm.\wtpostpromise$, which implies $\omegaof{\awtsymbol.\wtpostpromise}\subseteq\omegaof{\awtnonterm.\wtpostpromise}$.
    Let $\cardof{\awordp}=1$ and thus $\awordp=\awtsymbolp\in\wtnonterms\cup\wtterms$.
    Then, $\awtsymbol.\wtpostpromise=\awtsymbolp.\wtinlabel$, $\awtsymbolp.\wtpostpromise\in\postapproxof{\awtsymbolp.\wtinlabel, \awtsymbolp.\wtsymlabel}$, and $\awtsymbolp.\wtpostpromise\sqsubseteq\awtnonterm.\wtpostpromise$.
    By a similar argument to the previous cases, we get $\omegaof{\awtsymbol.\wtpostpromise}\subseteq\omegaof{\awtsymbolp.\wtpostpromise}\subseteq\omegaof{\awtnonterm.\wtpostpromise}$.
    This concludes the proof.
\end{proof}
We also observe that only $\wtnontermssim$ non-terminals can call themselves.

\begin{lemma}\label{Lemma:CGSelfCall}
    Let $\awtnonterm\in\wtnonterms$ be able to call itself. 
    Then, $\awtnonterm\in\wtnontermssim$.
\end{lemma}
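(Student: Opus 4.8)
\textbf{Proof proposal for Lemma~\ref{Lemma:CGSelfCall}.}
The statement says that if a non-terminal $\awtnonterm\in\wtnonterms$ of the coverability grammar can call itself (that is, $\awtnonterm\in\nonterms_{\awtnonterm}^{\calltag}$ via a nonempty derivation chain), then $\awtnonterm$ lies in $\wtnontermssim$, i.e.\ it was explored via the ordinary simulation rules $\wtprodssim$ rather than the pumping rules $\wtprodspump$. The plan is to argue by contraposition: assume $\awtnonterm\in\wtnontermspump$, and show that $\awtnonterm$ cannot reach itself. The key structural fact I would exploit is that a $\wtprodspump$ rule strictly increases the number of $\omega$ entries. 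Concretely, by the construction of the coverability grammar (and as recorded in \Cref{Lemma:CGPumping}), every rule $\awtnonterm\to\awtnontermp\in\wtprodspump$ satisfies $(\awtnonterm.\inlabel,\awtnonterm.\outlabel)\leq(\awtnontermp.\inlabel,\awtnontermp.\outlabel)$, and in fact the acceleration replaces at least one concrete coordinate by $\omega$ (the pumping situation is detected precisely when $(\amarking,\amarkingp)>(\amarking',\amarkingp')$ strictly at some coordinate, and that coordinate becomes $\omega$ in the accelerated marking $\amarking_\omega,\amarkingp_\omega$, while the start non-terminal is handled separately and also strictly grows). So after applying the pumping rule we have $\cardof{\omegaof{(\awtnonterm.\inlabel,\awtnonterm.\outlabel)}}<\cardof{\omegaof{(\awtnontermp.\inlabel,\awtnontermp.\outlabel)}}$.

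Next I would combine this with the monotonicity along simulation rules. By \Cref{Lemma:CGCallMonotonicity}, following $\wtprodssim$ rules never decreases the set of $\omega$-marked coordinates of the $\wtinlabel$ and $\wtoutlabel$ components (they only get generalized along the chain $\omegaof{\awtnonterm.\wtinlabel}\subseteq\omegaof{\awtsymbol.\wtinlabel}$ and similarly on the output side). Putting the two facts together: any call chain $\awtnonterm=\awtnonterm^{(0)}\to\cdots\to\awtnonterm^{(m)}$ of length $m\geq 1$ consists of edges that each either keep $\cardof{\omegaof{(\cdot.\inlabel,\cdot.\outlabel)}}$ fixed (simulation edges, which moreover keep it monotone non-decreasing in the strong sense that the $\omega$-set only grows) or strictly increase it (pumping edges). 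Since $\awtnonterm\in\wtnontermspump$, the very first edge out of $\awtnonterm$ is a pumping edge (the pumping rules are exactly the ones whose left-hand side is in $\wtnontermspump$, by the construction, and $\awtnonterm$ being unexplored in a pumping situation produces only $\wtprodspump$ rules), so $\cardof{\omegaof{(\awtnonterm^{(1)}.\inlabel,\awtnonterm^{(1)}.\outlabel)}}>\cardof{\omegaof{(\awtnonterm.\inlabel,\awtnonterm.\outlabel)}}$, and from then on the count never goes back down. Hence $\awtnonterm^{(m)}\neq\awtnonterm$ for every $m\geq 1$, contradicting the assumption that $\awtnonterm$ can call itself.

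The one point requiring a little care — and the main obstacle — is making precise the claim that along an arbitrary call chain the $\omega$-set of $(\cdot.\inlabel,\cdot.\outlabel)$ is monotone non-decreasing through both kinds of edges, not merely that its cardinality is. For simulation edges this is exactly \Cref{Lemma:CGCallMonotonicity}; for pumping edges it is immediate from the acceleration (coordinates that were $\omega$ stay $\omega$, some concrete one becomes $\omega$). I would state this as a short auxiliary claim: for any $\awtnontermp\in\nonterms_{\awtnonterm}^{\calltag}$, $\omegaof{(\awtnonterm.\inlabel,\awtnonterm.\outlabel)}\subseteq\omegaof{(\awtnontermp.\inlabel,\awtnontermp.\outlabel)}$, with strict inclusion whenever the chain from $\awtnonterm$ to $\awtnontermp$ uses at least one pumping edge. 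A subtlety is that in a simulation rule $\awtnonterm\to\aword.\awtsymbol.\awordp$ the tracked child $\awtsymbol$ may sit to the right of some other symbol, so one must use the full statement of \Cref{Lemma:CGCallMonotonicity} (which already accounts for the $\aword$ prefix on both sides) rather than a naive "the child equals the parent" reasoning; but this is precisely what that lemma provides. With the auxiliary claim in hand the contraposition is one line, and the lemma follows.
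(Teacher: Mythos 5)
Your main argument is the same as the paper's: the first edge out of a $\wtnontermspump$ non-terminal strictly increases $\cardof{\omegaof{\awtnonterm.\wtinlabel}}+\cardof{\omegaof{\awtnonterm.\wtoutlabel}}$, simulation edges never decrease it (via \Cref{Lemma:CGCallMonotonicity}), and later pumping edges also do not decrease it, so a cycle through a pumping non-terminal is impossible. That matches the paper's proof structure exactly.

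There is, however, one inaccurate step: your parenthetical claim that the start non-terminal "is handled separately and also strictly grows" is false. The pumping rules issued for $\wtstartnonterm$ have the form $\wtstartnonterm\to((\amarkingpp,\aprommarking''),\anontermp,(\aprommarkingp'',\amarkingppp))$, i.e.\ they keep the $\wtinlabel$ and $\wtoutlabel$ components unchanged and only fill in the promise components, so no $\omega$ is gained and your counting argument does not rule out a self-call of $\wtstartnonterm$. The case is instead disposed of by the construction itself: when a right-hand-side symbol is matched against existing non-terminals, $\wtstartnonterm$ is explicitly excluded, so $\wtstartnonterm$ never occurs on the right-hand side of any production and hence cannot be called by anything, in particular not by itself. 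This is exactly the separate case the paper's proof treats before running the $\omega$-counting argument on $\awtnonterm\neq\wtstartnonterm$; with that substitution your proof goes through.
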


\begin{proof}
    Suppose $\awtnonterm\in\wtnontermspump$ can call itself.
    If $\awtnonterm=\wtstartnonterm$, then by construction any non-terminal can call $\awtnonterm$.
    Let $\awtnonterm\neq\wtstartnonterm$ and let $\awtnonterm$ call itself by applying the rules $\aprod_{0}\ldots\aprod_{k}\in\wtprods$, and using non-terminals $\awtnonterm^{0},\ldots,\awtnonterm^{k+1}\in\wtnonterms$, where $\awtnonterm^{0}=\awtnonterm^{k+1}=\awtnonterm$, and $\aprod_{i}$ consumes $\awtnonterm^{i}$ while producing $\awtnonterm^{i+1}$ for all $i<k$.
    Note that here, we use superscripts as indices, and avoid $A^{(-)}$ notation to avoid clutter.
    We claim that $\cardof{\omegaof{\awtnonterm^{0}.\wtinlabel}}+\cardof{\omegaof{\awtnonterm^{0}.\wtoutlabel}}<\cardof{\omegaof{\awtnonterm^{1}.\wtinlabel}}+\cardof{\omegaof{\awtnonterm^{1}.\wtoutlabel}}$ and $\cardof{\omegaof{\awtnonterm^{i}.\wtinlabel}}+\cardof{\omegaof{\awtnonterm^{i}.\wtoutlabel}}\leq\cardof{\omegaof{\awtnonterm^{i+1}.\wtinlabel}}+\cardof{\omegaof{\awtnonterm^{i+1}.\wtoutlabel}}$ for all $i\leq k$.
    This leads to the contradiction $\cardof{\omegaof{\awtnonterm.\wtinlabel}}+\cardof{\omegaof{\awtnonterm.\wtoutlabel}}<\cardof{\omegaof{\awtnonterm.\wtinlabel}}+\cardof{\omegaof{\awtnonterm.\wtoutlabel}}$.
    Note that since $\awtnonterm\neq\wtstartnonterm$ and $\awtnonterm\in\wtnontermspump$, any rule consuming $\awtnonterm$ produces exactly one non-terminal $\awtnontermp$ with $\cardof{\omegaof{\awtnonterm.\inlabel}}+\cardof{\omegaof{\awtnonterm.\outlabel}}<\cardof{\omegaof{\awtnontermp.\inlabel}}+\cardof{\omegaof{\awtnontermp.\outlabel}}$.
    This is because a $\wtprodspump$ rule is only produced when a pumping situation is discovered, which introduces at least one new $\omega$.
    Consider the second inequality $\cardof{\omegaof{\awtnonterm^{i}.\wtinlabel}}+\cardof{\omegaof{\awtnonterm^{i}.\wtoutlabel}}\leq\cardof{\omegaof{\awtnonterm^{i+1}.\wtinlabel}+\omegaof{\awtnonterm^{i+1}.\wtoutlabel}}$ for some $i\leq k$.
    If $\awtnonterm^{i}\in\wtnontermssim$, then the inequality follows from \Cref{Lemma:CGCallMonotonicity}.
    If $\awtnonterm^{i}\in\wtnontermspump$, since no non-terminal can call $\wtstartnonterm$, we get $\awtnonterm^{i}\neq\wtstartnonterm$.
    Thus the same argument we used for the first inequality applies.
    This concludes the proof.
\end{proof}

This yields the following results.

\begin{lemma}\label{Lemma:CGCallStructure}
    Let $\awtnontermp\in\sccof{\awtnonterm}$ for some $\awtnonterm\in\wtnonterms$.
    Then it holds that 
    \begin{align*}
        &\omegaof{\awtnonterm.\wtinlabel}=\omegaof{\awtnontermp.\wtinlabel}\;&\omegaof{\awtnonterm.\wtoutlabel}=\omegaof{\awtnontermp.\wtoutlabel}\\
        &\omegaof{\awtnonterm.\wtpostpromise}=\omegaof{\awtnontermp.\wtpostpromise}\;&\omegaof{\awtnonterm.\wtprepromise}=\omegaof{\awtnontermp.\wtprepromise}
    \end{align*}
    Let $\awtsymbol\in\wtnonterms\cup\wtterms$, $(\awtnonterm\to\aword.\awtnontermp.\awordp)\in\wtprods$ and let $\aword$ or $\awordp$ contain $\awtsymbol$.
    Then, 
    \begin{align*}
        \omegaof{\awtsymbol.\wtinlabel}&=\omegaof{\awtsymbol.\wtpostpromise}=\omegaof{\awtnonterm.\wtinlabel},\;\\
        \omegaof{\awtsymbol.\wtoutlabel}&=\omegaof{\awtsymbol.\wtprepromise}=\omegaof{\awtnonterm.\wtprepromise}\text{, if }\aword\text{ contains }\awtsymbol,\\
        \omegaof{\awtsymbol.\wtinlabel}&=\omegaof{\awtsymbol.\wtpostpromise}=\omegaof{\awtnonterm.\wtpostpromise},\;\\
        \omegaof{\awtsymbol.\wtoutlabel}&=\omegaof{\awtsymbol.\wtprepromise}=\omegaof{\awtnonterm.\wtoutlabel}\text{, if }\awordp\text{ contains }\awtsymbol.
    \end{align*}
\end{lemma}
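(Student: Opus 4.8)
The plan is to derive both statements from Lemma~\ref{Lemma:CGCallMonotonicity} and Lemma~\ref{Lemma:CGSelfCall}, together with the explicit template of the rules in $\wtprodssim$ that comes out of the construction of the coverability grammar.

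I would first prove the four equalities. If $\awtnonterm=\awtnontermp$ there is nothing to show, so assume they are distinct members of a common SCC. Then $\awtnonterm$ reaches $\awtnontermp$ and vice versa, so $\awtnonterm$ can call itself, and Lemma~\ref{Lemma:CGSelfCall} gives $\awtnonterm\in\wtnontermssim$; likewise $\awtnontermp\in\wtnontermssim$, and in fact every non-terminal occurring on a derivation $\awtnonterm\to^{*}\cdots\awtnontermp\cdots$ lies in $\sccof{\awtnonterm}$ and hence in $\wtnontermssim$. Consequently, along the branch of the parse tree from $\awtnonterm$ down to that occurrence of $\awtnontermp$, every rule applied is a $\wtprodssim$-rule, so it has exactly two symbols on its right-hand side and each step has the shape required by Lemma~\ref{Lemma:CGCallMonotonicity}. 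Iterating Lemma~\ref{Lemma:CGCallMonotonicity} along the branch yields $\omegaof{\awtnonterm.\wtinlabel}\subseteq\omegaof{\awtnontermp.\wtinlabel}$, $\omegaof{\awtnonterm.\wtoutlabel}\subseteq\omegaof{\awtnontermp.\wtoutlabel}$, $\omegaof{\awtnontermp.\wtpostpromise}\subseteq\omegaof{\awtnonterm.\wtpostpromise}$ and $\omegaof{\awtnontermp.\wtprepromise}\subseteq\omegaof{\awtnonterm.\wtprepromise}$; running the same argument on a branch from $\awtnontermp$ back to $\awtnonterm$ gives the reverse inclusions, and the four equalities follow.

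For the second part, assume $\aword$ contains $\awtsymbol$ (the other case being symmetric). Since $\awtnontermp\in\sccof{\awtnonterm}$, the previous paragraph shows $\awtnonterm\in\wtnontermssim$, so the rule $\awtnonterm\to\aword.\awtnontermp.\awordp$ is a $\wtprodssim$-rule and therefore has the form produced when simulating some production $\anonterm\to\asymbol.\asymbolp$ of the NGVAS, namely
\[
((\amarking,\aprommarking),\anonterm,(\aprommarkingp,\amarkingp))\ \to\ ((\amarking,\aprommarking'),\asymbol,(\aprommarkingp'',\aprommarkingp')).((\aprommarking',\aprommarking''),\asymbolp,(\aprommarkingp',\amarkingp)),
\]
with $\awtsymbol$ the left child, $\awtnontermp$ the right child, $\aprommarking'\in\postapproxof{\amarking,\asymbol}$, $\aprommarkingp''\in\preapproxof{\aprommarkingp',\asymbol}$ and $\aprommarkingp''\sqsubseteq\aprommarkingp$. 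Reading off components, $\awtsymbol.\wtinlabel=\amarking=\awtnonterm.\wtinlabel$ and $\awtsymbol.\wtpostpromise=\aprommarking'=\awtnontermp.\wtinlabel$, so the first part gives $\omegaof{\awtsymbol.\wtinlabel}=\omegaof{\awtsymbol.\wtpostpromise}=\omegaof{\awtnonterm.\wtinlabel}$. On the output side $\awtsymbol.\wtoutlabel=\aprommarkingp'=\awtnontermp.\wtprepromise$, so $\omegaof{\awtsymbol.\wtoutlabel}=\omegaof{\awtnonterm.\wtprepromise}$ by the first part; and $\awtsymbol.\wtprepromise=\aprommarkingp''$ satisfies $\omegaof{\aprommarkingp'}\subseteq\omegaof{\aprommarkingp''}\subseteq\omegaof{\aprommarkingp}=\omegaof{\awtnonterm.\wtprepromise}$, where the first inclusion is the correct-unboundedness property of pre-approximators, so since $\omegaof{\aprommarkingp'}=\omegaof{\awtnonterm.\wtprepromise}$ all three coincide and $\omegaof{\awtsymbol.\wtprepromise}=\omegaof{\awtnonterm.\wtprepromise}$. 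Exchanging the two children and the forward/backward components handles the case where $\awordp$ contains $\awtsymbol$.

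The only delicate point is establishing that every rule along the parse-tree branch joining two SCC-equal non-terminals is a $\wtprodssim$-rule: this requires applying Lemma~\ref{Lemma:CGSelfCall} not merely to the two endpoints but to every intermediate non-terminal, for which one uses that each of them also lies in the common SCC. Once this is settled, the first part is a plain iteration of Lemma~\ref{Lemma:CGCallMonotonicity} and the second part is bookkeeping against the rule template; there are no estimates to make.
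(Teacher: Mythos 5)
Your proof is correct and takes essentially the same route as the paper: \Cref{Lemma:CGSelfCall} puts the whole SCC into $\wtnontermssim$, iterating \Cref{Lemma:CGCallMonotonicity} along call chains in both directions yields the four equalities, and the second part is read off the $\wtprodssim$-rule template using the first part together with the approximator properties (you even spell out the $\wtprepromise$/$\wtpostpromise$ sandwich via correct unboundedness and $\sqsubseteq$, which the paper leaves implicit). The only cosmetic slip is the claim that "the previous paragraph" gives $\awtnonterm\in\wtnontermssim$: in the degenerate case $\awtnonterm=\awtnontermp$ that paragraph is vacuous, but the conclusion is immediate anyway, since the rule $\awtnonterm\to\aword.\awtnontermp.\awordp$ has at least two right-hand-side symbols (so it cannot lie in $\wtprodspump$) and itself witnesses a self-call.
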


\begin{proof}
    Let $\awtnontermp\in\sccof{\awtnonterm}$ and $\awtnonterm\in\wtnonterms$.
    Consider the first four equalities.
    If $\awtnonterm=\awtnontermp$, we are done.
    If $\awtnonterm\neq\awtnontermp$, then all symbols in $\sccof{\awtnonterm}$ can call themselves by only deriving symbols in $\sccof{\awtnonterm}$.
    First, by \Cref{Lemma:CGSelfCall}, we get $\sccof{\awtnonterm}\subseteq\wtnontermssim$.
    Then, $\awtnonterm$ can call $\awtnontermp$ by only using rules in $\wtnontermssim$ and vice-versa.
    By applying \Cref{Lemma:CGCallMonotonicity} twice, we get the desired inequalities.
    Now let $\awtsymbol\in\wtnonterms\cup\wtterms$ and let $\awtnonterm\to\aword.\awtnontermp.\awordp\in\wtprods$.
    We only show the case where $\aword$ contains $\awtsymbol$, the proof of the other case is similar.
    Note that, since we only allow each production to generate at most $2$ symbols, we have $\aword=\awtsymbol$ and $\awtnonterm\to\awtsymbol.\awtnontermp\in\wtprodssim$.
    Then, by construction, $\awtnonterm.\wtinlabel=\awtsymbol.\wtinlabel$, $\awtsymbol.\wtpostpromise\in\postapproxof{\awtsymbol.\wtinlabel, \awtsymbol.\wtsymlabel}$, and $\awtsymbol.\wtpostpromise=\awtnontermp.\wtinlabel$.
    We have already shown $\omegaof{\awtnonterm.\wtinlabel}=\omegaof{\awtnontermp.\wtinlabel}$.
    Then, by correct unboundedness of $\postapprox$, we get 
    \begin{align*}
        \omegaof{\awtnonterm.\wtinlabel}=\omegaof{\awtsymbol.\wtinlabel}&\subseteq\omegaof{\awtsymbol.\wtpostpromise}\\
        &=\omegaof{\awtnontermp.\wtinlabel}=\omegaof{\awtnonterm.\wtinlabel}.
    \end{align*}
    This concludes the proof.
\end{proof}

Now, we are ready to prove \Cref{Lemma:DecompGivenCGLowLevel}.

\begin{proof}[Proof of \Cref{Lemma:DecompGivenCGLowLevel}]
    Let $\awtsymbol\in\wtnonterms\cup\wtterms$, and let $\dimofN=d+1-\dimensionof{\lcyclespaceof{\anngvas}}$.
    The claims about the individual components of $\anngvas_{\awtsymbol}$ are already clear by construction.
    For this reason we only show: (ngvas) that the construction soundly produces a weak-NGVAS, (rk) that $\srankof{\anngvas_{\awtsymbol}}<_{\dimofN}\srankof{\anngvas}$, and that (run) $\runsof{\anngvas_{\awtsymbol}}=\runsof{\otherctxNGVAS{\anngvas}{\amarking, \anonterm, \amarkingp}, \awtsymbol}$.
    Here, we mean by $x <_{\dimofN} y$ that $x<y$ lexicographically and $x[\dimofN] < y[\dimofN]$.
    Note that the assumption on the rank suffices to show the main claim $\recrankof{\anngvas_{\awtsymbol}}<\recrankof{\anngvas}$, since $\unconstrained\subseteq\omegaof{\acontextin}\cap\omegaof{\acontextout}\subseteq\omegaof{\amarking}\cap\omegaof{\amarkingp}\subseteq\wtomegaof{\awtsymbol}$.
    The proof is by induction on $\cardof{\nonterms_{\awtsymbol}^{\calltag}}$.

    For the base case, we have $\cardof{\nonterms_{\awtsymbol}^{\calltag}}=0$, and thus $\awtsymbol\in\wtterms$. 
    The condition (ngvas) is already clear, so we proceed with (rk).
    Since $\anngvas$ has \perfectnesschildrennospace, and $\otherctxNGVAS{\anngvas}{\amarking, \anonterm, \amarkingp}$ has the same children, $\awtsymbol.\wtsymlabel\in\trms$ is perfect.
    Then, its children have \perfectnesschildren and \perfectnessbasenospace.
    Since $\srankof{\anngvasp}=\srankof{\anngvas_{\awtsymbol}}$, and $\srankof{\anngvas}=\srankof{\otherctxNGVAS{\anngvas}{\amarking, \anonterm, \amarkingp}}$ by the rank definition, so by \Cref{Lemma:NonLinearChildRank} we get $\srankof{\anngvas_{\awtsymbol}}+\cardof{\anngvas.\nonterms}\cdot 1_{\dimofN}\leq\srankof{\otherctxNGVAS{\anngvas}{\amarking, \anonterm, \amarkingp}}$ and thus $\srankof{\anngvas_{\awtsymbol}}<_{\dimofN}\srankof{\otherctxNGVAS{\anngvas}{\amarking, \anonterm, \amarkingp}}$.
    We show (run).
    The set $\cgrunsof{\otherctxNGVAS{\anngvas}{\amarking, \anonterm, \amarkingp}, \awtsymbol}$ consists of $\awtsymbol.\symlabel$ runs $(\amarkingpp, \arun, \amarkingppp)$ with $\amarkingpp\sqsubseteq\awtsymbol.\wtinlabel, \awtsymbol.\wtprepromise$ and $\amarkingpp\sqsubseteq\awtsymbol.\wtoutlabel, \awtsymbol.\wtpostpromise$.
    This is exactly the membership condition to $\runsof{\anngvas_{\awtsymbol}}$.
    
    We move on to the inductive case $\cardof{\nonterms_{\awtsymbol}^{\calltag}}>0$, which implies $\awtsymbol=\awtnonterm\in\wtnonterms$.
    First, let $\awtnonterm$ not be productive.
    Then, $\adecomp_{\awtnonterm}=\emptyset$, and since $\wtgrammar$-annotations are complete derivation trees, $\runsof{\otherctxNGVAS{\anngvas}{\amarking, \anonterm, \amarkingp}, \awtnonterm}=\emptyset$.
    This shows (run).
    The conditions (ngvas) and (rk) are trivially fulfilled, so we are done.
    Conversely, let $\awtnonterm\in\wtnonterms$ be productive. 
    Same as in the construction, we distinguish between the cases $\awtnonterm\in\wtnontermspump$ and $\awtnonterm\in\wtnontermssim$.
    First consider $\awtnonterm\in\wtnontermspump$.
    The only non-terminal is $\awtnonterm$, and the only rules are exit rules $\awtnonterm\to\anngvas_{\awtnontermp}$ where $(\awtnonterm\to\awtnontermp)\in\wtprodspump$.
    We show (ngvas).
    Since the only rule is an exit rule, we need to verify that the boundedness information is correct.
    That is, we need to show $\wtinof{\awtnonterm}, \anngvas_{\awtnontermp}.\acontextin\sqsubseteq \infun_{\awtnonterm}(\awtnonterm)=\awtnontermp.\wtinlabel$ and $\wtoutof{\awtnonterm}, \anngvas_{\awtnontermp}.\acontextout\sqsubseteq \outfun_{\awtnonterm}(\awtnonterm)=\awtnontermp.\wtoutlabel$.
    We only argue the former.
    By the construction of the coverability grammar, it can be easily verified that $\wtinof{\awtnonterm}, \wtinof{\awtnontermp}\sqsubseteq \awtnontermp.\wtinlabel$ and $\wtoutof{\awtnonterm}, \wtoutof{\awtnontermp}\sqsubseteq\awtnontermp.\wtoutlabel$.
    We know that $\awtnontermp$ can call less non-terminals than $\awtnonterm$, so the induction hypothesis applies.
    Using the induction hypothesis, we assume that $\anngvas_{\awtnonterm}$ has the properties stated in \Cref{Lemma:DecompGivenCGLowLevel}.
    Namely, $\anngvas_{\awtnontermp}.\acontextin=\awtnonterm.\wtinlabel$ and $\anngvas_{\awtnontermp}=\awtnonterm.\wtoutlabel$.
    This concludes the case (ngvas).
    To see (rk), first consider that $\srankof{\anngvas_{\awtnontermp}}<_{\dimofN}\srankof{\anngvas}$ for all $\awtnontermp\in\wtnonterms$ with $(\awtnonterm\to\awtnontermp)\in\wtprods$ by the induction hypothesis.
    Now, since there are no cycles in $\anngvas_{\awtnonterm}$, it is trivially linear, so $\lrankof{\anngvas_{\awtnonterm}}=0$ and $\srankof{\anngvas_{\awtnonterm}}=\srankof{\anngvas_{\awtnontermp}}<_{\dimofN}\srankof{\anngvas}$
    Now we argue (run).
    We have that $\runsof{\anngvas_{\awtnonterm}}$ equals
    \begin{align*}
        &\bigcup_{(\awtnonterm\to\awtnontermp)\in\wtprods}\runsof{\anngvas_{\awtnontermp}}\;\cap\\
        &\hspace{6em}\setcond{(\amarking, \arun, \amarkingp)\in\therunset}{\\
        &\hspace{9em}\amarking\sqsubseteq\wtinof{\awtnonterm},\;\amarkingp\sqsubseteq\wtoutof{\awtnonterm}}.
    \end{align*}
    We show that $\runsof{\otherctxNGVAS{\anngvas}{\amarking, \anonterm, \amarkingp}, \awtnonterm}$ equals
    \begin{align*}
        &\bigcup_{(\awtnonterm\to\awtnontermp)\in\wtprods}\runsof{\otherctxNGVAS{\anngvas}{\amarking, \anonterm, \amarkingp}, \awtnontermp}\cap\\
        &\hspace{1em}\setcond{(\amarking, \arun, \amarkingp)\in\therunset}{\amarking\sqsubseteq\wtinof{\awtnonterm},\;\amarkingp\sqsubseteq\wtoutof{\awtnonterm}},
    \end{align*}
    which yields (run) by the application of the induction hypothesis on $\awtnontermp$ with $\awtnonterm\to\awtnontermp\in\wtprods$.
    Since $\awtnonterm\in\wtnontermspump$, the (Pumping) condition of the $\wtgrammar$-annotation makes sure that all $\atree\in\cgannof{\awtnonterm}$ are also annotatable by $\wtgrammar$-derivations that have a root labeled $\awtnontermp$ for some $(\awtnonterm\to\awtnontermp)\in\wtprods$.
    Then, $\cgannof{\awtnonterm}\subseteq\bigcup_{(\awtnonterm\to\awtnontermp)\in\wtprods}\cgannof{\awtnontermp}$.
    A reachability tree $\atree\in\cgannof{\awtnontermp}$ is in $\cgannof{\awtnonterm}$, if $\atree.\inlabel\sqsubseteq\awtnonterm.\wtinlabel, \awtnonterm.\wtprepromise$ and $\atree.\outlabel\sqsubseteq\awtnonterm.\wtoutlabel, \awtnonterm.\wtpostpromise$.
    This is because we can extend the $\wtgrammar$-annotation with the root labeled $\awtnontermp$, by a root node labeled $\awtnonterm$ that also annotates $\atree.\arootnode$, and in this case (Generalization) has to hold.
    By $\wtinof{\awtnonterm}=\awtnonterm.\wtinlabel\sqcap\awtnonterm.\wtprepromise$ and $\wtoutof{\awtnonterm}=\awtnonterm.\wtoutlabel\sqcap\awtnonterm.\wtpostpromise$, we get
    \begin{align*}
        &\bigcup_{(\awtnonterm\to\awtnontermp)\in\wtprods}\setcond{\atree\in\cgannof{\awtnontermp}}{\\
        &\hspace{4em}\atree.\inlabel\sqsubseteq\wtinof{\awtnonterm},\;\atree.\outlabel\sqsubseteq\wtoutof{\awtnonterm}}.
    \end{align*}
    Since all $\atree\in\cgannof{\awtnonterm}$ must have $\atree.\inlabel\sqsubseteq\wtinof{\awtnonterm}$ and $\atree.\outlabel\sqsubseteq\wtoutof{\awtnonterm}$, we get that $\cgannof{\awtnonterm}$ equals
    \begin{align*}
        &\bigcup_{(\awtnonterm\to\awtnontermp)\in\wtprods}\setcond{\atree\in\cgannof{\awtnontermp}}{\\
        &\hspace{4em}\atree.\inlabel\sqsubseteq\wtinof{\awtnonterm},\;\atree.\outlabel\sqsubseteq\wtoutof{\awtnonterm}}.
    \end{align*}
    This implies the desired run-equality.

    Now assume $\awtnonterm\in\wtnontermssim$.
    To reduce clutter, we argue the following only once, and assume it throughout the proof.
    All $\awtsymbol\in\nonterms^{\calltag}_{\awtnonterm}\setminus\sccof{\awtsymbol}$ have less height than $\awtnonterm$, since they cannot call $\awtnonterm$, but $\awtnonterm$ can call $\awtsymbol$.
    This means that for all $\awtsymbol\in\nonterms^{\calltag}_{\awtnonterm}\setminus\sccof{\awtsymbol}$, the induction hypothesis applies to show that $\anngvas_{\awtsymbol}$ has the properties stated in \Cref{Lemma:DecompGivenCGLowLevel}.
    We proceed with the proof of (ngvas).
    By \Cref{Lemma:CGCallStructure}, it is clear that $\omegaof{\wtinof{\awtnonterm}}=\omegaof{\wtinof{\awtnontermp}}$ and $\omegaof{\wtoutof{\awtnonterm}}=\omegaof{\wtoutof{\awtnontermp}}$ for all $\awtnontermp\in\wtnonterms$, so $\abdinfoleft$ and $\abdinforight$ are soundly defined.
    The condition with the start symbol is also clear.
    Now consider a rule $\awtnontermp\to\aword\in\prods_{\awtnonterm}$ toward showing continuity.
    It must have $\aword=\realizationof{}{\awtsymbol.\awtsymbolp}$, where $(\awtnontermp\to\awtsymbol.\awtsymbolp)\in\wtprodssim$.
    By construction of the coverability grammar, we have $\awtsymbol.\inlabel=\awtnontermp.\inlabel$, $\awtsymbol.\wtpostpromise=\awtsymbolp.\wtinlabel$, $\awtsymbol.\wtoutlabel=\awtsymbolp.\wtprepromise$, and $\awtsymbolp.\wtoutlabel=\awtnontermp.\wtoutlabel$. 
    The construction also requires $\awtsymbol.\wtprepromise\sqsubseteq\awtnontermp.\wtprepromise$ and $\awtsymbolp.\wtpostpromise\sqsubseteq\awtnontermp.\wtpostpromise$.
    Then, $\wtinof{\awtsymbol}=\awtsymbol.\wtprepromise\sqcap\awtsymbol.\wtinlabel\sqsubseteq\wtinof{\awtnontermp}$, $\wtoutof{\awtsymbol}=\wtinof{\awtsymbolp}$, and similarly to the left-side, $\wtoutof{\awtsymbolp}\sqsubseteq\wtoutof{\awtnontermp}$.
    If $\awtsymbol$ or $\awtsymbolp$ is not a non-terminal, then by the induction hypothesis, their images under $\wtinfun$ resp. $\wtoutfun$ is equal to the context information of $\anngvas_{\awtsymbol}$ resp. $\anngvas_{\awtsymbolp}$.
    
    Now, we show loop-consistency.
    Let $\anngvas_{\awtnonterm}$ be branching.
    Then, we show that $\abdinfoleft=\abdinforight=\omegaof{\wtinof{\awtnonterm}}=\omegaof{\wtoutof{\awtnonterm}}=\omegaof{\anngvasp.\acontextin}=\omegaof{\anngvasp.\acontextout}=\anngvasp.\unconstrained$ for all $\anngvasp\in\wtterms$.
    First, we show that $\omegaof{\awtnontermp.\wtinlabel}=\omegaof{\awtnontermp.\wtpostpromise}$ and $\omegaof{\awtnontermp.\wtoutlabel}=\omegaof{\awtnontermp.\wtprepromise}$ for all $\awtnontermp\in\sccof{\awtnonterm}$.
    Since $\anngvas_{\awtnonterm}$ is branching, there must be a rule $(\awtnontermp\to\awtnontermp^{0}.\awtnontermp^{1})$ for some $\awtnontermp, \awtnontermp^{0}, \awtnontermp^{1}\in\sccof{\awtnonterm}$.
    By \Cref{Lemma:CGCallStructure}, $\omegaof{\awtnontermp^{0}.\wtinlabel}=\omegaof{\awtnontermp^{0}.\wtpostpromise}$, since $\awtnontermp^{0}$ is generated on the left of $\awtnontermp^{1}$, and $\omegaof{\awtnontermp^{1}.\wtoutlabel}=\omegaof{\awtnontermp^{1}.\wtprepromise}$, since $\awtnontermp^{1}$ is generated on the right of $\awtnontermp^{0}$.
    However, sincenon-terminals in a strongly connected component agree on all four components $\wtinlabel$, $\wtoutlabel$, $\wtpostpromise$, $\wtprepromise$ by \Cref{Lemma:CGCallStructure}, we get the desired equality.
    Now, consider a rule $\awtnontermp\to\aword.\anngvas_{\awtsymbol}.\awordp$ in $\prods_{\awtnonterm}$ that generates a terminal $\anngvas_{\awtsymbol}$.
    Then, there is a rule $\awtnontermp\to\aword'.\awtsymbol.\awordp'$ in $\wtprodssim$.
    By \Cref{Lemma:CGCallMonotonicity}, we get $\omegaof{\awtnonterm.\wtinlabel}=\omegaof{\awtnontermp.\wtinlabel}\subseteq\omegaof{\awtsymbol.\wtinlabel}\subseteq\omegaof{\awtsymbol.\wtpostpromise}\subseteq\omegaof{\awtnontermp.\wtpostpromise}$, and we have already shown $\omegaof{\awtnonterm.\wtpostpromise}=\omegaof{\awtnontermp.\wtpostpromise}=\omegaof{\awtnontermp.\wtinlabel}$, this yields the equality $\omegaof{\awtsymbol.\wtinlabel}=\omegaof{\awtsymbol.\wtpostpromise}=\omegaof{\awtnonterm.\wtinlabel}$.
    Applying this argument in the other direction, we get $\omegaof{\awtsymbol.\wtoutlabel}=\omegaof{\awtsymbol.\wtprepromise}=\omegaof{\awtnonterm.\wtoutlabel}$.
    Thus, we get $\omegaof{\wtinof{\awtsymbol}}=\omegaof{\wtoutof{\awtsymbol}}=\omegaof{\wtinof{\awtnonterm}}=\omegaof{\wtoutof{\awtnonterm}}$.
    
    Let $\anngvas_{\awtnonterm}$ not be branching, and $\awtnontermp\to\anngvas_{\awtsymbol}.\awtnontermpp$ a rule in $\prods_{\awtnonterm}$ where $\awtnontermp, \awtnontermpp\in\sccof{\awtnonterm}$.
    We ommit the case where the terminal is produced on the right side, since the proofs are similar.
    By the construction of $\anngvas_{\awtnonterm}$, there must be a rule $\awtnontermp\to\awtsymbol.\awtnontermpp$ in $\wtprodssim$.
    We apply \Cref{Lemma:CGCallStructure} and obtain $\omegaof{\awtsymbol.\wtinlabel}=\omegaof{\awtsymbol.\wtpostpromise}=\omegaof{\awtnontermp.\wtinlabel}$, and $\omegaof{\awtsymbol.\outlabel}=\omegaof{\awtsymbol.\wtprepromise}=\omegaof{\awtnontermp.\wtprepromise}$.
    Thus, $\omegaof{\wtinof{\awtsymbol}}=\omegaof{\wtoutof{\awtsymbol}}=\omegaof{\wtinof{\awtnontermp}}$, and we already know $\omegaof{\wtinof{\awtnontermp}}=\omegaof{\wtinof{\awtnonterm}}$.
    This concludes the proof of (ngvas).

    Now we argue that (rk) holds.
    Clearly, all cycles in $\anngvas_{\awtnonterm}$ can be imitated in $\otherctxNGVAS{\anngvas}{\amarking, \anonterm, \amarkingp}$, so we have $\lcyclespaceof{\anngvas_{\awtnonterm}}\subseteq\lcyclespaceof{\otherctxNGVAS{\anngvas}{\amarking, \anonterm, \amarkingp}}=\lcyclespaceof{\anngvas}$.
    We argue that $\lrankof{\anngvas_{\awtnonterm}}<_{\dimofN}\lrankof{\anngvas}$.
    Note that this implies (rk) by the following argument.
    The maximal branch of $\anngvas_{\awtnonterm}$ goes over $\anngvas_{\awtsymbol}$ for $\awtsymbol\in\nonterms_{\awtsymbol}^{\calltag}\setminus\sccof{\awtnonterm}$.
    Then, $\srankof{\anngvas_{\awtnonterm}}=\lrankof{\anngvas_{\awtnonterm}}+\srankof{\anngvas_{\awtsymbol}}$ for such a $\awtsymbol$.
    By the induction hypothesis we already have $\srankof{\anngvas_{\awtsymbol}}<_{\dimofN}\srankof{\anngvas}$, which along with $\lrankof{\anngvas_{\awtnonterm}}<_{\dimofN}\lrankof{\anngvas}$ concludes the proof.
    First, let $\anngvas_{\awtnonterm}$ be linear.
    Then, $\lrankof{\anngvas_{\awtnonterm}}=0$ and $\lrankof{\anngvas}[\dimofN]\neq 0$, so we are done.
    Let $\anngvas_{\awtnonterm}$ be non-linear.
    Because the $\wtgrammar$ shows boundedness, one new counter $i\not\in\abdinfomid$ is concretely tracked in $\anngvas_{\awtnonterm}.\abdinfo$.
    Similarly to the boundedness case of \cite{LerouxS19}, we may only have $\lcyclespaceof{\anngvas_{\awtnonterm}}=\lcyclespaceof{\anngvas}$ if $\anngvas_{\awtnonterm}$ concretely tracks a counter $i\in [1,d]$ in its boundedness information that is already fixed in $\anngvas$.
    Suppose this is the case.
    This implies that $\amarking[i]\in\N$ or $\amarkingp[i]\in\N$.
    Since $\omegaof{\acontextin}\subseteq\omegaof{\amarking}$ and $\omegaof{\acontextout}\subseteq\omegaof{\amarkingp}$, $\acontextin[i]\in\N$ or $\acontextout[i]\in\N$.
    Thus, $i$ is rigid in $\anngvas$.
    However, since $\anngvas$ has all the perfectness conditions except \perfectnesspumpingint and \perfectnesspumpingnospace, we know that all rigid counters are accounted for in the boundedness information of $\anngvas$.
    This is a contradiction.
    Then, $\lcyclespaceof{\anngvas_{\awtnonterm}}\subsetneq\lcyclespaceof{\anngvas}$ which implies $\dimensionof{\lcyclespaceof{\anngvas_{\awtnonterm}}}<\dimensionof{\lcyclespaceof{\anngvas}}$ which implies $\lrankof{\anngvas_{\awtnonterm}}<_{\dimofN}\lrankof{\anngvas}$.
    
    Now, we show (run).
    We argue the inclusion $\cgrunsof{\otherctxNGVAS{\anngvas}{\amarking, \anonterm, \amarkingp}, \awtnonterm}\subseteq\runsof{\anngvas_{\awtnonterm}}$.
    For any tree $\atree\in\cgannof{\awtnonterm}$, with the witnessing $\wtgrammar$-annotation $(\atreep, \covproj)$, we can construct an imitating reachability tree $\atree'\in\reachtreesof{\anngvas_{\awtnonterm}}$ where $\runsof{\atree}\subseteq\runsof{\atree'}$ by extending a root node labeled $(\covprojof{\atreep.\arootnode}.\inlabel, \awtnonterm, \covprojof{\atreep.\arootnode}.\outlabel)$.
    For each node $\anodep\in\atreep$ with $\nodemarkingof{\anodep}=\awtnontermp\in\sccof{\awtnonterm}$, we add $\anodep$ to $\atree'$, and label it $(\covprojof{\anodep}.\inlabel, \awtnontermp, \covprojof{\anodep}.\outlabel)$ if $\awtnontermp\in\sccof{\awtnonterm}$.
    For each node $\anodep\in\atreep$ with $\nodemarkingof{\anodep}=\awtsymbol\not\in\sccof{\awtnonterm}$, we know $\anngvas_{\awtsymbol}\in\trms_{\awtnonterm}$, and we add the leaf node $\anodep$ to $\atree'$ with the label $(\covprojof{\anodep}.\inlabel, \anngvas_{\awtsymbol}, \covprojof{\anodep}.\outlabel)$.
    Thanks to the induction hypothesis, we already have $\runsof{\atree_{\covprojof{\anodep}}}\subseteq\runsof{\covprojof{\anodep}.\inlabel, \anngvas_{\awtsymbol}, \covprojof{\anodep}.\outlabel}$ for the subtree $\atree_{\covprojof{\anodep}}$ of $\atree$ rooted at $\covprojof{\anodep}$. 
    It is easy to verify that $\runsof{\atree}\subseteq\runsof{\atree'}$.
    Finally, we argue the inclusion $\runsof{\anngvas_{\awtnonterm}}\subseteq\cgrunsof{\otherctxNGVAS{\anngvas}{\amarking, \anonterm, \amarkingp}, \awtnonterm}$.
    We argue that for any reachability tree $\atree\in\reachtreesof{\anngvas_{\awtnonterm}}$, we can construct $\atree'\in\cgannof{\otherctxNGVAS{\anngvas}{\amarking, \anonterm, \amarkingp}, \atree.\symlabel}$ with the witnessing annotation $(\atreep, \covproj)$, and $\atree.\inlabel=\atree'.\inlabel$, $\atree.\outlabel=\atree'.\outlabel$.
    We proceed by an induction on the height of $\atree$.
    For the base case, we have $\atree.\symlabel\in\trms_{\awtnonterm}$, where $\atree$ has height $0$.
    Let $\atree.\symlabel=\anngvas_{\awtsymbol}$.
    This implies $\atree.\inlabel\sqsubseteq\wtinof{\awtsymbol}$ and $\atree.\outlabel\sqsubseteq\wtoutof{\awtsymbol}$ by the construction of $\anngvas_{\awtsymbol}$.
    Then, letting $\atree'$ consist of a root node labeled $(\atree.\inlabel, \awtsymbol.\symlabel, \atree.\outlabel)$, and $\atreep$ of a root node labeled $\awtsymbol$ with $\covprojof{\atreep.\arootnode}=\atree'.\arootnode$, we get the annotation we wanted.
    Now consider the inductive case, where we have $\atree.\symlabel\in\sccof{\awtnonterm}$.
    For the left- and right-subtrees $\atree_{\lefttag}$ and $\atree_{\righttag}$, we have $(\atree.\symlabel\to(\atree_{\lefttag}.\symlabel).(\atree_{\righttag}.\symlabel))\in\prods_{\awtnonterm}$.
    We have $\wtinof{\atree.\symlabel}\sqsubseteq\wtinof{\atree_{\lefttag}.\symlabel}$, $\wtoutof{\atree_{\lefttag}.\symlabel}=\wtinof{\atree_{\righttag}.\symlabel}$, and $\wtoutof{\atree_{\righttag}.\symlabel}\sqsubseteq\wtoutof{\atree.\symlabel}$ since $\anngvas_{\awtnonterm}$ is a weak NGVAS.
    We apply the induction hypothesis, and we construct the reachability trees $\atree_{\lefttag}'\in\cgannof{\atree_{\lefttag}.\symlabel}$, $\atree_{\righttag}'\in\cgannof{\atree_{\righttag}.\symlabel}$ for $\atree_{\lefttag}$ and $\atree_{\righttag}$, with the properties we stated, and the witnessing annotations $(\atreep_{\lefttag}, \covproj_{\lefttag})$ and $(\atreep_{\righttag}, \covproj_{\righttag})$.
    We construct $\atree'$ such that $\atree'_{\lefttag}$ and $\atree'_{\righttag}$ are its left- and right-subtrees, and we have $\nodemarkingof{\atree'.\arootnode}=(\atree.\inlabel, \asymbol, \atree.\outlabel)$, where $\awtsymbol=(\atree.\symlabel).\symlabel$ if $\atree.\symlabel\in\sccof{\awtnonterm}$, and $\asymbol=\awtsymbol.\symlabel$ if $\atree.\symlabel=\anngvas_{\awtsymbol}\in\trms_{\awtnonterm}$.
    We also construct $\atreep$ that has $\atreep_{\lefttag}$ and $\atreep_{\righttag}$ as its left- and right-subtrees with $\nodemarkingof{\atreep.\arootnode}=\atree.\symlabel$.
    Further let $\covproj$ map the left- and right-subtrees of $\atree'$ to those of $\atreep$, and $\atree'.\arootnode$ to $\atreep.\arootnode$.
    Clearly, $\atree'$ is a reachability tree with $\runsof{\atree}\subseteq\runsof{\atree'}$, $\atree.\inlabel=\atree'.\inlabel$, $\atree.\outlabel=\atree'.\outlabel$, and $(\atreep, \covproj)$ witnesses $\atree'\in\cgannof{\atree.\symlabel}$.
    The latter statement holds since (Generalization) and (Local Bijection) hold for the root, and $\atree.\symlabel\in\wtnontermssim$, makes (Pumping) hold trivially.
\end{proof}

\paragraph*{The decomposition $\refinepumpof{\anngvas}$.}

The call $\refinepumpof{\anngvas}$ decomposes $\anngvas$ by relying on the insights we gained in \Cref{Lemma:CGTermination}, \Cref{Lemma:CGPumpability}, and \Cref{Lemma:DecompGivenCG}.
First, it constructs $\wtgrammarof{\anngvas, \natpostapprox, \natpreapprox}$.
The soundness of this step relies on the following lemma.
\LemmaPostPreComputableMainPaper*
%
%
Using \Cref{Lemma:CGTermination}, we conclude effectiveness under the assumption of  computability of approximators.
Said computability is provided by \Cref{Lemma:PostPreComputable}.
By \Cref{Lemma:CGPumpability}, we know that if there is a non-terminal $((\amarking, \aprommarking), \anonterm, (\aprommarkingp, \amarkingp))$ in $\wtgrammarof{\anngvas, \natpostapprox, \natpreapprox}$ with $\omegaof{\amarking}=\omegaof{\amarkingp}=\abdinfoleft$, then \perfectnesspumping holds.
In this case, the call returns $\refinepumpof{\anngvas}=\anngvas$.
If this is not the case, then \Cref{Lemma:DecompGivenCG} applies.
The call returns the decomposition provided by this construction, i.e. $\refinepumpof{\anngvas}=\setcond{\anngvas_{\awtnonterm}}{\wtstartnonterm\to\awtnonterm}$.
It is clear by this line of argumentation that, given \Cref{Lemma:CGTermination}, \Cref{Lemma:CGPumpability}, \Cref{Lemma:DecompGivenCG}, and \Cref{Lemma:PostPreComputable}; the specification for $\refinepump$ is fulfilled.
Since the remaining lemmas are already proven, it remains to prove \Cref{Lemma:PostPreComputable}.
This will be the focus of the rest of the section.
\subsection{Computing $\postfunc$ and $\prefunc$, Simple Cases}\label{Section:CoverabilityEasyCasesL3}
Our goal in the rest of the section is to show \Cref{Lemma:PostPreComputable}.
We organize the full domain into parts, and use different arguments and algorithms for computing $\natpostapprox$ and $\natpreapprox$ in each part.
We adopt the notation from the extended paper for the domains, e.g. $\easydomain$, but also define the following.
\begin{align*}
    \afuncdomain_{\acounterset}&=\setcond{(\amarking, \asymbol)\in\Nomega^{d}\times(\nonterms\cup\trms)}{\omegaof{\amarking}=\acounterset}\\
    \afuncdomain_{\trms}&=\inmarkingdomain\times\trms\\
    \afuncdomain_{\nonterms, \acounterset}&=\setcond{(\amarking, \asymbol)\in\Nomega^{d}\times\nonterms}{\omegaof{\amarking}=\acounterset}
\end{align*}
Note that $\trms=\rectrms$ since $\anngvas$ is non-linear.

We proceed by weeding out the simple cases.
We have already seen that we can compute $\postfuncN{\anngvas}$ and $\prefuncN{\anngvas}$ for the domains $\afuncdomain_{\trms}$ and $\afuncdomain_{\acounterset}$ for large enough $\acounterset$, by encoding the query as an NGVAS and relying on $\perfect$.
\LemmaTermsPostPreComputableMainPaper*
\LemmaLowDimPostPreComputableMainPaper*
We make an observation for the different-context variations $\otherctxNGVAS{\anngvas}{\amarking, \anonterm, \amarkingp}$ of $\anngvas$ that will also be useful later in the paper.
Namely, that the resulting NGVAS is of lesser-or-equal to rank compared to $\anngvas$.
This can be easily verified by considering the rank definition.
It also retains the properties relating to the children.
\begin{lemma}\label{Lemma:OtherCTXPropertiesI}
    Let $\otherctxNGVAS{\anngvas}{\amarking, \anonterm, \amarkingp}$ be a variant with $(\amarking, \amarkingp)\in\omegamrkdomainof{\unconstrained}\times\omegamrkdomainof{\unconstrained}$.
    Then $\rankof{\otherctxNGVAS{\anngvas}{\amarking, \anonterm, \amarkingp}}\leq\rankof{\anngvas}$.
    Furthermore, if $\anngvas$ has \perfectnesschildren and \perfectnessbasenospace, then so does $\otherctxNGVAS{\anngvas}{\amarking, \anonterm, \amarkingp}$.
\end{lemma}

Now, we move on the the proof of \Cref{Lemma:LowDimPostPreComputable}.

\begin{proof}[Proof of \Cref{Lemma:LowDimPostPreComputable}]
    Let $(\amarking, \anonterm)\in\afuncdomain_{\acounterset}$ for $\unconstrained\subsetneq\acounterset\subseteq\abdinfomid$.
    Since the case $(\amarking, \anonterm)\in\afuncdomain_{\trms}$ is already covered by \Cref{Lemma:TermsPostPreComputable}, let $(\amarking, \anonterm)\in\afuncdomain_{\acounterset}\setminus\afuncdomain_{\trms}=\afuncdomain_{\nonterms, \acounterset}$.
    Similarly to the previous case, we check if $\amarking\compwith\inof{\anonterm}$, if not we have $\amarking\not\compwith\anngvasp.\acontextin$ and for the first terminal in any word of $\anonterm$, and therefore $\postfuncNof{\anngvas}{\amarking, \anonterm}=\emptyset$ holds.
    Because $\omegaof{\amarking}\subseteq\abdinfomid$, it must be the case that $\amarking\sqsubseteq\inof{\anonterm}$.
    We construct the NGVAS $\otherctxNGVAS{\anngvas}{\amarking, \anonterm, \outof{\anonterm}}$.
    Note that $\otherctxNGVAS{\anngvas}{\amarking, \anonterm, \outof{\anonterm}}.\unconstrained=\omegaof{\amarking}\cap\omegaof{\outof{\anonterm}}=\omegaof{\amarking}.$ 
    Because $\unconstrained\subsetneq\omegaof{\amarking}$, we know that $\cardof{\unconstrained}<\cardof{\omegaof{\amarking}}=\otherctxNGVAS{\anngvas}{\amarking, \anonterm, \outof{\anonterm}}.\unconstrained$.
    Then, $\recrankof{\otherctxNGVAS{\anngvas}{\amarking, \anonterm, \outof{\anonterm}}}<\recrankof{\anngvas}$ clearly holds, by the most significant component of the rank.
    Then, $\perfect$ is reliable up to and for $\rankof{\otherctxNGVAS{\anngvas}{\amarking, \anonterm, \outof{\anonterm}}}$.
    By \Cref{Lemma:OtherCTXPropertiesI}, we know that $\perfectof{\cleanof{\otherctxNGVAS{\anngvas}{\amarking, \anonterm, \outof{\anonterm}}}}$ is a perfect deconstruction of $\otherctxNGVAS{\anngvas}{\amarking, \anonterm, \outof{\anonterm}}$.
    This call is allowed by our programming model.
    The set $\setcond{\anngvas'.\acontextout}{\anngvas'\in\perfectof{\centerdecof{\otherctxNGVAS{\anngvas}{\amarking, \anonterm, \outof{\anonterm}}}}}$ captures the maximal output values.
\end{proof}

It remains to show that $\postfuncN{\anngvas}$ is computable for $\amarking\in\Nomega^{d}$ with $\omegaof{\amarking}=\unconstrained$.
%
%
%
\begin{lemma}\label{Lemma:CurrDimPostComputable}
    Let $\anngvas$ be a non-linear NGVAS with all the perfectness properties excluding \perfectnesspumpingnospace, and let $\perfect$ be reliable up to $\rankof{\anngvas}$.
    Further let $\omegaof{\acontextin}=\unconstrained$.
    We can compute $\postfuncN{\anngvas}$ restricted to the domain $\afuncdomain_{\unconstrained}$.
\end{lemma}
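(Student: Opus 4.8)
\textbf{Proof plan for \Cref{Lemma:CurrDimPostComputable}.}
The plan is to dispose of all the "easy" cases first using the assumptions \extraa--\extrad\ from \Cref{Section:OutlineSimpleCases}, and then treat the two remaining hard cases separately. We have already handled \extraa\ (childNGVAS inputs, \Cref{Lemma:TermsPostPreComputable}) and \extrab\ (the case $\unconstrained\subsetneq\omegaof{\amarking}$, \Cref{Lemma:LowDimPostPreComputable}); since we are now restricted to the domain $\afuncdomain_{\unconstrained}$ and $\asymbol_{in}\in\nonterms$, those do not apply, so we only need the negations of \extrac\ and \extrad\ together with the two hard cases. First I would invoke \Cref{Lemma:TheZDecomp} to pass from the single query NGVAS $\otherctxNGVAS{\anngvas}{\amarkingin, \anonterminalin, \outof{\anonterminalin}}$ to the finite set $\thezdecomp$, in which every output marking carries $\omega$ exactly on the support-counters $\omegaoutcount$; this both restores all perfectness conditions except \perfectnesspumping\ and \perfectnesssol, and makes the (non)existence of a $(\unconstrained,\omegaoutcount)$-$\Z$-pump a property shared by the whole set. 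If there is no such $\Z$-pump, \Cref{Lemma:TheZDecompAssumption} (via \Cref{Lemma:ZPumpImpliesUnbounded} and \Cref{Lemma:ZApproxPumpDecomp}) already yields a perfect decomposition, from which the maximal output values are read off as in the earlier cases. So assume a $\Z$-pump exists; by \Cref{Lemma:PrecalculationI} we may even assume a free-$\N$-pump of computable bounded size centered on every non-terminal.

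The remaining work splits on the magnitude of the input counters. For \textbf{Hard Case 1}, where some relevant counter $i\in\abdinfomid\setminus\unconstrained$ of $\amarkingin$ exceeds the Rackoff-style bound $\abigbound$ computed in \Cref{Lemma:FreePumpRackoff}, that lemma directly gives computability of $\postfuncN{\anngvas}$ on $\setcond{(\amarking,\anonterm)\in\Nomega^{d}\times\nonterms}{\unconstrained=\omegaof{\amarking},\ \exists i\in\abdinfomid\setminus\unconstrained.\ \amarking[i]\geq\abigbound}$: either $(\amarking,\anonterm)\in\simplydecomps$, in which case \Cref{Lemma:SimplePerfectDecomposition} produces a perfect deconstruction, or else \Cref{Lemma:FreePumpNoSDRackoff} shows $\otherctxNGVAS{\anngvas}{\amarking,\anonterm,\amarkingp}$ has all perfectness conditions except \perfectnesssolnospace, which is cheaply fixed, so the runs of that NGVAS already cover everything relevant. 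For \textbf{Hard Case 2}, where all non-$\omega$ input counters are below $\abigbound$, I would run the witness-tree search of \Cref{Section:HardCaseTwo}: \Cref{Lemma:WTEffectiveness} constructs $\witnesssetof{h}(\amarkingin,\anonterminalin)$ for each height $h$ (using the computability we have already established for larger inputs, i.e. \Cref{Corollary:WTSearchAssumption}), \Cref{Lemma:WTConvergence} guarantees the search saturates at some effectively detectable height, and \Cref{Lemma:WTSound} together with \Cref{Lemma:WTComplete} shows that the union of output labels of the saturated witness-tree set is exactly $\downclsof{\postfuncNof{\anngvas}{\amarkingin,\anonterminalin}}$, from which the ideal decomposition $\iddecomp$ is computed. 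Combining the two cases over all of $\afuncdomain_{\unconstrained}$ yields \Cref{Lemma:CurrDimPostComputable}, and hence (with \Cref{Lemma:ApproxCompTransfer}) \Cref{Lemma:PostPreApproxComputable} and \Cref{Lemma:PostPreComputable}.

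I expect the main obstacle to be the orchestration rather than any single step: making sure the reliability hypothesis on $\perffun$ is legitimately available for every recursive invocation, which requires checking that every NGVAS fed into $\perffun$ (the variants $\otherctxNGVAS{\anngvas}{\amarking,\anonterm,\amarkingp}$, the elements of $\thezdecomp$, the simply-decomposable witnesses, and the leaves of the witness-tree search) genuinely has rank strictly below $\rankof{\anngvas}$, or is a childNGVAS of strictly smaller rank, so that the Noetherian recursion is well-founded. This is where \Cref{Lemma:OtherCTXPropertiesI}, \Cref{Lemma:ParentChildRank}, the definition of $\omegaoutcount$ (ensuring \perfectnesscounters\ is not spuriously broken), and the head-domination bookkeeping all have to line up; the genuinely new mathematical content (the Rackoff induction of \Cref{Lemma:Bound}/\Cref{Lemma:PrecalculationII} and the saturation argument of \Cref{Lemma:WTConvergence}) is already isolated in the cited lemmas, so the remaining task is to assemble them into a single terminating procedure and verify the domain of each sub-call.
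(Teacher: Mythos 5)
Your proposal is correct and follows essentially the same route as the paper: dispatch the terminal and lower-dimensional cases via \Cref{Lemma:TermsPostPreComputable} and \Cref{Lemma:LowDimPostPreComputable}, pass to $\thezdecomp$ to restore perfectness up to \perfectnesssol\ and \perfectnesspumpingnospace, split on the existence of a $(\unconstrained,\omegaoutcount)$-$\Z$-pump using \Cref{Lemma:TheZDecompAssumption}, and then handle large inputs by the Rackoff-style bound of \Cref{Lemma:FreePumpRackoff} and small inputs by the saturating witness-tree search of \Cref{Section:HardCaseTwo}. The orchestration concerns you flag (rank decrease for every recursive $\perffun$ call, via \Cref{Lemma:OtherCTXPropertiesI} and \Cref{Lemma:ParentChildRank}) are exactly the bookkeeping the paper performs, so no gap remains.
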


\subsubsection*{Forgetting Counters To Get Decomposition}

We show that we can compute a perfect decomposition for all simply decomposable triples.
\LemmaSimplePerfectDecompositionMainPaper*
\begin{proof}[Proof Sketch]
    Let $(\amarking, \anonterm, \amarkingp)\in\simplydecomps$.
    By the definition of $\simplydecomps$, we know that we can find $i, j\in\abdinfomid\setminus\unconstrained$ such that $\wtgrammarof{\otherctxNGVAS{\anngvas}{\amarking, \anonterm, \amarkingp}, \forgetfulpostapprox{i}, \forgetfulpreapprox{j}}$ does not have a non-terminal $((\amarking', \aprommarking), \anonterm, (\aprommarkingp, \amarkingp'))$ with $\omegaof{\amarking'}=\omegaof{\amarkingp'}=\abdinfomid$.
    Then, by \Cref{Corollary:ForgetfulApproxComputability} and \Cref{Lemma:DecompGivenCG}, we know that we can compute a decomposition $\adecomp'$ of $\otherctxNGVAS{\anngvas}{\amarking, \anonterm, \amarkingp}$ with lesser non-linear rank.
    By the rank-definition, we have $\rankof{\otherctxNGVAS{\anngvas}{\amarking, \anonterm, \amarkingp}}\leq\rankof{\anngvas}$.
    Then, we can call $\perfectof{\cleanof{\adecomp'}}$ within our programming model and get a perfect decomposition.
    This concludes the proof.
\end{proof}
The argument for the converse case is similar to the arguments for \Cref{Lemma:CGPumping}, and \Cref{Lemma:ZApproxPumpDecomp}.
\LemmaSimplyUndecomposableMainPaper*

\subsubsection*{Claims related to $\Z$-pumps}

We show the assumption $\omegaof{\acontextout}\subseteq\omegaoutcount$.
\begin{lemma}\label{Lemma:RelaxedSupport}
    Let $\anngvas$ have all perfectness conditions excluding \perfectnesspumpingnospace.
    Then, $\omegaof{\acontextout}\subseteq\omegaoutcount$.
\end{lemma}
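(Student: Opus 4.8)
\textbf{Proof proposal for Lemma~\ref{Lemma:RelaxedSupport} ($\omegaof{\acontextout}\subseteq\omegaoutcount$).}

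The plan is to unfold the definitions and exploit the fact that the NGVAS $\anngvas$ is almost perfect. Recall that $\omegaoutcount$ consists of those output counters $i$ for which $\outvar[i]$ lies in the support of $\homchareq{}$ of the relaxed variant $\otherctxNGVAS{\anngvas}{\acontextin, \startnonterm, \outof{\startnonterm}}$, whereas condition \perfectnesscountersctx says that $\at{\outvar}{\omegaof{\acontextout}}\subseteq\suppof{\homchareq{\anngvas}}$, i.e.\ every output counter that carries $\omega$ in $\acontextout$ is already in the support of $\anngvas.\homchareq{}$ itself. So the statement amounts to: membership in the support of the \emph{stricter} system $\anngvas.\homchareq{}$ implies membership in the support of the \emph{more relaxed} system $\otherctxNGVAS{\anngvas}{\acontextin, \startnonterm, \outof{\startnonterm}}.\homchareq{}$.

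First I would make precise the sense in which the relaxed system is ``more relaxed''. The two homogeneous characteristic equation systems differ only in the constraint on $\outvar$ (and symmetrically $\invar$): in $\anngvas.\homchareq{}$ we have $\outvar\sqsubseteq\zeroof{\acontextout}$, while in $\otherctxNGVAS{\anngvas}{\acontextin, \startnonterm, \outof{\startnonterm}}.\homchareq{}$ the output marking is replaced by $\outof{\startnonterm}$, and consistency guarantees $\acontextout\sqsubseteq\inof{\startnonterm}$, $\acontextout\sqsubseteq\outof{\startnonterm}$, hence $\zeroof{\acontextout}\sqsubseteq\zeroof{\outof{\startnonterm}}$. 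Concretely this means that every solution of $\anngvas.\homchareq{}$ is also a solution of $\otherctxNGVAS{\anngvas}{\acontextin, \startnonterm, \outof{\startnonterm}}.\homchareq{}$ (the only change is that a coordinate forced to be $0$ in the stricter system is now free, so the constraint set shrinks). Therefore $\suppof{\homchareq{\anngvas}}\subseteq\suppof{\otherctxNGVAS{\anngvas}{\acontextin, \startnonterm, \outof{\startnonterm}}.\homchareq{}}$, because any variable receiving a positive value in some solution of the stricter system receives a positive value in a solution of the relaxed system.

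Combining the two observations closes the argument: take $i\in\omegaof{\acontextout}$. Since $\anngvas$ has all perfectness conditions except \perfectnesspumpingnospace, it in particular satisfies \perfectnesscountersnospace, so \perfectnesscountersctx gives $\outvar[i]\in\suppof{\homchareq{\anngvas}}$. By the support inclusion of the previous paragraph, $\outvar[i]\in\suppof{\otherctxNGVAS{\anngvas}{\acontextin, \startnonterm, \outof{\startnonterm}}.\homchareq{}}$, which is exactly the defining condition for $i\in\omegaoutcount$. The main (and only mildly subtle) obstacle is verifying the support-monotonicity claim carefully: one must check that relaxing a $\sqsubseteq$-constraint on a marking variable genuinely only enlarges the solution space of the homogeneous characteristic equations, taking into account that the same variables $\prodvar,\updvar,\periodvar$ appear in both systems and that the marking equation $\meof{\updates}{\avar_{\myin},\updvar,\avar_{\myout}}$ couples them — but since we are not changing $\acontextin$, and the relaxation only removes the requirement ``$\outvar[i]=0$'' for coordinates $i$ that were previously forced, every old solution survives, which is all that is needed.
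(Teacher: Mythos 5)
Your proof is correct and takes essentially the same route as the paper's: \perfectnesscountersctx\ puts $\outvar[i]$ in $\suppof{\homchareq{\anngvas}}$ for every $i\in\omegaof{\acontextout}$, and since the variant $\otherctxNGVAS{\anngvas}{\acontextin, \startnonterm, \outof{\startnonterm}}$ only relaxes constraints (the output marking changes from $\acontextout$ to $\outof{\startnonterm}$ with $\zeroof{\acontextout}\sqsubseteq\zeroof{\outof{\startnonterm}}$, the input stays $\acontextin$), every homogeneous solution of the stricter system carries over, which is exactly the support inclusion defining $\omegaoutcount$. Two cosmetic points: consistency gives $\acontextin\sqsubseteq\inof{\startnonterm}$ (not $\acontextout\sqsubseteq\inof{\startnonterm}$), and the variant additionally replaces $\restrictions$ by $\Z^{d}$, so the homogeneous restriction equation changes as well --- but only in the relaxing direction (one adjusts the period variables of the new restriction while keeping $\prodvar,\updvar,\invar,\outvar$ fixed), so your monotonicity argument, like the paper's, is unaffected.
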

\begin{proof}
    Suppose there is a $i\in\omegaof{\acontextout}\setminus\omegaoutcount$.
    Since $\anngvas$ has \perfectnesscountersnospace, we know that $\outvar[i]$ is in the support of $\anngvas.\homchareq{}$.
    Note that $\omegaof{\outof{\anonterm}}=\omegaof{\outof{\startnonterm}}$, and $\acontextout\sqsubseteq\outof{\startnonterm}$.
    Thus $\outvar[j]=0$ is required in $\anngvas.\homchareq{}$ for all $j\in[1,d]\setminus\abdinfoleft$.
    The constraints that do not require $\outvar[k]=a$ for some $k\in[1,d]$ and $a\in\N$ are identical between $\anngvas.\homchareq{}$ and $\otherctxNGVAS{\anngvas}{\acontextin, \anonterm, \outof{\anonterm}}.\homchareq{}$.
    Furthermore the latter only has the constraints $\outvar[j]=0$ for $j\in[1, d]\setminus\abdinfomid$. 
    Thus, any solution to $\anngvas.\homchareq{}$ is also a solution to $\otherctxNGVAS{\anngvas}{\acontextin, \anonterm, \outof{\anonterm}}.\homchareq{}$.
    This implies that $\outvar[i]$ is in the support of $\otherctxNGVAS{\anngvas}{\acontextin, \anonterm, \outof{\anonterm}}.\homchareq{}$ and thus $i\in\omegaoutcount$.
\end{proof}

Since $\otherctxNGVAS{\anngvas}{\acontextin, \startnonterm, \outof{\startnonterm}}$ and $\otherctxNGVAS{\anngvas}{\amarking, \anonterm, \outof{\anonterm}}$ have the same homogeneous systems if $(\amarking, \anonterm)\in\Nomega^{d}\times\nonterms$, $\amarking\sqsubseteq\inof{\anonterm}$ and $\omegaof{\amarking}=\omegaof{\acontextin}$, we get the following corollary. 
\begin{corollary}\label{Lemma:SameHomEq}
    Let $\anngvas$ have all perfectness conditions excluding \perfectnesspumping and let $\amarking\in\Nomega^{d}$ and $\anonterm\in\nonterms$ with $\omegaof{\amarking}=\omegaof{\acontextin}$ and $\amarking\sqsubseteq\inof{\anonterm}$.
    Then, $\otherctxNGVAS{\anngvas}{\acontextin, \startnonterm, \outof{\startnonterm}}.\homchareq{}$ and $\otherctxNGVAS{\anngvas}{\amarking, \anonterm, \outof{\anonterm}}.\homchareq{}$ are the same homogeneous system.
    In particular, 
    \begin{align*}
        \omegaoutcount&=\setcond{i\in[1,d]}{\\
        &\hspace{2em}\outvar[i]\in\suppof{\otherctxNGVAS{\anngvas}{\amarking, \anonterm, \outof{\anonterm}}.\homchareq{}}}.
    \end{align*}
\end{corollary}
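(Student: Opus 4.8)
\textbf{Proof Plan for \Cref{Lemma:SameHomEq}.}

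The plan is to compare the homogeneous characteristic systems $\otherctxNGVAS{\anngvas}{\acontextin, \startnonterm, \outof{\startnonterm}}.\homchareq{}$ and $\otherctxNGVAS{\anngvas}{\amarking, \anonterm, \outof{\anonterm}}.\homchareq{}$ equation by equation, and argue that under the stated hypotheses they are literally the same system of constraints. First I would recall the shape of $\homchareq{}$ for a non-linear variant $\otherctxNGVAS{\anngvas}{\amark_1, \anonterm_0, \amark_2}$: it consists of $\homeekof{\agram_{\anonterm_0}}{\prodvar}$ (the homogeneous Esparza-Euler-Kirchhoff equations), the evaluation constraints $\evaleqof{\zeroof{\amark_1}, \zeroof{\amark_2}}{\prodvar, \updvar, \invar, \outvar}$, and the homogeneous restriction $\homrestrictions(\updvar, \periodvar)$. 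Since both variants remove the restriction (the variant construction sets $\restrictions = \Z^d$, so $\homrestrictions$ is trivial / imposes nothing) and since both use the same underlying grammar and non-terminal set, the $\homeek{}$ part and the $\ptof{}{}$, $\computeupdates{}$, $\meof{}{}$ parts coincide. The only place where the input/output markings enter is through the constraints $\invar \sqsubseteq \zeroof{\amark_1}$ and $\outvar \sqsubseteq \zeroof{\amark_2}$ inside $\evaleq{}$.

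The key observation is that $\zeroof{\amark}$ depends only on $\omegaof{\amark}$: by definition $\at{\zeroof{\amark}}{i}=0$ if $\at{\amark}{i}\in\N$ and $\at{\zeroof{\amark}}{i}=\omega$ if $\at{\amark}{i}=\omega$. So $\invar \sqsubseteq \zeroof{\amark}$ is equivalent to ``$\at{\invar}{i}=0$ for all $i \notin \omegaof{\amark}$ and $\at{\invar}{i}$ unconstrained for $i \in \omegaof{\amark}$''. Now I would use the two hypotheses: $\omegaof{\amarking}=\omegaof{\acontextin}$ gives immediately that the input constraints agree, since $\zeroof{\amarking}=\zeroof{\acontextin}$. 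For the output side, I use consistency of the boundedness information: $\acontextout \sqsubseteq \outof{\startnonterm}$ and, because $\startnonterm$ and $\anonterm$ lie in the same SCC (strong connectedness of the NGVAS), $\omegaof{\outof{\startnonterm}} = \omegaof{\outof{\anonterm}}$; combined with the hypothesis $\amarking \sqsubseteq \inof{\anonterm}$ and the NGVAS invariant $\omegaof{\acontextin},\omegaof{\acontextout}\subseteq\adimset$, I get $\zeroof{\outof{\startnonterm}}=\zeroof{\outof{\anonterm}}$, so the output constraints of the two homogeneous systems coincide as well. Hence the two systems are identical, and the stated characterization of $\omegaoutcount$ follows since $\omegaoutcount$ is defined precisely as the set of $i$ with $\outvar[i]$ in the support of $\otherctxNGVAS{\anngvas}{\acontextin, \startnonterm, \outof{\startnonterm}}.\homchareq{}$, which we have just shown equals $\otherctxNGVAS{\anngvas}{\amarking, \anonterm, \outof{\anonterm}}.\homchareq{}$.

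The main obstacle I anticipate is getting the output-marking comparison exactly right: one must be careful that the variant construction uses $\outof{\anonterm}$ as the output marking of $\otherctxNGVAS{\anngvas}{\amarking,\anonterm,\outof{\anonterm}}$, and that $\omegaof{\outof{\anonterm}}$ is really determined by the SCC (this uses that $\abdinforight$ — or $\abdinfomid$ in the branching case — is a single set of $\omega$-counters agreed on by all non-terminals generated on the right, which is part of the NGVAS consistency conditions). A secondary point to check is that ``$\amarking\sqsubseteq\inof{\anonterm}$'' together with $\omegaof{\amarking}=\omegaof{\acontextin}$ does not introduce a spurious extra equality constraint: since $\homchareq{}$ only looks at $\zeroof{-}$ of the markings, the concrete values in $\inof{\anonterm}$ versus $\acontextin$ are irrelevant, so no extra care is needed there. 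Everything else is a routine matching of identical sub-expressions.
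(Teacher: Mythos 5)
Your argument is correct and is essentially the paper's own (one-remark) justification: the homogeneous system depends only on the productions, the children, the identical restriction $\Z^{d}$ of the two variants, and the zero-versions $\zeroof{\acontextin}=\zeroof{\amarking}$ and $\zeroof{\outof{\startnonterm}}=\zeroof{\outof{\anonterm}}$ of the context markings, so the two systems and hence their supports coincide, which is exactly what the characterization of $\omegaoutcount$ requires. Two small simplifications: the change of start non-terminal is harmless because $\homeekof{\agram}{\prodvar}$ has right-hand side $0$ rather than $-1_{\startnonterm}$, and $\omegaof{\outof{\startnonterm}}=\omegaof{\outof{\anonterm}}=\abdinforight$ holds for every non-terminal directly by the type of $\outfun$, so neither the SCC argument nor the hypothesis $\amarking\sqsubseteq\inof{\anonterm}$ (needed only so that the variant is well defined) actually enters the output-side comparison.
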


Thanks to \Cref{Lemma:RelaxedSupport}, we also observe that the images of $\intpostapprox$ and $\intpreapprox$ correspond to the outputs of actual words in the grammar.
\begin{lemma}\label{Lemma:SupportAtTheEndOfTheTunnel}
    Let $\anngvas$ be an NGVAS with all perfectness conditions except \perfectnesspumpingnospace.
    Let $\amarking\in\Nomega^{d}$ with $\omegaof{\acontextin}\subseteq\omegaof{\amarking}\subseteq\abdinfoleft$, and $\anonterm\in\nonterms$.
    Then, for any $\amarking'\in\intpostapproxof{\amarking, \anonterm}$, there are the sequences $[\aword_{i}]_{i\in\N}\in\langof{\anonterm}^{\omega}$ and $[\amarking_{i}']_{i\in\N}\in\Nomega^{d}$ with $\amarking_{i}'\in\amarking+\ceffof{\aword_{i}}$, where $\lim_{i\in\N}\amarking_{i}'=\amarking'$.
    Now let $\amarkingp\in\Nomega^{d}$ with $\omegaoutcount\subseteq\omegaof{\amarkingp}\subseteq\abdinfoleft$.
    Then, for any $\amarkingp'\in\intpreapproxof{\amarkingp, \anonterm}$, there are the sequences $[\awordp_{i}]_{i\in\N}\in\langof{\anonterm}^{\omega}$ and $[\amarkingp_{i}']_{i\in\N}\in\Nomega^{d}$ with $\amarkingp_{i}'\in\amarkingp-\ceffof{\aword_{i}}$, where $\lim_{i\in\N}\amarkingp_{i}'=\amarkingp$.
\end{lemma}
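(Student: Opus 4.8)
\textbf{Proof plan for Lemma~\ref{Lemma:SupportAtTheEndOfTheTunnel}.}
The plan is to unfold the definition of $\intpostapprox$ on non-terminals and transfer a single solution of the characteristic equation into a sequence of genuine words whose effects converge to the chosen $\omega$-marking. First I would take $\amarking'\in\intpostapproxof{\amarking,\anonterm}$. By definition this means $\amarking'=\asol[\outvar]$ for some $\asol\in\omegasolof{\otherctxNGVAS{\anngvas}{\amarking,\anonterm,\outof{\anonterm}}}$, so there is a concrete solution $\asol_{c}\sqsubseteq\asol$ of $\otherctxNGVAS{\anngvas}{\amarking,\anonterm,\outof{\anonterm}}.\chareq{}$ with $\omegaof{\asol}=\suppof{\otherctxNGVAS{\anngvas}{\amarking,\anonterm,\outof{\anonterm}}.\homchareq{}}$. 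The key tool is then the same embedding argument already used in the iteration lemma: pick a full homogeneous solution $\ahomsol$ of the homogeneous characteristic equations (which exists by closure of the solution space under addition), form $\asol_{c}+i\cdot\ahomsol$ for $i\in\N$, invoke Theorem~\ref{Theorem:EEK} (applicable because $\asol_{c}+i\cdot\ahomsol$ solves $\eek{}$ and, once $i\geq1$, is $\geq1$ on all productions in the support — this is where we use that $\at{\ahomsol}{\prodvar}$ is positive on the support) to obtain a production sequence $\anonterm\xrightarrow{\aprodseq_{i}}\asentform_{i}$ with $\asentform_{i}\in\terms^{*}$ by Lemma~\ref{Lemma:EEKConverse}, i.e.\ a word $\aword_{i}\in\langof{\anonterm}$. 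Lemma~\ref{Lemma:EEKConverse} together with the $\computeupdatesof{}{}$ and marking-equation parts of $\chareq{}$ then gives that $\updates\cdot\paramparikhof{\updates}{\aword_{i}}$ equals the $\updvar$-value of $\asol_{c}+i\cdot\ahomsol$, hence $\amarking+\vaseffof{\aword_{i}}=\asol_{c}[\outvar]+i\cdot\ahomsol[\outvar]$; take $\amarking_{i}'$ to be this concrete vector, which lies in $\amarking+\ceffof{\aword_{i}}$.

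The remaining point is convergence $\lim_{i}\amarking_{i}'=\amarking'$. Here I would argue coordinatewise: on coordinates $j\notin\omegaof{\amarking'}$ the variable $\outvar[j]$ is bounded in the solution space, so $\ahomsol[\outvar][j]=0$ and $\amarking_{i}'[j]=\asol_{c}[\outvar][j]=\amarking'[j]$ is already constant; on coordinates $j\in\omegaof{\amarking'}=\suppof{\homchareq{}}$ the full homogeneous solution $\ahomsol$ has $\ahomsol[\outvar][j]>0$, so $\amarking_{i}'[j]=\asol_{c}[\outvar][j]+i\cdot\ahomsol[\outvar][j]\to\infty=\amarking'[j]$. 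Thus the sequence converges in $\Nomega^{d}$ to $\amarking'$. For the second half of the statement one runs the dual argument with $\intpreapprox$, $\otherctxNGVAS{\anngvas}{\inof{\anonterm},\anonterm,\amarkingp}$, and backwards firing / the $\invar$-coordinates; the only extra ingredient is that the homogeneous systems of $\otherctxNGVAS{\anngvas}{\acontextin,\startnonterm,\outof{\startnonterm}}$ and $\otherctxNGVAS{\anngvas}{\amarking,\anonterm,\outof{\anonterm}}$ coincide (Corollary~\ref{Lemma:SameHomEq}), so that the $\omega$-pattern of the approximator really is the support pattern, which is exactly what makes the coordinatewise split above go through, and the hypothesis $\omegaoutcount\subseteq\omegaof{\amarkingp}\subseteq\abdinfoleft$ (via Lemma~\ref{Lemma:RelaxedSupport}) ensures the analogous matching in the backwards direction.

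The step I expect to be the main obstacle is verifying that Theorem~\ref{Theorem:EEK} is genuinely applicable to $\asol_{c}+i\cdot\ahomsol$ and that the resulting word's update Parikh image is exactly the $\updvar$-component of that solution: this requires checking that all non-terminals are useful (which holds for a strong NGVAS, possibly after restricting to the useful part), that the $\eek{}$ right-hand side is respected after the shift, and that the auxiliary equations $\ptof{}{}$, $\computeupdatesof{}{}$, and $\meof{}{}$ propagate correctly — essentially reusing the computations behind equation~\eqref{Equation:UpdatesReach} from the iteration-lemma section. Everything else is bookkeeping: the coordinatewise convergence argument and the duality are routine once the embedding is set up.
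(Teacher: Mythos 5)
Your proposal is correct and takes essentially the same route as the paper: unfold the definition of $\intpostapprox$/$\intpreapprox$ into solutions of the variant characteristic equations, add (multiples of) a full homogeneous solution so that Theorem~\ref{Theorem:EEK} and Lemma~\ref{Lemma:EEKConverse} realize each solution as a derivation with the prescribed effect, and conclude convergence because the homogeneous solution is positive exactly on the support coordinates (the paper takes a sequence of solutions and adds one $\ahomsol$, you take one solution and add $i\cdot\ahomsol$ --- the same argument). The one point to state explicitly is that \perfectnessprods for $\anngvas$, transferred to the less restrictive variant homogeneous systems (via the argument behind Lemma~\ref{Lemma:RelaxedSupport} and Corollary~\ref{Lemma:SameHomEq}), puts \emph{every} production in the support, so your full homogeneous solution is $\geq 1$ on all productions, not merely on those in the support, which is what the hypothesis of Theorem~\ref{Theorem:EEK} actually demands.
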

\begin{proof}
    Let the NGVAS $\anngvas$, the markings $\amarking, \amarking', \amarkingp, \amarkingp'\in\Nomega^{d}$, and non-terminal $\anonterm\in\nonterms$ as given in the lemma.
    By the definition of $\intpostapprox$ resp. $\intpreapprox$, we know that there are sequences of solutions $[\asol_{i}]_{i\in\N}$ to $\otherctxNGVAS{\anngvas}{\amarking, \anonterm, \outof{\anonterm}}.\chareq{}$ and $[\asol_{i}']_{i\in\N}$ to $\otherctxNGVAS{\anngvas}{\inof{\anonterm}, \anonterm, \amarkingp}.\chareq{}$ such that $\lim_{i\in\N} \asol_{i}[\outvar]=\amarking'$ and $\lim_{i\in\N}\asol_{i}'[\outvar]=\amarkingp'$.
    Since $\perfectnessprods$ holds for $\anngvas$, we get a solution $\ahomsol$ to $\homchareq{}_{\anngvas}$ that takes every production rule at least once.
    Because $\omegaof{\acontextin}\subseteq\omegaof{\amarking}$, the homogeneous system for $\otherctxNGVAS{\anngvas}{\amarking, \anonterm, \outof{\anonterm}}$ is less-restrictive, and therefore $\ahomsol$ is also a solution to it.
    Similarly, $\omegaoutcount\subseteq\omegaof{\amarkingp}$ holds, and we get $\omegaof{\acontextout}\subseteq\omegaoutcount$ by \Cref{Lemma:RelaxedSupport}, so $\ahomsol$ is also a solution to the homogeneous system of $\otherctxNGVAS{\anngvas}{\inof{\anonterm}, \anonterm, \amarkingp}$.
    Then, the sequences $[\asol_{i}+\ahomsol]_{i\in\N}$ and $[\asol_{i}'+\ahomsol]_{i\in\N}$ are also sequences of solutions to their respective equation systems.
    Therefore by \Cref{Theorem:EEK}, each solution can be realized as a derivation with the same effect.
    This concludes the proof. 
\end{proof}
We observe that the $\Z$-approximation can yield one of two results.
Either, we get a decomposition, or we get a special cycle called a $\Z$-pump, which, if we ignore the positivity constraints, pumps $\abdinfomid\setminus\unconstrained$ counters on the input, and $\abdinfomid\setminus\omegaoutcount$ on the output.
We show \Cref{Lemma:ZPumpImpliesUnbounded}.
\LemmaZPumpImpliesUnboundedMainPaper*
\begin{proof}[Proof Sketch]
    Let there be such a non-terminal $\awtnontermp$ with $\omegaof{\awtnontermp.\inlabel}=\omegaof{\awtnontermp.\outlabel}=\abdinfomid$.
    To get a $\Z$-pump in $\anngvas$, we first derive $\anonterm$ from $\startnonterm$, which is possible by strong connectedness, repeat the $\Z$-pump enough times to compensate the effect of this and the next derivation, and then derive $\startnonterm$ from $\anonterm$.
    
    We argue the existence of the $\Z$-pump.
    Since $\awtnontermp$ is a non-terminal in the coverability grammar, there is a derivation that leads from the start non-terminal $\wtstartnonterm$ to $\awtnonterm$. 
    Note that $\wtstartnonterm.\symlabel=\anonterm$.
    By the construction of the coverability grammar, whenever a rule moves from $\awtnontermpp$ to $\awtnontermpp'$, we can derive $\awtnontermpp\to\aword.\awtnontermpp'.\awordp$ where $\awtnontermpp'.\wtinlabel\in\intpostapproxof{\awtnontermpp.\wtinlabel, \aword}$, and $\awtnontermpp'.\wtoutlabel\in\intpreapproxof{\awtnontermpp.\wtoutlabel, \awordp}$.
    By \Cref{Lemma:SupportAtTheEndOfTheTunnel}, the effects assigned to $\aword$ and $\awordp$ by the $\Z$-approximations can be realized as the effects of actual derivations.
    By \Cref{Lemma:CGPumping} and \Cref{Lemma:SupportAtTheEndOfTheTunnel}, we get derivations whose effects pump the counters that become $\omega$ along $\wtprodspump$ derivations.
    We get a $\Z$-pump by combining the derivations with a sufficient amount of pumps, followed by deriving $\anonterm$ back from $\awtnonterm.\symlabel$.
\end{proof}
\subsection{Hard Case 1: Computing $\postfuncN{\anngvas}$ for large inputs}\label{Section:Rackoffesque}
In this section, we include the missing proofs for \Cref{Lemma:FreePumpRackoff}.

We argue the following lemma from the extended paper.
Note that we implicitly assume $\omegaof{\acontextin}=\unconstrained$.
This assumption also carries over to the rest of the appendix.
\LemmaFreePumpNoSDBothSidesRackoffMainPaper*
We assume the following two lemmas, which we show in \Cref{Section:Precalculation}.
\LemmaPrecalculationIMainPaper*
\LemmaPrecalculationIIMainPaper*
\begin{proof}[Proof of \Cref{Lemma:FreePumpNoSDBothSidesRackoff}]
    We argue that \Cref{Lemma:SimpleBoundDetailed}, \Cref{Lemma:PrecalculationI}, \Cref{Lemma:PrecalculationII}, \Cref{Lemma:SimplyUndecomposable} show \Cref{Lemma:FreePumpNoSDBothSidesRackoff}.
    The premise of \Cref{Lemma:FreePumpNoSDBothSidesRackoff} guarantees a large counter on both sides of the triple.
    Then, \Cref{Lemma:SimplyUndecomposable} guarantees the existence of a pumping derivation that ignores the large counters.
    By \Cref{Lemma:PrecalculationI} and \Cref{Lemma:PrecalculationII}, yield upper bounds on the sizes of the free-$\N$-pumps resp. pumping derivations.
    This is precisely what we need to apply \Cref{Lemma:SimpleBoundDetailed}, which concludes the proof.
\end{proof}
Now we extend this result to one side, and show \Cref{Lemma:FreePumpNoSDRackoff}.
\LemmaFreePumpNoSDRackoffMainPaper*
First, we show a result about maximal images in $\intpostapprox$.
\begin{lemma}\label{Lemma:MaximalPull}
    There is a function $\phi_{sol}$, computable with elementary resources, such that the following holds.
    Let $\anngvas$ be an NGVAS that has all perfectness conditions excluding \perfectnesspumpingnospace, $\abigconst\in\N$, $(\amarking, \anonterm)\in\afuncdomain_{\unconstrained, \anonterm}$ with $\amarking\sqsubseteq\inof{\anonterm}$ and $\amarking[i]\geq \phi_{sol}(\abigconst\cdot\cardof{\anngvas})$ for some $i\in\abdinfomid\setminus\unconstrained$.
    Then for any maximal $\amarkingp\in\intpostapproxof{\amarking, \anonterm}$, we have a $j\in\abdinfoleft\setminus\unconstrained$ such that $\amarkingp[j]\geq\abigconst$. 
\end{lemma}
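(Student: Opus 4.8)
\textbf{Proof plan for \Cref{Lemma:MaximalPull}.}

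The plan is to reduce the statement to a purely linear-algebraic fact about the characteristic equations. Recall that $\intpostapproxof{\amarking, \anonterm}$ is defined as $\omegazerof{\setcond{\asol[\outvar]}{\asol\text{ solves }\otherctxNGVAS{\anngvas}{\amarking, \anonterm, \outof{\anonterm}}.\chareq{}}}$, so a maximal element $\amarkingp\in\intpostapproxof{\amarking, \anonterm}$ arises as the $\omega$-abstraction of the output-value of a solution $\asol$ of $\chareq{}$ for $\otherctxNGVAS{\anngvas}{\amarking, \anonterm, \outof{\anonterm}}$ that is as large as possible in the relevant coordinates. Now the constraint $\meof{\updates}{\avar_{\myin}, \updvar, \avar_{\myout}}$ forces $\asol[\outvar] = \asol[\invar] + \effof{\updates}\cdot \asol[\updvar]$, and $\asol[\invar]\sqsubseteq\amarking$. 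For a coordinate $i\in\abdinfomid\setminus\unconstrained$ with $\amarking[i]$ concrete (which holds since $\omegaof{\amarking}=\unconstrained$), the characteristic system does not force $\asol[\invar][i]=\amarking[i]$ but merely $\asol[\invar][i]\le\amarking[i]$ after projecting down — actually more precisely $\invar\sqsubseteq\acontextin$ together with $\acontextin=\amarking$ concrete at $i$ forces $\asol[\invar][i]=\amarking[i]$. So the large input value $\amarking[i]$ is literally present in $\asol[\invar]$. The point of the lemma is that this large value, sitting at coordinate $i$, cannot simply be ``absorbed'' by the update-effect staying bounded; either $i$ itself ends up large in $\asol[\outvar]$, or the update-effect $\effof{\updates}\cdot \asol[\updvar]$ must be large and negative at $i$, which (by the boundedness/consistency structure of the equations) forces some other concretely-constrained coordinate $j\in\abdinfoleft\setminus\unconstrained$ to be large as well.

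First I would set up the key observation: since $\anngvas$ (hence $\otherctxNGVAS{\anngvas}{\amarking, \anonterm, \outof{\anonterm}}$, by \Cref{Lemma:OtherCTXPropertiesI} and the arguments around \Cref{Lemma:SameHomEq}) has $\perfectnessprodsnospace$, there is a full homogeneous solution $\ahomsol$ that is positive on every production variable and every child-period variable; adding multiples of $\ahomsol$ to a solution stays a solution. The effect of $\ahomsol$ on the output marking, restricted to $\abdinfoleft\setminus\unconstrained$, is some fixed vector $\avec_{hom}$. The plan is a case distinction on $\avec_{hom}$. If $\avec_{hom}$ has a positive entry at some coordinate $j\in\abdinfoleft\setminus\unconstrained$, then the maximal solution $\asol$ already has $\asol[\outvar][j]$ unbounded, so $\amarkingp[j]=\omega$ and we are done for any threshold. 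Otherwise, $\avec_{hom}$ is $\le 0$ on $\abdinfoleft\setminus\unconstrained$ but cannot be $<0$ everywhere there by the $\homeekof{}{\prodvar}$ / marking-equation structure (the input side of $\ahomsol$ must then be correspondingly positive to compensate, which contradicts $\inof{\startnonterm}$ being bounded at those coordinates unless the effect is $0$). So we reduce to the case where $\avec_{hom}$ is $0$ on $\abdinfoleft\setminus\unconstrained$: here the output values at those coordinates are bounded across all solutions of $\chareq{}$, and the bound can be computed with elementary resources from the description size of $\otherctxNGVAS{\anngvas}{\amarking, \anonterm, \outof{\anonterm}}$ — and crucially this description size is $\bigoof{\cardof{\anngvas} + \normof{\amarking}}$, but we need to separate out the $\amarking$-dependence.

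The genuinely delicate step — and what I expect to be the main obstacle — is extracting the function $\phi_{sol}$ so that it depends only on $\abigconst\cdot\cardof{\anngvas}$ and not on $\amarking$ in a non-elementary way. The trick I would use: apply a Carathéodory-type / Pottier-style bound on the minimal generating set of the solution cone of $\chareq{}$. The solution set of $\chareq{}$ is $\{ \asol_0 + \ahomsol' : \ahomsol'\text{ solves }\homchareq{}\}$ where $\asol_0$ ranges over a finite set of base solutions whose norm is bounded by an elementary function of the coefficient size of $\chareq{}$, which in turn is bounded by an elementary function of $\cardof{\anngvas}$ and $\normof{\amarking}$ — but since $\amarking$ enters only on the right-hand side as a bound on $\invar$, one can decouple: write $\asol[\invar] = \amarking\restriction_{\text{concrete}}$ plus free $\omega$-part, substitute, and observe the resulting system's base solutions (in the remaining variables) have norm bounded elementarily in $\cardof{\anngvas}$ alone, with $\amarking$ only translating the output affinely. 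Thus $\asol[\outvar][j] = (\text{affine-in-}\amarking\text{ term}) + (\text{bounded-by-}g(\cardof{\anngvas})\text{ term})$ for an elementary $g$. In the $\avec_{hom}=0$-on-$\abdinfoleft\setminus\unconstrained$ case, the marking-equation then says $\amarking[i] = \asol[\outvar][i] - (\effof{\updates}\cdot\asol[\updvar])[i]$, and the update-effect at $i$ is a $\Z$-linear combination of columns each of norm $\le\cardof{\anngvas}$; bounding how negative it can be without making some output coordinate in $\abdinfoleft\setminus\unconstrained$ exceed $\abigconst$ forces $\amarking[i] \le g(\cardof{\anngvas})\cdot\abigconst$ for a suitable elementary $g$. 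Contrapositively, if $\amarking[i]\ge\phi_{sol}(\abigconst\cdot\cardof{\anngvas})$ with $\phi_{sol}$ chosen just above this $g(\cardof{\anngvas})\cdot\abigconst$ threshold, some $j\in\abdinfoleft\setminus\unconstrained$ must have $\amarkingp[j]\ge\abigconst$ in every maximal $\amarkingp\in\intpostapproxof{\amarking,\anonterm}$. I would close by verifying $\phi_{sol}$ is computable with elementary resources (it is, being built from standard ILP bound functions) and that the argument is uniform over all $\anonterm\in\nonterms$, since $\cardof{\anngvas}$ dominates the description size of every $\otherctxNGVAS{\anngvas}{\amarking, \anonterm, \outof{\anonterm}}$ up to the $\amarking$-part already accounted for.
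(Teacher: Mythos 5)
Your high-level plan has the right shape: the dichotomy on whether the full homogeneous solution is positive at some output coordinate of $\abdinfomid\setminus\unconstrained$ is exactly the paper's split into $\omegaoutcount\supsetneq\unconstrained$ (where $\amarkingp[j]=\omega$ and there is nothing to prove) versus $\omegaoutcount=\unconstrained$. (A minor slip: homogeneous solutions are $\N$-valued, so the output effect is automatically $\geq 0$; there is no ``negative'' subcase to rule out.)

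The gap is in your steps (4) and (5), which carry the entire quantitative content. First, the decoupling claim is false as stated: if you substitute the concrete value $\amarking[i]$ into the input constraint, the resulting system has a right-hand side of norm about $\normof{\amarking}$, and minimal (Pottier/Hilbert-basis) solutions grow with that norm --- already $x_1-x_2=N$ over $\N$ has minimal solution $(N,0)$ --- so the base solutions are \emph{not} bounded elementarily in $\cardof{\anngvas}$ alone, and $\amarking$ does not merely ``translate the output affinely''. Note also that in the input-pinned system every homogeneous solution has input $0$ at $i$, so the base solution must carry all of $\amarking[i]$, which defeats the decomposition. Second, the implication you rely on --- a very negative update effect at $i$ forces some output coordinate of $\abdinfomid\setminus\unconstrained$ above $\abigconst$ --- does not follow from the columns of $\effof{\updates}$ having small norm, since a solution may use unboundedly many updates; a loop of effect $-1_i$ consumes an arbitrarily large input at $i$ without touching any other coordinate or any output bound. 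The implication is only true for the \emph{maximal} $\amarkingp$, and proving it needs an idea you have not supplied. The paper's move is to decompose the witnessing solution with respect to $\otherctxNGVAS{\anngvas}{\inof{\anonterm},\anonterm,\amarkingp}.\chareq{}$ --- input left \emph{free} at $i$ (it is $\omega$ in $\inof{\anonterm}$), output pinned to the assumed-small $\amarkingp$ --- whose description size is $\bigoof{\abigconst\cdot d\cdot\cardof{\anngvas}}$ independently of $\amarking$. There the base solution is elementarily small, so the large value $\amarking[i]$ must sit in the homogeneous part $\ahomsol$, which has zero output on all constrained coordinates; translating the base solution's input \emph{and} output simultaneously by $\ahomsol[\invar]$ (legal in the output-relaxed system) yields a solution with the same input but output at least $\abigconst$ at coordinate $i$, contradicting maximality of $\amarkingp$. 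This redirection of homogeneous input into extra output is the missing step; without it neither the definition of $\phi_{sol}$ nor the final bound goes through.
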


\begin{proof}
    Standard ILP methods give us a function $\theta_{sol}:\N\to\N$, computable in elementary time, such that it that satisfies the following.
    For all NGVAS $\anngvasp$, the size of the largest base-vector or period-vector in the semi-linear set representation of the solution space of $\anngvasp.\chareq{}$ is less than $\theta_{sol}(\cardof{\anngvasp})$.
    Let $\phi_{sol}:\N\to\N$ with $\phi_{sol}(a)=\theta_{sol}(a)+a$.
    Let $\amarking, \amarkingp\in\Nomega^{d}$ and $\abigconst\in\N$ be as defined in the lemma.
    Similarly to the proof of \Cref{Lemma:FreePumpNoSDBothSidesRackoff}, we have $\amarking\sqsubseteq\inof{\anonterm}$ and $\amarkingp\sqsubseteq\outof{\anonterm}$.
    We have $\omegaof{\amarkingp}=\omegaoutcount$ by \Cref{Lemma:SameHomEq} and the definition of $\intpostapprox$.
    If $\omegaoutcount\neq\unconstrained$ and therefore $\unconstrained\subsetneq\omegaoutcount$, we can pick $j\in\concoutcount$ and get $\amarkingp[j]=\omega\geq \abigconst$.
    This shows \Cref{Lemma:MaximalPull}.
    Then let $\omegaoutcount=\unconstrained=\omegaof{\amarking}$.
    Suppose that for all $j\in\omegaoutcount$, $\amarkingp[j]<\abigconst$ holds.
    We show that in this case, there is a $\amarkingp'\in\intpostapproxof{\amarking, \anonterm}$ where $\amarkingp<\amarkingp'$, contradicting maximality.
    Consider the NGVAS $\otherctxNGVAS{\anngvas}{\inof{\anonterm}, \anonterm, \amarkingp}$.
    We can overapproximate the size of this NGVAS by $\cardof{\otherctxNGVAS{\anngvas}{\inof{\anonterm}, \anonterm, \amarkingp}}<\abigconst\cdot d\cdot\cardof{\anngvas}$.
    We observe that there is a solution $\asol$ to $\otherctxNGVAS{\anngvas}{\inof{\anonterm}, \anonterm, \amarkingp}.\chareq{}$, where $\asol[\invar]\sqsubseteq\amarking$.
    We write $\asol=\asol'+\ahomsol$ for some base vector $\asol'$ in the semi-linear set representation of the solution space of $\otherctxNGVAS{\anngvas}{\inof{\anonterm}, \anonterm, \amarkingp}.\chareq{}$, and a solution $\ahomsol$ of $\otherctxNGVAS{\anngvas}{\inof{\anonterm}, \anonterm, \amarkingp}.\homchareq{}$.
    Then, we have $\amarking=\asol'[\invar]+\ahomsol[\invar]$.
    We know $\asol'[\invar][i]<\theta_{sol}(\abigconst\cdot d\cdot\cardof{\anngvas})$ for all $i\leq d$ by the definition of $\theta_{sol}$.
    Because $\amarking[i]\geq \phi_{sol}(\abigconst\cdot d\cdot\cardof{\anngvas})=\theta_{sol}(\abigconst\cdot d\cdot\cardof{\anngvas})+\abigconst\cdot d\cdot\cardof{\anngvas}$ for some $i\in[1,d]\setminus\unconstrained$, we have $\ahomsol[\invar][i]\geq\abigconst\cdot d\cdot\cardof{\anngvas}\geq \abigconst$.
    Since the output counters in $\overline{\omegaoutcount}=\overline{\unconstrained}$ are all constrained in $\otherctxNGVAS{\anngvas}{\inof{\anonterm}, \anonterm, \amarkingp}.\chareq{}$, we know $\ahomsol[\invar][i]=0$ for all $i\in [1,d]\setminus\unconstrained$.
    Since $\asol'$ is also a solution to $\otherctxNGVAS{\anngvas}{\inof{\anonterm}, \anonterm, \outof{\anonterm}}$, and this system does not constrain the counters in $\abdinfomid$, adding $\ahomsol[\invar]+\sizeof{\ahomsol[\invar]}\cdot 1_{\constrained}$ to both input and output of $\asol'$ also yields a solution.
    The term $\sizeof{\ahomsol[\invar]}\cdot 1_{\constrained}$ ensures that the respective values remain positive for $\unconstrained$ counters.
    Let the solution we obtain this way be $\asol''$.
    We have $\asol''[\invar][k]=\asol[\invar][k]=\amarkingp$ for all $k\in [1,d]\setminus\unconstrained$.
    Then we also have $\settoomega{\unconstrained}{\asol''[\outvar]}\in\intpostapproxof{\amarking, \anonterm}$.
    Note that $\asol[\outvar][k]\leq \asol''[\outvar][k]$ for all $k\in [1,d]\setminus\unconstrained$, and $\asol''[\outvar][i]\geq \abigconst$, for $i\in [1,d]\setminus\unconstrained$.
    This implies $\settoomega{\unconstrained}{\asol''[\outvar]}>\amarkingp$, which is the contradiction we wanted.
\end{proof}

We show that \Cref{Lemma:FreePumpNoSDBothSidesRackoff} and \Cref{Lemma:MaximalPull} together imply \Cref{Lemma:FreePumpNoSDRackoff}.

\begin{proof}[Proof of \Cref{Lemma:FreePumpNoSDRackoff}]
    Let $\anngvas$ have the properties stated in the lemma.
    Let $\abigconst\in\N$ be the computed constant from \Cref{Lemma:FreePumpNoSDBothSidesRackoff}.
    Now consider $\abigbound=\phi_{sol}(\abigconst+\cardof{\anngvas})$ for $\phi_{sol}$ given in \Cref{Lemma:MaximalPull}.
    Let $(\amarking, \anonterm)\in\afuncdomain_{\unconstrained, \anonterm}$ with $\amarking\sqsubseteq\inof{\anonterm}$ and maximal $\amarkingp\in\intpostapproxof{\amarking, \anonterm}$.
    Further let $\amarking[j]\geq \abigbound$ and $(\amarking, \anonterm, \amarkingp)\not\in\simplydecomps$.
    Then, by \Cref{Lemma:MaximalPull}, there must be a $k\in [1,d]\setminus\unconstrained$ such that $\amarkingp[k]\geq \abigconst$.
    Thus, by \Cref{Lemma:FreePumpNoSDBothSidesRackoff}, we conclude that $\otherctxNGVAS{\anngvas}{\amarking, \anonterm, \amarkingp}$ has a pumping derivation.
    By \Cref{Lemma:TheZDecomp}, we know that $\otherctxNGVAS{\anngvas}{\amarking, \anonterm, \amarkingp}$ already had all perfectness properties excluding \perfectnesssol and \perfectnesspumpingnospace.
    This concludes the proof.
\end{proof}

Now, we show \Cref{Lemma:FreePumpRackoff}.

\begin{proof}[Proof of \Cref{Lemma:FreePumpRackoff}]
    Let $\anngvas$ be an NGVAS with a $\Z$-pump, and with all perfectness conditions excluding \perfectnesspumpingnospace.
    Let $\perfect$ be reliable up to $\rankof{\anngvas}$.
    We let $\abigbound\in\N$ as stated by \Cref{Lemma:FreePumpNoSDRackoff}.
    Let 
    \begin{align*}
        (\amarking, \anonterm)&\in\setcond{(\amarking, \anonterm)\in\Nomega^{d}\times\nonterms}{\\
        &\hspace{4em}\unconstrained=\omegaof{\amarking},\; \exists i\in\unconstrained.\; \amarking[i]\geq \abigbound}.
    \end{align*} 
    If $\amarking\sqsubseteq\inof{\anonterm}$ does not hold, then we know that $\postfuncNof{\anngvas}{\amarking, \anonterm}=\emptyset$, and we are done.
    Suppose this is not the case.
    By \Cref{Lemma:TheZDecomp}, we have $\runsof{\otherctxNGVAS{\anngvas}{\amarking, \anonterm, \outof{\anonterm}}}=\bigcup_{\amarkingp\in\intpostapproxof{\amarking, \anonterm}}\runsof{\otherctxNGVAS{\anngvas}{\amarking, \anonterm, \amarkingp}}$.
    We iterate over all $\amarkingp\in\intpostapproxof{\amarking, \anonterm}$, starting from the maximal elements, and construct the output set $\mathsf{Out}_{\N}\subseteq\Nomega^{d}$.
    At each step, we argue that we already cover all possible output values in $\runsof{\otherctxNGVAS{\anngvas}{\amarking, \anonterm, \amarkingp}}$.
    We also make sure that $\mathsf{Out}_{\N}$ remains sound, in that it only contains values witnessable by sequences of runs.
    For $\amarkingp\in\intpostapproxof{\amarking, \anonterm}$, we first check whether $(\amarking, \anonterm, \amarkingp)\in\simplydecomps$ holds.
    If yes, then we compute a perfect decomposition $\adecomp_{\amarkingp}$ of $\otherctxNGVAS{\anngvas}{\amarking, \anonterm, \amarkingp}$ by \Cref{Lemma:SimplePerfectDecomposition}.
    We add $\anngvas'.\acontextout$ to $\mathsf{Out}_{\N}$ for each $\anngvas'\in\adecomp_{\amarkingp}$.
    If no, then we check whether $\amarkingp$ is maximal.
    If it is, then \Cref{Lemma:FreePumpNoSDRackoff} applies, and we get that $\otherctxNGVAS{\anngvas}{\amarking, \anonterm, \amarkingp}$ is perfect up to the condition \perfectnesssolnospace.
    The condition \perfectnesssolnospace can be established by restricting (copies of) $\otherctxNGVAS{\anngvas}{\amarking, \anonterm, \amarkingp}$ to the linear sets from the solution space of $\otherctxNGVAS{\anngvas}{\amarking, \anonterm, \amarkingp}.\chareq{}$.
    Since the solution space of $\homchareq{}$ is uneffected by this procedure, all other perfectness conditions keep holding.
    Then, by \Cref{TheoremIterationLemmaNonLinearOverview}, we can construct a sequence of runs that witness $\downclsof{\amarkingp}\subseteq\postfuncNof{\anngvas}{\amarking, \anonterm}$.  
    If this is also not the case, then $(\amarking, \anonterm, \amarkingp)\not\in\simplydecomps$, but there is maximal $\amarkingp'\in\intpostapproxof{\amarking, \anonterm}$ such that $\amarkingp<\amarkingp'$.
    Note that, $(\amarking, \anonterm, \amarkingp)\not\in\simplydecomps$ implies that for all $i, j\in\abdinfoleft\setminus\unconstrained$, $\wtgrammarof{\otherctxNGVAS{\anngvas}{\amarking, \anonterm, \amarkingp}, \forgetfulpostapprox{\unconstrained\cup\set{i}}, \forgetfulpreapprox{\omegaoutcount\cup\set{j}}}$ contains a non-terminal $((\amarking, \aprommarking), \anontermp, (\aprommarkingp, \amarkingp))$ with $\omegaof{\amarking}=\omegaof{\amarkingp}=\abdinfomid$.
    Since the approximators are monotonous, the coverability grammar $\wtgrammarof{\otherctxNGVAS{\anngvas}{\amarking, \anonterm, \amarkingp'}, \forgetfulpostapprox{\unconstrained\cup\set{i}}, \forgetfulpreapprox{\omegaoutcount\cup\set{j}}}$ also contains such a non-terminal for all $i, j\in\abdinfomid\setminus\unconstrained$.
    Then, $(\amarking, \anonterm, \amarkingp')\in\simplydecomps$.
    This means that, by the previous case, and since we process the maximal elements first, we will have already added $\amarkingp'\in\mathsf{Out}_{\N}$.
    Since $\amarkingp''\sqsubseteq\amarkingp\leq\amarkingp'$ for all $(\amarking', \arun, \amarkingp'')\in\runsof{\otherctxNGVAS{\anngvas}{\amarking, \anonterm, \amarkingp}}$, the element $\amarkingp'\in\mathsf{Out}_{\N}$ already covers all possible runs captured by the NGVAS $\otherctxNGVAS{\anngvas}{\amarking, \anonterm, \amarkingp}$.
    This concludes the proof.
\end{proof}

\subsection{Hard Case 2, Witness Trees}\label{Section:WitnessTreesL3}

We provide the details omitted from \Cref{Section:HardCaseTwo}.
We recall the definition of the witness tree, but with the formalism of marked parse trees at hand.
A witness tree is a marked parse tree in the grammar of $\anngvas$, which follows $\postfuncN{\anngvas}$ for the labels on its leaves.
Formally, a \emph{witness tree} $\atree$ is a marked parse tree that satisfies the following.
%
\begin{itemize}
    \item We have $\atree.\inlabel\in\fullinputdom$.
    \item For any node $\anode\in\atree$ where $\childnodesof{\anode}=\anodep.\anodepp$, $\anode.\inlabel=\anodep.\inlabel$ and and $\anodep.\outlabel=\anodepp.\inlabel$.
    \item No node has the same label as one of its successors.
    \item For any node $\anode$ with $\nodemarkingof{\aleaf}=(\amarking,\asymbol,\amarkingp)$, we have
    $(\amarking,\asymbol)\in\fullinputdom\times\trms\cup\largeinputdom\times\nonterms$ iff $\anode$ is a leaf. 
    \item For any leaf $\aleaf$ with $\nodemarkingof{\aleaf}=(\amarking, \asymbol, \amarkingp)$, we have $\amarkingp\in\postfuncNof{\anngvas}{\amarking, \asymbol}$.
    \item For every subtree $\atreep$ of $\atree$, we have $\atreep.\outlabel=\pumpingof{\atreep}$.
\end{itemize}

We show that witness trees are sound and complete with respect to coverability.
\LemmaWTSoundMainPaper*
\begin{proof}
    Let $\atree\in\witnesssetof{h}$.
    First, we show by induction on $h\in\N$ that $\atree.\inlabel\sqsubseteq\inof{\atree.\symlabel}$, $\atree.\outlabel\sqsubseteq\outof{\atree.\symlabel}$, and $\omegaof{\atree.\inlabel}\subseteq\omegaof{\atree.\outlabel}$ hold.
    For the base case, we have $h=0$.
    This means that $\atree$ has only one node, which is a leaf.
    Then, the definition of the witness tree yields $\atree.\outlabel\in\postfuncNof{\anngvas}{\atree.\inlabel, \atree.\symlabel}$.
    It must hold that there is a run $(\amarking, \atree.\symlabel, \amarkingp)\in\runsof{\atree.\symlabel}$ where $\amarking\sqsubseteq\atree.\inlabel$ and $\amarkingp\sqsubseteq\atree.\outlabel$.
    This is only possible if $\amarking\sqsubseteq\inof{\atree.\symlabel}$ and $\amarkingp\sqsubseteq\outof{\atree.\symlabel}$.
    We obtain the first claim.
    This also yields $\omegaof{\atree.\inlabel}\subseteq\abdinfomid$.
    We move on to $\omegaof{\atree.\inlabel}\subseteq\omegaof{\atree.\outlabel}$
    Since terminals (and thus strings of terminals) only constrain the counters in $[1,d]\setminus\abdinfomid$, and $\omegaof{\atree.\inlabel}\subseteq\abdinfomid$, we know that for any $k\in\N$, $(\amarking+k\cdot 1_{\omegaof{\atree.\inlabel}}, \atree.\symlabel,\amarkingp+k\cdot 1_{\omegaof{\atree.\inlabel}})\in\runsof{\atree.\symlabel}$ as well.
    By the definition of $\postfuncN{\anngvas}$, this implies that for any $\amarkingpp\in\postfuncNof{\anngvas}{\atree.\inlabel, \atree.\symlabel}$, it must hold that $\omegaof{\atree.\inlabel}\subseteq\omegaof{\amarkingpp}$.
    This shows the claim.
    We only sketch out the proof for the inductive case.
    The results follow from the induction hypothesis and the witness tree properties. 
    The conditions $\atree.\inlabel\sqsubseteq\inof{\atree.\symlabel}$, $\atree.\outlabel\sqsubseteq\outof{\atree.\symlabel}$ follow from the fact that the witness tree embeds a derivation tree, which, when combined with the induction hypothesis, ensure that the $\infun$ and $\outfun$ assignments are sound.
    The condition $\omegaof{\atree.\inlabel}\subseteq\omegaof{\atree.\outlabel}$ follows from the induction hypothesis, transitivity of $\mathord{\subseteq}$, and the fact that the left and right subtrees of marked parse trees (and thus witness trees) agree on the output resp. input markings.

    Now, we move on to the last condition.
    We show two claims by induction on $h$.
    For convenience, in the case of $h\neq 0$, let $\atree_{\lefttag}$ be the left-subtree of $\atree$, and $\atree_{\righttag}$ the right-subtree.
    We claim
    \begin{itemize}
        \item[(i)] for any successor node $\anode$ labeled $(\amarkingpp, \anontermp, \amarkingppp)$ with $\omegaof{\amarkingp}=\omegaof{\atree.\inlabel}$ and $\anontermp\in\nonterms$, we have a derivation $\atree.\symlabel\to\aword_{\anode}.\anontermp.\awordp_{\anode}$ with sequences $\arun_{\anode}\in\updateseqof{\aword_{\anode}}$, $\arunp_{\anode}\in\updateseqof{\awordp_{\anode}}$ where $\atree.\inlabel\fires{\arun_{\anode}}\amarkingpp$, $\settoomega{\omegaof{\atree_{\righttag}.\outlabel}}{\amarkingppp}\fires{\arunp_{\anode}}\atree_{\righttag}.\outlabel$.
        \item[(ii)] there is a sequence of runs $[(\amarking_{i}, \atree.\symlabel, \amarkingp_{i})]_{i\in\N}\in\runsof{\atree.\symlabel}^{\omega}$ such that $\amarking_{i}\sqsubseteq\atree.\inlabel$ and $\amarkingp_{i}\sqsubseteq\atree.\outlabel$ for all $i\in\N$, and we have $\lim_{i\in\N}\amarkingp_{i}=\atree.\outlabel$.
    \end{itemize}
    The claim (i), in the case of $\atree.\symlabel=\anode.\symlabel$, yields a context that can be repeated to obtain a positive effect on the non-$\omega$ counters of $\atree_{\righttag}.\outlabel$ while keeping the left-side counters stable.
    We do not require $\atree.\symlabel=\anode.\symlabel$ in the claim to keep the induction sound.
    Claim (ii) says that the output markings of a node can be justified by a sequence of runs from the input, that converge to the output.
    We proceed with the base case, $h=0$.
    The tree consists of one leaf label, so claim (i) does not apply, and we have $\atree.\outlabel\in\postfuncNof{\anngvas}{\atree.\inlabel, \atree.\symlabel}$, which shows (ii).
    Now we move to the inductive case $h+1$.
    Note that the trees $\atree_{\lefttag}$ and $\atree_{\righttag}$ both have a maximal height of $h$.
    We show (i).
    Let $\anode$ be a node in $\atree$.
    We show the case where $\anode$ is a non-root node in $\atree_{\lefttag}$, the cases where $\anode$ is a node in $\atree_{\righttag}$, or the root node of $\atree_{\lefttag}$ are similar.
    Let $(\amarkingpp, \anontermp, \amarkingppp)$ be the label of $\anode$, where $\omegaof{\amarkingpp}=\omegaof{\atree.\inlabel}$ and $\anontermp\in\nonterms$ hold.
    Since $\omegaof{\amarkingpp}=\omegaof{\atree.\inlabel}$, and witness trees only gain $\omega$ markings when moving from left to right, it must hold that $\omegaof{\atree_{\lefttag}.\inlabel}=\omegaof{\atree.\inlabel}$.
    By applying the induction hypothesis for (i) to $\atree_{\lefttag}$, we obtain a derivation $\atree_{\lefttag}.\symlabel\to\aword_{\anode}'.\anontermp.\awordp_{\anode}'$ with sequences $\arun_{\anode}'\in\updateseqof{\aword_{\anode}'}$, $\arunp_{\anode}'\in\updateseqof{\awordp_{\anode}'}$ where $\atree.\inlabel\fires{\arun_{\anode}'}\amarkingpp$, and $\settoomega{\omegaof{\atree_{\lefttag, \righttag}.\outlabel}}{\amarkingppp}\fires{\arunp_{\anode}'}\atree_{\lefttag, \righttag}.\outlabel$.
    Here $\atree_{\lefttag, \righttag}$ is the right-subtree of $\atree_{\lefttag}$.
    Since $\omegaof{\atree_{\lefttag, \righttag}.\outlabel}\subseteq\omegaof{\atree_{\lefttag}.\outlabel}$ and $\atree_{\lefttag, \righttag}.\outlabel\sqsubseteq\atree_{\lefttag}.\outlabel$, we simplify the latter condition to the weaker $\settoomega{\omegaof{\atree_{\lefttag}.\outlabel}}{\amarkingppp}\fires{\arunp_{\anode}'}\atree_{\lefttag}.\outlabel$.
    We use the induction hypothesis for (ii) on $\atree_{\righttag}$, and get a run $(\amarking_{\righttag}, \arun_{\righttag}, \amarkingp_{\righttag})\in\runsof{\atree_{\righttag}.\symlabel}$ where $\amarking_{\righttag}\sqsubseteq\atree_{\righttag}.\inlabel$, and $\amarkingp_{\righttag}\sqsubseteq\atree_{\righttag}.\outlabel$.
    Then, $\atree_{\righttag}.\inlabel\fires{\arun_{\righttag}}\atree_{\righttag}.\outlabel$.
    Since $\atree_{\lefttag}.\outlabel=\atree_{\righttag}.\inlabel$, and $\omegaof{\atree_{\righttag}.\inlabel}\subseteq\omegaof{\atree_{\righttag}.\outlabel}$, it holds that $\settoomega{\omegaof{\atree_{\righttag}.\outlabel}}{\atree_{\righttag}.\inlabel}\fires{\arun_{\righttag}}\atree_{\righttag}.\outlabel$ and $\settoomega{\omegaof{\atree_{\righttag}.\outlabel}}{\amarkingppp}\fires{\arunp_{\anode}'}\atree_{\righttag}.\inlabel$.
    Combining these, we get the derivation $\atree.\symlabel\to\aword_{\anode}'.\anontermp.\awordp_{\anode}'.(\atree_{\righttag}.\symlabel)$, with the sequences $\arun_{\anode}'\in\runsof{\aword_{\anode}'}$ where $\atree.\inlabel\fires{\arun_{\anode}'}\amarkingpp$, and $\arunp_{\anode}'.\arun_{\righttag}\in\runsof{\awordp_{\anode}'.(\atree_{\righttag}.\symlabel)}$ where $\settoomega{\omegaof{\atree_{\righttag}.\outlabel}}{\amarkingppp}\fires{\arunp_{\anode}'.\arun_{\righttag}}\atree_{\righttag}.\outlabel$.
    Letting $\aword_{\anode}=\aword_{\anode}'$, $\awordp_{\anode}=\awordp_{\anode}'.(\atree_{\righttag}.\symlabel)$, $\arun_{\anode}=\arun_{\anode}'$, and $\arunp_{\anode}=\arunp_{\anode}'$ concludes the proof of (i).
    Now we show (ii).
    By applying the induction hypothesis for (ii) on the subtrees $\atree_{\lefttag}$ and $\atree_{\righttag}$ and combining the runs, it can be readily verified that we can get a sequence of runs that reach $\atree_{\righttag}.\outlabel$ from $\atree_{\lefttag}.\inlabel=\atree.\inlabel$.
    Formally, we know that there is a sequence of runs $[(\amarking_{i}, \arunpp_{i}, \amarkingp_{i})]_{i\in\N}\in\runsof{\atree.\symlabel}$ where $\amarking_{i}\sqsubseteq\atree.\inlabel$ and $\amarkingp_{i}\sqsubseteq\atree_{\righttag}.\outlabel$ for all $i\in\N$, and $\lim_{i\in\N}\arunpp_{i}=\atree_{\righttag}.\outlabel$.
    We can assume wlog. that $\arunpp_{i}$ adds at least $i$ tokens to the counters that become $\omega$ when moving from $\atree.\inlabel$ to $\atree_{\righttag}.\outlabel$.
    That is, we assume $\updates\cdot\paramparikhof{\updates}{\arunpp_{i}}[a]\geq i$ for all $i\in\N$ and $a\in\omegaof{\atree_{\righttag}.\outlabel}\setminus\omegaof{\atree.\inlabel}$.
    Now, we show that we can construct a sequence of runs that reach the output label after the effect of $\pumpingof{-}$.
    Consider the set of nodes $\set{\anode_{0}, \ldots, \anode_{\ell}}$ that justify the new $\omega$ counters in $\pumpingof{(\atree.\inlabel, \atree.\symlabel, \atree_{\righttag}.\outlabel):\atree_{\lefttag}.\atree_{\righttag}}$.
    Then, for each $j\leq \ell$, the node $\anode_{j}$ is labeled $(\atree.\inlabel, \atree.\symlabel, \amarkingppp_{j})$ where $\amarkingppp_{j}<\atree_{\righttag}.\outlabel$.
    Furthermore, for each $a\in\omegaof{\atree.\outlabel}\setminus\omegaof{\atree_{\righttag}.\outlabel}$, there is a $j\in[1,d]$, where $\amarkingppp_{j}[a]<\atree_{\righttag}.\outlabel[a]$.
    By (i), for each $j\leq \ell$ we get a derivation $\atree.\symlabel\to\aword_{\anode_{j}}.(\atree.\symlabel).\awordp_{\anode_{j}}$ and runs $(\amarking_{\anode}, \arun_{\anode}, \amarkingpp_{\anode})\in\runsof{\aword_{\anode_{j}}}$ and $(\amarkingppp_{\anode}, \arunp_{\anode}, \amarkingp_{\anode})\in\runsof{\awordp_{\anode_{j}}}$ with the properties described in claim (ii).
    Note that this also yields the derivations $\atree.\symlabel\to\aword_{\anode_{0}}^{i}\ldots\aword_{\anode_{\ell}}^{i}.(\atree.\symlabel).\awordp_{\anode_{\ell}}^{i}\ldots\awordp_{\anode_{0}}^{i}$ for each $i\in\N$.
    In the following, we use this derivation, combined with the sequence of runs $[(\amarking_{i}, \arunpp_{i}, \amarkingp_{i})]_{i\in\N}\in\runsof{\atree.\symlabel}$, to construct a sequence of runs in $\runsof{\atree.\symlabel}$ that reaches $\atree.\outlabel$.
    First, note that for any $i\in\N$, we have $\atree.\inlabel\fires{\arun_{\anode_{j}}^{i}}\atree.\inlabel$ for each $j\leq\ell$, and thus $\atree.\inlabel\fires{\arun_{\anode_{0}}^{i}\ldots\arun_{\anode_{\ell}}^{i}}\atree.\inlabel$.
    Let $j\leq\ell$.
    We have $\settoomega{\omegaof{\atree_{\righttag}.\outlabel}}{\amarkingpp_{\anode_{j}}}\fires{\arunp_{\anode_{j}}}\atree_{\righttag}.\outlabel$.
    Since $\amarkingpp_{\anode_{j}}\leq\atree_{\righttag}.\outlabel$, we observe $\updates\cdot\paramparikhof{\updates}{\arunp_{\anode_{j}}}[a]\geq 0$ for all $a\in[1,d]\setminus\omegaof{\atree_{\righttag}.\outlabel}$.
    Then, $\atree_{\righttag}.\outlabel\fires{\arunp_{\anode_{j}}^{i}}$, and $\arunp_{\anode_{j}}^{i}$ has a non-negative effect on counters $[1,d]\setminus\omegaof{\atree_{\righttag}.\outlabel}$ for all $i\in\N$ and $j\leq \ell$.
    If $\amarkingpp_{\anode_{j}}[a]<\atree_{\righttag}.\outlabel[a]$ for $a\in[1,d]\setminus\omegaof{\atree_{\righttag}.\outlabel}$, we observe $\updates\cdot\paramparikhof{\updates}{\arunp_{\anode_{j}}}[a]\geq 1$.
    Thus $\atree_{\righttag}.\outlabel\fires{\arunp_{\anode_{0}}^{i}\ldots\arunp_{\anode_{\ell}}^{i}}$, and $\updates\cdot(\paramparikhof{\updates}{\arunp_{\anode_{0}}^{i}\ldots\arunp_{\anode_{\ell}}^{i}})[a]\geq i$ for all $a\in\omegaof{\atree.\outlabel}\setminus\omegaof{\atree_{\righttag}.\outlabel}$ and $i\in\N$.
    Now, for all $i\in\N$, let $j_{i}=i\cdot(\cardof{\arunp_{\anode_{0}}\ldots\arunp_{\anode_{\ell}}}+1)$.
    Then, we get the enabledness $\atree.\inlabel\fires{\arunpp_{j_{i}}.\arunp_{\anode_{0}}^{i}\ldots\arunp_{\anode_{\ell}}^{i}}$ since $\arunpp_{j_{i}}$ fills all counters in $\omegaof{\atree_{\righttag}.\outlabel}\setminus\omegaof{\atree.\inlabel}$ with an amount of tokens that cannot be exhausted by the suffix.
    We also know that $\arunp_{\anode_{0}}^{i}\ldots\arunp_{\anode_{\ell}}^{i}$ pumps all counters in $\omegaof{\atree.\inlabel}\setminus\omegaof{\atree_{\righttag}.\outlabel}$ by $i$ tokens.
    Combining this with $\atree.\inlabel\fires{\arun_{\anode_{0}}^{i}\ldots\arun_{\anode_{\ell}}^{i}}\atree.\inlabel$, and $\arunp_{\anode_{0}}^{i}\ldots\arunp_{\anode_{\ell}}^{i}$, we observe that 
    $$[\arun_{\anode_{0}}^{i}\ldots\arun_{\anode_{\ell}}^{i}\arunpp_{j_{i}}\arunp_{\anode_{\ell}}^{i}\ldots\arunp_{\anode_{0}}^{i}]_{i\in\N}\in\runsof{\atree.\symlabel}^{\omega}$$
    is the sequence of runs claimed to exist by (ii).
    This concludes the proof.
\end{proof}

Now we show completeness.
\LemmaWTCompleteMainPaper*
\begin{proof}
    Let $\asymbol\in\nonterms\cup\trms$, and $(\amarking, \arun, \amarkingp)\in\runsof{\asymbol}$.
    Further let $\amarking_{\omega}\in\Nomega^{d}$ with $\settoomega{\unconstrained}{\amarking}\sqsubseteq\amarking_{\omega}$.
    We make an induction over the height of the $\anngvas$-reachability tree that witnesses the run $(\amarking, \arun, \amarkingp)\in\runsof{\asymbol}$.
    Let $\atreep\in\reachtreesof{\anngvas}$ be the reachability tree that witnesses  $(\amarking, \arun, \amarkingp)\in\runsof{\asymbol}$.
    For the base case, we have that the height of $\atreep$ is $0$, i.e. that $\atreep$ consists of just one leaf node, labeled $(\amarking, \asymbol, \amarkingp)$.
    Then, it must hold that $\asymbol\in\trms$.
    By definition, we have $\amarkingp\leq\amarkingp_{\omega}$ for some $\amarkingp_{\omega}\in\postfuncNof{\anngvas}{\amarking_{\omega}, \asymbol}$, since $\amarking\sqsubseteq\amarking_{\omega}$.
    Since $\settoomega{\unconstrained}{\amarking}\in\fullinputdom$, we have the $0$-depth witness tree whose sole node is labeled $(\amarking_{\omega}, \asymbol, \amarkingp_{\omega})$.
    
    For the inductive case, assume that $\atreep$ has height $h+1$, and all reachability trees below this height have covering witness trees as specified by the lemma.
    Then, it must hold that $\asymbol\in\nonterms$, we let $\asymbol=\anonterm$ to make this clear.
    We know that $\amarking_{\omega}\in\fullinputdom=\largeinputdom\uplus\smallinputdom$.
    If $\amarking_{\omega}\in\largeinputdom$, then, by a similar argument to the base case, there exists a witness tree whose sole node is labeled $(\amarking_{\omega}, \anonterm, \amarkingp_{\omega})$ for some $\amarkingp_{\omega}\in\postfuncNof{\anngvas}{\amarking_{\omega}, \anonterm}$.
    Now let $\amarking_{\omega}\in\smallinputdom$.
    Let $\atreep_{\lefttag}$ be the subtree of $\atreep$ centered on the left child of the root node, and $\atreep_{\righttag}$ the subtree centered on the right child.
    Note that these trees are of lesser height than $\atreep$, and we have the rule $\anonterm\to(\atreep_{\lefttag}.\symlabel).(\atreep_{\righttag}.\symlabel)$.
    We use the induction hypothesis to construct the witness tree $\atree_{\lefttag}$ with $\amarking\sqsubseteq\atree_{\lefttag}.\inlabel=\amarking_{\omega}$, $\atree_{\lefttag}.\symlabel=\atreep_{\lefttag}.\symlabel$, and $\atree_{\lefttag}.\outlabel\leq\atreep_{\lefttag}.\outlabel$.
    It is easy to see that translating all input and output markings of the reachability tree $\atree_{\righttag}\in\reachtreesof{\anngvas}$ along the same vector $\amarkingpp\in\N^{d}$ with $\amarkingpp[j]=0$ for all $j\in[1,d]\setminus\abdinfomid$ also results in a reachability tree.
    Since $\atreep_{\lefttag}.\outlabel\leq\atree_{\lefttag}.\outlabel$, there is a vector $\amarkingpp\in\N^{d}$, such that $\atreep_{\lefttag}.\outlabel+\amarkingpp\sqsubseteq\atree_{\lefttag}.\outlabel$.
    Because $\atreep_{\lefttag}.\outlabel, \atree_{\lefttag}.\outlabel\sqsubseteq\outof{\atreep_{\lefttag}.\symlabel}$ holds by \Cref{Lemma:WTSound}, we can assume $\amarkingpp[j]=0$ for all $j\in[1,d]\setminus\abdinfomid$. 
    Let $\atreep_{\righttag}'$ be the tree $\atreep_{\righttag}$ translated along such a $\amarkingpp\in\N^{d}$.
    Since $\unconstrained\subseteq\omegaof{\atree_{\lefttag}.\inlabel}\subseteq\omegaof{\atree_{\lefttag}.\outlabel}$, the induction hypothesis applies to $\atreep_{\righttag}'$ with the input label $\atree_{\lefttag}.\inlabel$.
    This yields a witness tree $\atree_{\righttag}$ such that $\atree_{\righttag}.\inlabel=\atree_{\lefttag}.\outlabel$, $\atree_{\righttag}.\symlabel=\atree_{\righttag}.\symlabel$, and $\atreep_{\righttag}.\outlabel\leq\atreep_{\righttag}.\outlabel+\amarkingpp=\atreep_{\righttag}'\leq\atree_{\righttag}.\outlabel$.
    Let $\atree'=(\amarking_{\omega}, \anonterm, \atree_{\righttag}.\outlabel):\atree_{\lefttag}.\atree_{\righttag}$, and let $\atree=(\amarking_{\omega}, \anonterm, \pumpingof{\atree'}):\atree_{\lefttag}.\atree_{\righttag}$.
    Clearly, $\atreep.\outlabel=\atreep_{\righttag}.\outlabel\leq\atree'.\outlabel\leq\atree.\outlabel$.
    If $\atree'$ is readily a witness tree, we are done.
    If this is not the case, then a node repeats the label of one of its successors in $\atree$.
    This can only be the root node, since $\atree_{\lefttag}$ and $\atree_{\righttag}$ are already witness trees.
    In that case, $\atree$ must contain a subtree $\atree''$ with the root label $(\amarking_{\omega}, \atreep.\symlabel, \pumpingof{\atree'})$.
    Then, $\atree''$ must be a subtree of one of $\atree_{\lefttag}$ or $\atree_{\righttag}$, which means that $\atree''$ is a witness tree.
    This shows our claim.
\end{proof}
Witness trees are also effectively constructable.
\LemmaWTEffectivenessMainPaper*
\begin{proof}
    For all $(\amarking, \asymbol)\in\fullinputdom\times(\nonterms\cup\trms)$, we compute the witness trees in $\witnesssetof{h}(\amarking, \asymbol)$ inductively in $h\in\N$.
    Let $(\amarking, \asymbol)\in\fullinputdom\times(\nonterms\cup\trms)$.
    Consider the base case $h=0$.
    There are two cases, depending on whether $(\amarking, \asymbol)\in\smallinputdom\times\nonterms$ holds.
    If $(\amarking, \asymbol)\in\smallinputdom\times\nonterms$, we have $\witnesssetof{0}(\amarking, \asymbol)=\emptyset$, since no leaf with label $\smallinputdom\times\nonterms\times\Nomega^{d}$ is allowed.
    Otherwise, we have $\witnesssetof{0}(\amarking, \asymbol)=\setcond{(\amarking, \asymbol, \amarkingp)\in\Nomega^{d}\times(\trms\cup\nonterms)\times\Nomega^{d}}{\amarkingp\in\postfuncNof{\anngvas}{\amarking, \asymbol}}$,
    This is effective, since in this case we have $(\amarking, \asymbol)\in\fullinputdom\times\trms\;\cup\;\largeinputdom\times\nonterms$, and $\postfuncN{\anngvas}$ can be computed under our assumption \Cref{Corollary:WTSearchAssumption}.
    
    For the inductive case, we assume that $\witnesssetof{h}(\amarking)$ is computable when $(\amarking', \asymbol')\in\fullinputdom\times(\nonterms\cup\trms)$ is given.
    Now, we show that $\witnesssetof{h+1}(\amarking, \asymbol)$ is computable.
    Since any terminal labeled node must be a leaf, we have $\witnesssetof{h+1}(\amarking, \asymbol)=\witnesssetof{0}(\amarking, \asymbol)$ and conclude with the induction hypothesis.
    Let $\asymbol=\anonterm\in\nonterms$.
    We claim that
    \begin{align*}
        \witnesssetof{h+1}(\amarking, \anonterm)&=\setcond{(\amarking, \anonterm, \pumpingof{\atreep'}):\atreep_{\lefttag}.\atreep_{\righttag}}{\\
        &\hspace{2em}\atreep'=(\amarking, \anonterm, \atreep_{\righttag}.\outlabel):\atreep_{\lefttag}.\atreep_{\righttag},\;\\
        &\hspace{2em}\anonterm\to\asymbolp_{0}.\asymbolp_{1}\in\prods,\; \atreep_{\lefttag}\in\witnesssetof{h}(\amarking, \asymbolp_{0}),\; \\
        &\hspace{2em}\atreep_{\righttag}\in\witnesssetof{h}(\atreep_{\lefttag}.\outlabel, \asymbolp_{1})}
    \end{align*}
    holds.
    It is clear that every tree from the set on the righthand side of our claim is a witness tree: we ensure that the rules are correctly followed, and we ensure the pumping property for the top node explicitly, and for the remaining nodes by the definition of $\witnesssetof{h}$.
    Now we argue that the righthand side captures all witness trees in $\witnesssetof{h+1}(\amarking, \anonterm)$.
    Any witness tree $\atree$ with a root $(\amarking, \anonterm, \amarkingp)$, has two subtrees $\atreep_{\lefttag}$ and $\atreep_{\righttag}$ of less height, where $\anonterm\to(\atreep_{\lefttag}.\symlabel).(\atreep_{\righttag}.\symlabel)$, $\atreep_{\lefttag}.\outlabel=\atreep_{\righttag}.\inlabel$, and $\amarkingp=\pumpingof{(\amarking, \anonterm, \amarkingp):\atreep_{\lefttag}.\atreep_{\righttag}}=\pumpingof{(\amarking, \anonterm, \atreep_{\righttag}.\outlabel):\atreep_{\lefttag}.\atreep_{\righttag}}$.
    The latter equality follows from the fact that $\pumping$ ignores the $\atree.\outlabel$ for the input $\atree$.
    Such a witness tree $\atree$ is captured by our description.
    This shows the equivalence.
    Finally, we argue that $\witnesssetof{h+1}$ is computable.
    The sets $\witnesssetof{h}(\amarking, \asymbolp_{0})$ and $\witnesssetof{h}(\atreep_{\righttag}.\inlabel, \asymbolp_{1})$ from the description are computable by the induction hypothesis.
    It remains to argue that $\pumpingof{(\amarking, \anonterm, \atreep_{\righttag}.\outlabel):\atreep_{\lefttag}.\atreep_{\righttag}}$ from the description is computable.
    This is clear, we only need to compare the root label of $\atree'$ to the labels of the successor nodes, and set the dominated counters $\omega$ as prescribed by the definition of $\pumpingof{-}$.
    This concludes the proof.
\end{proof}

\subsection{Computing the constants}\label{Section:Precalculation}

\paragraph*{Computing $\incconst$.}
Now, we show \Cref{Lemma:PrecalculationI} from \Cref{Section:ComputingPostAndPreMP}.
\LemmaPrecalculationIMainPaper*
\begin{proof}
    Let $I, O\subseteq[1,d]$.
    We start from an NGVAS $\anngvas$ with a $(I, O)$-$\Z$-pump, $(\startnonterm\to^{*}\aword_{zp, \startnonterm}.\startnonterm.\awordp_{zp, \startnonterm}, \amarking_{zp, \startnonterm}, \amarkingp_{zp, \startnonterm})$, that has all perfectness conditions excluding \perfectnesspumpingnospace.
    Let $\anonterm\in\nonterms$.
    We construct a free-$\N$-pump $(\anonterm\to^{*}\aword_{fp}.\anonterm.\awordp_{fp}, \arun_{fp}, \arunp_{fp})$.
    Recall the properties of the $\Z$-pump.
    We have $\amarking_{zp, \startnonterm}\in\ceffof{\aword_{zp, \startnonterm}}$, $\amarkingp_{zp, \startnonterm}\in\ceffof{\awordp_{zp, \startnonterm}}$, and $\amarking_{zp, \startnonterm}[i]\geq 1$ for all $i\in \abdinfomid\setminus I$, while $-\amarkingp_{zp, \startnonterm}[i]\geq 1$ for all $i\in\abdinfomid\setminus O$.
    Even though $\Z$-pump has been defined in relation to $\anngvas$, for the purposes of this proof, we refer to $(\anonterm\to^{*}\aword.\anonterm.\awordp, \amarking, \amarkingp)$ as a $\Z$-pump (centered on $\anonterm$), if $\amarking\in\ceffof{\aword}$, $\amarkingp\in\ceffof{\awordp}$, $\amarking[i]\geq 1$ for all $i\in\abdinfomid\setminus I$, and $-\amarkingp[i]\geq 1$ for all $i\in\abdinfomid\setminus O$.
    The proof proceeds as follows.
    First, we move to a $\Z$-pump $(\anonterm\to^{*}\aword_{fp}.\anonterm.\awordp_{fp}, \amarking_{fp}, \amarkingp_{fp})$ that goes from $\anonterm$ to $\anonterm$ instead of $\startnonterm$ to $\startnonterm$.
    Then, we move to a $\Z$-pump $(\anonterm\to^{*}\aword_{full}.\anonterm.\awordp_{full}, \amarking_{full}, \amarkingp_{full})$ that produces each child at least once on both sides, and where $\amarking_{full}$ and $\amarkingp_{full}$ can be realized by taking each child period at least once. 
    Finally, we show that the period vectors can be organized such that we can ensure the existence of runs that witness the effects via \Cref{TheoremIterationLemmaNonLinearOverview}.
    
    Let $b_{base}=\max_{\anngvasp\in\trms}\sizeof{\anngvasp.\baseeffect}$ be the maximum size of the base effect of a terminal.
    Since $\anngvas$ is strongly connected, there are derivations $\anonterm\to\aword_{in}.\startnonterm.\awordp_{in}$ and $\startnonterm\to\aword_{out}.\anonterm.\aword_{out}$ where $\aword_{in}, \aword_{out}, \awordp_{in}, \awordp_{out}\in\trms^{*}$.
    Let $b_{len}=\max\set{\cardof{\aword_{in}}, \cardof{\aword_{out}}, \cardof{\awordp_{in}}, \cardof{\awordp_{out}}}$.
    It holds that there are $\amarking_{in}\in\ceffof{\aword_{in}}$, $\amarking_{out}\in\ceffof{\aword_{out}}$, $\amarkingp_{in}\in\ceffof{\awordp_{in}}$, and $\amarkingp_{out}\in\ceffof{\awordp_{out}}$, with $\sizeof{\amarking_{in}}, \sizeof{\amarking_{out}}, \sizeof{\amarkingp_{in}}, \sizeof{\amarkingp_{out}}\leq b_{len}\cdot b_{base}$.
    Let $b_{mov}=b_{len}\cdot b_{base}$ for brevity.
    Then, we can combine the three derivations into $\anonterm\to^{*}\aword_{in}.\aword_{zp, \startnonterm}^{b_{mov}+1}.\aword_{out}.\anonterm.\awordp_{out}.\awordp_{zp,\startnonterm}^{b_{mov}+1}.\awordp_{in}$.
    For brevity, we write $\aword_{zp}=\aword_{in}.\aword_{zp,\startnonterm}^{b_{mov}+1}.\aword_{out}$, $\awordp_{zp}=\awordp_{out}.\awordp_{zp,\startnonterm}^{b_{mov}+1}.\awordp_{in}$, $\amarking_{zp}=\amarking_{in}+(b_{mov}+1)\cdot\amarking_{zp, \startnonterm}+\amarking_{out}$, and $\amarkingp_{zp}=\amarkingp_{out}+(b_{mov}+1)\cdot\amarkingp_{zp,\startnonterm}+\amarkingp_{out}$.
    Then, $(\anonterm\to^{*}\aword_{zp}.\anonterm.\awordp_{zp}, \amarking_{zp}, \amarkingp_{zp})$ is a $\Z$-pump centered on $\anonterm$.

    As a preparation for the application of \Cref{TheoremIterationLemmaNonLinearOverview}, we move to a $\Z$-pump that takes each child period at least twice.
    Consider the full support homogenous solution $\ahomsol$ of $\anngvas$.
    Since $\anngvas$ has \perfectnessprodsnospace, and $\anngvas$ is non-linear, it holds that $\ahomsol$ has a non-zero value entry for each production rule, and child period.
    We construct $\anonterm\to\aword_{hom}.\anonterm.\awordp_{hom}$, where $\paramparikhof{\trms}{\aword_{hom}}[\anngvasp]\geq 1$ and $\paramparikhof{\trms}{\awordp_{hom}}[\anngvasp]\geq 1$ for all $\anngvasp\in\trms$, with $\amarking_{hom}\in\ceffof{\aword_{hom}}$ and $\amarkingp_{hom}\in\ceffof{\awordp_{hom}}$ both obtained by adding the base effect of $\anngvasp\in\trms$, $\paramparikhof{\trms}{\aword_{hom}}[\anngvasp]$ resp. $\paramparikhof{\trms}{\awordp_{hom}}[\anngvasp]$ times, and each period vector of each $\anngvasp\in\trms$ at least once.
    Note that this is possible by applying the branching rule as often as needed, similarly to the proof of \Cref{TheoremWideTree}, since we impose no reachability or positivity constraints for this construction.
    Now we ensure the $\Z$-pump property on top.
    Let $b_{hlen}=\max\set{\sizeof{\amarking_{hom}}, \sizeof{\amarkingp_{hom}}}$.
    Then, for $\aword_{full}=\aword_{zp}^{b_{hlen}+1}.\aword_{hom}$, $\awordp_{full}=\awordp_{hom}.\awordp_{zp}^{b_{hlen}+1}$, $\amarking_{full}=(b_{hlen}+1)\cdot\amarking_{zp}+\amarking_{hom}$, and $\amarkingp_{full}=(b_{hlen}+1)\cdot\amarkingp_{zp}+\amarkingp_{hom}$, the tuple $(\anonterm\to^{*}\aword_{full}.\anonterm.\awordp_{full}, \amarking_{full}, \amarkingp_{full})$ is a $\Z$-pump with $\paramparikhof{\trms}{\aword_{full}}[\anngvasp], \paramparikhof{\trms}{\awordp_{full}}[\anngvasp]\geq 1$ for all $\anngvasp\in\trms$, where $\amarking_{full}$ and $\amarkingp_{full}$ can be realized by taking each child period at least once.
    We construct the sequences of vectors $[z_{j}]_{j<\cardof{\aword_{full}}}$ and $[m_{j}]_{j<\cardof{\awordp_{full}}}$ which prescribe how often each period vector of child $\aword_{full}[j]$ resp. $\awordp_{full}[j]$ needs to be taken.
    The vector $z_{j}$ is typed $\N^{\aword_{full}[j].\periodeffect}$ and $m_{k}$ is typed $\N^{\awordp_{full}[k].\periodeffect}$ for all $j<\cardof{\aword_{full}}$ and $k<\cardof{\awordp_{full}}$.
    We pack all child periods together, i.e. there is the sequence of vectors $[z_{j}]_{j<\cardof{\aword_{full}}}$, where $z_{j}\in\N^{\aword_{full}[j].\periodeffect}$ has $z_{j}=0$ or $z_{j}\geq 1_{\aword_{full}[j].\periodeffect}$ for all $j<\cardof{\aword_{full}}$, as well as a sequence of vectors $[m_{j}]_{j<\cardof{\awordp_{full}}}$, where $m_{j}\in\N^{\awordp_{full}[j].\periodeffect}$ has $m_{j}=0$ or $m_{j}\geq 1_{\awordp_{full}[j].\periodeffect}$ for all $j<\cardof{\awordp_{full}}$, with
    \begin{align*}
        \amarking_{full}&=\sum_{j<\cardof{\aword_{full}}}\aword_{full}[j].\baseeffect +(\aword_{full}.\periodeffect)\cdot z_{j}\\
        \amarkingp_{full}&=\sum_{j<\cardof{\awordp_{full}}}\awordp_{full}[j].\baseeffect +(\awordp_{full}.\periodeffect)\cdot m_{j}
    \end{align*}
    Since the children NGVAS have their base effects enabled by \perfectnessbasenospace, there is a sequence $\arun_{\anngvasp, 0}\in\updateseqof{\anngvasp}$ for each $\anngvasp\in\trms$ with $\paramparikhof{\updates}{\arun_{\anngvasp, 0}}=\anngvasp.\baseeffect$.
    Since the children NGVAS are perfect by \perfectnesschildrennospace, \Cref{TheoremIterationLemmaNonLinearOverview} applies.
    Let $\arun_{\anngvasp, z}^{(\aconst)}\in\updates^{*}$ be the sequence $\arun_{\anngvasp, z}^{(\aconst)}\in\updateseqof{\anngvasp}$ obtained by \Cref{TheoremIterationLemmaNonLinearOverview}, for some $z\in\N^{\anngvasp.\periodeffect}$ and $\aconst\in\N$ with $z\geq 1_{\anngvasp.\periodeffect}$ and $\aconst\geq\initconst$ ($\initconst$ as described in \Cref{TheoremIterationLemmaNonLinearOverview}) where $\paramparikhof{\updates}{\arun_{\anngvasp, z}^{(\aconst)}}=\anngvasp.\baseeffect+\aconst\cdot \anngvasp.\periodeffect\cdot z$.
    Let $\initconst^{max}\in\N$ be the largest $\initconst\in\N$ imposed by \Cref{TheoremIterationLemmaNonLinearOverview}, when applying it to $\aword_{full}[j]$ with $z_{j}$ for some $j<\cardof{\aword_{full}}$ or $\awordp_{full}[j]$ with $m_{j}$ for some $j<\cardof{\awordp_{full}}$.
    Further, let 
    \begin{align*}
        \arun_{full}^{(0)}&=\arun_{\aword_{full}[0], 0}\ldots\arun_{\aword_{full}[\lastindexof{\aword}],0}\\
        \arun_{full}^{(\aconst)}&=\arun_{\aword_{full}[0], z_{0}}^{(\aconst)}\ldots\arun_{\aword_{full}[\lastindexof{\aword_{full}}],z_{\lastindexof{\aword_{full}}}}^{(\aconst)}\\
        \arunp_{full}^{(0)}&=\arun_{\awordp_{full}[0], 0}\ldots\arun_{\awordp_{full}[\lastindexof{\aword}],0}\\
        \arunp_{full}^{(\aconst)}&=\arun_{\awordp_{full}[0], m_{0}}^{(\aconst)}\ldots\arun_{\awordp_{full}[\lastindexof{\awordp_{full}}],m_{\lastindexof{\awordp_{full}}}}^{(\aconst)}\\
    \end{align*}
    Then, for all $\aconst\geq \initconst^{max}$ or $\aconst=0$, $\arun_{full}^{(\aconst)}\in\aword_{full}$, and $\arunp_{full}^{(\aconst)}\in\awordp_{full}$.
    Let $\arun_{fin}=(\arun_{full}^{(0)})^{\aconst-1}.\arun_{full}^{(\aconst)}\in\runsof{\aword_{full}^{\aconst}}$ and $\arunp_{fin}=\arunp_{full}^{(\aconst)}. (\arunp_{full}^{(0)})^{\aconst-1}\in\runsof{\aword_{full}^{\aconst}}$.
    Note that $(\arun_{full}^{(0)})^{\aconst-1}$ takes all base effects $\aconst-1$ times, and $\arun_{full}^{(\aconst)}$ takes them once.
    Conversely, $(\arun_{full}^{(0)})^{\aconst-1}$ takes no period vectors, and the $j$-th sequence in $\arun_{full}^{(\aconst)}$ takes period effects captured by $z_{j}$, $\aconst$ times.
    Then $\updates\cdot\paramparikhof{\updates}{\arun_{fin}}=\aconst\cdot\amarking_{full}$ and by a similar argument $-\updates\cdot\paramparikhof{\updates}{\arunp_{fin}}=\aconst\cdot\amarkingp_{full}$.
    We have $\aconst\cdot\amarking_{full}[i]\geq 1$ for all $i\in\abdinfomid\setminus I$ and $-\aconst\cdot\amarkingp_{full}[i]\geq 1$ for all $i\in\abdinfomid\setminus O$.
    Since there is a derivation $\anonterm\to\aword_{full}^{\aconst}.\anonterm.\awordp_{full}^{\aconst}$, and we have $\arun_{fin}\in\updateseqof{\aword_{full}^{\aconst}}$ and $\arunp_{fin}\in\updateseqof{\awordp_{full}^{\aconst}}$, this concludes the proof.\\
\end{proof}

\paragraph*{Computing $\extcovconst$.}
Finally, we show the following lemma from \Cref{Section:HardCaseOne}.
\LemmaPrecalculationIIMainPaper*
First using \Cref{Lemma:PrecalculationI}, we fix a $(\unconstrained, \omegaoutcount)$-free-$\N$-pump $\gamma$ of size at most $\incconst$.
We break the task of $\extcovconst$ down, into computing the maximum size of a minimal pumping derivation for each choice of $i,j\in\abdinfomid\setminus\unconstrained$.
To make the proof easier, we also allow ignoring further sets of $\omega$ counters.
We formalize this by the following claim.
\begin{lemma}\label{Lemma:PrecalculationIIBreakdown}
    Let $\unconstrained\subseteq I \subseteq \abdinfomid$, $\omegaoutcount\subseteq O\subseteq\abdinfomid$ with non-empty 
    $I\setminus \unconstrained$ and $O\setminus\unconstrained$.
    Let $\perfect$ be reliable up to $\rankof{\anngvas}$.
    We can compute a constant $\extcovconst_{I, O}\in\N$ such for any $\amarking, \amarkingp\in\Nomega^{d}$ and $\anonterm\in\nonterms$ with $\omegaof{\amarking}=I$ and $\omegaof{\amarkingp}=O$, if $\otherctxNGVAS{\anngvas}{\amarking, \anonterm, \amarkingp}$ has a pumping derivation, then it has a pumping derivation $(\anonterm\to^{*}\aword.\anonterm.\awordp, \arun, \arunp)$ with $\normof{\arun}, \normof{\arunp}\leq\extcovconst_{I, O}$.
\end{lemma}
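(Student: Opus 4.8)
\textbf{Proof plan for Lemma~\ref{Lemma:PrecalculationIIBreakdown}.}
The plan is to run a Rackoff-style induction on the number of relevant counters that are still tracked concretely, mirroring the outline of \Cref{Lemma:Bound} and using \Cref{Lemma:SimpleBoundDetailed} as the engine that stitches a free-$\N$-pump onto a lower-dimensional pumping derivation. Fix $I,O$ as in the statement and let $r=|\abdinfomid\setminus\unconstrained|$; without loss of generality we may enlarge $I,O$ in the argument (this is exactly why the lemma is stated for arbitrary $I,O$ with $\unconstrained\subseteq I,O\subseteq\abdinfomid$ rather than just for single forgotten counters). Define a function $g:\{0,\dots,r-1\}\to\N$ so that $g(\ell)$ bounds the size of a \emph{shortest} pumping derivation of any variant $\otherctxNGVAS{\anngvas}{\amarking,\anonterm,\amarkingp}$ with $\omegaof{\amarking}=I$, $\omegaof{\amarkingp}=O$, $|[1,d]\setminus(I\cup O)|$-many relevant counters tracked, and $|I\setminus\unconstrained|,|O\setminus\unconstrained|\le\ell$ — ranging over all start non-terminals $\anonterm\in\nonterms$ (finitely many) and, crucially, over \emph{all} choices of concrete values in the non-$\omega$ coordinates. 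The base $g(1)$ (one relevant counter forgotten on each side) is obtained from a $\Z$-pump via \Cref{Lemma:PrecalculationI}: that gives a free-$\N$-pump of bounded size $\incconst$, and \Cref{Lemma:SimpleBoundDetailed} with $I=O=\emptyset$ (no extra forgetting) and $\ell=0$ turns it into a genuine pumping derivation whose size we can read off.

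For the induction step, suppose we want to bound shortest pumping derivations of $\otherctxNGVAS{\anngvas}{\amarking,\anonterm,\amarkingp}$ with $\ell+1$ relevant counters still tracked on, say, the input side. The set of markings that admit a pumping derivation is upward closed, so it has finitely many minimal elements; the task reduces to computing these minimal elements effectively, after which the associated shortest derivations give $g(\ell+1)$. To compute them we follow the fixed-point scheme described around \Cref{Lemma:Bound}: we already have pumping derivations from the marking with value $g(0)$-ish large everywhere (the $\Z$-pump region); for markings strictly below a known minimal one, in coordinates $\asetp$, we set the remaining relevant coordinates to $\omega$, giving a query with strictly fewer tracked relevant counters, so the induction hypothesis yields a pumping derivation of size $\le g(\ell)$ — but one that need not be pumping (even has negative effect) on $\aset\setminus\asetp$. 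We feed exactly this into \Cref{Lemma:SimpleBoundDetailed}: with a free-$\N$-pump of size $\le k$ (from \Cref{Lemma:PrecalculationI}, applied to the $\Z$-pump, which exists under our standing assumption that $\anngvas$ has a $(\unconstrained,\omegaoutcount)$-$\Z$-pump and all perfectness except \perfectnesspumpingnospace) and the lower-dimensional pumping derivation of size $\le\ell:=g(\ell)$, the lemma certifies a full pumping derivation provided the values in $\aset\setminus\asetp$ are at least $k\cdot(g(\ell)+1)$. This simultaneously (i) gives the bound $k\cdot(g(\ell)+1)$ on the minimal markings in those coordinates and (ii) gives the derivation realizing it; the Karp--Miller construction of \Cref{Lemma:WitnessGrammarTermination}, which is effective by \Cref{Lemma:CGTermination}/\Cref{Corollary:ForgetfulApproxComputability} together with the already-established computability of $\postfunc,\prefunc$ on the strictly smaller domains $\easydomain$ and $\afuncdomain_\terms$, lets us actually decide existence and extract a shortest witness for the finitely many incomparable small markings below each minimal element. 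Taking the maximum of all the derivation sizes produced over the finitely many non-terminals, the finitely many minimal markings, and the two sides yields $\extcovconst_{I,O}$, and $\extcovconst$ is then the maximum of $\extcovconst_{I,O}$ over the finitely many admissible pairs $(I,O)$ with $I\setminus\unconstrained,O\setminus\unconstrained\ne\emptyset$.

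The main obstacle I expect is making the induction-hypothesis application in the step case legitimate: when we forget the coordinates in $\aset\setminus\asetp$ we must land in a query that is genuinely covered by a \emph{smaller} instance of the lemma, which forces the statement to quantify over all start non-terminals $\anonterm\in\nonterms$ and over all concrete counter values (so that the recursive query, which has a different start symbol and a different marking, is still in scope), and we must check that forgetting counters does not destroy the structural hypotheses needed — in particular that the variant $\otherctxNGVAS{\anngvas}{\settoomega{\aset\setminus\asetp}{\amarking},\anontermp,\amarkingp}$ still has a $\Z$-pump and the remaining perfectness conditions, which follows because $\Z$-pumps only get easier under more $\omega$'s (the forgetful approximators of \Cref{Lemma:ForgetfulApproximators} are still approximators) and because $\anngvas$ keeps all perfectness except \perfectnesspumpingnospace\ by assumption. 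The second delicate point is the bookkeeping that the bound $k\cdot(g(\ell)+1)$ fed back into the fixed-point computation does not blow up uncontrollably: it is elementary in the previous bound, so the whole recursion over $\ell\in\{0,\dots,r-1\}$ terminates after $d$ levels and produces a (large but) computable constant. Everything else is routine assembly from \Cref{Lemma:SimpleBoundDetailed}, \Cref{Lemma:PrecalculationI}, and the effective Karp--Miller construction.
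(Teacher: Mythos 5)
Your overall route is the paper's route: an induction over which of the relevant counters are ignored, a well-quasi-order argument that the markings admitting pumping derivations have finitely many minimal elements (equivalently, the paper's cutoff $\ecmcalcconstof{I,O}$ confining them to a finite grid), \Cref{Lemma:SimpleBoundDetailed} together with the size-$\incconst$ free-$\N$-pump of \Cref{Lemma:PrecalculationI} to show that coordinates above the computed bound can be lowered without losing pumpability, an effective Karp--Miller/post-pre search (\Cref{Lemma:ConstructableSimplePumps}, relying on \Cref{Lemma:LowDimPostPreComputable} -- this is exactly why the lemma assumes $I\setminus\unconstrained$ and $O\setminus\unconstrained$ non-empty) to decide and extract derivations for the remaining finitely many candidates, and finally maxima over non-terminals, minimal markings and the pairs $(I,O)$; this matches the paper's \Cref{Lemma:ExtCovMapComputable} and its assembly into $\extcovconst_{I,O}$ via \Cref{Lemma:PumpingMonotonous}.

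There is, however, one step that fails as written: your base case. You claim that $g(1)$ is obtained by feeding the free-$\N$-pump of \Cref{Lemma:PrecalculationI} into \Cref{Lemma:SimpleBoundDetailed} ``with $I=O=\emptyset$ and $\ell=0$''. With $I=O=\emptyset$ that lemma's hypothesis already demands a pumping derivation of the very variant $\otherctxNGVAS{\anngvas}{\amarking,\anonterm,\amarkingp}$ you are trying to handle (and its counter-value condition becomes vacuous), so the application is circular; and a free-$\N$-pump by itself carries no enabledness information, so it cannot be ``turned into'' a pumping derivation without a lower-dimensional pumping derivation to pay for the enabling values. The correct anchor of the recursion is the opposite extreme, $I=O=\abdinfomid$ (all relevant counters ignored), where the strict-increase requirements of a pumping derivation are vacuous and the triples $(\inof{\anonterm},\anonterm,\outof{\anonterm})$ trivially admit short witnesses; from there your inductive step (forget the too-large coordinates, invoke the hypothesis for the more-$\omega$ variant, stitch with \Cref{Lemma:SimpleBoundDetailed} at threshold $\incconst\cdot(g(\ell)+1)$) is sound and is what the paper does, modulo the index bookkeeping in your definition of $g$, which currently mixes ``counters forgotten'' and ``counters tracked'' and should be fixed to the single measure the paper uses, namely the number of concrete counters $2d-(\cardof{I}+\cardof{O})$.
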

Clearly, setting 
\begin{align*}
    \extcovconst=\max_{i, j\in\abdinfomid\setminus\unconstrained}\extcovconst_{\unconstrained\cup\set{i}, \omegaoutcount\cup\set{j}}
\end{align*}
yields a constant as claimed by \Cref{Lemma:PrecalculationII}.
In the following, we show \Cref{Lemma:PrecalculationIIBreakdown}.
First, we observe if the markings are given, then the existence of the pumping derivation is decidable, and we can construct the derivation.
\begin{lemma}\label{Lemma:ConstructableSimplePumps}
    Let $\unconstrained\subseteq I \subseteq \abdinfomid$, $\omegaoutcount\subseteq O\subseteq\abdinfomid$ with non-empty 
    $I\setminus \unconstrained$ and $O\setminus\unconstrained$.
    Let $\perfect$ reliable up to $\rankof{\anngvas}$.
    Let $\amarking, \amarkingp\in\Nomega^{d}$ and $\anonterm\in\nonterms$ with $\omegaof{\amarking}=I$, $\omegaof{\amarkingp}=O$, $\amarking\sqsubseteq\inof{\anonterm}$, and $\amarkingp\sqsubseteq\outof{\anonterm}$.
    Then, we can decide whether there is a pumping derivation $(\anonterm\to^{*}\aword.\anonterm.\awordp, \arun, \arunp)$ for $\otherctxNGVAS{\anngvas}{\amarking, \anonterm, \amarkingp}$.
    If there is, we can construct such a pumping derivation.
\end{lemma}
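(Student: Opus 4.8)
\textbf{Proof plan for Lemma~\ref{Lemma:ConstructableSimplePumps}.}
The plan is to reduce the search for a pumping derivation of $\otherctxNGVAS{\anngvas}{\amarking, \anonterm, \amarkingp}$ to a Karp--Miller--style reachability check for which we already have the machinery of Section~\ref{Section:KarpMillerMP}. Recall that a pumping derivation is a cycle $\anonterm\to^{*}\aword.\anonterm.\awordp$ together with runs $\arun\in\updateseqof{\aword}$, $\arunp\in\updateseqof{\awordp}$ whose forward (resp.\ backward) effect strictly increases every relevant counter, i.e.\ every $i\in\abdinfomid\setminus\omegaof{\amarking}$ on the left and every $j\in\abdinfomid\setminus\omegaof{\amarkingp}$ on the right. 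This is exactly the object that the (adapted) Karp--Miller graph of $\otherctxNGVAS{\anngvas}{\amarking, \anonterm, \amarkingp}$ detects when it reports a node of the form $(\inof{\startnonterm'}, \startnonterm', \outof{\startnonterm'})$ with $\startnonterm'=\anonterm$; this is the content of \Cref{Lemma:ExtPumpChar}. So the first step is: form the variant $\otherctxNGVAS{\anngvas}{\amarking, \anonterm, \amarkingp}$ (which is a legitimate NGVAS since $\amarking\sqsubseteq\inof{\anonterm}$, $\amarkingp\sqsubseteq\outof{\anonterm}$, $\unconstrained\subseteq\omegaof{\amarking}\cap\omegaof{\amarkingp}$ because $I,O\subseteq\abdinfomid$ and both contain $\unconstrained$), observe by \Cref{Lemma:OtherCTXPropertiesI} that $\rankof{\otherctxNGVAS{\anngvas}{\amarking, \anonterm, \amarkingp}}\le\rankof{\anngvas}$, so $\perfect$ remains reliable for it, and then invoke \Cref{Lemma:ExtPumpComp}: for the check we only need $\postfuncN{}$ and $\prefuncN{}$ on the restricted domains of markings that are $\sqsubseteq$-above $\amarking$ resp.\ $\amarkingp$ on the concrete counters, and since $I\setminus\unconstrained$ and $O\setminus\unconstrained$ are non-empty, every such marking has strictly more $\omega$-entries than $\unconstrained$, hence falls into the easy domain $\easydomain$ where \Cref{Lemma:LowDimPostPreComputable} gives computability outright. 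This makes the Karp--Miller graph effectively constructible, and decidability of the \emph{existence} of a pumping derivation follows.

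The second step is to actually \emph{construct} a witness when one exists. Here I would not read the derivation off the Karp--Miller graph directly (the acceleration steps lose the concrete infixes), but instead use the coverability grammar $\wtgrammarof{\otherctxNGVAS{\amarking,\anonterm,\amarkingp}{}, \natpostapprox, \natpreapprox}$ of Section~\ref{Section:CoverabilityGrammarMP}; by \Cref{Lemma:CGPumpability} it shows unboundedness precisely when $\otherctxNGVAS{\anngvas}{\amarking, \anonterm, \amarkingp}$ has \perfectnesspumpingnospace, i.e.\ when a pumping derivation exists, and its construction is effective by \Cref{Lemma:CGTermination} together with the same domain argument as above (\Cref{Lemma:LowDimPostPreComputable} and \Cref{Lemma:PostPreApproxComputable}). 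From the coverability-grammar derivation that reaches a node $(\inof{\anontermp},\anontermp,\outof{\anontermp})$ one walks back up the derivation and, using \Cref{Lemma:CGPumping} at every $\wtprodspump$-edge (which guarantees actual sub-derivations $\anonterm'\to^{*}\aword'.\anonterm'.\awordp'$ with $\natpostapprox$/$\natpreapprox$-witnessing runs whose effects strictly pump the newly-$\omega$ counters) and the reachability runs witnessing the $\natpostapprox$/$\natpreapprox$ images at the $\wtprodssim$-edges, stitch these pieces together into one cycle $\anonterm\to^{*}\aword.\anonterm.\awordp$. The forward run is the concatenation of the forward pieces, and by the strict-pumping property each relevant counter $i\in\abdinfomid\setminus I$ gets a strictly positive net effect; dually for $\awordp$ and $\abdinfomid\setminus O$; the irrelevant (concrete) counters are handled by consistency of the $\infun/\outfun$ functions, and the $\unconstrained$ counters are harmless by the monotonicity property of $\runsof{-}$ on $\unconstrained$. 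This yields a genuine pumping derivation.

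The main obstacle I anticipate is the stitching in the construction step: the various sub-runs returned by the $\natpostapprox$/$\natpreapprox$ witnesses and by \Cref{Lemma:CGPumping} are only guaranteed to pump the respective $\omega$-counters and to be \emph{enabled} from their own starting markings, but to glue a left piece followed by a child piece followed by another left piece into one coherent run of $\updateseqof{\aword}$ one must make sure the intermediate markings line up and stay non-negative. The cleanest way to handle this is to take the runs with enough slack — repeat each witnessing pump a sufficient number of times so that the strictly-positive net effect already covers any bounded negative dip incurred before the next pump kicks in — which is precisely the ``linear beats constant'' argument already used in the iteration lemma (Proposition~\ref{Proposition:Iteration}) and in \Cref{Lemma:SimpleBoundDetailed}; in fact one can argue more cheaply here because we do not need to hit an exact target marking, only to strictly increase each relevant counter, so the bookkeeping is substantially lighter than in the full iteration lemma. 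A secondary point to be careful about is the degenerate case $\amarking\not\compwith\inof{\anonterm}$ or $\amarkingp\not\compwith\outof{\anonterm}$, which cannot arise under the hypotheses since $\omegaof{\amarking}=I\subseteq\abdinfomid$ forces $\amarking\sqsubseteq\inof{\anonterm}$ (and dually), so $\otherctxNGVAS{\anngvas}{\amarking, \anonterm, \amarkingp}$ is always a well-formed NGVAS and no empty-language special case is needed.
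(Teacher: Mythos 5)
Your decidability argument is essentially the paper's: form the variant $\otherctxNGVAS{\anngvas}{\amarking, \anonterm, \amarkingp}$, note that because $I\setminus\unconstrained$ and $O\setminus\unconstrained$ are non-empty all the $\postfuncN{\anngvas}$/$\prefuncN{\anngvas}$ queries needed for the Karp--Miller construction land in the easy domain of \Cref{Lemma:LowDimPostPreComputable}, and conclude with \Cref{Lemma:ExtPumpComp}. Where you diverge is the construction step. The paper does something much cheaper: once existence has been decided positively, it simply enumerates candidate witnesses (cycles $\anonterm\to^{*}\aword.\anonterm.\awordp$ together with concrete derivations of runs for the terminals in $\aword,\awordp$) until the first, i.e.\ shortest, pumping derivation is found; checking a fully concrete candidate against the pumping conditions is effective, and termination of the search is guaranteed precisely by the decidability result just established. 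Your alternative --- rebuilding the witness from a coverability-grammar derivation via \Cref{Lemma:CGPumpability} and \Cref{Lemma:CGPumping} and stitching the pieces --- can probably be made to work, but the stitching obstacle you flag is real: it is exactly the non-trivial content of the (only sketched) proof of \Cref{Lemma:CGPumpability}, so as written your construction step carries an unresolved technical debt (aligning intermediate markings and absorbing bounded negative dips of the glued sub-runs) that the paper's brute-force enumeration avoids entirely. If you keep your route, you should spell out the slack/repetition argument for the glued run; otherwise, replace the second step by the enumeration argument, observing that a pumping derivation, if it exists, has a finite complete witness, so dovetailed enumeration must find one.
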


\begin{proof}[Proof Sketch]
    By \Cref{Lemma:LowDimPostPreComputable} and \Cref{Lemma:ExtPumpComp}, we observe that we can decide whether $\otherctxNGVAS{\anngvas}{\amarking, \anonterm, \amarkingp}$ has \perfectnesspumpingnospace, since it needs lesser dimensional reachability checks.
    If it has a pumping derivation, then we can construct it via enumerating possible derivations until we find the shortest one.
    Since $\postfunc$ resp. $\prefunc$ use either (i) lower non-linear rank perfect decompositions or (ii) a witness tree search, the length of this derivation is no longer than the runtimes of these calls, it can be shown that the sequences witnessing the outputs are comparable in size to the run times of these functions. 
    We can argue with the Karp-Miller tree that the shortest pumping derivation is comparable in size to the run time of the Karp-Miller tree.   
\end{proof}

Note that the property of being a pumping-derivation for a triple is monotonous.
That is, we do not lose pumping-derivations by increasing the input and output markings.
\begin{lemma}\label{Lemma:PumpingMonotonous}
    Let $\amarking, \amarking', \amarkingp, \amarkingp'\in\Nomega$ with $\amarking, \amarking'\sqsubseteq\inof{\anonterm}$, $\amarkingp, \amarkingp'\sqsubseteq\outof{\anonterm}$, $\amarking\leq\amarking'$, and $\amarkingp\leq\amarkingp'$.
    If $(\anonterm\to^{*}\aword.\anonterm.\awordp, \arun, \arunp)$ is a pumping derivation for $\otherctxNGVAS{\anngvas}{\amarking, \anonterm, \amarkingp}$, then it is also a pumping derivation for $\otherctxNGVAS{\anngvas}{\amarking', \anonterm, \amarkingp'}$.
\end{lemma}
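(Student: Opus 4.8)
\textbf{Proof plan for Lemma~\ref{Lemma:PumpingMonotonous}.}
The statement is the monotonicity of pumping-derivation existence under enlarging the input and output markings. The plan is to unfold the definition of a pumping derivation in the variant $\otherctxNGVAS{\anngvas}{\amarking, \anonterm, \amarkingp}$ and check that every clause survives the replacement of $\amarking, \amarkingp$ by larger $\amarking', \amarkingp'$ with the same $\omega$-support condition implicitly in force (i.e.\ $\amarking\leq\amarking'$ componentwise in $\N_{\omega}$, and likewise on the output side). Recall that a pumping derivation is a tuple $(\anonterm\to^{*}\aword.\anonterm.\awordp, \arun, \arunp)$ with $\arun\in\updateseqof{\aword}$, $\arunp\in\updateseqof{\awordp}$, and markings $\amarkingpp, \amarkingppp\in\Nomega^{d}$ such that $\amarking\fires{\arun}\amarkingpp$, $\amarkingp\fires{\arunp^{rev}}\amarkingppp$, and the strict-growth conditions $\amarking[i]<\amarkingpp[i]$ for $i\in\abdinfomid\setminus\omegaof{\amarking}$ and $\amarkingp[i]<\amarkingppp[i]$ for $i\in\abdinfomid\setminus\omegaof{\amarkingp}$.

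First I would note that the grammar-level component $\anonterm\to^{*}\aword.\anonterm.\awordp$ and the update sequences $\arun, \arunp$ depend only on $\anngvas$ and $\anonterm$, not on the context markings, so they transfer verbatim; in particular $\arun\in\updateseqof{\aword}$ and $\arunp\in\updateseqof{\awordp}$ still hold (the update-sequence sets are defined purely from the grammar and the updates). The only thing to re-verify is enabledness and the strict-growth inequalities with the new markings. For enabledness, I would invoke the standard monotonicity of firing in a VAS (and its generalized-marking version, as recalled in the Preliminaries): if $\amarking\fires{\arun}\amarkingpp$ and $\amarking\leq\amarking'$, then $\amarking'\fires{\arun}\amarkingpp'$ for some $\amarkingpp'\geq\amarkingpp$, obtained by pushing the difference $\amarking'-\amarking\geq 0$ through the run; indeed one may take $\amarkingpp'=\amarkingpp+(\amarking'-\amarking)$ on the finite coordinates, with $\omega$ propagating as usual. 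Symmetrically, $\amarkingp\fires{\arunp^{rev}}\amarkingppp$ together with $\amarkingp\leq\amarkingp'$ gives $\amarkingp'\fires{\arunp^{rev}}\amarkingppp'$ with $\amarkingppp'\geq\amarkingppp$.

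For the strict-growth conditions, I would use that $\omegaof{\amarking}\subseteq\omegaof{\amarking'}$ (this is forced by $\amarking\leq\amarking'$ in $\N_{\omega}$) and similarly on the output. For a coordinate $i\in\abdinfomid\setminus\omegaof{\amarking'}$ we also have $i\notin\omegaof{\amarking}$, hence $\amarking[i],\amarking'[i]\in\N$ with $\amarking[i]\leq\amarking'[i]$. The effect of $\arun$ on coordinate $i$ is the fixed integer $\vaseffof{\arun}[i]=\amarkingpp[i]-\amarking[i]$, which is $\geq 1$ by the original strict-growth clause; since $\amarkingpp'[i]=\amarking'[i]+\vaseffof{\arun}[i]$ (same effect), we get $\amarkingpp'[i]-\amarking'[i]=\vaseffof{\arun}[i]\geq 1$, so $\amarking'[i]<\amarkingpp'[i]$. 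If instead $i\in\omegaof{\amarking'}$ the clause is vacuous. The output side is handled identically with $\arunp^{rev}$. Assembling $(\anonterm\to^{*}\aword.\anonterm.\awordp,\arun,\arunp)$ with the new witness markings $\amarkingpp',\amarkingppp'$ then exhibits the required pumping derivation for $\otherctxNGVAS{\anngvas}{\amarking', \anonterm, \amarkingp'}$. There is essentially no obstacle here; the only point demanding mild care is the bookkeeping of how $\omega$-entries interact with the componentwise effect, which is exactly the generalized-marking firing monotonicity already set up in the Preliminaries, so I would cite that rather than reprove it.
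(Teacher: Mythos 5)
Your proof is correct and is exactly the routine monotonicity argument the paper leaves implicit (Lemma~\ref{Lemma:PumpingMonotonous} is stated without proof): transfer the cycle and update sequences, use monotonicity of (generalized) firing for enabledness, and observe that the per-coordinate effect of $\arun$ resp.\ $\arunp^{rev}$ is unchanged while $\omegaof{\amarking}\subseteq\omegaof{\amarking'}$ makes the strict-growth clauses only weaker. One small gloss: $\updateseqof{\aword}$ is determined by the runs of the child NGVAS rather than by the grammar alone, but since a variant $\otherctxNGVAS{\anngvas}{\cdot,\cdot,\cdot}$ keeps grammar and children unchanged, your verbatim transfer of $\arun$ and $\arunp$ is still justified.
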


Since a set of well-quasi-ordered elements can only have finitely many minimal elements, for all $\unconstrained\subseteq I \subseteq \abdinfomid$, and $\omegaoutcount\subseteq O\subseteq\abdinfomid$, there is a finite set $\extcovmap_{I, O}\subseteq\Nomega^{d}\times\nonterms\times\Nomega^{d}$ with
\begin{align*}
    \upclsof{\extcovmap_{I, O}}&=\setcond{(\amarking, \anonterm, \amarkingp)\in\Nomega^{d}\times\nonterms\times\Nomega^{d}}{\\
    &\hspace{2em}\amarking\in\omegamrkdomainof{I},\;\amarkingp\in\omegamrkdomainof{O} \\
    &\hspace{2em}\amarking\sqsubseteq\inof{\anonterm},\; \amarkingp\sqsubseteq\outof{\anonterm},\;\otherctxNGVAS{\anngvas}{\amarking, \anonterm, \amarkingp}\text{ has }\perfectnesspumpingnospace}.
\end{align*}
Remark that, if we could compute $\extcovmap_{I, O}$ for $I, O$ as described by \Cref{Lemma:PrecalculationIIBreakdown}, then by \Cref{Lemma:PumpingMonotonous} we can compute a set of pumping derivations, and thus compute $\extcovconst_{I, O}$ by simply taking the maximum of their lengths.
We show that we can indeed compute $\extcovmap_{I, O}$.
\begin{lemma}\label{Lemma:ExtCovMapComputable}
    Let $\unconstrained\subseteq I \subseteq \abdinfomid$, $\omegaoutcount\subseteq O\subseteq\abdinfomid$ with non-empty 
    $I\setminus \unconstrained$ and $O\setminus\unconstrained$.
    Let $\perfect$ reliable up to $\rankof{\anngvas}$.
    Then, we can compute $\extcovmap_{I, O}$.
\end{lemma}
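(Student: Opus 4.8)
The plan is to compute $\extcovmap_{I,O}$, which by definition is a finite generating set of the set $U$ of triples $(\amarking,\anonterm,\amarkingp)$ that satisfy the side constraints ($I\subseteq\omegaof{\amarking}\subseteq\abdinfomid$, $O\subseteq\omegaof{\amarkingp}\subseteq\abdinfomid$, $\amarking\sqsubseteq\inof{\anonterm}$, $\amarkingp\sqsubseteq\outof{\anonterm}$) and for which $\otherctxNGVAS{\anngvas}{\amarking,\anonterm,\amarkingp}$ has \perfectnesspumpingnospace; by \Cref{Lemma:PumpingMonotonous} this set is upward closed in the restricted domain. Since $\anngvas$ is non-linear, $\abdinfomid=\omegaof{\inof{\anonterm}}=\omegaof{\outof{\anonterm}}$ for every non-terminal, so in such a triple the coordinates outside $\abdinfomid$ are pinned to $\inof{\anonterm}$ resp.\ $\outof{\anonterm}$ and the coordinates in $I$ resp.\ $O$ are pinned to $\omega$; the only freedom is the $\Nomega$-value on the coordinates in $\abdinfomid\setminus I$ (input side) and $\abdinfomid\setminus O$ (output side). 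Thus the ambient order is the WQO $\nonterms\times\Nomega^{\abdinfomid\setminus I}\times\Nomega^{\abdinfomid\setminus O}$, a finite disjoint union of products of the well-order $\Nomega$, and I would compute the finite basis $\min(U)$ by the standard fixed-point / Valk--Jantzen technique (the same pattern used for $\basisfire$ in the overview): this reduces the task to an ideal-intersection oracle, i.e.\ deciding, for each ideal $D$ of this WQO, whether $D\cap U=\emptyset$.

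I would provide this oracle, simultaneously with \Cref{Lemma:PrecalculationIIBreakdown}, by a common induction on the number of \emph{relevant-concrete} coordinates of $D$, i.e.\ the coordinates of $(\abdinfomid\setminus I)\cup(\abdinfomid\setminus O)$ on which $D$ is not all of $\Nomega$. Fix a non-terminal $\anonterm$ and an ideal $D$ in its component, given by a finite cap on each relevant-concrete coordinate and ``unbounded'' elsewhere, and let $(\hat{\amarking},\anonterm,\hat{\amarkingp})$ be the generalized marking setting capped coordinates to their caps and unbounded ones to $\omega$. If $D$ caps all of $\abdinfomid\setminus I$ and $\abdinfomid\setminus O$, then $(\hat{\amarking},\anonterm,\hat{\amarkingp})$ is the maximum of $D$, and since $U$ is upward closed, $D\cap U\neq\emptyset$ iff this triple lies in $U$; the latter is decidable, because $I\setminus\unconstrained$ and $O\setminus\unconstrained$ are non-empty so the $\omega$-pattern strictly extends $\unconstrained$ and deciding \perfectnesspumpingnospace then needs only strictly-lower-dimensional $\postfunc/\prefunc$ queries, available by \Cref{Lemma:LowDimPostPreComputable} and \Cref{Lemma:ExtPumpComp}, exactly as in \Cref{Lemma:ConstructableSimplePumps}. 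If instead $D$ leaves a non-empty set $S\subseteq\abdinfomid\setminus I$ and/or $T\subseteq\abdinfomid\setminus O$ unbounded, then $D$ has strictly fewer relevant-concrete coordinates, and by \Cref{Lemma:PumpingMonotonous} the question $D\cap U\neq\emptyset$ amounts to: do sufficiently large values of the $S,T$-coordinates (with the rest $\le$ the caps) give \perfectnesspumpingnospace? I would answer this by passing to the cell with $\omega$-pattern $(I\cup S,\,O\cup T)$, where the induction hypothesis for \Cref{Lemma:PrecalculationIIBreakdown} yields a pumping derivation of computable size if one exists, and then stitching it up to the $(I,O)$-cell via \Cref{Lemma:SimpleBoundDetailed}: a free-$\N$-pump of computable size (obtained as in \Cref{Lemma:PrecalculationI}; note that its non-existence already certifies $D\cap U=\emptyset$, since a \perfectnesspumpingnospace-witness is in particular a free-$\N$-pump) together with the bound from the lower cell gives an explicit threshold above which pumping is guaranteed, which both answers the oracle and supplies the size bound $\extcovconst_{I,O}$ demanded by \Cref{Lemma:PrecalculationIIBreakdown}, closing the induction. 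With the oracle in hand, Valk--Jantzen returns $\min(U)=\extcovmap_{I,O}$.

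The main obstacle is exactly this interlocking of the Rackoff-style recursion on the number of concretely tracked counters with the free-$\N$-pump and stitching machinery: one must check that the recursion is well-founded (each recursive call is to a cell with strictly more $\omega$-coordinates), that \Cref{Lemma:SimpleBoundDetailed} can be invoked with the $I,O,\anonterm$ of the current cell and the derivations handed down from the lower cell, and that the case where no free-$\N$-pump exists at all is correctly recognized as an empty cell rather than mishandled. A secondary, routine point is the bookkeeping of the ambient WQO as a disjoint union over $\nonterms$ with $\omega$-patterns confined to $\abdinfomid$, so that $U$ is genuinely upward closed there and the ideals have the simple capped/unbounded shape used above.
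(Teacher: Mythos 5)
Your proposal is correct and rests on exactly the same core ingredients as the paper's proof: an induction on the $\omega$-pattern $(I,O)$ interlocked with \Cref{Lemma:PrecalculationIIBreakdown}, decidability of \perfectnesspumpingnospace\ for a concretely given triple via lower-dimensional reachability (\Cref{Lemma:ConstructableSimplePumps}), monotonicity (\Cref{Lemma:PumpingMonotonous}), and the transfer of a pumping derivation from a more-abstract cell down to large-but-finite counter values via the free-$\N$-pump bound of \Cref{Lemma:PrecalculationI} and the stitching in \Cref{Lemma:SimpleBoundDetailed}. Where you differ is only in the outer computational shell: the paper computes an explicit Rackoff-style threshold $\ecmcalcconstof{I,O}=(\incconst+1)\cdot(\ecmcalcpreconstof{I,O}+1)+\cardof{\anngvas}$ and proves the two-sided equality of upward closures, so that $\extcovmap_{I,O}$ is read off by enumerating the finite box $\set{0,\ldots,\ecmcalcconstof{I,O},\omega}^{d}$ and testing each candidate, whereas you delegate finiteness to a Valk--Jantzen scheme and answer the ideal-intersection oracle with the same threshold argument. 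Both work; the paper's version keeps the bound explicit (which is convenient both for extracting $\extcovconst_{I,O}$ and for the later complexity accounting) and avoids the ideal bookkeeping ($\N$- versus $\Nomega$-components of an ideal) that your oracle needs, while your version is conceptually a cleaner reduction and, as you note, still recovers $\extcovconst_{I,O}$ by constructing pumping derivations for the finitely many minimal elements exactly as the paper does. One small point in your favour: you explicitly treat the case where no free-$\N$-pump (equivalently, no suitable $\Z$-pump, whose existence is decidable by integer programming over the cycle effects) exists, correctly observing that a \perfectnesspumpingnospace-witness is in particular such a pump, so the cell is empty; the paper handles this only implicitly through the standing assumption made after \Cref{Lemma:TheZDecompAssumption}.
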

The challenge in computing $\extcovmap$ and $\extcovconst$ is capturing the upward closure of \emph{all} possible pumping derivations.
We adress this challenge by using the derivation implied by \Cref{Lemma:PrecalculationI}.
Using a similar proof to \Cref{Lemma:FreePumpNoSDBothSidesRackoff}, we obtain an upper bound, where increasing one counter beyond this bound does not make \perfectnesspumping newly hold.  
\begin{proof}
    We compute the sets $\unconstrained\subseteq I \subseteq \abdinfomid$, $\omegaoutcount\subseteq O\subseteq\abdinfomid$ with non-empty $I\setminus \unconstrained$ and $O\setminus\unconstrained$ by an inductive procedure on $2d-(\cardof{I}+\cardof{O})$.
    Consider the base case $I=O=\abdinfomid$.
    It can be readily verified that $\extcovmap_{I, O}=\setcond{(\inof{\anonterm}, \anonterm, \outof{\anonterm})}{\anonterm\in\nonterms}$.
    
    We move on to the inductive case.
    By the induction hypothesis, \Cref{Lemma:PumpingMonotonous}, and \Cref{Lemma:ConstructableSimplePumps}, we know the value $\extcovconst_{I', O'}\in\N$ is computable for all $I\subseteq I' \subseteq \abdinfomid$ and $O\subseteq O'\subseteq\abdinforight$, whenever $I\subsetneq I'$ or $O\subsetneq O'$.
    We define 
    \begin{align*}
        \ecmcalcpreconstof{I, O}&=\max\setcond{\extcovconst_{I', O'}}{\\
    &\hspace{0.75em}I\subseteq I'\subseteq \abdinfomid,\; O\subseteq O'\subseteq\abdinfomid,\; (I\subsetneq I'\text{ or }O\subsetneq O')}
    \end{align*}
    to be the maximal constant $\extcovconst_{I', O'}$ among non-smaller $I'$ and $O'$ where the induction hypothesis applies.
    With this constant at hand, we define 
    \begin{align*}
        \ecmcalcconstof{I, O}=(\incconst+1)\cdot(\ecmcalcpreconstof{I, O}+1)+\cardof{\anngvas}
    \end{align*}
    The addition of $\cardof{\anngvas}$ ensures that $\ecmcalcconstof{I, O}$ is larger than an $\inof{-}$ or $\outof{-}$ image.
    This is meant to make sure that we do not lose markings with too large $\inof{\anontermp}$ or $\outof{\anontermp}$ values by imposing an upper bound on counter values. 
    We claim 
    \begin{align*}
        \upclsof{\extcovmap_{I, O}}=&\upclsof{\setcond{(\amarking, \anonterm, \amarkingp)\in\Nomega^{d}\times\nonterms\times\Nomega^{d}}{\\
        &\hspace{1.5em}I\subseteq\omegaof{\amarking}\subseteq\abdinfomid, \; O\subseteq\omegaof{\amarkingp}\subseteq\abdinfomid, \\
        &\hspace{1.5em}\amarking\sqsubseteq\inof{\anonterm},\; 
        \amarkingp\sqsubseteq\outof{\anonterm},\;
        \otherctxNGVAS{\anngvas}{\amarking, \anonterm, \amarkingp}\text{ has \perfectnesspumpingnospace},\; \\
        &\hspace{1.5em}\amarking, \amarkingp\in \set{0, \ldots, \ecmcalcconstof{I, O}, \omega}^{d}}}.
    \end{align*}
    The inclusion direction $\supseteq$ is clear, we only impose an additional constraint in the form of a maximal concrete value.
    For the direction $\subseteq$, suppose that there is a $(\amarking, \anonterm, \amarkingp)\in\upclsof{\extcovmap_{I, O}}$ that is not included on the right-hand side of the proposed equality.
    Since every other constraint already follows from the membership $(\amarking, \anonterm, \amarkingp)\in\upclsof{\extcovmap_{I, O}}$, one of $\amarking\not\in \set{0, \ldots, \ecmcalcconstof{I, O}, \omega}^{d}$ or $\amarkingp\not\in\set{0, \ldots, \ecmcalcconstof{I, O}, \omega}^{d}$ must hold.
    Then, we construct $\amarking', \amarkingp'\in\Nomega^{d}$ with $\amarking\sqsubseteq\amarking'$, $\amarkingp\sqsubseteq\amarkingp'$, $\amarking', \amarkingp'\in \set{0, \ldots, \ecmcalcconstof{I, O}, \omega}^{d}$ where $\amarking[i]> \ecmcalcconstof{I, O}$ iff $i\in\omegaof{\amarking'}\setminus\omegaof{\amarking}$, and $\amarkingp[i]>\ecmcalcconstof{I, O}$ iff $i\in\omegaof{\amarkingp'}\setminus\omegaof{\amarkingp}$.
    Note that $\omegaof{\amarking'}, \omegaof{\amarkingp'}\subseteq\abdinfomid$, since $\ecmcalcconstof{I, O}$ is larger than $\inof{\anonterm}$ and $\outof{\anonterm}$.
    By \Cref{Lemma:PumpingMonotonous}, we know that $\otherctxNGVAS{\anngvas}{\amarking', \anonterm, \amarkingp'}$ admits a pumping derivation, i.e. $(\amarking', \anonterm, \amarkingp')\in\extcovmap_{\omegaof{\amarking'}, \omegaof{\amarkingp'}}$.
    Since $\amarking\not\in \set{0, \ldots, \ecmcalcconstof{I, O}, \omega}^{d}$ or $\amarkingp\not\in\set{0, \ldots, \ecmcalcconstof{I, O}, \omega}^{d}$ holds, we know that $\cardof{\omegaof{\amarking'}}+\cardof{\omegaof{\amarkingp'}}<\cardof{\omegaof{\amarking}}+\cardof{\omegaof{\amarkingp}}$.
    Thus, the induction hypothesis applies to show that there is indeed a pumping derivation $(\anonterm\to^{*}\aword.\anonterm.\awordp, \arun, \arunp)$ where $\cardof{\arun}, \cardof{\arunp}\leq \ecmcalcpreconstof{I, O}$.
    We let $\amarking''\sqsubseteq\amarking'$ and $\amarkingp''\sqsubseteq\amarkingp'$ with $\omegaof{\amarking''}=\omegaof{\amarking}$, $\omegaof{\amarkingp''}=\omegaof{\amarkingp}$, $\amarking''[i]=\ecmcalcconstof{I, O}$ for all $i\in\omegaof{\amarking'}\setminus\omegaof{\amarking}$, and $\amarkingp''[i]=\ecmcalcconstof{I, O}$ for all $i\in\omegaof{\amarkingp'}\setminus\omegaof{\amarkingp}$.
    Recall the $(\omegaof{\acontextin}, \omegaoutcount)$-free-$\N$-pump $\gamma$ of size $\leq\incconst$ we fixed after restating \Cref{Lemma:PrecalculationII}.
    It allows us to apply \Cref{Lemma:SimpleBoundDetailed}, show that $\otherctxNGVAS{\anngvas}{\amarking'', \anonterm, \amarkingp''}$ 
    has a pumping derivation.
    Clearly, $(\amarking'', \anonterm, \amarkingp'')$ is captured on the right-hand side of the proposed equality.
    By the definition of $\amarking'$, $\amarking''$, $\amarkingp'$, and $\amarkingp''$, we have $(\amarking'', \anonterm, \amarkingp'')\leq(\amarking, \anonterm, \amarking)$.
    This is a contradiction.
\end{proof}

\end{document}